
\RequirePackage[british]{babel}
%
%
%
\documentclass[reqno,a4paper,12pt]{amsart}
%
\usepackage[latin1]{inputenc}
%
\usepackage[a4paper,hmargin=2cm,tmargin=3cm,bmargin=3cm]{geometry}
%
\usepackage{amsmath,amssymb,amstext,amsthm,amscd}
%
\usepackage{mathrsfs,eucal,graphicx}
%
\usepackage{color}
%
%
\usepackage{verbatim}
%


%
\usepackage{array}
%
\usepackage[nosort]{cite}
%
\usepackage[vcentermath]{youngtab}
\Yboxdim{4pt}
%
\usepackage{hyperref}
%
\hypersetup{%
  pdftitle   = {n-ary algebras: a review with applications},
  pdfkeywords = {n-algebras, higher order-Lie algebras, Filippov (or $n$-Lie) algebras,
              Generalized Poisson structures, Bagger-Lambert-Gustavsson model},
  pdfauthor  = {J. A. de Azc\'arraga, J. M. Izquierdo},
  pdfcreator = {\LaTeX\ with package \flqq hyperref\frqq}
      }
%
%

\newcommand{\fg}{\mathfrak{g}}

\newcommand{\fG}{\mathfrak{G}}

\newcommand{\fH}{\mathfrak{H}}

\newcommand{\fL}{\mathfrak{L}}

\newcommand{\fN}{\mathfrak{N}}

\newcommand{\fX}{\mathfrak{X}}
\newcommand{\ff}{\mathfrak{f}}
\newcommand{\fgl}{\mathfrak{gl}}
\newcommand{\fh}{\mathfrak{h}}

\newcommand{\fT}{\mathfrak{T}}

\newcommand{\be}{\boldsymbol{e}}

\DeclareMathOperator{\End}{End}

\DeclareMathOperator{\ad}{ad}

%

%
%
\theoremstyle{plain}
\newtheorem{lemma}{Lemma}
\newtheorem{proposition}[lemma]{Proposition}
\newtheorem{theorem}[lemma]{Theorem}
\newtheorem{corollary}[lemma]{Corollary}
\theoremstyle{definition}
\newtheorem{definition}[lemma]{Definition}
\newtheorem{remark}[lemma]{Remark}
\newtheorem{example}[lemma]{Example}
%
%
%

%
%

\allowdisplaybreaks

%
\setcounter{tocdepth}{3}
%

%
%
\begin{document}
\title{$n$-ary algebras: a review with applications}
\author[de Azc\'arraga]{Jos\'e A. de Azc\'arraga}
\author[Izquierdo]{Jos\'e M. Izquierdo}
\address{Department of Theoretical Physics \& IFIC (CSIC-UVEG),
Faculty of Physics, Valencia University, 46100-Burjassot (Valencia), Spain.
e-mail: j.a.de.azcarraga@ific.uv.es}
\address{Department of Theoretical Physics, Faculty of Sciences,
Valladolid University, 47011-Valladolid, Spain.
e-mail: izquierd@fta.uva.es}
\date{February, 28; v2: minor changes and refs. added. Published in
J. Phys. A: Math. Theor.~{\bf 43}, 293001-1-117 (2010).}

\begin{abstract}
This paper reviews the properties and applications of certain
$n$-ary generalizations of Lie algebras in a self-contained and
unified way. These generalizations are algebraic structures in which
the two entries Lie bracket has been replaced by a bracket with $n$
entries. Each type of $n$-ary bracket satisfies a specific {\it
characteristic identity} which plays the r\^ole of the Jacobi
identity for Lie algebras. Particular attention will be paid to {\it
generalized Lie algebras}, which are defined by even multibrackets
obtained by antisymmetrizing the associative products of its $n$
components and that satisfy the {\it generalized Jacobi identity},
and to {\it Filippov (or $n$-Lie) algebras}, which are defined by
fully antisymmetric $n$-brackets that satisfy the {\it Filippov
identity}. Three-Lie algebras have surfaced recently in multi-brane
theory in the context of the Bagger-Lambert-Gustavsson model.
Because of this, Filippov algebras will be discussed at length,
including the cohomology complexes that govern their central
extensions and their deformations (it turns out that Whitehead's
lemma extends to all semisimple $n$-Lie algebras). When the
skewsymmetry of the Lie or $n$-Lie algebra bracket is relaxed, one
is led to a more general type of $n$-algebras, the $n$-Leibniz
algebras. These will be discussed as well, since they underlie the
cohomological properties of $n$-Lie algebras.

The standard Poisson structure may also be extended to the $n$-ary
case. We shall review here the even {\it generalized Poisson
structures}, whose generalized Jacobi identity reproduces the
pattern of the generalized Lie algebras, and the {\it Nambu-Poisson
structures}, which satisfy the Filippov identity and determine
Filippov algebras. Finally, the recent work of Bagger-Lambert and
Gustavsson on superconformal Chern-Simons theory will be briefly
discussed. Emphasis will be made on the appearance of the 3-Lie
algebra structure and on why the $A_4$ model may be formulated in
terms of an ordinary Lie algebra, and on its Nambu bracket
generalization.
\end{abstract}

\maketitle
\tableofcontents

\section{Introduction and overview} \label{sec:intro}

In the last decades there has been an increasing interest in the
applications of various $n$-ary generalizations of the ordinary Lie algebra
structure to theoretical physics problems, which has peaked in the
last three years. $n$-ary algebraic operations are, however, very
old: ternary operations appeared for the first time associated
with the cubic matrices that had been introduced by A. Cayley in
the middle of the XIXth century and that were also considered by
J.J. Sylvester some forty years later, still in that century.
In spite of this, the  modern mathematical work on
general multioperator rings and algebras (not
necessarily associative) begins much later with a series of papers
by Kurosh (see \cite{Kurosh} and earlier refs. therein). In particular, the
{\it linear $\Omega$-algebras} are given by a vector space on which
certain multilinear operations are defined; they are reviewed in
\cite{Bara-Bur:75}. This general
class of algebras is, however,  larger than the two main
generalizations of Lie algebras to be discussed in this review,
which will be denoted generically as $n$-ary algebras. In these, the
standard Lie bracket is replaced by a linear $n$-ary bracket with
$n>2$ entries, the algebra structure being defined by the
characteristic identity satisfied by the $n$-ary bracket. There are
two (main) ways of achieving this, depending on how the Jacobi
identity (JI) of the ordinary Lie algebras is looked at. The JI can
be viewed as the statement that (a) a nested double Lie bracket
gives zero when antisymmetrized with respect to its three entries or
that (b) the Lie bracket is a derivation of itself. Both (a) and (b)
are equivalent for ordinary Lie algebras where the JI is indeed a
necessary identity that follows from the associativity of the
composition of the Lie algebra elements in the Lie bracket.
\medskip

  Let now $\fH$ be a generic $n$-ary algebra, $n>2$, endowed
with a skewsymmetric $n$-linear bracket $\wedge^n\fH \rightarrow
\fH$, $(X_1,\dots,X_n) \mapsto [X_1,\dots,X_n]$. When $\fH$ is
defined using the {\it characteristic identity} that extends
property (a) to the $n$-ary bracket, one is led to the {\it higher
order Lie algebra} or {\it generalized Lie algebra} structure (GLA)
\cite{AzBu:96,AzPePB:96a,AzPePB:96b},
\cite{Han-Wac:95,Gne:95,Gne:97} (see also \cite{JLL:95,Mic-Vin:96}),
denoted $\fH=\mathcal{G}$, and the characteristic identity satisfied
by its multibracket is called {\it generalized Jacobi identity}
(GJI); GLAs will be discussed in Sec.~\ref{sec:GLA}. These algebras
are a case of the more general antisymmetric linear
$\Omega$-algebras of Kurosh \cite{Kurosh}, to whom the earlier
generalizations of Lie algebras may be traced; see
\cite{Bara-Bur:75} for a review of the russian school contributions
to the subject and further references.
The GLA generalization is natural for $n$ even (for
$n$ odd, the $r.h.s.$ of the GJI, rather than being zero, is a
larger bracket with ($2n-1$) entries). GLAs may also be considered
as a particular case (in which there is no violation of the GJI
\cite{RACSAM:98}) of the strongly homotopy algebras of Stasheff
\cite{Lad.Sta:93,Lad.Mar:95, Sta:97, Be-La:09}. When possibility (b)
is used as the guiding principle, then one obtains the {\it Filippov
identity} \cite{Filippov} (FI) as characteristic identity and,
correspondingly, the {$n$-Lie} or {\it Filippov algebras} (FA)
\cite{Filippov}; both terms, Filippov and $n$-Lie, will be used
indistinctly. FAs will be denoted by
$\fH=\mathfrak{G}$ and reviewed in Sec.~\ref{sec:filippov}.
The characteristic GJI and FI still admit further generalizations
that essentially preserve their original structure, but these
(see \cite{Ata-Ma-Sil:08}) will not be considered here.
\medskip

As with metric Lie algebras (Sec. \ref{sec:metric-Lie}), the
$n$-bracket alone is often insufficient for applications and the
existence of an inner product on the underlying algebra vector space
is required. This leads to notion of metric Filippov $n$-algebras to
be discussed in Sec.~\ref{sec:metric-n-FA}. When $n=2$, both algebra
structures $\mathcal{G}$ and $\fG$ coincide and determine ordinary
Lie algebras $\mathfrak{g}$. For $n\geq 3$, the GJI ($n$ even) and
the FI become different characteristic identities and define,
respectively, GLAs $\mathcal{G}$ and $n$-Lie or FAs $\mathfrak{G}$.
There is also the possibility of relaxing the antisymmetry of the
$n$-bracket: this leads to ordinary ($n=2$) Loday's (or Leibniz)
algebras \cite{Lod:93,Lod-Pir:93,Lod-Pir:96} and to their
$n$-Leibniz algebra generalizations \cite{Da-Tak:97,
Cas-Lod-Pir:02}, which differ from their Lie and $n$-Lie
counterparts by having brackets that do not require
anticommutativity. These algebras will be denoted by $\mathscr{L}$
($n=2$) and $\mathfrak{L}$ ($n\geq 3$) respectively, and will be
considered in Secs.~\ref{sec:Loday},~\ref{sec:n-Leibniz}.
\medskip

Filippov algebras \cite{Filippov, Kas:87, Kas:95a, Ling:93}
 have recently been found useful in the search for an
effective action describing the low energy dynamics of coincident
M2-branes or, more specifically, in the Bagger-Lambert-Gustavsson
(BLG) type models
\cite{Ba-La:06,Ba-La:07a,Gustav:08,Raam:08,Ba-La:08,Cher-Sa:08,Go-Mi-Ru:08}.
The fact that there is a unique simple\footnote{That there is
only a simple $n$-Lie algebra for $n>2$ had been found in \cite{Ling:93, Filippov}
(see Th.~\ref{th:FAsimple}). For $n=3$ this was also
rediscovered in \cite{Fried:08} (unaware of all earlier work), plus
other results. This reference contains a new proposal for the use
of FAs in the context of orbifold singularities in M-theory compactifications.}
euclidean 3-Lie algebra ($A_4$) was actually rediscovered in the
context of the first BLG model, where it follows \cite{Pap:08,Ga-Gu:08}
by assuming that the metric needed for the BLG action has to
be positive definite, a condition that may be relaxed. We shall
discuss the BLG and related models in Secs.~\ref{sec:BLG},
\ref{sec:BLG+NB}; we just mention now that the original BLG
action was subsequently reformulated \cite{Be-Ta-Tho:08, Raam:08}
without using a 3-Lie algebra, and that other models for low energy
multiple M2-brane dynamics have appeared (albeit with
$\mathcal{N}$=6 rather than $\mathcal{N}$=8 manifest
supersymmetries) that do not use a FA structure
\cite{Aha-Be-Ja-Mal:08}. Some modifications of the original BLG
model based on $A_4$ have been considered in the literature using
non-fully skewsymmetric 3-brackets. These define, in fact, various
3-Leibniz algebras, and some of these will be discussed (along with
Lie-triple systems $\mathfrak{T}$, a particular case of 3-Leibniz
algebras) in Sec.~\ref{sec:triple+3Leib}.
\medskip

  Nambu algebras are a particular, infinite-dimensional case of $n$-Lie
algebras. Their $n$-bracket is provided by the Jacobian determinant
of $n$ functions or Nambu bracket \cite{Nambu:73}, although Nambu
did not write the characteristic identity satisfied by his
bracket, which is none other than the FI. This was discussed in
\cite{Filippov,Sah-Val:92,Sa-Va:93, Tak:93,Fil:98}, and {\it
Nambu-Poisson structures} (N-P) have been much studied since Nambu's
original paper \cite{Nambu:73} (mostly devoted to
$n=3$) and Takhtajan general study \cite{Tak:93}
for arbitrary $n$, see \cite{Mu-Sud:76,Sa-Va:93,
Cha:95, Cha-Tak:95, Ale.Guh:96, Hie:97, Da-Tak:97,
Mar-Vi-Vin:97,Vai:99, Mi-Va:00, Cu-Za:02} (ref.~\cite{Da-Tak:97}
also considers Nambu superalgebras). In fact, since the earlier
considerations of $p$-branes as gauge theories of volume preserving
diffeomorphisms \cite{dW-Ho-Ni:88,Be-Se-Ta-To:90} (see also
\cite{Hop:88}), the infinite-dimensional FAs given by Nambu brackets
have also appeared in applications to brane theory \cite{Hoppe:96}
and, in particular, in the Nambu three-bracket realization of the
mentioned BLG model as a gauge theory associated with volume
preserving diffeormorphisms in a three-dimensional space; see, in
particular, refs. \cite{Ho-Hou-Ma:08, Ho-Mat:08, Ba-To:08} (see also
\cite{Ho:09}). The BLG-Nambu bracket model in \cite{Ba-To:08} will
be considered in Sec.~\ref{sec:BLG}.
\medskip

Much in the same way the Nambu-Poisson structures follow the pattern
of FAs, it is also possible to introduce {\it generalized Poisson
structures} (GPS) \cite{AzPePB:96a, AzPePB:96b, AIP-B:97} (see
further \cite{Mi-Va:00,Iba.Leo.Mar.Die:97,Iba.Leo.Mar:97}), the
$n$-even {\it generalized Poisson brackets} (GPB) of which satisfy
the GJI and correspond to the  GLAs earlier mentioned. Note,
however, that besides the two properties that each $n$-ary Poisson
generalization share respectively with the GLAs and FAs
(skewsymmetry of both $n$-ary brackets plus the GJI (FI) for the GP
(N-P) structures, respectively), the $n$-ary brackets of both GPS
and of N-PS satisfy an additional condition, Leibniz's rule. Adding
the appropriate grading factors, it is also possible to define the
{\it graded generalized Poisson structures} (graded GPS)
\cite{AzIzPePB:96}; of course, setting aside the Leibniz rule, the
remaining structural properties of the graded GPS define the graded
multibracket and the graded GJI of the {\it graded generalized Lie
algebras} (graded GLAs). We conclude this paragraph by noting that
there has been an extensive discussion since the papers by Nambu
\cite{Nambu:73} and Takhtajan \cite{Tak:93} about the difficulties
of quantizing the N-P strucures, a question to which we shall come
back in Sec.~\ref{GPSvsN-P}; here we just refer in this connection
to the papers above and, {\it e.g.}, to \cite{Hoppe:96,
AIP-B:97,Stern:98, Cu-Za:03b,Cu-Fa-Ji-Me-Za:09} and further
references therein.
\medskip

  Although the GLA, FA and $n$-Leibniz algebra structures
will be the main subject of this review (besides the triple
systems to be briefly discussed in Sec.~\ref{sec:triple}),
these do not exhaust, of course, the wide class of algebra
and Lie algebra generalizations. Without considering a characteristic
identity, other $n$-algebraic structures have been studied {\it e.g.} in
\cite{Carl:80,Ker-Vain:96}. Other Lie algebra generalizations
have to do with introducing a grading; this includes the
well known case of superalgebras (see {\it e.g.} \cite{Fra-Sci-Sor:00}
for a useful collection of definitions, results and basic references
and \cite{CoNeSt:74} for a very early -perhaps the first- review).
There is, besides, a full array of other algebraic structures. These include
non-associative algebras in general \cite{Schaf:66,Brem-Mur-Shes:07},
Mal\v cev (or Moufang-Lie in Mal\v cev's terminology)
algebras \cite{Sagle:61,Brem-Per:06} and their ternary
\cite{Brem-Per:06} and $n$-ary generalizations \cite{Pozhi:01}, some specific types of ternary
structures \cite{Ker:00,Ba-Bo-Ker:04,Ker:08}), $F$-Lie algebras
\cite{Rau-Slu:99,Go-Tra-Ta:07,Camp-Rau:08}, etc. Some of these algebras
were motivated in part by certain aspects of fractional supersymmetry
and/or the problem of parastatistics (see {\it e.g.}
\cite{deA-MacF:95,Du-MacF-Az-PB:96a,deAz-Du-MacF-PB:96,Du-MacF-Az-PB:96b,Ab-Ker-LeRoy:96,Camp-Raus:09})
but they were also studied by their own interest (see \cite{Rau-Slu:02}
and \cite{RdeTr:08} for a review of higher order generalizations of
the Poincar\'e algebra and possible applications).
   None of these algebraic structures, however, will be considered here;
we refer to the quoted papers for further information and references.
Other types of structures and (anti)brackets/algebras, as the Batalin-Vilkovisky
algebras, the master equation, etc. (see {\it e.g.} \cite{Ber-Dam-Al:96}
and references therein), which are relevant {\it e.g.} in the B-V approach to
quantization, will also be omitted. Nevertheless, a few important constructions such as
Koszul's \cite{Koszul:85}, strongly homotopy algebras \cite{Lad.Sta:93}
(see also \cite{Zwi:92} in the context of closed string theory) and
Gerstenhaber algebras \cite{Ger:63} will be briefly mentioned in
Sec.~\ref{sec:SH} or in connection with Sec.~\ref{sec:higherPoisson};
see also \cite{Kos:96} for a unified review and \cite{Vor:05} for
other aspects.
\medskip

Since both GLAs and FAs reduce to Lie algebras for $n=2$, and Lie
algebras are often a guide for any of their $n>2$ generalizations,
we start by summarizing some Lie algebra properties in
Sec.~\ref{sec:Lie-review}.
\medskip

\subsection{Notation and conventions}
\label{sec:not-con} {\ }

 We will use $\fg$ to
denote standard Lie algebras and the larger case $\mathcal{G}$ and $\fG$,
respectively, for the $n$-ary higher order or {\it Generalized Lie algebras}
(GLAs, $\mathcal{G}$) and the {\it Filippov} or
{\it $n$-Lie algebras} (FAs, $\fG$).
In general, we will use the same symbol
for the different $n$-ary algebras and their underlying vector
spaces. Ordinary $n=2$ and ($n>2$)-Leibniz algebras (LAs) will be
denoted by $\mathscr{L}$ and $\mathfrak{L}$ respectively;
triple systems will be denoted by $\fT$. The
infinite-dimensional FAs generated by the Jacobian of functions,
also referred to as Nambu algebras, will be denoted by $\fN$.

For the sake of distinguishing clearly among the different brackets
considered, we shall refer to the $n$-ary brackets of the GLAs
$\mathcal{G}$ as {\it higher order brackets} or {\it multibrackets},
and to those of the FAs $\fG$ or of the $n$-Leibniz ones $\fL$
simply as $n$-{\it brackets}. The elements of the different algebras
will be frequently denoted by capital letters $X,Y$  etc (and, for
FAs, occasionally by $\be_a$). Chosen a basis, the structure
constants of GLAs $\mathcal{G}$ will be written as $C_{i_1\dots
i_{2s}}{}^j$ ($i=1,\dots,\mathrm{dim}\mathcal{G}$) and those of the
FAs $\fG$ and LAs $\fL$ as $f_{a_1\dots a_n}{}^b$
$\;(a=1,\dots,\mathrm{dim}(\fG,\fL)$). For $n=2s=2$,
$\mathcal{G}=\fG=\fg$ and $\fL$ becomes $\mathscr{L}$.

All algebras in this review will be on $\mathbb{R}$ or $\mathbb{C}$
and, with the exception of Nambu and gauge algebras,
finite-dimensional.
\medskip

{\it A terminological remark}.

 GLAs were called {\it Lie $n$-algebras}
in \cite{Han-Wac:95}. Such a terminology may cause confusion with
the very different $n$-Lie algebras (FAs) which were introduced by
Filippov with this name a dozen of years earlier. Thus, rather than
betting the distinction between $n$-Lie algebras ($\equiv$ FAs) and
GLAs on the precise location of a single letter ($n$-Lie alg. {\it
vs.} Lie $n$-alg.), we shall use $n$-Lie and FA indistinctly for
Filippov algebras and keep our {\it higher order} or {\it
generalized Lie algebras} GLA terminology for the $n$-ary
generalization of Lie algebras reviewed in Sec.~\ref{sec:GLA}.
\medskip

{\it A note on references}.

Besides some references quoted by their historical value, most of
them have been selected to give due credit to the relevant papers
and, further, for their potential usefulness when they are directly
related to the text. We do not consider helpful the recent practice
in some (physics) e-papers of grouping twenty (or far more)
references under a single number; this might result in quoting all
possibly related work, but it is useless to the reader, who could
perform such an indiscriminate search in the arXives if she or he
wished so.

\section{A short summary of Lie algebras}
\label{sec:Lie-review}
 We summarize in this section some ingredients of the
 theory of Lie algebras (see {\it e.g.} \cite{Jac:79})
that may be useful when considering ($n>2$)-ary
generalizations.

\subsection{General properties of Lie algebras}
\label{Lie-defs} {\ }

 A {\it Lie algebra structure} is a
vector space $\fg$ together with a bilinear operation $\fg\times
\fg \rightarrow \fg $, the Lie bracket $[\; ,\;
]:(X,Y)\mapsto [X,Y]$, that satisfies
\begin{equation}
\label{Lie-bra}
 [X,Y]= - [Y,X] \;  ,
\end{equation}
\begin{equation}
\label{Lie-JI}
 [X,[Y,Z]] + [Y,[Z,X]] + [Z,[X,Y]]=0 \;.
\end{equation}
A finite example is the associative algebra $\End V$ of linear
transformations of a finite vector space, which is the general linear Lie
algebra $\fgl(\hbox{dim}V)$; another are the Lie algebras $\fg$ generated by
the (left, say) invariant vector fields associated with the (then, right)
action of a Lie group $G$ on itself. An infinite-dimensional
example is the Lie algebra of the vector fields $\fX (M)$ on a manifold $M$.

 The Jacobi identity (JI) \eqref{Lie-JI} may be looked at
as a necessary consequence of the associativity of the composition
of the bracket elements with $[X,Y]=XY-YX$. A second view of the JI
is obtained by rewriting eq.~\eqref{Lie-JI} as
\begin{equation}
  \label{eq:derivation}
  [X,[Y,Z]] = [[X,Y],Z] + [Y,[X,Z]]~.
\end{equation}
A linear transformation $D: \fg \to \fg$ of the Lie algebra is said to be a
{\it derivation} of the Lie algebra, $D\in \hbox{Der}\,\fg$, if
\begin{equation}
  \label{eq:derivation-def}
  D [Y,Z] = [DY,Z] + [Y,DZ] \quad \forall \;Y,Z\in \fg\quad .
\end{equation}
Thus, eq.~\eqref{eq:derivation} states that, for all $X\in\fg$, $[X,
\;\,]$ is a derivation of $\fg$. This is the adjoint derivative
acting from the left, $ad_X Y := [X,Y]$. In terms of $ad_X$,
eq.~\eqref{eq:derivation} becomes
\begin{equation}
  \label{eq:ad-der}
  ad_X [Y,Z] = [ad_X Y,Z] + [Y,ad_X Z]~.
\end{equation}
Thus, the JI identity also expresses that $ad_X\in \hbox{Der}\,\fg$
is an inner derivation of $\fg$ $\forall\,X\in \fg$.

Viewing the above as the result of a linear
transformation on any $Z\in\fg$ and removing it,
the previous equation yields
\begin{equation}
  \label{eq:adg}
  ad_X ad_Y - ad_Y ad_X = ad_{ad_X Y} \qquad\text{or equivalently}\qquad [ad_X,
  ad_Y] = ad_{[X,Y]}~,
\end{equation}
where the bracket on the $l.h.s.$ of the second equation stands for
the commutator in $\End\fg$ and that of the $r.h.s.$ is the original
bracket in $\fg$.  Thus, the map $ad:\fg \to
\hbox{End}\fg=\fgl(\hbox{dim}\fg)$, $ad: X \mapsto ad_X$, is a Lie
algebras homomorphism and defines the {\it adjoint representation} of $\fg$.
Its image $ad\,\fg$ is a subalgebra of $\fgl(\hbox{dim}\fg)$, the
Lie algebra of {\it inner} derivations of $\fg$ or
$\mathrm{InDer}\,\fg$, inner because they are defined by elements of
$\fg$. The kernel of $ad$ is the centre $Z(\fg)$ of $\fg$ and is an
ideal of $\fg$; thus, $\mathrm{InDer}\,\fg= \fg/Z(\fg)$.
InDer$\,\fg$ is an ideal of the Lie algebra of all derivations
$\hbox{Der}\,\fg$ of $\fg$  since rearranging equation
\eqref{eq:derivation-def} it follows that $[D, ad_Y] = ad_{DY}$ for
any $D\in \hbox{Der}\,\fg\;,\,Y\in\fg$. As a result, the quotient
$\hbox{Der}\,\fg / \hbox{InDer}\,\fg = \hbox{OutDer}\,\fg$ is, by
construction, the Lie algebra of the {\it outer derivations}. If
$\fg$ is semisimple, $\fg \rightarrow ad\,\fg$ is an isomorphism of
Lie algebras and, further, all derivations are inner,
$\mathrm{Der}\,\fg=\mathrm{InDer}\,\fg$. By {\it Ado's theorem},
every finite-dimensional Lie algebra over $\mathbb{R}$ or
$\mathbb{C}$ is isomorphic to a matrix algebra (has a faithful
finite representation). An important result towards the
classification of arbitrary Lie algebras is the {\it Levi-Mal{\v
c}ev decomposition theorem}: a Lie algebra is the semidirect sum
$\fg=\hbox{Rad}(\fg) \;+\!\!\!\!\!\!\supset  \;\fg_L$ of a
semisimple subalgebra ($\fg_L$, its {\it Levi factor}) and its
radical Rad($\fg$). The Lie algebra is called {\it reductive} when
$\hbox{Rad}(\fg)=Z(\fg)$, in which case $\fg=Z(\fg)\oplus \fg_L$
with $\fg_L=[\fg,\fg]$.

Once a basis $\{X_i \}$ of $\fg$ is chosen, a Lie algebra
may be described in terms of the corresponding {\it structure constants}
$C_{ij}{}^k$ of $\fg$, $[X_i, X_j] = C_{ij}{}^k
X_k$. The defining conditions \eqref{Lie-bra}, \eqref{Lie-JI}
for a Lie algebra are then given by the expressions
\begin{equation}
\label{JIascocy}
  C_{ij}{}^k = - C_{ji}{}^k \qquad\text{and}\qquad  C_{[ij}{}^l C_{k]l}{}^s = 0 \quad \hbox{(JI)}
  \quad,\quad i,j,k=1,\dots,r=\hbox{dim}\fg \quad,
\end{equation}
respectively, where the square brackets surrounding the indices
denote total skewsymmetrisation. This will be defined throughout the
paper by
\begin{equation}
\label{defmultibra}
  [a_1\dots a_n] := \sum_{\sigma \in S_n} (-1)^{\pi(\sigma)} a_{\sigma(1)}\dots
    a_{\sigma(n)} \quad ,
\end{equation}
where $\pi(\sigma)=0,1$ is the even or odd parity of the permutation
$\sigma$ in the group $S_n$ of permutations of the indices
$(1,\dots,n)$, without the `weight one' factor $1/n!\,$.
This means that
\begin{equation}
\label{anticycl} [a_1\dots a_n]=\sum_{cycl \,\sigma\in S_n}
(-1)^{\pi(\sigma)} a_{\sigma(1)}[a_{\sigma(2)}\dots a_{\sigma(n)}]
\end{equation}
which, for $n$ odd, does not produce any signs. In terms of the
structure constants, the matrices $ad_X\in \hbox{End}\,\fg$ of the
adjoint representation $ad_{X_i} X_j :=[X_i,X_j]$ are given by
$(ad_{X_i})^k{}_j=C_{ij}{}^k$.

\subsection{A comment on associativity}
\label{asso-FA}
{\ }

The {\it associator} of a triple product,
which is given by
\begin{equation}
\label{associat}
(X,Y,Z):=(XY)Z-X(YZ)\; ,
\end{equation}
accounts for the lack of associativity in
the same way that the commutator $[X,Y]=XY-YX$ measures
the lack of commutativity. Non-associative algebras
(see \cite{Schaf:66,Oku:95,Gur-Tze:96}
for book discussions) under the composition
have non-zero associators. It is possible to
define an antisymmetric associator by
\begin{equation}
\label{3asso} [X,Y,Z]_{\mathrm{ant.assoc.}} :=
(X,Y,Z)+(Y,Z,X)+ (Z,X,Y)-
(Y,X,Z)-(X,Z,Y)-(Z,Y,X) \quad .
\end{equation}
The above expression is equivalent to
$[([X,Y]),Z]+[([Y,Z]),X]+ [([Z,X]),Y]$. Clearly,
if the associator is zero there is
associativity and the ordinary JI is satisfied.

\subsection{Metric Lie algebras}
\label{sec:metric-Lie}{\ }

A Lie algebra is called {\it metric} when it is endowed with an invariant, symmetric
and non-degenerate bilinear form $<\quad,\quad>\,:\, \fg\times\fg \rightarrow
\mathbb{R}$ which defines the scalar product in the $\fg$ vector space.
This means that
\begin{equation}
  \label{metr-Lie}
  X \cdot \left<Y, Z \right> =
  \left<[X\,,\, Y], Z\right> + \left<Y, [X\,,\, Z] \right> = 0 \qquad \forall\; X,Y,Z \in \fg
\end{equation}
{\it i.e.}, the bilinear form $\left<\quad,\quad \right>$ is
`associative' in the sense that $<[Y,X],Z>=<Y,[X,Z]>$.

If $\{ \omega^i \}$ is a basis of the coalgebra $\fg^*$ dual to $\{X_j\}$, the
bilinear form and the invariance condition are expressed as
\begin{equation}
\label{me-Li-Li}
g=g_{ij} \omega^i\otimes\omega^j \quad, \quad C_{li}{}^s\,g_{sj}+ C_{lj}{}^s\,g_{is}=0\quad ,
\end{equation}
where the coordinates of $g$ are $g_{ij}=g(X_i,X_j)\equiv\left<X_i\,,\,X_j\right>$.

Semisimple Lie algebras are metric because their {\it Cartan-Killing metric}
$k(X,Y):=\hbox{Tr}(ad_X ad_Y)$ is non-degenerate and invariant
(trace forms are invariant), and hence defines an inner product;
clearly, $k([Y,X],Z)=k(Y,[X,Z])$ follows from the associativity of
$ad_X \in \hbox{End}\, \fg$. In terms of the structure constants, the
coordinates of the Killing metric $k$ are given by
\begin{equation}
\label{Killing}
k_{ij}:= \textrm{Tr}(ad_{X_i} ad_{X_j})=C_{il}{}^s\,C_{js}{}^l \quad .
\end{equation}
The Killing metric is negative-definite for the compact real form
of a semisimple Lie algebra (when the generators are taken to be
antihermitian). The metricity condition
(\ref{metr-Lie}) is equivalent to the total skewsymmetry
of the structure constants with all indices down, $C_{ijk}:=
C_{ij}{}^l g_{lk}$. Although all Lie algebras are endowed with a
canonical $ad$-invariant inner product given by the Killing form,
the inner product of metric nonsemisimple Lie algebras has to be
imposed as an external structure. For instance, Abelian
Lie algebras are metric relative to any inner product,
since the adjoint action is trivial. For the structure of
metric Lie algebras see
\cite{Medina:85,Med-Rev:85,JMF-Sta:96,Kath-Ul:04,Kath-Ul:03}.
\\

\subsection{Lie algebras and elementary differential geometry on Lie groups}
\label{sec:diffG-on-G}{\ }

Let $G$ be a $r$-dimensional Lie group with elements $g$
parametrized by the their coordinates $g^l,  l=1,\dots,r$ and let
$L_{g'} g =g'g = R_g g'$ ($g',\ g\in G$) be the left and right
actions $G\times G\to G$ with obvious notation. The left (right)
invariant vector fields LIVF $X_i^L(g)$ (RIVF, $X_i^R(g)$) on $G$
reproduce the commutators of the Lie algebra $\fg$:
\begin{equation}
\label{LIVFG} [X_i^L (g), X_j^L (g)] = C_{ij}^k X_k^L (g)
\quad,\quad [X_i^R (g), X_j^R (g)] = - C_{ij}^k X_k^R (g) \quad,
\end{equation}
$i,j,k=1,\dots,r=\hbox{dim}\,\fg$. In terms of the Lie derivative,
the $L$- ($R$-) invariance conditions read\footnote{The superindex
L (R) in the fields refers to the left (right) invariance of them;
LIVF (RIVF) generate right (left) translations.}
\begin{equation}
\label{LRzero} L_{X_j^R (g)} X_i^L (g) = [X_j^R (g), X_i^L (g)] =0
\quad, \quad L_{X_i^L (g)} X_j^R (g) = [X_i^L (g), X_j^R (g)] =0
\quad,
\end{equation}
which simply express that the left and right translations of $G$
commute.

The Lie algebra $\fg$ may be equally described in terms of
invariant forms on the manifold of the associated Lie group $G$.
Let $\omega^{L i}(g)$ be the basis of LI one-forms of $G$ dual to
a basis of $\fg$ given by LIVF ($\omega^{L i}(g)(X^L_j(g)) =
\delta^i_j$). The $\omega^i$ satisfy the Maurer-Cartan (MC)
equations
\begin{equation}
\label{3MC}
 d \omega^{L\,i}(g) = - \frac{1}{2} C^i_{jk}
\omega^{L\,j}(g) \wedge \omega^{L\,k}(g) = -C^i_{jk}\omega^{Lj}(g)\otimes\omega^{Lk}(g) \quad,
\end{equation}
and the JI  $C^\rho_{[i_1 i_2} C_{i_3]\rho}^\sigma = 0$ is now
implied by the nilpotency of $d$, $d^2\equiv 0$.

The exterior derivative $d$ of a $p$-form $\alpha\in \Lambda^p(M)$
on a manifold $M$ is the $(p+1)$-form defined by the functions in
$\mathscr{F}(M)$ obtained by taking arguments on a set of $(p+1)$
arbitrary vector fields $X_1(x),\dots,X_{p+1} (x)$ on $M$. This
expression is given by Palais formula,
\begin{equation}
\label{Palais}
    \begin{aligned}
     (d\alpha)(X_1,&\dots, X_p,X_{p+1}):=\sum^{p+1}_{i=1}(-1)^{i+1}
      X_i\cdot \alpha(X_1,\dots,{\hat X}_i,\dots,X_{p+1})\,+\\
     &\sum_{i<j}(-1)^{i+j}\alpha([X_i,X_j],X_1,\dots,{\hat X}_i,\dots,
     {\hat X}_j,\dots,X_{p+1})\quad .
\end{aligned}
\end{equation}
Thus, if $\alpha$ is a LI form on $G$,
\begin{equation}
\label{3diff}
d\alpha^L (X^L_{i_1},\dots,X^L_{i_{p+1}}) =
\sum_{s<t}(-1)^{s+t}\alpha^L([X^L_{i_s},X^L_{i_t}],
X^L_{i_1},\dots,{\hat X}^L_{i_s},\dots,{\hat
X}^L_{i_t},\dots,X^L_{i_{p+1}}) \quad,
\end{equation}
since $\alpha^L (X^L_1,\dots,{\hat X}^L_i,\dots,X^L_{p+1})$ is
constant and the first term in (\ref{Palais}) is then
zero\footnote{ {}From now on we shall assume that the vector
fields on $G$ generating $\fg$ and their dual one-forms are the left
invariant ones (\emph{i.e.}, $X\in\mathfrak{X}^L(G)$, etc.) and drop
the superindex $L$. Superindices $L,\ R$ will be used to avoid
confusion when both LI and RI objects appear.}.

The MC equations may be written in a more compact way by
introducing the (canonical) $\fg$-valued LI one-form $\theta$ on
$G$, $\theta(g) = \omega^i(g)\circ X_i(g)$, $\theta(X_i)=X_i\,$;
then, the MC equations read
\begin{equation}
\label{3canonMC} d\theta = -\theta \wedge \theta = - \frac{1}{2}
[\theta,\theta] \quad ,
\end{equation}
since the $\fg$-valued bracket of two $\fg$-valued $p,q$-forms
$\alpha,\beta$ is the $\fg$-valued $(p+q)$-form
\begin{equation}
\label{gvalbra}
[\alpha,\beta] := \alpha^i \wedge \beta ^j \circ [X_i,X_j]
= C^k_{ij}\alpha^i \wedge \beta ^j\,\circ X_k \quad ,
\quad [\alpha,\beta]= (-1)^{pq+1} [\beta ,\alpha] \quad,
\end{equation}
so that $[\theta,\theta]^i=2\,C^i_{jk}\omega^{Li}\otimes\omega^{Lj}$.

The transformation properties of the MC forms $\omega^i(g)$ follow
from the action of the Lie derivative on one-forms, $(L_Y
\beta)(X)= Y.\beta(X)-\beta([Y,X])$, which on the LI MC forms
gives
\begin{equation}
 \label{3Lomega}
L_{X_i(g)} \omega^j(g) = -C^j_{ik} \omega^k(g)
\end{equation}
(if $\omega$ were RI, the $r.h.s$ above and those in
eq.~\eqref{3canonMC} would have plus signs). A general LI $p$-form
on $G$ is a linear combination of exterior products of MC forms,
$\displaystyle \alpha(g)=\frac{1}{p!}\alpha_{i_1\dots i_p}
\omega^{i_1}(g)\wedge\dots\wedge\omega^{i_p}(g)$, where
$\alpha_{i_1\dots i_p}$ are constants; for it
\begin{equation}
L_{X_i(g)} \alpha(g) = -\sum_{s=1}^p \frac{1}{p!} C^{i_{s}}_{i k}
\alpha_{i_1 \dots i_p} \omega^{i_1}(g)\wedge\dots\wedge
\widehat{\omega^{i_s}(g)} \wedge \omega^k(g) \wedge\dots\wedge
\omega^{i_p}(g) \quad. \label{3Lalpha}
\end{equation}
In terms of the Lie derivative $L_X$, the invariance of the scalar
product is simply written as $L_X\,g=0$ where now the $\omega^i(g)$
in eq.~\eqref{me-Li-Li} are the MC forms.

\section{Lie algebra cohomology, central extensions and deformations}
\label{sec4.1}

We provide here a summary of the basic Lie algebra cohomology
notions and expressions that will be useful later on when
we consider the cohomology of $n$-Lie algebras (see
\cite{Che-Eil:48,Koszul:50} and {\it e.g.}, \cite{CUP} and
references therein).

\subsection{Main definitions}{\ }

Let $\fg$ be a Lie algebra, $\fg^*$ the dual of the $\fg$ vector
space and $V$ a vector space which is a left $\rho(\fg)$-module {\it
i.e.}, $V$ carries a representation $\rho$ of the Lie algebra $\fg$
on $V$, $\rho(X_i)^A_{.C} \rho (X_j)^C_{.B} - \rho(X_j)^A_{.C} \rho
(X_i)^C_{.B} = \rho([X_i,X_j])^A_{.B}\;$,
$A,B=1,\dots,\hbox{dim}\,V$.

\begin{definition}
\label{alg-coch}
 ({\it $V$-valued $p$-dimensional cochains on a Lie algebra
 $\fg$}) {\ }

  A $V$-valued $p$-cochain $\Omega^p $ on $\fg$ is a skewsymmetric
$p$-linear mapping
\begin{equation}
\label{4cochain}
 \Omega^p: \fg \times \mathop{\dots}\limits^p
\times \fg \to V \quad, \quad \Omega^A = {1 \over p!} \Omega_{i_1
\dots i_p}^A \omega^{i_1} \wedge \dots \wedge \omega^{i_p} \quad,
\end{equation}
where $i_1,\dots,i_p=1,\dots, r= \textrm{dim}\,\fg\,,\,
A=1,\dots,\textrm{dim} V$ and $\{\omega^i \}$ is a basis of $\fg^*$
dual to the one \{$X_j$\} of $\fg\ $; the constants $\Omega_{i_1
\dots i_p}^A$ are the coordinates of the $p$-cochain in the
(non-minimal) basis $\omega^{i_1} \wedge \dots \wedge \omega^{i_p}$.
At this stage, the $\{\omega^i\}$ are simply elements of $\fg^*$
{\it i.e.} linear maps, $\omega^i: \fg \rightarrow \mathbb{R}$, say.
\end{definition}

Under the natural addition law of $V$-valued skewsymmetric covariant
$p$-tensors on $\fg$, the cochains $\Omega^p \in C^p$ of a given
order $p$ form an abelian group which is denoted by $C^{p} (\fg, V
)$. The action of the coboundary operator is given by

\begin{definition} ({\it Coboundary operator} (for the left action
$\rho$ of $\fg$ on $V$))
\label{coho-for-repr}){\ }

 The coboundary operator is the map
$s: \Omega^p\in C^{p} (\fg , V) \mapsto (s\Omega^p) \in C^{p+1}
(\fg, V )$ given by
\begin{equation}
\label{Lie-cob}
\begin{aligned}
(s \Omega^p )^A \, ( X_{1} ,..., X_{p{+}1} ) & :=
\sum_{i{=}1}^{p{+}1} (-)^{i + 1} \rho ( X_i )^A_{.B} \,(
\Omega^{pB} ( X_{1} ,..., {\hat X}_{i},..., X_{p{+}1} ) )\\
 &+ \sum_{{j,k=1 \atop j < k}}^{p{+}1} (-)^{j{+}k}
 \Omega^{pA} ( [X_{j} , X_{k} ] , \ X_{1} ,..., {\hat X}_{j} ,...,
  {\hat X}_{k},..., X_{p{+}1} ) \quad;
\end{aligned}
\end{equation}
$(s\Omega^p)=0$ for $p\geq$ dim$\,\fg$. This defines the {\it Lie
algebra cohomology complex} ($\,C^\bullet(\fg,V) , s)$ for the
representation $\rho$ of $\fg$.
\end{definition}

The structure of this formula is analogous to that for the exterior
derivative $d$ acting on forms $\alpha(x)$ on a finite-dimensional
manifold $M$, eq.~(\ref{Palais}). The only difference is that in
(\ref{Lie-cob}) the $\Omega$ is a $p$-antisymmetric covariant tensor
defined on the vector space of $\fg$, the $X_i$ are vectors and
$\Omega(X_1,\dots,X_p)$ is a number, while in eq.~\eqref{Palais}
$\alpha(x)$ is a $p$-skewsymmetric covariant tensor {\it field} on
the manifold $M$ and the $X_i(x)$ are vector {\it fields} on $M$. Of
course, $s^2\equiv 0$ on any $\Omega^p$, as will be shown in Prop.
\ref{prop:nilp}.

   We give below, for later use, a few examples of the action of
the coboundary operator on cochains of the lower orders:
\begin{equation}
\label{Lie-coh-ex}
\begin{aligned}[b]
s\Omega^0(X_1)=& \rho(X_1)\Omega^0 \; ,\\
s\Omega^1(X_1,X_2) =&\rho(X_1)\Omega^1(X_2)-\rho(X_2)\Omega^1(X_1)
-\Omega^1([X_1,X_2]) \; ;\\
s\Omega^2(X_1,X_2,X_3) =&\rho(X_1)\Omega^2(X_2,X_3)
-\rho(X_2)\Omega^2(X_1,X_3)+\rho(X_3)\Omega^2(X_1,X_2)\\
-&\Omega^2([X_1,X_2],X_3)+\Omega^2([X_1,X_3],X_2)-\Omega^2([X_2,X_3],X_1)\;\\
 s\Omega^3 (X_1,X_2,X_3,X_4) =& \rho(X_1)\Omega^3(X_2,X_3,X_4)-
   \rho(X_2)\Omega^3(X_1,X_3,X_4)  \\
   +& \rho(X_3)\Omega^3(X_1,X_2,X_4)- \rho(X_4)\Omega^3(X_1,X_2,X_3)  \\
   - & \Omega^3([X_1,X_2],X_3,X_4) + \Omega^3([X_1,X_3],X_2,X_4)
   -\Omega^3([X_1,X_4],X_2,X_3)\\
   -&  \Omega^3([X_2,X_3],X_1,X_4)  +  \Omega^3([X_2,X_4],X_1,X_3)
   -\Omega^3([X_3,X_4],X_1,X_2) \;  .
\end{aligned}
\end{equation}

For later generalization to $n$-ary algebras, it is convenient to
rewrite the action of the coboundary operator for $\rho=0$ using the
adjoint derivative. In terms of $ad_X$, it obviously reads
\begin{equation}
\label{Lie-cob-ad}
\begin{aligned}
 (s \Omega^p )\, ( X_{1} ,..., X_{p{+}1} ) = &
\sum_{{1\leq j<k}}^{p+1} (-1)^{j+k}
 \Omega^p ( ad_{X_{j}} X_{k} , \ X_{1} ,..., {\hat X}_{j} ,...,{\hat
X}_{k},..., X_{p{+}1} ) \\
= & \sum_{{1\leq j<k}}^{p+1}  (-1)^j \Omega^p (X_1 ,..., {\hat
X}_{j} ,...,ad_{X_j}X_{k},..., X_{p+1} ) \quad .
\end{aligned}
\end{equation}
Clearly, $s^2=0$ follows from the fact that
\begin{equation}
\label{adJI}
 ad_X  ad_Y  Z - ad_Y  ad_X  Z = ad_{[X,Y]} Z \quad ,
\end{equation}
eq.~(\ref{eq:ad-der}). At present, this is
a mere change of notation, but it will prove
useful later to define the $n$-Lie algebra cohomology since, as
stated, FAs constitute a generalization of Lie algebras based on
extending the adjoint derivative to the ($n>2$)-bracket case.

\begin{definition}
({\it Cocycles, coboundaries and the $p$-th cohomology group})
\label{coho-groups}
{\ }

 A $V$-valued $p$-{\it cochain} $\Omega^p$ (or $\Omega^{pA}$,
making explicit the coordinate index of the $\rho(\fg)$-module $V$)
is called a $p$-{\it cocycle} when
 $s \Omega^p =0$. If a cocycle $\Omega^p$ may be written as
$\Omega^p = s \Omega^{p-1}$ where $\Omega^{p-1}$ is a
$(p-1)$-cochain, the $p$-cocycle is trivial and $\Omega^p$ is called
a $p$-{\it coboundary}. The spaces of $p$-cocycles and
$p$-coboundaries are labelled, respectively, by $Z^p_\rho(\fg,V)$
and $B^p_\rho(\fg,V)$. The $p$-th {\it Lie algebra cohomology group}
$H^p_\rho(\fg,V)$, with values in $V$ for the representation $\rho$,
is defined by the quotient group
\begin{equation}
\label{coho-gr}
H^p_\rho(\fg,V) = Z^p_\rho(\fg,V) /
B^p_\rho(\fg,V)\quad.
\end{equation}
\end{definition}
\noindent
Cohomology groups measure the lack of exactness of the
sequence $\cdots C^{p-1} \mathop{\rightarrow}\limits^s C^p
\mathop{\rightarrow}\limits^s  C^{p+1} \cdots $
 {\it i.e.}, how much $s C^{p-1}\equiv B^p \subset C^p$
differs from the kernel $Z^p$ of $s$ acting on $C^p$; if $H^p=0$,
$s C^{p-1}\equiv B^p=Z^p \equiv \textrm{ker}\,s \subset C^p$.\\

%
%
%

\begin{remark} ({\it Lie algebra cohomology and the MC equations})
\label{re:MCcoho}
{\ }

Let $\omega$ be the $\fg$-valued one-cochain on $\fg$ defined by
$\omega(X_i)=X_i$ (when defined on the group manifold, it is the
canonical one-form $\theta$ on $G$). Then, the expression of the
action of the coboundary operator in  eq.~\eqref{Lie-coh-ex} for the
trivial representation gives $s\omega(X_i,X_j)=-\omega([X_i,X_j])$,
which may be written as
$s\omega(X_i,X_j)=-(\omega\wedge\omega)(X_i,X_j)$ {\it i.e.}, as
$s\omega=-\omega\wedge\omega$. This is the MC equation
\eqref{3canonMC}, which appears here as the expression of the action
of the coboundary operator on the one-cochain $\omega$. In
Sec.~\ref{sec:CEco}, $s$ will be identified with the exterior
derivative, the $\omega^i$ with the left invariant MC forms on $G$
and $d^2\equiv 0$ will correspond to $s^2\equiv 0$ above, which
holds as a consequence of the JI.
\end{remark}

\subsection{Central extensions of a Lie algebra}
\label{ex:central-Lie}{\ }

It is easy to see that the possible central extensions of a Lie
algebra $\fg$ (see {\it e.g.} \cite{CUP}) are characterized by
non-trivial two-cocycles for the trivial representation. Chosen a
basis, the commutators of a central extension $\widetilde{\fg}$ of a
Lie algebra $\fg$ are generically given by
\begin{equation}
\label{c-ext}
 [\tilde{X}_i,\tilde{X}_j]=
 C_{ij}{}^k \ \tilde{X}_k + \Omega^2(X_i,X_j) \Xi\; ,
\end{equation}
where $X_i\in \fg$, $\tilde{X_i}\in \widetilde{\fg}$, $\Xi$ is the
central ($[\tilde{X_i},\Xi]=0$) generator in $\widetilde{\fg}$,
$\Omega^2(X_i,X_j)=-\Omega^2(X_j,X_i)$ and $[X_i,X_j]= C_{ij}{}^k
X_k$ are the commutators of the original, unextended algebra
$\fg$. If (\ref{c-ext}) has to be a Lie algebra, the Lie bracket
in $\widetilde{\fg}$ has to satisfy the JI, and this forces the antisymmetric bilinear
map $\Omega^2$ in (\ref{c-ext}) to be a two-cocycle, as it follows
from the third equality for $s \Omega(X,Y,Z)$ in
(\ref{Lie-coh-ex}) for $\rho=0$.

If $\Omega^2$ is a trivial two-cocycle (a two-coboundary,
$\Omega^2=s\Omega^1$) there exists a basis of $\widetilde{\fg}$ in
which the $\Xi$ can be removed from the $r.h.s$ of (\ref{c-ext}).
This means that $\widetilde{\fg}$ is the direct sum $\widetilde{\fg} =\fg
\oplus \mathbb{R}$ and that the central extension is actually
trivial. Indeed, in this case (see the second equality in
\eqref{Lie-coh-ex}) $\Omega^2(X_i,X_j)= (s\Omega^1)(X_i,X_j)
=-\Omega^1([X_i,X_j])=-C_{ij}{}^k\Omega(X_k)$. It is then
sufficient to define new generators $\tilde{X}'$ of $\widetilde{\fg}$
by the linear combination
$\tilde{X}'_k=\tilde{X}_k-\Omega^1(X_k)\,\Xi$ to obtain
$[\tilde{X}'_i,\tilde{X}'_j]=C_{ij}{}^k (\tilde{X}_k -
\Omega^1(X_k)) = C_{ij}{}^k \tilde{X}'_k\ $,
$\,[\tilde{X}'_i,\Xi]=0$, which shows that
$\widetilde\fg= \fg \oplus \mathbb{R}$ in an explicit manner.
Of course, such a $\widetilde\fg$ is a trivial central extension
irrespective of the basis used to express its commutators.

Thus, the central extensions of a Lie algebra $\fg$ are governed
by the second cohomology group for the trivial representation,
$H^2_0(\fg)$. When $H^2_0(\fg)=0$, all central extensions of $\fg$
are trivial. This is the case for semisimple algebras by
virtue of the Whitehead's Lemma \ref{lem:White} below.

\subsection{Deformations of Lie algebras}
\label{sec:def-Lie}{\ }

 Deformations of algebras were studied long ago by
Gerstenhaber \cite{Gers:63} and  by Nijenhuis and Richardson
\cite{Nij-Rich:67} specifically for the Lie algebra case (see also
\cite{Her:70} for an overview and further early references). The
idea is to find a new Lie algebra `close'
to the original one. This leads to a cohomology
problem and to the notion of {\it stability} or {\it rigidity};
algebras that cannot be deformed are called rigid. The idea of
stability has a clear physical meaning: it is associated
with theories that are not deformable {\it i.e.}, that they are stable
in the sense that they do not change in a qualitative
({\it i.e.}, structural) manner by smoothly changing a
parameter. For instance, since the Poincar\'e
algebra is a deformation of the Galilei one, Einsteinian mechanics
may be looked as a stabilization of Newtonian mechanics (albeit a
partial one, since the Poincar\'e algebra itself may still be
deformed into either of the stable simple de Sitter and anti-de
Sitter algebras, $so(1,4)$ and $so(2,3)$). The deformation process
may also be applied to superalgebras (see \cite{Bine:86}); for
instance, $osp(1|4)$ is a deformation of the $N$=1, $D$=4
superPoincar\'e algebra.
\medskip

Since we shall consider in Sec.~\ref{coho-for-def} deformations of
Filippov algebras, let us review briefly here the problem of
deforming Lie algebras. This provides an example where the
relevant cohomology is the Lie algebra cohomology for a
representation, $\rho=ad$. The aim here is to obtain a deformation
of the original Lie bracket $[X,Y]$, depending of a parameter $t$,
in a way that still defines a Lie algebra. Thus, one looks for a
new Lie bracket $[X,Y]_t$ depending on $t$ ($[X,Y]_{t=0}=[X,Y]$),
defined by
\begin{equation}
\label{def-Lie}
 [X\,,\,Y]_t := [X\,,\,Y] + \sum_{n=1}^{\infty}t^n \alpha_n (X,Y)
 \quad ,
\end{equation}
where the $\alpha_n$ are necessarily bilinear and  skewsymmetric
$\fg$-valued maps, $\alpha_n :\wedge^2 \fg \rightarrow \fg$,
$\alpha_n\in \hbox{Hom}(\wedge^2 \fg, \fg)$,  that must satisfy
the conditions that the JI imposes on the deformed bracket $[X\,,\,Y]_t
\,$, which has to be a Lie algebra bracket. Thus, the dimension of
the deformed Lie algebra is the same as that of the original one;
only its structure is deformed.  The first
order deformation, $[X\,,\,Y]_t := [X\,,\,Y]+t\alpha(X,Y)$, is the
{\it infinitesimal deformation}; not every infinitesimal
deformation is the first-order term of a full deformation. When it
is, the deformation is called {\it integrable}.

\subsubsection{Infinitesimal deformations of a Lie algebra $\fg$}{\ }

   To see how cohomology enters, consider an infinitesimal
deformation {\it i.e.}, eq.~\eqref{def-Lie} neglecting terms of
order $t^2$ and higher. This means that, to find the conditions
that the $\fg$-valued $\alpha_1\equiv\alpha$ has to satisfy, only
$t$-linear terms have to be kept in
\begin{equation}
\label{def-Lie-JI}
[X\,,\,[Y\,,\,Z]_t\,]_t+[Y\,,\,[Z\,,\,X]_t\,]_t+[Z\,,\,[X\,,\,Y]_t\,]_t=0
\quad ,
\end{equation}
which is the JI for the deformed Lie algebra. Using the JI
for the original one $\fg$, the remaining (order $t$) terms give rise to the
condition
\begin{equation}
\label{def-Lie-2-coc}
 ad_X \alpha(Y,Z) + ad_Y \alpha(Z,X) + ad_Z \alpha(X,Y)+
 \alpha (X,[Y,Z])+ \alpha (Y,[Z,X])+ \alpha (Z,[X,Y]) = 0 \; .
\end{equation}
Comparing with the third equality in eq.~\eqref{Lie-coh-ex}, we
see that the skewsymmetric map $\alpha$ has to be a
$\fg$-cohomology two-cocycle for the action $\rho=ad$,
since eq.~\eqref{def-Lie-2-coc} simply reads $s\alpha(X,Y,Z)=0$.
Alternatively, we might take the $l.h.s.$ of \eqref{def-Lie-2-coc} to {\it define}
$s\alpha(X,Y,Z)$ and thus the three-linear skewsymmetric map
$s\alpha$; this would lead us to {\it derive} the cohomology
relevant for the deformation problem (see below).

We may now ask ourselves whether this infinitesimal deformation is
a true one {\it i.e.}, not removable by redefining the basis of
the algebra, in which case the deformation would be trivial. It
turns out that this will be so if the two-cocycle defining the
infinitesimal deformation is a two-coboundary, hence obtained from
a $\fg$-valued one-cochain $\beta$ {\it i.e.}, if
$\alpha(X,Y)=(\delta\beta)(X,Y)=
ad_X\beta(Y)-ad_Y\beta(X)-\beta([X,Y])$ (see the second equality
in eq.~\eqref{Lie-coh-ex}). To check this, it is sufficent to
redefine new generators $X'=X-t\beta(X)$ to find that, with
$[X,Y]=Z$ and neglecting terms of order higher than $t$,
\begin{equation}
\begin{aligned}
& [X',Y']_t=  [X\,,\,Y] + t\alpha(X,Y)-t\,ad_X \beta(Y) +
ad_Y \beta(X) + (t \beta([X,Y])-t\beta([X,Y])\,) \\
&\qquad \qquad  = Z -t  \beta(Z) +
t(\alpha(X,Y)-(\delta\beta)(X,Y)\,)= Z' \quad ,
\end{aligned}
\end{equation}
since the $t$-term in the last equality is zero for a two-coboundary
$\alpha=\delta\beta$. Note that, again, enforcing that the
redefinition manifestly exhibits the undeformed character of the
algebra would show that the two-cocycle $\alpha$ is given by a
specific expression in terms of a $\fg$-valued linear map
(one-cochain) $\beta$, which then would determine the form of the
two-coboundary generated by $\beta$. In this way, by studying the
deformations of Lie algebras (and then by generalizing the action of
the coboundary operator on higher order cochains, etc.) we would
have been led naturally to the cohomology complex ($\,C^\bullet(\fg
,\fg) , s$) of Def. \ref{coho-for-repr} for the representation $ad$
and, by extension, for an arbitrary one $\rho$. The JI is the key
ingredient in the definition of the Lie algebra
cohomology\footnote{The Lie algebra cohomology complex relevant for
the general (not necessarily central) extensions of $\fg$ by an
abelian Lie algebra kernel $\mathscr{A}$ (see {\it e.g.} \cite{CUP})
is also the cohomology for a representation $\rho$. We shall not
look in this section at the cohomology of $\fg$ from this point of
view, which would lead us to the coboundary operator of
Def.~\ref{coho-for-repr}. This approach will be followed in
Sec.~\ref{extLeib}, where the extension problem for Leibniz algebras
$\mathscr{L}$ is discussed. Sec.~\ref{extLeib} can be readily
translated to the Lie algebra case by adding the requirement of
skewsymmetry where appropriate.}.

The outcome of this discussion is that if $H^2_{ad} (\fg,\fg)=0$
all two-cocycles are trivial, $\fg$ cannot be deformed and hence
the Lie algebra is rigid \cite{Gers:63, Nij-Rich:67}; therefore,
by Witehead's Lemma \ref{lem:White} below, all semisimple algebras are
stable. Note, however, that $H^2_{ad}(\fg,\fg)=0$ is a {\it
sufficient} condition for the rigidity of a Lie algebra, but not a
necessary one; there are examples of Lie algebras which do not
satisfy this condition {\it i.e.}, with $H^2_{ad}(\fg,\fg)\not=0\,$
-therefore, not semisimple- that nevertheless are rigid
\cite{Rich:67}.

\subsubsection{Higher order deformations}
\label{sec:high-ord-def}
{\ }

Moving further one encounters an obstruction when $H^3\not=0$, which
prevents the integrability of the infinitesimal deformation. To see
it, let us consider \eqref{def-Lie} up to the $t^2$ term,
\begin{equation}
 \label{deform-2}
[X,Y]_t = [X,Y] + t\alpha_1(X,Y) + t^2\alpha_2 (X,Y) \quad .
\end{equation}
Imposing the condition that $[X,Y]_t$ now satisfies the JI to order
$t^2$, and taking into account that $\alpha_1(X,Y)$ gives an
infinitesimal deformation and hence it is a two-cocycle,
$s\alpha_1(X,Y,Z)=0$, we obtain from eq.~\eqref{deform-2} that the
$t^2$ terms have the form
\begin{equation}
\label{def-sec-or}
\begin{aligned}
\alpha_1(X,\alpha_1(Y,Z))+&\alpha_1(Y,\alpha_1(Z,X))+\alpha_1(Z,\alpha_1(X,Y))
\\
+ad_X\alpha_2(Y,Z) +& ad_Y\alpha_2(Z,X)+ ad_Z\alpha_2(X,Y)\\
+\alpha_2 (X,[Y,Z]) +&\alpha_2 (Y,[Z,X])+ \alpha_2 (Z,[X,Y])\\
\equiv \gamma(X,Y,Z) +& s\alpha_2(X,Y,Z)  \quad,
\end{aligned}
\end{equation}
where the first line above defines a $\fg$-valued three-linear map,
\begin{equation}
\label{def-3-co}
\gamma(X,Y,Z) :=
\alpha_1(X,\alpha_1(Y,Z))+ \hbox{cycl.} (X,Y,Z) \quad ,
\end{equation}
which is fully skewsymmetric and hence a three-cochain, $\gamma\in
C^3(\fg,\fg)$, and where the remaining ($\alpha_2$) terms in
eq.~\eqref{def-sec-or} have
been identified as $s\alpha_2$ using eq.~\eqref{Lie-coh-ex}. It is
then seen that $\gamma\in Z^3_{ad}(\fg,\fg)$: indeed, since
$\alpha_1\in Z^2_{ad}(\fg,\fg)$, $s\gamma(X_1,X_2,X_3,X_4)=0$ with
$s\gamma$ given by \eqref{Lie-coh-ex} for $\rho=ad$ and
$\gamma=\Omega^3$.

Hence, when the cocycle $\gamma$ is actually a three-coboundary,
$\gamma=s\alpha'$ where $\alpha'$ is a two-cochain, it is sufficient
to take $\alpha_2=-\alpha'$ in \eqref{deform-2} to see in
\eqref{def-sec-or} that the JI is fulfilled up to second order. We
can now continue in the same way up to terms of order $t^3$ to find
at this stage that again a three-cocycle appears potentially
obstructing the deformation to that order, and so on. Thus, all the
obstructions that prevent expanding an infinitesimal deformation to
a one parameter family of deformations are elements of
$H^3_{ad}(\fg,\fg)$; as all three-cocyles have to be trivial, it
follows that only if $H^3_{ad}(\fg,\fg)=0$ all obstructions vanish
and every infinitesimal deformation $\alpha_1\in
Z^2_{ad}(\fg,\fg)$ is integrable. \\

\subsection{Coordinates expression of the coboundary
operator action for the trivial representation}{\ }

  It is convenient to have the action of the coboundary operator
of eq.~\eqref{Lie-cob-ad} expressed in terms of the coordinates
 $\Omega_{i_1 \dots i_p}=\Omega(X_{i_1},\dots,X_{i_p})$
 of the $p$-cochain on which it acts. Eq.~\eqref{Lie-cob-ad} gives
\begin{equation}
\nonumber
\begin{aligned}
(s\Omega)_{i_1 \dots i_{p+1} } &=
 \sum_{s,t=1 \atop s<t}^{p+1} (-1)^{s+t}C_{i_s i_t}{}^k \,\Omega_{k
 i_1\dots {\hat i_s}\dots {\hat i_t}\dots i_{p+1}} \cr
 &= \frac{1}{2}\sum_{s,t=1 \atop s<t}^{p+1}(-1)^{s+t}\epsilon_{i_s
 i_t}^{j_1 j_2} C_{j_1 j_2}{}^k \frac{1}{(p-1)!}\epsilon_{i_1
 \dots {\hat i_s}\dots {\hat i_t}\dots i_{p+1}}^{j_3\dots j_{p+1}}
 \Omega_{k j_3\dots j_{p+1}} \cr
 &= -\frac{1}{2}\frac{1}{(p-1)!} C_{j_1 j_2}{}^k \, \Omega_{k
 j_3\dots j_{p+1}} \sum_{s,t=1 \atop s<t}^{p+1} (-1)^{s+t+1}\epsilon_{i_s
 i_t}^{j_1 j_2} \epsilon_{i_1 \dots {\hat i_s}\dots {\hat i_t}\dots i_{p+1}}^{j_3\dots j_{p+1}}
\end{aligned}
\end{equation}
which, using
\begin{equation}
\label{eps-relat1}
\sum_{s,t=1 \atop s<t}^{p+1}
(-1)^{s+t+1}\epsilon_{i_s
 i_t}^{j_1 j_2} \epsilon_{i_1 \dots {\hat i_s}\dots {\hat i_t}\dots i_{p+1}}^{j_3\dots j_{p+1}}
 = \epsilon^{j_1 \dots j_{p+1}}_{i_1 \dots i_{p+1}} \quad ,
\end{equation}
which follows by developing the determinant that defines the
antisymmetric Kronecker symbol
\begin{equation}
\epsilon^{i_1\ldots i_{p}}_{j_1\ldots j_{p}}= {\rm det}\left(
\begin{aligned}
\label{defep}
\delta^{i_1}_{j_1} & \cdots & \delta^{i_1}_{j_p}
\\ \vdots & & \vdots \\ \delta^{i_p}_{j_1} & \cdots &
\delta^{i_p}_{j_p}
\end{aligned}
\right) \quad ,
\end{equation}
gives the coordinates of the ($p+1$)-cochain (in fact, coboundary)
\begin{equation}
\label{cob-in-coord}
  (s\Omega)_{i_1 \dots i_{p+1}}=
-\frac{1}{2}\frac{1}{(p-1)!}\epsilon^{j_1 \dots j_{p+1}}_{i_1
\dots i_{p+1}} C_{j_1 j_2}{}^k \, \Omega_{k j_3\dots,j_{p+1}}
\quad.
\end{equation}

\subsection{Chevalley-Eilenberg formulation of Lie algebra cohomology}
\label{sec:CEco}{\ }

  The Chevalley-Eilenberg (CE) formulation \cite{Che-Eil:48}
makes use of the `localisation' process which allows us to obtain
invariant tensor {\it fields} on the group manifold $G$ by
left-translating the appropriate algebraic objects at the unit
element $e\in G$ to an arbitrary group element $g$. In this way, the
expression (\ref{Lie-cob}) for the Lie algebra coboundary operator
and eq. (\ref{Palais}) for the exterior derivative become equivalent
if we we take $M$ as the manifold of the group $G$ associated to
$\fg$ and convert the multilinear applications into invariant tensor
fields on the group manifold. This is done by identifying $\fg$ with
$T_e(G)$, the vector tangent space at the unit element, and by
moving from $e$ to an arbitrary group element $g\in G$ by the left
translation $L_g$ generated by $g$. In this way, the basis elements
$X_i$ of the vector space $\fg$ become LI vector fields $X_i(g)$ on
the group manifold $G$ satisfying the Lie algebra commutation
relations, $[X_i(g),X_j(g)]=C^i{}_{jk}X^k(g)$, the $\omega^i \in
\fg^*$ become the LI Maurer-Cartan (MC) one-forms $\omega^i(g)$ on
$G$ which characterize the Lie algebra from the MC equations dual
point of view (eqs. (\ref{3MC})), and the $p$-cochains become LI
$p$-forms on the group manifold $G$.

 Let $V=\mathbb{R}$, so that $\rho$ is trivial. Then, the first
term in (\ref{Lie-cob}) is not present and, on LI one-forms, $s$
and $d$ act in the same manner. Since there is a one-to-one
correspondence between $p$-antisymmetric maps on $\fg$ and LI
$p$-forms on $G$, a $p$-cochain in $C^p(\fg,\mathbb{R})$ is given
by a LI $p$-form on $G$, which in terms of the MC forms may be
written as
\begin{equation}
\label{p-CE-coc}
\Omega^p (g)= {1\over p!} \Omega_{i_1\dots i_p}
\omega^{i_1}(g) \wedge \dots \wedge \omega^{i_p}(g) \; ,
\end{equation}
with constant coordinates $\Omega_{i_1\dots i_p}$. The Lie algebra
cohomology coboundary operator $s$ for the trivial representation
$\rho=0$ is thus given by the exterior
differential $d$ acting on LI $p$-forms on the group manifold $G$,
which are the $p$-cochains (the explicit dependence of the forms
$\Omega(g)$, $\omega^i(g)$ on $g$ will be omitted henceforth).
\medskip

{\it Coordinates expression of the cocycle condition, $\rho=0$}.

A $p$-cochain $\Omega$ is a $p$-{\it cocycle} for the trivial representation
of $\fg$ if $(s\Omega)_{i_1 \dots i_{p+1} } =0$ {\it i.e.},
when its coordinates satisfy  (see eq.~\eqref{cob-in-coord}),
\begin{equation}
\label{cocy-coord}
 C_{[j_1 j_2}{}^k \Omega_{i_1\dots i_{p-1} ] k} =0 \; .
\end{equation}
This condition also follows from eq.~\eqref{p-CE-coc} by imposing $d\Omega=0$
and using the MC eqs.~\eqref{3MC}.
The nilpotency of $s$ can be easily checked using the CE formulation
of the Lie algebra cohomology:

\begin{proposition}
({\it Nilpotency of the coboundary operator $s$})
\label{prop:nilp}
{\ }

The Lie algebra cohomology operator $s$ is nilpotent, $s^2=0$.
\end{proposition}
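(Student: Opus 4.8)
The plan is to establish $s^2\Omega^p=0$ for an arbitrary representation $\rho$ by a direct expansion from the definition \eqref{Lie-cob}, and then to record the slicker argument that the Chevalley--Eilenberg picture of Sec.~\ref{sec:CEco} supplies in the case $\rho=0$. First I would compute $(s^2\Omega^p)(X_1,\dots,X_{p+2})$ by applying \eqref{Lie-cob} twice. The outer $s$ contributes a piece in which some $\rho(X_i)$ acts on $(s\Omega^p)(\dots)$ and a piece in which a bracket $[X_j,X_k]$ has been inserted as an argument of $s\Omega^p$; expanding the inner $s$ splits each of these again into a $\rho$-piece and a bracket-piece. The whole expression therefore sorts into three families: (i) terms carrying two factors $\rho$, (ii) terms carrying one factor $\rho$ together with one inserted bracket, and (iii) terms carrying two inserted brackets.

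I would then show that each family, or the appropriate combination of families, cancels. For family (iii), the terms with a nested bracket $\Omega^p([[X_i,X_j],X_k],\dots)$ vanish because the cyclic sum $[[X_i,X_j],X_k]+[[X_j,X_k],X_i]+[[X_k,X_i],X_j]$ is zero by the Jacobi identity \eqref{Lie-JI}, while the terms with two disjoint brackets $\Omega^p([X_i,X_j],[X_k,X_l],\dots)$ cancel in pairs after relabelling the summation indices. For family (i), pairing the ordered index choices $(i,j)$ and $(j,i)$ replaces $\rho(X_i)\rho(X_j)$ by the commutator $[\rho(X_i),\rho(X_j)]$, which the module property $[\rho(X_i),\rho(X_j)]=\rho([X_i,X_j])$ turns into a single-$\rho$ bracket term of exactly the shape occurring in family (ii); tracking the sign factors $(-1)^{i+j}$ then shows (i) and (ii) annihilate each other. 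The structural inputs are thus only the Jacobi identity and the representation property, used here inside the argument of $\Omega^p$ in precisely the manner already displayed in \eqref{eq:ad-der} and in the module condition preceding Definition~\ref{alg-coch}.

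A shorter route is available for the trivial representation, and it is the one the setup of Sec.~\ref{sec:CEco} is pointing at. By Remark~\ref{re:MCcoho}, for $\rho=0$ the operator $s$ coincides with the exterior derivative $d$ acting on the left-invariant $p$-forms \eqref{p-CE-coc}, so $s^2=0$ becomes $d^2=0$. Since $d$ is an antiderivation determined by its value on the Maurer--Cartan one-forms, it is enough to verify $d^2\omega^i=0$: applying $d$ to \eqref{3MC} and using the Leibniz rule yields a term proportional to $C^i_{ja}\,C^a_{kl}\,\omega^j\wedge\omega^k\wedge\omega^l$, which vanishes by the Jacobi identity $C^a_{[kl}C^i_{j]a}=0$ recorded just below \eqref{3MC}. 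I expect the only real obstacle to be clerical rather than conceptual: in the general-$\rho$ computation the delicate point is the sign bookkeeping that makes families (i) and (ii) interlock and that pairs up the disjoint-bracket terms in (iii), whereas the genuinely structural ingredients enter in an essentially forced way.
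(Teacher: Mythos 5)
Your proof is correct, but it takes a genuinely different route from the paper's. You run the classical direct computation: expand $(s^2\Omega^p)(X_1,\dots,X_{p+2})$ from \eqref{Lie-cob}, sort the terms into the two-$\rho$, mixed, and double-bracket families, and kill them using the representation property and the Jacobi identity, with the disjoint double-bracket terms cancelling pairwise and the nested ones assembling into the cyclic Jacobi sum. This is the standard algebraic bookkeeping argument and, modulo the sign tracking you correctly flag as the only delicate point, it goes through; its virtue is that it never leaves the level of cochains on $\fg$ and needs no auxiliary structure. The paper instead proves nilpotency for \emph{arbitrary} $\rho$ inside the Chevalley--Eilenberg picture that you reserve for the trivial representation: identifying cochains with left-invariant forms as in \eqref{p-CE-coc}, it factorizes the coboundary operator as $s=d+\rho(X_i)\,\omega^i$ and computes $s^2$ in two lines --- the cross terms $\rho(X_i)\,\omega^i\wedge d$ cancel against the Leibniz-rule contribution of $d$ acting past $\rho(X_j)\,\omega^j$, the pure $d^2$ piece vanishes (this is where the Jacobi identity hides, via the Maurer--Cartan equations), and the residue $\rho(X_i)\rho(X_j)\,\omega^i\wedge\omega^j+\rho(X_j)\,d\omega^j$ dies upon inserting \eqref{3MC} and $[\rho(X_i),\rho(X_j)]=C_{ij}{}^k\rho(X_k)$. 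In effect the paper's decomposition packages your families (ii) and (iii) wholesale into properties of $d$, trading your combinatorial sign bookkeeping for the geometric identification of Sec.~\ref{sec:CEco}; your route buys independence from the group manifold and from the prior establishment of $d^2=0$ on invariant forms, while the paper's buys brevity and conceptual transparency. Your closing $\rho=0$ remark is precisely the germ of the paper's argument: had you added the connection-like term $\rho(X_i)\,\omega^i$ to $d$ there, you would have recovered its proof for general $\rho$.
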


\begin{proof}
First, we notice that a $V$-valued $p$-cocycle may be written as
$\Omega^{pA} (g)= {1\over p!} \Omega^A_{i_1\dots i_p}
\omega^{i_1}(g) \wedge \dots \wedge \omega^{i_p}(g)$. Thus,
looking at the definition of the coboundary operator in
(\ref{Lie-cob}) and taking into account (\ref{3diff}), we see that
$s$ may be written as
\begin{equation}
(s)_{.B}^A = \rho(X_i)_{.B}^A \omega^i + \delta_B ^A d   \quad,
\quad (s= d+ \rho(X_i) \omega^i) \quad.
\end{equation}
Then, since $d^2$=0, the proposition follows from the fact that
\begin{equation}
\begin{array}{rl}
s^2 = & (\rho(X_i) \omega^i + d) (\rho(X_j) \omega^j + d) =
\rho(X_i) \rho(X_j) \omega^i \wedge \omega^j + \rho(X_i) \omega^i
d + \rho(X_j) d (\omega^j) + d^2
\\[0.25cm]
= & \displaystyle - {1\over 2} \rho(X_j) C^j_{l k} \omega^l \wedge
\omega^k + {1\over 2} [\rho(X_i), \rho(X_j) ]  \omega^i \wedge
\omega^j = 0 \quad,
\end{array}
\end{equation}
where use has been made of the fact that $\rho$ is a
representation of $\fg$, $[\rho(X_i), \rho(X_j) ]$
 = $C_{ij}{}^k\rho(X_k)$.
\end{proof}

In spite of $s$ being given by $d$, the Lie algebra CE cohomology is
in general different from the de Rham cohomology: a closed LI
$p$-form $\alpha$ on $G\,$ -{\it i.e.}, a $p$-cocycle- may be de
Rham exact and hence de Rham trivial without being CE-trivial. This
is because a de Rham exact form or de Rham coboundary,
$\alpha=d\beta$, will not be a CE coboundary if the potential
($p-1$)-form $\beta$ of $\alpha$ is not a CE cochain {\it i.e.}, is
not a LI form\footnote{This is  the case {\it e.g.}, for certain
forms on superspace group manifolds (`rigid' superspaces) which
appear in M- and super-$p$-brane theory (see \cite{AzTo:89}). This
is not surprising due to the absence of global considerations in the
fermionic, Grassmann odd sector of supersymmetry. The Lie algebra
cohomology notions may be, in fact, extended to superalgebras (see
{\it e.g.} \cite{Lei:75,Sch-Zha:97,Al-Mi-Ru:01} and refs.
therein).}.
 Nevertheless, for $G$ compact the
Lie and de Rham cohomologies coincide, $H_{DR}(G) =
H_0(\fg,\mathbb{R})$, as stated by the following

\begin{proposition} ({\it de Rham vs. CE cohomology})
\cite{Che-Eil:48}
\label{prop:CEcom}

 Let $G$ be a compact and connected Lie group. Every de Rham
 cohomology class on $G$ contains one and only one bi-invariant form.
 The bi-invariant forms span a ring isomorphic to $H_{DR}(G)$.
\end{proposition}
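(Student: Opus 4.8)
The plan is to exploit the bi-invariant Haar measure that a compact group carries, reducing the statement to a comparison between the full de Rham complex $(\Omega^\bullet(G),d)$ and its subcomplex $\Omega^\bullet_{\mathrm{bi}}(G)$ of bi-invariant forms. First I would normalise the Haar measure so that $\int_G dg=1$; since $G$ is compact the left and right Haar measures coincide, so this measure is bi-invariant. Using it I would define the averaging operator
\begin{equation}
\nonumber
A\alpha := \int_G\!\int_G L_{g_1}^\ast R_{g_2}^\ast\,\alpha \; dg_1\,dg_2 \quad,
\end{equation}
which sends any $p$-form to a bi-invariant one and fixes forms that are already bi-invariant, so $A$ is a projection onto $\Omega^\bullet_{\mathrm{bi}}(G)$. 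Because each $L_g$ and $R_g$ is a diffeomorphism, $L_g^\ast$ and $R_g^\ast$ commute with $d$, and differentiation under the integral sign then yields $A\,d=d\,A$; thus $A$ is a chain map and $\Omega^\bullet_{\mathrm{bi}}(G)$ is a subcomplex.

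The second ingredient is that bi-invariant forms are automatically closed, so the differential of the subcomplex $\Omega^\bullet_{\mathrm{bi}}(G)$ vanishes identically. I would prove this with the inversion map $\iota:g\mapsto g^{-1}$: from $\iota\circ L_g=R_{g^{-1}}\circ\iota$ one checks that $\iota^\ast$ preserves bi-invariance and acts as $(-1)^p$ on a bi-invariant $p$-form (it is $-\,\mathrm{id}$ on $T_eG$, and a left-invariant form is fixed by its value at $e$). Applying $\iota^\ast$ to the bi-invariant $(p+1)$-form $d\omega$ in two ways, via $\iota^\ast d\omega=d\,\iota^\ast\omega=(-1)^p d\omega$ and directly via $\iota^\ast d\omega=(-1)^{p+1}d\omega$, forces $d\omega=0$. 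Hence $(\Omega^\bullet_{\mathrm{bi}}(G),0)$ is a complex with trivial differential, whose cohomology is $\Omega^\bullet_{\mathrm{bi}}(G)$ itself.

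The heart of the argument is to show that $A$ does not change de Rham classes, i.e. that the inclusion $i:\Omega^\bullet_{\mathrm{bi}}(G)\hookrightarrow\Omega^\bullet(G)$ and $A$ are mutually inverse in cohomology. Here connectedness enters: for fixed $g_1,g_2$ the map $L_{g_1}R_{g_2}$ is smoothly homotopic to the identity (join $g_1,g_2$ to $e$ by paths), so by homotopy invariance of de Rham cohomology there is a chain homotopy with $L_{g_1}^\ast R_{g_2}^\ast\alpha-\alpha=d\,h(\alpha;g_1,g_2)$ for closed $\alpha$, depending smoothly on the parameters. Integrating this identity over $G\times G$ and pulling $d$ out of the integral gives $A\alpha-\alpha=d\big(\int_G\!\int_G h(\alpha;g_1,g_2)\,dg_1\,dg_2\big)$, so $[A\alpha]=[\alpha]$. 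Combined with $A\circ i=\mathrm{id}$, this shows $i$ and $A$ induce inverse isomorphisms, whence $H_{DR}(G)\cong H^\bullet(\Omega^\bullet_{\mathrm{bi}}(G),0)=\Omega^\bullet_{\mathrm{bi}}(G)$; surjectivity of $\omega\mapsto[\omega]$ gives the existence of a bi-invariant representative in every class, and its injectivity gives uniqueness.

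Finally, since the wedge product of two bi-invariant forms is again bi-invariant and the product it induces on $H_{DR}(G)$ is the cup product, the bijection above is compatible with products and is therefore a ring isomorphism. The main obstacle I anticipate is the homotopy step: one must produce the chain homotopy $h$ with enough regularity in $(g_1,g_2)$ to justify interchanging $d$ with the double integral, as this is precisely what pins down both existence and uniqueness. (An alternative route to uniqueness would equip $G$ with a bi-invariant metric and invoke Hodge theory, bi-invariant forms being closed and coclosed, hence harmonic and unique in their class; the homological argument above avoids this machinery.)
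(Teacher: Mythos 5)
The paper states this proposition without proof, quoting it from Chevalley--Eilenberg \cite{Che-Eil:48}, so there is no in-text argument to compare against; judged on its own merits, your averaging proof is precisely the classical argument behind that citation, and the steps you actually carry out are all correct. The projection $A$ onto bi-invariant forms commutes with $d$; your inversion-map lemma ($\iota^*\omega=(-1)^p\omega$ on a bi-invariant $p$-form, from $d\iota_e=-\mathrm{id}$ together with the fact that an invariant form is determined by its value at $e$) correctly yields that bi-invariant forms are closed; and the ring statement follows because the inclusion of the bi-invariant forms is a morphism of differential graded algebras inducing wedge $\mapsto$ cup.

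Two comments on the step you flag yourself. First, uniqueness needs no homotopy at all: if $\omega_1,\omega_2$ are bi-invariant and $\omega_1-\omega_2=d\beta$, then $\omega_1-\omega_2=A(d\beta)=d(A\beta)=0$, since $A\beta$ is bi-invariant and hence closed by your own lemma; thus $A$ annihilates exact forms outright, and only existence ($[A\alpha]=[\alpha]$) rests on the homotopy argument. Second, the regularity you worry about there can be supplied locally: near any $(g_1^0,g_2^0)$ write $g_1=\exp(X)\,g_1^0$ and $g_2=g_2^0\exp(Y)$ with $X,Y$ depending smoothly on $(g_1,g_2)$, so that $t\mapsto L_{\exp(tX)g_1^0}\circ R_{g_2^0\exp(tY)}$ is a smoothly parametrized homotopy; chaining finitely many such pieces along paths to $(e,e)$ and patching with a partition of unity on the compact manifold $G\times G$ yields a chain homotopy $h(\alpha;g_1,g_2)$ regular enough to integrate under $d$. (Some such patching is unavoidable: a single global smooth choice of paths $g\mapsto\gamma_g$ from $e$ to $g$ would amount to contracting $\mathrm{id}_G$, which fails unless $G$ is contractible.) Alternatively, your Hodge-theoretic remark closes the whole gap, not only uniqueness: with a bi-invariant metric, bi-invariant forms are closed and coclosed, hence harmonic; conversely the Laplacian commutes with the translation isometries while $L_g$ and $R_g$ are homotopic to the identity, so the unique harmonic representative of each class satisfies $L_g^*\eta=\eta=R_g^*\eta$, giving existence of a bi-invariant representative as well.
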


\begin{example}
Let $\fg$ be the abelian two-dimensional algebra. The
corresponding Lie group is $\mathbb{R}^2$, hence de Rham
trivial. However, the translation algebra $\mathbb{R}^2$ has a
non-trivial Lie algebra second cohomology group; it admits a
one-parameter family of non-trivial central extensions, all
isomorphic to the three-dimensional Heisenberg-Weyl algebra.
\end{example}

\subsection{Whitehead's lemma for vector valued cohomology }

\begin{lemma}({\it Whitehead's lemma} \cite{Jac:79})
\label{lem:White} {\ }\\
Let $\fg$ be a finite-dimensional
semisimple Lie algebra over a field of characteristic zero and let
$V$ be a finite-dimensional irreducible $\rho (\fg)$-module such
that $\rho ( \fg) V \not = 0$ ($\rho $ \emph{non-trivial}). Then,
\begin{equation}
H_{\rho }^{q} ( {\fg} , V ) = 0 \quad \forall\, q \geq 0  \quad .
\end{equation}
If $q = 0$, the non-triviality of $\rho $ and the irreducibility
imply that $\rho ( \fg ) \cdot v = 0 \ (v \in V)$ holds only for
$v = 0$.
\end{lemma}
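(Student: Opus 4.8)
The plan is to prove Whitehead's lemma by the standard Casimir-operator argument, which reduces the vanishing of all cohomology groups to an eigenvalue computation. The key tool is the quadratic Casimir element $c = \sum_{i,j} k^{ij}\rho(X_i)\rho(X_j)\in\End V$ built from the inverse Killing metric $k^{ij}$ (which exists since $\fg$ is semisimple, so $k_{ij}$ of \eqref{Killing} is non-degenerate). I would first record that $c$ commutes with all $\rho(X_\ell)$: using $[X_\ell,X_i]=C_{\ell i}{}^m X_m$ together with the $\ad$-invariance of the Killing form (total skewsymmetry of $C_{ij}{}^k k_{k\ell}$), one checks that the mixed terms cancel in pairs. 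Hence $c$ is a $\rho(\fg)$-module endomorphism of $V$. Since $V$ is irreducible and $\rho$ non-trivial, Schur's lemma forces $c=\lambda\,\id_V$ for a scalar $\lambda$, and $\lambda\neq 0$ because $\tr_V c = \sum k^{ij}\tr\bigl(\rho(X_i)\rho(X_j)\bigr)$ is a non-degenerate trace form paired with $k^{ij}$, which cannot vanish when $\rho(\fg)V\neq 0$.

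The second ingredient is that the Casimir acts on the whole cochain complex $C^\bullet(\fg,V)$ as a chain map homotopic to zero, so that it induces the \emph{zero} map on cohomology. Concretely, I would extend $\rho$ to an action $L$ of $\fg$ on cochains by the Lie derivative (the infinitesimal version of the coadjoint-tensor action), $L_{X_\ell}=\rho(X_\ell)\cdot - \sum(\text{contractions with }C_{\ell i}{}^k)$, and define the Casimir operator on cochains $\Theta=\sum k^{ij}\,\iota_{X_i}\,L_{X_j}$, where $\iota$ denotes contraction of the first slot. The two standard facts to verify are: (i) $\Theta$ commutes with the coboundary $s$ (this follows from $[L_X,s]=0$, i.e. the cohomology operator is $\fg$-invariant, which is immediate from the definition \eqref{Lie-cob} and the Jacobi identity), and (ii) $\Theta$ is chain-homotopic to zero via the Cartan-type formula $\Theta = s\,h + h\,s$ with $h=\sum k^{ij}\,\iota_{X_i}\,\rho(X_j)$ or the analogous contraction homotopy. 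Granting (i) and (ii), on any cocycle $\Omega^q$ the operator $\Theta$ acts both as the scalar $\lambda$ (because on the $V$-valued piece it reduces, up to $s$-exact terms, to the module Casimir $c=\lambda\,\id$) and as an $s$-coboundary; therefore $\lambda\,[\Omega^q]=[\Theta\,\Omega^q]=0$ in $H^q_\rho(\fg,V)$, and since $\lambda\neq 0$ we conclude $[\Omega^q]=0$.

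For the edge case $q=0$, a zero-cochain is just a vector $v\in V$ and the cocycle condition $s\Omega^0=0$ reads $\rho(X_\ell)v=0$ for all $\ell$ by the first line of \eqref{Lie-coh-ex}; thus the invariant subspace $V^{\fg}$ is a submodule, which by irreducibility is either $0$ or $V$, and the latter is excluded by non-triviality of $\rho$. Hence $H^0_\rho(\fg,V)=V^{\fg}=0$, exactly as the statement's last sentence asserts.

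\emph{The main obstacle} I anticipate is establishing the homotopy identity (ii) cleanly, i.e. showing that the Casimir on cochains is exactly $sh+hs$ and that its ``scalar part'' is genuinely $\lambda\,\id$ rather than $\lambda\,\id$ plus uncontrolled $s$-exact noise. This requires carefully disentangling the two pieces of the Lie derivative $L_X$ on cochains (the $\rho(X)$ piece versus the reshuffling-of-indices piece) and using the $\ad$-invariance of $k^{ij}$ repeatedly to commute contractions past representation operators; the bookkeeping of signs and the index gymnastics in \eqref{Lie-cob}, \eqref{cob-in-coord} is where the real work lies, even though each individual step is routine.
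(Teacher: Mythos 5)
Your overall strategy (quadratic Casimir plus a contraction homotopy built from $k^{ij}$) is exactly the paper's, and your operator $h=\sum k^{ij}\,\iota_{X_i}\,\rho(X_j)$ is literally the paper's homotopy $\tau$ of eq.~\eqref{whitehead1}. But the route you sketch through the cochain-level Casimir $\Theta$ has a genuine flaw. First, as written $\Theta=\sum k^{ij}\,\iota_{X_i}\,L_{X_j}$ lowers cochain degree by one (contraction has degree $-1$, the Lie derivative degree $0$), so it cannot equal $sh+hs$, which has degree $0$; the identity fails before any bookkeeping starts. Presumably you meant $\Theta=\sum k^{ij}L_{X_i}L_{X_j}$, but then the argument breaks at the other end: this $\Theta$ does \emph{not} act as the scalar $\lambda$ on cochains. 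It acts on $C^p=\Lambda^p\fg^*\otimes V$ by the Casimir eigenvalue of each irreducible $\fg$-summand of $\Lambda^p\fg^*\otimes V$ (the form indices carry a coadjoint action too), and that eigenvalue varies with the summand and vanishes on invariant summands, which do occur for nontrivial $V$ (e.g.\ $V=\fg$, $p=1$, where $\mathrm{id}\in\Hom_\fg(\fg,\fg)$). So the step ``$\lambda[\Omega^q]=[\Theta\,\Omega^q]=0$'' is unjustified, and the ``uncontrolled $s$-exact noise'' you flag as the main obstacle is precisely where this version of the argument founders.

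The repair is to discard $\Theta$ and $L_X$ entirely and prove the \emph{exact} operator identity $(sh+hs)\,\Omega^A = I_2(\rho)^A_{.B}\,\Omega^B$ on \emph{all} cochains, with $h=\tau$ as above and $I_2(\rho)=k^{ij}\rho(X_i)\rho(X_j)$ the Casimir acting only on the $V$-index — this is the paper's eq.~\eqref{whitehead2}. The point you feared does not arise: in $sh+hs$ the cross terms recombine via $[\rho(X_i),\rho(X_j)]=C_{ij}{}^k\rho(X_k)$ and the $ad$-invariance of $k^{ij}$ (eq.~\eqref{Killing} and \eqref{me-Li-Li}) so as to reproduce exactly the structure-constant contractions present in $s$ (eq.~\eqref{Lie-cob}), leaving the value Casimir and nothing else — the paper's footnote carries out this cancellation explicitly for $p=2$. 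Once this identity is in hand, a cocycle satisfies $s\tau\Omega=\lambda\,\Omega$, and since Schur's lemma together with nontriviality of $\rho$ gives $\lambda\neq0$ (your trace argument for $\lambda\neq 0$ is a sound supplement; the paper simply inverts $I_2(\rho)$), one concludes $\Omega=s\left(\lambda^{-1}\tau\Omega\right)$, i.e.\ every cocycle is a coboundary, as in eq.~\eqref{whitehead3}. Your treatment of the $q=0$ case is correct and matches the statement.
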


\begin{proof}
Since $\fg$ is semi-simple, the Cartan-Killing metric $k_{ij}$ is
invertible, $k^{ij} k_{jk} = \delta^i_k$. Let ${\tau}$ be the
operator on the space of $p$-cochains ${\tau}:C^p(\fg,V)\to
C^{p-1}(\fg,V)$ defined by
\begin{equation}
( {\tau} \Omega )_{i_1 \dots i_{p-1}}^A = k^{ij} \rho(X_i)^A_{.B}
\Omega_{j i_1 \dots i_{p-1}}^B \quad. \label{whitehead1}
\end{equation}
It is not difficult to check that on cochains the Laplacian-like
operator $(s\tau + \tau s)$ gives\footnote{ For instance, for a
two-cochain eq. (\ref{whitehead2}) reads
\begin{equation}
\begin{array}{rl}
[(s\tau + \tau s)\Omega]^A_{ij} = & g^{kl} \rho(X_i)^A_{.B}
\rho(X_k)^B_{.C} \Omega^C_{l j}
  - g^{kl} \rho(X_j)^A_{.B} \rho(X_k)^B_{.C} \Omega^C_{l i}
  - g^{kl} \rho(X_k)^A_{.B} C_{ij}^m \Omega^B_{l m}
\\
+ & g^{kl} \rho(X_{k})^A_{.B} \rho(X_{l})^B_{.C} \Omega^C_{i j} +
g^{kl} \rho(X_{k})^A_{.B} \rho(X_{i})^B_{.C} \Omega^C_{j l} +
g^{kl} \rho(X_{k})^A_{.B} \rho(X_{j})^B_{.C} \Omega^C_{l i}
\\
- & g^{kl} \rho(X_{k})^A_{.B} C_{i j}^m \Omega^B_{m l} - g^{kl}
\rho(X_{k})^A_{.B} C_{l i}^m \Omega^B_{m j} - g^{kl}
\rho(X_{k})^A_{.B} C_{j l}^m \Omega^B_{m i}
\\
= & g^{kl} [\rho(X_i),\rho(X_k)]^A_{.B}\Omega^B_{l j} - g^{kl}
[\rho(X_j),\rho(X_k)]^A_{.B}\Omega^B_{l i} + I_2(\rho)^A_{.B}
\Omega^B_{i j}
\\
- & g^{kl} \rho(X_{k})^A_{.B} C_{l i}^m \Omega^B_{m j} -g^{kl}
\rho(X_{k})^A_{.B} C_{j l}^m \Omega^B_{m i} = I_2(\rho)^A_{.B}
\Omega^B_{i j}\quad.
\end{array}
\end{equation}
}
\begin{equation}
[(s {\tau} + {\tau} s) \Omega] ^A _{i_1 \dots i_p} = \Omega_{i_1
\dots i_p} ^B I_2(\rho)^A_{.B} \quad, \label{whitehead2}
\end{equation}
where $I_2(\rho)^A_{.B}= k^{ij}(\rho(X_i) \rho(X_j))^A_{.B}$ is
the quadratic Casimir in the representation $\rho$. By Schur's
lemma it is proportional to the unit matrix. Hence, applying
(\ref{whitehead2}) to a cocycle $\Omega\in Z^p_\rho(\fg,V)$ we
find
\begin{equation}
s {\tau} \Omega = \Omega I_2(\rho) \; \Rightarrow\; s( {\tau}
\Omega I_2(\rho)^{-1} ) = \Omega\quad. \label{whitehead3}
\end{equation}
Thus, $\Omega$ is the coboundary generated by the cochain ${\tau}
\Omega I_2(\rho)^{-1} \in C^{p-1}_\rho(\fg,V)$.
\end{proof}

For semisimple algebras and $\rho=0$ we also have $H_0^1=0$ and
$H_0^2=0$, but already $H_0^3\ne 0$; the three-cocycle
$\Omega_{i_2 i_2 i_3}$ is given by the fully antisymmetric
structure constants $C_{i_1 i_2 i_3}$.

\subsection{Simple Lie algebras, invariant polynomials, and cohomology}
\label{sec6.1}{\ }

Let $\fg$ be simple.
By virtue of Whitehead's Lemma, only the $\rho=0$ case is interesting
in the simple case since, if $\fg$ is simple, $H^p_\rho(\fg,V)=0$
for $\rho$ non-trivial. The non-trivial cohomology groups are
related to the primitive ({\it i.e.}, not reducible to products,
see Def. \ref{def:primitive}) symmetric invariant tensors
\cite{Rac:50,Gel:50,Kle:63,Gru.Rai:64,Bie:63,Per.Pop:68,Oku-Pat:83,Oku:82}
on $\fg$, which in turn determine Casimir elements in the
universal enveloping algebra ${\mathcal U}({\fg})$.

\begin{definition} ({\it Symmetric and invariant polynomials on}
$\fg$) \label{def:sympol} {\ }

A symmetric polynomial on $\fg$ is given by a covariant symmetric
LI tensor.
\end{definition}

In terms of the MC forms on the group manifold $G$, a symmetric
invariant polynomial is given by a LI covariant tensor field on
$G$, $k=k_{i_1...i_m}\omega^{i_1}\otimes...\otimes\omega^{i_m}$
with symmetric constant coordinates $k_{i_1...i_m}$.

The polynomial $k$ is said to be a {\it bi-invariant} (or
$ad$-invariant) symmetric polynomial if it is also
right-invariant, {\it i.e.}\ if $L_{X_l}k=0$ $\forall\,
X_l\in\mathfrak{X}^L(G)$. Using (\ref{3Lalpha}) we find
that
\begin{equation}
L_{X_l} k=0\ \Rightarrow\
C_{li_1}{}^s \,k_{si_2...i_m}+C_{li_2}{}^s \,k_{i_1s...i_m}+\dots+
C_{li_m}{}^s \,k_{i_1...i_{m-1}s} =0\quad . \label{kinva}
\end{equation}
Since the coordinates of $k$ are given by $k_{i_1\dots i_m}=
k(X_{i_1},\dots,X_{i_m})$, eq.~\eqref{kinva} is equivalent to
stating that $k$ is $ad$-invariant\footnote{For a LI $p$-form
$\alpha$, the equivalent to \eqref{adinfi} and \eqref{3diff} show
that a bi-invariant form on $G$ is closed.}, {\it i.e.},
\begin{equation}
 \label{adinfi}
  k([X_l,X_{i_1}],\dots,X_{i_m})+k(X_{i_1},[X_l,X_{i_2}],\dots,X_{i_m})
 +\dots + k(X_{i_1},\dots,[X_l,X_{i_m}])=0
\end{equation}
or, equivalently,
\begin{equation}
  \label{adfini}
 k(Ad\,g\, X_{i_1},\dots,Ad\,g\, X_{i_m})=k(X_{i_1},\dots,X_{i_m})\quad,
\end{equation}
from which eq. (\ref{adinfi}) follows by taking the derivative
$\partial /\partial g ^l$ in $g=e\in G$.

The invariant symmetric polynomials just described can be used to
construct Casimir elements of the enveloping algebra ${\mathcal
U}(\mathcal{\fg})$ in the following way

\begin{proposition}
\label{prop6.1} ({\it Higher order Casimirs})

Let $k$ be a symmetric invariant tensor. Then
$k^{i_1\dots i_m}X_{i_1}\dots X_{i_m}$ (coordinate indices of $k$
raised using the Killing metric), is a Casimir of $\fg$ of order
$m$, $[k^{i_1\dots i_m}X_{i_1}\dots X_{i_m}, Y]=0$ $\forall\, Y\in
\fg $.
\end{proposition}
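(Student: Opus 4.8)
The plan is to verify the claim directly inside the universal enveloping algebra $\mathcal{U}(\fg)$, where the element $\mathcal{C}:=k^{i_1\dots i_m}X_{i_1}\cdots X_{i_m}$ lives as a product of the generators $X_i$. Since $\fg$ generates $\mathcal{U}(\fg)$, it suffices to show that $\mathcal{C}$ commutes with every basis element $X_j$; then $[\mathcal{C},Y]=0$ for all $Y\in\fg$, and $\mathcal{C}$ is a central element (a Casimir) of order $m$. The only structural tool needed is that in any associative algebra the map $[\,\cdot\,,X_j]$ is a derivation, so that
\begin{equation}
\nonumber
 [X_{i_1}\cdots X_{i_m},X_j]=\sum_{p=1}^m X_{i_1}\cdots X_{i_{p-1}}\,[X_{i_p},X_j]\,X_{i_{p+1}}\cdots X_{i_m}\quad.
\end{equation}

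First I would substitute $[X_{i_p},X_j]=C_{i_p j}{}^s X_s$ and relabel the summation indices in each of the $m$ terms so that all of them become multiples of the single ordered monomial $X_{i_1}\cdots X_{i_m}$. A short index bookkeeping then yields
\begin{equation}
\nonumber
 [\mathcal{C},X_j]=\Big(\sum_{p=1}^m\sum_{r} C_{rj}{}^{i_p}\,k^{i_1\dots i_{p-1}\,r\,i_{p+1}\dots i_m}\Big)\,X_{i_1}\cdots X_{i_m}\quad,
\end{equation}
so that the whole commutator is governed by the coefficient tensor in parentheses.

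The decisive step is to recognise that this coefficient tensor is identically zero. Writing the invariance condition \eqref{kinva} for the tensor with its indices raised by the Killing metric -- which is legitimate precisely because the Killing metric is itself $ad$-invariant, so that raising indices commutes with the adjoint action -- gives
\begin{equation}
\nonumber
 \sum_{p=1}^m C_{js}{}^{i_p}\,k^{i_1\dots i_{p-1}\,s\,i_{p+1}\dots i_m}=0\quad,
\end{equation}
and since $C_{rj}{}^{i_p}=-C_{jr}{}^{i_p}$ the coefficient tensor above is exactly minus the left-hand side of this identity. Hence it vanishes and $[\mathcal{C},X_j]=0$, which is what we wanted.

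The point I expect to require the most care is not any hard estimate but the index bookkeeping together with the observation that the coefficient tensor vanishes \emph{before} any symmetrisation. Because the generators $X_i$ do not commute in $\mathcal{U}(\fg)$, one might fear that the ordering of the monomial $X_{i_1}\cdots X_{i_m}$ matters; the resolution is that $ad$-invariance of $k$ kills the coefficient of each ordered monomial separately, so the non-commutativity plays no role here. The symmetry of $k$ is what makes $\mathcal{C}$ the canonical symmetric Casimir attached to the invariant polynomial, but the commutation property itself rests only on the $ad$-invariance condition \eqref{kinva}.
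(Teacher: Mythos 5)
Your proof is correct and follows essentially the same route as the paper's: expand the commutator with the derivation property of $[\,\cdot\,,X_j]$ in $\mathcal{U}(\fg)$, insert the structure constants, and observe that the coefficient of each ordered monomial vanishes by the invariance condition \eqref{kinva}. The only difference is that you make explicit the raised-index form of \eqref{kinva} (legitimate because the Killing metric is itself $ad$-invariant), a bookkeeping step the paper's one-line proof leaves implicit.
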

\begin{proof}
\begin{eqnarray}
[k^{i_1\dots i_m}X_{i_1}\dots X_{i_m},X_s] &=&
\sum^m_{j=1}k^{i_1\dots i_m}X_{i_1}\dots [X_{i_j},X_s]\dots
X_{i_m}
\nonumber\\
&=& \sum^m_{j=1}k^{i_1\dots i_m}X_{i_1}\dots C^t_{i_j s} X_t\dots
X_{i_m}=0 \label{casimir}
\end{eqnarray}
by (\ref{kinva}).
\end{proof}

An easy way of obtaining symmetric ($ad$-)invariant
polynomials (used \emph{e.g.}, in the construction of
characteristic classes) is given by

\begin{proposition}
\label{prop6.2}

Let $X_i$ denote now a representation of $\fg$.
Then, the symmetrized trace
\begin{equation}
     k_{i_1\dots i_m}=\hbox{sTr}(X_{i_1}\dots X_{i_m})   \label{symtra}
\end{equation}
defines a symmetric invariant polynomial.
\end{proposition}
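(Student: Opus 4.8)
The plan is to show that the symmetrized trace $k_{i_1\dots i_m}=\mathrm{sTr}(X_{i_1}\dots X_{i_m})$ is (i) totally symmetric and (ii) $ad$-invariant in the sense of eq.~\eqref{adinfi}, since by Def.~\ref{def:sympol} and the discussion around eq.~\eqref{kinva} these two properties are precisely what characterises a symmetric invariant polynomial. Symmetry in (i) is immediate: the symmetrized trace is \emph{defined} as the average of $\mathrm{Tr}(X_{\sigma(i_1)}\dots X_{\sigma(i_m)})$ over all $\sigma\in S_m$, so permuting the external indices merely reshuffles the terms of this symmetric sum and leaves $k_{i_1\dots i_m}$ unchanged. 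The whole content therefore lies in property (ii).

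For the invariance, the plan is to verify eq.~\eqref{adinfi} directly. First I would use the cyclicity of the ordinary trace to argue that the symmetrized trace can be computed without the explicit symmetrisation prefactor when it is contracted against objects already symmetric in the relevant indices; more concretely, I would write
\begin{equation}
\sum_{j=1}^{m} k\big(X_{i_1},\dots,[X_l,X_{i_j}],\dots,X_{i_m}\big)
= \mathrm{sTr}\Big(\sum_{j=1}^{m} X_{i_1}\dots [X_l,X_{i_j}]\dots X_{i_m}\Big)~.
\end{equation}
The key observation is that inside a trace the operator $\sum_j X_{i_1}\dots[X_l,X_{i_j}]\dots X_{i_m}$ is exactly $[X_l,\,X_{i_1}\dots X_{i_m}]$, by the Leibniz rule for the commutator $[X_l,\,\cdot\,]=\mathrm{ad}_{X_l}$ acting on the product $X_{i_1}\dots X_{i_m}$ (this is the matrix-product form of eq.~\eqref{eq:derivation-def} iterated over $m$ factors). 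Hence the left-hand side of \eqref{adinfi} equals $\mathrm{Tr}$ (after symmetrisation) of a single commutator $[X_l, X_{i_1}\dots X_{i_m}]$.

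The finishing step is then to invoke the cyclic property of the trace: $\mathrm{Tr}[A,B]=\mathrm{Tr}(AB-BA)=0$ for any finite matrices $A,B$, so $\mathrm{Tr}\big([X_l, X_{i_1}\dots X_{i_m}]\big)=0$. Because the symmetrisation over the external indices $i_1,\dots,i_m$ is a fixed linear combination of such traces of commutators, each of which vanishes, the whole expression vanishes, establishing \eqref{adinfi}. Equivalently, one may phrase this via \eqref{adfini}: the cyclic invariance of the trace gives $\mathrm{Tr}\big((\mathrm{Ad}\,g\,X_{i_1})\dots(\mathrm{Ad}\,g\,X_{i_m})\big)=\mathrm{Tr}\big(g X_{i_1}g^{-1}\cdots g X_{i_m}g^{-1}\big)=\mathrm{Tr}(X_{i_1}\dots X_{i_m})$, and symmetrising preserves this.

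I expect the only genuine subtlety — hardly an obstacle — to be bookkeeping: making sure the symmetrisation prefactor and the sum over the $j$ insertions interchange correctly, i.e. that symmetrising the product $X_{i_1}\dots X_{i_m}$ and then applying $\mathrm{ad}_{X_l}$ gives the same symmetric combination as the sum in \eqref{adinfi}. Since $\mathrm{ad}_{X_l}$ is linear and the symmetrisation is a linear averaging over $S_m$, these operations commute, so no difficulty arises. The essential ideas are just the derivation property of the commutator and the vanishing of the trace of a commutator.
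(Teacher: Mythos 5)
Your proof is correct and is essentially the paper's argument: the paper's one-line proof notes that $k$ is symmetric by construction and that invariance is immediate from the cyclicity of the trace under conjugation, $\mathrm{Ad}\,g\,X = gXg^{-1}$ (eq.~\eqref{adfini}), which is exactly the ``equivalently'' formulation you give at the end. Your main computation is just the infinitesimal version of this (Leibniz rule for $\mathrm{ad}_{X_l}$ plus $\mathrm{Tr}[A,B]=0$), obtained by differentiating the paper's finite statement, so the two arguments coincide in substance.
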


\begin{proof}
$k$ is symmetric by construction and the $ad$-invariance is
obvious since $Adg\,X:=gXg^{-1}$.
\end{proof}
The simplest illustration of a trace-invariant form is the
non-singular Killing metric (eq.~\eqref{Killing}) for a simple Lie algebra $\fg$;
its associated Casimir is the (second order) Casimir $I_2$.

\begin{example}
\label{example6.2}

Let $\fg=su(n)$, $n\ge2$, and let $X_i$ be
(hermitian) matrices in the defining representation. Then
\begin{equation}
     \hbox{sTr}(X_iX_jX_k)\propto 2\hbox{Tr}(\{ X_i,X_j\} X_k)=d_{ijk}\quad,
\label{deij}
\end{equation}
using that, for the $su(n)$ algebra, $\hbox{Tr}(X_k)=0$ and with
generators normalized to $\hbox{Tr}(X_iX_j)=\frac{1}{2}
\delta_{ij}\,$, the anticommutator is given \cite{MF-S-W:68} by
 $\{X_i,X_j\}=c\delta_{ij} +d_{ijl}X_l$ with $c=1/n$. The symmetric
polynomial $d_{ijk}$ leads to the third order Casimir $I_3$; for
$su(2)$ only the Killing metric $k_{ij}=\delta_{ij}$ and the
quadratic Casimir $I_2$ exist.
\end{example}

\begin{example}
\label{example6.3}

In the case $\fg=su(n)$, $n\ge 4$, we have a
fourth order polynomial
\begin{equation}
 \hbox{sTr}(X_{i_1}X_{i_2}X_{i_3}X_{i_4})\propto d_{(i_1i_2l}d_{li_3)i_4}
+2 c\delta_{(i_1i_2}\delta_{i_3)i_4} \quad,      \label{kcuatro}
\end{equation}
where $(\ )$ indicates symmetrization. The first term
$d_{(i_1i_2l}d_{li_3)i_4}$ gives the fourth order Casimir $I_4$.
It generalizes easily to higher $n$ by nesting more $d$'s, leading
to the Klein \cite{Kle:63} form of the $su(n)$ Casimirs.  The last
part of \eqref{kcuatro} (see \cite{Azc.Mac.Mou.Bue:97}) is clearly
the symmetrized product of two copies of the order two Casimir
$I_2 $ and thus it is not {\it primitive}.
\end{example}

\begin{definition}
(\emph{Primitive symmetric invariant polynomials})
\label{def:primitive}{\ }

A symmetric invariant polynomial $k_{i_1\dots i_m}$ on
$\fg$ is called primitive if it is not of the form
\begin{equation}
   k_{i_1\dots i_m}=k^{(p)}_{(i_1\dots i_p}k^{(q)}_{i_{p+1}\dots i_m)}
\ ,\quad p+q=m\quad, \label{primitive}
\end{equation}
where $k^{(p)}$ and $k^{(q)}$ are two lower order symmetric
invariant polynomials.
\end{definition}

Of course, we could also have considered eq.~\eqref{kcuatro} for
$su(3)$, but then it would not have led to a fourth-order
primitive polynomial, since $su(3)$ is a rank 2 algebra and has
only two primitive invariant polynomials (of ranks 2 and 3).
Indeed, $d_{(i_1i_2l}d_{li_3)i_4}$ is not primitive for $su(3)$
and can be written in terms of $\delta_{i_1i_2}$ as in
(\ref{primitive}) (see, \emph{e.g.}, \cite{Sud:90}; see also
\cite{Azc.Mac.Mou.Bue:97} and refs. therein). In general, for
a compact simple algebra of rank $l$ there are $l$ invariant primitive
polynomials and Casimirs
\cite{Rac:50,Gel:50,Kle:63,Gru.Rai:64,Bie:63,Per.Pop:68,Oku-Pat:83,Oku:82}
and, as shown below, $l$ primitive Lie algebra cohomology
cocycles (see Table in Sec.~\ref{sec:compGtable}).
{\ }

\begin{lemma}($p$-cocycles as skewsymmetric invariant polynomials)
\label{le:inv-cocycle}
{\ }

Let $\Omega^p$ (eq.~\eqref{p-CE-coc}) be a $p$-cocycle. Then, it defines
a skewsymmetric invariant polynomial of rank $p$.
\begin{proof}
The statement follows since invariance means $L_{X_i}\Omega^p=0$ {\it i.e.},
\begin{equation}
\label{inv-cocycle}
\sum^{p}_{k=1} C_{ij_s}{}^k\Omega_{j_1\dots,j_{s-1} k j_{s+1}\dots j_p}=0
\qquad \mathrm{or} \qquad C_{i[j_1}{}^k\,\Omega_{j_2\dots j_p]k}=0 \quad,
\end{equation}
which is satisfied by any $p$-cocycle.
\end{proof}
\end{lemma}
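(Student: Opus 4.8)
The plan is to show directly, in a fixed basis, that the cocycle relation for the trivial representation is nothing other than the statement that $\Omega^p$ is an $ad$-invariant skewsymmetric polynomial, so that the conclusion follows as soon as the two coordinate conditions are matched.

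First I would make the notion of invariance explicit. The $p$-cochain \eqref{p-CE-coc} has constant coordinates $\Omega_{i_1\dots i_p}$ and defines a LI $p$-form on the group manifold $G$; it is an invariant polynomial precisely when $L_{X_i}\Omega^p=0$ for every generator $X_i$. Specialising the Lie-derivative formula \eqref{3Lalpha} to this LI $p$-form and reading off the coordinates, the condition $L_{X_i}\Omega^p=0$ becomes
\[
\sum_{s=1}^{p} C_{ij_s}{}^{k}\,\Omega_{j_1\dots j_{s-1}\,k\,j_{s+1}\dots j_p}=0 ,
\]
which, exploiting the total antisymmetry of $\Omega$ in all its indices, is exactly the compact form $C_{i[j_1}{}^{k}\,\Omega_{j_2\dots j_p]k}=0$ written in \eqref{inv-cocycle}.

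The key step is then to recognise that this is precisely the cocycle condition \eqref{cocy-coord}, namely $C_{[j_1 j_2}{}^{k}\,\Omega_{i_1\dots i_{p-1}]k}=0$. Both relations assert the vanishing of the single contraction $C\cdot\Omega$ after total skewsymmetrisation of the lower indices, the contracted index $k$ being summed out; the generator index appears in the invariance form as the singled-out slot $i$ playing the role of the $ad$-derivation, and the skewsymmetry of $\Omega$ in its remaining indices makes the two skewsymmetrisation patterns coincide. I expect the only place needing care to be exactly this bookkeeping — tracking the signs of the permutations in \eqref{defmultibra} to confirm that the relevant lower indices are skewsymmetrised in the same way on both sides. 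Once that identification is in place, the hypothesis that $\Omega^p$ is a $p$-cocycle is by definition eq.~\eqref{cocy-coord}, so the invariance condition \eqref{inv-cocycle} holds automatically and $\Omega^p$ is an $ad$-invariant skewsymmetric polynomial of rank $p$, as asserted.
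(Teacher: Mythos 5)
Your reduction of invariance to the coordinate identity \eqref{inv-cocycle} via \eqref{3Lalpha} is fine, but the step you yourself flag as the crux --- ``recognise that this is precisely the cocycle condition \eqref{cocy-coord}'' --- is not a matter of sign bookkeeping: the two conditions are structurally different, and the identification fails. In \eqref{inv-cocycle} the generator index $i$ is \emph{free} and only the $p$ indices $j_1\dots j_p$ are skewsymmetrised, whereas in \eqref{cocy-coord} \emph{all} $p+1$ lower indices are skewsymmetrised. Antisymmetrising \eqref{inv-cocycle} over $i$ as well does reproduce \eqref{cocy-coord}, so invariance implies the cocycle condition (this is the easy direction, the statement that a bi-invariant form is closed, cf.\ the footnote to \eqref{adinfi}); but the converse, which is what you need, is false for a general $p$-cocycle. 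Concretely, on $\fg=su(2)$ with $C_{ij}{}^k=\epsilon_{ijk}$, take the two-coboundary (hence two-cocycle) $\Omega_{ij}=\epsilon_{ijk}\,\omega_k$ for a fixed one-cochain $\omega$. A short computation gives
\begin{equation*}
C_{ij_1}{}^k\,\Omega_{kj_2}+C_{ij_2}{}^k\,\Omega_{j_1k}
=\delta_{ij_2}\,\omega_{j_1}-\delta_{ij_1}\,\omega_{j_2}\neq 0\ ,
\end{equation*}
so $L_{X_i}\Omega\neq 0$ although $\Omega$ is a cocycle. This is also transparent from Cartan's formula: for a closed LI form, $L_{X_i}\Omega^p=d\,(i_{X_i}\Omega^p)$, which is a coboundary but by no means zero; adding such a non-invariant coboundary to any invariant representative produces cocycles violating \eqref{inv-cocycle}.

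What is missing, then, is an input beyond cocyclicity that singles out the right representatives (and, to be fair, the paper's own two-line proof asserts the same implication you do, so it shares the gap when read literally for arbitrary cocycles). In the context where the lemma is used --- the odd cocycles \eqref{HOCa}, \eqref{HOCe} of a compact simple $\fg$ --- invariance holds for a different reason: $\Omega^p\propto\hbox{Tr}(\theta\wedge\mathop{\cdots}\limits^p\wedge\theta)$ is right-invariant because $R_g^*\theta=Ad\,g^{-1}\,\theta$ and the trace is $Ad$-invariant, and right translations are generated by the LI fields $X_i$, so $L_{X_i}\Omega^p=0$ directly; this is exactly the argument closing the proof of Prop.~\ref{cocy-polyn}. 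Alternatively one may invoke Prop.~\ref{prop:CEcom} to replace a given cocycle by the bi-invariant representative of its class, at the price of changing it by a coboundary. Either repair gives the intended conclusion; an index-matching between \eqref{inv-cocycle} and \eqref{cocy-coord} cannot.
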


\subsection{Cocycles from invariant polynomials}
\label{sec6.2}{\ }

To make explicit the connection between the invariant polynomials
and the non-trivial cocycles of a simple Lie algebra $\fg$ let us
use the particular case $\fg=su(n)$ as a guide. On the manifold of
the  $SU(n)$ group one can construct the {\it odd} $p$-form
\begin{equation}
\label{HOCa}
    \Omega^p=\frac{1}{p!} \hbox{Tr}(\theta\wedge
    \mathop{\cdots}\limits^p
     \wedge\theta)
     \quad ,
\end{equation}
where again $\theta=\omega^i X_i$ is the canonical form and we
take $\{X_i\}$ in the defining representation; $p$ has to be odd
since otherwise $\Omega$ would be zero by virtue of the cyclic
property of the trace and the anticommutativity of one-forms.

\begin{proposition}
\label{prop6.3}

The odd LI form $\Omega^p$ on $G$ in (\ref{HOCa})
is a non-trivial (CE) Lie algebra cohomology $p$-cocycle.
\end{proposition}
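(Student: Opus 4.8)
The plan is to treat the two assertions---that $\Omega^p$ is a cocycle and that it is non-trivial---separately. Since $V=\RR$ the representation is trivial, so on the left-invariant $p$-form $\Omega^p$ the coboundary operator $s$ reduces to the exterior differential $d$ (Sec.~\ref{sec:CEco}); it therefore suffices to work with $d$. For closedness I would compute $d\Omega^p$ directly from the Maurer--Cartan equation \eqref{3canonMC}, $d\theta=-\theta\wedge\theta$. Applying the Leibniz rule to $\Omega^p=\tfrac{1}{p!}\Tr(\theta\wedge\cdots\wedge\theta)$ and commuting $d$ past the successive one-form factors, each of the $p$ terms collapses (via $d\theta=-\theta\wedge\theta$) to the same expression, giving
\begin{equation}
d\Omega^p=-\frac{1}{p!}\,\Tr(\theta\wedge\mathop{\cdots}\limits^{p+1}\wedge\theta)\,\sum_{k=1}^{p}(-1)^{k-1}\ .
\end{equation}
For $p$ odd the alternating sum equals $1$, but the remaining trace of $p+1$ one-form factors vanishes: since $p+1$ is even, the cyclicity of the trace combined with the anticommutativity of one-forms forces $\Tr(\theta^{\wedge(p+1)})=0$. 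This is precisely the mechanism noted before the statement that makes an even-degree $\Omega$ vanish. Hence $d\Omega^p=0$ and $\Omega^p$ is a $p$-cocycle.

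For the non-triviality I would first show that $\Omega^p$ is \emph{bi-invariant}. It is left-invariant by construction, being built out of the LI Maurer--Cartan forms. Under a right translation $R_h$ the canonical form transforms as $R_h^*\theta=\Ad(h^{-1})\theta=h^{-1}\theta\, h$, so the interior factors $h\,h^{-1}$ cancel in $R_h^*(\theta^{\wedge p})=h^{-1}(\theta^{\wedge p})\,h$ and the cyclicity of the trace yields $R_h^*\Omega^p=\Omega^p$. Next, $\Omega^p$ is nonzero for the relevant $p$: for $p=3$ its skewsymmetrised coordinates $\Tr(X_{[i}X_jX_{k]})$ reproduce, up to a constant, the totally antisymmetric structure constants $C_{ijk}=C_{ij}{}^l g_{lk}$, which do not vanish for simple $\fg$, and for the higher admissible degrees $p$ the nonvanishing follows from that of the associated primitive symmetric invariant polynomial.

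To conclude, I would invoke Proposition~\ref{prop:CEcom}: since $G$ (here $SU(n)$) is compact and connected, the bi-invariant forms are in bijection with the de Rham cohomology classes, so a nonzero bi-invariant form maps to a nonzero class and is never exact. As for compact $G$ the CE and de Rham cohomologies coincide and $s=d$ on LI forms in the trivial representation, the nonzero bi-invariant cocycle $\Omega^p$ cannot be written as $d\beta$ for any LI $(p-1)$-form $\beta$; it is therefore a non-trivial CE cocycle. The main obstacle is exactly this last step: closedness and bi-invariance are elementary, but excluding exactness genuinely requires the global, Hodge-theoretic input packaged in Proposition~\ref{prop:CEcom}, together with the observation that $\Omega^p\neq0$---i.e.\ that $p$ is one of the orders $2m-1$ attached to a primitive Casimir of $\fg$---without which the statement would be vacuous.
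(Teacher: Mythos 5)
Your proof of closedness coincides with the paper's: both reduce $d\Omega^p$ via the Maurer--Cartan equation \eqref{3canonMC} to a multiple of $\Tr(\theta\wedge\mathop{\cdots}\limits^{p+1}\wedge\theta)$, which vanishes because $p+1$ is even (cyclicity of the trace against the anticommutativity of one-forms); your explicit alternating-sum bookkeeping is harmless, since the vanishing trace factor kills every term anyway. For non-triviality, however, you take a genuinely different route. The paper stays entirely inside the left-invariant complex and argues by parity: if $\Omega^p=d\Omega^{p-1}$ with $\Omega^{p-1}$ LI, then---it asserts---$\Omega^{p-1}$ would itself be of the trace form \eqref{HOCa} with an even number of factors and hence zero. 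That is a one-line argument, but it rests on the unproved claim that any candidate LI potential must be of trace type, which is not literally true of a general LI $(p-1)$-form. You instead establish bi-invariance of $\Omega^p$ (via $R_h^*\theta=\Ad(h^{-1})\,\theta$ and cyclicity---the same trick the paper itself deploys later, in the proof of Prop.~\ref{cocy-polyn}), check that $\Omega^p\neq 0$ precisely for the admissible degrees $p=2m_i-1$ attached to primitive invariants, and then invoke Prop.~\ref{prop:CEcom} for the compact connected $SU(n)$: a nonzero bi-invariant form is the unique bi-invariant representative of its de Rham class, so that class is nonzero, hence $\Omega^p$ is not de Rham exact and \emph{a fortiori} not a CE coboundary (an LI potential would in particular be a de Rham one). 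Your route costs the global Hodge-theoretic input (compactness) that the paper's purely algebraic parity argument avoids, but it buys three things: rigor exactly where the paper hand-waves; the stronger conclusion of de Rham non-triviality, which the paper in fact needs immediately afterwards for the Wess--Zumino--Witten construction; and an explicit treatment of the nonvanishing hypothesis $\Omega^p\neq 0$, which both proofs require (the paper's argument only shows ``coboundary $\Rightarrow$ zero'') but which the paper leaves tacit.
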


\begin{proof}
Since $\Omega^p$ is LI by construction, it is sufficient to show
that $\Omega^p$ is closed and that it is not the differential of a
LI ($p-1$)-form {\it i.e.}, that it is not a coboundary. By using
(\ref{3canonMC}) we get
\begin{equation}
       d\Omega^p \sim \hbox{Tr}(\theta\wedge
\mathop{\cdots}\limits^{p+1}
       \wedge\theta)=0\quad,            \label{HOCb}
\end{equation}
since $p+1$ is even. Suppose now that $\Omega^p=d\Omega^{p-1}$,
with $\Omega^{p-1}$ LI. Then $\Omega^{p-1}$ would be of the form
(\ref{HOCa}) and hence zero because ($p-1$) is also even.
\end{proof}

All non-trivial $p$-cocycles in $H_0^p(su(n),\mathbb{R})$ are of
the form (\ref{HOCa}). The fact that these forms are closed and
de Rham non-exact
($SU(n)$ is compact) allows us to use them to construct
Wess-Zumino-Witten \cite{Wit:84,Wit:83} terms on the group
manifold (see also \cite{Azc.Izq.Mac:90}).

Let us set $p=2m-1$. Since $\theta=\omega^i\circ X_i$, the form
$\Omega^p$ expressed in coordinates is
\begin{eqnarray}
\Omega^p&=&\frac{1}{q!}\hbox{Tr}(X_{i_1}\dots X_{i_{2m-1}})
\omega^{i_1}\wedge\dots\wedge\omega^{i_{2m-1}} \nonumber\\ & &
\propto \hbox{Tr}([X_{i_1},X_{i_2}][X_{i_3},X_{i_4}]\dots
[X_{i_{2m-3}},X_{i_{2m-2}}]X_{i_{2m-1}})
\omega^{i_1}\wedge\dots\wedge\omega^{i_{2m-1}} \nonumber\\ & &
=\hbox{Tr}(X_{l_1}\dots X_{l_{m-1}}X_\sigma)C^{l_1}_{i_1i_2} \dots
C^{l_{m-1}}_{i_{2m-3}i_{2m-2}}
\omega^{i_1}\wedge\dots\wedge\omega^{i_{2m-2}}\wedge\omega^\sigma
      \quad .                              \label{HOCc}
\end{eqnarray}
We see here how the order $m$ symmetric invariant polynomial
$\hbox{Tr}(X_{l-1}\dots X_{l_{m-1}}X_\sigma)$ appears in this
context. There is symmetry in ${l_1}\dots l_{m-1}$ (and hence
$\hbox{Tr}(X_{l_1}\dots X_{l_{m-1}}X_\sigma)$ is fully symmetric)
because there is antisymmetry on the $i$ indices due to the
$\omega^{i}$'s.

   Conversely, the following statement holds:

\begin{proposition} ({\it Cocycles associated with invariant
polynomials})
\label{cocy-polyn} \\
Let $k_{i_1\dots i_m}$ be a
symmetric invariant polynomial. Then, the polynomial
\begin{equation}
\label{HOCe}
    \Omega_{\rho i_2\dots i_{2m-2}\sigma}=C^{l_1}_{j_2j_3}\dots
    C^{l_{m-1}}_{j_{2m-2}\sigma}k_{\rho l_1\dots l_{m-1}}
     \epsilon^{j_2\dots j_{2m-2}}_{i_2\dots i_{2m-2}}
\end{equation}
is skewsymmetric and defines \cite{Azc.Mac.Mou.Bue:97} the closed form ((2m-1)-cocycle)
\begin{equation}
\Omega^{2m-1}=\frac{1}{(2m-1)!}\Omega_{\rho i_2\dots
i_{2m-2}\sigma}\omega^\rho\wedge
    \omega^{i_2}\wedge\dots\wedge\omega^{i_{2m-2}}\wedge\omega^\sigma
\quad. \label{HOCd}
\end{equation}
\end{proposition}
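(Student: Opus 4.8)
The plan is to pass from the purely algebraic expression~\eqref{HOCe} to a differential form on the group manifold $G$, exactly as was done for the trace form in~\eqref{HOCc}, and then to read off both the skewsymmetry and the cocycle property from the Maurer--Cartan structure. First I would note that, once contracted with $\omega^\rho\wedge\omega^{i_2}\wedge\dots\wedge\omega^{i_{2m-2}}\wedge\omega^\sigma$ as in~\eqref{HOCd}, the antisymmetric Kronecker symbol in~\eqref{HOCe} merely reproduces the antisymmetrisation already carried out by the exterior products, so that it reinstates the indices $\omega^{j_2}\wedge\dots\wedge\omega^{j_{2m-2}}$ up to a combinatorial factor. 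Each consecutive pair $C^{l_s}_{\cdot\cdot}\,\omega^{\cdot}\wedge\omega^{\cdot}$ can then be replaced, by the Maurer--Cartan equations~\eqref{3MC}, with $-2\,d\omega^{l_s}$ (the last pair using $C^{l_{m-1}}_{j_{2m-2}\sigma}\,\omega^{j_{2m-2}}\wedge\omega^\sigma$, which is contiguous at the end of the wedge). Collecting the numerical factors identifies $\Omega^{2m-1}$, up to a nonzero constant, with the left-invariant form
\[
  \Theta \;=\; k_{\rho l_1\dots l_{m-1}}\,\omega^\rho\wedge d\omega^{l_1}\wedge\dots\wedge d\omega^{l_{m-1}}\;,
\]
which is the higher analogue of the $\mathrm{Tr}(\theta\wedge\dots\wedge\theta)$ of Prop.~\ref{prop6.3}.

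The cocycle property is then the cleanest part. Being built out of the left-invariant $\omega^i$ and their exterior derivatives, $\Theta$ is manifestly left-invariant, and I would next verify that it is also right- (equivalently $\mathrm{ad}$-) invariant. Acting with $L_{X_i}$ and using $L_{X_i}\omega^\rho=-C^\rho_{ik}\omega^k$ from~\eqref{3Lomega} together with $L_{X_i}d\omega^{l_s}=d(L_{X_i}\omega^{l_s})=-C^{l_s}_{ik}d\omega^k$, the variation $L_{X_i}\Theta$ collects precisely the terms
\[
  -\Big(C^{\rho'}_{i\rho}\,k_{\rho' l_1\dots l_{m-1}}
   +\textstyle\sum_{s=1}^{m-1}C^{l'_s}_{i l_s}\,k_{\rho l_1\dots l'_s\dots l_{m-1}}\Big)\,
   \omega^\rho\wedge d\omega^{l_1}\wedge\dots\wedge d\omega^{l_{m-1}}\;,
\]
whose bracket vanishes identically by the $\mathrm{ad}$-invariance~\eqref{kinva} of $k$. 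Hence $\Theta$ is bi-invariant, and a bi-invariant form on $G$ is closed (as noted after~\eqref{adinfi}); equivalently $d\Theta=k_{l_0\dots l_{m-1}}\,d\omega^{l_0}\wedge\dots\wedge d\omega^{l_{m-1}}=0$, the only surviving term of $d$ since $d^2\omega^{l_s}=0$. Therefore $\Omega^{2m-1}$ is a $(2m-1)$-cocycle.

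For the skewsymmetry of the tensor~\eqref{HOCe} itself, the Kronecker symbol already renders it totally antisymmetric in $i_2,\dots,i_{2m-2}$, and it remains to bring into the antisymmetrisation the two indices $\rho$ (sitting on $k$) and $\sigma$ (sitting on the last structure constant). Antisymmetry of $\sigma$ against the block I would obtain from the antisymmetry of $C^{l_{m-1}}_{j_{2m-2}\sigma}$ in its lower pair together with the Jacobi identity~\eqref{JIascocy}, which is exactly what permits the structure constants to be reshuffled inside the block; antisymmetry of $\rho$ against the remaining indices I would obtain from the invariance~\eqref{kinva} of $k$, which lets $\rho$ be transferred from $k$ onto the structure constants. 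The mechanism is transparent for $m=2$, where $\Omega_{\rho i_2\sigma}=k_{\rho l}C^{l}_{i_2\sigma}=C_{\rho i_2\sigma}$ is totally antisymmetric precisely because metricity, i.e.\ the invariance of $k$, is equivalent to the total skewsymmetry of the structure constants with all indices lowered.

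I expect this last point---that the partially antisymmetrised tensor~\eqref{HOCe} is in fact totally skew---to be the only genuine obstacle, since it is a combinatorial identity requiring~\eqref{kinva} and~\eqref{JIascocy} simultaneously, whereas the reduction to $\Theta$ and the closedness follow mechanically from the Maurer--Cartan equations and the bi-invariance argument. A convenient way to organise it is to note that, as a form, $\Theta$ automatically picks out the totally antisymmetric part of its coefficients, so it suffices to show that symmetrising $\rho$ (or $\sigma$) with any block index annihilates~\eqref{HOCe}; each such symmetrisation is then killed by one application of~\eqref{kinva} or~\eqref{JIascocy}, respectively.
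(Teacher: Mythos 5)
Your proposal is correct in substance and rests on the same two pillars as the paper's own proof --- bi-invariance for the cocycle property, and the invariance \eqref{kinva} of $k$ together with the Jacobi identity \eqref{JIascocy} for the skewsymmetry --- but it organizes both steps differently, and the differences are worth recording. For closedness, the paper argues right-invariance at the finite level, identifying $\Omega\propto \mathrm{Tr}(\theta\wedge\mathop{\cdots}\limits^{2m-1}\wedge\theta)$ and using $R_g^*\theta=\mathrm{Ad}\,g^{-1}\,\theta$ (eq.~\eqref{HOCh}); your infinitesimal computation of $L_{X_i}\Theta$ on the transgression-type form $\Theta=k_{\rho l_1\dots l_{m-1}}\,\omega^\rho\wedge d\omega^{l_1}\wedge\dots\wedge d\omega^{l_{m-1}}$ proves the same statement directly from \eqref{3Lomega} and \eqref{kinva}, and has the merit of not presupposing that the invariant polynomial $k$ is realized as a symmetrized trace, which the appeal to \eqref{HOCh} tacitly does. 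Your side remark that $d\Theta=k_{l_0\dots l_{m-1}}\,d\omega^{l_0}\wedge\dots\wedge d\omega^{l_{m-1}}$ vanishes is also consistent with the paper: that identity is precisely eq.~\eqref{simplelemma} of Prop.~\ref{prop:25}, proved there by the same \eqref{kinva}+JI manipulation, and noted to follow as well from \eqref{HOCb} --- so deriving it from bi-invariance, as you do, introduces no circularity.

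On the skewsymmetry, the routes genuinely diverge. The paper reduces the problem to the single exchange $\rho\leftrightarrow\sigma$ and executes it: the invariance of $k$ splits $\Omega_{\rho i_2\dots i_{2m-2}\sigma}$ into the term \eqref{HOCf}, which vanishes by the JI, plus \eqref{HOCg}, which equals $-\Omega_{\sigma i_2\dots i_{2m-2}\rho}$. You instead propose to annihilate the symmetrization of $\sigma$, respectively $\rho$, against the block indices. Note that your set of transpositions --- those internal to the block together with $(\sigma\, i_a)$ and $(\rho\, i_a)$ --- manifestly generates all of $S_{2m-1}$, whereas block antisymmetry plus $(\rho\,\sigma)$ alone generates only $S_{2m-3}\times S_2$; the paper's ``it is sufficient'' therefore leans tacitly on the antisymmetry of the last $C^{l_{m-1}}_{j_{2m-2}\sigma}$ in its lower pair to tie $\sigma$ into the antisymmetrized block, a point your reduction makes explicit. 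The one respect in which your write-up falls short of a proof is that this step --- which you correctly flag as the only genuine obstacle --- remains a plan: you name the right identities but do not carry out the combinatorics that the paper performs in \eqref{HOCf}--\eqref{HOCg}. Since the manipulations required are exactly the ones you prescribe (an application of \eqref{kinva} to move $\rho$ off $k$, an application of the JI to reshuffle the structure constants), this is a gap of execution rather than of idea, and your $m=2$ check correctly isolates the mechanism.
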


\begin{proof}
To check the complete skew-symmetry of $\Omega_{\rho i_2\dots
i_{2m-2}\sigma}$ in (\ref{HOCe}) it is sufficient, due to the
Levi-Civita symbol, to show the antisymmetry in $\rho$ and
$\sigma$. This is done by using the invariance of $k$
(\ref{kinva}) and the symmetry properties of $k$ and $\epsilon$ to
rewrite $\Omega_{\rho i_2\dots i_{2m-2}\sigma}$ as the sum of two
terms. The first one,
\begin{eqnarray}
       \sum^{m-2}_{s=1}\epsilon^{j_2\dots
j_{2s}j_{2s+1} j_{2m-2}j_{2s+2}\dots j_{2m-3}}_{i_2\dots i_{2m-2}}
  k_{\rho l_1\dots l_{s-1}l_{m-1}l_s\dots
l_{m-2}\sigma} \; \qquad \qquad &&  \nonumber\\
  \qquad\qquad  \times \;   C^{l_1}_{j_2j_3}\dots C^{l_s}_{j_{2s}j_{2s+1}}
     C^{l_{m-1}}_{l_sj_{2m-2}}C^{l_{s+1}}_{j_{2s+2}j_{2s+3}}
    \dots  C^{l_{m-2}}_{j_{2m-4}j_{2m-3}} &&         \label{HOCf}
\end{eqnarray}
vanishes due to the standard JI, and the second one is
\begin{equation}
    \Omega_{\rho i_2\dots i_{2m-2}\sigma}=
     -\epsilon^{j_2\dots j_{2m-2}}_{i_2\dots i_{2m-2}}
k_{\sigma l_1\dots
 l_{m-1}} C^{l_1}_{j_2j_3}\dots C^{l_{m-1}}_{j_{2m-2}\rho}
    =-\Omega_{\sigma i_2\dots i_{2m-2}\rho}\quad .
\label{HOCg}
\end{equation}
To show that $d\Omega=0$ we make use of the fact that any
bi-invariant form is closed. Since $\Omega$ is LI by construction,
we only need to prove its right-invariance, but
\begin{equation}
\Omega\propto \mbox{Tr}(\theta \wedge
\mathop{\cdots}\limits^{2m-1} \wedge \theta) \label{HOCh}
\end{equation}
is obviously RI since, under a right translation, the canonical
one-form transforms by $R_g^*\theta = Ad g^{-1} \theta$ (see Prop.
\ref{prop6.2}).
\end{proof}

Without discussing the origin of the invariant polynomials for the
different compact simple Lie algebras
\cite{Rac:50,Gel:50,Kle:63,Gru.Rai:64,Bie:63,Per.Pop:68,Oku-Pat:83,Oku:82,
Azc.Mac.Mou.Bue:97}, we may conclude that to each symmetric primitive invariant
polynomial of order $m$ we can associate a non-trivial Lie
algebra cohomology $(2m-1)$-cocycle (see \cite{Azc.Mac.Mou.Bue:97}
for practical details). In fact, this one-to-one correspondence
is a famous result of Chevalley \cite{Che:50,Che-Bor:55} after a conjecture
of A. Weil\footnote{The correspondence between invariant polynomials
and cocycles reflects its transgression character \cite{Cart:50}. See
further \cite{Kost:97} in the context of the Hopf-Koszul-Samelson theorem;
we thank G. Pinczon for pointing out this reference to us.}.

    A question that might immediately arise (from the above
explicit construction) is whether it could be extended further
since, from the $l=\hbox{rank}\,\fg$ primitive invariant polynomials
that exist for a simple $\fg$, we may obtain an arbitrary number of
non-primitive ones (see eq.~(\ref{primitive})) by taking
symmetrized products of primitive polynomials and applying the
above formulae. This question is
answered negatively by Prop.~\ref{prop:25} and
Cor.~\ref{cor:26} below \cite{Azc.Mac.Mou.Bue:97}.

\begin{proposition}

\label{prop:25} Let $k_{i_1\dots i_m}$ be a symmetric
$G$-invariant polynomial. Then,
\begin{equation}
\epsilon^{j_1 \dots j_{2m}}_{i_1\dots i_{2m}} C^{l_1}_{j_1
j_2}\ldots C^{l_m}_{j_{2m-1} j_{2m}} k_{l_1\dots l_m}=0 \quad.
\label{simplelemma}
\end{equation}
\end{proposition}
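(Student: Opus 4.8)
The plan is to read the left-hand side of \eqref{simplelemma} as the constant coordinate array of an explicit left-invariant $2m$-form on the group manifold $G$, and then to show that this form vanishes because it is, up to a nonzero constant, the exterior derivative of the closed $(2m-1)$-cocycle already constructed in Proposition~\ref{cocy-polyn}. First I would introduce the bi-invariant $2m$-form
\[
\alpha := k_{l_1\dots l_m}\,[\theta,\theta]^{l_1}\wedge\dots\wedge[\theta,\theta]^{l_m}\quad,
\]
with $[\theta,\theta]^l = C^l_{ij}\,\omega^i\wedge\omega^j$ as in \eqref{gvalbra}. Expanding each factor $[\theta,\theta]^{l_a}$ and collecting the totally antisymmetric product $\omega^{j_1}\wedge\dots\wedge\omega^{j_{2m}}$, one sees that the coordinates of $\alpha$ are exactly the left-hand side of \eqref{simplelemma}, the antisymmetric Kronecker symbol $\epsilon^{j_1\dots j_{2m}}_{i_1\dots i_{2m}}$ supplying precisely the antisymmetrisation carried by the wedge. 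Hence it suffices to prove $\alpha=0$.

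Next I would express $\alpha$ as a differential. The crucial algebraic input is that $d[\theta,\theta]=0$: using the Maurer--Cartan equation \eqref{3canonMC} in the component form $d\omega^l=-\tfrac12 C^l_{ij}\,\omega^i\wedge\omega^j$, one computes $d[\theta,\theta]^l = 2\,C^l_{ij}\,d\omega^i\wedge\omega^j = -C^l_{ij}C^i_{pq}\,\omega^p\wedge\omega^q\wedge\omega^j$, and the totally antisymmetric wedge annihilates $C^l_{ij}C^i_{pq}$ by the Jacobi identity \eqref{JIascocy}. Consequently, differentiating the $(2m-1)$-form
\[
\Omega^{2m-1}\;\propto\;k_{l_1\dots l_m}\,\omega^{l_1}\wedge[\theta,\theta]^{l_2}\wedge\dots\wedge[\theta,\theta]^{l_m}\quad,
\]
which is nothing but the cocycle \eqref{HOCe}--\eqref{HOCd} associated with the symmetric invariant polynomial $k$ (the free symmetric slot $l_1$ of $k$ playing the role of $\rho$), only the term hitting $\omega^{l_1}$ survives. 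With $d\omega^{l_1}=-\tfrac12[\theta,\theta]^{l_1}$ this yields $d\Omega^{2m-1}=-\tfrac12\,\alpha$, up to the same proportionality constant.

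Finally, since $\Omega^{2m-1}$ is closed by Proposition~\ref{cocy-polyn}, $d\Omega^{2m-1}=0$, whence $\alpha=0$; reading off the coordinates gives \eqref{simplelemma}. I expect the main difficulty to be bookkeeping rather than conceptual: one must fix $d[\theta,\theta]=0$ from the Jacobi identity with the correct signs and match the normalisation of the displayed $\Omega^{2m-1}$ with \eqref{HOCe}--\eqref{HOCd}, so that the overall nonzero constant relating the coordinates of $\alpha$ to the left-hand side of \eqref{simplelemma} can be divided out. Equivalently, one could avoid quoting Proposition~\ref{cocy-polyn} by observing directly that $\Omega^{2m-1}$ is bi-invariant---hence closed---because $k$ is $ad$-invariant \eqref{kinva} while $\theta$ and $[\theta,\theta]$ transform covariantly under right translations.
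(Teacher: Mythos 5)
Your proposal is correct, but it takes a genuinely different route from the proof the paper displays. The paper's argument for Prop.~\ref{prop:25} is a two-line index computation: it uses the invariance condition \eqref{kinva} to trade the contraction $C^{l_m}_{j_{2m-1}j_{2m}}k_{l_1\dots l_m}$ for a sum of terms each containing a product $C^{l_s}_{\cdot\,\cdot}\,C^{k}_{j_{2m-1}\,l_s}$ contracted against the totally antisymmetric $\epsilon^{j_1\dots j_{2m}}_{i_1\dots i_{2m}}$, and each such term vanishes by the Jacobi identity. You package the same two ingredients (ad-invariance of $k$ and the JI) in Maurer--Cartan form language instead: the left-hand side of \eqref{simplelemma} becomes the coordinate array of $-2\,d\Omega^{2m-1}$, with the JI entering as $d[\theta,\theta]=0$ (the form version of eq.~\eqref{canon-JI}) and the invariance of $k$ entering through the closedness of the cocycle of Prop.~\ref{cocy-polyn}. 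The paper in fact acknowledges your route parenthetically at the end of its proof --- ``note that \eqref{simplelemma} also follows from \eqref{HOCb}'' --- but only in the trace-polynomial setting of \eqref{HOCa}; you have carried out the generalization to an arbitrary symmetric invariant $k$ via \eqref{HOCe}--\eqref{HOCd}. As to what each approach buys: the index computation is shorter and independent of Prop.~\ref{cocy-polyn}, while your version makes the statement conceptually transparent (the proposition \emph{is} the assertion $d\Omega^{2m-1}=0$ read off in coordinates, fitting the transgression picture), and your closing alternative --- proving closedness directly from bi-invariance, using \eqref{kinva} and $R_g^*\theta=\mathrm{Ad}\,g^{-1}\,\theta$ --- removes even the dependence on the earlier proposition. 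One bookkeeping point you rightly flag is harmless on inspection: your wedge expression needs only to be \emph{proportional} to \eqref{HOCd} (the full antisymmetrization of the raw coefficient array is a nonzero multiple of \eqref{HOCe}, since the latter is already totally antisymmetric), and since even a vanishing or rescaled $\Omega^{2m-1}$ is still closed, the normalization constant never threatens the conclusion $\alpha=0$.
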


\begin{proof}
By replacing $C^{l_m}_{j_{2m-1} j_{2m}}k_{l_1\dots lm}$ in the
$l.h.s$ of (\ref{simplelemma}) by the other terms in (\ref{kinva})
we get
\begin{equation}
\epsilon^{j_1 \dots j_{2m}}_{i_1 \dots i_{2m}} C^{l_1}_{j_1
j_2}\ldots C^{l_{m-1}}_{j_{2m-3} j_{2m-2}}
(\sum_{s=1}^{m-1}C^{k}_{j_{2m-1} l_s} k_{l_1\ldots l_{s-1} k
l_{s+1}\ldots l_{m-1}\,j_{2m}}) \quad,
\end{equation}
which is zero due to the JI (note that \eqref{simplelemma} also
follows from \eqref{HOCb}).
\end{proof}

\begin{corollary} (Primitive invariant polynomials {\it vs.}
cocycles) \label{cor:26} {\ }

Let $k$ be a non-primitive symmetric invariant polynomial
(\ref{primitive}). Then the $(2m-1)$-cocycle $\Omega$
associated to it by eq.~\eqref{HOCe} is zero \cite{Azc.Mac.Mou.Bue:97}.
\end{corollary}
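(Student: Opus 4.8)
The plan is to reinterpret the cocycle \eqref{HOCe}--\eqref{HOCd} in the language of Maurer--Cartan forms, where it acquires a transparent product structure, and then to apply Prop.~\ref{prop:25} factor by factor. First I would use the MC equations \eqref{3MC} in the form $d\omega^l=-\frac{1}{2}C^l_{jk}\,\omega^j\wedge\omega^k$ to rewrite the form \eqref{HOCd} (up to a nonzero numerical factor) as
\[
\Omega^{2m-1}\ \propto\ k_{\rho\,l_1\dots l_{m-1}}\;\omega^\rho\wedge d\omega^{l_1}\wedge\dots\wedge d\omega^{l_{m-1}}\ ,
\]
that is, as the invariant polynomial $k$ evaluated on one copy of the canonical one-form $\theta$ (supplying the slot $\rho$) and on $m-1$ copies of the two-form $d\theta$ (supplying the slots $l_1,\dots,l_{m-1}$). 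The $\epsilon$-symbol and the external index $\sigma$ appearing in \eqref{HOCe} merely encode the antisymmetrisation already implicit in these wedge products.

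Next I would substitute the non-primitive form \eqref{primitive}, namely $k_{i_1\dots i_m}=k^{(p)}_{(i_1\dots i_p}k^{(q)}_{i_{p+1}\dots i_m)}$, into this expression. The symmetrisation distributes the $m$ slots $\{\rho,l_1,\dots,l_{m-1}\}$ over the two factors, $p$ of them going to $k^{(p)}$ and the complementary $q$ to $k^{(q)}$. The decisive observation is that the single slot carrying the one-form $\omega^\rho$ must land in exactly one of the two factors, so that the other factor is contracted solely with $d\omega$'s. Since the $d\omega^l$ are even-degree forms, they are central in the graded-commutative algebra of forms and can be reordered without sign obstruction; hence in each term of the symmetrised sum the form factorises cleanly and one of the two factors appears as a standalone block
\[
k^{(q)}_{l_1\dots l_q}\;d\omega^{l_1}\wedge\dots\wedge d\omega^{l_q}
\]
(or the analogous block built from $k^{(p)}$), i.e.\ as $k^{(q)}$ evaluated on $q$ copies of $d\theta$.

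Finally I would invoke Prop.~\ref{prop:25}. Applied to the order-$q$ invariant polynomial $k^{(q)}$, the identity \eqref{simplelemma} (equivalently \eqref{HOCb}) states precisely that this standalone block $k^{(q)}(d\theta,\dots,d\theta)$ vanishes, and symmetrically the $k^{(p)}$ block vanishes in those terms where $\rho$ is assigned to $k^{(q)}$ instead. Since in \emph{every} term of the symmetrised sum one factor is fed only $d\theta$'s and therefore vanishes, the entire cocycle \eqref{HOCd} is zero, which is the assertion.

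I expect the main obstacle to be the combinatorial bookkeeping of the symmetrisation in \eqref{primitive}: one must verify carefully that the one-form slot $\rho$ always belongs to a single factor, so that the complementary factor is genuinely of the pure-$d\theta$ type to which Prop.~\ref{prop:25} applies, and one must keep track of the signs produced when the two-form blocks are reordered to separate the two factors. The reordering is harmless because two-forms commute with all forms, but it is exactly the point that must be made explicit to render the factorisation rigorous.
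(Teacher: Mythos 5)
Your proof is correct and follows exactly the route the paper intends: the corollary is stated there without a written proof (it is presented as a direct consequence of Prop.~\ref{prop:25}, with details deferred to \cite{Azc.Mac.Mou.Bue:97}), and the intended argument is precisely yours --- rewrite $\Omega^{2m-1}\propto k(\theta,d\theta,\dots,d\theta)$ as the paper itself does in eqs.~\eqref{HOCc}, \eqref{HOCh}, substitute the symmetrised product \eqref{primitive}, note that the single $\theta$ slot lands in one factor so the complementary factor is $k^{(p)}(d\theta,\dots,d\theta)$ or $k^{(q)}(d\theta,\dots,d\theta)$, and kill it with eq.~\eqref{simplelemma}. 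Your cautionary remarks (graded commutativity of the two-forms $d\omega^l$, and the fact that the full antisymmetrisation over $\rho,i_2,\dots,\sigma$ is automatic in the wedge-product formulation, consistent with the skewsymmetry established in Prop.~\ref{cocy-polyn}) are exactly the right points to make the factorisation rigorous.
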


Thus, to a {\it primitive} symmetric $m$-polynomial it is possible
to associate uniquely a Lie algebra $(2m-1)$-cocycle. Conversely,
we also have the following

\begin{proposition} ({\it Invariant polynomials from cocycles})
\label{prop:27} {\ }

Let $\Omega^{(2m-1)}$ be a primitive cocycle.
The $l$ polynomials $t^{(m)}$ given by
\begin{equation}
{t}^{i_1 \dots i_{m}}=[\Omega^{(2m-1)}]^{j_1 \dots j_{2m-2} i_m}
C^{i_1}_{j_1 j_2} \dots C^{i_{m-1}}_{j_{2m-3} j_{2m-2}}
\label{seIIxii}
\end{equation}
are invariant, symmetric and primitive (see
\cite[Lemma 3.2]{Azc.Mac.Mou.Bue:97}).
\end{proposition}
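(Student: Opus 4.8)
The plan is to establish the three asserted properties---invariance, symmetry, and primitivity---separately, treating invariance as essentially automatic, symmetry as the computational core, and primitivity by combining this construction with its inverse (Prop.~\ref{cocy-polyn}) and Cor.~\ref{cor:26}.

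First I would dispose of \emph{invariance}. By Lemma~\ref{le:inv-cocycle} the cocycle $\Omega^{(2m-1)}$ is itself a skewsymmetric invariant polynomial, and the structure constants $C^k_{ij}$ constitute an $\ad$-invariant tensor (this is exactly the JI, i.e.\ the statement that $\ad$ is a representation). Since $t$ in \eqref{seIIxii} is built from $\Omega$ and $m-1$ copies of $C$ purely by tensor product and contraction over the dummy indices $j_1,\dots,j_{2m-2}$, and since a contraction of $\ad$-invariant tensors is again $\ad$-invariant, $t$ is an invariant polynomial; that is, it satisfies \eqref{kinva}.

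The main work is \emph{symmetry}. Symmetry in $i_1,\dots,i_{m-1}$ is manifest: exchanging $i_s\leftrightarrow i_t$ amounts to exchanging the two index pairs $(j_{2s-1},j_{2s})$ and $(j_{2t-1},j_{2t})$ inside the totally skewsymmetric $\Omega$, which is an even permutation and hence leaves $\Omega$---and therefore $t$---unchanged. Since $S_{m-1}$ together with the adjacent transposition $(i_{m-1}\,i_m)$ generates $S_m$, it remains to prove symmetry of $t$ under $i_{m-1}\leftrightarrow i_m$. Here $i_m$ is distinguished, appearing as a free index of $\Omega$ rather than on a structure constant, so the exchange is nontrivial. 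I would antisymmetrise $t$ in $i_{m-1},i_m$ and rewrite the last contraction $C^{\cdot}_{pq}\Omega^{\dots pq\,\cdot}$ using the cocycle condition \eqref{cocy-coord}, $C_{[j_1 j_2}{}^k\Omega_{\dots]k}=0$, to trade it against the remaining index pairs, finishing by the JI. This computation runs exactly parallel to the antisymmetry argument in the proof of Prop.~\ref{cocy-polyn} (eqs.~\eqref{HOCf}, \eqref{HOCg}), with the roles of the $\epsilon$-symbol and $\Omega$ on the one hand, and of the contracting structure constants on the other, interchanged. I expect this verification to be the principal obstacle, as it is the only place where the full cocycle identity (beyond mere invariance) is needed.

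Finally, \emph{primitivity}. The construction \eqref{seIIxii} is, up to a nonzero normalisation depending only on $\fg$ and $m$, the inverse of the map $k\mapsto\Omega$ of Prop.~\ref{cocy-polyn}: feeding $t(\Omega)$ back into \eqref{HOCe} returns $\Omega$ itself up to that constant. Granting this round-trip identity, suppose $t$ were non-primitive, i.e.\ a symmetrised product $t^{(p)}t^{(q)}$ with $p+q=m$ and $p,q\geq1$. Then Cor.~\ref{cor:26} would force the cocycle built from $t$ via \eqref{HOCe} to vanish; but that cocycle is $\Omega$ up to a nonzero factor, so $\Omega=0$, contradicting the primitivity (hence non-vanishing) of $\Omega$. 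Therefore $t$ is primitive. The explicit check of the round-trip normalisation is the one ingredient I would import from \cite[Lemma~3.2]{Azc.Mac.Mou.Bue:97}.
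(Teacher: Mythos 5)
Your proposal is correct and takes essentially the route the paper intends: the paper gives no in-text proof of this proposition, deferring entirely to \cite[Lemma 3.2]{Azc.Mac.Mou.Bue:97}, and your reconstruction---invariance by contraction of $\ad$-invariant tensors, manifest symmetry in $i_1,\dots,i_{m-1}$ by even relabelling of the dummy pairs plus the cocycle condition \eqref{cocy-coord} and the JI for the remaining transposition $i_{m-1}\leftrightarrow i_m$, and primitivity by feeding $t$ back through eq.~\eqref{HOCe} and invoking Cor.~\ref{cor:26}---is precisely the argument of that lemma, assembled from the same ingredients the paper sets up in Prop.~\ref{cocy-polyn}, Prop.~\ref{prop:25} and Cor.~\ref{cor:26}. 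The single step you leave unverified, the nonvanishing of the round-trip normalisation $\Omega\mapsto t\mapsto\Omega$, is exactly what the citation supplies, so your treatment is as complete as the paper's own.
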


This converse proposition relates the cocycles of the Lie algebra
cohomology to Casimirs in the enveloping algebra ${\mathcal
U}(\fg)$. The polynomials in (\ref{seIIxii}) have certain
advantages (for instance, they have all traces equal to zero)
\cite{Azc.Mac.Mou.Bue:97} over other more conventional ones such
as \emph{e.g.}, those obtained in (\ref{symtra}). A list of the
ranks of the invariant polynomials and associated cocycles for
the simple algebras is given for convenience in Table 1
below.

\subsection{Simple compact algebras: cocycles and Casimir operators}
\label{sec:compGtable}{\ }

We have seen that the Lie algebra cocycles may be expressed in
terms of LI forms on the group manifold $G$ (Sec.~\ref{sec:CEco}).
The equivalence of the Lie algebra (CE) cohomology and the de Rham
cohomology in Prop.~\ref{prop:CEcom} in the simple compact case is
specially interesting because, since all primitive cocycles are
odd, compact groups behave as products of odd spheres from the
point of view of real homology. This leads to a number of simple
and elegant formulae concerning the Poincar\'e polynomials, Betti
numbers, the primitive invariant tensors and the non-trivial Lie
algebra cocycles, etc. The following table summarizes many of
these results. Details on the topological properties of Lie groups
may be found in
\cite{Car:36,Pon:35,Hop:41,Che:50,Che-Bor:55,Sam:52,Cole:58,Bor:65,Bot:79,Boy:93};
for book references see \cite{Wey:46,Hod:41,Gre.Hal.Van:76,CUP}.

\begin{equation*}
\nonumber
\begin{array}{cccc}
\fg & \mbox{dim}\,\fg & \mbox{Orders $m_i$ of\ invariants\ and\
Casimirs} & \mbox{Orders\ $p=2m_i-1$ of $\fg$-cocycles}
\\
\hline \hline A_l & (l+1)^2 -1\ [l\geq 1] & 2,3,\dots,l+1         &
3,5,\dots,2l+1
\\
B_l & l(2l+1)\ [l\geq 2]    & 2,4,\dots,2l          & 3,7,\dots,4l-1
\\
C_l & l(2l+1)\ [l\geq 3]    & 2,4,\dots,2l          & 3,7,\dots,4l-1
\\
D_l & l(2l-1)\ [l\geq 4]    & 2,4,\dots,2l-2,l      &
3,7,\dots,4l-5,2l-1
\\
G_2 & 14                & 2,6                   & 3,11
\\
F_4 & 52                & 2,6,8,12              & 3,11,15,23
\\
E_6 & 78                & 2,5,6,8,9,12          & 3,9,11,15,17,23
\\
E_7 & 133               & 2,6,8,10,12,14,18     &
3,11,15,19,23,27,35
\\
E_8 & 248               & 2,8,12,14,18,20,24,30 &
3,15,23,27,35,39,47,59
\end{array}
\end{equation*}
\centerline{Table 1.  {\it Orders (ranks) of the
primitive invariant tensors}} \centerline{{\it and associated
cocycles for the compact simple Lie algebras} ($i=1\dots,l$)}\\
\medskip

The structure constants always define a non-trivial
three-cocycle, the one associated with the non-degenerate Cartan-Killing
form; this is why in the table above there is always a 2 (3) in the
third (last) column. It is worth noticing that the sum of the orders
($2m_i-1$) of the different cocycles in each entry in the last
column is the dimension of the corresponding algebra {\it i.e.},
\begin{equation}
\nonumber \sum_{i=1}^{\hbox{rank}\; l} (2m_i-1)  =\hbox{dim}\;\fg
\quad.
\end{equation}
This is so because the sum of the dimensions of the odd spheres has
to be equal to the dimension of the group manifold.

The cohomology ring of the compact simple $\fg$ is generated
by the $l$ primitive cocycles of order $(2m_i-1)$.

\section{Leibniz (Loday's) algebras and cohomology}
\label{sec:Loday}

\subsection{Main definitions}
\label{sec:Leib-defs}{\ }

Loday's algebras, or Leibniz algebras \cite{Lod:93,Lod-Pir:93,Lod-Pir:96}
in Loday's terminology (see also \cite{Al-Ay-Om:05, Cuv:94}),
are a non-skewsymmetric ($[X,Y]\not=-[Y,X]$) version of Lie algebras.
More specifically, a Leibniz algebra $\mathscr{L}$
is given by the following

\begin{definition} (({\it Left}) {\it Leibniz algebra}){\,}

A (left) Leibniz algebra (LA) is a vector space $\mathscr{L}$ endowed
bith a bilinear operation $\mathscr{L}\times \mathscr{L}\rightarrow
\mathscr{L}$ that satisfies the relation ({\it left Leibniz identity})
\begin{equation}
\label{left-Lei-alg}
 [X,[Y,Z]]=[[X,Y],Z]+[Y,[X,Z]]   \qquad \forall X,Y,Z \in \mathscr{L} \quad ,
\end{equation}
\end{definition}
\noindent which is no longer equivalent to the
$[X,[Y,Z]]+[Y,[Z,X]]+[Z,[X,Y]]=0$ Lie algebra JI.

  Although $[X,Y]\not= -[Y,X]$ for a Leibniz algebra, some anticommutativity
is left in the double bracket, since eq.~\eqref{left-Lei-alg} implies
\begin{equation}
\label{lLanti}
[[X,Y],Z]= - [[Y,X],Z] \quad .
\end{equation}
A Leibniz algebra that satisfies $[X,X]=0\;\forall X\in \mathscr{L}$
has an anticommutative bracket and therefore is a Lie algebra since
then eq.~\eqref{left-Lei-alg} is the JI. Obviously, any Lie algebra
$\fg$ is also a Leibniz algebra. Many Lie algebra notions such as
those of subalgebra, quotient by a two-sided ideal, etc. extend
trivially to the Leibniz algebra case (other notions, such as that
of representation, require more care since the Leibniz bracket is
not skewsymmetric, see Sec. \ref{extLeib} below). For instance, a
{\it homomorphism $\phi$ of Leibniz algebras} is a homorphism of the
underlying vector spaces such that $\phi([X,Y])=[\phi(X),\phi(Y)]$.
The elements of the form $[X,X]$  (and those of the form
$[X,Y]+[Y,X]$) are central in $\mathscr{L}$. They generate a
two-sided ideal $\mathscr{I}$ of $\mathscr{L}$, and the quotient
$\mathscr{L}/\mathscr{I}$ is a Lie algebra.

Since the Leibniz bracket is not antisymmetric, one
has to distinguish between left (above) and {\it right LA}s, for which
the left derivation property of eq.~\eqref{left-Lei-alg} is
replaced by the right one,
\begin{equation}
\label{right-Lei-alg}
[[X,Y],Z]=[[X,Z],Y]+[X,[Y,Z]]  \qquad \forall X,Y,Z \in \mathscr{L}{}^{r}
\end{equation}
or {\it right Leibniz identity} (which again becomes the JI in the
anticommutative case), and eq.~\eqref{lLanti} by
\begin{equation*}
[X,[Y,Z]]= - [X,[Z,Y]] \quad.
\end{equation*}

The images of the left and right {\it adjoint maps}, $ad$ and
$ad{}^r$, are derivations of the corresponding Leibniz algebras
since, in terms of them, the above two defining equations read
\begin{equation}
\label{l/r-der-Lei}
ad_X [Y,Z]=[ ad_X\,Y,Z]+[Y, ad_X\,Z] \quad, \quad [X,Y] ad{}^r_Z= [X ad{}^r_Z,Y]+[X,Y ad{}^r_Z] \quad ,
\end{equation}
where here we have added the superscript $r$ and further
located $ad^r$ at the right to emphasize its right action character.
  Thus, the left and the right adjoint derivatives, which give
rise to the left and right Leibniz identities, are essentially
different, as they are in general any left and right actions
$\rho(X)$. In contrast with the Lie algebra case, taking the
opposite $X\mapsto -X$ is not an antiautomorphism of $\mathscr{L}$.
Nevertheless, left and right Leibniz algebras are still related in
the following sense: if the bracket $[X,Y]$ satisfies
\eqref{left-Lei-alg} and hence defines a left Leibniz algebra, the
bracket $[X,Y]'=[Y,X]$ satisfies eq.~\eqref{right-Lei-alg} and
defines a right one. The centre $Z(\mathscr{L})$ of a Leibniz
algebra may be defined as the kernel of $ad$.
\medskip

An interesting question is whether there is some analogue of Lie's
third theorem for Leibniz algebras {\it i.e.}, whether there
is some kind of generalization of the notion of Lie group
(some kind of `Leibniz group') so that Leibniz algebras are
its corresponding tangent structures. Such an object has been dubbed by Loday
\cite{Lod:93} as a {\it coquecigrue}\footnote{After Rabelais
imaginary animal in his {\it Gargantua} [{\it coquecigrue} = coq (cock) +
cigu\"e (hemlock)  + grue (crane)], an embodiment of absolute
absurdity.}. However, the problem of integrating Leibniz algebras in
general remains open, although progress has been made with
the introduction of {\it Lie racks} \cite{Kin:07}. There is, however,
a geometrical interpretation of certain {\it three}-Leibniz
algebras, the Lie-triple systems (see Sec.~\ref{sec:triple}),
as tangent spaces: see \cite{Yama:57}.
\medskip

The notion of LA may be extended to include the $Z_2$-graded or
Leibniz superalgebra case \cite{Al-Ay-Om:05}, although this will not be treated here.
Further, we shall only consider {\it left} LAs from now on.\\

\subsection{Extensions of a Leibniz algebra $\mathscr{L}$ by an abelian one
 $\mathscr{A}$ }
\label{extLeib}{\ }

By definition (see, {\it e.g.} \cite{CUP} for the Lie algebra case),
$\widetilde{\mathscr{L}}$ is said to be an extension a Leibniz
algebra $\mathscr{L}$ by an abelian one $\mathscr{A}$ if
$\mathscr{A}$ is a two-sided ideal of $\widetilde{\mathscr{L}}$ and
$\widetilde{\mathscr{L}}/\mathscr{A}=\mathscr{L}\;$ {\it i.e.},
there is an exact homomorphisms sequence
\begin{equation*}
0\longrightarrow \mathscr{A} \longrightarrow \widetilde{\mathscr{L}}
\longrightarrow \mathscr{L}\longrightarrow 0 \quad .
\end{equation*}
To determine a solution $\widetilde{\mathscr{L}}$, we need the data
of the extension problem, namely $\mathscr{L}$ and a
$\mathscr{L}$-module $\mathscr{A}$ which is the abelian LA, so that
left {\it and} right actions $\rho: \mathscr{L}\mapsto
\hbox{End}\mathscr{A}$ of $\mathscr{L}$ on $\mathscr{A}$ are given
(we shall not write $\rho, \rho^r$ hereafter to distinguish the left
or right actions since they will be distinguished by their
location). In contrast with the Lie algebra case, both left and
right actions on $\mathscr{A}$ are needed as the Leibniz bracket is
not anticommutative in general.

  Let us assume that $\widetilde{\mathscr{L}}$ is a solution to the extension
problem  and let us characterize its elements by $(A,\tau(X))$, where
$\tau$ is a trivializing section injecting $\mathscr{L}$ into
$\widetilde{\mathscr{L}}$. Since $\mathscr{A}$ is abelian, it is clear that
the left and right actions given by
\begin{equation}
\label{actLonA}
\rho(X)A:=[\tau(X),A]=[\pi^{-1}(X),A]   \quad, \quad A\rho(X):=[A,\tau(X)]=[A, \pi^{-1}(X)]\quad,
\end{equation}
where $\pi^{-1}(X)$ is the fibre over $X$ ($\pi$ is the projection of
$\widetilde{\mathscr{L}}$ onto $\mathscr{L}$), are well defined; indeed
all the elements $\pi^{-1}(X)\in \widetilde{\mathscr{L}}$ ({\it i.e.} those in the class
of the element $\tau(X)$ of $\widetilde{\mathscr{L}}$ that defines the
element $X\in\mathscr{L}$) give rise to the same (left or right) action
as $\tau(X)$. Let us write $\tau(X)=(0,X)$
and then denote the elements of $\widetilde{\mathscr{L}}$ by $(A,X)$. Then,
the bracket in $\widetilde{\mathscr{L}}$ is defined by
\begin{equation}
\label{compoext}
[(A_1,X_1)\,,\,(A_2,X_2)]=(\,\rho(X_1)A_2+A_1\rho(X_2)+\omega^2(X_1,X_2)\;,\;
[X_1,X_2]\;)
\end{equation}
where, in contrast with the Lie algebra case, the antisymmetry of
the two-cochain $\omega^2(X_1,X_2)$
is not required. The presence of $\omega^2(X_1,X_2)\,$,
$\;\omega^2: \mathscr{L}\otimes \mathscr{L}\rightarrow \mathscr{A}$,
indicates that $\mathscr{L}$ is not necessarily a subalgebra of $\widetilde{\mathscr{L}}$
since, in general, $\tau$ is not a homomorphism of Leibniz algebras,
\begin{equation}
\label{dos-Lei-co}
[\tau(X_2),\tau(X_2)]-\tau([X_1,X_2])=\omega^2(X_1,X_2) \quad ;
\end{equation}
in fact,
\begin{equation*}
[\tau(X_1),\tau(X_2)] = [(0,X_1), (0,X_2)]=
(\omega^2(X_1,X_2)\,,\,[X_1,X_2])\not= (0, [X_1,X_2])=\tau([X_1,X_2]) \quad .
\end{equation*}

Our objective is to determine the conditions that the bracket in
eq.~\eqref{compoext} must satisfy for $\widetilde{\mathscr{L}}$ to
be a LA. These follow by imposing the Leibniz identity
(eq.~\eqref{left-Lei-alg}), namely
\begin{equation}
\label{Le-le-ext}
\begin{aligned}
&[(A_1,X_1)\,,\,[(A_2,X_2)\,,\,(A_3,X_3)]\,] = [\,[(A_1,X_1)\,,\,(A_2,X_2)]\,,\,(A_3,X_3)]\\
& \qquad \qquad \qquad \qquad\qquad+ [(A_2,X_2)\,,\,[(A_1,X_1)\,,\,(A_3,X_3)]\,] \quad .
\end{aligned}
\end{equation}
A simple calculation shows that eq.~\eqref{Le-le-ext} implies the relations
\begin{equation}
\begin{aligned}
\label{reprcond-a}
&\rho(X_1)(\rho(X_2)A_3)=\rho([X_1,X_2])A_3+\rho(X_2)(\rho(X_1)A_3) \quad, \\
&\rho(X_1)(A_2\rho(X_3))=(\rho(X_1)A_2)\rho(X_3)+A_2\rho([X_1,X_3]) \quad,  \\
&A_1\rho([X_2,X_3])=(A_1\rho(X_2))\rho(X_3)+\rho(X_2)(A_1\rho(X_3)) \quad,
\end{aligned}
\end{equation}
plus the condition
\begin{equation}
\begin{aligned}
\label{Leib-2-cocy}
(s\omega^2)(X_1,X_2,X_3):= &
\rho(X_1)\omega^2(X_2.X_3)-\rho(X_2)\omega^2(X_1,X_3)-\omega^2(X_1,X_2)\rho(X_3)+\\
 - & \omega^2([X_1,X_2],X_3])-\omega^2(X_2,[X_1,X_3])+\omega^2(X_1,[X_2,X_3])=0 \quad ,
\end{aligned}
\end{equation}
where the position of the $\rho$'s indicates the left or
right action and the first equality defines $s\omega^2$; then,
$s\omega^2=0$ characterizes $\omega^2$ as a two-cocycle. Later we shall
write expressions such as the above one in the form
\begin{equation}
\begin{aligned}
\label{Leib-2-cocy-b}
(s\omega^2)(X_1,X_2,X_3):=&X_1\cdot\omega^2(X_2.X_3)-X_2\cdot\omega^2(X_1,X_3)-\omega^2(X_1,X_2)\cdot X_3+\\
&-\omega^2([X_1,X_2],X_3])-\omega^2(X_2,[X_1,X_3])+\omega^2(X_1,[X_2,X_3])=0 \quad ,
\end{aligned}
\end{equation}
where $X\cdot\,$ and $\,\cdot X$ stand, respectively, for the left
and right action of $\rho(X)$.

The left and right actions in eqs.~\eqref{reprcond-a}, viewed in
$\widetilde{\mathscr{L}}$, correspond to adjoint ones. They read,
in the same order,
\begin{equation}
\begin{aligned}
\label{reprcond-b}
[X,[Y,A]]&=[[X,Y],A]+[Y,[X,A]] \quad,  \\
[X, [A,Y]]&= [[X,A],Y]+[A,[X,Y]] \quad,  \\
[A, [X,Y]]&=[[A,X],Y]+[X,[A,Y]] \quad,
\end{aligned}
\end{equation}
where {\it e.g.}, we have written $[X,A]$ for the left action of
$\tau(X)$ on the ideal $A$ of $\widetilde{\mathscr{L}}$.
The above equations are the statement that the actions $[X,A]$,
$[A,X]$ define a Leibniz {\it representation} \cite{Lod-Pir:93,Lod-Pir:96}
{\it of $\mathscr{L}$ on $\mathscr{A}$}; note that both the left
and right actions intervene in the definition. Eqs.~\eqref{reprcond-b}
are the analogue of the left module for an associative algebra (there is
a corresponding set of equations for the
analogue of a right module \cite{Lod:93}).

   The sum of the last two equations in \eqref{reprcond-b} gives ({\it cf.} \eqref{lLanti})
\begin{equation}
\label{lLanti-b}
[[X,A],Y]=-[[A,X],Y] \quad .
\end{equation}
A representation such that
\begin{equation}
\label{Leib-sym}
[X,A]=-[A,X] \qquad \forall X\in \mathscr{L}\;,\,\;A\in\mathscr{A}
\end{equation}
is satisfied is called a {\it symmetric representation} of $\mathscr{L}$. When
the representation is symmetric, all eqs.~\eqref{reprcond-b}
are equivalent among themselves. In particular, a representation
of a Lie algebra is a symmetric representation in the Leibniz algebra
sense\footnote{Notice that if we had $A\rho(X)=-\rho(X)A$ and
$\omega^2$ were skewsymmetric, the first equality in
eq.~\eqref{Leib-2-cocy} would coincide with the third
equation in \eqref{Lie-coh-ex} which defines the action of the
Lie algebra coboundary operator on a two-cochain.}.

 Different sections $\tau$ define images $\tau(X)\in \widetilde{\mathscr{L}}$
of $X\in \mathscr{L}$ that differ in an element of the abelian ideal $\mathscr{A}$.
Thus, if $\tau'$ is another trivializing section it follows that
$\tau'(X)=\tau(X)+\omega^1(X)$, where $\omega^1$ is a linear
map $\omega^1:\mathscr{L}\rightarrow \mathscr{A}$. In analogy with
\eqref{dos-Lei-co}, we may write $[\tau'(X_1),\tau'(X_2)]=\tau'([X_1,X_2])+
{\omega'}^2(X_1,X_2)$ and, comparing now with the result of
computing $[\tau(X_1)+\omega^1(X_1)\,,\,\tau(X_2)+\omega^1(X_2)]$,
we immediately obtain that the two bilinear maps ${\omega'}^2, \omega^2$ are
related by
\begin{equation}
\label{Leib-two-cob}
{\omega'}^2(X_1,X_2)-\omega^2(X_1,X_2)=
\rho(X_1)\omega^1(X_2)+\omega^1(X_2)\rho(X_1)-\omega^1(X_1,X_2)
= (s\omega^1)(X_1,X_2) \; ,
 \end{equation}
where the last equality defines $(s\omega^1)(X_1,X_2)$. Thus, the
extensions are characterized by actions that satisfy the {\it
representation conditions} (eqs.~\eqref{reprcond-b}), which
characterize $\mathscr{A}$ as a (left) $\mathscr{L}$-module, and by
the bilinear maps that satisfy eq.~\eqref{Leib-2-cocy} {\it i.e.},
by the {\it two-cocycles} $\omega^2 \in
Z^2_\rho(\mathscr{L},\mathscr{A})$. Using different trivializing
sections to characterize the same extension
$\widetilde{\mathscr{L}}$ corresponds to taking two-cocycles that
are equivalent according to eq.~\eqref{Leib-two-cob} {\it i.e.},
that differ in the two-coboundary $\omega^2_{cob}= s\omega^1\in
B^2_\rho(\mathscr{L},\mathscr{A})$ generated by the one-cochain
$\omega^1$.

Therefore, the inequivalent extensions $\widetilde{\mathscr{L}}$ of a
LA $\mathscr{L}$ by an abelian one $\mathscr{A}$ which is a
$\mathscr{L}$-module for the representation \eqref{reprcond-b} are
classified \cite{Lod-Pir:93,Lod:93} by the second cohomology group
\begin{equation*}
H^2_\rho(\mathscr{L},\mathscr{A})=
Z^2_\rho(\mathscr{L},\mathscr{A})/B^2_\rho(\mathscr{L},\mathscr{A}) \quad.
\end{equation*}
As with Lie algebras, this general situation has two important special subcases:
\begin{itemize}
\item
When $\omega^2=0$ (or it is equivalent to zero, $\omega^2=s\omega^1$),
eq.~\eqref{dos-Lei-co} shows that
$\tau:\mathscr{L}\rightarrow \widetilde{\mathscr{L}}$ is a
homomorphism of Leibniz algebras; then, the extension {\it splits} (this is the case
that for Lie algebras corresponds to the semidirect sum). This solution to
the extension problem always exists, since only requires the definition
of the actions $\rho$ ({\it i.e.}, eqs.~\eqref{reprcond-b}), which are known
as they are necessary data for the extension problem.
\item
When $\omega^2$ is non-trivial but $\rho=0$, all eqs.~\eqref{reprcond-b} are
trivial and then the second cohomology group
$H_0^2(\mathscr{L},\mathscr{A})=
Z_0^2(\mathscr{L},\mathscr{A})/B_0^2(\mathscr{L},\mathscr{A})$ characterizes
the possible central extensions of $\mathscr{L}$ by $\mathscr{A}$.
\end{itemize}
Clearly, the case where both $\rho$ and  $\omega^2$ are trivial corresponds to
the direct sum $\widetilde{\mathscr{L}}=\mathscr{A}\oplus\mathscr{L}$ of
Leibniz algebras, which does not contain any structure beyond that in the $\mathscr{L}$
and $\mathscr{A}$ LA summands themselves.

\subsection{Leibniz algebra cohomology}
\label{Leib-coho}{\ }

We are now in a position to generalize the previous results to define
higher order cochains and the {\it Leibniz algebra cohomology
complex} ($C^\bullet(\mathscr{L},\mathscr{A}),s$)

\begin{definition}({\it Leibniz $p$-cochains}){\ }

An $\mathscr{A}$-valued $p$-cochain is a $p$-linear map
$\omega^p: \otimes^p \mathscr{L} \rightarrow \mathscr{A}$. The space of $p$-cochains
will be denoted $C^p(\mathscr{L},\mathscr{A})$.
\end{definition}

\begin{definition} ({\it Coboundary operator for Leibniz algebra cohomology})
\label{2-Leib-cob}{\ }

The coboundary operator $s$ is the map $s: C^p(\mathscr{L},\mathscr{A})\rightarrow
C^{p+1}(\mathscr{L},\mathscr{A})$ defined by
\begin{equation}
\label{Leib-s}
\begin{aligned}
(s \omega^p)(X_1,\dots,X_{p+1}):= &\sum_{i=1}^p (-1)^{i+1}\rho(X_i)\omega^p(X_1,\dots,{\widehat{X}}_i,\dots,X_{p+1})\\
+ & \sum_{{i,j \atop 1\leq i < j}}^{p+1} (-1)^{i}\omega^p(X_1,\dots,{\widehat{X}}_i,\dots,[X_i,X_j],\dots,X_{p+1})\\
+ & (-1)^{p+1}\omega^p(X_1,\dots,X_p)\rho(X_{p+1}) \quad.
\end{aligned}
\end{equation}
\end{definition}
We see that both the left and the right actions of $\mathscr{L}$ on
the $\mathscr{A}$-valued cochains intervene in the definition of $s$.
When $\omega^p=\omega^2$, eq.~\eqref{Leib-s}
reproduces eq.~\eqref{Leib-2-cocy}.
Eq.~\eqref{Leib-s} is the expression of the coboundary operator
for the Leibniz algebra cohomology \cite{Lod-Pir:93,Lod:93,Cas-Lod-Pir:02}
(there given for a right LA) and \cite{Da-Tak:97}.
It is proved in \cite{Lod-Pir:93} that $s^2=0$, so that
($C^\bullet(\mathscr{L},\mathscr{A}),s$) is indeed a LA
cohomology complex.

When the representation is a {\it symmetric} one (eq.~\eqref{Leib-sym}),
the last term in eq.~\eqref{Leib-s} may be included in the first one as one more
contribution to the sum.  Then, eq.~\eqref{Leib-s} adopts same form as the
action of the Lie algebra coboundary operator, namely
\begin{equation}
\begin{aligned}
\label{Leib-s-symm}
(s \omega^p)(X_1,\dots,X_n):= &\sum_{i=1}^{p+1} (-1)^{i+1}\, \rho(X_i)\omega^p(X_1,\dots,{\widehat{X}}_i,\dots,X_{p+1})\\
 + &  \sum_{i,j \atop 1\leq i < j}^{p+1} (-1)^i \, \omega^p(X_1,\dots,{\widehat{X}}_i,\dots,[X_i,X_j],\dots,X_{p+1}) \quad ,
\end{aligned}
\end{equation}
which formally coincides with eq.~\eqref{Lie-cob} (see eq.~\eqref{Lie-cob-ad});
here, the proof that $s^2=0$ follows by analogy to the Lie algebra cohomology
case.

The above two expressions also reproduce those in \cite{Gau:96}
(when particularized to the $n=2$ case) and in \cite{Da-Tak:97}.

\subsection{Deformations of Leibniz algebras}
{\ }

As for Lie algebras, one may ask the question of applying the
cohomology complex in Def.~\ref{2-Leib-cob} to the problem of
deforming Leibniz algebras. It is clear that the first order
deformations will be classified by $H^2(\mathscr{L},\mathscr{L})$.
We shall not discuss this here since it will be done later directly
for the $n$-ary generalization of $\mathscr{L}$, the $n$-Leibniz
algebras $\mathfrak{L}$ (Secs.~\ref{n-Leib-coho} and
\ref{coho-for-def}) from which the $\mathscr{L}$ case follows for
$n=2$. It is also possible to consider Leibniz deformations of Lie
and $n$-Lie algebras, since these are in particular Leibniz and
$n$-Leibniz; see \cite{Fia-Man:08} and \cite{Az-Iz:10}.
\medskip

\section{$n$-ary algebras}
\label{n-ary-alg}
{\ }

   To extend the ordinary Lie algebra $\fg$ structure to the case of
brackets with $n>2$ entries, we have to define first the $n$-ary
brackets and then generalize the JI. The Lie bracket is naturally
extended to an $n$-ary bracket by requiring it to be a
multilinear application
 \begin{equation*}
\label{n-ary-bra}
 [\;,\;,\;\mathop{\cdots}\limits^n \;,\;,\;]:\,
 \fH\times\mathop{\cdots}\limits^n \times \fH\rightarrow \fH \; ,
\end{equation*}
where $\fH$ is a generic $n$-ary algebra. The next step is
specifying the consistency condition to be satisfied by this $n$-ary
bracket. As described in the Introduction, there are two natural
interpretations of the JI; these lead to two main $n>2$ different
generalizations of the Lie algebra structure obtained, respectively,
by extending to the $n$-ary brackets the antisymmetrization of the
nested Lie brackets [\;\,,\;[\;\,,\,\;]] of the $n=2$ case, or the
derivation character of the adjoint map (intermediate possibilities
between these two exist: see \cite{Gau:96b, Vin2:98}).
\medskip

\section{Higher order or generalized Lie algebras (GLAs)}
\label{sec:GLA}{\ }

GLAs make emphasis on the associativity of the composition of the
elements in their multibracket. These algebras were introduced
independently in \cite{AzBu:96,AzPePB:96a, AzPePB:96b} and
\cite{JLL:95,Han-Wac:95,Gne:95,Gne:97}; we refer to
Sec.~\ref{sec:not-con} for the terminology. General linear
antisymmetric `$\Omega$-algebras' were considered earlier
\cite{Kurosh,Bara-Bur:75}.

An obviously skewsymmetric higher order multilinear bracket is
provided by the following

\begin{definition} ({\it Higher order generalized Lie bracket} or {\it multibracket})
\label{def:multibracket}{\ }

 Let $X_i$ be arbitrary associative
operators, $i=1,\dots,r$. A {\it multibracket} of order $n$
\cite{AzBu:96} is defined by the fully antisymmetrized product of
its entries
\begin{equation}
\label{multic}
     [X_{i_1},\dots,X_{i_n}]: =\sum_{\sigma\in S_n}(-1)^{\pi(\sigma)}
  X_{i_{\sigma(1)}}\dots X_{i_{\sigma(n)}}     \; ,
\end{equation}
which obviously reduces to the ordinary Lie bracket
$[X_{i_1},X_{i_2}]=\epsilon^{j_1j_2}_{i_1i_2}X_{j_1}X_{j_2}$ for
$n=2$. Using the Levi-Civita symbol (eq.~\eqref{defep}) , the
above definition is obviously equivalent to
\begin{equation}
\label{multic2}
 [X_{i_1},\dots,X_{i_n}]=\epsilon^{j_1\dots j_n}_{i_1 \dots i_n}
 X_{j_1}\cdots X_{j_n} \quad.
\end{equation}
\end{definition}

   Since
\begin{equation}
\label{eps-1}
\epsilon^{i_1\dots i_n}_{j_1\dots j_n}=
\sum^{n}_{s=1}(-1)^{s+1}\delta^{i_1}_{j_s}
 \epsilon^{i_2\dots i_n}_{j_1 \dots{\hat j_s}\dots j_n}=
 \sum^{n}_{s=1}(-1)^{s+n}
 \epsilon^{i_1\dots i_{n-1}}_{j_1 \dots{\hat j_s}\dots j_n}\delta^{i_n}_{j_s}  \quad ,
\end{equation}
it is clear that a multibracket of order $n$ may be expressed in
terms of multibrackets of increasingly lower orders by using
\begin{equation}
\begin{aligned}
\label{n-menos1} [X_1, X_2, \dots, X_n] &=
 \sum^{n}_{s=1}(-1)^{s+1} X_s \, [ X_2, X_3, \dots, \hat{X_s},\dots, X_n] \cr
 & = \sum^{n}_{s=1}(-1)^{s+n} [ X_1, X_2, \dots, \hat{X_s},\dots, X_n]\, X_s\quad .
\end{aligned}
\end{equation}
For instance, for the order three and four multibrackets we find
\begin{equation}
\label{NQB-3}
\begin{aligned}
&[X_1,X_2,X_3] = X_1[X_2,X_3] -X_2[X_1,X_3] + X_3[X_1,X_2] \cr
    &\qquad\qquad\quad= [X_2,X_3]X_1 - [X_1,X_3] X_2 + [X_1,X_2] X_3 \;  , \
\end{aligned}
\end{equation}
\begin{equation}
\begin{aligned}
&[X_1,X_2,X_3,X_4]  =X_1 [X_2,X_3,X_4] - X_2 [X_1,X_3,X_4] +
 X_3 [X_1,X_2,X_4] - X_4 [X_1,X_2,X_3] \cr
 &\qquad\qquad\qquad\quad = - [X_2,X_3,X_4] X_1 + [X_1,X_3,X_4] X_2
 -[X_1,X_2,X_4] X_3 + [X_1,X_2,X_3] X_4\; .
\end{aligned}
\end{equation}

 The associativity of the product of the entries in \eqref{multic}
implies that the multibracket necessarily satisfies an identity,
which has a {\it different} structure depending on whether $n$ is
{\it even} or {\it odd}, according to the following
\begin{lemma} (Generalized Jacobi identity (GJI))
\label{prop1}
 {\ }\\ For $n$ even, the higher order bracket (\ref{multic})
satisfies the following identity
\begin{equation}
  \label{GJI}
    \sum_{\sigma\in S_{2n-1}}(-1)^{\pi(\sigma)}\left[ [X_{\sigma(1)},\dots,
     X_{\sigma(n)}],X_{\sigma(n+1)},\dots ,X_{\sigma(2n-1)}\right] =0
\quad.
\end{equation}
We shall refer to this identity (\ref{GJI}) satisfied by the $n$ even
multibracket as the {\it generalized Jacobi identity} (GJI).

For $n$ odd, the identity is structurally different: the $r.h.s.$ of
the above expression is proportional to the larger bracket
$[X_1,\dots , X_{2n-1}]$ rather than zero.
\end{lemma}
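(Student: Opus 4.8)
The plan is to exploit that the multibracket \eqref{multic} is built purely from associative products, so that the whole left-hand side is a linear combination of monomials $X_{k_1}\cdots X_{k_{2n-1}}$ in which the $2n-1$ operators each occur exactly once. First I would record a structural reduction: the outer sum $\sum_{\sigma\in S_{2n-1}}(-1)^{\pi(\sigma)}$ totally antisymmetrises the $2n-1$ labels, and by definition \eqref{multic} the total antisymmetrisation of any such monomial equals $\pm[X_1,\dots,X_{2n-1}]$. Hence the left-hand side must collapse to
\begin{equation}
\sum_{\sigma\in S_{2n-1}}(-1)^{\pi(\sigma)}\bigl[[X_{\sigma(1)},\dots,X_{\sigma(n)}],X_{\sigma(n+1)},\dots,X_{\sigma(2n-1)}\bigr] = c_n\,[X_1,\dots,X_{2n-1}]
\end{equation}
for a combinatorial constant $c_n$, and the whole problem becomes the evaluation of $c_n$.

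To compute $c_n$ I would peel off the two antisymmetrisations in turn. Expanding the inner bracket as $\sum_{\alpha\in S_n}(-1)^{\pi(\alpha)}X_{\sigma(\alpha(1))}\cdots X_{\sigma(\alpha(n))}$ and relabelling the summation variable $\sigma\mapsto\sigma\circ\widehat\alpha$, where $\widehat\alpha\in S_{2n-1}$ extends $\alpha$ by the identity on $\{n+1,\dots,2n-1\}$, the sign $(-1)^{\pi(\alpha)}$ is absorbed into $(-1)^{\pi(\sigma)}$ and each of the $n!$ values of $\alpha$ contributes equally; this replaces the inner bracket by the bare product $X_{\sigma(1)}\cdots X_{\sigma(n)}$ at the cost of an overall factor $n!$. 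Expanding next the outer bracket over $\beta\in S_n$ produces words in which the block $X_{\sigma(1)}\cdots X_{\sigma(n)}$ stays contiguous while the single letters $X_{\sigma(n+1)},\dots,X_{\sigma(2n-1)}$ are distributed around it; for each fixed $\beta$ the word is a fixed reordering $\gamma_\beta\in S_{2n-1}$ of the labels, and summing over $\sigma$ turns it into $(-1)^{\pi(\gamma_\beta)}[X_1,\dots,X_{2n-1}]$. This gives $c_n = n!\sum_{\beta\in S_n}(-1)^{\pi(\beta)+\pi(\gamma_\beta)}$.

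The decisive step is the parity bookkeeping for $\gamma_\beta$, and this is where I expect the real work to lie. Because the block carries the $n$ smallest labels $1,\dots,n$ in increasing order, it contributes no internal inversions, and inversions among the single letters match those already present in $\beta$; the only change is that each single letter lying to the \emph{left} of the block now produces $n$ inversions against the block rather than the single one it contributes in $\beta$. Writing $q=\beta^{-1}(1)$ for the position of the block among the $n$ outer slots, so that $q-1$ single letters precede it, this yields exactly $\pi(\gamma_\beta)=\pi(\beta)+(n-1)(q-1)$, whence $(-1)^{\pi(\beta)+\pi(\gamma_\beta)}=(-1)^{(n-1)(q-1)}$ depends only on $q$. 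Grouping the $(n-1)!$ permutations that share a given $q$,
\begin{equation}
c_n = n!\,(n-1)!\sum_{q=1}^{n}(-1)^{(n-1)(q-1)} \quad .
\end{equation}
For $n$ even, $n-1$ is odd and the alternating sum over the $n$ (even number of) values of $q$ vanishes, so $c_n=0$ and the GJI \eqref{GJI} holds. For $n$ odd, $n-1$ is even, every summand equals $1$, the sum is $n$, and $c_n=(n!)^2\neq 0$, so the left-hand side is the nonvanishing multiple $(n!)^2\,[X_1,\dots,X_{2n-1}]$ of the longer bracket, precisely the structurally different behaviour announced in the statement.
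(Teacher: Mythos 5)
Your proof is correct and is essentially the paper's own argument in permutation language: the paper's Levi--Civita manipulations in eq.~\eqref{multiep} perform exactly your reduction (the $n!$ from absorbing the inner antisymmetrisation, the $(n-1)!$ from the single letters, and the relative sign $(-1)^{s(n+1)}=(-1)^{(n-1)(q-1)}$ from relocating the contiguous block), yielding the identical coefficient $n!\,(n-1)!\sum_{s=0}^{n-1}(-1)^{s(n+1)}$, which vanishes for $n$ even and equals $n$ for $n$ odd. Your preliminary observation that total antisymmetry forces the left-hand side to be $c_n[X_1,\dots,X_{2n-1}]$ is obtained in the paper automatically when the final $\epsilon$-contraction reassembles the $(2n-1)$-bracket, so the two proofs coincide in substance.
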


\begin{proof}
In terms of the Levi-Civita symbol, the {\it l.h.s.} of
(\ref{GJI}) reads
\begin{equation}
\label{multie} \epsilon^{j_1\dots j_{2n-1}}_{i_1\dots i_{2n-1}}
\epsilon^{l_1\dots l_n}_{j_1\dots j_n} [X_{l_1}\cdots
X_{l_n},X_{j_{n+1}}, \dots ,X_{j_{2n-1}}] \quad.
\end{equation}
Since the $n$ entries in this bracket are also  antisymmetrized,
eq. (\ref{multie}) is equal to
\begin{equation}
\label{multiep}
\begin{aligned}[b]
 n! & \epsilon^{l_1\dots l_n
j_{n+1}\dots j_{2n-1}}_{i_1\dots\dots\dots i_{2n-1}}
   \epsilon^{l_{n+1}\dots l_{2n-1}}_{j_{n+1}\dots j_{2n-1}}
   \sum^{n-1}_{s=0}(-1)^s X_{l_{n+1}}\cdots X_{l_{n+s}}X_{l_1}\cdots X_{l_n}
     X_{l_{n+1+s}}\cdots X_{l_{2n-1}}
\\ = & n!(n-1)!\epsilon^{l_1\dots l_{2n-1}}_{i_1\dots i_{2n-1}}
    X_{l_1}\cdots X_{l_{2n-1}}\sum^{n-1}_{s=0}(-1)^s(-1)^{ns}
\\ = &
    n!(n-1)! [X_{i_1},\dots ,X_{i_{2n-1}}]\sum^{n-1}_{s=0}(-1)^{s(n+1)}\quad,
\end{aligned}
\end{equation}
where we have used the skewsymmetry of $\epsilon$ to relocate the
block $X_{l_1}\cdots X_{l_n}$ in the second equality. Thus, the
{\it l.h.s.} of (\ref{multiep}) is proportional to a multibracket
of order $(2n-1)$ times a sum, which is zero for {\it even} $n$
and for $n$ {\it odd} is equal to $n$ \cite{AzBu:96, RACSAM:98}.
\end{proof}

  The GJI \eqref{GJI} contains, in all, $\frac{(2n-1)!}{n!(n-1)!}$=
$2n-1\choose{n}$ independent terms. For $n$=2 it reduces to the
ordinary JI; for $n$=4, for instance, it has already 35 terms. It is
easy to find (see \cite{AzBu:96,Cu-Za:02}) that a multibracket with
$n$ even entries is reducible to sums of
 products of $n/2$ ordinary two-brackets. All one has to do is to
iterate the identity easily obtained from \eqref{eps-1}
\begin{equation}
\label{eps-2}
 \epsilon^{i_1\dots i_n}_{j_1 \dots j_n}=\sum_{t>s=1}^{n}
 (-1)^{s+t+1}\epsilon^{i_1 i_2}_{j_s j_t}\,
  \epsilon^{i_3 \dots\dots i_n}_{j_1 \dots \hat{j_s}\dots
  \hat{j_t}\dots j_n} \quad.
\end{equation}
For instance, for $n$=4 one immediately obtains
\begin{equation}
\label{n=4-resol}
\begin{aligned}
& [X_1,X_2,X_3,X_4] =
[X_1,X_2][X_3,X_4]-[X_1,X_3][X_2,X_4]+[X_1,X_4][X_2,X_3]\,+ \cr &
\qquad \qquad \qquad \qquad
[X_2,X_3][X_1,X_4]-[X_2,X_4][X_1,X_3]+[X_3,X_4][X_1,X_2] \; .
\end{aligned}
\end{equation}
For $n>4$ ($n$ even or odd), other decompositions of the
multibracket in terms of lower order brackets are possible by using
various resolutions of the Levi-Civita symbol into products of lower
order ones.

 For $n$ even, the GJI makes it natural
to define a higher order Lie algebra by means of

\begin{definition} ({\it Higher order} or {\it generalized Lie
algebra} (GLA)) \label{sec:GLAn} {\ }

 An order $n=2p$ higher order generalized Lie algebra
\cite{AzBu:96,AzPePB:96a,AzPePB:96b},\cite{Han-Wac:95,Gne:95} is a
vector space $\mathcal{G}$ endowed with a fully skewsymmetric
bracket $\mathcal{G}\times \mathop{\cdots}\limits^n \times
\mathcal{G} \rightarrow \mathcal{G}$, $(X_1,\dots ,X_n)\mapsto
[X_1,\dots ,X_n] \in \mathcal{G} $, such that the GJI (\ref{GJI}) is
fulfilled.
\end{definition}

Consequently, given a basis  $\{ X_i\}$ of $\mathcal{G}$
($i=1,\dots,r=\mathrm{dim}\mathcal{G}$), a finite-dimensional GLA of
order $n=2p$ is defined by an expression of the form
\begin{equation}
\label{multig}
 [X_{i_1},\dots ,X_{i_{2p}}]={C_{i_1\dots i_{2p}}}^j X_j \; ,
\end{equation}
where the multibracket is defined by eq.~\eqref{multic} and
the constants ${C_{i_1\dots i_{2p}}}^j$ are the {\it higher
order algebra structure constants}. The defining properties
of the higher order algebra, eqs.~\eqref{multic} and \eqref{GJI}, now
translate into the antisymmetry of the ${C_{i_1\dots i_{2p}}}^j$ in
$i_1,\dots,i_{2p}$ and in that these structure constants
satisfy the GJI of the $n=2p$ GLA,
\begin{equation}
\label{GJIcoord}
\epsilon^{j_1\dots j_{4p-1}}_{i_1\dots i_{4p-1}}
{C_{j_1\dots j_{2p}}}^l C_{j_{2p+1}\dots j_{4p-1}l}{}^s=0 \quad
\hbox{or}\quad  {C_{[j_1\dots j_{2p}}}^l C_{j_{2p+1}\dots j_{4p-1}]l}{}^s=0 \quad .
\end{equation}
Clearly, for $n$=2 this reduces to the JI for a Lie algebra $\fg$,
$C_{[ij}{}^l\,C_{k]l}{}^s=0$.

 We now include, for the sake of completeness, the identities that
are obtained when two multibrackets of different orders $n,m$ are
nested. These are given by the following

\begin{proposition}
\label{propMGJI}{(Mixed order generalized Jacobi identities, \textbf{}MGJI)} \\
Let $m,n$ be even. The mixed order generalized Jacobi identity for
even order multibrackets reads
\begin{equation}
\label{MGJI}
       \epsilon^{j_1\dots j_{n+m-1}}\left[ [X_{j_1},\dots ,X_{j_n}],
   \dots , X_{j_{n+m-1}}\right]=0 \quad .
\end{equation}
\end{proposition}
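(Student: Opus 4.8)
The plan is to follow the proof of Lemma~\ref{prop1} almost verbatim, the only novelty being that the inner and outer brackets now carry \emph{different} even orders $n$ and $m$, so the combinatorial and sign factors must be tracked separately. First I would pass to the Levi-Civita description \eqref{multic2} of the multibracket, writing the inner bracket as $[X_{j_1},\dots,X_{j_n}]=\epsilon^{l_1\dots l_n}_{j_1\dots j_n}X_{l_1}\cdots X_{l_n}$, so that the left-hand side of \eqref{MGJI} becomes
\[
\epsilon^{j_1\dots j_{n+m-1}}_{i_1\dots i_{n+m-1}}\,
\epsilon^{l_1\dots l_n}_{j_1\dots j_n}\,
[X_{l_1}\cdots X_{l_n},X_{j_{n+1}},\dots,X_{j_{n+m-1}}]\, ,
\]
where the outer bracket is an $m$-fold antisymmetrised product whose first slot is occupied by the length-$n$ block $B:=X_{l_1}\cdots X_{l_n}$ and whose remaining $m-1$ slots hold $X_{j_{n+1}},\dots,X_{j_{n+m-1}}$.

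Next I would expand the outer $m$-bracket by keeping the block $B$ rigid and letting it move through the $m-1$ single factors, exactly as in the passage from \eqref{multie} to \eqref{multiep}. The antisymmetrisation of the $m-1$ single entries among themselves is absorbed into a factor $\epsilon^{l_{n+1}\dots l_{n+m-1}}_{j_{n+1}\dots j_{n+m-1}}$, while the position of $B$ relative to them produces a sum $\sum_{s=0}^{m-1}(-1)^s$, with $s$ the number of single factors preceding $B$. Contracting the two Levi-Civita symbols over $j_1\dots j_n$ yields a factor $n!$ and, after also contracting over the $m-1$ indices $j_{n+1}\dots j_{n+m-1}$, a factor $(m-1)!$, leaving
\[
n!\,(m-1)!\,\epsilon^{l_1\dots l_{n+m-1}}_{i_1\dots i_{n+m-1}}
\sum_{s=0}^{m-1}(-1)^s\,
X_{l_{n+1}}\cdots X_{l_{n+s}}\,X_{l_1}\cdots X_{l_n}\,X_{l_{n+s+1}}\cdots X_{l_{n+m-1}}\, .
\]
When $m=n$ this reproduces the $n!(n-1)!$ prefactor of the original lemma, a useful consistency check.

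Finally I would relocate the block $X_{l_1}\cdots X_{l_n}$ to the front. Relabelling the dummy indices so that the operator string is the ordered product $X_{l_1}\cdots X_{l_{n+m-1}}$ costs a sign $(-1)^{ns}$ from permuting the $n$ indices $l_1\dots l_n$ past the $s$ indices $l_{n+1}\dots l_{n+s}$ inside the Levi-Civita symbol; together with the $(-1)^s$ already present this gives $(-1)^{s(n+1)}$. Since $\epsilon^{l_1\dots l_{n+m-1}}_{i_1\dots i_{n+m-1}}X_{l_1}\cdots X_{l_{n+m-1}}=[X_{i_1},\dots,X_{i_{n+m-1}}]$, the whole expression collapses to
\[
n!\,(m-1)!\,[X_{i_1},\dots,X_{i_{n+m-1}}]\sum_{s=0}^{m-1}(-1)^{s(n+1)}\, .
\]
The main (indeed the only) obstacle is the careful bookkeeping of this last sign and of the combinatorial prefactors; once that is settled the conclusion is immediate. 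For $n$ even one has $(-1)^{s(n+1)}=(-1)^s$, so the sum is $\sum_{s=0}^{m-1}(-1)^s$, which vanishes precisely because $m$ is even. (Had $n$ been odd the summand would equal $1$ and the right-hand side would be proportional to the larger $(n+m-1)$-bracket, reproducing the odd pattern already noted after Lemma~\ref{prop1}; this confirms that both parities are genuinely needed.) Hence the left-hand side of \eqref{MGJI} is zero, as claimed.
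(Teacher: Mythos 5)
Your proposal is correct and follows essentially the same route as the paper's own proof of Proposition~\ref{propMGJI}: the Levi-Civita rewriting, the displacement sum $\sum_{s=0}^{m-1}(-1)^s$ for the rigid block, the $n!\,(m-1)!$ contraction factors, the relocation sign $(-1)^{ns}$, and the final vanishing of $\sum_{s=0}^{m-1}(-1)^{(n+1)s}$ for $n,m$ even all match eq.~\eqref{multifnew}. Your parity discussion of the odd cases also agrees with the paper's closing remark.
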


\begin{proof}
Following the same reasoning as in Lemma~\ref{prop1},
\begin{equation}
\label{multifnew}
  \begin{aligned}
&  \epsilon^{j_1\dots j_{n+m-1}}_{i_1\dots i_{n+m-1}}
  \epsilon^{l_1\dots l_n}_{j_1\dots j_n} [X_{l_1}\cdots X_{l_n},X_{j_{n+1}},
   \dots ,X_{j_{n+m-1}}]
\\ &\quad = n! \epsilon^{l_1\dots l_n j_{n+1}\dots
j_{n+m-1}}_{i_1\dots\dots\dots i_{n+m-1}} \epsilon^{l_{n+1}\dots
l_{n+m-1}}_{j_{n+1}\dots j_{n+m-1}} \sum^{m-1}_{s=0}(-1)^s
X_{l_{n+1}}\cdots X_{l_{n+s}} X_{l_1}\cdots X_{l_n}
     X_{l_{n+1+s}}\cdots X_{l_{n+m-1}}
\\ &\quad =n!(m-1)!\epsilon^{l_1\dots l_{n+m-1}}_{i_1\dots
i_{n+m-1}}
    X_{l_1}\cdots X_{l_{n+m-1}}\sum^{m-1}_{s=0}(-1)^s(-1)^{ns}
\\ &\quad = n!(m-1)! [X_{i_1},\dots,X_{i_{n+m-1}}]
\sum^{m-1}_{s=0}(-1)^{(n+1)s}\quad,
\end{aligned}
\end{equation}
which is zero for $n$ and $m$ even as stated.

In contrast, if $n$ {\it and/or} $m$ {\it are odd} the sum $\displaystyle
\sum^{m-1}_{s=0}(-1)^{(n+1)s}$ is different from zero ($m$ if $n$
is odd and 1 if $n$ is even). In this case, the {\it l.h.s}. of
(\ref{MGJI}) is proportional to the $(n+m-1)$-commutator
$[X_{i_1},\dots,X_{i_{n+m-1}}]$. This MGJI, found independently
in \cite{RACSAM:98}, had been defined as a multiplication
of skewsymmetric multilinear maps in \cite{Dzhuma:88}.
\end{proof}

It is simple to find the expression of the MGJI in terms of the
structure constants. If $n$ and $m$ are the even orders of the nested
brackets in eq. (\ref{MGJI}), and assuming that expressions
corresponding to eq.~\eqref{multig} exist (this will be the
case of Th.~\ref{th:HOsimple} below), the MGJI leads to
\begin{equation}
\label{multij}
  \epsilon^{i_1\dots i_{n+m-1}} C_{i_1\dots i_n}{}^l \,
       C_{i_{n+1}\dots i_{n+m-1}l}{}^s=0 \quad ;
\end{equation}
eq.~\eqref{GJIcoord} corresponds to $n$=$m$=$2p$. We shall not
discuss any further identities following from associativity, but the
above are not the only ones if one allows for more nested brackets.
This leads to the Bremner identities, initially proposed for the
ternary commutator \cite{Brem:98,Brem-Per:06} and later extended
to more general cases (see \cite{Brem:97} and
\cite{Curt-Jin-Mez:09,Dev-Nuy-Wei:09}).
\medskip

Having defined $n$-GLAs, one faces the question of providing
some examples. As we shall see, eq.~\eqref{multij} for
$n=2$ (eq.~\eqref{cocy-coord}) provides a hint for a wide
class of them: a way of finding examples of these higher order
algebras is to look at ordinary Lie algebra cohomology
since, as we saw in Sec.~\ref{sec:compGtable}, the primitive invariant
polynomials of the simple compact algebras and their
associated non-trivial cocycles are in one-to-one correspondence.

\subsection{Higher order simple algebras associated with a compact
simple Lie algebra $\fg$}
\label{sec:sec8}{\ }

We present here a construction of GLAs for which the previous
cohomology notions play a crucial r\^ole, namely the definition of
the higher order Lie algebras associated with a compact simple
algebra $\fg$. If the Lie algebra is simple $\Omega_{ij\rho} \equiv
C_{ij\rho}=k_{\rho\sigma}C^\sigma_{ij}$ is, by (\ref{HOCe}), the
non-trivial three-cocycle associated with the Cartan-Killing metric.
Since $\Omega_{ij\rho}$ is given by the antisymmetric form of the
structure constants of $\fg$, there always exists a three-cocycle.
The question arises as to whether higher order cocycles (and
therefore Casimirs of orders higher than two) can be used to define
the structure constants of a higher order bracket. Given the odd
dimensionality  of the cocycles, these multibrackets will involve an
even number of entries.

By way of an example, let us consider the case of $su(n)$, $n>2$,
and the four-bracket. Let $X_i$ be matrices of the defining
representation. The four-bracket is defined by
\begin{equation}
 [X_{i_1},X_{i_2},X_{i_3},X_{i_4}]:=\epsilon^{j_1j_2j_3j_4}_{i_1i_2i_3i_4}
  X_{j_1}X_{j_2}X_{j_3}X_{j_4}   \quad .                    \label{multia}
\end{equation}
Using the skew-symmetry in $j_1\dots j_4$, we may rewrite
(\ref{multia}) in terms of commutators as
\begin{eqnarray}
   [X_{i_1},X_{i_2},X_{i_3},X_{i_4}]&=& \frac{1}{2^2}
 \epsilon^{j_1j_2j_3j_4}_{i_1i_2i_3i_4}[X_{j_1},X_{j_2}]
[X_{j_3},X_{j_4}]
 =\frac{1}{2^2}\epsilon^{j_1j_2j_3j_4}_{i_1i_2i_3i_4}
   C^{l_1}_{j_1j_2}C^{l_2}_{j_3j_4}X_{l_1}X_{l_2}\nonumber\\
    &=&\frac{1}{2^2}\epsilon^{j_1j_2j_3j_4}_{i_1i_2i_3i_4}
        C^{l_1}_{j_1j_2}C^{l_2}_{j_3j_4}\frac{1}{2}
     ({d_{l_1l_2}}^\sigma_{.}
X_\sigma+c\delta_{l_1l_2})\nonumber\\
  &=&\frac{1}{2^3}\epsilon^{j_1j_2j_3j_4}_{i_1i_2i_3i_4}
     C^{l_1}_{j_1j_2}C^{l_2}_{j_3j_4}{d_{l_1l_2}}^\sigma_{.}
   X_\sigma
  \equiv{\Omega_{i_1\dots i_4}}^\sigma_{.} X_\sigma\quad ,
\label{multib}
\end{eqnarray}
where in going from the first line to the second we have used that
the factor multiplying $X_{l_1}X_{l_2}$ is symmetric in $l_1,l_2$,
so that we can replace $X_{l_1}X_{l_2}$ by
$\frac{1}{2}\{X_{l_1},X_{l_2}\}$ which is written in terms of the
$d$'s. The contribution of the term proportional to $c$ vanishes due
to the JI. Thus, the structure constants $C_{i_1\dots i_4}{}^\sigma$
of the four-bracket are given by the five-cocycle $\Omega_{i_1\dots
i_4 \sigma}$ associated with the primitive symmetric polynomial
$d_{ijk}$.
\medskip

The above result is, in fact, general. A ($2p$+1)-cocycle $\Omega$
for the Lie algebra cohomology of $\fg$ defines a higher order
$2p$ algebra by
\begin{equation}
\label{multigcocy}
 [X_{i_1},\dots ,X_{i_{2p}}]={\Omega_{i_1\dots i_{2p}}}^j X_j\quad ,
\end{equation}
where the structure constants satisfy the GJI (eqs.~\eqref{GJI}).
One may check directly that
$\Omega_{i_1\dots i_{2p}}{}^j$ defines a ($2p+1$)-cocycle
for $\fg$, since the MGJI for the ordinary Lie bracket of $\fg$ and
the above one give
\begin{equation}
\label{multija}
  \epsilon^{i_1\dots i_{m+1}} C_{i_1 i_2}{}^l \,
       \Omega_{i_{3}\dots i_{m+1}l}{}^j=0
\end{equation}
by eq.~\eqref{multij}. Lowering the index $j$ with the invariant
Killing metric we see that the fully antisymmetric structure constants
$\Omega_{i_1 \dots i_{m+1}}$ above are, in fact, those
of a cocycle for the Lie algebra cohomology of $\fg$ since
the above equation implies
\begin{equation}
  \epsilon^{i_1\dots i_{m+2}} C_{i_1 i_2}{}^l \,
       \Omega_{i_{3}\dots i_{m+1} i_{m+2}l}=0 \quad ,
\label{multijb}
\end{equation}
which is the ($2p+1$)-cocycle condition \eqref{cocy-coord}. Thus,
the following theorem follows \cite{AzBu:96}:

\begin{theorem}
({\it Higher order simple Lie algebras associated with a compact simple
algebra $\fg$}) \label{th:HOsimple}

Given a simple algebra $\fg$ of rank $l$, there are $(l-1)$
$(2m_i-2)$-higher order simple Lie algebras associated with $\fg$.
They are given by the $(l-1)$ Lie algebra cocycles of order
$(2m_i-1)>3$ which are associated with the $(l-1)$ symmetric
invariant polynomials on $\fg$ of order $m_i>m_1=2$. The $m_1=2$
case (for which the invariant polynomial is the Killing metric)
reproduces the original simple Lie algebra $\fg$; for the remaining
$(l-1)$ cases, the skewsymmetric $(2m_i-2)$-commutators define an
element of $\fg$ by means of the $(2m_i-1)$-cocycles, $(2m_i-1)>3$.
These higher order structure constants (as the ordinary structure
constants with all the indices written down) are fully antisymmetric
cocycles and satisfy the GJI.
\end{theorem}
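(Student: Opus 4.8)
The statement is a packaging of the machinery assembled in Sections~\ref{sec6.1}--\ref{sec:sec8}, so the plan is to promote the explicit $su(n)$ four-bracket computation \eqref{multia}--\eqref{multib} to a representation-independent construction and then verify the defining axioms of Definition~\ref{sec:GLAn}. First I would fix a faithful matrix representation of the simple algebra $\fg$ and, for each primitive symmetric invariant polynomial of order $m_i$, form the fully antisymmetrised product of order $2p := 2(m_i-1)$ of the representing matrices, i.e.\ the multibracket of Definition~\ref{def:multibracket}, eq.~\eqref{multic2}. Because $2p$ is even, Lemma~\ref{prop1} guarantees that this matrix-valued multibracket already satisfies the generalized Jacobi identity \eqref{GJI}; the entire problem is therefore reduced to (i) showing that the multibracket closes linearly on $\fg$ and (ii) identifying the closure coefficients with the cocycle produced in Proposition~\ref{cocy-polyn}.

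For the closure I would iterate the Levi--Civita resolution \eqref{eps-2}, exactly as in \eqref{n=4-resol}, to rewrite the order-$2p$ product as an $\epsilon$-contracted product of $p$ commutators, each of the form $C^{l}{}_{jk}X_l$. The surviving coefficient of $X_{l_1}\cdots X_{l_p}$ is symmetric in $l_1,\dots,l_p$ (interchanging two commutator blocks is an even permutation of the $\epsilon$ indices), so the product may be replaced by its symmetrised form and expressed, via Proposition~\ref{prop6.2}, through the symmetrised trace, i.e.\ through the invariant polynomial. The key point, generalising the vanishing of the $c\delta$ term in \eqref{multib}, is that all would-be non-closing pieces (those carrying extra Kronecker factors produced while reducing the symmetric product) are annihilated upon $\epsilon$-contraction by the ordinary Jacobi identity, so that what remains is precisely $\Omega_{i_1\dots i_{2p}}{}^{\sigma}X_\sigma$ with $\Omega$ the $(2m_i-1)$-cocycle \eqref{HOCe}. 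This is the step I expect to be the main obstacle: the $su(n)$ computation used the special identity $\{X_i,X_j\}=c\delta_{ij}+d_{ijl}X_l$, and the honest work is to control the reduction of symmetrised products of $p$ generators for an arbitrary simple $\fg$ and to confirm that only the order-one closure term survives.

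Once closure is established, the generalized Jacobi identity in coordinate form \eqref{GJIcoord} follows at once: it is the mixed-order identity \eqref{multij} of Proposition~\ref{propMGJI} specialised to $n=m=2p$, whose vanishing is read off from the matrix GJI of Lemma~\ref{prop1} together with the linear independence of the $X_\sigma$. Full skewsymmetry of $\Omega_{i_1\dots i_{2p}}{}^{j}$ is already part of Proposition~\ref{cocy-polyn}, so $(\fg,[\,\cdots])$ meets every requirement of Definition~\ref{sec:GLAn} and is a genuine GLA of order $2m_i-2$.

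It remains to count and to ensure non-triviality. The orders of the primitive invariant polynomials of a rank-$l$ simple algebra are $m_1=2<m_2<\cdots<m_l$ (Table~1); the value $m_1=2$ corresponds to the Killing metric, whose $3$-cocycle is the structure tensor $C_{ij\rho}$ and merely reproduces the ordinary bracket of $\fg$, so it is excluded. Each of the remaining $l-1$ values $m_i>2$ yields one higher order algebra of order $2m_i-2>2$, giving the asserted count. Finally, Corollary~\ref{cor:26} shows that only primitive polynomials produce a nonzero cocycle, so each of these $l-1$ brackets is non-degenerate; this primitivity is exactly what singles them out as the elementary (\emph{simple}) higher order algebras attached to $\fg$, while Proposition~\ref{prop:27} provides the inverse map recovering the polynomial from the cocycle, confirming that the correspondence is one-to-one.
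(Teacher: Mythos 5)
Your overall architecture matches the paper's: the paper likewise presents the $su(n)$ four-bracket computation \eqref{multia}--\eqref{multib}, obtains the GJI from the associativity of the underlying operator products (Lemma~\ref{prop1}), identifies the closure coefficients with the cocycle of Proposition~\ref{cocy-polyn}, verifies the cocycle condition through the mixed identities \eqref{multija}--\eqref{multijb}, and counts via Table~1 together with Corollary~\ref{cor:26} and Proposition~\ref{prop:27}. The difference is in the step you yourself flag as ``the main obstacle'': the paper does \emph{not} attempt a uniform closure proof in a faithful representation. It defines the bracket abstractly by the cocycle, eq.~\eqref{multigcocy}, and refers the general verification of the GJI \eqref{GJIcoord} for the cocycle structure constants to \cite{AzBu:96}; the displayed check \eqref{multija}--\eqref{multijb} only establishes the $(2m_i-1)$-cocycle condition, not the GJI. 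Your proposal, by contrast, makes the theorem \emph{depend} on the closure claim (i), and that claim is not just hard but false in the generality you need.

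Concretely: after iterating \eqref{eps-2}, the coefficient of $X_{l_1}\cdots X_{l_p}$ is indeed symmetric in $l_1,\dots,l_p$, but to conclude closure you need the \emph{symmetrized product} of $p$ generators to reduce to a constant plus a term linear in the generators (with the constant piece killed by the $\epsilon$-contraction and the JI, as the $c\,\delta$ term in \eqref{multib}). That reduction is a special feature of the defining representation of $su(n)$, where $X_iX_j=\tfrac12 C_{ij}{}^kX_k+\tfrac{c}{2}\delta_{ij}+\tfrac12 d_{ijk}X_k$ makes the span of $\{\mathbb{1},X_k\}$ multiplicatively closed; generically $X_{(l_1}\cdots X_{l_p)}$ is an independent degree-$p$ element of the image of the enveloping algebra, and neither the $\epsilon$-contraction nor the JI annihilates it, so the matrix multibracket does not take values in $\rho(\fg)$ and your identification with \eqref{HOCe} breaks down. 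Two concrete obstructions: for $\fg=G_2$ in its $7$-dimensional representation, transposition symmetry at best places the even multibracket in $so(7)$, with no mechanism forcing it into the $14$-dimensional $\rho(\fg)$; and for $D_l$ the Pfaffian-type primitive invariant $m_i=l$ is not a symmetrized trace in the defining representation at all (cf.\ \eqref{HOCc}, which only produces trace-form cocycles), so no single choice of representation can reach all $l-1$ primitive cocycles, and a representation-by-representation case analysis would be unavoidable. To complete the proof along the paper's lines you should instead establish the GJI \eqref{GJIcoord} directly for the cocycle \eqref{HOCe}, using its explicit expression in terms of structure constants, Schouten-type $\epsilon$-identities and the ordinary JI (this is what \cite{AzBu:96} does, and what the nilpotency statements of Propositions~\ref{prop:codnil} and Theorem~\ref{th:BRSTfull} repackage). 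A final small point: Corollary~\ref{cor:26} only says that \emph{non-primitive} polynomials yield vanishing cocycles; the non-triviality of the $l-1$ primitive cocycles is supplied by Proposition~\ref{prop6.3} and the Weil--Chevalley one-to-one correspondence, which your counting paragraph should invoke explicitly.
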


\subsection{Multibrackets, higher order coderivatives and exterior
derivatives}
\label{sec9} {\ }

Higher order brackets can be used to
generalize the ordinary coderivation of multivectors.

\begin{definition} ({\it Exterior coderivative}) { \ }

\label{def9.1} Let $\{ X_i\}$ be a basis of $\fg$ given in terms of
LIVF on $G$ and $\wedge^* \fg$ the exterior algebra of multivectors
generated by them ($X_1\wedge\dots\wedge X_{q}\equiv\epsilon_{1\dots
q}^{i_1\dots i_q} X_{i_1}\otimes\dots\otimes X_{i_q}$). The exterior
coderivation is the map of degree $-1$,
$\partial:\wedge^q\fg\rightarrow\wedge^{q-1}\fg$, defined by
\begin{equation}
\label{codea}
       \partial(X_1\wedge\dots\wedge
X_q)=\sum^q_{\underset{l<k}{l=1}}(-1)^{l+k+1}
   [X_l,X_k]\wedge X_1\wedge\dots\wedge{\hat X_l}\wedge\dots
\wedge{\hat X_k}\wedge\dots\wedge X_q \quad ,
\end{equation}
where $\partial(X_1\,,\,X_2)=[X_1\,,\,X_2]$.

This definition is analogous to that of the exterior derivative $d$,
as given by (\ref{Palais}) with its first term missing when one
considers left-invariant forms (eq. (\ref{3diff})). As the exterior
derivative $d$, $\partial$ is nilpotent, $\partial^2=0$, due to the
JI for the commutator.
\end{definition}

In order to generalize (\ref{codea}), let us note that
$\partial(X_1\wedge X_2)=[X_1,X_2]$, so that (\ref{codea}) can be
interpreted as a formula that gives the action of $\partial$ on a
$q$-vector in terms of that on a bivector. For this reason we may
write $\partial_2$ for $\partial$ above. It is then natural to
introduce an operator $\partial_s$ that on a $s$-vector gives the
multicommutator of order $s$. On an $n$-multivector its action is
given by

\begin{definition} ({\it Coderivation $\partial_s$})
\label{def:cods}{\ }

 The general coderivation $\partial_s$ of degree
$-(s-1)$, $s$ even, is the map $\partial_s:\wedge^q \fg\rightarrow
\wedge^{q-(s-1)} \fg$ defined by
\begin{eqnarray}
& &
\partial_s(X_1\wedge\dots\wedge X_q):=\frac{1}{s!}\frac{1}{(q-s)!}
\epsilon^{i_1\dots i_q}_{1\dots
q}\partial_s(X_{i_1}\wedge\dots\wedge X_{i_s}) \wedge
X_{i_{s+1}}\wedge\dots\wedge X_{i_q}\quad ,
\nonumber\\
& &
\partial_s \wedge^q \fg=0\quad \hbox{for}\ s>q\quad ,
\nonumber\\
& &
\partial_s(X_{1}\wedge\dots\wedge X_{s})=[X_{1},\dots ,X_{s}]
\quad . \label{codeb}
\end{eqnarray}
\end{definition}
\noindent Note that, using \eqref{eps-relat1}, this expression
reproduces \eqref{codea} for $s$=2.
\begin{proposition}
\label{prop:codnil}
The coderivation (\ref{codeb}) is nilpotent,
\emph{i.e.}, $\partial^2_s\equiv 0$.
\end{proposition}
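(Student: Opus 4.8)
The plan is to check $\partial_s^2 = 0$ on an arbitrary decomposable multivector $X_1 \wedge \dots \wedge X_q$, which suffices by linearity; one may assume $q \geq 2s-1$, since otherwise $\partial_s^2$ lowers the exterior degree below zero and vanishes trivially by the second line of \eqref{codeb}. First I would apply the definition \eqref{codeb} twice. The first application produces a sum of terms $[X_{i_1},\dots,X_{i_s}] \wedge X_{i_{s+1}} \wedge \dots \wedge X_{i_q}$, in which the inner multibracket occupies the leading slot of a $(q-s+1)$-vector. When $\partial_s$ acts again, its new bracket selects $s$ of these $q-s+1$ slots, and the decisive dichotomy is whether or not the slot carrying $[X_{i_1},\dots,X_{i_s}]$ lies among the selected ones. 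I would accordingly split $\partial_s^2(X_1 \wedge \dots \wedge X_q)$ into \emph{disjoint} terms (inner bracket not selected) and \emph{nested} terms (inner bracket selected), and show each family vanishes separately. This is the direct generalization of the classical argument that $\partial_2^2 = 0$ follows from the ordinary Jacobi identity.

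In the disjoint terms the inner bracket survives as a wedge factor, so they read $[X_{a_1},\dots,X_{a_s}] \wedge [X_{b_1},\dots,X_{b_s}] \wedge (\text{remaining } X\text{'s})$ with the two $s$-index blocks disjoint. Since each multibracket is a single element of $\fg = \wedge^1\fg$, i.e.\ a degree-one factor, interchanging the two blocks costs a sign $-1$ from the antisymmetry of $\wedge$, whereas the combinatorial coefficient generated by the two Levi-Civita symbols is symmetric under this interchange; hence the disjoint contributions cancel pairwise. (This cancellation uses only the antisymmetry of the wedge and does not require $s$ even.) In the nested terms the inner bracket becomes one of the $s$ arguments of the outer one, producing $[[X_{i_1},\dots,X_{i_s}],X_{j_2},\dots,X_{j_s}] \wedge (\text{rest})$, and the double antisymmetrisation inherited from the two $\epsilon$-symbols forces a full skewsymmetrisation over the $2s-1$ indices entering the nested double bracket. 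By Lemma \ref{prop1} this is exactly the left-hand side of the GJI \eqref{GJI}, which vanishes precisely because $s$ is even, so the nested family vanishes as well.

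The main obstacle is the sign bookkeeping in the two families. For the disjoint block I would resolve the product of the two $\epsilon$-symbols, using \eqref{eps-1} to expose the two bracket blocks symmetrically and confirm that the resulting coefficient is even under their exchange. For the nested collection the delicate point is verifying that the signs accumulated while dragging the inner block into the outer bracket reproduce exactly the $(-1)^{\pi(\sigma)}$ weights of the $S_{2s-1}$ sum in \eqref{GJI}; here the contraction identities \eqref{eps-relat1} and \eqref{eps-1} for the antisymmetric Kronecker symbol do the combinatorial work. This nested step is where evenness of $s$ is essential, consistent with the restriction to even $s$ in Definition \ref{def:cods}: for odd $s$ the right-hand side of the GJI is a nonzero $(2s-1)$-bracket rather than zero, and $\partial_s$ would fail to be a coderivation.
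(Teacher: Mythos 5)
Your proposal is correct and follows essentially the same route as the paper: the proof of Prop.~\ref{prop:codnil} (eq.~\eqref{codec}) performs exactly your split of $\partial_s^2(X_1\wedge\dots\wedge X_q)$ into nested terms, killed by the GJI \eqref{GJI} for even $s$, and disjoint terms, killed by the antisymmetry of the wedge of the two $s$-brackets against the symmetry of the contracted $\epsilon$-symbol under the block interchange $(i_1,\dots,i_s)\leftrightarrow(j_{s+1},\dots,j_{2s})$. One correction, however: your parenthetical claim that the disjoint cancellation ``does not require $s$ even'' is false. Interchanging two adjacent blocks of $s$ indices inside the overall antisymmetrization costs a sign $(-1)^{s^2}=(-1)^s$, so the combinatorial coefficient is symmetric under the block exchange precisely when $s$ is even; for odd $s$ the coefficient would be antisymmetric, matching the $-1$ from the wedge, and the disjoint family would survive rather than cancel. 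Thus evenness of $s$ enters both families, not only the nested one.
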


\begin{proof}
Let $q$ and $s$ be such that $q-(s-1)\geq s$ (otherwise the
statement is trivial). Then,
\begin{eqnarray}
     & & \partial_s\partial_s(X_1\wedge\dots\wedge X_q) =
\nonumber\\
& &\frac{1}{s!}\frac{1}{(q-s)!}\frac{1}{s!} \frac{1}{(q-2s+1)!}
    \epsilon^{i_1\dots i_q}_{1\dots q}
   \epsilon^{j_{s+1}\dots j_q}_{i_{s+1}\dots i_q}
\big\{s \left[\left[X_{i_1},\dots,X_{i_s}\right],
X_{j_{s+1}},\dots,X_{j_{2s-1}}\right] \wedge X_{j_{2s}}\wedge\dots
X_{j_q}
\nonumber\\
& &
-(q-2s+1)[X_{i_1},\dots,X_{i_s}]\wedge[X_{j_{s+1}},\dots,X_{j_{2s}}]
 \wedge X_{j_{2s+1}}\wedge\dots\wedge X_{j_q}\big\}=0\quad .
\label{codec}
\end{eqnarray}
The sum over $i_{s+1}\dots i_q$ produces an overall antisymmetrization
over the $i,j$ indices. As a result, the  first term above
vanishes because, since $s$ is even, the double bracket
is the GJI. Similarly, the second one is also zero because the wedge product
of the two $s$-brackets is antisymmetric while the resulting
$\epsilon$ symbol is symmetric under the interchange $(i_1,\dots
i_s)\leftrightarrow (j_{s+1}, \dots ,j_{2s})$.
\end{proof}

Let us now see how the nilpotency condition (or equivalently the
GJI) looks like in the simplest cases.

\begin{example} ({\it The coderivation $\partial_2$})
\label{ex:coder}{\ }

Consider $\partial\equiv\partial_2$. Then we have
\begin{equation}
     \partial(X_1\wedge X_2\wedge X_3)=
[X_1,X_2]\wedge X_3 - [X_1,X_3]\wedge X_2 + [X_2,X_3]\wedge X_1
\label{coded}
\end{equation}
and
\begin{equation}
    \partial^2(X_1\wedge X_2\wedge X_3)=
[[X_1,X_2],X_3] - [[X_1,X_3],X_2] + [[X_2,X_3],X_1] =0
\label{codee}
\end{equation}
by the JI.

 When we move to $\partial\equiv\partial_4$, the number
of terms grows very rapidly. The explicit expression for
$\partial^2(X_{i_1}\wedge\dots\wedge X_{i_7})=0$ (which, as we know,
is equivalent to the GJI) is given in full in \cite{AIPB:97} and
contains ${\binom 73}=35$ terms (note that the tenth term in
eq.~(32) there should read
 $[[X_{i_1},X_{i_2},X_{i_6},X_{i_7}],X_{i_3},X_{i_4},X_{i_5}]\,$).
  In general, the GJI which follows from
$\partial^2_{2m-2}(X_1\wedge\dots\wedge X_{4m-5})=0$ $(s=2m-2)$
contains $\binom{4m-5}{2m-2}$ different terms, as for the GJI
\eqref{GJI}.
\end{example}

\subsection{Higher order exterior derivative and generalized MC equations}
\label{sec10}{\ }

We now generalize the Lie algebra MC equations (eqs.~\ref{3canonMC}) to
the case of the GLAs of Th.~\ref{th:HOsimple} and write them in a
BRST-like form. The result is a higher order BRST-type operator
that contains all the information on the $l$ possible GLAs
$\mathcal{G}$ associated with a given simple Lie algebra $\fg$ of
rank $l$.

Let us first note that eq.~\eqref{3canonMC} gives $d^2\theta=
\frac{1}{2}\left[ [\theta,\theta],\theta\right] =0$ so that the
JI reads
\begin{equation}
\label{canon-JI}
\left[ [\theta,\theta],\theta\right] =0 \quad .
\end{equation}
In Sec.~\ref{sec9} we considered higher order coderivations which
also had the property $\partial^2_s=0$ as a result of the GJI. We
may now introduce the corresponding dual higher order exterior
derivatives ${\tilde d}_s$ to provide a generalization of the
MC eq.~\eqref{3MC}. Since $\partial_s$ was defined on multivectors
that are product of left-invariant vector fields, the dual
${\tilde d}_s$ will be given for left-invariant forms.

It is easy to introduce dual bases in $\wedge^q \fg^*$ and $\wedge^q
\fg$. With $\omega^i (X_j) =\delta_j^i$, these are given by
$\omega^{I_1}\wedge\dots\wedge\omega^{I_q}$, $\frac{1}{q!}
X_{I_1}\wedge\dots\wedge X_{I_q}\,$, $I_1< \dots <I_q\,$, since
$(\epsilon_{j_1\dots j_q}^{i_1\dots i_q}
\omega^{j_1}\otimes\dots\otimes\omega^{j_q}) (\frac{1}{q!}
\epsilon_{l_1\dots l_q}^{k_1\dots k_q} X_{k_1}\otimes\dots\otimes
X_{k_q})= \epsilon_{l_1\dots l_q}^{i_1\dots i_q} $ and
$\epsilon_{L_1\dots L_q}^{I_1\dots I_q}$ is 1 if all indices
coincide and 0 otherwise. Nevertheless it is customary to use the
non-minimal `basis' $\omega^{i_1}\wedge\dots\wedge\omega^{i_q}$ to
write $\alpha=\frac{1}{q!}\alpha_{i_1\dots i_q}
\omega^{i_1}\wedge\dots\wedge\omega^{i_q}$ with $\alpha_{i_1\dots
i_q}\equiv \alpha(X_{i_1},\dots,X_{i_q})= \frac{1}{q!}
\alpha(X_{i_1}\wedge\dots\wedge X_{i_q})$ since
$(\omega^{i_1}\wedge\dots\wedge\omega^{i_q}) (X_{j_1},\dots,X_{j_q})
=\epsilon_{j_1\dots j_q}^{i_1\dots i_q}$.

\begin{definition} ({\it Higher order exterior derivative})
{\ }

The action of ${\tilde d}_m:\wedge^q \fg^*\rightarrow
\wedge^{q+(2m-3)}\fg^*$ (recall that $s=2m-2$) is given by (cf.
(\ref{3diffa}))
\begin{equation}
\label{completeb}
\begin{array}{r@{}l}
\displaystyle & ({\tilde d}_m\alpha)(X_{i_1},\dots,
X_{i_{q+2m-3}}):=
\\
&\qquad \displaystyle \frac{1}{(2m-2)!} \frac{1}{(q-1)!}
     \epsilon_{i_1\dots i_{q+2m-3}}^{j_1\dots j_{q+2m-3}}
    \alpha([X_{j_1},\dots ,X_{j_{2m-2}}], X_{j_{2m-1}}, \dots,
      X_{j_{q+2m-3}})
\quad,
\\[0.3cm]
& \displaystyle ({\tilde d}_m\alpha)_{i_1\dots i_{q+2m-3}}
=\frac{1}{(2m-2)!} \frac{1}{(q-1)!} \epsilon_{i_1\dots
i_{q+2m-3}}^{j_1\dots j_{q+2m-3}} {\Omega_{j_1\dots
j_{2m-2}}}^\rho_{\cdot} \alpha_{\rho j_{2m-1}\dots j_{q+2m-3}}
\quad,
\end{array}
\end{equation}
where the first (second) factorial in the denominator is the
number of arguments inside (outside) the multibracket.
\end{definition}
For $m=2$, $\widetilde d_2$ gives  eq. (\ref{3diff}) with $p=q$,
\begin{equation}
\label{3diffa}
 \widetilde d_2\alpha(X_{i_1},\dots,X_{i_{q+1}}) = \frac{1}{(2\cdot 2
-2)!}\frac{1}{(q-1)!} \epsilon_{i_1\dots i_{q+1}}^{j_1\dots
j_{q+1}} \alpha([X_{j_1},X_{j_2}],X_{j_3},\dots,X_{i_{q+1}})\quad
\end{equation}
with the identification $d\equiv -\widetilde d_2$.

\begin{proposition} ({\it Higher order coderivative})
\label{prop10.1}

$\widetilde d_m:\wedge^p\fg^*\rightarrow
\wedge^{p+(2m-3)}\fg^*$ is dual to the coderivation
$\partial_{2m-2}:\wedge^p\fg \to \wedge^{p-(2m-3)}\fg$ {\i.e.}, on
a generic $p$-form,
\begin{equation}
\label{partduality} {\tilde d}_m \, \alpha \;\propto \; \alpha \,
\partial_{2m-2} \quad .
\end{equation}
\end{proposition}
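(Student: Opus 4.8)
The plan is to establish the duality by pairing both functionals against a basis multivector and matching the results up to a nonzero constant. Write $s=2m-2$, so that $\tilde d_m$ raises the form degree by $2m-3=s-1$ while $\partial_s$ lowers the multivector degree by the same $s-1$; the degrees are therefore matched and the statement makes sense. Fix a $q$-form $\alpha\in\wedge^q\fg^*$ and set $N=q+2m-3=q+s-1$. Both $\tilde d_m\alpha$ and $\alpha\,\partial_s$ then live in $\wedge^N\fg^*$, so it suffices to evaluate them on $X_{i_1}\wedge\dots\wedge X_{i_N}$ (equivalently on the ordered tuple $(X_{i_1},\dots,X_{i_N})$) and compare.

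First I would expand the right-hand side. By the definition of $\partial_s$ (Def.~\ref{def:cods}, eq.~\eqref{codeb}) one has $\partial_s(X_{i_1}\wedge\dots\wedge X_{i_N})=\tfrac{1}{s!(N-s)!}\,\epsilon^{k_1\dots k_N}_{i_1\dots i_N}\,[X_{k_1},\dots,X_{k_s}]\wedge X_{k_{s+1}}\wedge\dots\wedge X_{k_N}$, which is a $q$-vector because the bracket contributes a single element of $\fg$ and $N-s=q-1$. Treating the bracket as one slot and using the normalization $\alpha(Y_1\wedge\dots\wedge Y_q)=q!\,\alpha(Y_1,\dots,Y_q)$ — a consequence of the total antisymmetry of $\alpha$ together with $X_{j_1}\wedge\dots\wedge X_{j_q}=\epsilon^{i_1\dots i_q}_{j_1\dots j_q}X_{i_1}\otimes\dots\otimes X_{i_q}$ — gives $\alpha\big(\partial_s(X_{i_1}\wedge\dots\wedge X_{i_N})\big)=\tfrac{q!}{s!(q-1)!}\,\epsilon^{k_1\dots k_N}_{i_1\dots i_N}\,\alpha([X_{k_1},\dots,X_{k_s}],X_{k_{s+1}},\dots,X_{k_N})$.

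On the other hand, the left-hand side is read off directly from the definition of $\tilde d_m$, eq.~\eqref{completeb}: $(\tilde d_m\alpha)(X_{i_1},\dots,X_{i_N})=\tfrac{1}{s!(q-1)!}\,\epsilon^{j_1\dots j_N}_{i_1\dots i_N}\,\alpha([X_{j_1},\dots,X_{j_s}],X_{j_{s+1}},\dots,X_{j_N})$. The two expressions are built from exactly the same contraction — the antisymmetric Kronecker symbol acting on $\alpha$ evaluated on the $s$-multibracket of the first $s$ entries together with the remaining $q-1$ entries — so they agree after absorbing the mismatch between evaluating on a wedge and on an ordered tuple. Collecting the factorial weights yields the overall proportionality constant $N!/q!=(q+2m-3)!/q!\neq0$, which is exactly the content of $\tilde d_m\alpha\propto\alpha\,\partial_{2m-2}$.

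The only genuine difficulty is combinatorial bookkeeping: one must reconcile the weights $1/s!(N-s)!$ carried by $\partial_s$ with the weights $1/s!(q-1)!$ carried by $\tilde d_m$, and handle correctly the step $\alpha(\text{wedge})=q!\,\alpha(\text{tuple})$ when one slot of the wedge is itself the $s$-bracket — legitimate precisely because the bracket lands in $\fg$. No antisymmetrization is lost in this identification, since $\alpha$ and the bracket are both totally skew and the $\epsilon$ symbol already antisymmetrizes over all $N$ indices, so the would-be multiplicities collapse into the single constant above. As a check, for $m=2$ ($s=2$) the identity reduces to the duality between $d\equiv-\tilde d_2$ (eq.~\eqref{3diffa}) and the coderivation $\partial_2$ (eq.~\eqref{codea}), i.e.\ the classical fact that the exterior derivative on left-invariant forms is dual to the coderivation of multivectors; and, combined with $\partial_{2m-2}^2=0$ (Prop.~\ref{prop:codnil}), the duality immediately gives $\tilde d_m^2=0$.
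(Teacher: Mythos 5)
Your proof is correct and follows essentially the same route as the paper: evaluate $\alpha\circ\partial_{2m-2}$ on a wedge of $p+2m-3$ generators via eq.~\eqref{codeb} and recognize the result, up to the wedge-versus-tuple normalization, as the defining expression \eqref{completeb} for $\tilde d_m\alpha$. Your proportionality constant $(q+2m-3)!/q!$ agrees with the factor $(p+2m-3)!/p!$ that the paper records in the footnote to its proof, so only the level of explicitness in the factorial bookkeeping differs.
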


\begin{proof}
 If $\alpha$ is a $p$-form, $\tilde{d}_m\alpha$ is a ($p+2m-3$)-form
 and, since $\partial_{2m-2}:\wedge^{p+2m-3}\fg \to \wedge^p\fg$,
 eq. (\ref{codeb}) tells us that
\begin{eqnarray}
& & \alpha\left( \partial_{2m-2}(X_{i_1}\wedge\dots\wedge
X_{i_{p+2m-3}})\right) =\frac{1}{(2m-2)!} \frac{1}{(p+2m-3-2m+2)!}
\times
\nonumber\\
& & \quad   \times \epsilon_{i_1\dots i_{p+2m-3}}^{j_1\dots
j_{p+2m-3}}
   \alpha([X_{j_1},\dots ,X_{j_{2m-2}}]\wedge
X_{j_{2m-1}}\wedge \dots \wedge
      X_{j_{p+2m-3}})\quad ,
\label{completec}
\end{eqnarray}
which is proportional\footnote{ One finds ${\tilde d}_m\alpha=
\frac{(p+2m-3)!}{p!}\alpha\partial_{2m-2}$, where $p$ is the order
of the form $\alpha$. The factor appears as a consequence of using
the same definition (antisymmetrization with no weight factor) for
the $\wedge$ product of forms and vectors.} to $({\tilde
d}_m\alpha)(X_{i_1}\wedge\dots\wedge X_{i_{p+2m-3}})$.
\end{proof}

\begin{proposition}
\label{prop:dmLeib} The operator ${\tilde d}_m$ satisfies
Leibniz's rule,
\begin{equation}
\label{delta-Lieb} {\tilde d}_m(\alpha\wedge\beta)={\tilde
d}_m\alpha \wedge\beta+(-1)^p\alpha\wedge{\tilde d}_m\beta \quad .
\end{equation}
\end{proposition}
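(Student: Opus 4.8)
The plan is to recognise $\tilde d_m$ as the composition of an interior product with left wedge–multiplication by a fixed \emph{even}-degree form, and then to combine the (anti-)Leibniz rules of those two elementary operations. For each index $\rho$ let $\iota_{X_\rho}$ denote the interior product, i.e.\ contraction in the first argument, so that on a $q$-form $\alpha$ one has $(\iota_{X_\rho}\alpha)_{j_1\dots j_{q-1}}=\alpha_{\rho j_1\dots j_{q-1}}$, and introduce the constant $(2m-2)$-form
\[
\phi^\rho=\frac{1}{(2m-2)!}\,{\Omega_{i_1\dots i_{2m-2}}}^\rho\,\omega^{i_1}\wedge\dots\wedge\omega^{i_{2m-2}}\quad.
\]
The first step is to check that the coordinate definition of $\tilde d_m$ given in the Definition above is exactly
\[
\tilde d_m\alpha=\phi^\rho\wedge\iota_{X_\rho}\alpha\qquad(\text{summed over }\rho)\quad,
\]
because, in the paper's weight-free conventions, the wedge of the $(2m-2)$-form $\phi^\rho$ (whose coordinates are the totally skew $\Omega_{\cdots}{}^\rho$) with the $(q-1)$-form $\iota_{X_\rho}\alpha$ produces precisely the prefactor $\tfrac{1}{(2m-2)!}\tfrac{1}{(q-1)!}\epsilon$ together with the contraction ${\Omega_{j_1\dots j_{2m-2}}}^\rho\,\alpha_{\rho j_{2m-1}\dots}$.

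Once the factorisation $\tilde d_m=\phi^\rho\wedge\iota_{X_\rho}$ is established, the Leibniz rule is immediate from two standard facts. First, the interior product is an odd anti-derivation,
\[
\iota_{X_\rho}(\alpha\wedge\beta)=(\iota_{X_\rho}\alpha)\wedge\beta+(-1)^p\,\alpha\wedge(\iota_{X_\rho}\beta)\quad,
\]
with $p=\deg\alpha$. Wedging on the left by $\phi^\rho$ and summing over $\rho$ splits $\tilde d_m(\alpha\wedge\beta)$ into two pieces; the first is at once $(\tilde d_m\alpha)\wedge\beta$. Second, $\phi^\rho$ has even degree $2m-2$, so it graded-commutes trivially past $\alpha$, $\phi^\rho\wedge\alpha=(-1)^{(2m-2)p}\alpha\wedge\phi^\rho=\alpha\wedge\phi^\rho$, which turns the second piece into $(-1)^p\,\alpha\wedge\phi^\rho\wedge(\iota_{X_\rho}\beta)=(-1)^p\,\alpha\wedge(\tilde d_m\beta)$. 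This is exactly eq.~\eqref{delta-Lieb}.

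I expect the only genuine obstacle to be the bookkeeping in the first step: pinning down that the coordinate formula for $\tilde d_m$ coincides with $\phi^\rho\wedge\iota_{X_\rho}$ under the conventions used here (no $1/n!$ weights in either the multibracket or the $\wedge$ product). After that normalisation is fixed, the derivation property is a one-line consequence of the \emph{evenness} of $2m-2$ together with the anti-derivation property of $\iota$; note that this evenness, which is precisely what makes $\phi^\rho$ commute past forms without sign, is the same parity feature that made $\partial_s^2=0$ work only for $s$ even in Prop.~\ref{prop:codnil}. As a self-contained alternative that avoids the interior product altogether, one may instead expand $\tilde d_m(\alpha\wedge\beta)$ directly from the coordinate definition and split the total antisymmetrisation over $j_1,\dots,j_{p+q+2m-3}$ according to whether the contracted index $\rho$ feeds the $\alpha$-block or the $\beta$-block of $(\alpha\wedge\beta)_{\rho\,j_{2m-1}\dots}$; the $\alpha$-block terms reassemble into $(\tilde d_m\alpha)\wedge\beta$, the $\beta$-block terms into $\alpha\wedge(\tilde d_m\beta)$, the sign $(-1)^p$ arises from moving $\rho$ across the $p$ indices of $\alpha$, and the factorials match after counting the index splittings. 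This route proves the same statement but carries the combinatorial burden that the operator factorisation renders automatic.
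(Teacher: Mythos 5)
Your proof is correct, but it takes a genuinely different route from the paper's. The paper proves eq.~\eqref{delta-Lieb} by a direct coordinate computation: starting from \eqref{completeb} it expands ${\tilde d}_m(\alpha\wedge\beta)$, resolves the inner Levi-Civita symbol according to whether the contracted index $\rho$ lands in the $\alpha$-block ($p$ equal terms) or in the $\beta$-block ($q$ equal terms carrying the sign $(-1)^p$), and then reassembles the factorials so the two pieces are recognised as $({\tilde d}_m\alpha)\wedge\beta$ and $(-1)^p\,\alpha\wedge({\tilde d}_m\beta)$ --- i.e., exactly the ``self-contained alternative'' you sketch in your closing paragraph. Your primary argument instead factorises the operator as ${\tilde d}_m=\phi^\rho\wedge\iota_{X_\rho}$, and this does check out against the paper's normalisations: with $(\phi^\rho)_{i_1\dots i_{2m-2}}={\Omega_{i_1\dots i_{2m-2}}}^\rho$ and $(\iota_{X_\rho}\alpha)_{j_1\dots j_{q-1}}=\alpha_{\rho j_1\dots j_{q-1}}$, the weight-one wedge gives $(\phi^\rho\wedge\iota_{X_\rho}\alpha)_{i_1\dots i_{q+2m-3}}=\frac{1}{(2m-2)!}\frac{1}{(q-1)!}\,\epsilon^{j_1\dots j_{q+2m-3}}_{i_1\dots i_{q+2m-3}}\,{\Omega_{j_1\dots j_{2m-2}}}^\rho\,\alpha_{\rho j_{2m-1}\dots j_{q+2m-3}}$, which is literally \eqref{completeb}; and the anti-derivation property of $\iota_{X_\rho}$ survives in these conventions, as a row expansion of the $\epsilon$ in $(\alpha\wedge\beta)_{\rho\,\dots}$ confirms (it is in fact the same splitting the paper performs by hand). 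What your route buys is structural transparency: the only odd operator is $\iota_{X_\rho}$, and the parity bookkeeping is isolated in the single statement that $\phi^\rho$ has even degree, so $\phi^\rho\wedge\alpha=(-1)^{(2m-2)p}\alpha\wedge\phi^\rho=\alpha\wedge\phi^\rho$; in the paper's computation that same evenness is used silently when the index block $j_1\dots j_{2m-2}$ is slid past the $p$ indices of $\alpha$ under the outer antisymmetrisation in order to recognise $({\tilde d}_m\beta)_{j_1\dots j_{2m-2}j_{p+2m-1}\dots j_{p+q+2m-3}}$. Your remark that this is the same parity feature responsible for $\partial_s^2=0$ in Prop.~\ref{prop:codnil} is apt. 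What the paper's route buys, conversely, is that it stays entirely within the coordinate formalism already set up for \eqref{completeb} and introduces no auxiliary operators.
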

\begin{proof}
If $\alpha$ and $\beta$ are $p$ and $q$ forms, respectively, we
get using (\ref{completeb})
\begin{equation}
\begin{array}{r@{}l}
{\tilde d}_m (\alpha\wedge\beta) & _{i_1\dots i_{p+q+2m-3}} =
\displaystyle \frac{1}{(2m-2)!}\frac{1}{(p+q-1)}
\epsilon_{i_1\dots i_{p+q+2m-3}}^{j_1\dots j_{p+q+2m-3}}
{\Omega_{j_1\dots j_{2m-2}}}^\rho_\cdot
\\[0.3cm]
& \displaystyle \quad\qquad \cdot \Big( \frac{1}{p!q!}
\epsilon^{k_1\dots\dots\dots k_{p+q}}_{\rho j_{2m-1}\dots
j_{p+q+2m-3}} \alpha_{k_1\dots k_p}\beta_{k_{p+1}\dots k_{p+q}}
\Big)
\\[0.3cm]
& \displaystyle = \frac{1}{(2m-2)!} \frac{1}{p!q!}
\epsilon_{i_1\dots i_{p+q+2m-3}}^{j_1\dots j_{p+q+2m-3}}
{\Omega_{j_1\dots j_{2m-2}}}^\rho_\cdot \Big( p \alpha_{\rho
j_{2m-1} \dots j_{p+2m-3}} \beta_{j_{p+2m-2} \dots j_{p+q+2m-3}}
\\[0.3cm]
& \displaystyle \quad\qquad + (-1)^p q \alpha_{j_{2m-1} \dots
j_{p+2m-2}} \beta_{\rho j_{p+2m-1}\dots j_{p+q+2m-3}} \Big)
\\[0.3cm]
& \displaystyle = \epsilon_{i_1\dots i_{p+q+2m-3}}^{j_1\dots
j_{p+q+2m-3}} \Big( \frac{1}{q! (p+2m-3)!} ({\tilde
d}_m\alpha)_{j_1\dots j_{p+2m-3}} \beta_{j_{p+2m-2}\dots
j_{p+q+2m-3}}
\\[0.3cm]
& \displaystyle \quad\qquad + (-1)^p \frac{1}{p! (q+2m-3)!}
\alpha_{j_{2m-1} \dots j_{p+2m-2}} ({\tilde d}_m\beta)_{j_1\dots
j_{2m-2} j_{p+2m-1}\dots j_{p+q+2m-3}} \Big)
\\[0.3cm]
& \displaystyle = \Big(({\tilde d}_m\alpha)\wedge\beta + (-1)^p
\alpha \wedge ({\tilde d}_m\beta)\Big)_{i_1\dots
i_{p+q+2m-3}}\quad.
\end{array}
\end{equation}
Thus, ${\tilde d}_m$ is odd and satisifies Leibniz's rule.
\end{proof}

The coordinates of ${\tilde d}_m\omega^\sigma$, where
$\omega^\sigma$ is a MC form, are given by
\begin{eqnarray}
({\tilde d}_m\omega^\sigma)(X_{i_1}, \dots, X_{i_{2m-2}})&=&
\frac{1}{(2m-2)!}\epsilon_{i_1\dots i_{2m-2}}^{j_1\dots j_{2m-2}}
\omega^\sigma([X_{j_1},\dots,X_{j_{2m-2}}])
\nonumber\\
&=& \omega^\sigma([X_{i_1},\dots,X_{i_{2m-2}}])
=\omega^\sigma({\Omega_{i_1\dots i_{2m-2}}}^\rho_\cdot X_\rho)=
{\Omega_{i_1\dots i_{2m-2}}}^\sigma_\cdot \label{completed}
\end{eqnarray}
from which we conclude that
\begin{equation}
      {\tilde d}_m\omega^\sigma=\frac{1}{(2m-2)!}
      {\Omega_{i_1\dots i_{2m-2}}}^\sigma_\cdot
   \omega^{i_1}\wedge\dots\wedge\omega^{i_{2m-2}}\quad .  \label{completee}
\end{equation}
For $m=2,\ \tilde d_2=-d$, eqs. (\ref{completee}) reproduce the MC
eqs. (\ref{3canonMC}). In the compact notation that uses the
canonical one-form $\theta$ on $G$, this leads to

\begin{proposition} ({\it Generalized Maurer-Cartan equations})
{\  }

The action of ${\tilde d}_m$ on the canonical form $\theta$ is
given by
\begin{equation}
\label{completef}
{\tilde d}_m\theta=\frac{1}{(2m-2)!} \left[
\theta,\mathop{\cdots}\limits^{2m-2},\theta \right] \quad,
\end{equation}
where the multibracket of forms is defined by $\left[
\theta,\mathop{\cdots}\limits^{2m-2},\theta \right]=
\omega^{i_1}\wedge\dots\wedge\omega^{i_{2m-2}}[X_{i_1},\dots,X_{i_{2m-2}}]$.
\end{proposition}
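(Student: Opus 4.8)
The plan is to derive the identity by direct substitution, since the substantive computation has already been carried out in eq.~\eqref{completee}. First I would recall that the canonical $\fg$-valued one-form is $\theta=\omega^\sigma\circ X_\sigma$, and fix the convention that the higher order exterior derivative $\tilde d_m$ acts only on the scalar one-form components $\omega^\sigma$ while treating the basis elements $X_\sigma$ as constant coefficients; thus $\tilde d_m\theta=(\tilde d_m\omega^\sigma)\circ X_\sigma$.

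Next I would insert the coordinate expression for $\tilde d_m\omega^\sigma$ already obtained in eq.~\eqref{completee}, giving
\begin{equation*}
\tilde d_m\theta=\frac{1}{(2m-2)!}\,{\Omega_{i_1\dots i_{2m-2}}}^\sigma_\cdot\,
\omega^{i_1}\wedge\dots\wedge\omega^{i_{2m-2}}\circ X_\sigma \quad .
\end{equation*}
I would then use the defining relation \eqref{multigcocy} of the GLA structure constants, ${\Omega_{i_1\dots i_{2m-2}}}^\sigma_\cdot X_\sigma=[X_{i_1},\dots,X_{i_{2m-2}}]$, to contract the free index $\sigma$ and replace the product of structure constant and generator by the multibracket. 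This yields
\begin{equation*}
\tilde d_m\theta=\frac{1}{(2m-2)!}\,
\omega^{i_1}\wedge\dots\wedge\omega^{i_{2m-2}}\,[X_{i_1},\dots,X_{i_{2m-2}}] \quad ,
\end{equation*}
which is precisely $\frac{1}{(2m-2)!}\left[\theta,\mathop{\cdots}\limits^{2m-2},\theta\right]$ by the definition of the multibracket of forms stated in the proposition, closing the argument.

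Finally, I would record the consistency check that for $m=2$, using $\tilde d_2=-d$, the identity reduces to the ordinary Maurer-Cartan equation \eqref{3canonMC}, $d\theta=-\frac12[\theta,\theta]$, as it must. The only point requiring mild care --- and the closest thing to an obstacle --- is the correct extension of $\tilde d_m$, defined in \eqref{completeb} on scalar-valued forms, to the $\fg$-valued form $\theta$: once the convention above is fixed, the remainder is purely algebraic index bookkeeping requiring no identity beyond \eqref{completee} and the defining property \eqref{multigcocy} of the higher order algebra.
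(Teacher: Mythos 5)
Your proposal is correct and follows essentially the same route as the paper: the paper also obtains the result by substituting the coordinate expression \eqref{completee} for ${\tilde d}_m\omega^\sigma$ into $\theta=\omega^\sigma\circ X_\sigma$ and contracting with the generators via the defining relation \eqref{multigcocy}, presenting the proposition as an immediate consequence of \eqref{completee}. Your explicit remark that ${\tilde d}_m$ acts on the scalar components while the $X_\sigma$ are treated as constants, and the $m=2$ consistency check against the ordinary MC equation \eqref{3canonMC}, match the paper's (largely implicit) treatment.
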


Using Leibniz's rule for ${\tilde d}_m$ we arrive at
${\tilde d}_m^2\theta= -\frac{1}{(2m-2)!}\frac{1}{(2m-3)!}
[\theta,\mathop{\cdots}\limits^{2m-3} ,\theta,
[\theta,\mathop{\cdots}\limits^{2m-2} ,\theta ]\,] =0$,
which expressses the GJI as
\begin{equation}
\label{completeg}
[\theta,\mathop{\cdots}\limits^{2m-3} ,\theta,
[\theta,\mathop{\cdots}\limits^{2m-2} ,\theta ]\,] =0 \quad ,
\end{equation}
which recovers the JI equation \eqref{canon-JI} for $m=2$.\\
\medskip

Each generalized Maurer-Cartan equation (\ref{completeg}) can be
expressed in terms of ghost (Grasmmann odd) variables
$c^i,\; c^i c^j=-c^j c^i\,,\, c^i{}^2=0$, by means of a
`generalized BRST operator',
\begin{equation}
    {s}_{2m-2}=-\frac{1}{(2m-2)!}c^{i_1}\dots c^{i_{2m-2}}
     {\Omega_{i_1\dots i_{2m-2}}}^\sigma_\cdot \frac{\partial}{\partial
c^\sigma}
               \quad .                                    \label{completeh}
\end{equation}
By adding together all the $l$ generalized BRST operators, the
complete BRST operator is obtained. Then we have the
following \cite{AzBu:96}

\begin{theorem} ({\it Complete BRST operator}){\  }\\
\label{th:BRSTfull} Let $\fg$ be a simple Lie algebra. Then, there
exists a nilpotent associated operator, the complete BRST operator
associated with $\fg$, given by the odd vector field
\begin{eqnarray}
      s&=&-\frac{1}{2}c^{j_1}c^{j_2}
{\Omega_{j_1j_2}}^\sigma_\cdot
    \frac{\partial}{\partial c^\sigma}-\dots-
      \frac{1}{(2m_i-2)!}c^{j_1}\dots c^{j_{2m_i-2}}
     {\Omega_{j_1\dots j_{2m_i-2}}}^\sigma_\cdot \frac{\partial}{\partial
c^\sigma}
    -\dots\nonumber\\
    &-&\frac{1}{(2m_l-2)!}c^{j_1}\dots c^{j_{2m_l-2}}
     {\Omega_{j_1\dots j_{2m_l-2}}}^\sigma_\cdot \frac{\partial}{\partial
c^\sigma}
    \equiv s_2+\dots +s_{2m_i-2}+\dots +s_{2m_l-2}\quad ,
                                                          \label{completei}
\end{eqnarray}
where $i=1,\dots,l$, ${\Omega_{j_1j_2}}^\sigma_\cdot\equiv
{C_{j_1j_2}}^\sigma_\cdot$ and ${\Omega_{j_1\dots
j_{2m_i-2}}}^\sigma_\cdot$ are the corresponding $l$ higher order
cocycles, which encodes all the multialgebras $\mathcal{G}$
associated with a simple Lie algebra $\fg$.
\end{theorem}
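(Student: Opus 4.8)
The plan is to establish nilpotency $s^2=0$ directly from the ghost realisation \eqref{completeh}, reducing the claim to the generalized and mixed-order Jacobi identities already at hand. Each summand $s_{2m_i-2}$ carries $2m_i-2$ ghosts $c$ and one ghost derivative $\partial/\partial c^\sigma$, so it shifts the ghost number by the odd integer $2m_i-3$ and is a Grassmann-odd operator; hence $s=\sum_{i=1}^l s_{2m_i-2}$ is odd and
\[
s^2=\sum_{i,j=1}^l s_{2m_i-2}\,s_{2m_j-2}=\sum_{i=1}^l s_{2m_i-2}^2+\sum_{i<j}\big(s_{2m_i-2}s_{2m_j-2}+s_{2m_j-2}s_{2m_i-2}\big).
\]
The goal is to show that each of these pieces reduces to a contraction of two cocycles which the (mixed) GJI forces to vanish.

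First I would compute a generic product of two odd ghost vector fields. Writing $a=2m_i-2$ and $b=2m_j-2$ (both even), in $s_a s_b$ the derivative $\partial_\sigma$ of $s_a$ acts, as a graded derivation, on the ghost monomial $c^{k_1}\cdots c^{k_b}$ of $s_b$ and also commutes past it onto the factor $\partial_\tau$. The latter produces a ``double-derivative'' term proportional to $\Omega^{(i)}{}^\sigma\,\Omega^{(j)}{}^\tau\,\partial_\sigma\partial_\tau$ times a degree-$(a+b)$ ghost monomial. These double-derivative terms drop out: for $i=j$ by themselves, since swapping the two equal ghost blocks costs $(-1)^{a^2}=+1$ for even $a$, making the coefficient symmetric in $\sigma\tau$ against the antisymmetric $\partial_\sigma\partial_\tau=-\partial_\tau\partial_\sigma$; for $i\ne j$ pairwise between $s_a s_b$ and $s_b s_a$ by the same sign identity. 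What survives are the terms in which $\partial_\sigma$ removes one ghost of the other operator, contracting an upper index of one $\Omega$ with a lower index of the other (and symmetrically). The $a+b-1$ remaining ghosts antisymmetrise all their lower indices, so that each surviving contribution is proportional to
\[
\epsilon^{i_1\dots i_{a+b-1}}\,\Omega^{(i)}_{i_1\dots i_a}{}^l\,\Omega^{(j)}_{i_{a+1}\dots i_{a+b-1}\,l}{}^s\,\frac{\partial}{\partial c^s}.
\]

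Finally I would invoke the mixed-order identity to kill this coefficient. Since $\fg$ is simple it has a faithful matrix representation, and by Th.~\ref{th:HOsimple} every cocycle $\Omega^{(i)}$ is realised (up to normalisation) as the structure constant of the order $2m_i-2$ multibracket \eqref{multic} built from these \emph{same} associative operators. Hence Prop.~\ref{propMGJI} applies to every pair, and as $a=2m_i-2$ and $b=2m_j-2$ are both even, eq.~\eqref{multij} gives the vanishing of the displayed $\epsilon$-contraction for $i\ne j$; the diagonal case $a=b$ is precisely the GJI \eqref{GJIcoord} (equivalently the generalized Maurer--Cartan equation \eqref{completeg}). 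Therefore every term in $s^2$ vanishes and $s$ is nilpotent. The main obstacle is the Grassmann sign bookkeeping in the middle step — checking that the double-derivative terms really cancel and that the single-contraction remainder assembles into the $\epsilon$-antisymmetrised MGJI contraction with the correct orders; the one conceptual point to stress is that the different-order cocycles obey a \emph{common} mixed identity only because they are all multibrackets of one and the same set of associative operators, which is what Prop.~\ref{propMGJI} requires.
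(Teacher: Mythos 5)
Your proof is correct and follows essentially the same route as the paper's: expand $s^2$ into the anticommutators $\{s_{2m_i-2},s_{2m_j-2}\}$, cancel the double-derivative piece because its (even-block) ghost coefficient is symmetric in $\rho,\sigma$ against the antisymmetric $\partial/\partial c^\rho\,\partial/\partial c^\sigma$, and kill the single-contraction remainder by the coordinate MGJI, eq.~\eqref{multij} (with the diagonal case being the GJI \eqref{GJIcoord}). Your closing remark—that \eqref{multij} applies to mixed pairs of cocycles only because, by Th.~\ref{th:HOsimple}, all of them arise as multibrackets of one common set of associative matrices—is a point the paper leaves implicit when it cites \eqref{multij}, and making it explicit is a welcome sharpening rather than a deviation.
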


\begin{proof}
We have to show that $\{ s_{2m_i-2},s_{2m_j-2}\}=0$ $\forall\,
i,j$. To prove it, let us write the anti-commutator explicitly:
\begin{eqnarray}
& & \{ s_{2m_i-2},s_{2m_j-2}\}=
\frac{1}{(2m_i-2)!}\frac{1}{(2m_j-2)!} \times
\nonumber\\
& &\quad \times \{ (2m_j-2)c^{l_1}\dots c^{l_{2m_i-2}}
{\Omega_{l_1\dots l_{2m_i-2}}}^\rho_\cdot c^{r_2}\dots
c^{r_{2m_j-2}} {\Omega_{\rho r_2\dots
r_{2m_j-2}}}^\sigma_\cdot\frac{\partial}{\partial c^\sigma}
+i\leftrightarrow j
\nonumber\\
& &\quad +(c^{l_1}\dots c^{l_{2m_i-2}}c^{r_1}\dots c^{r_{2m_j-2}}
{\Omega_{l_1\dots l_{2m_i-2}}}^\rho_\cdot {\Omega_{ r_1\dots
r_{2m_j-2}}}^\sigma_\cdot+i\leftrightarrow j)
\frac{\partial}{\partial c^\rho}\frac{\partial}{\partial
c^\sigma}\}
\nonumber\\
& &\quad =\frac{1}{(2m_i-2)!}\frac{1}{(2m_j-3)!}c^{l_1}\dots
c^{l_{2m_i-2}} c^{r_2}\dots c^{r_{2m_j-2}}{\Omega_{l_1\dots
l_{2m_i-2}}}^\rho_\cdot {\Omega_{\rho r_2\dots
r_{2m_j-2}}}^\sigma_\cdot\frac{\partial}{\partial c^\sigma}
\nonumber\\
& &\quad\quad +i\leftrightarrow j\ , \label{completej}
\end{eqnarray}
where we have used the fact that $\frac{\partial}{\partial
c^\rho}\frac{\partial}{\partial c^\sigma}$ is antisymmetric in
$\rho,\sigma$ while the parenthesis multiplying it is symmetric.
The term proportional to a single $\frac{\partial}{\partial
c^\sigma}$ also vanishes as a consequence of equation
(\ref{multij}).
\end{proof}

The antisymmetric coefficients of $\partial / \partial c^\sigma$,
etc. in $s_{2m_i-2}$ can be viewed, in dual terms, as (even)
multivectors of the type
\begin{equation}
\Lambda=\frac{1}{(2m-2)!}{\Omega_{i_1\dots i_{2m-2}}}^\sigma_\cdot
x_\sigma
\partial^{i_1}\wedge\dots \wedge \partial^{i_{2m-2}}\quad ,
\label{completek}
\end{equation}
by replacing the inherent skewsymmetry associated with the odd
character of the $c^i$ by that introduced by the wedge product of
the derivatives with respect to the even variables $x_i$. The
resulting multivectors $\Lambda$ have the property of having zero
Schouten-Nijenhuis bracket \cite{Sch:40,Nij:55} (see Appendix 2)
among themselves by virtue of the GJI (\ref{GJIcoord}). As a result,
they have precisely the property required to define the {\it
(linear) generalized Poisson structures} that will be discussed in
Sec.~\ref{sec:GPS-linear}.

For other aspects of operators with similar structure
see \cite{Zwi:93,AzBu:96} and further references therein.

\subsection{GLAs and strongly homotopy (SH) Lie algebras}
\label{sec:SH}
{\ }

The above higher order Lie algebras turn out to be a special example
of the strongly homotopy (SH) Lie algebras
\cite{Lad.Sta:93,Lad.Mar:95,Jon:90,Be-La:09} which we briefly
mention below for completeness. These SH algebras allow for
`controlled' violations of the GJI, which are obviously absent for a
GLA.

\begin{definition} ({\it SH algebras} \cite{Lad.Sta:93})
{ \ }

A {\it SH Lie structure} on a vector space $V$ is a collection of
skewsymmetric linear maps $l_s:V\otimes \mathop{\cdots}\limits^s
\otimes V\to V$ such that
\begin{equation}
\label{shalgebra}
\sum_{i+j=s+1} \sum_{\sigma \in S_s} \frac{1}{(i-1)!} \frac{1}{j!}
(-1)^{\pi(\sigma)} (-1)^{i(j-1)} \,l_i (l_j (v_{\sigma (1)}\otimes
\dots \otimes v_{\sigma (j)}) \otimes v_{\sigma(j+1)} \otimes
\dots\otimes v_{\sigma (s)})=0 \quad.
\end{equation}
For a general treatment of SH Lie algebras including $v$ gradings
see \cite{Lad.Sta:93,Lad.Mar:95,Jon:90} and references therein.
Note that $\displaystyle \frac{1}{(i-1)!} \frac{1}{j!}
\sum_{\sigma \in S_s}$ is equivalent to the sum over the
`unshuffles', \emph{i.e.}, over the permutations $\sigma\in S_s$
such that $\sigma(1)<\dots <\sigma(j)$ and $\sigma(j+1)<\dots
<\sigma(s)$.
\end{definition}

\begin{example}
For $s=1$, eq. (\ref{shalgebra}) just says that $l_1^2=0$ ($l_1$
is a differential). For $s=2$, eq. (\ref{shalgebra}) gives
\begin{equation}
-\frac{1}{2} l_1( l_2(v_1\otimes v_2) - l_2(v_2\otimes v_1))
+l_2(l_1(v_1)\otimes v_2 - l_1(v_2)\otimes v_1)=0
\end{equation}
\emph{i.e.}, with $l_2(v_1\otimes v_2) = [v_1, v_2]$,
\begin{equation}
l_1[v_1,v_2]=[l_1 v_1,v_2] + [v_1,l_1 v_2] \quad .
\end{equation}

For $s=3$, we have three maps $l_1\,,l_2\,,l_3$, and eq.
(\ref{shalgebra}) reduces to
\begin{equation}
\begin{array}{c}
\left[l_2 ( l_2 (v_1\otimes v_2)\otimes  v_3) + l_2 ( l_2
(v_2\otimes  v_3)\otimes  v_1) + l_2 ( l_2 (v_3\otimes v_1)\otimes
v_2) \right] +\left[ l_1 ( l_3 (v_1\otimes  v_2 \otimes  v_3) )
\right]
\\[0.3cm]
+ \left[ l_3 (l_1 (v_1)\otimes  v_2 \otimes  v_3) + l_3 (l_1
(v_2)\otimes  v_3 \otimes  v_1) + l_3 (l_1 (v_3)\otimes  v_1
\otimes  v_2) \right] = 0 \quad,
\end{array}
\end{equation}
{\it i.e.}, adopting the convention that
$l_s(v_1\otimes\dots\otimes v_s)= [v_1,\dots,v_s]$,
\begin{equation}
\begin{array}{l}
[[v_1, v_2], v_3] + [[v_2, v_3], v_1] + [[v_3, v_1], v_2]
\\[0.3cm]
\qquad\qquad = -l_1 [v_1, v_2 , v_3] -[l_1 (v_1), v_2 , v_3] -
[v_1, l_1 (v_2), v_3 ] - [v_1 , v_2, l_1 (v_3)] \quad.
\end{array}
\label{shexample}
\end{equation}
The $r.h.s$ in (\ref{shexample}) shows the violation of the
(standard) Jacobi identity appearing in the $l.h.s.$.
\end{example}
In the particular case in which a unique $l_s$ ($s$ even) is
defined, we recover the GLA case since, for $i=j=s$, eq.
(\ref{shalgebra}) reproduces the GJI \eqref{GJI} in the form
\begin{equation}
\sum_{\sigma \in S_{2s-1}} \frac{1}{s!} \frac{1}{(s-1)!}
(-1)^{\pi(\sigma)} l_s ( l_s (v_{\sigma (1)}\otimes \dots \otimes
v_{\sigma (s)}) \otimes v_{\sigma(s+1)} \otimes \dots\otimes
v_{\sigma (2s-1)})=0 \quad.
\end{equation}
Thus, the higher order Lie algebras correspond to a particular case
of SH Lie algebras, the one that appears when the GJI is satisfied.

\section{Filippov or $n$-Lie algebras}
\label{sec:filippov}{\ }

This section reviews some basic properties of Filippov algebras (FAs)
\cite{Filippov}, \cite{Fil:98, Kas:87, Kas:95a, Ling:93}; other
questions, including the existence of the enveloping algebras
of Filippov algebras are discussed in \cite{Pozhi:98, Poji:03}.
We begin by discussing the crucial ingredient of $n$-Lie algebras,
the derivation property that determines their characteristic
identity, the {\it Filippov identity}, which distinguishes FAs from
the higher order GLAs of the previous section. After discussing
the general properties of the FAs and, in particular, their
associated inner derivations Lie algebra, we look at
examples of finite FAs (and, specially, the simple ones). The
infinite dimensional case will be exemplified by the
Nambu or Jacobian FAs (these are not the only examples of
infinite-dimensional FAs: ternary Kac-Moody- and Virasoro-Witt-like algebras
have been considered in \cite{Lin:08} and \cite{Curt-Fair-Zac:08},
respectively).

\subsection{Derivations of an $n$-bracket, the Filippov identity and $n$-Lie algebras}{\ }

Let $\fG$ be a vector space endowed with a fully antisymmetric,
multilinear application $ [\;,\;,\;\mathop{\cdots}\limits^n
\;,\;,\;]:\,
 \fG\times\mathop{\cdots}\limits^n \times \fG\rightarrow \fG \, $,
 $[X_1,X_2,\dots,X_n]\in \fG$, called $n$-{\it bracket}.
Let $ad$ be the map $ad: \wedge^{n-1}\fG \rightarrow
\textrm{End}\,\fG \; $ that to each element of  $\wedge^{n-1}\fG$
associates the {\it left multiplication} of $\fG$ given by
\begin{equation}
\label{n-ad}
 ad_{X_1 X_2 \dots X_{n-1}}: Z \rightarrow
[X_1,X_2,\dots,X_{n-1},Z] \quad , \quad \forall\,X_i\,,\,Z\in \fG
\quad ,
\end{equation}
which for $n$=2 reproduces the action of $ad_X$ on a Lie algebra;
note that the skewsymmetry of the $n$-bracket implies that
of the $n-1$ arguments of $ad$. In similarity with the Lie algebra
case, we now require that $ad_{X_1, X_2, \dots,X_{n-1}}$  is a {\it
derivation} of the $n$-bracket {\it i.e.}, that the following
property holds:

\begin{definition} ({\it Inner derivations of the $n$-bracket})
\label{ad-is-der}

$ad_{X_1, X_2, \dots,X_{n-1}}$ is an inner ({\it left}) derivation of the
$n$-bracket
 {\it i.e.},
\begin{equation}
\begin{aligned}
\label{n-ad-der}
    ad_{X_1, X_2, \dots,X_{n-1}} [ Y_1, Y_2, \dots, Y_n]  & =
\sum_{i=1}^{n} [Y_1,\dots, ad_{X_1 \dots X_{n-1}}Y_i,\dots,Y_n] \\
& = \sum_{i=1}^n [ Y_1, \dots, Y_{i-1},
 [X_1,\dots, X_{n-1}, Y_i], Y_{i+1}, \dots, Y_n] \; .
\end{aligned}
\end{equation}

As for Lie algebras, the derivations $ad_{X_1,\dots,X_{n-1}}$
are called {\it inner} because they are characterized by
($n-1$) elements of the $n$-Lie algebra $\fG$ itself. The
above reads, in full detail,
\begin{equation}
\label{n-der}
\begin{aligned}
    & \qquad [X_1, X_2, \dots,X_{n-1}, [ Y_1, Y_2, \dots, Y_n]]    =
   [[X_1,X_2,\dots, X_{n-1}, Y_1],Y_2,\dots,Y_n]    + \\ &
 [Y_1,[X_1,\dots,X_{n-1}, Y_2], Y_3 \dots, Y_n ]  +  \dots +
[Y_1,\dots, Y_{n-1},[X_1,X_2,\dots,X_{n-1},Y_n]] \quad .
\end{aligned}
\end{equation}
\end{definition}
\noindent Eq.~\eqref{n-der}, which contains ($n+1$) terms, is the
{\it Filippov identity} (FI). It reduces to the JI for $n=2$ and
motivates the following

\begin{definition} ($n$-{\it Lie or Filippov algebra} (FA) \cite{Filippov})
\label{def:n-alg}

 An $n$-Lie algebra $\fG$ is a vector space  together with a
 multilinear fully skewsymmetric application
$ [\;,\;,\;\mathop{\dots}\limits^n \;,\;,\;]:\,
\fG\times\mathop{\dots}\limits^n \times \fG\rightarrow \fG$,
the $n$-bracket, such that the Filippov identity (\ref{n-der}) is satisfied. For
$n$=2, the FA $\fG$ reduces to an ordinary Lie algebra $\fg$.
\end{definition}

Thus, the FI \eqref{n-der} that characterizes a FA just reflects
that $ad_{X_1,\dots,X_{n-1}}$ is an inner derivation of the $n$-Lie
algebra. Note that for $n > 2$ $\,ad_{X_1 X_2 \dots X_{n-1}}$ is a
derivation of the $n$-bracket but not a representation of the
elements of $\fG$ themselves (but see Sec.~\ref{sec:repFA} below),
since the skewsymmetric map
$ad: \fG\times\mathop{\cdots}\limits^{n-1} \times \fG \rightarrow
\textrm{End}\,\fG$ is not defined on $\fG$ itself unless $n$=2.
It is only for $n=2$ that $ad$ is both a representation of $\fg$
and a derivation of the Lie algebra.

  The skewsymmetric sets $(X_1,\dots,X_{n-1})$ that determine
inner derivations $ad_{X_1,\dots,X_{n-1}}$ of the FA $\fG$
appear very frequently in the theory of FAs and it is convenient to
denote them by a symbol, $\mathscr{X}\equiv (X_1,\dots,X_{n-1})$,
and to give them a name. They will be called {\it fundamental
objects} of the FA $\fG$, $\mathscr{X}\in \wedge^{n-1}\fG$; their
properties will be discussed in Sec.~\ref{sec:fund-obj}. As for
Lie algebras, the $ad$ map is not injective in general, the extreme
case being that of an abelian $\fG$, for which
$\ad_\mathscr{X}=0\;\,\forall \mathscr{X}$. The classes of
fundamental objects  $\mathscr{X}$ obtained by taking the quotient
by ker$\,ad$ are the {\it inner derivations} of $\fG$. They define
an ordinary Lie algebra InDer$\,\fG\equiv\,$Lie$\,\fG$ or {\it Lie
algebra associated with} $\fG$, as will be discussed in
Sec.~\ref{sec:nLie-to-Lie}.
\medskip

{\it Observation}. Filippov uses \cite{Filippov} {\it right}
multiplications, $R(X_1,\dots,X_{n-1}):Y\mapsto
[Y,X_1\dots,X_{n-1}]$. Such a right inner derivation leads to
\begin{equation}
\begin{aligned}
\label{n-der-right}
&\qquad [[Y_1,\dots,Y_n],X_1,\dots,X_{n-1}]=[[Y_1,X_1,\dots,X_{n-1}],Y_2,\dots,Y_n] + \\&
[Y_1,[Y_2,X_1,\dots,X_{n-1}],Y_3,\dots,Y_n]+\dots +[Y_1,\dots,Y_{n-1},[Y_n, X_1,\dots,X_{n-1}]] \; .
\end{aligned}
\end{equation}
However, due to the full skewsymmetry of the Filippov $n$-bracket,
all the terms in the `left' (eq. \eqref{n-der}) and in the `right'
identity above differ in a $(-1)^{n-1}$ sign, and therefore both
equations define one and the same expression (as it is of course the
case of the JI, which may be written either as $
[X,[Y,Z]]=[[X,Y],Z]+[Y,[X,Z]]$ or as
$[[Y,Z],X]=[[Y,X],Z]+[Y,[Z,X]]\,$).
\medskip

There may be derivations of a FA that are not defined through
elements of $\fG$ since, in general,

\begin{definition}({\it Derivations of a FA})

A derivation of a FA is an element $D\in\hbox{End}\fG$
that satisfies the derivation property,
\begin{equation}
\label{gral-der}
D [ X_1, X_2, \dots, X_n]  =
\sum_{i=1}^{n} [X_1,\dots, D\,X_i,\dots,X_n] \quad .
\end{equation}
\end{definition}
{\ }\\
As for InDer$\,\fG$ above, the space $\hbox{Der}\,\fG$ of all {\it derivations}
$D$ of $\fG$ defined by eq.~\eqref{gral-der} generate
a Lie algebra. It is checked that $\hbox{InDer}\,\fg$ is an ideal of Der$\,\fG$
(see, {\it e.g.} \cite{Ling:93}). Therefore, the quotient
Der$(\fG)/\hbox{InDer}\,\fG$=OutDer$\,\fG$ is, by definition, the Lie
algebra of {\it outer derivations} of the FA $\fG$, thus called because
they cannot be realized in terms of fundamental elements of $\fG$.
\medskip

\subsubsection{Other forms of the Filippov identity}
{\ }

    The FI may be rewritten in various forms which is useful to have at hand.
Using the skewsymmetry of the $n$-bracket,
the FI may also be written as
\begin{equation}
\label{eq:FI}
 [X_1, X_2, \dots, X_{n-1}, [ Y_1, Y_2, \dots, Y_n]]  =
 \sum_{i=1}^n \,(-1)^{n-i} [Y_1, \dots, \widehat{Y_i},\dots,Y_n ,
 [X_1,\dots, X_{n-1}, Y_i]]
\end{equation}
or, equivalently,
\begin{equation}
\label{FI-per}
[X_1,X_2,\dots,X_{n-1}, [Y_1,\dots,Y_n]]=
\sum_{cycl. perm.} (-1)^{\pi(\sigma)} \,
[[X_1,\dots,X_{n-1},Y_{\sigma (1)}], Y_{\sigma(2)},\dots,
Y_{\sigma(n)}]   \quad ,
\end{equation}
where the sum is extended to the $n$ cicular permutations $\sigma$
of the $n$ indices. For $n$ odd, all circular permutations are
even, and no signs appear; for $n$ even, plus and minus signs
alternate in (\ref{FI-per}).

  Another useful form of the FI is provided by
\begin{equation}
\label{FIshort}
[[X_{[a_1},\dots,X_{a_n}] ,X_{b_1]},\dots,X_{b_{n-1}}]=0 \quad ,
\end{equation}
to be compared with eq.~\eqref{GJI} for a GLA.

Finally, we give one more way of writing the FI. If we introduce a set of ($2n-1$)
anticommuting, `ghost' variables and set $B=b^aX_a\,,\, C=c^a X_a$,
it is easy to rewrite
eq.~\eqref{n-der} in the compact form \cite{Ba-To:08}
\begin{equation}
\label{FIultrashort-a}
[B,\mathop{\cdots}\limits^{n-1},B,[C,\mathop{\cdots}\limits^n,C]]=
n [[B,\mathop{\cdots}\limits^{n-1}, B,C],C,\mathop{\cdots}\limits^{n-1},C] \; .
\end{equation}
The proof is an immediate generalization of the $n=2$ Lie agebra
case, for which $[B,[C,C]]=2[[B,C],C]$ reproduces the JI once the
ghost variables are factored out.
\medskip

The above anticommuting ghost variables can be used to prove the equivalence of
eqs.~\eqref{eq:FI} and \eqref{FIshort}, as we show below for the simplest
$n=3$ case. Assume that \eqref{FIshort} is true. It may be
written as
\begin{eqnarray}
\label{prueba1}
& &  [X_{b_1},X_{a_1},[X_{a_2},X_{a_3},X_{a_4}]] =
[X_{b_1},X_{a_2},[X_{a_1},X_{a_3},X_{a_4}]] \nonumber\\
& &\quad\quad\quad \quad + [X_{b_1},X_{a_3},[X_{a_2},X_{a_1},X_{a_4}]]
+ [X_{b_1},X_{a_4},[X_{a_2},X_{a_3},X_{a_1}]] \ .
\end{eqnarray}
Contracting with $b^{b_1}b^{a_1}c^{a_2}c^{a_3}c^{a_4}$
leads to
\begin{equation}
\label{prueba2}
  [B,B,[C,C,C]] = -3[B,C,[B,C,C]] \ .
\end{equation}
If we now contract with $c^{b_1}b^{a_1}b^{a_2}c^{a_3}c^{a_4}$ we obtain
\begin{equation}
\label{prueba3}
  -[B,C,[B,C,C]] = [[B,B,C],C,C] \;,
\end{equation}
and combining the last two equations the FI in the form
\eqref{FIultrashort-a} follows. Conversely, start from the conventional
form \eqref{eq:FI}. To show that this implies
eq.~\eqref{FIshort}, it is sufficient to prove the equivalent
expression $[B,C,[C,C,C]]=0$. This follows by applying
the FI to $[C,C,[B,C,C]]$:
\begin{eqnarray}
\label{prueba4}
     [C,C,[B,C,C]] & =& [[C,C,B],C,C] + [B,[C,C,C],C] + [B,C,[C,C,C]]
  \nonumber\\
    & = & [C,C,[B,C,C]] + 2[B,C,[C,C,C]] \ ,
 \end{eqnarray}
which gives $[B,C,[C,C,C]]=0$. The proof may be extended to any $n$,
and it is relegated to Appendix 1.

\begin{proposition}
\label{n-1 from n}
($(n-1)$-Lie algebras from $n$-Lie algebras \cite{Filippov})
{\ }

Let $\fG$ be an arbitrary $n$-Lie algebra and define, by fixing an
element $A\in \fG$ in its $n$-bracket, the obviously ($n-1$)-linear
and fully antisymmetric ($n-1$)-bracket by
\begin{equation}
\label{reduc}
  [X_1,X_2,\dots,X_{n-1}]_{n-1} :=
  [A,X_1,X_2,\dots,X_{n-1}]_n \quad .
\end{equation}
Then, the ($n-1$)-bracket above satisfies the FI.
\begin{proof}
Clearly, with $A$ fixed, the FI for the $n$-bracket implies the
equality
\begin{equation}
\label{n-der-2}
\begin{aligned}
    & \qquad [A, X_1, X_2, \dots,X_{n-2}, [A, Y_1, Y_2, \dots, Y_{n-1}]]    =
   [[A, X_1,X_2,\dots, X_{n-2}, A],Y_1,\dots,Y_{n-1}]    + \\
 & [A,[A, X_1,X_2,\dots,X_{n-2}, Y_1], Y_2 \dots, Y_{n-1} ]  +  \dots
+ [A, Y_1,\dots, Y_{n-2},[A, X_1,X_2,\dots,X_{n-2},Y_{n-1}]]\\
 & \qquad\quad = \sum_{i=1}^{n-1} [A, Y_1, \dots, Y_{i-1},
 [A, X_1,X_2,\dots,X_{n-2}, Y_i], Y_{i+1}, \dots, Y_{n-1}]
\end{aligned}
\end{equation}
which, in turn, implies that the ($n-1$)-bracket defined by
(\ref{reduc}) satisfies the $n$ terms FI and hence defines
a ($n-1$)-Lie algebra on the same vector space of the $n$-Lie
algebra.
\end{proof}
\end{proposition}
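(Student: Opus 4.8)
The plan is to verify directly that the $(n-1)$-bracket defined by \eqref{reduc} satisfies the Filippov identity, by rewriting the $(n-1)$-FI entirely in terms of the original $n$-bracket and then invoking the $n$-bracket FI \eqref{n-der} exactly once. First I would write down what the $(n-1)$-Lie algebra FI demands: with fundamental object $(X_1,\dots,X_{n-2})$ acting on $[Y_1,\dots,Y_{n-1}]_{n-1}$, one must show
\[
[X_1,\dots,X_{n-2},[Y_1,\dots,Y_{n-1}]_{n-1}]_{n-1}=\sum_{i=1}^{n-1}[Y_1,\dots,Y_{i-1},[X_1,\dots,X_{n-2},Y_i]_{n-1},Y_{i+1},\dots,Y_{n-1}]_{n-1}.
\]
Substituting the definition \eqref{reduc} (which prepends a fixed $A$ to every $(n-1)$-bracket), the left-hand side becomes $[A,X_1,\dots,X_{n-2},[A,Y_1,\dots,Y_{n-1}]]$ and the $i$-th summand on the right becomes $[A,Y_1,\dots,Y_{i-1},[A,X_1,\dots,X_{n-2},Y_i],Y_{i+1},\dots,Y_{n-1}]$.

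The crucial step is to apply the $n$-bracket FI \eqref{n-der} to the left-hand side, reading the $(n-1)$-element fundamental object as $(A,X_1,\dots,X_{n-2})$ and the inner $n$-bracket as $[A,Y_1,\dots,Y_{n-1}]$ (whose $n$ entries are $A,Y_1,\dots,Y_{n-1}$). This produces $n$ terms, one for each slot of the inner bracket. The term in which the outer derivation acts on the leading $A$ is
\[
[[A,X_1,\dots,X_{n-2},A],Y_1,\dots,Y_{n-1}],
\]
and here the inner bracket $[A,X_1,\dots,X_{n-2},A]$ has $A$ repeated, hence vanishes by the full skewsymmetry of the $n$-bracket. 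The surviving $n-1$ terms are exactly the summands $[A,Y_1,\dots,Y_{i-1},[A,X_1,\dots,X_{n-2},Y_i],Y_{i+1},\dots,Y_{n-1}]$ for $i=1,\dots,n-1$, which is precisely the right-hand side of the $(n-1)$-FI once \eqref{reduc} is reinstated. This is the content of \eqref{n-der-2}.

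I do not expect any genuine obstacle: the argument is a single invocation of \eqref{n-der} combined with the observation that repeating $A$ annihilates one term. The only point requiring care is the bookkeeping of arities — checking that the outer object really carries $n-1$ entries ($A$ together with $X_1,\dots,X_{n-2}$) and that the inner bracket really carries $n$ entries ($A$ together with $Y_1,\dots,Y_{n-1}$), so that \eqref{n-der} applies verbatim and its nonvanishing terms match the desired identity slot by slot. I would close by remarking that, since the $(n-1)$-bracket is manifestly $(n-1)$-linear and totally antisymmetric (being the $n$-bracket with one argument frozen to $A$), the verified identity is all that is needed to conclude that \eqref{reduc} defines an $(n-1)$-Lie algebra on the same underlying vector space.
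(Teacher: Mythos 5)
Your proof is correct and is essentially the paper's own argument: both apply the $n$-bracket FI \eqref{n-der} once with outer fundamental object $(A,X_1,\dots,X_{n-2})$ and inner bracket $[A,Y_1,\dots,Y_{n-1}]$, and both discard the term $[[A,X_1,\dots,X_{n-2},A],Y_1,\dots,Y_{n-1}]$ because the repeated $A$ kills it by full skewsymmetry, leaving exactly the $(n-1)$-FI as in \eqref{n-der-2}. Your version merely makes explicit the vanishing of that term and the arity bookkeeping, which the paper leaves implicit.
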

Thus, given an $n$-Lie algebra, one may obtain by the above procedure
a subordinated chain of $m$-Lie algebras of increasingly lower
orders (see also \cite{Rot:03} for further details and references).
In particular, a 3-Lie algebra $\fG$ defines a family of Lie algebras
$\fg$ characterized by the (fixed) elements of $\fG$, since $[X,Y]:=[A,X,Y]
\,\;\forall X,Y\in\fG$ satisfies the JI in $\fg$.
\\

\subsection{Structure constants of the $n$-Lie algebras and the FI}{\ }

 Chosen a basis $\{X_a \}$ of $\fG$, $a=1,\dots,\hbox{dim}\fG$, the FA
 bracket may be defined by the
 $n$-Lie algebra structure constants,
\begin{equation}
\label{FAstruct}
 [X_{a_1} \dots X_{a_n}]= f_{a_1 \dots a_n}{}^d \,X_d \quad.
\end{equation}
The  $f_{a_1 \dots a_n}{}^d $ are fully skewsymmetric in the $a_i$
indices and satisfy the condition
\begin{equation}
\label{FIstrconst}
 f_{b_1 \dots b_n}{}^l \; f_{a_1 \dots a_{n-1}l}{\ }^s \,=
 \sum_{k=1}^{n}\,f_{a_1 \dots a_{n-1}b_k}{}^l \; f_{b_1 \dots b_{k-1} l b_{k+1}\dots
 b_n}{}^s \quad ,
\end{equation}
which expresses the FI
\begin{equation}
\nonumber
 [X_{a_1},\dots, X_{a_{n-1}},[X_{b_1},\dots,X_{b_n}]] =
 \sum_{k} [X_{b_1},\dots,X_{b_{k-1}}
[X_{a_1},\dots,X_{a_{n-1}},X_{b_k}],X_{b_{k+1}},\dots, X_{b_n}]
\end{equation}
in terms of the structure constants  of $\fG$. For later convenience, we write
the coordinates expression of the $n=3$ FI explicitly:
\begin{equation}
\label{coord3FI}
f_{b_1 b_2 b_3}{}^l\,f_{a_1 a_2 l}{}^s=f_{a_1 a_2 b_1}{}^l\;f_{l b_2 b_3}{}^s
+f_{a_1 a_2 b_2}{}^l\;f_{b_1 l b_3}{}^s+f_{a_1 a_2 b_3}{}^l\;f_{b_1 b_2 l}{}^s \quad .
\end{equation}

Similarly, the form \eqref{FIshort} of the FI leads in coordinates
to
\begin{equation}
\label{FIshort-coor}
f_{[a_1 \dots a_n}{}^l \; f_{b_1]b_2 \dots b_{n-1}l}{\ }^s = 0 \quad,
\end{equation}
to be compared in the $n$ even case with the coordinates expression of the
GJI for a GLA in eq.~\eqref{GJIcoord}. Thus, every even FA defines a generalized
Lie algebra, a fact that will find an analogue in Lemma \ref{le:NPimpGPS}
for the even $n$-ary generalizations of the Poisson structures
to be discussed in Sec.~\ref{sec:higherPoisson}.

To conclude, we give two more forms for the FI in coordinates. From
expression \eqref{n-der} or \eqref{eq:FI} it follows that
\begin{equation}
f_{c_1 \dots c_n}{\ }^l \; f_{b_1 \dots b_{n-1} l}{\ }^s =\sum_{i=1}^{n}
(-1)^{n-i}\,f_{b_1\dots b_{n-1}c_i }{\ }^l\, f_{c_1\dots\hat{c}_i \dots c_n l}{\ }^s
\quad,
\end{equation}
which is equivalent to
\begin{equation}
\label{FIultrashort-b}
f_{c_1 \dots c_n}{\ }^l \; f_{b_1 \dots
b_{n-1} l}{\ }^s = \frac{(-1)^{n-1}}{(n-1)!} f_{b_1\dots b_{n-1}
[c_1 }{\ }^l\, f_{c_2\dots c_n ] l}{\ }^s
\end{equation}
(the $r.h.s.$ would become $(-1)^{n-1}n\,f_{b_1\dots b_{n-1} [c_1 }
 {\ }^l\, f_{c_2\dots c_n ] l}{\ }^s$ with unit weight
antisymmetrization of the $n$ indices $c\;$).
Eq.~\eqref{FIultrashort-b} also follows by writing $B=b^a X_a$ etc
and extracting the ghosts from eq.~\eqref{FIultrashort-a}.

\subsection{Structural properties of Filippov $n$-Lie algebras}
\label{sec:structure}{\ }

Let us go briefly through some basic properties of FAs following
the pattern of the Lie algebras in Sec.~\ref{sec:Lie-review} which,
after all, constitute the $n$=$2$ FA case. Many Lie algebra results
may be straightforwardly translated to general FA, although there
are important differences, the main ones being that the rich variety
of simple Lie algebras is drastically reduced when moving to $n>2$
FAs and that {\it e.g.},  some Lie algebra concepts (as solvability)
allow for more than one possible definition when extended to $n\geq
3$ $n$-Lie algebras. The structure of the FAs (for a review of $n=3$
FAs, see \cite{JMF:08}) was already developed in the original paper
of Filippov \cite{Filippov} and in later work of Kasymov \cite
{Kas:87}, where the notion of representation and the analogues of the
Cartan subalgebra and Killing metric for Lie algebras were introduced
for FAs. Further developments can be found in \cite{Kas:95a,Kas:98}
(see also \cite{Kas:91,Kas:95b}) and in the very complete Ph. D.
thesis of Ling \cite{Ling:93}. This last paper proved the analogue
of the Levi decomposition for finite-dimensional $n$-Lie algebras
and showed that for $n>2$ all ($n+1$)-dimensional simple $n$-Lie
algebras are of {\it one} type up to isomorphisms, the one given by
Filippov \cite{Filippov}, thus classifying the simple FAs.
\medskip

\subsubsection{Basic definitions, properties and results}
\label{FAbasics}
{\ }

Let $\fG$ be a FA (Def.~\ref{def:n-alg}). Then,
\medskip

A FA is {\it abelian} if $[X_1,\dots,X_n]=0$ for all $X\in\fG$.
\medskip

A subspace $\fh$ of a FA $\fG$ is a {\it Filippov subalgebra} when
it is closed under under the $n$-bracket,
\begin{equation*}
\fh \quad \hbox{subalgebra} \; \Leftrightarrow \; [Y_1,\dots,Y_n]
\subset \fh \quad \forall Y\in \fh \quad .
\end{equation*}

A subspace $I\subset \fG$ is an {\it ideal} $\fG$ if
\begin{equation*}
[X_1,\dots,X_{n-1},Y]  \subset I \quad  \forall X\in \fG\,,\,\forall
Y\in I \quad.
\end{equation*}

A FA is {\it simple} if $[\fG,\dots,\fG]\not=\{0\}$ and has no
ideals different from the trivial ones, $\{0\}$ and $\fG$.
\medskip

It is easy to check that if $I,J$ are ideals of $\fG$,
$I+J=\{X+Y|X\in I\,,\,Y\in J\}$ and $I\cap J$ are also ideals of
$\fG$. Thus, since the sum (resp. intersection) of ideals of $\fG$
is an ideal of $\fG$, a FA $\fG$ has maximal (resp. minimal) ideals.
An ideal $J$ is called {\it maximal} if the only ideals containing
$J$ are $\fG$ and $J$. An ideal $I$ of $\fG$ is called {\it minimal}
if the only ideals of $\fG$ contained in $I$ are $0$ and $I$.
Further, in analogy with the Lie algebras particular case (see {\it
e.g.} \cite{Hum:72}) the following Lemma \cite{Filippov,Ling:93}
holds:

\begin{lemma} {\ }

Let $I,J$ be ideals of $\fG$. Then, $(I+J)/I \sim J/(I \cap J)$.
Futher, if $I\subset J$, $J/I$ is an ideal of $\fG/I$ and
$(\fG/I)/(J/I) \sim \fG/J$ {\it i.e.}, one can remove the `common
factor' $I$.
\end{lemma}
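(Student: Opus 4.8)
The plan is to recognize the two assertions as the second and third isomorphism theorems and to deduce both from a single \emph{homomorphism theorem} for $n$-Lie algebras: if $\phi:\fG\to\fG'$ is a homomorphism of FAs (a linear map with $\phi([X_1,\dots,X_n])=[\phi(X_1),\dots,\phi(X_n)]$), then $\ker\phi$ is an ideal of $\fG$, $\im\phi$ is a subalgebra of $\fG'$, and the induced map $\fG/\ker\phi\to\im\phi$ is an isomorphism. So the first step is to record this theorem. It follows exactly as for Lie algebras: that $\ker\phi$ is an ideal is immediate from multilinearity, since inserting a kernel element into an $n$-bracket and applying $\phi$ produces a bracket with a zero entry; the well-definedness of the quotient $n$-bracket on $\fG/\ker\phi$ uses precisely this ideal property; and $\bar\phi([X]):=\phi(X)$ is then a well-defined bijective homomorphism. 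Throughout I use that the quotient of a FA by an ideal is again a FA, the $n$-bracket descending because an ideal absorbs brackets.

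For the isomorphism $(I+J)/I\cong J/(I\cap J)$ I would define $\phi:J\to(I+J)/I$ by $\phi(Y):=Y+I$. Since $I+J$ is an ideal of $\fG$ (established just above the Lemma), the quotient $(I+J)/I$ is a FA and $\phi$ is a homomorphism. It is surjective because any class in $(I+J)/I$ is represented by some $X+Y$ with $X\in I$, $Y\in J$, and $X+Y+I=Y+I=\phi(Y)$. Its kernel is $\{Y\in J:Y\in I\}=I\cap J$. The homomorphism theorem then yields $J/(I\cap J)\cong(I+J)/I$.

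For the third isomorphism, assume $I\subset J$ and define $\psi:\fG/I\to\fG/J$ by $\psi(X+I):=X+J$. This is well-defined precisely because $I\subseteq J$, so altering the representative by an element of $I$ alters $X+J$ only by an element of $I\subseteq J$, i.e.\ not at all; and $\psi$ is a surjective homomorphism of FAs. Its kernel is $\{X+I:X\in J\}=J/I$, which in particular exhibits $J/I$ as the kernel of a homomorphism out of $\fG/I$ and hence as an ideal of $\fG/I$. Applying the homomorphism theorem once more gives $(\fG/I)/(J/I)\cong\fG/J$.

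I expect no genuine obstacle here; the only points needing care are those where the $n$-ary structure could intervene, namely that the constructed maps preserve the $n$-bracket and that the quotient brackets are well-defined. Both reduce, exactly as in the $n=2$ case, to the defining absorption property of an ideal, $[X_1,\dots,X_{n-1},Y]\subset I$ for $Y\in I$, together with full skewsymmetry (so that an ideal entry in \emph{any} slot, not just the last, lands in the ideal). Since the Filippov identity plays no role in these verifications, the arguments are identical in form to the classical Lie-algebra ones.
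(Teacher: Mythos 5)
Your proposal is correct, and it is exactly the standard argument. Note that the paper itself offers no proof of this Lemma -- it is stated with citations to Filippov and Ling and the remark that it is ``in analogy with the Lie algebras particular case'' -- but every ingredient you invoke is already assembled in the surrounding text: that $I+J$ and $I\cap J$ are ideals is asserted immediately before the Lemma, that quotients of a FA by an ideal are again FAs and that $\ker\phi$ is an ideal with $\fG/\ker\phi\cong\fG'$ for surjective $\phi$ appear in the paragraphs on homomorphisms (Def.~\ref{def:homFA} and the canonical decomposition), so your reduction of both statements to a single homomorphism theorem is the intended route. You are also right about the one point where the $n$-ary structure could matter: the verifications use only multilinearity and the absorption property of ideals (extended to any slot by full skewsymmetry), never the Filippov identity, which is why the classical second and third isomorphism theorems carry over verbatim.
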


\begin{definition} ({\it Centre, centralizer, normalizer})
\label{var-defs}
{\ }

 The {\it centre $Z(\fG)$ of a FA} is given by
$Z(\fG)=\{Z\in\fG\,|\, [X_1,\dots,X_{n-1},Z]=0\; \forall X\in \fG
\}$. It is an abelian ideal of $\fG$. More generally, the {\it
centralizer} $C(\fh)$ of a subset $\fh\subset \fG$ may be defined by
the condition $[C(\fh),\fh, \fG, \dots, \fG]=0$ (thus, and as for
Lie algebras, $C(\fG)=Z(\fG)$). Using the FI \eqref{n-der}, it is
seen that $C(\fh)$ is a subalgebra of $\fG$. Similarly, the {\it
normalizer} $N(\fh)$ of a subalgebra $\fh$ of $\fG$ is defined by
the condition $[N(\fh),\fh, \fG,\dots,\fG]\subset \fh$. Again, the
FI shows that $N(\fh)$ is a subalgebra of $\fG$; clearly, $\fh$ is
an ideal of $N(\fh)$.
\end{definition}

   In analogy with Lie algebras, we have the following
(see further Th.~\ref{th:simple-SO} below)
\begin{theorem}
\label{simple-inder}{\ }

All the derivations of a simple FA are inner \cite{Filippov}.
\end{theorem}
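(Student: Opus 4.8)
The plan is to mimic the classical proof that a semisimple Lie algebra has only inner derivations, using the simplicity of $\fG$ to supply the two representation-theoretic facts that drive that argument. Throughout write $\fI:=\hbox{InDer}\,\fG$ for the associated Lie algebra of inner derivations, which the excerpt has already shown to be an ideal of $\hbox{Der}\,\fG$. The computation behind this, which I will use repeatedly, is that for $D\in\hbox{Der}\,\fG$ and a fundamental object $\mathscr{X}=(X_1,\dots,X_{n-1})$ one has $[D,\ad_{\mathscr{X}}]=\ad_{D\cdot\mathscr{X}}$, where $D\cdot\mathscr{X}:=\sum_{i=1}^{n-1}(X_1,\dots,DX_i,\dots,X_{n-1})$; this follows at once from the derivation property \eqref{gral-der}. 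A second elementary remark used below: a scalar map $\lambda\,\id$ can be a derivation only if $\lambda=0$, since \eqref{gral-der} forces $\lambda[X_1,\dots,X_n]=n\lambda[X_1,\dots,X_n]$ while the bracket is not identically zero on a simple $\fG$.

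First I would record that $\fG$ is an irreducible $\fI$-module. Indeed, any $\fI$-invariant subspace $W\subseteq\fG$ satisfies $[X_1,\dots,X_{n-1},W]=\ad_{\mathscr{X}}W\subseteq W$, i.e.\ $W$ is an ideal of $\fG$, so simplicity forces $W=0$ or $W=\fG$. Next I would show that $\fI$ is semisimple. Let $\fr$ be its radical; then $\fr\cdot\fG$ is again an $\fI$-submodule, because $\fr$ is an ideal of $\fI$ and hence $\ad_{\mathscr{X}}(\fr\cdot\fG)\subseteq\fr\cdot\fG$, and is therefore an ideal of $\fG$. Working over $\CC$ (or after complexifying), a standard consequence of Lie's theorem is that the solvable ideal $\fr$ acts by scalar operators on the irreducible module $\fG$; but an element of $\fr\subseteq\hbox{Der}\,\fG$ acting as a scalar must, by the remark above, act as $0$, and since $\fI$ acts faithfully on $\fG$ by construction this gives $\fr=0$. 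Thus $\fI$ is a semisimple Lie algebra.

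With these two facts the endgame is short. Given $D\in\hbox{Der}\,\fG$, the map $\iota\mapsto[D,\iota]$ sends the ideal $\fI$ into itself and is a derivation of the Lie algebra $\fI$; since $\fI$ is semisimple, all its derivations are inner (Whitehead's Lemma~\ref{lem:White}, in the form $\hbox{Der}=\hbox{InDer}$ for semisimple Lie algebras), so there is a fundamental object $\mathscr{X}_0$ with $[D,\ad_{\mathscr{X}}]=[\ad_{\mathscr{X}_0},\ad_{\mathscr{X}}]$ for every $\mathscr{X}$. Hence $\phi:=D-\ad_{\mathscr{X}_0}$ commutes, as an operator on $\fG$, with all of $\fI$. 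By Schur's lemma applied to the irreducible $\fI$-module $\fG$, $\phi=\lambda\,\id$; but $\phi$ is a derivation, so the scalar remark forces $\lambda=0$, i.e.\ $D=\ad_{\mathscr{X}_0}$ is inner.

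The main obstacle is the middle step, the semisimplicity of $\fI=\hbox{InDer}\,\fG$, together with the field-theoretic care it requires: Lie's theorem and Schur's lemma are cleanest over $\CC$, so for a real simple FA one must either complexify, checking that the relevant conclusions descend, or appeal to the explicit structure of the simple FAs in Th.~\ref{th:simple-SO}, for which $\fI$ is the orthogonal algebra $\fso$ of the appropriate signature and is manifestly semisimple. Everything else is a formal transcription of the Lie algebra argument via the identity $[D,\ad_{\mathscr{X}}]=\ad_{D\cdot\mathscr{X}}$.
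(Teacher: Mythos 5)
The paper states this theorem without proof, quoting it from Filippov's original article, so there is no in-text argument to match yours against; judged on its own terms, your proof is correct and complete over $\CC$, and it is a faithful transcription of the classical argument for semisimple Lie algebras: the identity $[D,\ad_{\mathscr{X}}]=\ad_{D\cdot\mathscr{X}}$ makes $\fI=\mathrm{InDer}\,\fG$ an ideal of $\Der\,\fG$, simplicity makes $\fG$ a faithful irreducible $\fI$-module (every submodule is an ideal of $\fG$), Lie's theorem kills $\Rad\,\fI$, Whitehead's Lemma~\ref{lem:White} in the form $\Der=\mathrm{InDer}$ for semisimple Lie algebras produces $\iota_0\in\fI$ with $\phi=D-\iota_0$ in the commutant, and Schur plus your scalar remark finish. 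Two cosmetic points: the submodule $\fr\cdot\fG$ you introduce is never actually used, and the element supplied by Whitehead is in general a \emph{linear combination} of maps $\ad_{\mathscr{X}}$ rather than a single $\ad_{\mathscr{X}_0}$ --- harmless, since $\mathrm{InDer}\,\fG$ is precisely that span.

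The one genuine loose end is the real case, which you flag but do not close. Your proposed appeal to Th.~\ref{th:simple-SO} (or to the classification, Th.~\ref{th:FAsimple}) repairs only the semisimplicity of $\fI$; over $\RR$, Schur's lemma yields merely that $\phi$ lies in a real division algebra ($\RR$, $\CC$ or $\HH$), so $\phi$ could a priori be, say, a complex structure, to which your scalar remark does not apply. Fortunately the Schur step can be discarded altogether by an elementary computation valid over any field of characteristic zero: if the derivation $\phi$ commutes with every inner derivation, then by total antisymmetry of the bracket (moving $\phi X_i$ to the last slot and using the commutation of $\phi$ with $\ad_{(X_1,\dots,{\hat X}_i,\dots,X_n)}$)
\begin{equation*}
[X_1,\dots,\phi X_i,\dots,X_n]=(-1)^{n-i}\,[X_1,\dots,{\hat X}_i,\dots,X_n,\phi X_i]
=(-1)^{n-i}\,\phi\,[X_1,\dots,{\hat X}_i,\dots,X_n,X_i]=\phi\,[X_1,\dots,X_n]\;,
\end{equation*}
so the derivation property \eqref{gral-der} gives $\phi[X_1,\dots,X_n]=n\,\phi[X_1,\dots,X_n]$; hence $\phi$ annihilates the derived ideal $[\fG,\dots,\fG]=\fG$ and $\phi=0$. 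The same trick removes the Lie's-theorem step in the real case as well: a faithful irreducible module makes $\fI$ reductive over $\RR$ too, its radical is then its centre, and a central element of $\fI$ is an inner derivation commuting with all inner derivations, hence zero by the computation above. With these two substitutions your argument is complete over both ground fields of the paper and needs neither complexification nor the classification theorem.
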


\begin{definition}({\it Homomorphisms of FAs})
\label{def:homFA}

A vector space homomorphism $\phi : \fG \rightarrow \fG'$,
$\phi:X\in \fG\rightarrow \phi(X)\in \fG'$, is  a {\it homomorphism
of Filippov algebras} when the image of the $n$-bracket in $\fG$ is
the $n$-bracket of the images in $\fG'\,$,
\begin{equation*}
\phi ([X_1,\dots,X_n]) =  [\phi(X_1),\dots,\phi(X_n)] \quad.
\end{equation*}
\end{definition}

The {\it kernel of an homomorphism} $\phi$ is the vector space
$\hbox{ker}\,\phi$=$\{Y\in \fG_1 \, | \,\phi(Y)=0\}$;
$\hbox{ker}\phi$ is an ideal of $\fG$ since, if $\phi(Y)=0$,
$\phi([X_1,\dots,X_{n-1},Y])=[\phi(X_1),\dots,\phi(X_{n-1}),\phi(Y)]=0$
and therefore $[X_1,\dots,X_{n-1},Y]\in \hbox{ker}\,\phi$. When
$\hbox{ker}\,\phi=0$, $\phi$ is an {\it isomorphism} of FAs.
\medskip

The quotient space of a FA $\fG$ by an ideal $I$ is also an $n$-Lie
algebra, since the $n$-bracket of any $n$ elements from each of the
classes $X_1$+$I$,...,$X_{n+1}$+$I\,$, is an element in the class
$[X_1,\dots,X_n]$+$I$ because $I$ is an ideal. Therefore, given an
homomorphism $\phi$ as above, there is an exact sequence of
homomorphisms or {\it canonical decomposition of $\phi:\fG
\rightarrow \fG'$},
\begin{equation*}
0 \rightarrow \ker{\phi} \rightarrow \fG \rightarrow
\fG/\hbox{ker}\phi=\hbox{Im}\fG \rightarrow 0 \quad.
\end{equation*}
The first and second arrows are obviously {\it injective},
the third one is the {\it canonical projection} and the forth
homomorphism is also {\it surjective} (and trivially so).
Clearly, if $\phi$ is surjective,
$\fG / \hbox{ker}\,\phi$ and $\fG'$ are isomorphic FAs.
\medskip

\subsubsection{Solvable Filippov algebras, radical. Semisimple FAs.}
\label{sec:solv-FA}{\ }

Let $\fG$ be a FA $\fG$, and define the {\it derived series of
ideals} inductively by\footnote{The solvability notion for
Lie algebras allows for various extensions when
moving to FAs, $n>2$, because the $n$-bracket has
more than two entries. For an $n$-Lie algebra the notion of
$k$-solvability was introduced by Kasymov \cite{Kas:87} by taking
$\fG^{(0,k)}=\fG\;,\;
\fG^{(m,k)}=[\fG^{(m-1,k)},\dots,\fG^{(m-1,k)},\fG,\dots,\fG]$,
where there are $k$ entries $\fG^{(m-1,k)}$ at the beginning of the
$n$-bracket. Filippov's solvability \cite{Filippov}, used above,
corresponds to $k$-solvability for $k=n$; $k$-solvability implies
$n$-solvability for all $k$ \cite{Ling:93}.}

\begin{equation*}
\fG^{(0)}=\fG \quad, \quad
\fG^{(1)}=[\fG^{(0)},\dots,\fG^{(0)}] \quad ,\; \dots \; , \quad
\fG^{m}=[\fG^{(m-1)},\dots,\fG^{(m-1)}] \quad ;
\end{equation*}
if $\fG^{(1)}=0$, $\fG$ is abelian.

\begin{definition}\cite{Filippov}{\ }

A FA $\fG$ is {\it solvable} if $\fG^{(m)}=0$ for some $m$; then,
$\fG^{(m-1)}$ is an abelian ideal.
\end{definition}

The same definition of solvability applies to ideals $\fh,\fh'$ of $\fG$; the sum
$\fh+\fh'$ of two solvable ideals is also a solvable ideal of $\fG$.
This means that any finite-dimensional $\fG$ admits a maximal
solvable ideal $\hbox{Rad}(\fG)$ and, further, that
$\fG/\hbox{Rad}(\fG)$ does not contain non-zero solvable ideals. \\

The {\it radical} $\hbox{Rad}(\fG)$ of a FA $\fG$ is the
maximal solvable ideal  of $\fG$. \\

A FA is called  {\it semisimple} when $\hbox{Rad}(\fG)=0$.\\

The following three theorems \cite{Ling:93} hold :

\begin{theorem}(Semisimple FAs)
\label{ssFAandInDev} {\  }

A finite-dimensional $n$-Lie algebra $\fG$ is semisimple iff it is
the direct sum of simple ideals,
\begin{equation}
\label{sum-simple}
     \mathfrak{G} = \bigoplus_{\mathfrak{s}=1}^k
     \mathfrak{G}_{(\mathfrak{s})}
     = \mathfrak{G}_{(1)}\oplus \dots \oplus
     \mathfrak{G}_{(k)} \ .
\end{equation}
where each ideal $\fG_{(\mathfrak{s})}$ is simple as an $n$-Lie algebra. Then,
Der$\,\fG$ is also semisimple and all derivations of $\fG$ are
inner, $\hbox{Der}\,\fG$=$\hbox{InDer}\,\fG$.
\end{theorem}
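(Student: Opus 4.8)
The plan is to prove the three assertions in turn, relying on one elementary bookkeeping fact about direct sums of ideals. First I would record that if $\fG=\bigoplus_{\mathfrak{s}} \fG_{(\mathfrak{s})}$ is a direct sum of ideals, then $[X_1,\dots,X_n]=0$ as soon as two of the entries lie in distinct summands: if $X_i\in\fG_{(\mathfrak{s})}$ and $X_j\in\fG_{(\mathfrak{t})}$ with $\mathfrak{s}\neq\mathfrak{t}$, the ideal property places the bracket simultaneously in $\fG_{(\mathfrak{s})}$ and in $\fG_{(\mathfrak{t})}$, hence in their zero intersection. Consequently $[\fG,\dots,\fG]=\bigoplus_{\mathfrak{s}}[\fG_{(\mathfrak{s})},\dots,\fG_{(\mathfrak{s})}]$, and more generally the $n$-bracket is computed summand by summand.

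For the implication ($\Leftarrow$) I would first show that every ideal $I$ of such a $\fG$ (with the $\fG_{(\mathfrak{s})}$ simple) is itself a partial sum $\bigoplus_{\mathfrak{s}\in S}\fG_{(\mathfrak{s})}$. Each $I\cap\fG_{(\mathfrak{s})}$ is an ideal of the simple algebra $\fG_{(\mathfrak{s})}$, hence $0$ or $\fG_{(\mathfrak{s})}$; letting $S$ collect the indices where it equals $\fG_{(\mathfrak{s})}$, one uses $\fG_{(\mathfrak{t})}=[\fG_{(\mathfrak{t})},\dots,\fG_{(\mathfrak{t})}]$ together with the bookkeeping fact to show that any $Y\in I$ with a nonzero $\fG_{(\mathfrak{t})}$-component forces $\fG_{(\mathfrak{t})}\subset I$, so that $\mathfrak{t}\in S$ and $I=\bigoplus_{\mathfrak{s}\in S}\fG_{(\mathfrak{s})}$. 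Since $\fG^{(1)}=[\fG,\dots,\fG]=\fG$, every nonzero partial sum has constant derived series and so is non-solvable; thus $\hbox{Rad}(\fG)=0$ and $\fG$ is semisimple.

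The implication ($\Rightarrow$) is the genuine obstacle. Assuming $\hbox{Rad}(\fG)=0$, every abelian (indeed every solvable) ideal is trivial. I would take a minimal nonzero ideal $I$; because $[I,\dots,I]$ is again an ideal contained in $I$, minimality and the absence of abelian ideals give $[I,\dots,I]=I$. The crux is to promote minimality of $I$ as an ideal of $\fG$ to simplicity of $I$ as an $n$-Lie algebra and, simultaneously, to split off a complementary ideal $\fG=I\oplus J$. This is exactly where the Lie-algebraic proof invokes Cartan's criterion and Killing-form orthogonality; for $n$-Lie algebras I would follow Ling \cite{Ling:93}, using the non-degeneracy of an invariant (Killing-type) trace form on a semisimple FA, in the sense of Kasymov \cite{Kas:87}, to set $J=I^{\perp}$, check that it is an ideal with $\fG=I\oplus J$, and deduce that the form restricts non-degenerately to $I$, forcing $I$ simple. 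Induction on $\dim\fG$ applied to the semisimple ideal $J$ then yields the decomposition \eqref{sum-simple}. I expect this splitting to be the hard part, since it is where the special structure of semisimple FAs (ultimately the classification of the simple ones) really enters.

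Finally, for the derivation statements I would exploit the decomposition just obtained. Given $D\in\hbox{Der}\,\fG$ and $Z\in\fG_{(\mathfrak{s})}$, write $Z$ as a sum of brackets of elements of $\fG_{(\mathfrak{s})}$ (possible since $\fG_{(\mathfrak{s})}=[\fG_{(\mathfrak{s})},\dots,\fG_{(\mathfrak{s})}]$); expanding $DZ$ by the derivation rule \eqref{gral-der} and discarding, via the bookkeeping fact, every term whose differentiated entry leaves $\fG_{(\mathfrak{s})}$, one gets $DZ\in\fG_{(\mathfrak{s})}$. Thus each simple summand is $D$-invariant and $D|_{\fG_{(\mathfrak{s})}}\in\hbox{Der}\,\fG_{(\mathfrak{s})}$, which is inner by Theorem \ref{simple-inder}. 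Since the inner derivations of $\fG_{(\mathfrak{s})}$ act as zero on the other summands, reassembling shows $D\in\hbox{InDer}\,\fG$, i.e.\ $\hbox{Der}\,\fG=\hbox{InDer}\,\fG$ and $\hbox{Der}\,\fG\cong\bigoplus_{\mathfrak{s}}\hbox{InDer}\,\fG_{(\mathfrak{s})}$. Each $\hbox{InDer}\,\fG_{(\mathfrak{s})}$ is a semisimple Lie algebra (for $n>2$ it is the $\hbox{InDer}$ of the unique simple FA, an orthogonal algebra, cf.\ Th.~\ref{th:simple-SO}), and a finite direct sum of semisimple Lie algebras is semisimple; hence $\hbox{Der}\,\fG$ is semisimple, completing the proof.
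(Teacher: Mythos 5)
The paper itself gives no proof of this theorem: it is quoted verbatim from Ling's thesis \cite{Ling:93} (together with Theorems \ref{th:Levi-FAdec} and \ref{th:redFAs}), so there is no in-paper argument to compare against. Judged on its own merits, your ($\Leftarrow$) direction and your treatment of the derivation statements are correct and complete relative to the results the paper does supply: the bookkeeping fact is exactly right, the partial-sum description of ideals is sound (the step ``$Y_{\mathfrak{t}}\neq 0$ forces $\fG_{(\mathfrak{t})}\subset I$'' works because a simple FA has trivial centre), perfectness of each summand kills solvability, and the $D$-invariance of each $\fG_{(\mathfrak{s})}$ combined with Theorem \ref{simple-inder} and the classification (Theorems \ref{th:FAsimple}, \ref{th:simple-SO}) yields $\hbox{Der}\,\fG=\hbox{InDer}\,\fG\cong\bigoplus_{\mathfrak{s}}\hbox{InDer}\,\fG_{(\mathfrak{s})}$, which is semisimple. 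One small simplification: once a splitting $\fG=I\oplus J$ into ideals is available, the simplicity of the minimal ideal $I$ follows from minimality alone, since by your bookkeeping fact any ideal of a direct summand is automatically an ideal of $\fG$; the ``form restricts non-degenerately to $I$'' step is then superfluous.

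The genuine gap is in ($\Rightarrow$), and it is not only that you outsource the crux to Ling: the mechanism you sketch is wrong-footed as stated. You propose $J=I^{\perp}$ with respect to an ``invariant Killing-type trace form,'' but for $n>2$ the Kasymov form $k$ of eq.~\eqref{KasCK} is a form on $(n-1)$-tuples, i.e.\ on $\wedge^{n-1}\fG$, not on $\fG$, so ``$I^{\perp}$'' has no immediate meaning; worse, Remark \ref{rem:Kas-simp} of this very paper shows that $k$, read as a bilinear form on $\wedge^{n-1}\fG$, is \emph{degenerate} for every semisimple non-simple FA (a fundamental object with two entries in different simple ideals gives the zero derivation), and appealing to that observation before the decomposition is established would in any case be circular. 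The non-degeneracy that actually holds for semisimple FAs is the slot-wise condition \eqref{intro7} of Kasymov's criterion, Theorem \ref{th:Kas-crit}, and the real content of Ling's proof is precisely the bookkeeping you elide: defining the orthogonal complement of $I$ slot-wise with respect to $k$, checking via invariance of the trace form that it is an ideal, showing $I\cap I^{\perp}$ is solvable (this needs Kasymov's Cartan-type \emph{solvability} criterion, not just the semisimplicity criterion) and hence zero, and counting dimensions to get $\fG=I\oplus I^{\perp}$. You flag the step honestly, which is to your credit, but since ($\Rightarrow$) is the only non-trivial implication of the theorem, the proposal as written proves the easy half and the corollaries while leaving the theorem's core to the citation.
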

The above statements directly extend to FAs the familiar Lie algebra ones
{\it e.g.}, that $ad\,\fg=Der\,\fg$ when  $\fg$ is semisimple.

\begin{theorem} (Levi decomposition of an $n$-Lie algebra) \cite{Ling:93}
\label{th:Levi-FAdec}
{\ }

Let $\fG$ be a finite-dimensional $n$-Lie algebra. Then, $\fG$
admits a Levi decomposition,
\begin{equation*}
\fG=\hbox{Rad}(\fG) \;+\!\!\!\!\!\!\supset  \;\fG_L \quad,
 \quad \hbox{Rad}(\fG)\cap \fG_L=0 \quad ,
\end{equation*}
where $\fG_L$ is a semisimple $n$-Lie subalgebra called the Levi
factor of $\fG$. Therefore, $\fG/\hbox{Rad}(\fG)$ is semisimple.
\end{theorem}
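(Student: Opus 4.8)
The plan is to follow the classical Levi--Mal'cev argument for Lie algebras, replacing the JI by the FI throughout and Lie algebra cohomology by its Filippov-algebra counterpart. First I would dispose of the easy half of the statement: since an extension of a solvable ideal by a solvable algebra is again solvable, the preimage under $\pi\colon\fG\to\fG/\Rad(\fG)$ of any solvable ideal of the quotient is a solvable ideal of $\fG$ containing $\Rad(\fG)$; maximality of the radical then forces it to coincide with $\Rad(\fG)$, so $\fG/\Rad(\fG)$ has no nonzero solvable ideal and is semisimple. It therefore remains to split the exact sequence $0\to\Rad(\fG)\to\fG\xrightarrow{\pi}\fG/\Rad(\fG)\to 0$, i.e.\ to produce a Filippov subalgebra $\fG_L\subset\fG$ mapped isomorphically onto $\fG/\Rad(\fG)$ by $\pi$; such a $\fG_L$ is automatically semisimple and, for dimensional reasons, complementary to $\Rad(\fG)$, yielding the claimed semidirect sum.

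The splitting I would establish by induction on $\dim\Rad(\fG)$, reducing to the case of an \emph{abelian} radical. The key remark is that each term of the derived series of $\fr:=\Rad(\fG)$ is an ideal not merely of $\fr$ but of all of $\fG$: if $y=[a_1,\dots,a_n]$ with $a_i\in\fr$, the FI \eqref{n-der} expands $[x_1,\dots,x_{n-1},y]$ into a sum of brackets $[a_1,\dots,[x_1,\dots,x_{n-1},a_i],\dots,a_n]$, and each inner bracket lies in $\fr$ because $\fr$ is an ideal, so the whole expression lies in $\fr^{(1)}=[\fr,\dots,\fr]$. Hence, when $\fr$ is not abelian, $0\neq\fr^{(1)}\subsetneq\fr$ is a $\fG$-ideal; passing to $\fG/\fr^{(1)}$, whose radical is $\fr/\fr^{(1)}$ of strictly smaller dimension, the inductive hypothesis supplies a Levi subalgebra there, whose preimage $\fk$ in $\fG$ has $\fk/\fr^{(1)}$ semisimple and hence $\Rad(\fk)=\fr^{(1)}$; a second application of the inductive hypothesis to $\fk$ then produces the desired $\fG_L\subset\fk\subset\fG$. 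This leaves only the abelian case.

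For abelian $\fr$ the space $\fr$ becomes a finite-dimensional module over the semisimple Filippov algebra $\fG/\fr$ through the adjoint action of fundamental objects, and the obstruction to splitting is a class in a second Filippov-algebra cohomology group $H^2(\fG/\fr,\fr)$: choosing a linear section $\sigma$ of $\pi$, the fully antisymmetric map $\omega(\bar X_1,\dots,\bar X_n)=[\sigma\bar X_1,\dots,\sigma\bar X_n]-\sigma[\bar X_1,\dots,\bar X_n]$ is, by the FI, a $2$-cocycle whose class is independent of $\sigma$ and which vanishes exactly when a subalgebra section exists. The main obstacle is thus the vanishing $H^2(\fG/\fr,\fr)=0$ for a semisimple FA acting on an arbitrary finite-dimensional module, which is precisely the Filippov-algebra extension of Whitehead's Lemma~\ref{lem:White} announced in the abstract; I would deduce it from the invertibility of a quadratic Casimir on each nontrivial irreducible summand together with the trivial-coefficient vanishing $H^2_0=0$, invoking complete reducibility of finite-dimensional modules over semisimple FAs. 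Granting these cohomological inputs (to be developed in the later sections on $n$-Lie cohomology, cf.\ \cite{Ling:93}), $\omega$ is a coboundary, the section can be corrected to a homomorphism, and the Levi factor $\fG_L$ is obtained.
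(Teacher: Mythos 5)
First, note that the paper itself gives no proof of this theorem: it is quoted from Ling's thesis \cite{Ling:93}, so your proposal can only be judged on its internal soundness. Your global skeleton (semisimplicity of $\fG/\Rad(\fG)$, induction on $\dim\Rad(\fG)$ via the derived series, cohomological splitting in the base case) is the classical Levi--Mal'cev scheme, and the first two stages are fine; in particular your FI argument that $\fr^{(1)}=[\fr,\dots,\fr]$ is an ideal of all of $\fG$ is correct. The genuine gap is in the base case, and it is specific to $n\geq 3$: ``abelian'' produced by your induction means \emph{Filippov}-abelian, $[\fr,\dots,\fr]=0$ with all $n$ entries in $\fr$, and for $n\geq 3$ this does \emph{not} imply $[\fG,\dots,\fG,\fr,\fr]=0$. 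This breaks your reduction in two places. (i) The ``adjoint action of fundamental objects'' of $\fG/\fr$ on $\fr$ is not well defined: replacing a lift $\sigma\bar X_i$ by $\sigma\bar X_i+r$, $r\in\fr$, changes $[\sigma\bar X_1,\dots,\sigma\bar X_{n-1},a]$ by a term in $[\fG,\dots,\fG,\fr,\fr]$, which need not vanish; indeed the Kasymov module structure (Def.~\ref{FA-repres}, and the construction on $\fG\oplus V$ recalled after it) explicitly requires $[\fG,\dots,\fG,V,V]=0$, not just $V$ abelian \cite{Kas:87}. (ii) For the same reason the extension of $\fG/\fr$ by $\fr$ is not described by your single cochain $\omega$ plus a module: there is a whole tower of structure maps recording brackets with $k=2,\dots,n-1$ entries in $\fr$, each constrained by the FI. Your derived-series induction cannot remove these terms, because Kasymov $k$-solvability for $k<n$ is strictly \emph{stronger} than Filippov solvability (see the footnote in Sec.~\ref{sec:solv-FA}); to reach a situation where the module-plus-cocycle picture applies you would have to refine the filtration, e.g.\ by the $2$-derived series $[\fr,\fr,\fG,\dots,\fG]$, and show the relevant pieces are still solvable $\fG$-ideals --- none of which your sketch addresses.

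The cohomological input you invoke is also not quite what the paper announces. Its Whitehead-type theorems (Thms.~\ref{th:s-s-FA-ext}, \ref{th:rigFAsim}, from \cite{Az-Iz:09}) concern the trivial-action complex ($H^1_0$, central extensions) and the adjoint deformation complex ($H^1_{ad}$), not the vanishing of $H^1_\rho$ of a semisimple FA with coefficients in an arbitrary Kasymov module for the extension-adapted coboundary that your splitting needs (beware also the indexing: in the paper's counting a cochain with $n$ arguments is a \emph{one}-cochain, so your ``$H^2(\fG/\fr,\fr)$'' is the paper's $H^1_\rho$). Moreover, a direct transplant of the Casimir argument of Lemma~\ref{lem:White} is obstructed: Remark~\ref{rem:Kas-simp} shows that the Kasymov form, read as a bilinear form on $\wedge^{n-1}\fG$, is degenerate for semisimple non-simple FAs, and Remark~\ref{rem:GLAvsFA} indicates that the paper's Whitehead-type results are proved instead by exploiting Ling's classification of the simple FAs rather than by an averaging operator. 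So the load-bearing vanishing statement in your last paragraph is not available off the shelf, and ``invertibility of a quadratic Casimir plus complete reducibility'' is a Lie-theoretic reflex that does not survive the passage to $n\geq 3$ without substantial additional work.
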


\begin{theorem}(Reductive FAs and semisimplicity)\cite{Filippov}
\label{th:redFAs}{\ }

As for ordinary Lie algebras, an $n$-Lie algebra
is {\it reductive} if its radical
$\mathrm{Rad}\,{\fG}$ is equal to its centre $Z(\fG)$; then
$\fG=Z(\fG)\oplus \fG_L$. It then follows
from the previous theorems that $\fG$ is reductive
{\it iff} the Lie algebra $\mathrm{InDer}\,\fG$ of inner derivations
is semisimple.
\end{theorem}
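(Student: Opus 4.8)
The plan is to prove the two implications of the equivalence separately, using as external inputs the Levi decomposition of Th.~\ref{th:Levi-FAdec} and the characterisation of semisimple FAs of Th.~\ref{ssFAandInDev} (in particular that a semisimple $\fG$ has $\hbox{Der}\,\fG=\hbox{InDer}\,\fG$ semisimple). Throughout I use that $Z(\fG)$ is a solvable (indeed abelian) ideal, so that $Z(\fG)\subseteq\hbox{Rad}(\fG)$ always; reductivity is precisely the reverse inclusion.

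For the direction ``reductive $\Rightarrow\hbox{InDer}\,\fG$ semisimple'', I assume $\hbox{Rad}(\fG)=Z(\fG)$, so that the semidirect Levi decomposition collapses to the direct sum of ideals $\fG=Z(\fG)\oplus\fG_L$ with $\fG_L$ semisimple. The key observation is that any fundamental object $\mathscr{X}=(X_1,\dots,X_{n-1})$ having one argument in $Z(\fG)$ satisfies $\ad_\mathscr{X}=0$; by multilinearity this lets me replace each $X_i$ by its $\fG_L$-component without altering $\ad_\mathscr{X}$. Since $\fG_L$ is an ideal and $\ad_\mathscr{X}$ annihilates $Z(\fG)$, restriction to $\fG_L$ yields a Lie-algebra isomorphism $\hbox{InDer}\,\fG\cong\hbox{InDer}\,\fG_L$, and the right-hand side is semisimple by Th.~\ref{ssFAandInDev}.

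For the converse I assume $L:=\hbox{InDer}\,\fG$ is semisimple and set $R:=\hbox{Rad}(\fG)$. I would introduce the subspace $\mathfrak{a}\subseteq L$ spanned by the classes of those $\ad_\mathscr{X}$ whose fundamental object has at least one argument in $R$. Two facts are needed. First, $\mathfrak{a}$ is an \emph{ideal} of $L$: this follows because $R$ is an ideal of $\fG$, for bracketing an $\ad_\mathscr{X}$ carrying an $R$-entry with an arbitrary inner derivation produces, via the derivation property that encodes the FI, a sum of inner derivations whose fundamental objects still carry an $R$-entry (the transformed entry stays in $R$ since $R$ is an ideal, and the untouched entries are retained). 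Second, $\mathfrak{a}$ is \emph{solvable}, which is where the solvability of $R$ must be exploited. Granting these, semisimplicity of $L$ forces $\mathfrak{a}=0$, i.e.\ $[R,\fG,\dots,\fG]=0$, which says exactly $R\subseteq Z(\fG)$; combined with $Z(\fG)\subseteq R$ this gives $R=Z(\fG)$ and hence reductivity.

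The main obstacle is the solvability of $\mathfrak{a}$. The naive filtration of $\mathfrak{a}$ by the number of $R$-entries does not close up, because the $n$-bracket of an $R$-element with general elements need not descend into the derived ideal $R^{(1)}=[R,\dots,R]$, so the bracket of two elements of $\mathfrak{a}$ need not land deeper. The correct argument tracks the action of $\mathfrak{a}$ on the flag of $\fG$-ideals $\fG\supseteq R\supseteq R^{(1)}\supseteq\cdots\supseteq R^{(m)}=0$ furnished by the derived series of $R$ (each $R^{(j)}$ being an ideal of $\fG$ by the FI) and uses it to bound the derived series of $\mathfrak{a}$; this is essentially the content of Ling's structural analysis \cite{Ling:93}, on which the three quoted theorems already rely. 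With solvability in hand the equivalence follows, and the displayed splitting $\fG=Z(\fG)\oplus\fG_L$ is just the reductive case of Th.~\ref{th:Levi-FAdec}.
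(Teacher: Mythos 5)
Your first implication is complete and correct: centrality kills any $\ad_{\mathscr{X}}$ with an entry in $Z(\fG)$, the Levi decomposition of Th.~\ref{th:Levi-FAdec} with $\mathrm{Rad}\,\fG=Z(\fG)$ collapses to the direct sum of ideals $\fG=Z(\fG)\oplus\fG_L$, and restriction gives $\mathrm{InDer}\,\fG\cong\mathrm{InDer}\,\fG_L$, semisimple by Th.~\ref{ssFAandInDev}. The skeleton of your converse is also the right reduction: your FI computation showing that $\mathfrak{a}$ (the span of the $\ad_{\mathscr{X}}$ with an entry in $R=\mathrm{Rad}\,\fG$) is an ideal of $L=\mathrm{InDer}\,\fG$ is sound, $\mathfrak{a}=0$ is indeed equivalent to $R\subseteq Z(\fG)$, and a semisimple $L$ admits no nonzero solvable ideal. (For calibration: the paper gives no proof at all here -- the theorem is quoted from Filippov with the remark that it follows from the previously stated structure theorems -- so you have attempted strictly more than the text does.)

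The genuine gap is the solvability of $\mathfrak{a}$, which carries the entire nontrivial content of the converse and which you do not establish: you write ``granting these'' and then defer to Ling's structural analysis, i.e.\ to (the source of) essentially the result being proved. Moreover, the repair you sketch fails for the same reason as the naive filtration you yourself rejected. An element $\ad_{(r,X_2,\dots,X_{n-1})}$ of $\mathfrak{a}$ with a \emph{single} entry $r\in R^{(j)}$ maps $R^{(j)}$ into $R^{(j)}$ (since $R^{(j)}$ is an ideal of $\fG$ -- that part of your claim is fine) but not into $R^{(j+1)}$: in Filippov's solvability, the $k=n$ notion used to define $\mathrm{Rad}\,\fG$, one has $R^{(j+1)}=[R^{(j)},\dots,R^{(j)}]$ with \emph{all} $n$ entries in $R^{(j)}$, whereas your derivation supplies only one. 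Likewise, a commutator of two elements of $\mathfrak{a}$ whose fundamental objects each carry one $R^{(j)}$-entry yields, via $\ad_{\mathscr{Y}\cdot\mathscr{X}}$, fundamental objects with at most two entries in $R^{(j)}$, which again need not lie deeper in the flag. So the flag $\fG\supseteq R\supseteq R^{(1)}\supseteq\cdots\supseteq R^{(m)}=0$ is not strictly decreased either by $\mathfrak{a}$ or along its derived series, and no bound on the derived length of $\mathfrak{a}$ results. Closing this requires real structure theory -- e.g.\ Kasymov's $k$-solvability (for which two-entry brackets do descend when $k=2$) together with its relation to the $n$-solvable radical, or Filippov's original analysis of the multiplication algebra -- and that is exactly what the paper's citation stands in for. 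As written, your converse is a correct reduction to a lemma that remains unproven, and the route you indicate for that lemma breaks down at the point you identified.
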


\subsection{Examples of Filippov algebras}
\label{gral-FA-ex}
{\ }

In its original paper \cite{Filippov}, Filippov already provided
many examples of $n$-Lie algebras, solvable and simple (in fact, all
simple ones). The $n=3$ FAs on $\mathbb{R}^4$ were given in
\cite{Gau:96}. We present explicitly here and in the next
subsections a few additional FAs of interest.

\begin{example}({\it Matrix realizations of general FAs})
\label{ex:FA-tr}

  In terms of matrices, an example of a 3-algebra is
provided by \cite{Awa-Li-Mi-Yo:99} (see also \cite{Ar-Ma-Sil:09})
\begin{equation}
\label{3-alg2} [A,B,C]=tr(A)[B,C]+ tr(B)[C,A]+tr(C)[A,B] \;.
\end{equation}
This expression may be generalized to the arbitrary $n$-case
by \cite{Awa-Li-Mi-Yo:99}
\begin{equation}
\label{n-alg-matr}
 [A_1,A_2,\dots,A_n]=
 \Sigma_{i=1}^n (-1)^{i-1} \langle A_i\rangle [A_1,A_2,\cdots,
 {\hat{A}}_i,\cdots,A_n]\quad,
\end{equation}
where the $<A_i>$ are commuting numbers associated to the $A_i$
(`traces') and $\hat{A}_i$ is absent in the ($n$-1)-bracket. If the
($n-1$)-bracket is skewsymmetric and satisfies the corresponding
($n-1$)-order FI, then the $n$-bracket is also skewsymmetric,
satisfies the FI and thus defines a FA.
\end{example}

For other matrix realizations of three-brackets see {\it e.g.},
\cite{Ho-Hou-Ma:08,Sochi:08,AA-SJ-Sim:08} and Th.~\ref{le:FA-Cliff}
below.

\subsection{The simple $n$-Lie algebras}
\label{sec:simple-n-Lie}
{\ }

\subsubsection{The euclidean $A_{n+1}$ algebras} \cite{Filippov}

Let us first consider the simple case of the Filippov three-algebra
$A_4$, which is defined on a four-dimensional real euclidean vector
space $V$. Let $v_1^a,v_2^a,v_3^a$ ($a=1,2,3,4$) be the coordinates
of three vectors $v_1,v_2,v_3\in V$ in a basis $\{\be_i\}$ of $V$.
The 3-bracket of the three vectors is then defined by the `vector
product' of $v_1,v_2,v_3$,

\begin{equation}
\label{3-bra} [v_1,v_2,v_3]=\left|
    \begin{array}{cccc}
    \be_1 & \be_2 & \be_3 & \be_4 \\
    v_1^1 & v_1^2 & v_1^3 & v_1^4 \\
    v_2^1 & v_2^2 & v_2^3 & v_2^4 \\
    v_3^1 & v_3^2 & v_3^3 & v_3^4
\end{array}
\right| \quad ,
\end{equation}
which is obviously skewsymmetric. For three basis vectors, say
$\be_1,\be_3,\be_4$, this trivially gives
\begin{equation}
[\be_1,\be_3,\be_4]=\left|
    \begin{array}{cccc}
    \be_1 & \be_2 & \be_3 & \be_4 \\
    1 & 0 & 0 & 0 \\
    0 & 0 & 1 & 0 \\
    0 & 0 & 0 & 1
\end{array}
\right| = -\be_2\quad  .
\end{equation}
The above three-bracket may also expressed by\footnote{We will use
here $\{\be_a\}$ rather than $\{X_a\}$ for the basis of the $A$ FAs
as in \cite{Filippov}.}
\begin{equation}
\label{3-basis}
[\be_{a_1},\be_{a_3},\be_{a_3}]= -\epsilon_{a_1 a_2 a_3 a_4} \be_{a_4}
\quad \mathrm{or} \quad
 [\be_1,\be_{a-1},\hat{\be}_a, \be_{a+1},
\be_4]=(-1)^{a+1}\be_a \;,\; a=1,2,3,4\;,
\end{equation}
where the hatted $\hat{e}_a$ is absent. To check that the above
three-bracket (and similarly for higher order ones) satisfies the FI
one may use the Schouten identities technique as in
Ex.~\ref{ex:Nam-FA} below. Since all three-brackets (\ref{3-bra})
follow from (\ref{3-basis}) by linearity, the above shows that the
euclidean four-vector space becomes the four-algebra $A_4$.

The $A_4$ case generalizes easily to the $n$-Lie algebra $A_{n+1}$
defined on a ($n+1$)-dimensional euclidean space of ordered basis
$\{\be_a\}, a=1\dots,n+1$. The `vector product' of $n$  vectors
$v_1,v_2,\cdots,v_n$, $v_s=v_s^a \be_a$, is defined by the
determinant
\begin{equation}
\label{n-prod} [v_1,v_2,\cdots, v_n]=\left|
    \begin{array}{ccccc}
    \be_1 & \be_2 & \cdots & \be_n & \be_{n+1} \\
    v_1^1 & v_1^2&\cdots & v_1^n & v_1^{n+1} \\
    v_2^1 & v_2^2 &\cdots & v_2^n & v_2^{n+1} \\
    \cdot  & \cdot &\cdots & \cdot  & \cdot  \\
    v_n^1 & v_n^2&\cdots & v_n^n & v_n^{n+1}
\end{array}
\right| \quad ,
\end{equation}
which defines the $n$-bracket.
In terms of the elements of the basis $\{e_a\}$ of the vector
space, the algebra is defined by
\begin{equation}
\label{basisn-bra}
 [\be_1, \dots, \be_{a-1}, \hat{\be}_a, \be_{a+1},\dots,
\be_{n+1}]= ( -1)^{a+1} \be_a \qquad (a=1,\dots,n+1)\quad ,
\end{equation}
The above expression has a different sign factor than
eq. (2) in \cite{Filippov} and, further, it does not depend on $n$
due to the different determinant arrangement. Equivalently,
\begin{equation}
\label{n-vec-prod}
 [v_1,v_2,\cdots, v_n] = \be_b\,\epsilon^b{}_{a_1\dots i_n}\,v^{a_1}_1\,v^{a_2}_2\,\cdots
 v^{a_n}_n \quad .
\end{equation}

  The euclidean $A_{n+1}$ algebra above is simple \cite{Filippov}.
In fact, all finite simple $n$-Lie algebras are known and, in
contrast with the plethora of simple Lie algebras provided by the
Cartan classification, the $n>2$ FAs are easily characterized. It
was shown by Filippov \cite{Filippov} that every ($n+1$)-dimensional
$n$-Lie algebra is simple and isomorphic  to one of the form in
eq.~\eqref{simp-n-FA} below. Further, Ling showed \cite{Ling:93}
that every finite-dimensional simple $n$-Lie algebra is of dimension
($n+1$), thus completing the classification of simple FAs:

\begin{theorem}(Classification of simple FAs)
\label{th:FAsimple}{\ }

A simple real Filippov $n$-algebra ($n\geq
3$) is isomorphic to one of the ($n+1$)-dimensional Filippov
$n$-algebras defined by
\begin{equation}
\label{simp-n-FA}
 [\be_1\dots \widehat{\be_a} \dots \be_{n+1}] =
(-1)^{a+1} \varepsilon_a \be_a \quad \mathrm{or} \quad
[\be_{a_1}\dots \dots \be_{a_n}] =
(-1)^n \varepsilon_{a_{n+1}}\epsilon_{a_1 \dots a_{n+1}} \be_{a_{n+1}} \quad ,
\end{equation}
where the $\varepsilon_a$ are signs (if the FA is not real, the
signs may be absorbed by redefining its basis).
\end{theorem}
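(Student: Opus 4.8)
The plan is to assemble the statement from its two historically separate ingredients: Filippov's determination of all simple $n$-Lie algebras of dimension $n+1$ \cite{Filippov}, and Ling's theorem \cite{Ling:93} that a finite-dimensional simple FA with $n\ge 3$ can have no other dimension. Since the dimension bound is the real content, I would isolate it and treat the classification within dimension $n+1$ as the (lighter) bookkeeping that converts the bound into the explicit list \eqref{simp-n-FA}.

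For the classification in dimension $n+1$, I would fix a basis $\{\be_a\}_{a=1}^{n+1}$ and record the bracket through its structure constants $f_{a_1\dots a_n}{}^d$ (eq.~\eqref{FAstruct}), fully skew in the lower indices. Because only $n$ of the $n+1$ available indices are antisymmetrised, the Levi-Civita symbol permits the dualisation $f_{a_1\dots a_n}{}^d=\epsilon_{a_1\dots a_n c}\,h^{cd}$, trading the whole array of structure constants for a single $(n+1)\times(n+1)$ matrix $h^{cd}$. Substituting this into the Filippov identity \eqref{FIstrconst} and contracting the Levi-Civita symbols (explicitly checkable in the $n=3$ case \eqref{coord3FI}) collapses the FI to the single requirement that $h$ be symmetric. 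Simplicity then forces $h$ to be non-degenerate, since the kernel of $h$ is readily seen to be an ideal and $[\fG,\dots,\fG]\neq\{0\}$ excludes $h=0$. A real symmetric non-degenerate $h$ is brought by a congruence $\be_a\mapsto\be'_a$ to a diagonal form with entries $\pm1$ by Sylvester's law of inertia; the diagonal signs are precisely the $\varepsilon_a$, and the bracket becomes \eqref{simp-n-FA}. Over $\CC$ every sign is a square and can be absorbed into the basis, giving the parenthetical remark in the statement.

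For the dimension bound, the hard part, I would first make $\fG$ metric: the radical of any invariant symmetric form is an ideal, so once a non-zero invariant form is produced (a Killing-type trace form built from the inner-derivation algebra serves), simplicity forces it to be non-degenerate. Invariance together with the total skewness of the $n$-bracket then renders the fully lowered structure constants $F_{a_1\dots a_{n+1}}$ totally antisymmetric in all $n+1$ slots, i.e.\ a single element $F\in\wedge^{n+1}\fG$. By Theorems~\ref{simple-inder} and \ref{ssFAandInDev} the inner derivations form a semisimple Lie algebra $\fg=\mathrm{InDer}\,\fG$ (simplicity gives trivial centre, hence reductivity, so that Theorem~\ref{th:redFAs} applies), and since an invariant subspace of $\fG$ would be an ideal, $\fg$ acts irreducibly and, being form-preserving, orthogonally on $\fG$, so $\fg\subseteq\fso(\fG)$.

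The crux, and the step I expect to be the main obstacle, is converting the FI into a sharp bound on $\dim\fG$. Rewritten for the totally antisymmetric $F$, the quadratic constraint \eqref{FIshort-coor} is a Pl\"ucker-type relation, and I would argue that it forces $F$ to be decomposable, $F=v_1\wedge\dots\wedge v_{n+1}$. Decomposability means the bracket is supported on the $(n+1)$-dimensional subspace $W=\mathrm{span}(v_1,\dots,v_{n+1})$, with $W^\perp$ annihilated; irreducibility then forbids $W^\perp\neq 0$ and forces $\fG=W$, whence $\dim\fG=n+1$ (the condition $[\fG,\dots,\fG]\neq\{0\}$ rules out the degenerate collapse). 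Establishing decomposability rigorously out of the FI is the delicate point, and is exactly where Ling's detailed module-theoretic analysis \cite{Ling:93} --- exploiting the irreducible orthogonal $\fg$-module structure of $\fG$ and the $\fg$-module surjection $\wedge^{n-1}\fG\to\fg$ --- is needed. With the bound $\dim\fG=n+1$ in hand, the classification of the previous paragraph completes the proof.
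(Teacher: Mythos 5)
The paper contains no self-contained proof of Theorem~\ref{th:FAsimple}: it assembles the statement exactly as you propose, citing \cite{Filippov} for the classification in dimension $n+1$ and \cite{Ling:93} for the bound $\dim\fG=n+1$ (the Pl\"ucker-relation route appears only as an aside, for the Euclidean $n=3$ case of \cite{Pap:08}). So your skeleton matches the paper; the issue is inside the step you present as routine bookkeeping. After dualising $f_{a_1\dots a_n}{}^d=\epsilon_{a_1\dots a_n c}\,h^{cd}$, the Filippov identity does \emph{not} collapse to the symmetry of $h$. Explicit counterexample for $n=3$, $\dim=4$: take $h$ purely antisymmetric with $h^{12}=-h^{21}=1$ and all other entries zero, so that the only nonvanishing basic brackets are $[\be_1,\be_3,\be_4]=-\be_1$ and $[\be_2,\be_3,\be_4]=-\be_2$. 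A direct check shows that every $ad_{\be_a,\be_b}$ is a derivation of this bracket, so the FI holds even though $h$ is antisymmetric; the algebra is of course not simple ($\mathrm{span}(\be_1,\be_2)$ is an ideal) and belongs to Filippov's list of non-simple four-dimensional 3-Lie algebras. This is the exact analogue of the $n=2$ Bianchi situation, where Jacobi imposes only the compatibility $Sa=0$ between the symmetric part $S$ of $h$ and the vector $a$ dual to its antisymmetric part, rather than $a=0$. Your chain ``FI $\Rightarrow h$ symmetric, then simplicity $\Rightarrow h$ nondegenerate'' therefore has its first link broken: the correct order is to use simplicity (no proper ideals, $[\fG,\dots,\fG]=\fG$) together with the quadratic FI compatibility to kill the antisymmetric part and force nondegeneracy of the symmetric part, and only then invoke Sylvester to reach \eqref{simp-n-FA}.

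Two softer gaps in your dimension-bound half. First, the invariant metric does not come for free: Kasymov's trace form $k(\mathscr{X},\mathscr{Y})=\mathrm{Tr}(ad_{\mathscr{X}}ad_{\mathscr{Y}})$ of eq.~\eqref{KasCK} is a bilinear form on $\wedge^{n-1}\fG$, not on $\fG$ (precisely the point of Remark~\ref{rem:Kas-simp}), and a real irreducible $\mathrm{InDer}\,\fG$-module need not be self-dual, so ``a Killing-type trace form built from the inner-derivation algebra serves'' asserts rather than proves the existence of a nonzero invariant symmetric bilinear form on $\fG$. Second, the lowered FI \eqref{str-cons-inv} is a metric-contracted identity and not literally the Pl\"ucker relation; the implication FI $\Rightarrow$ decomposability of $F$ is established in the sources the paper cites only in the positive-definite setting \cite{Pap:08,Ga-Gu:08}. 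You concede this and defer the crux to Ling's module-theoretic analysis, which is legitimate and is in fact all the paper itself does; but it means the only portion of the theorem your proposal actually proves is the dimension-$(n+1)$ classification, and that portion, as written, contains the error identified above (repairable along the lines indicated).
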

If all $\varepsilon_a=1$, the above algebras are the euclidean
$A_{n+1}$ FAs; if there are minus signs, the ($n+1$)-Lie algebras
are Lorentzian. For instance, $n=3$ and $\varepsilon_a=1$ defines
the algebra $A_4$; if there is one minus sign, eq.~\eqref{simp-n-FA}
characterizes $A_{1,3}$ and, in general and for arbitrary $n$, it
defines the simple pseudoeuclidean Filippov algebras $A_{p,q}$ with
$p+q=n+1$. Note that we might equally well have used in
eq.~\eqref{simp-n-FA} the ${\epsilon_{a_1\dots a_n}}^a$ without the
$\varepsilon_a$ in the $r.h.s.$ by taking ${\epsilon_{a_1\dots
a_n}}^a= \eta^{ab}\epsilon_{a_1\dots a_n b}$ where $\epsilon_{1\dots
n (n+1)}=+1$ and $\eta$ is a $(n+1)\times(n+1)$ diagonal metric with
$+1$ and $-1$ in the places indicated by the $\varepsilon_a$'s. We
shall keep nevertheless the customary $\varepsilon_a$ factors as in
\cite{Filippov}.

It is easy to check, for instance, that if the basis
$\{ \be_a \}$ ($a$=1,2,3,4) defines the real euclidean three-algebra
$A_4$, the (complex) redefinitions $\be_1'=-i\be_1\,,\,
\be_2'=i \be_2\,,\, \be'_3=-\be_3\,,\,\be'_4=\be_4$
define a basis $\{ \be'_a \}$ for the real $A_{1,3}$ Lorentzian
three-algebra. Thus, in general (and since the real orthogonal
and Lorentzian FAs have the same complex form)
the {\it complex} simple $n$-Lie algebras are given by the
$n$-brackets \eqref{simp-n-FA} above without any $\varepsilon_a$ signs.
\medskip

Ling's theorem above shows that the class of finite, simple
$n$-Lie algebras is very restricted since there is essentially one simple
finite-dimensional FA for $n>2$. It has recently been shown \cite{Can-Kac:10}
that the class of simple linearly compact $n$-Lie algebras contains
four types for $n>2$, the $A_{n+1}$ FAs (called $O^n$
in \cite{Can-Kac:10}) plus three infinite-dimensional $n$-Lie algebras.
Further, for $n>2$ there are no simple linearly compact $n$-Lie superalgebras
which are not $n$-Lie algebras.

For the classification of the $(n+1)$- and $(n+2)$-dimensional $n$-Lie
algebras see \cite{Ba-So-Zha:10}.

\begin{remark}
\label{rem:GLAvsFA}
{\ }

 It is worth mentioning at this stage an
important difference between the simple FAs and the GLAs in
Th.~\ref{th:HOsimple}, which are given by Lie algebra cohomology
cocycles and have an underlying {\it Lie} group manifold. Since the
only three-dimensional simple compact Lie algebra is $su(2)$, and
has $\epsilon_{ijk}$ as its structure constants, the above theorem
states that all the euclidean simple FAs are just direct
generalizations to dimension $n+1$, $n\geq 3$, of the $n=2$ $su(2)$
Lie algebra, and that their structure constants are given by the
fully antisymmetric tensor $\epsilon_{a_1\dots a_{n+1}}$. By adding
the corresponding minus signs, all the $n\geq 3$ Lorentzian FAs
algebras are, similarly, direct generalizations of $so(1,2)$.

This simple observation hints at the rigidity of the $n\geq 3$
simple FAs and is behind the proofs \cite{Az-Iz:09} of
Thms.~\ref{th:s-s-FA-ext} and \ref{th:rigFAsim}.
\end{remark}

The simple FAs admit a realization in terms of the matrices of a
Clifford algebra by

\begin{lemma}
(GLA multibracket realization of the simple FAs)
\label{le:FA-Cliff}
{\ }

All simple real $n$-Lie algebras can be realized in terms
of even multibrackets (eq.~\eqref{multic})
involving the Dirac matrices of the Cifford algebra of
an even D-dimensional vector space.
\begin{proof}
{\ }

{\it a) $n$ odd:}

Here $D=n+1$. It will be sufficient to consider the $n$ odd euclidean
simple FA
\begin{equation}
\label{n-FA-eucl}
[\be_{a_1}\dots \be_{a_n}] = -\epsilon_{a_1\dots a_{n+1}} \be_{a_{n+1}}
\end{equation}
for the $n=3,\, D=4$ case. Let $\{\gamma_a \}$, $a=1,\dots,4$
be the gamma matrices
$\{\gamma_a,\gamma_b\}=2\delta_{ab}$ of the four-dimensional
euclidean space. The $\gamma_5$ matrix is given by
$\gamma_5 \equiv \gamma_1\gamma_2\gamma_3\gamma_4$ and squares
to one, $\gamma_5^2=1$. Then, the three-bracket (see \cite{Ba-La:07a})
\begin{equation}
\label{3FA-Cliff}
[\gamma_a,\gamma_b,\gamma_c] :=\frac{1}{4!}[\gamma_a,\gamma_b,\gamma_c,\gamma_5] \quad,
\end{equation}
defined in terms of the GLA multibracket
of eq.~\eqref{multic}, provides a gamma matrices realization of the
euclidean FA $A_4$.

To see that the above defines the three-bracket of
$A_4$, it is sufficient to notice that the full antisymmetrization of
the product $\gamma_5\gamma_a\gamma_b\gamma_c$ gives
\begin{equation*}
[\gamma_5,\gamma_{a_1},\gamma_{a_2},\gamma_{a_3}]= 4\gamma_5\,[\gamma_{a_1},\gamma_{a_2},\gamma_{a_3}]=
4\gamma_5\,3!\gamma_5\epsilon_{a_1 a_2 a_3 a_4}\gamma_{a_4} = 4!\,\epsilon_{a_1 a_2 a_3 a_4}\gamma_{a_4}\quad,
\end{equation*}
where the second bracket is also given by the full antisymmetrization of
its three entries. Therefore, the three-bracket defined by
\begin{equation}
\label{s-odd-FA-fromGLA}
[\gamma_{a_1},\gamma_{a_2},\gamma_{a_3}] :=[\gamma_{a_1},\gamma_{a_2},\gamma_{a_3},\gamma_5]' =
-\epsilon_{a_1 a_2 a_3 a_4}\gamma_{a_4}   \quad ,
\end{equation}
where the prime indicates that the skewsymmetrization of the
four-dimensional multibracket is taken there with unit weight,
realizes\footnote{For $n=3$, a gamma matrices realization of $A_4$
may be given (see {\it e.g.} \cite{Cher-Sa:08}) in terms of ordinary
two-brackets by means of the double commutator
$[[\gamma_a,\gamma_b]\gamma_5,\gamma_c]$, which is $(a,b,c)$
skewsymmetric; it is equivalent to the one above since
$3![[\gamma_a,\gamma_b]\gamma_5,\gamma_c]= [\gamma_5,
\gamma_a,\gamma_b,\gamma_c]$.} the simple euclidean algebra $A_4$
with basis $\be_a=\gamma_a$, eq.~\eqref{n-FA-eucl}.
Eq. \eqref{3FA-Cliff}, that realizes the 3-bracket of a FA in
terms of a multibracket of order four, uses the type of
bracket that appears in the Basu-Harvey equation \cite{Bas-Har:05}
(see also \cite{Ba-La:06,Ba-La:07b} in the context of the BL model),
to be dicussed in Sec~\ref{sec:BLG+BH}.

 Moving to arbitrary odd $n$ simply requires
an even $D=(n+1)$-dimensional space and a trivial generalization of
eq.~\eqref{s-odd-FA-fromGLA} involving a $D$-dimensional
multibracket $[\gamma_{a_1},\dots,\gamma_{a_n},\gamma_{D+1}]$,
where $\gamma_{D+1}$ is the `gamma five' matrix of the $D$ even
Clifford algebra. The even Lorentzian simple algebras are obtained similarly,
by replacing $\delta_{ab}$ by the Minkowski metric $\eta_{\mu\nu}$,
$\epsilon_{a_1 a_2 a_3 a_4}$ by $\epsilon_{\mu_1 \mu_2 \mu_3\mu_4}$, etc.

It is clear why the above construction does not work for even $n$:
the `$\gamma_5$' of an odd-dimensional space is a scalar matrix and
{\it e.g.} $[\gamma_5,\gamma_a,\gamma_b,\gamma_c]=0$. It is
sufficient in this case, however, to consider even multibrackets
with all the $n$ entries free.

{\it b) $n$ even}:

  For $n$ even, the realization of $A_{n+1}$ is given in terms by
the matrices of the $D=n$ Clifford algebra defined by
$\{\gamma^a,\gamma^b \} = (-1)^{n/2} 2\, \delta^{ab}$ , $a=1,\dots , n$,
plus the matrix $\gamma^{n+1} = \gamma^1 \dots \gamma^n$. Taking the
$n+1$ matrices $\gamma^A= (\gamma^a, \gamma^{n+1})$ as the basis of
the ($n+1$)-dimensional euclidean FA vector space, it follows that
\begin{equation}
\label{FGLA}
  [ \gamma^{A_1}, \dots , \gamma^{A_n} ]' =
 \epsilon^{A_1 \dots A_{n+1}}
\gamma^{A_{n+1}} \quad   (A=1,\dots n+1) \; ,
\end{equation}
where the multibracket is again that of eq.~\eqref{multic} but with
weight one antisymmetrization as before.
Note that by taking the bracket \eqref{FGLA} with its last entry
fixed to be  {\it e.g.} $\gamma^{A_n}= \gamma^{n+1}$ (see
Prop.~\ref{n-1 from n}),  the same even order $n$ multibracket
determines the odd $\tilde{n}$-Lie bracket, $\tilde{n}=n-1$, of the
odd FAs $A_{\tilde{n}+1}$ of the previous case, constructed on even
dimensional $D=\tilde{n}+1$ vector spaces.
\end{proof}
\end{lemma}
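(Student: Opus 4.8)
The plan is to reduce everything to the euclidean simple algebras $A_{n+1}$ and then exhibit an explicit matrix realization, arranging matters so that the Filippov identity is \emph{inherited} rather than re-proved. By Theorem~\ref{th:FAsimple} every simple real $n$-Lie algebra is one of the $A_{p,q}$ with $p+q=n+1$ of the form \eqref{simp-n-FA}, all sharing a single complex form; I would give the realization for the euclidean $A_{n+1}$ and obtain the remaining real forms by repeating the construction with a Clifford algebra of the appropriate indefinite signature, the signs $\varepsilon_a$ being carried by the metric ($\delta_{ab}\to\eta_{ab}$). The logical crux is this: once a multibracket of Dirac matrices is shown, on a chosen $(n+1)$-dimensional span, to have structure constants proportional to $\epsilon_{a_1\dots a_{n+1}}$, it agrees by multilinearity and full antisymmetry with the $A_{n+1}$ bracket; since $A_{n+1}$ is a simple FA and hence satisfies the FI, so does the realization, and no FI verification is required. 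The entire content is therefore a Clifford-algebra duality computation together with careful sign and weight bookkeeping.

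The master step I would isolate is the following statement, valid for any \emph{even} $D$: on a Clifford algebra with generators $\gamma_1,\dots,\gamma_D$ and chirality element $\gamma_{D+1}:=\gamma_1\cdots\gamma_D$, the even order-$D$ multibracket \eqref{multic} of the $D+1$ matrices $\{\gamma_1,\dots,\gamma_{D+1}\}$ closes on their span and reproduces $A_{D+1}$. Concretely I would use that for distinct indices the weight-free antisymmetrised product equals $D!$ times the ordered product, invoke the Clifford duality $\gamma_{a_1}\cdots\gamma_{a_{D-1}}\propto\gamma_{D+1}\gamma_b$ (the ordered product of all generators but the missing one), together with $\gamma_{D+1}^2=(-1)^{D(D-1)/2}$ and the fact that $\gamma_{D+1}$ anticommutes with each $\gamma_a$ in even $D$, to land on a single matrix weighted by the totally antisymmetric symbol. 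This one computation is the heart of the lemma.

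For $n$ even I would take $D=n$ directly: the order-$n$ multibracket of $\gamma^1,\dots,\gamma^n,\gamma^{n+1}$ realises $A_{n+1}$, and I would choose the normalisation $\{\gamma^a,\gamma^b\}=(-1)^{n/2}2\delta^{ab}$ precisely so that the residual signs assemble into the euclidean $\epsilon$-tensor. For $n$ odd I would instead set $D=n+1$ (still even) and apply the master step to realise the even algebra $A_{n+2}$ through the order-$(n+1)$ multibracket; then, fixing the chirality entry $\gamma_{D+1}$ and invoking Proposition~\ref{n-1 from n}, the induced order-$n$ bracket $[\,\cdot,\dots,\cdot\,]:=[\,\cdot,\dots,\cdot,\gamma_{D+1}\,]$ descends to a genuine $n$-Lie structure realising $A_{n+1}$ on the span of $\gamma_1,\dots,\gamma_{n+1}$. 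The reason a direct odd-dimensional treatment fails is exactly that the chirality element of an odd-dimensional Clifford algebra is a scalar matrix, so a multibracket containing it vanishes; the detour through even $D$ and Proposition~\ref{n-1 from n} circumvents this.

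I expect the main obstacle to be the sign and normalisation tracking rather than any conceptual difficulty. The dimension-dependent phase $(-1)^{D(D-1)/2}$ from $\gamma_{D+1}^2$, the $(-1)^{D-1}$ picked up when commuting a generator through $\gamma_{D+1}$, and the discrepancy between the weight-free antisymmetrisation of \eqref{multic} and the unit-weight convention (the stray factors of $n!$) must all be reconciled so that the structure constants emerge as $\pm\epsilon$ with exactly the intended signs $\varepsilon_a$; this is what forces the unusual Clifford normalisation in the even case and the indefinite-metric substitution in the Lorentzian case. Everything else—closure on the span, multilinearity, and the inheritance of the FI—is immediate once the master computation is in hand.
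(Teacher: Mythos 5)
Your proposal is correct and follows essentially the paper's own proof: the same reduction to the $A_{n+1}$ algebras of Th.~\ref{th:FAsimple}, the same even-$D$ Clifford computation built on the chirality element $\gamma_{D+1}$ (including the normalisation $\{\gamma^a,\gamma^b\}=(-1)^{n/2}2\,\delta^{ab}$ for even $n$ and the metric substitution $\delta_{ab}\to\eta_{ab}$ for the indefinite signatures), and the same inheritance of the FI from the structure constants coinciding with $\epsilon_{a_1\dots a_{n+1}}$. Your only reorganisation---treating the even case as the master step and obtaining the odd case by fixing the chirality entry via Prop.~\ref{n-1 from n}---is precisely the alternative route the paper itself records in the closing remark of part (b) of its proof.
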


\subsection{General metric $n$-Lie algebras}
\label{sec:metric-n-FA}
{\ }

Filippov algebras may be endowed with a scalar product. In the physical
literature, metric 3-Lie algebras have been discussed in the
context of the BLG model of Sec.~\ref{sec:BLG} (see {\it e.g.}
\cite{Ba-La:07b,Ho-Hou-Ma:08,Cher-Sa:08,Go-Mi-Ru:08,deMed-JMF-Men-Rit:08}).
In fact, the Lie algebras metricity condition \eqref{metr-Lie} may be
naturally extended to the FAs. Since it is the fundamental
objects $\mathscr{X}$ that induce derivations, a scalar product on $\fG$
$\left<Y\,,\,Z\right>=g_{ab}Y^a\,Z^b$, where $Y^a,Z^b$ are the coordinates
of $Y,Z\in \fG$, will be invariant when
\begin{equation}
\label{sc-n-inv}
\begin{aligned}
\mathscr{X}\cdot \left< Y\,,\,Z \right> & = \left< \mathscr{X}\cdot Y\,,\,Z \right>+
\left< Y\,,\,\mathscr{X}\cdot Z \right> \\ &=
\left< [X_1,\dots,X_{n-1},Y]\,,\,Z \right> + \left<Y \,,\, [X_1,\dots,X_{n-1},Z] \right> =0 \quad .
\end{aligned}
\end{equation}
This may also be expressed as the condition
\begin{equation*}
(-1)^n<[Y,X_1,\dots,X_{n-1}],Z>=<Y,[X_1,\dots,X_{n-1},Z]> \quad \hbox{or}
\quad (-1)^n<Y\cdot\mathscr{X},Z>=<Y,\mathscr{X}\cdot Z> \quad,
\end{equation*}
which is the analogue of eq.~\eqref{metr-Lie} for the
Lie algebra case. In coordinates, condition \eqref{sc-n-inv} reads
\begin{equation}
\label{sc-n-inv-coor}
f_{a_1 \dots a_{n-1}b}{}^l\,g_{lc}+ f_{a_1 \dots a_{n-1}c}{}^l\,g_{bl} =0 \quad ,
\end{equation}
which reflects that the metric $g_{ab}$ is InDer$\,\fG$-invariant.

An $n$-Lie algebra endowed with a metric $g$ as above is called a
{\it metric Filippov algebra}. If $\fG$ is a metric $n$-Lie algebra,
those obtained by Prop.~\ref{n-1 from n} are metric
($n-1$)-Lie algebras since eq.~\eqref{sc-n-inv} remains
satisfied for fixed $X_1=A$.

\subsubsection{The structure constants of a metric FA as invariant antisymmetric polynomials}
\label{inv-f}
{\ }

Using the non-degenerate metric $g$ to lower indices,
it follows that the structure constants of the FA with all
the indices down are fully skewsymmetric, since they are already
in the first ($n-1$) indices and the above expression gives
\begin{equation*}
\begin{aligned}
f_{a_1 \dots a_{n-1}bc} & +   f_{a_1 \dots a_{n-1}cb} =0 \quad, \\
f_{a_1\dots a_n a_{n+1}}  = \, <[ & X_{a_1},\dots,X_{a_n}]\,, \, X_{a_{n+1}}>  \quad .
\end{aligned}
\end{equation*}

The above may be considered as the coordinates of the ($n+1$)-form
$\ff$ on $\fG$ defined by
\begin{equation*}
\ff(X_{a_1},\dots,X_{a_n},X_{a_{n+1}}) = <
[X_{a_1},\dots,X_{a_n}]\,, \, X_{a_{n+1}} > \;  .
\end{equation*}
Using now the metric $g_{ab}$ to move indices, we see that the FI,
eq.~\eqref{FIstrconst}, may also be written in the form
\begin{equation}
\label{str-cons-inv}
\sum_{i=1}^{n+1} f_{a_1\dots a_{n-1} b_i}{}^l\,f_{b_1\dots b_{i-1} l b_{i+1}\dots b_{n+1}}=0 \qquad
\mathrm{or} \qquad  f_{a_1\dots a_{n-1} [b_1}{}^l\,f_{b_2\dots b_{n+1}]l}=0
\end{equation}
on account of the skewsymmetry. This shows that the fully
antisymmetric structure constants determine an antisymmetric
covariant tensor $\ff$ on $\fG$ of rank ($n+1$) ({\it cf.}
eq.~\eqref{inv-cocycle}) which is Lie$\,\fG$-invariant;
its invariance may be expressed as $L_{\mathscr{X}}.\ff=0$.

\subsection{Nambu algebras}
\label{sec:Nambu-FA}
{\ }

  The Nambu algebra ($n=3$) and, in general,
the  Nambu-Poisson algebras for arbitrary $n$, are
infinite-dimensional FAs that follow closely the pattern of the
simple FAs where the $n$-bracket defined by the determinant `vector
product' of $n$ vectors of a ($n+1$)-dimensional space is replaced
by the Jacobian determinant of $n$ functions on an $n$-dimensional
manifold\footnote{The question of whether another interesting
determinant, the Wronskian, generates an $n$-Lie algebra is discussed
in \cite{Dzhuma:05}; see further \cite{Dzhu:02}.}.
This is why it is possible to give a Nambu
bracket version of the original BLG model, as will be shown in
Sec.~\ref{sec:BLG+NB}.

\begin{example} ({\it $n$-algebras on functions
on $\mathbb{R}^n$ and canonical Nambu bracket})
\label{ex:Nam-FA}
{\ }

 Let $f_1,f_2,\dots,f_n$ be functions on $\mathbb{R}^n$
 with coordinates $\{ x^i \},
i=1,\dots,n $. The  $n$-bracket $ \{f_1,\dots,f_n\}$ or
Nambu-Poisson bracket is defined by the Jacobian determinant
\begin{equation}
\label{Jacobian}
 \{f_1, f_2, \dots, f_n \} :=
\epsilon^{i_1\dots i_n}_{1\,\dots\,n}\;\partial_{i_1}{f^1}\dots\partial_{i_n}{f^n}=
 \left|\frac{\partial (f_1,f_2,\dots,f_n)}{\partial (x^1,x^2,\dots,x^n)}
 \right|\quad, \quad f_i=f_i (x^1,x^2,\dots, x^n) \quad .
\end{equation}
Introducing the multivector $\Lambda_{(n)}=\frac{\partial}{\partial
x^1}\wedge \dots \wedge \frac{\partial}{\partial x^n}$ it follows
that $\{f_1,\dots,f_n\}=\Lambda_{(n)}(df_1,\dots,df_n)$;
$\Lambda_{(n)}$ is the standard Nambu tensor (see further
Sec.~\ref{sec:NP}). For $n$=3, eq.~\eqref{Jacobian} is the
expression of the Nambu bracket $\{f_1,f_2,f_3\}$ studied in
\cite{Nambu:73} although, of course, Nambu also mentioned the
general $n$ case. The fact that the Jacobian defines an $n$-Lie
algebra structure was already noticed by Filippov in his original
paper \cite{Filippov} (see also \cite{Fil:98} and \cite{Dzhuma:06}
for further analysis).

We can check explicitly at this stage that the Jacobian bracket
above satisfies the FI. Consider the simplest $n=3$ case. Then, any
antisymmetrization of more than three indices gives zero, a trick
that leads to the so-called `Schouten identities'. Therefore, as far
as the skewsymmetry among the $k$ and the $j_1,j_2,j_3$ indices is
concerned,
\begin{equation*}
0\equiv \epsilon^{i_1 i_2 [k}_{1\, 2\, 3}\,\epsilon^{j_1 j_2
j_3]}_{1\, 2\, 3} \sim \big( \epsilon^{i_1 i_2
k}_{1\,2\,3}\,\epsilon^{j_1 j_2 j_3}_{1\,2\,3}- \epsilon^{i_1 i_2
j_1}_{1\,2\,3}\,\epsilon^{k j_2 j_3}_{1\,2\,3}- \epsilon^{i_1 i_2
j_2}_{1\,2\,3}\,\epsilon^{j_1 k j_3}_{1\,2\,3}- \epsilon^{i_1 i_2
j_3}_{1\,2\,3}\,\epsilon^{j_1 j_2 k}_{1\,2\,3} \big) \; .
\end{equation*}
This expression leads to the FI because
\begin{eqnarray*}
   & &    \epsilon^{i_1i_2k}_{123}  \epsilon^{j_1j_2j_3}_{123}
 \partial_{i_1} f_1 \partial_{i_2} f_2 \partial_k \left(
 \partial_{j_1} g_1 \partial_{j_2} g_2 \partial_{j_3} g_3
\right) = \cr  & &\qquad \qquad
 \left(\epsilon^{i_1i_2j_1}_{123}
\epsilon^{kj_2j_3}_{123} +\epsilon^{i_1i_2j_2}_{123}
\epsilon^{j_1kj_3}_{123}
 +\epsilon^{i_1i_2j_3}_{123}  \epsilon^{j_1j_2k}_{123}\right)
 \partial_{i_1} f_1 \partial_{i_2} f_2 \partial_k \left(
 \partial_{j_1} g_1 \partial_{j_2} g_2 \partial_{j_3} g_3
\right)
\end{eqnarray*}
implies
\begin{equation}
\label{FI-Sa-Val}
    \{  f_1,f_2, \{ g_1,g_2,g_3 \} \} = \{\{f_1, f_2, g_1\} ,g_2, g_3\} +
   \{g_1, \{f_1, f_2, g_2\} , g_3\} +  \{g_1, g_2,\{f_1, f_2, g_3\}\} \; .
\end{equation}
Actually, there are three more terms in the $r.h.s.$ of the above
equation but these cancel since, using the again a Schouten
identity, they add up to $-  \epsilon^{i_1i_2k}_{123}
\epsilon^{j_1j_2j_3}_{123}
 \partial_k \left(\partial_{i_1} f_1 \partial_{i_2} f_2 \right)
 \partial_{j_1} g_1 \partial_{j_2} g_2 \partial_{j_3} g_3$
 which obviously vanishes.

 The FI \eqref{FI-Sa-Val} is an essential ingredient of the
Nambu-Poisson mechanics \cite{Nambu:73, Tak:93} to be discussed
in Sec~\ref{sec:NP}. The time evolution of a dynamical quantity in
a ($n=3$) Nambu mechanical system is governed \cite{Nambu:73}
by two `hamiltonian functions' $H_1,H_2$ and given by
$\dot{f}=\{f,H_1,H_2\}$ and, in particular, that of
$\{f,g,h\}$ is given by $\{\{f,g,h\},H_1,H_2\}$. Thus,
the FI guarantees consistency since
\begin{equation}
\begin{aligned}
\frac{d}{dt}\{f,g,h\}
&=\{\dot{f},g,h,\}+\{f,\dot{g},h\}+\{f,g,\dot{h}\}\qquad \Leftrightarrow \cr
\{f,g,h,\{H_1,H_2\}\}
&=\{\{f,H_1,H_2\},g,h\}+\{f,\{g,H_1,H_2\},h\}+\{f,g,\{h,H_1,H_2\}\} \;.
\end{aligned}
\end{equation}
The above identity was introduced as a consistency condition for
Nambu mechanics in \cite{Sah-Val:92, Sa-Va:93}. It establishes that
the time evolution (given by the Hamiltonian vector field $X_{H_1
H_2}$) is a derivation of the Nambu bracket.
\end{example}

\begin{example} ({\it Nambu-Poisson bracket on the ring
$\mathscr{F}(M)$ of functions on a compact manifold})
\label{NPcomp}

Let $M$ be a compact, oriented manifold without boundary, with
volume form $\mu$. The previous Nambu $n$-bracket may be defined
(see {\it e.g.} \cite{Gau:96,Vai:99}) for $g_1,\dots,g_n\in
\mathscr{F}(M)$ by adopting
\begin{equation}
\label{Nam-vol}
\{g_1,\dots,g_n\}\,\mu=dg_1 \wedge\dots\wedge dg_n
\end{equation}
(obviously, the Nambu bracket defined by the Jacobian determinant in
Ex.~\ref{ex:Nam-FA} corresponds to taking $\mu =dx^1 \wedge
\dots\wedge d x^n$). Let now $X_{g_1,\dots,g_{n-1}}$ be the vector
field defined by
 $X_{g_1,\dots,g_{n-1}} \cdot g_n= \{g_1,\dots,g_{n-1}, g_n\}=
 dg_n\,(X_{g_1,\dots,g_{n-1}})$ \; ,
 Eq.~\eqref{Nam-vol} clearly implies
 \begin{equation*}
 i_{X_{g_1,\dots,g_{n-1}} } \; \mu = (-1)^{n+1} \,dg_1\wedge\dots\wedge dg_{n-1} \quad,
 \end{equation*}
since there is only a non-zero contribution from its $r.h.s.$ producing
the bracket $\{g_1,\dots,g_{n-1},g_n\}$, which then may be
factored out from both sides.
This also shows that the volume element is invariant under the
action of the {\it hamiltonian vector field} $X_{g_1,\dots,g_{n-1}}$
since $L_{X_{g_1,\dots,g_{n-1}}} \mu \equiv (i_{ X_{g_1,\dots,g_{n-1}}}\,d
+ d\,i_{ X_{g_1,\dots,g_{n-1}}})\,\mu=0$. The application of the
Lie derivative to the two sides of eq.~\eqref{Nam-vol} now gives
\begin{equation*}
\begin{aligned}
L_{X_{f_1,\dots,f_{n-1}}} ( \{g_1,\dots, g_n\} \, \mu) & =
\{f_1,\dots,f_{n-1}, \{g_1,\dots,g_n \}\}\, \mu= \\
\sum_{i=1}^n dg_1\wedge\dots\wedge d(L_{X_{f_1,\dots,f_{n-1}}}\, g_i)\wedge\dots \wedge dg_n &=
\sum_{i=1}^n\{g_1,\dots,\{f_1,\dots,f_{n-1},g_i\}\dots,g_n\} \,\mu \quad ,
\end{aligned}
\end{equation*}
where in the last line $[d,L_X]=0$ has been used on the $r.h.s.$ of
\eqref{Nam-vol}. Then, the FI for the $n$-bracket $\{g_1,\dots,g_n\}$
defined by eq.~\eqref{Nam-vol} follows.
\end{example}

The above defines the Nambu $n$-Lie algebra $\fN$, an example of an
infinite-dimensional FA. An $n=3$ $\fN$ will appear in
Sec.~\ref{sec:BLG+NB} in the context of the BLG-NB model.

\subsubsection{The Nambu bracket of functions
on a compact manifold as a metric $n$-Lie algebra}
\label{sec:NB-metric-al}
{\ }

Consider smooth functions $h,g,f \in \mathscr{F}(M))$ on a compact,
oriented $n$-dimensional manifold $M$ without boundary as in
Ex.~\ref{NPcomp}. The volume $n$-form $\mu$ allows us to define a
scalar product in  $\mathscr{F}(M)$ by
\begin{equation}
\label{Nam-metric}
\left<h,g\right> := \int_M \mu\, hg  \quad .
\end{equation}
Then, it follows that the Nambu algebra is a metric
$n$-FA. Indeed, the equivalent to eq.~\eqref{sc-n-inv} here reads
$X_{f_1,\dots,f_{n-1}}\,\left<h,g\right>=0$. Thus,
\begin{equation*}
\left<\{f_1,\dots,f_{n-1},h\}\,,\,g \right>+\left<h\,,\{f_1,\dots,f_{n-1},g\}\right>=0 \quad ,
\end{equation*}
which is satisfied since, by eq.~\eqref{Nam-vol},
\begin{eqnarray*}
\int_M \mu\, \{f_1,\dots,f_{n-1}, h \}\,g =\int_M df_1\wedge\dots
df_{n-1} \wedge dh \;g \qquad \qquad  \qquad \qquad \\
 = -\int_M h\, df_1\wedge\dots df_{n-1}
\wedge dg = -\int_M \mu \, h \{f_1,\dots,f_{n-1},g\} \; ,
\end{eqnarray*}
which follows by integrating by parts and by using Stokes theorem
for the boundaryless $M$.

\begin{lemma}{(The simple algebras as finite-dimensional subalgebras of Nambu FAs)}
\label{A-from-Nambu}
{\ }

The simple FAs (eq.~\eqref{simp-n-FA}) are finite subalgebras of
the $n$-Nambu algebras $\fN$ of functions on suitable
$n$-dimensional manifolds $M$.

\begin{proof}
Let {\it e.g.} $M_n$ be the unit sphere $S^n\subset \mathbb{R}^{n+1}$
defined by the euclidean metric on $\mathbb{R}^{n+1}$ as $y^a y_a = 1$.
Its points are characterized by $n+1$ Cartesian coordinate functions
subject to the constraint $y^a y_a = 1$,
$y^a=(\textbf{y},y^{n+1}=\sqrt{1-\textbf{y}^2}$). Then, taking
$\frac{d y^n}{\sqrt{1-\textbf{y}^2}}$ as the invariant measure on
$S^n$ it follows that
\begin{equation*}
\{y^{a_1},\dots,y^{a_n}\}= \epsilon^{a_1 \dots a_{n+1}} y^{a_{n+1}} \; ,
\end{equation*}
which defines the $A_{n+1}$ euclidean algebra (up to an irrelevant
global sign with respect to \eqref{simp-n-FA}) with basis elements
($y^1\dots y^{n+1}$). Thus, the $A_{n+1}$ FA is a subalgebra of the
infinite-dimensional Nambu algebra of functions on the compact
sphere $S^n$.

The other simple algebras are obtained by replacing $S^n$ by the
non-compact hyperboloids $\Omega^n$ defined by  suitable
pseudoeuclidean metrics on $\mathbb{R}^{n+1}$, $\mu$ by the
invariant measure on $\Omega^n$, say $\mu=\frac{d^n y}{\sqrt{1+
\textbf{y}^2}}$, to obtain the $n$-bracket of the Lorentzian simple
$A_{1,n}$ ($n+1$)-Lie algebras.
\end{proof}
\end{lemma}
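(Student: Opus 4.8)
The plan is to exhibit each simple Filippov algebra of eq.~\eqref{simp-n-FA} explicitly as the linear span of finitely many functions sitting inside the infinite-dimensional Nambu algebra $\fN$ of a suitably chosen $n$-dimensional manifold, and then to check that the Nambu bracket of these functions reproduces the structure constants $\epsilon^{a_1\dots a_{n+1}}$. For the euclidean algebra $A_{n+1}$ I would take $M=S^n\subset\RR^{n+1}$, the unit sphere $y^a y_a=1$, and as the $(n+1)$ candidate generators the restrictions to $S^n$ of the Cartesian coordinate functions $y^1,\dots,y^{n+1}$. The whole point is that, although these $(n+1)$ functions cannot be functionally independent on an $n$-manifold, their Nambu brackets nevertheless close linearly on them with \emph{constant} coefficients. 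The key preliminary remark is that the relevant measure must be the rotation-invariant (round) volume form: in the chart $(y^1,\dots,y^n)$ of a hemisphere, with $y^{n+1}=\sqrt{1-\textbf{y}^2}$, the induced Riemannian volume is $\mu=dy^1\wedge\dots\wedge dy^n/y^{n+1}$, which is precisely the measure announced in the statement.

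The core of the argument is the coordinate computation of $\{y^{a_1},\dots,y^{a_n}\}$ from the defining relation \eqref{Nam-vol}, $\{g_1,\dots,g_n\}\,\mu=dg_1\wedge\dots\wedge dg_n$. I would split into two cases. First, if all of $a_1,\dots,a_n$ lie in $\{1,\dots,n\}$, the $dy^{a_i}$ are just the coordinate differentials, so $dy^{a_1}\wedge\dots\wedge dy^{a_n}=\mathrm{sgn}(\sigma)\,dy^1\wedge\dots\wedge dy^n=\mathrm{sgn}(\sigma)\,y^{n+1}\mu$, giving $\{y^{a_1},\dots,y^{a_n}\}=\epsilon^{a_1\dots a_n\,(n+1)}\,y^{n+1}$. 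Second, if one entry is the dependent coordinate $y^{n+1}$, I would differentiate the constraint, $dy^{n+1}=-(y^{n+1})^{-1}\sum_k y^k\,dy^k$; in the wedge product only the single term $dy^b$ not already present survives, where $b$ is the index missing from the bracket, and a short sign check against the Levi-Civita symbol gives $\{y^{a_1},\dots,y^{n+1}\}=\epsilon^{a_1\dots(n+1)\,b}\,y^b$. Assembling both cases yields $\{y^{a_1},\dots,y^{a_n}\}=\epsilon^{a_1\dots a_n a_{n+1}}\,y^{a_{n+1}}$, which is exactly the euclidean $A_{n+1}$ bracket \eqref{simp-n-FA} (up to the global sign flagged in the statement).

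With this identity in hand, closure is automatic: the bracket of any $n$ generators is a linear combination of generators, so their span is a finite-dimensional subspace of $\fN$ closed under the bracket, and the Filippov identity it inherits is the one already verified for the full Nambu algebra in Ex.~\ref{NPcomp}; hence the span is genuinely a Filippov subalgebra isomorphic to $A_{n+1}$. For the remaining simple algebras I would repeat the construction verbatim, replacing $S^n$ by the pseudo-euclidean hyperboloid $\Omega^n\subset\RR^{n+1}$ cut out by $\eta_{ab}y^a y^b=\mathrm{const}$ for the appropriate signature, and $\mu$ by the corresponding invariant measure $d^n y/\sqrt{1+\textbf{y}^2}$; the same computation then carries the signs $\varepsilon_a$ and reproduces the Lorentzian $A_{p,q}$ brackets of \eqref{simp-n-FA}.

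I expect the only genuine obstacle to be the sign bookkeeping in the second case, where the constraint $y^a y_a=1$ must be differentiated and one has to track how the transposition moving $y^{n+1}$ into its natural slot interacts with the orientation convention for $\mu$; everything else is routine multilinear algebra. It is worth recording in passing that the choice of $\mu$ as the invariant volume form is not optional: it is exactly what makes the bracket $\SO(n+1)$-covariant and therefore what forces it to close on the linearly transforming coordinate functions.
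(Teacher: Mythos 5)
Your proposal is correct and follows essentially the same route as the paper: the same choice of $M=S^n$ with the round measure $\mu=d^ny/y^{n+1}$, the Cartesian coordinate functions as generators, and the pseudoeuclidean hyperboloids with their invariant measures for the Lorentzian cases. The only difference is that you spell out the two-case coordinate computation (including the differentiation of the constraint $y^ay_a=1$ and the sign bookkeeping) that the paper compresses into a single ``it follows that,'' and your signs check out.
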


The above lemma extends easily to the case of $n+1$
funcions $\phi^a(y)$ on a manifold $M$, subject to the
constraint
\begin{equation}
\label{func-sph} \phi_1^2+\dots+ \phi_{n+1}^2 =1 \ .
\end{equation}
There is, however, an interesting topological subtlety that we
analyze below\footnote{We thank Paul Townsend for a helpful comment
on this point.}. Consider $M=S^n$; then, the functions $\phi^a(y)$
define a map $\phi: S^n \rightarrow S^n$, where the first sphere is
parametrized by the $n+1$ coordinates $y^a, y^a y_a=1$, and the
second sphere by $\phi^a$. Clearly, the possible maps $\phi$ fall
into disjoint homotopy classes characterized by a winding number and
thus the degree of the map $\phi$ should be reflected in the
definition of the Nambu bracket.

Let $\mu$ be the volume form on the first sphere $S^n$, for which
we take the expression
\begin{equation}
\mu(y)=\label{degree1} \sum^{n+1}_{i=1} (-1)^{n+i-1} y^i dy^1 \wedge \dots
\wedge \widehat{dy^i} \wedge \dots \wedge dy^{n+1} \quad ,\quad y^a y_a=1 \quad ;
\end{equation}
the factors have been chosen to reproduce the measure used above. Similarly,
the volume form on the second sphere is given by
\begin{equation}
\label{vol-2nd-sph}
\mu'(\phi)=\sum^{n+1}_{i=1} (-1)^{n+i-1} \phi_i d\phi_1 \wedge \dots
\wedge \widehat{d\phi_i} \wedge \dots \wedge d\phi_{n+1} \quad ,\quad \phi^a\phi_a=1 \quad .
\end{equation}
By eq.~\eqref{Nam-vol}, the Nambu bracket $\{ \phi_{a_1}(y), \dots , \phi_{a_n}(y) \}$
of functions in $\mathscr{F}(M)$ is defined by
\begin{equation}
\label{degree2} \{ \phi_{a_1}(y), \dots , \phi_{a_n}(y) \} \mu(y) =
d\phi_{a_1}(y) \wedge \dots \wedge d\phi_{a_n}(y) \; ,
\end{equation}
and we would like to see how it is expressed in terms of $\phi_ {a_{n+1}}(y)$.
To this aim, it is sufficient to show that the $r.h.s.$ of \eqref{degree2} is
in fact given by
\begin{equation}
\label{degree3} d\phi_{a_1}(y) \wedge \dots \wedge d\phi_{a_n}(y) =
 \epsilon_{a_1\dots a_{n+1}} \phi_{a_{n+1}}(y) \phi^* (\mu'(\phi)) \ ,
\end{equation}
where $\phi^*(\mu')$ is the pull-back of the volume form $\mu'(\phi)$
to the first sphere, hence given by
\begin{eqnarray}
\label{degree4}
   (\phi^*(\mu'))(y) &=& \sum^{n+1}_{i=1} (-1)^{n+i-1} \phi_i(y) d\phi_1(y) \wedge \dots
\wedge \widehat{d\phi_i(y)} \wedge \dots \wedge d\phi_{n+1}(y)
\nonumber\\
 & & = \frac{(-1)^n}{n!} \epsilon^{b_1\dots b_{n+1}} \phi_{b_1}(y) d\phi_{b_2}(y) \wedge \dots
\wedge d\phi_{b_{n+1}}(y)\ ,
\end{eqnarray}
with $\phi^*(\phi^a\phi_a)=\phi^a(y)\phi_a(y)=1$. Then,
and omitting the $y$ dependence, the $r.h.s.$ of \eqref{degree3}
is equal to
\begin{eqnarray}
\label{degree5}
  & &  \epsilon_{a_1\dots a_{n+1}} \phi_{a_{n+1}} \frac{(-1)^n}{n!}
  \epsilon^{b_1\dots b_{n+1}} \phi_{b_1} d\phi_{b_2} \wedge \dots
\wedge d\phi_{b_{n+1}} \nonumber\\
& & \quad\quad\quad = \frac{(-1)^n}{n!} \phi_{a_{n+1}} \phi_{b_1}
\epsilon_{a_1\dots a_{n+1}}^{b_1\dots b_{n+1}} d\phi_{b_2} \wedge
\dots \wedge d\phi_{b_{n+1}} \nonumber\\
& & \quad\quad\quad = \frac{(-1)^n}{n!} \phi_{a_{n+1}} \phi_{b_1}
\sum^{n+1}_{i=1} (-1)^{n-i+1} \delta^{b_i}_{a_{n+1}}
\epsilon_{a_1\dots \ \dots a_n}^{b_1\dots \hat{b_i} \dots b_{n+1}}
d\phi_{b_2} \wedge \dots \wedge d\phi_{b_{n+1}} \nonumber\\
& & \quad\quad\quad = \frac{1}{n!} \phi_{a_{n+1}} \phi_{b_1}
\delta^{b_1}_{a_{n+1}} \epsilon_{a_1\dots a_n}^{b_2\dots
b_{n+1}} d\phi_{b_2} \wedge \dots \wedge d\phi_{b_{n+1}} \nonumber
\\
& & \quad\quad\quad = d\phi_{a_1}(y) \wedge \dots \wedge d\phi_{a_n}(y)
\; ,
\end{eqnarray}
as we wanted to show. Above we have used eq.~\eqref{eps-1} and,
in the third and fourth lines of the expression, that
$\phi^{b_i} d\phi_{b_i}=0$ (in the terms with $i=2,\dots,n+1$) and
$\phi^{b_1} \phi_{b_1}=1$, respectively.

Now, as is well known, the degree deg$\,\phi$ of the mapping $\phi$ is an integer
that may be expressed by the (Kronecker) integral
\begin{equation}
\label{degree6}
 deg\, \phi = \frac{1}{\mathrm {vol}(S^n)} \int_{S^n} \phi^*(\mu') \quad .
\end{equation}
Thus, since $\int \mu(y)= \mathrm{vol}(S^n)$, the above is tantamount to
$(\phi^*(\mu'))(y) = (\mathrm{deg}\, \phi) \mu(y)$.
Therefore, inserting eq.~\eqref{degree3} into eq.~\eqref{degree2} we
finally obtain
\begin{equation}
\label{degree8}
\{ \phi_{a_1}(y), \dots , \phi_{a_n}(y) \} =
(\mathrm{deg} \phi) \, \epsilon_{a_1\dots a_{n+1}} \phi_{a_{n+1}}(y) \; .
\end{equation}
Thus, the Nambu-brackets of the $S^n$-constrained functions
$\phi^a(y)$ are classified by the Brouwer degree
of the map $\phi$. For instance, deg$\,\phi=1$ for the identity map,
which gives the standard form of the $A_{n+1}$
simple algebras as realized by $S^n$-constrained functions on $S^n$.
\medskip

\subsection{Fundamental objects: definition and properties}
\label{sec:fund-obj}{\ }

The relevance of the fundamental objects $\mathscr{X}$ for an $n$-Lie
algebra $\fG$ stems from the fact that, as already seen, they define
inner derivations $ad_\mathscr{X}\in \mathrm{InDer}\,\fG$. The
$\mathscr{X}$s will also play a crucial r\^ole in FA cohomology. Let
us recall their definition to analyze their properties. {\ }

\definition{({\it Fundamental objects $\mathscr{X}$ for a FA})}
\label{fdtal-objects}

A fundamental object $\mathscr{X}$ of a FA is determined by ($n-1$)
elements $X_1,X_2,\dots,X_{n-1}$ of $\fG$ on which it is
skewsymmetric; thus\footnote{The notation $\mathscr{X}\in
\wedge^{n-1}\mathfrak{G}$ merely reflects that the fundamental
object $\mathscr{X}=(X_1,\dots,X_{n-1})\in \mathfrak{G}\times
\mathop{\dots}\limits^{n-1}\times \mathfrak{G}$ is antisymmetric in
its arguments; it does not imply that $\mathscr{X}$ is a
$(n-1)$-multivector obtained by the associative wedge product of
vectors.}, $\mathscr{X} \in \wedge^{n-1}\fG$. The fundamental
objects act on the elements of the FA by (left) multiplication:
\begin{equation}
\label{xidot} \mathscr{X}\in \wedge^{n-1}\fG \quad, \quad
 ad_{X_1 \dots X_{n-1}}\equiv ad_{\mathscr{X}} \quad,\quad
ad_{\mathscr{X}}\cdot Z \equiv \mathscr{X}\cdot Z :=
[X_1,X_2,\dots,X_{n-1},Z] \; .
\end{equation}

As a result, $ad:\wedge^{n-1}\fG\to \hbox{End}\,\fG\;$ is
characterized by an $n$-bracket having void its last entry,
\begin{equation}
\label{(n-1)-ad-def}
ad:\mathscr{X}\mapsto ad_{\mathscr{X}}\equiv
\mathscr{X}\cdot \equiv [X_1,X_2,\dots,X_{n-1}, \quad] \quad;
\end{equation}
a {\it fundamental object defines an inner derivation} of the FA
$\fG$. For $n$=2, the fundamental objects $\mathscr{X}$ are the
elements of the Lie algebra $\fg$ themselves and $ad_{\mathscr{X}}$
reduces to $ad_X$.

Chosen a basis, we will often use the obvious notation
\begin{equation}
\label{ad-notat}
ad_{X_{a_1} \dots X_{a_{n-1}}} \equiv ad_{a_1 \dots a_{n-1}} \; , \quad
ad_{a_1 \dots a_{n-1}} \cdot X_b := [X_{a_1},\dots,X_{a_{n-1}},X_b]
=f_{a_1\dots a_{n-1} b}{}^l X_l \; .
\end{equation}
Clearly the ($\hbox{dim}\fG \times \hbox{dim}\fG$)-dimensional matrix
$ad_{a_1 \dots a_{n-1}} \in \hbox{End} \,\fG$ is given by
\begin{equation}
\label{FA-adj-matr}
(ad_{a_1 \dots a_{n-1}})^l{}_b = f_{a_1\dots a_{n-1} b}{}^l   \quad ,
\end{equation}
to be compared with the $\fg$ case at the end of Sec.~\ref{Lie-defs}.

\begin{definition} ({\it Composition of fundamental objects}) \cite{Gau:96}
\label{def:dotcompo}{\ }

Given two fundamental objects $\mathscr{X}\,,\,\mathscr{Y} \in
\wedge^{n-1}\fG$, the (non-associative) composition
$\mathscr{X}\cdot\mathscr{Y}\in \wedge^{n-1}\fG$ of the
two is the bi- and $i$-linear map
$\wedge^{n-1}\fG \otimes \wedge^{n-1}\fG \rightarrow \wedge^{n-1}\fG$
given by the sum
\begin{equation}
\label{dot-compo}
\begin{aligned}
\mathscr{X} \cdot \mathscr{Y} := &
\sum_{i=1}^{n-1}(Y_1,\dots,Y_{i-1},\mathscr{X}\cdot Y_i\,,Y_{i+1},\dots,Y_{n-1})
\cr = &
\sum_{i=1}^{n-1}(Y_1,\dots,Y_{i-1},[X_1,X_2,\dots,X_{n-1},Y_i],Y_{i+1},\dots,Y_{n-1})
\;,
\end{aligned}
\end{equation}
which is the natural extension of the adjoint derivative
$ad_\mathscr{X}$ action on $\fG$ to $\wedge^{n-1}\fG$.
\end{definition}
Thus, the dot composition $\mathscr{X} \cdot \mathscr{Y}$ defines
the inner derivation given by $(\mathscr{X} \cdot \mathscr{Y})\cdot
Z= [\mathscr{X} \cdot
\mathscr{Y}\,,\,Z]:=\sum_{i=1}^{n-1}\,[Y_1,\dots,Y_{i-1},\mathscr{X}
\cdot Y_i,\dots,Y_{n-1},Z]$; the composition of a fundamental object
with itself always determines the trivial derivation.

 The following Lemma follows from the FI:
\begin{lemma}(Properties of the composition of fundamental objects)
\label{le:deriv-lemma}{\, }\\

 The dot product of fundamental objects
$\mathscr{X}$ of an $n$-Lie algebra $\fG$ satisfies the relation
\begin{equation}
\label{Lie-setb}
\mathscr{X}\cdot (\mathscr{Y}\cdot \mathscr{Z}) -
\mathscr{Y}\cdot (\mathscr{X} \cdot \mathscr{Z}) =
(\mathscr{X}\cdot\mathscr{Y}) \cdot \mathscr{Z} \qquad \forall
\mathscr{X}, \mathscr{Y}, \mathscr{Z} \in \wedge^{n-1}\fG \; .
\end{equation}
As a result, the images $ad_{\mathscr{X}}$ of the fundamental
objects by the adjoint map $ad:\wedge^{n-1}\fG \rightarrow \hbox{InDer}\,\fG$
determine inner derivations that satisfy
\begin{equation}
\label{Lie-n-Lie}
\begin{aligned}
\mathscr{X}\cdot(\mathscr{Y}\cdot Z)-& \mathscr{Y}\cdot(\mathscr{X}\cdot Z)
=(\mathscr{X}\cdot\mathscr{Y})\cdot Z \quad \\
\mathrm{or} \qquad
 ad_{\mathscr{X}}ad_{\mathscr{Y}} Z - ad_{\mathscr{Y}}  & ad_{\mathscr{X}} Z =
ad_{\mathscr{X}\cdot\mathscr{Y}} Z  \qquad \forall
\mathscr{X}, \mathscr{Y} \in \wedge^{n-1}\fG \,,\,\forall \; Z\in\fG \quad ,
\end{aligned}
\end{equation}
which is equivalent to the FI
for $[X_1,\dots,X_{n-1},[Y_1,\dots,Y_{n-1},Z]]$.
It then follows that the inner derivations $ad_{\mathscr{X}}$ of a FA $\fG$
generate an ordinary Lie algebra, Lie$\,\fG\,\equiv\,$InDer$\,\fG\,\equiv\,$ad$\,\fG$.
\begin{proof}
Let us compute first $\mathscr{X}\cdot (\mathscr{Y}\cdot \mathscr{Z})$.
\begin{equation*}
\begin{aligned}
\mathscr{X}  & \cdot (\mathscr{Y}\cdot \mathscr{Z})=
\sum_{i=1}^{n-1} \mathscr{X} \cdot \left(
Z_1,\dots,Z_{i-1},[Y_1,\dots,Y_{n-1},Z_i],Z_{i+1},\dots, Z_{i-1} \right) \\
  = & \sum_{i=1}^{n-1}  \sum_{j \not=i, \,j=1}^{n-1} \left(
 Z_1,\dots,Z_{j-1},[X_1,\dots,X_{n-1},Z_j],\ Z_{j+1},  \dots,
 Z_{i-1},\ [Y_1,\dots,Y_{n-1},Z_i],Z_{i+1},\dots,Z_{n-1}\right)\\
& \quad + \sum_{i=1}^{n-1} \left(
Z_1,\dots,Z_{i-1},[X_1,\dots,X_{n-1},[Y_1,\dots,Y_{n-1},Z_i]],
 Z_{i+1},\dots, Z_{n-1} \right) \; .
\end{aligned}
\end{equation*}
The first term in the $r.h.s$ is symmetric in $\mathscr{X},
\mathscr{Y}$; hence,
\begin{equation}
\begin{aligned}
\label{n-L-s} & \mathscr{X}\cdot (\mathscr{Y}\cdot \mathscr{Z}) -
\mathscr{Y}\cdot (\mathscr{X} \cdot \mathscr{Z})  =\\
\sum_{j=1}^{n-1} & \left( Z_1, \dots,Z_{j-1}, \{\,
[X_1,\dots,X_{n-1}, [Y_1,\dots,Y_{n-1},Z_j]] - [Y_1,\dots,
Y_{n-1},[X_1,\dots,X_{n-1},Z_j ]]\, \}, Z_{j+1},\dots,Z_{n-1}
\right) \; .
\end{aligned}
\end{equation}
On the other hand, using definition \eqref{dot-compo}, we find
\begin{equation}
\label{ad-nueva1}
(\mathscr{X} \cdot \mathscr{Y} )\cdot \mathscr{Z}=
\sum_{j=1}^{n-1}\sum_{i=1}^{n-1} (Z_1,\dots,Z_{j-1},
[Y_1,\dots,[X_1,\dots,X_{n-1},Y_i],\dots,Y_{n-1},Z_j],Z_{j+1},\dots,Z_{n-1}).
\end{equation}
Now, using the FI for
$[X_1,\dots,X_{n-1},[Y_1,\dots,Y_{n-1},Z_j]]$, we see that the
above expression reproduces \eqref{n-L-s}.

 The above proof obviously carries forward to the simplest case where the
fundamental object $\mathscr{Z}$ in \eqref{Lie-setb} is replaced
by a FA element $Z$ as in \eqref{Lie-n-Lie}. It is sufficient to
note that, by the FI,
\begin{equation}
\begin{aligned}
ad_{\mathscr{X}\cdot\mathscr{Y}} Z &=
\sum_{i=1}^{n-1}[Y_1,\dots,Y_{i-1},[X_1,\dots,X_{n-1},Y_i],Y_{i+1},\dots,Y_{n-1},Z]
\\ & = [X_1,\dots,X_{n-1},[Y_1,\dots,Y_{n-1},Z]] -
[Y_1,\dots,Y_{n-1},[X_1,\dots,X_{n-1},Z]]\\ &=
ad_{\mathscr{X}}ad_{\mathscr{Y}}Z-ad_{\mathscr{Y}}ad_{\mathscr{X}}Z
\quad \forall\,Z\in\fG\quad .
\end{aligned}
\end{equation}
\end{proof}
\end{lemma}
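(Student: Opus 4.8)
The plan is to derive both displayed relations from a single fact, the Filippov identity \eqref{n-der} in its ``derivation'' reading: for every fundamental object $\mathscr{X}=(X_1,\dots,X_{n-1})$ the map $ad_{\mathscr{X}}$ is a derivation of the $n$-bracket. I would prove the element identity \eqref{Lie-n-Lie} first, as it is the transparent case, and then bootstrap it entrywise to obtain the composition identity \eqref{Lie-setb}.

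For \eqref{Lie-n-Lie}, I would expand $ad_{\mathscr{X}\cdot\mathscr{Y}}Z$ through the definition \eqref{dot-compo} as the sum over $i=1,\dots,n-1$ of the brackets $[Y_1,\dots,[X_1,\dots,X_{n-1},Y_i],\dots,Y_{n-1},Z]$. This sum is precisely the $k=1,\dots,n-1$ part of the right-hand side of the FI \eqref{n-der} applied to the nested bracket $[X_1,\dots,X_{n-1},[Y_1,\dots,Y_{n-1},Z]]$; the only missing term is the $k=n$ one, namely $[Y_1,\dots,Y_{n-1},[X_1,\dots,X_{n-1},Z]]=ad_{\mathscr{Y}}ad_{\mathscr{X}}Z$. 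Transposing it across the FI yields $ad_{\mathscr{X}\cdot\mathscr{Y}}Z=ad_{\mathscr{X}}ad_{\mathscr{Y}}Z-ad_{\mathscr{Y}}ad_{\mathscr{X}}Z$, which is \eqref{Lie-n-Lie} together with its commutator form.

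For the composition identity \eqref{Lie-setb}, I would apply \eqref{dot-compo} twice to $\mathscr{X}\cdot(\mathscr{Y}\cdot\mathscr{Z})$. Writing $\mathscr{Y}\cdot\mathscr{Z}$ as a sum over the entry $i$ on which $ad_{\mathscr{Y}}$ acts, and then letting $ad_{\mathscr{X}}$ act on the entry $j$, the result splits into terms with $j\neq i$ --- in which $\mathscr{X}$ and $\mathscr{Y}$ modify two \emph{distinct} entries of $\mathscr{Z}$ --- and terms with $j=i$, in which the $i$-th entry carries the composite $ad_{\mathscr{X}}ad_{\mathscr{Y}}Z_i$. The $j\neq i$ terms are manifestly symmetric under $\mathscr{X}\leftrightarrow\mathscr{Y}$, so they drop out of the antisymmetrised combination $\mathscr{X}\cdot(\mathscr{Y}\cdot\mathscr{Z})-\mathscr{Y}\cdot(\mathscr{X}\cdot\mathscr{Z})$, leaving a single sum over $i$ of fundamental objects whose $i$-th entry is $ad_{\mathscr{X}}ad_{\mathscr{Y}}Z_i-ad_{\mathscr{Y}}ad_{\mathscr{X}}Z_i$. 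Invoking the element identity \eqref{Lie-n-Lie} just proved replaces each such entry by $(\mathscr{X}\cdot\mathscr{Y})\cdot Z_i$, and reassembling the sum via one last use of \eqref{dot-compo} identifies it with $(\mathscr{X}\cdot\mathscr{Y})\cdot\mathscr{Z}$.

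The Lie algebra conclusion is then formal: \eqref{Lie-n-Lie} reads $[ad_{\mathscr{X}},ad_{\mathscr{Y}}]=ad_{\mathscr{X}\cdot\mathscr{Y}}$ in $\End\fG$, so the image $ad\,\wedge^{n-1}\fG$ is closed under the commutator, and antisymmetry and the Jacobi identity are automatic for commutators of endomorphisms; hence $\mathrm{InDer}\,\fG=ad\,\fG$ is an ordinary Lie algebra. I expect the only real obstacle to be the slot-bookkeeping in the composition step: one must check carefully that every $j\neq i$ contribution is genuinely symmetric in $\mathscr{X},\mathscr{Y}$ and cancels, and that the surviving single-entry terms recombine exactly into $(\mathscr{X}\cdot\mathscr{Y})\cdot\mathscr{Z}$ rather than into some other grouping. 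Once this index matching is handled, the Filippov identity supplies all the algebraic content.
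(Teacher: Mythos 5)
Your proof is correct and follows essentially the same route as the paper's: the same expansion via \eqref{dot-compo}, the same cancellation of the $\mathscr{X}\leftrightarrow\mathscr{Y}$-symmetric $j\neq i$ cross terms, and the same invocation of the FI, the only difference being that you establish the element identity \eqref{Lie-n-Lie} first and apply it slot-by-slot, whereas the paper proves \eqref{Lie-setb} directly (using the FI inside each slot of \eqref{ad-nueva1}) and obtains \eqref{Lie-n-Lie} afterward as the simpler special case. This reordering is harmless, since your final reassembly of the surviving terms into $(\mathscr{X}\cdot\mathscr{Y})\cdot\mathscr{Z}$ uses only the linearity of the dot composition built into Def.~\ref{def:dotcompo}.
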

\noindent
This expression also shows, by exchanging the $X$s and the $Y$s, that
$ad_{\mathscr{X}\cdot\mathscr{Y}}Z=-ad_{\mathscr{Y}\cdot\mathscr{X}}Z$
on any $Z\in \fG$, and hence that
\begin{equation}
\label{xy.skew} ad_{\mathscr{X}\cdot\mathscr{Y}} =
-ad_{\mathscr{Y}\cdot\mathscr{X}} \qquad \hbox{or, equivalently,}
\qquad (\mathscr{X}\cdot\mathscr{Y})\,\cdot =
-(\mathscr{Y}\cdot\mathscr{X})\,\cdot \quad .
\end{equation}
Note the dots after the brackets: $\mathscr{X}\cdot\mathscr{Y}
\neq -\mathscr{Y}\cdot\mathscr{X}$  but
$(\mathscr{Y}\cdot \mathscr{X})\cdot Z =
-(\mathscr{Y}\cdot \mathscr{X})\cdot Z$ .
\medskip

As exhibited by eq.~\eqref{Lie-setb} or \eqref{Lie-n-Lie}, the
composition law $\mathscr{X}\cdot\mathscr{Y}$ is not associative; in
fact, eq.~\eqref{Lie-setb} measures the lack of associativity,
$\mathscr{X}\cdot(\mathscr{Y}\cdot\mathscr{Z})
-(\mathscr{X}\cdot\mathscr{Y})\cdot\mathscr{Z}
=\mathscr{Y}\cdot(\mathscr{X}\cdot\mathscr{Z})$ as in the standard
Lie algebra case. Indeed, for $n$=2 \eqref{Lie-n-Lie} reproduces the
JI written in the form $[X,[Y,Z]]-[[X,Y],Z]= [Y,[X,Z]]$, which
exhibits the obvious lack of associativity of the Lie bracket. For
$n>2$, eqs.~\eqref{Lie-setb} and \eqref{Lie-n-Lie} (and \eqref{gmod}
below) are a consequence of the characteristic identity that defines
the $n$-Lie algebra $\fG$, the FI.
\medskip

\subsection{Kasymov's criterion for semisimplicity of a FA}
\label{sec:Kas-crit}
{\ }

Kasymov's analogue of the Cartan-Killing metric for the case
of a FA $\fG$ is the $2(n-1)$-linear generalization
\begin{equation}
\label{KasCK}
k(\mathscr{X},\mathscr{Y}) = k(X_1,\dots ,
X_{n-1},Y_1,\dots ,Y_{n-1}) :=Tr(ad_{\mathscr{X}}ad_{\mathscr{Y}}) \quad, \quad
\mathscr{X},\mathscr{Y}\in \wedge^{n-1}\fG\; .
\end{equation}
Then, a FA is semisimple if it satisfies the
following generalization \cite{Kas:95a} of the Cartan criterion:

\begin{theorem}({\it Semisimplicty of a FA})
\label{th:Kas-crit}
{\ }

An $n$-Lie algebra is semisimple iff the Kasymov form above
is non-degenerate in the sense that
\begin{equation}
\label{intro7}
        k(Z, \mathfrak{G},{\mathop{\dots}\limits^{n-2}},\mathfrak{G},
        \mathfrak{G}, {\mathop{\dots}\limits^{n-1}}, \mathfrak{G})=0
        \ \Rightarrow \ Z=0 \; ,
\end{equation}
where the $2n-3$ arguments besides $Z$ are arbitrary elements of
$\mathfrak{G}$.
\end{theorem}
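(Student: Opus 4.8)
The plan is to recast the nondegeneracy condition as the triviality of the \emph{radical of the Kasymov form}
\begin{equation*}
R:=\{\,Z\in\fG : k(Z,X_2,\dots,X_{n-1},Y_1,\dots,Y_{n-1})=0\ \ \forall\,X_i,Y_j\in\fG\,\},
\end{equation*}
so that the theorem reads $\Rad(\fG)=0\iff R=0$. I would reduce it to two assertions: (i) $R$ is always a \emph{solvable ideal} of $\fG$, and (ii) every nonzero abelian ideal of $\fG$ lies in $R$. Granting these, one direction is immediate: if $\fG$ is semisimple then $\Rad(\fG)=0$, and since by (i) $R$ is solvable it satisfies $R\subseteq\Rad(\fG)=0$, whence $k$ is nondegenerate. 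For the converse, suppose $k$ is nondegenerate, i.e.\ $R=0$; if $\fG$ were not semisimple, $\Rad(\fG)\neq0$ would contain a nonzero abelian ideal (the last nonvanishing term of its derived series, Sec.~\ref{sec:solv-FA}), which by (ii) would sit inside $R=0$ --- a contradiction. Hence $\fG$ is semisimple.

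First I would show that $R$ is an ideal, which rests on the \emph{invariance} of $k$. Writing $A=ad_{\mathscr{W}}$, $B=ad_{\mathscr{X}}$, $C=ad_{\mathscr{Y}}$ and using $ad_{\mathscr{W}\cdot\mathscr{X}}=[ad_{\mathscr{W}},ad_{\mathscr{X}}]$ from Lemma~\ref{le:deriv-lemma} (eq.~\eqref{Lie-n-Lie}) together with the cyclicity of the trace,
\begin{equation*}
k(\mathscr{W}\cdot\mathscr{X},\mathscr{Y})+k(\mathscr{X},\mathscr{W}\cdot\mathscr{Y})
=\Tr\big([A,B]\,C+B\,[A,C]\big)=\Tr\big([A,BC]\big)=0 .
\end{equation*}
Since $k$ is skewsymmetric in its first $n-1$ and in its last $n-1$ arguments, any $Z\in R$ annihilates $k$ whenever it occupies one of the slots. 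Taking $\mathscr{X}=(Z,X_2,\dots,X_{n-1})$ and expanding $\mathscr{W}\cdot\mathscr{X}$ as the sum of fundamental objects obtained by letting $\mathscr{W}\cdot$ act on each entry (Def.~\eqref{dot-compo}), the invariance identity forces $k\big((\mathscr{W}\cdot Z,X_2,\dots,X_{n-1}),\mathscr{Y}\big)=0$ for all $X_i,\mathscr{Y}$, i.e.\ $\mathscr{W}\cdot Z=[W_1,\dots,W_{n-1},Z]\in R$. Thus $R$ is an ideal of $\fG$.

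The hard part is (i), the solvability of $R$; this is the Filippov-algebra analogue of Cartan's solvability criterion and, I expect, the main obstacle. The idea is to pass to the ordinary Lie algebra $\mathrm{InDer}\,\fG\equiv\mathrm{Lie}\,\fG$ of inner derivations (Lemma~\ref{le:deriv-lemma}), realised inside $\End\fG$ through its representation on $\fG$. Because $R$ is an ideal, the operators $ad_{\mathscr{X}}$ with one entry of $\mathscr{X}$ in $R$ generate an ideal $\mathfrak{I}$ of $\mathrm{InDer}\,\fG\subseteq\fgl(\fG)$, and $Z\in R$ says precisely that $\Tr(ad_{\mathscr{X}}\,ad_{\mathscr{Y}})=0$ for all such $\mathscr{X}$ and arbitrary $\mathscr{Y}$. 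The plan is then to invoke the linear trace criterion (Jordan decomposition / Lie's theorem applied to $\mathfrak{I}$) to deduce that these operators are nilpotent, and finally to translate nilpotency of the inner derivations back into solvability of the ideal $R$ of $\fG$. This last translation is where the genuinely Filippov-specific difficulty lies: unlike the Lie case the $n$-bracket carries several \emph{inequivalent} notions of solvability ($k$-solvability, $1\le k\le n$; see the footnote in Sec.~\ref{sec:solv-FA}), so I must verify that nilpotency of $ad(R)$ produces an honestly terminating descending chain $R\supseteq R^{(1)}\supseteq\cdots\supseteq0$. I would do this by checking that $ad$ carries the relevant derived series of $R$ onto that of the nilpotent operator algebra $\mathfrak{I}$.

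For (ii) I would take a minimal ideal $I\subseteq\Rad(\fG)$ and note first that $J:=[I,I,\fG,\dots,\fG]$ is again an ideal of $\fG$ contained in $I$: applying the FI to $[W_1,\dots,W_{n-1},[a,b,y_3,\dots,y_n]]$ with $a,b\in I$ produces a sum of brackets each still carrying two entries in $I$. By minimality $J=0$ or $J=I$. In the case $J=0$ --- any bracket with two entries in $I$ vanishes --- the Lie-theoretic nilpotency argument transfers verbatim: for $Z\in I$ and $\mathscr{X}=(Z,X_2,\dots,X_{n-1})$ one has $\operatorname{im}(ad_{\mathscr{X}})\subseteq I$ while $ad_{\mathscr{X}}|_I=0$, so $(ad_{\mathscr{X}}\,ad_{\mathscr{Y}})^2=0$ and $k(Z,X_2,\dots,X_{n-1},\mathscr{Y})=\Tr(ad_{\mathscr{X}}\,ad_{\mathscr{Y}})=0$, giving $I\subseteq R$. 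The delicate point, and the converse counterpart of the solvability-notion issue above, is that a Filippov-abelian $I$ (one with $[I,\dots,I]=0$, automatic once $\dim I<n$) need \emph{not} satisfy $J=0$; I must therefore exclude $J=I$ by exploiting the solvability of $I$ inside $\Rad(\fG)$ more finely than mere Filippov-abelianness, which is exactly where Kasymov's criterion demands work beyond the naive Lie-algebra template.
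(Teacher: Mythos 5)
A preliminary remark: the paper states Th.~\ref{th:Kas-crit} without proof, citing Kasymov \cite{Kas:95a}, so there is no in-paper argument to compare you against and your proposal must stand on its own. Its sound parts do stand: the invariance identity $k(\mathscr{W}\cdot\mathscr{X},\mathscr{Y})+k(\mathscr{X},\mathscr{W}\cdot\mathscr{Y})=0$ (the general-$n$ version of \eqref{CasInv}), combined with the symmetry of $k$ under exchange of its two blocks, correctly shows that the radical $R$ of the form is an ideal; and your $P^2=0$ trace computation is correct for any ideal $I$ with $[I,I,\fG,\dots,\fG]=0$. The problem is that the two steps you yourself flag as delicate are genuine gaps, not routine checks. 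Concretely, claim (ii) is never established in the form your top-level reduction needs: what $\Rad(\fG)\neq 0$ hands you (last nonzero term of the Filippov derived series, which is indeed an ideal of $\fG$) is only Filippov-abelian, $[I,\dots,I]=0$, whereas your trace argument uses the much stronger hypothesis $[I,I,\fG,\dots,\fG]=0$, and for $n\geq 3$ the two are inequivalent. The natural repair is to iterate the $k=2$ series $I\supseteq[I,I,\fG,\dots,\fG]\supseteq\cdots$ (each term an ideal of $\fG$, by exactly your FI computation) and take its last nonzero term; but this requires the series to terminate, i.e.\ $2$-solvability of $I$, and by the footnote in Sec.~\ref{sec:solv-FA} the known implication runs the other way ($k$-solvability implies $n$-solvability). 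Your ``exclude $J=I$'' case is precisely this missing lemma, and it is left entirely open.

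In (i) there is a second, more technical slip plus another unexecuted crux. For the ideal $\mathfrak{I}\subseteq\mathrm{InDer}\,\fG$ spanned by the $ad_{\mathscr{X}}$ with one entry in $R$, the hypothesis $\Tr(AB)=0$ for $A\in\mathfrak{I}$, $B\in\mathrm{InDer}\,\fG$ gives, via Cartan's criterion, solvability of $\mathfrak{I}$ as a linear Lie algebra; it does \emph{not} make the individual operators $ad_{\mathscr{X}}$ nilpotent (solvable linear Lie algebras contain semisimple elements; nilpotency is guaranteed only on $[\mathfrak{I},\mathfrak{I}]$). More importantly, the back-translation from Lie-solvability of $\mathfrak{I}$ to Filippov-solvability of $R$ -- which you need in order to place $R\subseteq\Rad(\fG)$ -- is only announced, never performed: since $ad$ may have a sizable kernel, the Filippov derived series of $R$ is not carried termwise onto the derived series of $\mathfrak{I}$ in any obvious way, and this is again where the inequivalent solvability notions bite. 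Note finally that the direction for which you invoke (i) can bypass the Cartan-type analysis altogether: if $\fG$ is semisimple it is a direct sum of simple ideals (Th.~\ref{ssFAandInDev}), brackets with two entries in different summands vanish, so for $Z=\sum_s Z_s$ with some $Z_s\neq 0$ one may restrict all remaining $2n-3$ arguments to $\fG_{(s)}$ and conclude $k(Z,\dots)\neq 0$ for a suitable choice from the explicit diagonal nondegeneracy of $k$ on a simple FA computed in Remark~\ref{rem:Kas-simp}. That shortcut reduces the whole theorem to repairing the converse direction, i.e.\ the gap in (ii).
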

\begin{remark} ({\it Kasymov form and simple FA algebras})
\label{rem:Kas-simp}{\ }

One might have thought of another generalization of Cartan's
criterion by looking at the form $k$ in eq.~\eqref{KasCK}
as a bilinear form on $\wedge^{n-1}\mathfrak{G}$,
\begin{equation*}
 k:  \wedge^{n-1} \mathfrak{G} \times \wedge^{n-1} \mathfrak{G}
     \longrightarrow \mathbb{R} \; ,
\end{equation*}
to analyze the consequences of non-degeneracy in the usual sense,
$k(\mathscr{X},\mathscr{Y})=Tr(ad_{\mathscr{X}}ad_{\mathscr{Y}}) =
0\quad\forall\,\mathscr{Y}\in \wedge^{n-1} \mathfrak{G} \;
\Rightarrow \mathscr{X}=0$, perhaps thinking of extending to FAs the
semisimplicity proof that holds for Lie algebras with a
non-degenerate Cartan-Killing form. But we see immediately that this
does not work for $n\geq 3$. Any semisimple $n$-Lie algebra is the
direct sum  of its simple ideals \cite{Ling:93},
eq.~\eqref{sum-simple}. As a consequence, an $n$-bracket
$[\dots,X,\dots,Y,\dots]$ is zero whenever $X$ and $Y$ belong to
different simple ideals. Then, if $X_1,\dots ,X_{n-2}$ and $Y$ are
in different ideals, it follows that $ad_{\mathscr{X}}$ for
$\mathscr{X}=(X_1,\dots, X_{n-2}, Y)$ is the zero derivation, so
that $k(\mathscr{X},\mathscr{Y})=0$ for any $\mathscr{Y}$ without
$\mathscr{X}$ itself being zero ({\it cf.} Th.~\ref{th:Kas-crit}).
This cannot happen for an $n=2$ Filippov (Lie) algebra $\fg$, since
its fundamental objects are determined by a single element of $\fg$.

   Nevertheless, for {\it simple} $n$-Lie algebras $k$ is nondegenerate
as a bilinear metric on $\wedge^{n-1} \mathfrak{G}$. To show this,
we may use that a real simple $n$-Lie algebra is
\cite{Ling:93,Filippov} one of the FAs in eq.~\eqref{simp-n-FA}.
Since $k$ on $\wedge^{n-1}\mathfrak{G}$ is determined by its values
on a basis, we take $\mathscr{X}=(\textbf{e}_{a_1}, \dots ,
\textbf{e}_{a_{n-1}})$, $\mathscr{Y}=(\textbf{e}_{b_1}, \dots ,
\textbf{e}_{b_{n-1}})$. Using eq.~\eqref{xidot}, the action of
$ad_{\mathscr{X}} ad_{\mathscr{Y}}$ on the vector $\textbf{e}_c$ is
found to be
\begin{equation}
\label{simple1}
 ad_{\mathscr{X}} ad_{\mathscr{Y}} \,\textbf{e}_c = \sum_{d,g=1}^{n+1}
 \varepsilon_d \varepsilon_g \epsilon_{b_1 \dots b_{n-1} c}{}^d
 \epsilon_{a_1 \dots a_{n-1} d}{}^g \,\textbf{e}_g  \; ,
\end{equation}
so that the trace of the matrix
$ad_{\mathscr{X}} ad_{\mathscr{Y}}$ is given by
\begin{equation}
  k(\mathscr{X},\mathscr{Y}) =Tr(ad_ \mathscr{X}ad_\mathscr{Y})= \sum_{d,g=1}^{n+1}
 \varepsilon_d \varepsilon_g \epsilon_{b_1 \dots b_{n-1} g}{}^d
 \epsilon_{a_1 \dots a_{n-1} d}{}^g
 \equiv  k_{(a_1\dots a_{n-1})(b_1\dots b_{n-1})} \; .
\label{simple2}
\end{equation}
The numbers appearing in the $r.h.s.$ of (\ref{simple2}), seen as
a ${n+1 \choose n-1} \times {n+1 \choose n-1}$ matrix of
indices $(a_1\dots a_{n-1})(b_1\dots b_{n-1})$ characterized
by the fundamental objects above, determine a diagonal one
with non-zero diagonal elements. Indeed, given an index
determined by a certain set $(a_1 \dots a_{n-1})$, the antisymmetric
tensor $\epsilon_{a_1 \dots a_{n-1} d}{}^g$ fixes the remaining
$d$ and $g$ (and $\varepsilon_d \varepsilon_g $) so that
the other matrix index $(b_1, \dots, b_{n-1})$ has to
be given by a reordering of $(a_1 \dots a_{n-1})$. For
this reason the only non-zero elements of the
$k_{(a_1\dots a_{n-1})(b_1\dots b_{n-1})}$ matrix
are all its diagonal ones, and hence it is non-degenerate.
For $n=2$ one just finds, of course, that the Kasymov matrix
is proportional to $-\mathbb{I}_3$ and that $su(2)$ is
simple.
\end{remark}

\section{The Lie algebras associated to $n$-Lie algebras}
\label{sec:nLie-to-Lie}

\subsection{Preliminaries: the Lie algebra Lie$\,\fG$ associated
with a 3-Lie algebra $\fG$}
\label{3-to-Lie}{\ }

 Let $\fG$ be a 3-Lie algebra. Its three-bracket satisfies
the FI,
\begin{equation}
[X_1,X_2, [Y_1,Y_2,Z]] = [[X_1,X_2,Y_1],Y_2,Z]+[Y_1,
[X_1,X_2,Y_2],Z]+[Y_1,Y_2,[X_1,X_2,Z]] \; .
\end{equation}
Moving the last term to the {\it l.h.s.} and writing it in terms
of the adjoint map, $ad_{X_1,X_2}:Z \mapsto [X_1,X_2,Z]$, the
above equation reads
\begin{equation}
\label{actonz}
ad_{X_1,X_2} (ad_{Y_1,Y_2} \,Z)-ad_{Y_1,Y_2}
(ad_{X_1,X_2} \,Z) =
 ad_{([X_1,X_2,Y_1],Y_2)+ (Y_1,[X_1,X_2,Y_2])}\, Z
\end{equation}
or, equivalently (see Sec.~\ref{sec:fund-obj}),
\begin{equation}
\label{Lie-3-FA}
ad_{\mathscr{X}} (ad_{\mathscr{Y}} Z)-
ad_{\mathscr{Y}} (ad_{\mathscr{X}} Z) =
ad_{\mathscr{X}\cdot\mathscr{Y}}\, Z \qquad \forall
\mathscr{X},\mathscr{Y}\in\wedge^2\fG \,,\, Z\in\fG
\end{equation}
which, as already discussed (see eq.~\eqref{xy.skew}) is
$\mathscr{X}\leftrightarrow \mathscr{Y}$
skewsymmetric\footnote{For $n$=2, where $[X,Y]=XY-YX$, this equation
is the familiar $[[X,\quad] , [Y,\quad]]=[[X,Y], \quad]$, eq.~\eqref{adJI},
which is manifestly $X\leftrightarrow Y$ skewsymmetric in both sides.}
albeit $\mathscr{X}\cdot\mathscr{Y}\not=-\mathscr{Y}\cdot\mathscr{X}$
in the $r.h.s$ for $n>2$ .

Clearly, the commutator of two inner derivations is another
one. This may be more explicitly seen by rewritting
the above expressions as
\begin{equation}
\label{actonzb}
 [\,[X_1,X_2, \quad]\,,\,[Y_1,Y_2,\quad]\,] =
 [\,[X_1,X_2,Y_1],Y_2, \quad] + [Y_1,[X_1,X_2,Y_2] , \quad] \quad .
\end{equation}
Further, {\it the inner derivations of a 3-algebra $\fG$
determine an ordinary Lie algebra} since the Lie bracket of two
derivations $[\,[X_1, X_2\,,\quad]\,,\,[Y_1, Y_2\,,\quad]\,]$
satisfies the JI
\begin{equation}
\sum_{XYZ\, cycl.}[\,[X_1, X_2,\quad],[\,[Y_1, Y_2,\quad],
[Z_1, Z_2,\quad]\,]\,]=0 \quad ,
\end{equation}
as it is evident from Prop.~\ref{n-1 from n} above and
will be seen for general $n$ below.  Since
$ad:\wedge^2\fG\mapsto \mathrm{End}\,\fG$ may have
a non-trivial kernel, the map
$\mathscr{X}_{a_1 a_2}\in \wedge^2 \fG \rightarrow ad_{\mathscr{X}_{a_1 a_2}}$
is not injective.

The Lie algebra associated with a 3-Lie algebra will
become essential to define the symmetries of the BLG model
\cite{Ba-La:07b,Ba-La:07a,Gustav:08,Cher-Sa:08} in
Sec.~\ref{sec:BLG}.\\

\subsubsection{Coordinate expressions for $n$=3}
\label{sec:coord3alg}{\  }

Let $n=3$. The coordinates of the $\hbox{dim}\,\fG\times \hbox{dim}\,\fG$ matrices
$[X_{a_1},X_{a_2},\quad]\equiv \mathscr{X}_{a_1 a_2}\cdot
\equiv (\mathscr{X}_{a_1 a_2}) \equiv ad_{a_1 a_2} \in \hbox{End}\fG$  are given by
\begin{equation}
\label{3Liealelem}
 ad_{a_1 a_2}{}^l{}_k\equiv(\mathscr{X}_{a_1 a_2})^l{}_k =
 f_{a_1 a_2 k}{}^l \quad, \quad \mathscr{X}_{a_1 a_2}
 \cdot X_k=[X_{a_1},X_{a_2},X_k]= f_{a_1 a_2 k}{}^l\;X_l \quad .
\end{equation}
Then, the coordinates expression for eq.~\eqref{actonzb} reads
\begin{equation}
\label{nueva-sekew}
(\mathscr{X}_{a_1 a_2})^s{}_l\; (\mathscr{X}_{b_1 b_2})^l{}_k -
(\mathscr{X}_{b_1 b_2})^s{}_l \; (\mathscr{X}_{a_1 a_2})^l{}_k \equiv
f_{a_1 a_2 l}{}^s\; f_{b_1 b_2 k}{}^l - f_{b_1 b_2 l}{}^s \;f_{a_1 a_2 k}{}^l =
f_{a_1 a_2 b_1}{}^l\; f_{l b_2 k}{}^s + f_{a_1 a_2 b_2}{}^l\; f_{b_1 l k}{}^s \quad ,
\end{equation}
and the last equality (with $k=b_3$) reproduces the FI \eqref{coord3FI} as it should.

The above equation may be written in the form
\begin{equation*}
[(\mathscr{X}_{a_1 a_2}),(\mathscr{X}_{b_1 b_2})]^s{}_k= -f_{a_1 a_2 [b_1}{}^l\;f_{b_2]l k}{}^s
\end{equation*}
which means that we may express the above Lie commutators as \cite{Gustav:08}
\begin{equation}
\label{3Cs}
[(\mathscr{X}_{a_1 a_2}),(\mathscr{X}_{b_1 b_2})]^s{}_k=
\frac{1}{2}C_{a_1 a_2 \; b_1 b_2}{}^{c_1 c_2} \,(\mathscr{X}_{c_1 c_2})^s{}_k
\end{equation}
taking, for instance,
\begin{equation}
\label{3Csb}
C_{a_1 a_2 \; b_1 b_2}{}^{c_1 c_2} = f_{a_1 a_2[b_1}{}^{[c_1}\;\delta^{c_2]}_{b_2]}  \quad ,
\end{equation}
However, this does not mean that the above $C$'s are the structure
constants of Lie$\,\fG$ on two counts. First, although the $r.h.s.$
of expressions eq.~\eqref{nueva-sekew}-\eqref{3Cs} are $(a_1
a_2)\leftrightarrow (b_1 b_2)$ skewsymmetric as mandated by their
$l.h.s$, this does not necessarily imply that the constants in
eq.~\eqref{3Csb} retain this property since the sum over $(c_1 c_2)$
has been removed. One may, of course, write antisymmetric $C$'s in
eq.~\eqref{3Cs} by taking
\begin{equation}
\label{skew-3Cs}
C_{a_1 a_2 \; b_1 b_2}{}^{c_1 c_2}
= \frac{1}{2}\left(f_{a_1 a_2[b_1}{}^{[c_1}\;\delta^{c_2]}_{b_2]}
- (a\leftrightarrow b)\right)
\end{equation}
but, secondly, this is not sufficient for them to
be the structure constants of Lie$\,\fG$ since, in general,
the ($c_1,c_2$)-labeled matrices
$(\mathscr{X}_{c_1 c_2})$ are not a basis\footnote{Of course, the
$(a_1a_2)\leftrightarrow(b_1b_2)$ antisymmetry of the $C$s and
the JI hold on the matrices $(\mathscr{X}_{e_1 e_2})$,
\begin{eqnarray*}
  & & \qquad \qquad \left(C_{a_1 a_2 \; b_1 b_2}{}^{e_1 e_2}+
    C_{b_1 b_2 \; a_1 a_2}{}^{e_1 e_2}\right) (\mathscr{X}_{e_1 e_2})^s{}_l= 0
    \quad ,\nonumber \\
& &     \sum_{\textrm{cycl.} (a_1a_2),(b_1b_2),(c_1c_2)}
  \left(C_{a_1 a_2 \; b_1 b_2}{}^{d_1 d_2} C_{c_1 c_2 \; d_1 d_2}{}^{e_1
    e_2}\right) (\mathscr{X}_{e_1 e_2})^s{}_l= 0  \quad ,
\end{eqnarray*}
since this is what follows from \eqref{3Cs} and the JI in
End$\,\fG$, $\,\sum_{cycl.}[\,[(\mathscr{X}_{a_1
a_2}),(\mathscr{X}_{b_1 b_2})],(\mathscr{X}_{c_1 c_2})\,]=0$.
However, the $(\mathscr{X}_{e_1 e_2})$ cannot be removed in the
above equations.} of $\mathrm{Lie}\,\fG$.

When the 3-Lie algebra is {\it simple}, however,
the constants (\ref{3Csb}) are already skewsymmetric
in the lower indices,
\begin{equation*}
f_{a_1 a_2[b_1}{}^{[c_1}\;\delta^{c_2]}_{b_2]} =
-f_{b_1 b_2[a_1}{}^{[c_1}\;\delta^{c_2]}_{a_2]}
\end{equation*}
(as shown for arbitrary $n$ in Sec.~\ref{sec:sim-str} below) and,
further, define the structure constants of Lie$\,\fG$ (see
Th.~\eqref{th:simple-SO}).

\begin{example} ({\it The Lie algebra associated with the euclidean FA $A_4$})
\label{ex:LieA4}

The metric FA $A_4$ ($f_{a_1 a_2 a_3}{}^l =-\epsilon_{a_1 a_2 a_3}{}^l$)
is simple, and the $C$s in \eqref{3Csb} define the structure constants of
Lie$A_4$. To identify this algebra easily we may use eq.~\eqref{3Cs}
to derive the commutation relations for the dual
$\tilde{M}^{a_1 a_2}=\frac{1}{2} \epsilon^{a_1 a_2 b_1 b_2}
(\mathscr{X}_{b_1 b_2})$ generators acting on the $A_4$
 vector space. Using eq.~\eqref{eps-1}, the commutation
relations become
\begin{equation}
\label{orthog}
[\tilde{M}^{a_1 a_2}, \tilde{M}^{b_1 b_2}]=
 -\delta^{a_1 b_2}\tilde{M}^{a_2 b_1} -\delta^{a_2 b_1}\tilde{M}^{a_1 b_2}
  +\delta^{a_1 b_1}\tilde{M}^{a_2 b_2} +\delta^{a_2 b_2}\tilde{M}^{a_1
  b_1} \quad,
\end{equation}
which are immediately recognized as those of the (semisimple)
$so(4)=so(3)\oplus so(3)$ algebra.
\end{example}
This was already evident from the second equation in eq.~\eqref{3Liealelem}
for $A_4$ which, taking the dual and with all indices down,
gives the familiar action of $so(4)$ on a vector $\be_k \in \mathbb{R}^4$,
\begin{equation}
\label{so4mom}
\tilde{M}_{a_1 a_2}\cdot \be_k= -(\delta_{a_1 k}\be_{a_2}-\delta_{k a_2}\be_{a_1}) \; .
\end{equation}
Of course, this does not mean that the 3-bracket
$[\be_{a_1},\be_{a_2},\be_{a_3}]$ for $A_4$ may be given by the
$r.h.s.$ of the above equation, which is antisymmetric in its first
two ($a_1,a_2$) indices only. Three-brackets that are not
necessarily antisymmetric may define 3-Leibniz algebras
(Sec.~\ref{sec:n-Leibniz}). We shall see on
Sec.~\ref{sec:triple+3Leib} that the $r.h.s.$ of eq.~\eqref{so4mom}
does indeed define a particular example of 3-Leibniz algebra and,
more specifically, the {\it metric Lie-triple system} of
Ex.~\ref{so4-R4-triple} .

\subsubsection{Lie$\,\fG$-invariant tensors associated
with a simple $n=3$ metric FA}
\label{inv-ten}
{\ }

It was seen in eq.~\eqref{str-cons-inv} that the structure constants of a
metric FA determine an invariant, fully antisymmetric tensor
on $\fG$ of rank $n+1$. For $n=3$, they also provide a {\it symmetric}
invariant tensor for Lie$\,\fG$ with coordinates
\begin{equation}
\label{2nd-inv}
k^{(2)}_{(a_1 a_2)\;(b_1 b_2)} = k^{(2)} (\mathscr{X}_{a_1 a_2},\mathscr{X}_{b_1 b_2})
\equiv k^{(2)}((\be_{a_1},\be_{a_2}), (\be_{b_1},\be_{b_2}) ) :=
  \langle [\be_{a_1},\be_{a_2},\be_{b_1}],\be_{b_2} \rangle
  = f_{a_1 a_2 b_1 b_2} \quad ,
\end{equation}
which is obviously symmetric under the $(a_1 a_2)\leftrightarrow
(b_1 b_2)$ exchange. Similarly, the familiar {\it Killing form} $k$
of Lie{$\,\fG$ is given by eq.~\eqref{KasCK}
\begin{equation}
\label{assoc-Casim}
 k^{(1)}_{a_1 a_2 b_1 b_2}= k^{(1)} (\mathscr{X}_{a_1 a_2} , \mathscr{Y}_{b_1 b_2}):=
Tr (ad_{\mathscr{X}_{a_1 a_2}} ad_{\mathscr{Y}_{b_1 b_2}}) = f_{a_1
a_2 s}{}^l \, f_{b_1 b_2 l}{}^s \quad ,
\end{equation}
and may be seen to be proportional to
\begin{equation*}
\kappa^{(1)}_{(a_1 a_2) (b_1 b_2)}=
C_{a_1 a_2 \; c_1 c_2}{}^{d_1 d_2} \, C_{b_1 b_2 \; d_1 d_2}{}^{c_1 c_2} \quad .
\end{equation*}

   It is not difficult to check the Lie$\,\fG$-invariance of these
tensors in the formalism of fundamental objects. This reads
\begin{equation}
\label{Kas-inv-fund}
\mathscr{Z}\cdot k(\mathscr{X},\mathscr{Y})=
k(\mathscr{Z}\cdot \mathscr{X}, \mathscr{Y}) +
  k(\mathscr{X}, \mathscr{Z}\cdot \mathscr{Y}) = 0    \; .
\end{equation}
Indeed, with
$\mathscr{Z}=\mathscr{Z}_{c_1 c_2}\equiv(\be_{c_1},\be_{c_2})\,,\,
\mathscr{X}=\mathscr{X}_{a_1 a_2}\,,\,
\mathscr{Y}=\mathscr{Y}_{b_1 b_2}$,
the $r.h.s.$ of the above expression for $k^{(2)}$ gives,
using \eqref{dot-compo},
\begin{eqnarray*}
  & & k^{(2)} \left( \left( {f_{c_1c_2a_1}}^d \be_d , \be_{a_2}\right) +
\left( \be_{a_1} , {f_{c_1c_2a_2}}^d \be_d\right), (\be_{b_1},
\be_{b_2})\right) \nonumber\\
  & & \qquad \qquad + \,
 k^{(2)} \left( (\be_{a_1}, \be_{a_2}), \left( {f_{c_1c_2b_1}}^d \be_d ,
\be_{b_2}\right) +
\left( \be_{b_1} , {f_{c_1c_2b_2}}^d \be_d\right)\right) = 0 \; ,
\nonumber
\end{eqnarray*}
which indeed is zero since it yields eq.~\eqref{str-cons-inv} for $n=3$,
\begin{equation}
   {f_{c_1c_2a_1}}^d f_{da_2b_1b_2} + {f_{c_1c_2a_2}}^d f_{a_1db_1b_2}
  + {f_{c_1c_2b_1}}^d f_{a_1a_2db_2} +
  {f_{c_1c_2b_2}}^d f_{a_1a_2b_1d} = 0 \; .
\end{equation}
Similarly, $k^{(1)}$ also satisfies eq.~\eqref{Kas-inv-fund} since,
using the $ad$ representation property (eq.~\ref{Lie-3-FA} or
Th.~\ref{th:n-Lie->Lie} below)
\begin{eqnarray}
\label{CasInv}
& &  k^{(1)} (\mathscr{Z} \cdot \mathscr{X} , \mathscr{Y}) +
 k^{(1)}(\mathscr{X}, \mathscr{Z} \cdot \mathscr{Y}) =
Tr(ad_{\mathscr{Z}\cdot \mathscr{X}} ad_\mathscr{Y})
+ Tr(ad_\mathscr{X} ad_{\mathscr{Z}\cdot \mathscr{Y} })
 \nonumber\\
& & \quad\quad = Tr((ad_\mathscr{Z} ad_\mathscr{X}- ad_\mathscr{X}
ad_\mathscr{Z}) ad_\mathscr{Y}) + Tr(ad_\mathscr{X}
(ad_\mathscr{Z} ad_\mathscr{Y} - ad_\mathscr{Y}ad_\mathscr{Z})) =0 \; .
\end{eqnarray}

For the euclidean $A_4$ FA, $f_{a_1 a_2 a_3 a_4}= \epsilon_{a_1 a_2
a_3 a_4}$ and the metric $k^{(1)}$ in \eqref{assoc-Casim} becomes
the $so(4)$ Cartan-Killing metric of coordinates
\begin{equation}
\label{CK-so4}
k^{(1)}_{a_1 a_2 b_1 b_2}= -(\delta_{a_1 b_1}\delta_{a_2 b_2}-\delta_{b_1 a_2}\delta_{a_1 b_2}) \; ,
\end{equation}
which is negative definite. The metric $k^{(2)}$ has signature
(3,3), and will play an important r\^ole in the Chern-Simons term of
the BLG model in Sec.~\ref{sec:CSterm}.
\medskip

\subsection{The Lie algebra associated to an $n$-Lie algebra}
\label{sec:nFA-toLie}{\ }

The case of the three-algebra in Sec.~\ref{3-to-Lie} is readily
extended to an $n$-Lie algebra using
the properties of the fundamental objects and Lemma~\ref{le:deriv-lemma}.
The result may be re-stated as the following

\begin{theorem}
\label{th:n-Lie->Lie} ({Lie algebra associated with a
 $n$-Lie algebra}) {\ }

The inner derivations $ad_{\mathscr{X}}$ define
\cite{Da-Ku:97,Da-Tak:97} an ordinary Lie algebra for the bracket
\begin{equation}
\label{Lie-n-Lie-b}
 ad_{\mathscr{X}}ad_{\mathscr{Y}}-
ad_{\mathscr{Y}}ad_{\mathscr{X}} =
[ad_{\mathscr{X}}\,,\,ad_{\mathscr{Y}}]=
ad_{\mathscr{X}\cdot\mathscr{Y}} \quad \mathrm{or} \quad
 [[\mathscr{X}\,, -]\,,\,[\mathscr{Y}\,,-]]=
 [\mathscr{X}\cdot\mathscr{Y}\,,-]
 \quad .
\end{equation}
The subalgebra of End$\,\fG$
defined by the Lie bracket \eqref{Lie-n-Lie-b} is the Lie algebra
Lie$\,\fG= \mathrm{InDer}\,\fG$ associated with the FA $\fG$.
\begin{proof}
The JI,
\begin{equation}
\label{fdtal-Lie-JI} [ad_{\mathscr{X}}\,,\,[ad_{\mathscr{Y}}\,,\,
ad_{\mathscr{Z}}] +[ad_{\mathscr{Y}}\,,\,[ad_{\mathscr{Z}}\,,\,
ad_{\mathscr{X}}]+ [ad_{\mathscr{Z}}\,,\,[ad_{\mathscr{X}}\,,\,
ad_{\mathscr{Y}}]=0 \quad,
\end{equation}
is of course satisfied for any elements in End$\,\fG$; what we have to check
is the consistency with the Lie bracket defined above. This is so
because the $l.h.s.$ of the above equation is
\begin{equation}
\begin{aligned}
=[ad_{\mathscr{X}}, ad_{\mathscr{Y}\cdot\mathscr{Z}}] &+
[ad_{\mathscr{Y}},ad_{\mathscr{Z}\cdot\mathscr{X}}]+
[ad_{\mathscr{Z}},ad_{\mathscr{X}\cdot\mathscr{Y}}]\\
=ad_{\mathscr{X}\cdot(\mathscr{Y}\cdot\mathscr{Z})} &+
ad_{\mathscr{Y}\cdot(\mathscr{Z}\cdot\mathscr{X})}+
ad_{\mathscr{Z}\cdot (\mathscr{X}\cdot\mathscr{Y})} \; ,
\end{aligned}
\end{equation}
which is zero by eq.~\eqref{Lie-setb}.
\end{proof}
\end{theorem}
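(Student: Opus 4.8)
The plan is to verify that the bracket defined by $[ad_{\mathscr{X}}, ad_{\mathscr{Y}}] := ad_{\mathscr{X}\cdot\mathscr{Y}}$ genuinely endows the space of inner derivations with a Lie algebra structure. Since the $ad_{\mathscr{X}}$ already sit inside $\End\fG$, antisymmetry and bilinearity of the commutator $[\,\cdot\,,\cdot\,]$ are automatic; the only nontrivial points are that the bracket closes (so that the commutator of two inner derivations is again inner) and that the Jacobi identity holds. Closure is precisely the content of eq.~\eqref{Lie-n-Lie-b}, which is itself a restatement of Lemma~\ref{le:deriv-lemma} (eq.~\eqref{Lie-n-Lie}): the composition $\mathscr{X}\cdot\mathscr{Y}\in\wedge^{n-1}\fG$ is again a fundamental object, so $ad_{\mathscr{X}\cdot\mathscr{Y}}$ is again an inner derivation. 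Hence the real task is to confirm the Jacobi identity for this bracket.

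First I would write down the Jacobi expression \eqref{fdtal-Lie-JI} for three inner derivations $ad_{\mathscr{X}}, ad_{\mathscr{Y}}, ad_{\mathscr{Z}}$. Here I would invoke the fact that the $ad_{\mathscr{X}}$ are honest operators in $\End\fG$, where the associative commutator automatically satisfies the Jacobi identity; thus the left-hand side of \eqref{fdtal-Lie-JI} vanishes identically \emph{as operators}. The point to be checked is consistency: that evaluating the same expression using the defining rule $[ad_{\mathscr{X}}, ad_{\mathscr{Y}}]=ad_{\mathscr{X}\cdot\mathscr{Y}}$ reproduces this zero. So the second step is to reduce each inner commutator using \eqref{Lie-n-Lie-b}, obtaining
\begin{equation*}
[ad_{\mathscr{X}}, ad_{\mathscr{Y}\cdot\mathscr{Z}}]+[ad_{\mathscr{Y}}, ad_{\mathscr{Z}\cdot\mathscr{X}}]+[ad_{\mathscr{Z}}, ad_{\mathscr{X}\cdot\mathscr{Y}}],
\end{equation*}
and then apply \eqref{Lie-n-Lie-b} a second time to each term, yielding
\begin{equation*}
ad_{\mathscr{X}\cdot(\mathscr{Y}\cdot\mathscr{Z})}+ad_{\mathscr{Y}\cdot(\mathscr{Z}\cdot\mathscr{X})}+ad_{\mathscr{Z}\cdot(\mathscr{X}\cdot\mathscr{Y})}.
\end{equation*}

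The final step is to show this sum vanishes. I would appeal directly to the fundamental relation \eqref{Lie-setb} of Lemma~\ref{le:deriv-lemma}, namely $\mathscr{X}\cdot(\mathscr{Y}\cdot\mathscr{Z})-\mathscr{Y}\cdot(\mathscr{X}\cdot\mathscr{Z})=(\mathscr{X}\cdot\mathscr{Y})\cdot\mathscr{Z}$, which is the "Leibniz/derivation" identity for the dot composition on $\wedge^{n-1}\fG$. Rewriting the cyclic sum of the $ad_{\mathscr{X}\cdot(\mathscr{Y}\cdot\mathscr{Z})}$ terms and substituting \eqref{Lie-setb} (together with the skewsymmetry \eqref{xy.skew} of $ad_{\mathscr{X}\cdot\mathscr{Y}}$ in the relevant places) collapses the three terms to zero; this is exactly the computation indicated at the end of the excerpt. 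The main obstacle is not any individual manipulation but the conceptual care needed to keep the two distinct notions separate: the composition $\mathscr{X}\cdot\mathscr{Y}$ is \emph{not} antisymmetric as an element of $\wedge^{n-1}\fG$, yet its image $ad_{\mathscr{X}\cdot\mathscr{Y}}$ \emph{is} antisymmetric under $\mathscr{X}\leftrightarrow\mathscr{Y}$ by \eqref{xy.skew}. One must therefore make sure that the Jacobi identity is proven at the level of the operators $ad$ (where \eqref{Lie-setb} applies after passing through $ad$), rather than asserting an associativity-type relation on $\wedge^{n-1}\fG$ that does not hold. Because both \eqref{Lie-setb} and \eqref{Lie-n-Lie-b} are already established, the verification is short and purely formal once the bookkeeping is set up correctly.
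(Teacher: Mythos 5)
Your proof is correct and follows essentially the same route as the paper's: reduce the Jacobi identity in $\End\fG$ twice through the bracket \eqref{Lie-n-Lie-b} to the cyclic sum $ad_{\mathscr{X}\cdot(\mathscr{Y}\cdot\mathscr{Z})}+ad_{\mathscr{Y}\cdot(\mathscr{Z}\cdot\mathscr{X})}+ad_{\mathscr{Z}\cdot(\mathscr{X}\cdot\mathscr{Y})}$, which vanishes by the fundamental relation \eqref{Lie-setb}. Your explicit observation that the final cancellation takes place only at the level of the images $ad_{\mathscr{X}\cdot\mathscr{Y}}$ (via \eqref{xy.skew}), and not in $\wedge^{n-1}\fG$ itself, is a welcome sharpening of the paper's terse ``zero by eq.~\eqref{Lie-setb}''.
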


When $\fG$ is simple $ad$ is injective. If $\fG$
is only semisiple, all fundamental objects with
two entries in two different simple components
of $\fG$ determine the zero derivation and $ad$
is no longer injective (see Rem.~\ref{rem:Kas-simp}).

\subsubsection{The Lie algebras associated to the
simple FAs}
\label{sec:sim-str}
{\ }

In general, chosen a basis of $\fG$ the commutation relations for
different elements $ad_\mathscr{X}$ may be written as
\begin{equation}
\label{nCs}
[ad_{\mathscr{X}_{a_1\dots a_{n-1}}},ad_{\mathscr{X}_{b_1\dots b_{n-1}}}]=
\frac{1}{(n-1)!}
C_{a_1\,\ldots\,a_{n-1} \,b_1\,\ldots\, b_{n-1}}{}^{c_1 \, \ldots \, c_{n-1}}\,
ad_{\mathscr{X}_{c_1\dots c_{n-1}}}
\end{equation}
with {\it e.g.} antisymmetric $C$s  given by ({\it cf.} eq.~\eqref{skew-3Cs})
\begin{equation}
\label{skew-nCs}
C_{a_1\,\ldots\,a_{n-1} \,b_1\,\ldots\, b_{n-1}}{}^{c_1 \, \ldots \, c_{n-1}} =
\frac{1}{(n-2)! \,2} \left( f_{{a_1}\,\ldots\, {a_{n-1}}\, {[b_1}}{}^{[c_1}\delta_{b_2}^{c_2} \ldots
\delta_{b_{n-1}]}^{c_{n-1}]} - (a\leftrightarrow b) \right) \quad .
\end{equation}

When $\fG$ is simple (Th.~\ref{th:FAsimple}), the first half
of the $C$s above is already antisymmetric and
provides the structure constants of Lie$\,\fG$.
To check their antisymmetry explicitly, we write
\begin{eqnarray}
\label{nSCrewritten1}
   {C_{a_1\dots a_{n-1} b_1\dots b_{n-1}}}^{c_1\dots c_{n-1}}
 &\propto & {\epsilon_{a_1\dots a_{n-1} [b_1}}^d \delta_d^{[c_1}
 \delta_{b_2}^{c_2} \dots \delta_{b_{n-1}]}^{c_{n-1}]} \nonumber\\
 &\propto&
 {\epsilon_{a_1\dots a_{n-1} [b_1}}^d \epsilon_{b_2\dots b_{n-1}]d e_1 e_2}
 \epsilon^{c_1\dots c_{n-1} e_1e_2}\ ,
\end{eqnarray}
(note that in the first line above the lower index $d$ is
unaffected by the antisymmetrization
imposed by the square bracket, which acts on the ($n-1$) indices $c$
and $b$ only). We now use a Schouten-type identity obtained by
antisymmetrizing the ($n+2$) indices ($b_1\dots,b_{n-1},d,e_1,e_2$)
in the last term above to obtain
\begin{equation}
\label{SCrewritten2}
 {\epsilon_{a_1\dots a_{n-1}
 [b_1}}^d \epsilon_{b_2\dots b_{n-1}]d e_1e_2}
 \epsilon^{c_1\dots c_{n-1} e_1e_2}
 - (n-2)! 2 {\epsilon_{a_1\dots a_{n-1}e_1}}^d
 \epsilon_{b_2\dots b_{n-1} d b_1e_2}
 \epsilon^{c_1\dots c_{n-1}e_1e_2}=0 \; .
\end{equation}
This means that the structure constants are proportional to the
last term. Therefore,
\begin{equation}
\label{SCrewritten3}
 {C_{a_1\dots a_{n-1} b_1\dots b_{n-1}}}^{c_1\dots c_{n-1}}
 \propto \epsilon^{c_1\dots
 c_{n-1}e_1e_2} {\epsilon_{e_1a_1\dots a_{n-1}}}^d
 \epsilon_{e_2b_1 \dots b_{n-1}d} \ ,
\end{equation}
which is clearly antisymmetric under the interchange
of the  $a$ and $b$ indices.
\medskip

   For $n>3$ one proceeds as for the $A_4$ case in Ex.~\ref{ex:LieA4}.
For instance, for the euclidean $A_5$, $f_{a_1 a_2 a_3 a_4}{}^{a_5}=
\epsilon_{a_1 a_2 a_3 a_4}{}^{a_5}$ and $(M_{a_1 a_2 a_3})^s{}_k=
f_{a_1 a_2 a_3 k}{}^s$. Then one finds that the dual generators
${\tilde M}^{a_1 a_2}=\frac{1}{3!}\epsilon^{a_1 a_2 b_1 b_2 b_3}M_{b_1 b_2 b_3}$
($M_{b_1 b_2 b_3}=\frac{1}{2}\epsilon_{b_1 b_2 b_3 a_1 a_2}{\tilde M}^{a_1 a_2}$) reproduce
exactly the commutation relations for the ${5 \choose 2}=10$ skewsymmetric matrices
that are the generators of the orthogonal algebra, eq.~\eqref{orthog}.
Thus, the Lie algebra associated with $A_5$ is $so(5)$.

This was to be expected: since the $A_{n+1}$ algebras are simple,
the algebra Lie$\,A_{n+1}$ of inner derivations coincides
with the Lie algebra of the group of automorphisms (Th.~\ref{simple-inder}),
and therefore Lie$\,A_{n+1}$ is the algebra of the $SO(n+1)$ group.
Indeed, in the general case the matrices $f_{a_1\dots a_n}{}^{a_{n+1}}$, $a_i=1\dots,n+1$,
determine the ${n+1\choose n-1}={{n+1}\choose 2}$
generators of $so(n+1)$ algebra; they are obviously rotations, since
the ${{n+1}\choose 2}$ ($n+1$)-dimensional matrices are antisymmetric,
$f_{a_1\dots a_{n-1} b c}= -f_{a_1\dots a_{n-1} c b }$.
The more familiar $so(n+1)$ commutators follow
immediately from eqs.~\eqref{nCs} and \eqref{SCrewritten3} by
 moving to the dual generators $\tilde{M}^{a_1 a_2}$.

As a result, the following theorem follows:

\begin{theorem} ({\it Lie algebras associated with the simple
 euclidean $A_{n+1}$ algebras})
 \label{th:simple-SO}
 {\ }

The Lie algebra associated with the euclidean $A_{n+1}$ is the algebra
$so(n+1)$ of its inner derivations.
\end{theorem}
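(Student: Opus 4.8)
The plan is to realize the inner derivations of $A_{n+1}$ explicitly as antisymmetric $(n+1)\times(n+1)$ matrices and then to identify them, by a dimension count together with their commutation relations, with the standard generators of $so(n+1)$. By Theorem~\ref{th:n-Lie->Lie} the inner derivations $ad_{\mathscr{X}}$ already close into an ordinary Lie algebra $\mathrm{Lie}\,\fG=\mathrm{InDer}\,\fG$ under the matrix commutator in $\mathrm{End}\,\fG$, so the task reduces to identifying this subalgebra.

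First I would recall from Theorem~\ref{th:FAsimple} that, up to a sign, the euclidean $A_{n+1}$ has structure constants $f_{a_1\dots a_n}{}^{a_{n+1}}=\epsilon_{a_1\dots a_n}{}^{a_{n+1}}$, indices being raised and lowered with the euclidean metric $\delta_{ab}$. The fundamental object $\mathscr{X}_{a_1\dots a_{n-1}}$ is then represented by the matrix $M_{a_1\dots a_{n-1}}\in\mathrm{End}\,\fG$ with entries $(M_{a_1\dots a_{n-1}})^l{}_b = f_{a_1\dots a_{n-1} b}{}^l = \epsilon_{a_1\dots a_{n-1} bl}$. The immediate observation is that each such matrix is antisymmetric in its row and column indices $b,l$, since $\epsilon_{a_1\dots a_{n-1}bl}=-\epsilon_{a_1\dots a_{n-1}lb}$; hence every $ad_{\mathscr{X}}$ lies in $so(n+1)$, and $\mathrm{Lie}\,\fG\subseteq so(n+1)$.

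Next I would show the inclusion is an equality by passing to the dual generators $\tilde{M}^{c_1 c_2}=\frac{1}{(n-1)!}\epsilon^{c_1 c_2 a_1\dots a_{n-1}}M_{a_1\dots a_{n-1}}$, exactly as in Example~\ref{ex:LieA4} and Sec.~\ref{sec:sim-str}. Contracting the two Levi-Civita symbols over the repeated $(n-1)$ indices $a_1\dots a_{n-1}$ and iterating the expansion \eqref{eps-1}, the entries of $\tilde M^{c_1 c_2}$ collapse to $\propto(\delta^{c_1}_b\delta^{c_2}_l-\delta^{c_1}_l\delta^{c_2}_b)$, i.e.\ the standard elementary antisymmetric matrices. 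These span all of $so(n+1)$, and there are exactly $\binom{n+1}{2}=\binom{n+1}{n-1}$ of them, the number of independent fundamental objects; equivalently, since $A_{n+1}$ is simple $ad$ is injective, so the $M_{a_1\dots a_{n-1}}$ are linearly independent and hence form a basis of $so(n+1)$ as a vector space.

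Finally I would verify that the bracket of Theorem~\ref{th:n-Lie->Lie} reproduces the $so(n+1)$ commutators. Substituting $f=\epsilon$ into \eqref{nCs} and \eqref{skew-nCs} and using the Schouten identity already established in \eqref{SCrewritten3}, the constants become manifestly $a\leftrightarrow b$ skewsymmetric and equal to the structure constants of $so(n+1)$; in terms of the dual generators this is precisely the relation \eqref{orthog} (there written for $A_4$, $so(4)$), whose general form follows by the same $\epsilon$-contraction. Since $\mathrm{Lie}\,\fG$ both spans $so(n+1)$ and is closed under the commutator, it equals $so(n+1)$. Consistently, as $A_{n+1}$ is simple, Theorem~\ref{simple-inder} identifies $\mathrm{InDer}\,\fG$ with the full derivation algebra, i.e.\ with the Lie algebra of the automorphism group $SO(n+1)$. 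I expect the only delicate point to be the bookkeeping of the combinatorial weight factors in the $\epsilon$-contractions, so that the span and the commutation relations come out with the correct normalization; the structural content is already contained in the preceding Schouten-identity computation of Sec.~\ref{sec:sim-str}.
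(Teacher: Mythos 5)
Your proposal is correct and follows essentially the same route as the paper's own argument in Sec.~\ref{sec:sim-str} (modelled on Ex.~\ref{ex:LieA4}): the antisymmetry $f_{a_1\dots a_{n-1}bc}=-f_{a_1\dots a_{n-1}cb}$ places the $\binom{n+1}{2}=\binom{n+1}{n-1}$ inner-derivation matrices inside $so(n+1)$, and dualizing with the Levi-Civita symbol turns them into the elementary rotation generators obeying \eqref{orthog}, the skewsymmetry of the structure constants resting on the Schouten identity \eqref{SCrewritten3}. Your added touches---injectivity of $ad$ for simple $\fG$ to secure linear independence, and the consistency check via Th.~\ref{simple-inder}---match remarks the paper itself makes, so there is no substantive difference.
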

Since the $A_{n+1}$ algebras are simple, all their derivations
are inner (Th.~\ref{ssFAandInDev}) and determine semisimple Lie algebras
(actually, simple for $n>3$). Thus, it is not surprising that there is
an analogue of the Whitehead's lemma for FAs (Thms.~\ref{th:s-s-FA-ext}
and \ref{th:rigFAsim} below) which, accordingly, holds true \cite{Az-Iz:09}
for all $n$-Lie algebras, $n\geq 2$.

\subsection{Representations of Filippov algebras in the sense of Kasymov}
\label{sec:repFA}{\ }

  To motivate the definition below, let us note first that the expression
\begin{equation}
\label{adrepr1}
[ad_{X_1 \dots X_{n-1}}\,,\,ad_{Y_1 \dots Y_{n-1}}] =
\sum_{i=1}^{n-1}\, ad_{Y_1 \dots ad_{X_1,\dots,X_{n-1}}Y_i \dots Y_{n-1}}
\end{equation}
reproduces eq.~\eqref{Lie-n-Lie} acting on $Z\in\fG$, and that
\begin{equation}
\label{adrepr2}
ad_{X_1 \dots X_{n-2} [Y_1,\dots,Y_n]} =
\sum_{i=1}^{n}\,(-1)^{n-i}\, (ad_{Y_1 \dots \widehat{Y_i} \dots Y_n}) \,
(ad_{X_1 \dots X_{n-2} Y_i})
\end{equation}
reproduces eq.~\eqref{eq:FI} acting on $X_{n-1}$. The above
properties -the FI- satisfied by the $ad_\mathscr{X}$ may be extended
by replacing the representation space $\fG$ by a vector
space $V$ and $ad$ by a general $\rho$ subject to analogous conditions
to guarantee the preservation of the FA structure as dictated by the FI.
This leads to

\begin{definition}
({\it Representations of Filippov algebras} \cite{Kas:87,Kas:98};
see also \cite{Ling:93})
\label{FA-repres}{\ }

Let $V$ be a vector space. A representation $\rho$ of a FA on $V$ is
a multilinear map $\rho :\wedge^{n-1}\fG \rightarrow \hbox{End}V$, $\rho
: \mathscr{X} \rightarrow \rho(X_1,\dots, X_{n-1})$  such that
\begin{equation}
\label{FAro}
\begin{aligned}
\qquad  [\rho(\mathscr{X})\,,\,\rho(\mathscr{Y})] & =  \rho(\mathscr{X}\cdot\mathscr{Y}) \quad, \cr
\rho(X_1,\dots, X_{n-2},[Y_1,\dots,Y_n])   & =
\sum_{i=1}^{n} \,(-1)^{n-i}\,\rho(Y_1,\dots,\widehat{Y_i},\dots,Y_n)\rho(X_1,\dots,X_{n-2},Y_i) \quad .
\end{aligned}
\end{equation}
The representation space $V$ is said to be a (left) $\fG$-module. Note
that the arguments of $\rho$ are fundamental objects and not elements
in $\fG$ so that, strictly speaking, $\rho$ is not representing
$\fG$ itself.
\end{definition}

The subspace $\hbox{ker} \rho=\{Z\in\fG|
\rho(Z,\fG,\mathop{\dots}\limits^{n-2},\fG)=0\}\subset\fG$ is the
kernel of the representation $\rho$; it follows that it is an ideal
of $\fG$. When $\hbox{ker}\,\rho=0$ (resp. $\fG$) the representation
is {\it faithful} (resp.~{\it trivial}). In the intermediate cases,
$\rho$ is a faithful representation of the quotient FA
$\fG/\hbox{ker}\,\rho$ \cite{Kas:87}. When $\rho$ is the adjoint map
and the kernel of $ad$ is defined {\it as above} for $\rho=ad$, it
coincides with the centre $Z(\fG)$ of the FA since, if
$ad(Z,\fG,\mathop{\dots}\limits^{n-2},\fG)$ is the null element in
End$\,\fG$, it follows that
$[Z,\fG,\mathop{\dots}\limits^{n-1},\fG]=0$, which determines the
centre $Z(\fG)$ (Def.~\ref{var-defs}).
\medskip

 The above defining properties are also readily obtained
 (see \cite{Ling:93}) by imposing an $n$-Lie
algebra structure on the vector space $\fG\oplus V$ with the
condition that $V$ be an abelian ideal,
$[\fG,\mathop{\dots}\limits^{n-1},\fG,V]\subset V$,
$[\fG,\mathop{\dots}\limits^{n-2},\fG,V,V]=0$. Indeed, the FI
applied to $[X_1,\dots,X_{n-1},[Y_1,\dots,Y_{n-1},v]]$
 with the notation $[X_1,\dots,X_{n-1},v]=\rho(X_1,\dots,X_{n-1})\cdot v$  gives
 \begin{equation*}
 [X_1,\dots,X_{n-1},\,[Y_1,\dots,Y_{n-1},v]]-[Y_1,\dots,Y_{n-1},\,[X_1,\dots,X_{n-1},v]] =
 \rho(\mathscr{X}\cdot \mathscr{Y}) \cdot v \quad,
 \end{equation*}
which gives the first equality in eq.~\eqref{FAro}, and the second
one follows from eq.~\eqref{eq:FI} rewritten in the form
\begin{equation*}
[X_1,\dots,X_{n-2},[Y_1,\dots,Y_n],v]=
\sum_{i=1}^{n}\,(-1)^{n-i}\,[Y_1,\dots,\widehat{Y_i},\dots,Y_n,\,[X_1,\dots,X_{n-2},Y_i,v]]
\end{equation*}
and factoring out $v$. In fact, all manipulations are as before
where $[X_1,\dots,X_{n-1},v]$ = $\mathscr{X}\cdot v\;$ indicates
$\rho(X_1,\dots,X_{n-1})\cdot v=\rho(\mathscr{X}) \cdot v$ or
$\mathscr{X}\cdot v$ for short. For instance, on $v\in V$ the first
expression in eq.~\eqref{FAro} gives
\begin{equation}
\label{gmod}
\mathscr{X}\cdot (\mathscr{Y}\cdot v) -
\mathscr{Y}\cdot (\mathscr{X} \cdot v) =
 (\mathscr{X}\cdot\mathscr{Y}) \cdot v \qquad  \forall v\in V \quad ,
\end{equation}
where the dot indicates the $\rho$-action. For $V=\fG$, $v=Z$ this
is eq.~\eqref{Lie-n-Lie} and corresponds to $\rho=ad$; its matrix
representation is given in eq~\eqref{FA-adj-matr}. Note that,
although $V$ is called a $\fG$-module, $\rho$ above is {\it not} a
map of $\fG$ in End$\,V$; $V$ might have been called as well a
`$\rho(\wedge^{n-1}\fG)$-module' under the action $\rho$ of the
fundamental objects $\mathscr{X}$ on $V$, consistent with the FA
structure of $\fG$ through the expressions above. It is only for
$n=2$ that the map $\rho:\fg\rightarrow\mathrm{End}\,V$ is a
homomorphism of (Lie) algebras.

The {\it coadjoint representation} is obtained by taking $V=\fG^*$,
$coad:\wedge^{n-1}\fG\rightarrow \hbox{End}\,\fG^*$. $ad$ and $coad$
are related in the usual manner,
\begin{equation*}
coad_\mathscr{X}:\nu\in \fG^*\mapsto coad_\mathscr{X}\cdot\nu \in
\fG^* \;,\quad (coad_\mathscr{X}\cdot\nu)(Y)= - \nu
(ad_\mathscr{X}\cdot Y) \quad \forall \, \nu\in\fG^* \,,\,
\mathscr{X}\in\wedge^{n-1}\fG\,,\,Y\in\fG \; .
\end{equation*}
\medskip

 Summarizing, the discussion in this section
has exhibited two important aspects of Filippov algebras:

1) Due to Th.~\ref{th:n-Lie->Lie}, any $n$-Lie algebra $\fG$ has
an {\it associated ordinary Lie algebra} Lie$\,\fG$ = InDer$\,\fG$
defined through the fundamental objects $\mathscr{X}$ and the
commutator of eq.~\eqref{Lie-n-Lie-b} of adjoint endomorphisms
$ad_{\mathscr{X}}$ of $\fG$. This follows from the obvious fact that
the composition of endomorphisms in End$\,V$ has a Lie rather
than a FA structure. As a `representation' (in the sense of
the previous definition) of the FA on itself, Lie$\,\fG$
is consistent with the FI for $\fG$
(see eqs.~\eqref{Lie-n-Lie-b},~\eqref{Lie-n-Lie}). When the carrier
vector space is an arbitrary one $V$, the representation $\rho$ on
$V$ induced by the fundamental objects is that of
Def.~\ref{FA-repres}.

2) The fundamental r\^ole (hence their name) of the objects
$\mathscr{X}\in \wedge^{n-1}\fG$, characterized by ($n-1$) elements
of the $n$-Lie algebra $\fG$. Their relevance will be apparent again
when discussing the FA cohomology in Sec~\ref{sec:FA-coho}. For
$n=2$ the {\it fundamental objects} $\mathscr{X}\in \wedge^{n-1}\fG$
and the {\it elements} $X\in \fG$ of a FA are one and the
same object, but for $n>2$ they emerge as separate entities.

\section{$n$-Leibniz algebras}
\label{sec:n-Leibniz}
{\ }

There are occasions in which the skewsymmetry of the FA bracket
(eq.~\eqref{n-ad}) is not demanded, but the FI still holds. This is
the case of the $n$-Leibniz algebras $\mathfrak{L}$ \cite{Da-Tak:97,
Cas-Lod-Pir:02} (see also \cite{Rot:05}), which generalize the
Leibniz algebras $\mathscr{L}$ of Sec.~\ref{sec:Loday} to the
$n$-ary case.

\begin{definition} ({\it $n$-Leibniz algebras and their fundamental objects})
\label{def:n-Leib}{\  }

 An $n$-Leibniz algebra $\mathfrak{L}$ is a vector space
 endowed with an $n$-linear application
$ [\;,\;\mathop{\cdots}\limits^n \;,\;]:\,
\mathfrak{L}\times\mathop{\cdots}\limits^n \times
\mathfrak{L}\rightarrow \mathfrak{L} \,$, the Leibniz $n$-bracket, such
that the derivation property \eqref{n-der} is satisfied. For $n$=2,
$\mathfrak{L}$ reduces to the ordinary Loday/Leibniz algebra $\mathscr{L}$
\cite{Lod:93,Lod-Pir:93} of Sec. \ref{sec:Loday}.
\end{definition}
An $n$-Leibniz algebra that satisfies
\begin{equation*}
[X_1 ,\dots,X_i,\dots,X_j,\dots,X_n]=0 \qquad \forall \;X_i=X_j \quad 1\leq i,j \leq n \quad ,
\end{equation*}
has an anticommutative $n$-bracket and is also a FA algebra.\\

It also proves convenient to introduce {\it fundamental objects
$\mathscr{X}$ for $n$-Leibniz algebras}. This is simply done by relaxing the
skewsymmetry condition $(\mathscr{X}\in \wedge^{n-1}\fG$ for a FA)
 since the Leibniz $n$-bracket is not
antisymmetric. Accordingly, the {\it fundamental objects of an $n$-Leibniz algebra} $\mathfrak{L}$ are now defined as elements
$(X_{a_1},\dots,X_{a_{n-1}})\equiv \mathscr{X}_{a_1\dots a_{n-1}}$,
where no anticommutativity is now implied. Thus $\mathscr{X}\in
\otimes^{n-1}\mathfrak{L}\;$ for an $n$-Leibniz algebra, but again
$\mathscr{X}\cdot\equiv[X_1,X_2,\dots,X_{n-1},\quad]\in
\hbox{End}\,\mathfrak{L}$ defines an inner derivation of the Leibniz
$n$-bracket as a result of the FI and
$ad_{\mathscr{X}}\,Z\equiv\mathscr{X}\cdot
Z:=[X_1,\dots,X_{n-1},Z]\in\mathfrak{L}$. We shall use the same
notation $\mathscr{X}=(X_1,\dots,X_{n-1})$ for the fundamental
objects of both Filippov and Leibniz $n$-algebras when there is no
risk of confusion, without making explicit that the fundamental
objects of a FA $\fG$ are skewsymmetric in their arguments and not
necessarily so for a LA $\mathfrak{L}\,$:
$\mathscr{X}\in\wedge^{n-1}\fG\,,\, \otimes^{n-1}\fL$, respectively,
for a $\mathscr{X}$ of $\fG$, $\fL$.
\medskip

   The representations of Leibniz algebras were reviewed in
Sec.~\ref{sec:Leib-defs}; see also \cite{Lod-Pir:93},
where the construction of the universal enveloping algebra of
a $\mathscr{L}$ as well as the proof of a
Poincar\'e-Birkhoff-Witt theorem was given. Those of $n$-Leibniz
algebras $\mathfrak{L}$ were  considered in \cite{Ca-In-La:07},
where a PBW-type theorem for the universal enveloping algebras
of finite-dimensional $n$-Leibniz algebras was established.

\subsection{Leibniz algebra  $\mathscr{L}$
associated with an $n$-Leibniz algebra $\mathfrak{L}$}
\label{sec:n-Leib}
{\ }

An $n$-{\it Leibniz algebra}  $\mathfrak{L}$ is the non-antisymmetric
analogue of an $n$-Lie algebra $\fG$: the Leibniz $n$-bracket in $\mathfrak{L}$
satifies the FI but it is not necessarily skewsymmetric (when it is,
the $n$-Leibniz algebra becomes an $n$-Lie algebra). The composition
of two fundamental objects of $\mathfrak{L}$ is defined again by
eq.~\eqref{dot-compo} in which the bracket that appears there is now
the $n$-bracket in $\mathfrak{L}$. Although in
Sec.~\ref{sec:fund-obj} we had the FAs in mind, to obtain
properties \eqref{Lie-setb} and \eqref{gmod} from the composition of
the fundamental objects in eq.~\eqref{dot-compo}, only a vector
space closed under an $n$-bracket $\;[X_1,\dots,X_n]$ satisfying the
FI was required; neither the antisymmetry of the fundamental objects
$\mathscr{X}$ nor that of the $n$-bracket were assumed.
Consequently, as far as the proof of eqs.~\eqref{Lie-setb} and
\eqref{gmod} is concerned, nothing changes if the fundamental
objects $\mathscr{X}$ are the possibly non-skewsymmetric ones of an $n$-Leibniz algebra $\mathfrak{L}$.
\medskip

Let $\mathfrak{L}$ be an $n$-Leibniz algebra. Then,
eq.~\eqref{Lie-setb} also holds for $\mathfrak{L}$, and using the notation
$\mathscr{X}\cdot\mathscr{Y}\equiv[\mathscr{X}\,,\,\mathscr{Y}]$ it
takes the form
\begin{equation}
\label{FA-Leib} [\mathscr{X}\,,\,[\mathscr{Y}\,,\, \mathscr{Z}]]
=[[\mathscr{X}\,,\,\mathscr{Y}]\,,\,\mathscr{Z}]+
[\mathscr{Y}\,,\, [\mathscr{X}\,,\,\mathscr{Z}]]\quad, \quad
\mathscr{X}\,,\,\mathscr{Y}\,,\, \mathscr{Z} \in
\otimes^{n-1}\mathfrak{L} \quad,
\end{equation}
where $[\mathscr{X}\,,\,\mathscr{Y}]\not=
-[\mathscr{Y}\,,\,\mathscr{X}]\,$ is a non-antisymmetric
two-bracket. Eq.~\eqref{FA-Leib} constitutes a particular example
of the derivation property $[X,[Y,Z]]=[[X,Y],Z]+[Y,[X,Z]]$ that
defines an ordinary Leibniz algebra structure
\cite{Lod:93,Lod-Pir:93} (Sec.~\ref{sec:Loday}) for a
non-antisymmetric bilinear bracket $[\; ,\;]$ in which the two
entries in $[\;,\;]$ are fundamental objects $\mathscr{X}\in
\otimes^{n-1}\mathfrak{L}$. Hence, given an $n$-Leibniz algebra
$\mathfrak{L}$, the
linear space of the fundamental objects endowed with the dot
operation \eqref{dot-compo} which defines a non-antisymmetric
two-bracket,  $[\mathscr{X},\mathscr{Y}]\equiv
\mathscr{X}\cdot\mathscr{Y}$, becomes an $n=2$ Leibniz algebra
\cite{Da-Tak:97, Da-Ku:97}. This is the {\it ordinary Leibniz
algebra $\mathscr{L}$ associated with an $n$-Leibniz algebra}
$\mathfrak{L}$.

When the fundamental objects are those of a FA, $\mathscr{X},
\mathscr{Y} \in \wedge^{n-1}\fG$, and the $n$-bracket
involved in the definition of $\mathscr{X}\cdot\mathscr{Y}$ is
therefore that of an $n$-Lie algebra, the resulting $\mathscr{L}$
is the {\it Leibniz algebra associated with the Filippov algebra $\fG$}. For
$n$=2, the Leibniz algebra associated to the FA is in fact an
ordinary Lie algebra since the FA bracket is skewsymmetric
and the FA itself is an ordinary Lie algebra.

  Additionally, the inner endomorphisms $ad_\mathscr{X}: Z =[X_1,\dots,X_{n-1},Z]$
of an $n$-Leibniz algebra $\mathfrak{L}$ generate a {\it Lie} algebra
under the commutator, since the validity of eq.~\eqref{Lie-n-Lie}
depends only on the FI. Thus, in spite of relaxing the full
anticommutativity of the $n$-bracket, Lie$\,\mathfrak{L}$ is
obtained as usual. This is relevant to define the gauge
transformations in BLG-type models (Sec.~\ref{sec:BLGmod}) that use
3-algebras algebras with brackets that are not fully antisymmetric
(hence, 3-Leibniz algebras) as in \cite{Ba-La:08,Che-Do-Sa:08}.\\

\section{Lie-triple systems}
\label{sec:triple}
{\ }

    Since triple systems $\fT$ constitute a particular example of 3-Leibniz
algebras, we recall their definition here. Lie (and Jordan)
triple systems have been the subject of extensive study in mathematics
(see {\it e.g.} \cite{Jacob:49, Jacob:51, Lis:52,Yama:57};
see also \cite{Ber:00}) and in physics, as {\it e.g.} in connection with
parastatistics or the Yang-Baxter equation \cite{Okub:94, Okubo:93,Oku-Kam:96a}.
Further triple (and supertriple) systems generalizations may be found
in \cite{Oku-Kam:00, Oku-Kam:96b, Oku:03} and references therein.
For the algebra of inner derivations of $\fT$ see, in particular,
\cite{Lis:52,Yama:57}.

\begin{definition} ({\it Lie-triple systems})
\label{def:triple}

  A Lie-triple system is a vector space $\fT$ plus a trilinear map
map $\fT\times\fT\times\fT\rightarrow\fT$
satisfying the properties\\
a) $[X,Y,X]=-[Y,X,Z]$\\
b)
$[X_1,X_2,[Y_1,Y_2,Y_3]]=[[X_1,X_2,Y_1],Y_2,Y_3]]+[Y_1,[X_1,X_2,Y_2],Y_3]]
+[Y_1,Y_2,[X_1,X_2,Y_3]]$\\
c) $[X,Y,Z]+[Y,Z,X]+[Z,X,Y]=0\quad \forall\, X_i,Y_i,Z \in
\fT$ \; .
\end{definition}
Thus, Lie triple systems are 3-Leibniz algebras with brackets
skewsymmetric in their first two arguments which, in addition,
satisfy the cyclic property (c). When the space $\fT$ is
$Z_2$-graded, a suitable addition of signs in the above expression
accounting for the $Z_2$-grading leads to the definition of
super-triple systems \cite{Oku:03,Oku:95, Oku-Kam:96a}. There are
also {\it anti}-Lie triple systems, defined by putting a plus sign
in condition a) above. An elementary example of Lie-triple system is
that of a Lie algebra $\fg$; defining the three-bracket
\cite{Jacob:49} by $[X,Y,Z]:=[[X,Y],Z]$, all properties above are
fulfilled (the last one  being simply the JI for $\fg$). In fact,
any associative algebra $\mathfrak{A}$ is a Lie-triple system for
the bracket $[x,y,z]=[[xy]z]$ where $[xy]=xy-yx$,
$\;x,y,z\in\mathfrak{A}\,$.
\medskip

\subsection{Lie-triple systems and 3-Leibniz algebras}
\label{sec:triple+3Leib}
{\ }

  To see how certain 3-Leibniz algebras and Lie-triple systems
are related, let us consider here briefly the approach of
\cite{deMed-JMF-Men-Rit:08}, which is based in a construction of
metric 3-Leibniz algebras due to Faulkner \cite{Faulk:73}; we refer to
\cite{deMed-JMF-Men-Rit:08} for details and to \cite{Oku:03, Oku-Kam:96a}.
To make things simpler, consider the euclidean $A_4$. Clearly, $ad :\wedge^2\fG\rightarrow
so(4)$ is one-to-one, and the $so(4)$ $ad$ algebra is metric with
respect to the scalar product
\begin{equation*}
(ad_{a_1 a_2}, ad_{b_1 b_2})= <[\be_{a_1},\be_{a_2},\be_{b_1}],\be_{b_2}>=
(ad_{b_1 b_2}, ad_{a_1 a_2}) \quad, \quad a=1,2,3,4\quad
\end{equation*}
{\it i.e.}, with respect the metric $k^{(2)}$ in \eqref{2nd-inv} for
$\epsilon_{a_1 a_2 b_1 b_2}$.
We already know that this scalar product is not degenerate,
which also follows trivially from the above expression, since
$(ad_{a_1 a_2}, ad_{b_1 b_2})=0\;\forall b_1\, b_2\,$
implies that $[\be_{a_1},\be_{a_2},\quad]$ has to be the trivial endomorphism.
Since $(ad_{a_1 a_2},ad_{a_1 a_2})=0$ obviously, the algebra has a
basis of null vectors, and thus the inner endomorphisms algebra
InDer$A_4=so(4)$ is even-dimensional (which of course we knew)
and the scalar product ( , ) above has signature (3,3). Further,
since $so(4)$ preserves the scalar product $<\;,\;>$ in $A_4$,
\begin{equation*}
 <[\be_{a_1},\be_{a_2},\be_{b_1}],\be_{b_2}>=
 - <\be_{b_1},[\be_{a_1},\be_{a_2},\be_{b_2}]>  \;  .
\end{equation*}

  In general, it follows \cite{deMed-JMF-Men-Rit:08}
that a real metric 3-Lie algebra $\fG$ gives rise to a bilinear
map $\wedge^2\fG\rightarrow \mathrm{InDer}\fG \subset so(\mathrm{dim}\fG)$
and that the Lie algebra of the $ad$ derivations is metric.
Thus, a real metric 3-Lie algebra determines a metric vector
space ($\fG$ itself) and an inner derivations algebra
endowed with a non-degenerate, symmetric scalar product.

Reciprocally, given a real metric algebra
$\fg\subset so(\mathrm{dim}V)$ where $V$ is a
metric vector space carrying a faithful representation $L$ of $\fg$,
it is possible to construct a metric 3-Leibniz algebra.
Given an orthonormal basis $\{\be_a\}$ of $V$, these 3-Leibniz
structures are characterized by a three bracket defined by
\begin{equation}
\label{def-tres}
[\be_{a_1},\be_{a_1},\be_{a_1}]:=L_{a_1 a_2}\cdot \be_{a_3}  \; .
\end{equation}
which then satisfies the following properties:

a) {\it scalar product preservation}
\begin{equation}
\label{metricity-t}
<[\be_{a_1},\be_{a_2},\be_{b_1}],\be_{b_2}>
+ <\be_{b_1},[\be_{a_1},\be_{a_2},\be_{b_2}]>=0 \; ,
\end{equation}

b) {\it symmetry condition}
\begin{equation}
\label{sym-cond}
<[\be_{a_1},\be_{a_2},\be_{b_1}],\be_{b_2}>=
<[\be_{b_1},\be_{b_2},\be_{a_1}],\be_{a_2}>
\end{equation}
It follows from eqs.~\eqref{metricity-t}, \eqref{sym-cond}
that the 3-bracket is antisymmetric in the first two entries,
\begin{equation*}
[\be_{a_1},\be_{a_2},\be_{a_3}]= -[\be_{a_2},\be_{a_1},\be_{a_3}]
\end{equation*}
and, using the composition properties of $L_{a_1 a_2}$, that it
satisfies the FI.
\medskip

  Real 3-Leibniz algebras satisfying eqs.~\eqref{metricity-t}, \eqref{sym-cond}
and the FI were introduced in \cite{Cher-Sa:08,Che-Do-Sa:08} in the context
of the BLG model (see Sec.~\ref{sec:BLG-phys}) and called
{\it generalized 3-Lie algebras}. Thus, these algebras
are in one-to-one correspondence with the pairs ($\fg\,,\,V$)
where $\fg$ is a metric subalgebra of $so(\mathrm{dim}V)$
and $V$ a faithful $\fg$-module \cite{deMed-JMF-Men-Rit:08} (we refer
also to \cite{deMed-JMF-Men-Rit:08} for the case of the hermitian
(complex) BL algebras in \cite{Ba-La:08}, and to
\cite{Chow-Muk:09,Palm:09} in connection with Jordan-triple systems).
The above symmetry properties imply that the structure constants of the
generalized 3-Lie algebras satisfy the relations \cite{Cher-Sa:08}

\begin{equation*}
f_{a_1 a_2 b_1 b_2}= - f_{a_2 a_1 b_1 b_2}= f_{b_1 b_2 a_1 a_2}=-f_{a_1 a_2 b_2 b_1}\; .
\end{equation*}

  Since the above 3-bracket is a map
  $[\;,\;,\,\;]:\wedge^2 V \otimes V \rightarrow V$, it has the
general symmetry of $\yng(1,1)\otimes\yng(1)=\yng(1,1,1)\oplus\yng(2,1)$
(the second term is not irreducible since it contains a trace).
It then follows \cite{deMed-JMF-Men-Rit:08} that the above ternary
algebras have two special subcases associated with the above decomposition:
the first part, for which the 3-bracket is totally antisymmetric,
corresponds to an ordinary  FA ($V=\fG$); the second one determines a
bracket with a mixed symmetry. In this last case, it follows that the
3-bracket of the 3-Leibniz algebra also has the additional property c)
in Def.~\ref{sec:triple} above and hence defines a metric
three-Lie system ($V=\fT$). In general, however, the 3-Leibniz algebra
associated with the pair ($\fg\,,\,V$) is neither a 3-Lie
algebra nor a Lie-triple system.

\begin{example} ({\it The Lie-triple system associated with} ($so(4)\,,\,\mathbb{R}^4$))
\label{so4-R4-triple}

Consider \cite{Oku:03,deMed-JMF-Men-Rit:08} $so(4)$ acting
faithfully on the euclidean $\mathbb{R}^4$ with basis
$\{\be_a\}$ as in eq.~\eqref{so4mom}. By eq.~\eqref{def-tres}
above, $M_{a_1a_2}\cdot \be_{a_4}$ defines the 3-bracket
\begin{equation}
\label{triple-ex}
[\be_{a_1},\be_{a_2},\be_{a_3}]= -(\delta_{a_1 a_3}\be_{a_2}-\delta_{a_3 a_2}\be_{a_1})
\end{equation}
This bracket corresponds now to the $so(4)$ scalar product given by
\begin{equation}
(M_{a_1 a_2},M_{b_1 b_2})= <[\be_{a_1},\be_{a_2},\be_{b_1}],\be_{b_2}>=
-(\delta_{a_1 b_1}\delta_{a_2 b_2}-\delta_{b_1 a_2}\delta_{a_1 b_2}) \; ,
\end{equation}
which is the $so(4)$ Cartan-Killing metric in eq.~\eqref{CK-so4}.
The bracket \eqref{triple-ex} satisfies the cyclic property c) in
Def.~\ref{def:triple} and thus defines a Lie-triple system which, in
fact, appeared very long ago (see \cite{Jacob:49}).
\end{example}

Lie-triple systems are important in the theory of symmetric spaces
(see {\it e.g.} \cite{Hel:78}), the reason being that $\fg\oplus V$ may be given
a metric Lie algebra structure (see further \cite{Yama:57} for triple
systems as tangent spaces to totally geodesic spaces; recall that,
in contrast, general Filippov and Leibniz algebras do not
have a `linear approximation' interpretation). The canonical procedure to construct
Lie (and Lie super-) algebras from Lie- (and anti-Lie-) triple systems
is detailed in \cite{Oku:03}, where the Lie algebra $so(N+1)$
(from $\fg=so(N)$, dim$\,V=N$) and the $osp(1,N)$ superalgebra
(from $\fg=sp(N)$) are obtained. For the $so(4)$, dim$A_4$=4 example
above, the {\it embedding Lie algebra} is thus $so(5)$ and the
Lie triple system corresponds to the symmetric space
$SO(5)/SO(4)\sim S^4$; we refer to \cite{Oku:03,deMed-JMF-Men-Rit:08}
for further details and references.

\section{Cohomology and homology for Filippov algebras}
\label{sec:FA-coho} {\ }

We discuss now the cohomology and homology for FA's. We shall see
that there is more than one cohomology complex relevant in
applications. In analogy with Lie algebras, where the central
extensions and the infinitesimal deformations are characterized,
respectively, by the Lie algebra cohomology groups
$H^2_0(\fg,\mathbb{R})$ and $H^2_{ad}(\fg,\fg)$ for the trivial and
$ad$ representations, the cohomology groups relevant in the central
extensions and infinitesimal deformations of FA will be given by
$H^1_0(\fG,\mathbb{R})$ and $H^1_{ad}(\fG,\fG)$, albeit defined on
cohomology complexes that are not the `natural' analogues of the Lie
algebra ones, as will be shown below.
\medskip

\subsection{Introduction: 3-Lie algebras and central extensions}
\label{3-alg-central}{\ }

To motivate the general cohomology problem and the definitions that
will follow in the general case, let us consider first the simple
problem of extending centrally a 3-Lie algebra $\fG$ in close
analogy with the Lie algebra case discussed in
Sec.~\ref{ex:central-Lie}. The fundamental objects are determined by
two elements of $\fG$, $\mathscr{X}=(X_1,X_2)$.  The existence of
such a central extension, denoted $\tilde{\fG}$, implies that the
original three-bracket in $\fG$ may be modified by the presence of a
term in the additional central generator $\Xi$, so that the
commutation relations of $\tilde{\fG}$ read
\begin{equation}
\label{3-central}
 [\tilde{X}_a,\tilde{X}_b,\tilde{X}_b]= f_{abc}{}^d \tilde{X}_d + \alpha(X_a,X_b,X_c)
 \, \Xi \quad ,\quad [\tilde{X}_a,\tilde{X}_b,\Xi]=0
 \quad \forall \tilde{X}_a \in \tilde{\fG}\;
 ,
\end{equation}
where $\alpha$ is a (necessarily antisymmetric) trilinear map
$\alpha:(X_a,X_b, X_b) \mapsto \alpha (X_a,X_b, X_b)\in \mathbb{R}$.
The fact that (\ref{3-central}) is a 3-Lie algebra means that
$\alpha(X,Y,Z)$ must be such that the FI holds for the extended FA
$\tilde{\fG}$. This clearly constrains the possible $\alpha(X,Y,Z)$
in (\ref{3-central}) to those that are consistent with the structure
of the extended 3-Lie algebra.

This may be formalized in terms of 3-Lie algebra cohomology in the
following way. First, let us call {\it one}-cochains $\alpha$ to the
$\mathbb{R}$-valued antisymmetric trilinear maps that appear in
\eqref{3-central},
\begin{equation}
\label{1-coch} \alpha^1(Y_1,Y_2,Z)\equiv \alpha^1(\mathscr{X},Z) \;
,
\end{equation}
because they depend on {\it one} fundamental object (plus an element
of $\fG$). Thus, we may think of $\alpha^1$ as an element in
$\wedge^2 \fG^* \wedge \fG^*$ (this is of course $\wedge^3 \fG^*$,
but it proves convenient here to use this split notation).
We may now introduce a coboundary
operator $\delta$ that takes $p$-cochains to ($p+1$)-cochains  by
increasing the number of fundamental objects in the argument of the
cochain by one. In the present case, this means that the action on
$\alpha$ gives a two-cochain, a five-linear element of $\wedge^2
\fG^* \otimes \wedge^2 \fG^*\wedge \fG^*$. Let us define
specifically the action of $\delta$ on the one-cochain $\alpha^1$ by
\begin{equation}
\label{cob-1-coch}
\begin{aligned}
(\delta\alpha^1)(\mathscr{X},\mathscr{Y}, Z):= &
-\alpha^1(\mathscr{X}\cdot\mathscr{Y}, Z) -\alpha^1(\mathscr{Y},
\mathscr{X}\cdot Z)+ \alpha^1(\mathscr{X}, \mathscr{Y}\cdot Z)\\
= & -\alpha^1([X_1,X_2,Y_1],Y_2, Z) - \alpha^1(Y_1,[X_1,X_2,Y_2],Z)+\\
& -\alpha^1(Y_1,Y_2,[X_1,X_2, Z]) +\alpha^1(X_1,X_2,[Y_1,Y_2, Z])\;.
\end{aligned}
\end{equation}
A one-cochain on the three-algebra $\fG$ will be a one-cocycle if the
above expression is zero, in which case
\begin{equation}
\label{one-coc}
\alpha^1(X_1,X_2,[Y_1,Y_2, Z])=
\alpha^1([X_1,X_2,Y_1],Y_2, Z)+ \alpha^1(Y_1,[X_1,X_2, Y_2],Z)
+\alpha^1(Y_1,Y_2,[X_1,X_2,Z]) \; .
\end{equation}
The structure of this expression (or of eq.~\eqref{cob-1-coch}) is
easily justified: it follows by imposing the FI on
\eqref{3-central}. Thus, the one-cochain in \eqref{1-coch} that
defines a central extension must be a one-cocycle, since this
guarantees that the extended algebra (\ref{3-central}) satisfies the
FI. Hence, $\alpha$ defines a 3-Lie algebra central extension if
$\delta\alpha=0$.

As for ordinary Lie algebras, the above central extension may be
trivial. This means that it is possible to make a redefinition of
the basis of $\tilde{\fG}$ that removes the central element
$\,\Xi\,$ from the {\it r.h.s.} of (\ref{3-central}) so that $\Xi$
appears in $\tilde{\fG}$ as a fully independent addition,
$\tilde{\fG}=\fG \oplus \mathbb{R}$. Such a redefinition is possible
when the one-cocycle $\alpha$ is trivial (it is a one-coboundary),
which means that it is generated by a zero-cochain {\it i.e.},  by a
linear map on $\fG$, $\beta : X\mapsto \beta(X)\in \mathbb{R} $.
Then, $\alpha^1(X,Y,Z)=(\delta \beta)(X,Y,Z)=-\beta([X,Y,Z])$. The
last equality in this expression follows from the general definition
of coboundary operator for $\fG$ algebras to be given in the next
section, but at this stage it may be justified by the fact that,
using \eqref{cob-1-coch}, any one-cochain thus defined is a
(trivial) one-cocycle by virtue of the FI (as it should since
$\delta$ must be nilpotent) and, further, because a central
extension is actually a trivial one when the one-cocycle is a
one-coboundary as defined above. In fact, if $\alpha=\delta\beta$,
\begin{equation}
\alpha (\mathscr{X},Z)=
(\delta\beta)(\mathscr{X},Z)=-\beta(\mathscr{X}\cdot
Z)=-\beta([X,Y,Z]) \; ,
\end{equation}
and it is sufficient to define the new basis generators
$\tilde{X}'_d$ of $\tilde{\fG}$ by $\tilde{X}'_d= \tilde{X}_d -
\beta(X_d)\,\Xi$ to obtain
\begin{equation}
[\tilde{X}'_a,\tilde{X}'_b,\tilde{X}'_c]= f_{abc}{}^d\tilde{X}_d
-\beta([X_a,X_b,X_c])=f_{abc}{}^d(\tilde{X}_d-\beta(X_d))=
f_{abc}{}^d \tilde{X'}_d \quad ,\quad [\tilde{X}'_a,\tilde{X}'_b,\Xi]=0 \quad,
\end{equation}
which exhibits explicitly the triviality of the extension,
$\tilde{\fG}=\fG\oplus \mathbb{R}$.

    Note that, in naming the order of the three-algebra
cohomology cocycles, we are counting the number of fundamental
objects that they contain. Thus, a 3-Lie algebra $p$-cocycle has
$p.(3-1)+1$ elements of $\fG$ as arguments\footnote{Thus, an
ordinary Lie algebra $\fg$ cohomology (Def.~\ref{coho-for-repr})
{\it two}-cocycle corresponds to a $one$-cocycle from the present
$n$-Lie algebra $\fG$ point of view since for $n=2$ the fundamental
objects contain just an element of the algebra. For a Lie algebra
$\fg$, $\mathscr{X}=X$ and $\Omega(\mathscr{X},Y)=\Omega(X,Y)$.} and
a general $n$-Lie algebra $p$-cocycle contains $p(n-1)+1$ arguments,
since the fundamental objects have ($n-1$) elements of $\fG$ as
(antisymmetric) arguments.

\begin{example} ({\it The $n=3$ Nambu-Heisenberg-Weyl algebra}) \cite{Nambu:73}
\label{N-H-W}
{\ }

Consider an abelian $n=3$ FA of dimension $3N$, with basis
determined by three subsets of $N$ generators each, ($X_a,Y_a,Z_a$),
$a=1,\dots,N$. Since it is abelian,
\begin{equation*}
    [X_a,Y_b,Z_c]=0 \quad\mathrm{and,\,of\, course,}\quad
    [X,X,Y]=0\quad\mathrm{etc}\quad .
\end{equation*}
Let $\alpha^1$ be the one-cochain defined by
$\alpha^1(X_a,Y_b,Z_c)=\kappa$ for $a=b=c$ and
$\alpha^1(X_a,Y_b,Z_c)=0$ otherwise; $\alpha^1$ also gives zero when
two or more entries come from the same basis subset,
$\alpha^1(X_a,X_b,Z_c)=0$ etc., and is, of course, antisymmetric in
its three entries. The values of the constant $\kappa\in \mathbb{R}$
characterize distinct cohomology classes, but, since any non-zero
value will lead to extensions that are isomorphic as 3-Lie algebras,
we may take $\kappa=1$ (or $\hbar$). Eq.~\eqref{one-coc} is
obviously satisfied, and thus $\alpha^1$ is the one-cocycle that
defines the Nambu-Heisenberg-Weyl 3-Lie algebra
\begin{equation}
\label{N-H-W-al}
[X_a,Y_b,Z_c]=\mathbb{I} \quad \mathrm{for}\; a=b=c
\quad \mathrm{and\; =0 \quad in\; all\; other\; cases}\;
\end{equation}
as {\it e.g.}, $[X_a,X_b,Z_c]=0$.

It is worth noting that this central extension of the above abelian
$3N$-dimensional 3-Lie algebra is realized by the bracket
$\sum_{a=1}^{N}\frac{\partial(f^1,f^2,f^3)}{\partial(x^a,y^a,z^a)}$
\cite{Nambu:73} of a triplet of coordinates ($x^a,y^b,z^c$), since
$\{x^a,y^b,z^c\}=1$ for $a=b=c$ and $\{x^a,y^b,z^c\}=0$ otherwise,
etc. This constitutes the $n=3$ counterpart of the PB realization of
the classical abelian algebra of dynamical variables $[q^a,p_b]=0$,
which produces the Heisenberg-Weyl Lie algebra
$\{q^a,p_b\}=\delta^a_b$
($\,[\hat{q^a},\hat{p_b}]=i\hbar\delta^a_b\mathbb{I}$ upon Dirac
quantization). It is well known that Poisson brackets realize
`projective' (central) extensions of Lie algebras (see
\cite{Pa-Pro:66}). In this simple $n=3$ example, the Nambu bracket
realization of the above abelian 3-Lie algebra leads to the
Nambu-Heisenberg-Weyl 3-algebra of eq.~\eqref{N-H-W-al} although, of
course, an interpretation of the Nambu bracket algebra in terms of
transformations would associate these to pairs $\{x_a,y_b,\;\}$ etc.
corresponding to antisymmetric fundamental objects and not to
single algebra elements. This, we note in passing, is also
reflected in the way one-cocycles depend on the elements of the
($n=3$) FA as shown below and explains why we wrote
$\alpha^1\in\wedge^2\fG^*\wedge\fG^*$ rather than
$\alpha^1\in\wedge^3\fG^*$.
\end{example}

 Let us now consider the general case.

\subsection{FA cohomology complex adapted to the
central extension problem}{\ }

This is the cohomology complex for the trivial action that follows
from the previous discussion. We shall take here $\fG$-valued cochains
for later convenience; since the action is trivial, this
will not modify the form of the action of the coboundary
operator on the corresponding $p$-cochains.

\begin{definition}(FA {\it cohomology $p$-cochains})

A $p$-{\it cochain} is a linear map
 $\alpha\in C^p(\fG, \fG) =\mathrm{Hom}(\wedge^{(n-1)}\fG\otimes
\mathop{\cdots}\limits^p\otimes\wedge^{(n-1)}\fG\wedge\fG\,,\,\fG)$
\begin{equation}
\label{p-coch}
 \alpha :(\mathscr{X}_1,\dots,\mathscr{X}_p,Z) \mapsto
\alpha(\mathscr{X}_1,\dots,\mathscr{X}_p,Z)\equiv \alpha
(X^1_1,\dots,X^{n-1}_1, \dots ,X^1_p,\dots ,X^{n-1}_p, Z) \; .
\end{equation}
Thus, $\alpha$ takes $p$ fundamental objects $\mathscr{X}$ and an
element of $\fG$ as arguments; in all, ($p(n-1)+1$) elements of
the $n$-algebra $\fG$.\\
\end{definition}

\begin{definition} ({\it Coboundary operator $\delta$ for the trivial action})
\label{FA-cob-op}{\ }

The coboundary operator for the trivial action of $\fG$ is the map
$\delta: C^p(\fG, \fG) \rightarrow C^{p+1}(\fG, \fG)$ defined (see
\cite{AIP-B:97}) by its action on a $p$-cochain $\alpha\in C^p$ by
\begin{equation}
\label{n-cobop}
\begin{aligned} (\delta\alpha)
 (\mathscr{X}_1,\dots,\mathscr{X}_{p+1}, Z) & = \\
\sum_{1\leq i<j}^{p+1} (-1)^i \,& \alpha(\mathscr{X}_1,\dots,
\hat{\mathscr{X}}_i,\dots,\mathscr{X}_i\cdot\mathscr{X}_j,\dots,\mathscr{X}_{p+1},Z)\\
+  \sum_{i=1}^{p+1} (-1)^i\,  & \alpha (\mathscr{X}_1,\dots,
\hat{\mathscr{X}}_i,\dots,\mathscr{X}_{p+1}, \mathscr{X}_i \cdot
Z) \; ,
\end{aligned}
\end{equation}
{\it cf.} eq.~\eqref{Lie-cob-ad}.
\end{definition}
To check that $\delta^2\equiv 0$, it is sufficient to recall that
eq.~\eqref{Lie-setb} and $\mathscr{X} \cdot (\mathscr{Y} \cdot Z)-
\mathscr{Y} \cdot (\mathscr{X} \cdot Z)= (\mathscr{X} \cdot
\mathscr{Y}) \cdot Z$ hold by virtue of the FI. Let us illustrate
the nilpotency of $\delta$ with a couple of examples.
\medskip

  Let $\fG$ be a 3-Lie algebra. A zero-cochain is given by a map
$\alpha^0 : Z \mapsto \alpha^0(Z)\in \fG$, and the action of
$\delta$ on $\alpha^0$ defines the one-cochain $\delta\alpha^0$
given by
\begin{equation}
\label{FA-one-cob}
(\delta\alpha^0)(\mathscr{X},Z)= -\alpha^0(\mathscr{X}\cdot Z)=
-\alpha^0([X_1,X_2,Z]) \quad ,
\end{equation}
which is fully antisymmetric in its three arguments $X_1,X_2,Z$ as
any {\it one}-cochain should be.  Similarly, a one-cochain
$\alpha^1$ is a $\fG$-valued map $\alpha^1\in
\mathrm{Hom}(\wedge^2\fG\wedge\fG\,,\,\fG)\,$,
$\alpha^1:(\mathscr{X},Z) \mapsto \alpha(\mathscr{X},Z)\in \fG$. The
action of the coboundary operator on $\alpha^1$ is the two-cochain
$\delta\alpha^1$ defined like in eq.~\eqref{cob-1-coch}. The
one-cochain $\alpha^1$ is a one-cocycle if the two-cochain
$\delta\alpha^1=0$, in which case its coordinates satisfy
\begin{equation}
\label{1-cocy.cond3} (\delta\alpha)_{a_1 a_2 b_1 b_ 2 k}{}^s=f_{b_1
b_2 k}{}^l\alpha_{a_1a_2 l}{}^s- f_{a_1 a_2 b_1}{}^l\alpha_{l b_2
k}{}^s - f_{a_1 a_2 b_2}{}^l\alpha_{b_1 l k}{}^s - f_{a_1 a_2
k}{}^l\alpha_{b_1 b_2 l}{}^s=0  \; .
\end{equation}
 This is necessarily the case for any coboundary
$\alpha^1=\delta\alpha^0$ since $(\delta\alpha^0)(X_1,X_2,Z)=
\alpha^0([X_1,X_2,Z])$ and then eq.~\eqref{cob-1-coch} is zero
$\forall\, \alpha^0$ by linearity and the FI.

   We may compare this with the Lie $n=2$ case, for which
$\mathscr{X}, \mathscr{Y} = X,Y$ and eq.~\eqref{cob-1-coch} produces
the three-coboundary given by the fourth line in
eq.~\eqref{Lie-coh-ex} ($\rho=0$). A Lie algebra one-cochain
$\Omega^1$ for the trivial action generates the two-coboundary
$(\delta \Omega^1)(X_1,X_2)=-\Omega^1([X_1,X_2])$, and then
$(\delta^2 \Omega^1)(X_1,X_2,X_3)=0$ implies (or is guaranteed by)
the JI. The two-coboundary generated by the $\fg$-valued cochain
(trivial action) defined by $\Omega^1 (X_i)=-X_i$, has for
coordinates the $\fg$ structure constants, $(\delta\Omega)_{i
j}{}^k= C_{ij}{}^k$. Similarly, for $n=3$, a zero cochain $\alpha^0$
generates the one-coboundary $\delta\alpha^0$ in
eq.~\eqref{FA-one-cob} and  $\delta^2\alpha^1=0$ $\forall\,
\alpha^0$ is guaranteed by the FI as shown above. If we now take the
$\fG$-valued one-cochain of coordinates $\alpha^1_{a_1a_2
l}{}^s=f_{a_1 a_2 l}{}^s$, we see that $\alpha^1$ is a one-cocycle
for the trivial action since $(\delta \alpha^1)_{a_1 a_2 b_1 b_2
k}{}^s=0$, as given by eq.~\eqref{1-cocy.cond3}, is the FI.
Actually, $\alpha^1$ is the one-coboundary $\alpha^1=\delta\alpha^0$
generated by the zero cochain defined by $\alpha^0 (X_a)=-X_a$ by
eq.~\eqref{FA-one-cob}. The corresponding Lie algebra expressions,
written in terms of the canonical form $\theta$ on a Lie group $G$,
and within the present labelling of the cochain orders, correspond
to $\alpha^0=-\theta$, the one-coboundary to
$\delta\alpha^0=-d\theta=\theta\wedge\theta$ (eq.~\eqref{3canonMC})
and $d^2\theta=0$ follows by the JI.
\medskip

  As a second example, consider a FA cohomology two-cochain,
$\alpha^2(\mathscr{X}_1,\mathscr{X}_2,Z)$. Writing all terms,
$\delta\alpha^2$ is the three-cochain given by
\begin{equation*}
 \begin{aligned}
 (\delta\alpha^2)(\mathscr{X}_1, & \mathscr{X}_2,\mathscr{X}_3,Z)  =
-\alpha^2(\mathscr{X}_1\cdot\mathscr{X}_2,\mathscr{X}_3,Z) -
\alpha^2(\mathscr{X}_2,\mathscr{X}_1\cdot \mathscr{X}_3,Z) \\
&\qquad\qquad\quad
+\alpha^2(\mathscr{X}_1,\mathscr{X}_2\cdot\mathscr{X}_3,Z) -
\alpha^2(\mathscr{X}_2,\mathscr{X}_3,\mathscr{X}_1\cdot Z)\\
& \qquad\qquad\quad
+\alpha^2(\mathscr{X}_1,\mathscr{X}_3,\mathscr{X}_2\cdot Z)
-\alpha^2(\mathscr{X}_1,\mathscr{X}_2,\mathscr{X}_3\cdot Z)\quad .
\end{aligned}
\end{equation*}
We may now check that $\delta^2=0$ by assuming that $\alpha^2$ is
actually a two-coboundary, $\alpha^2=\delta\alpha^1$. Then, the
above expression becomes
\begin{equation*}
\begin{aligned}
& = \alpha^1((\mathscr{X}_1\cdot\mathscr{X}_2)\cdot\mathscr{X}_3,
Z)+ \alpha^1(\mathscr{X}_3, (\mathscr{X}_1\cdot\mathscr{X}_2)\cdot
Z)
-\alpha^1(\mathscr{X}_1\cdot\mathscr{X}_2,\mathscr{X}_3\cdot Z)\\
+ & \;
\alpha^1(\mathscr{X}_2\cdot(\mathscr{X}_1\cdot\mathscr{X}_3),Z)
 + \alpha^1(\mathscr{X}_1\cdot\mathscr{X}_3,\mathscr{X}_2\cdot Z)
- \alpha^1(\mathscr{X}_2,(\mathscr{X}_1\cdot\mathscr{X}_3)\cdot Z)\\
- &\; \alpha^1(\mathscr{X}_1\cdot(\mathscr{X}_2\cdot \mathscr{X}_3),
Z)
 - \alpha^1(\mathscr{X}_2\cdot\mathscr{X}_3,\mathscr{X}_1\cdot Z)
+ \alpha^1(\mathscr{X}_1,(\mathscr{X}_2\cdot\mathscr{X}_3)\cdot Z))\\
 +\; &\alpha^1(\mathscr{X}_2\cdot\mathscr{X}_3,\mathscr{X}_1\cdot Z)
 + \alpha^1(\mathscr{X}_3, \mathscr{X}_2\cdot(\mathscr{X}_1\cdot Z))
- \alpha^1(\mathscr{X}_2,\mathscr{X}_3(\cdot\mathscr{X}_1\cdot Z))\\
-  &\alpha^1(\mathscr{X}_1\cdot\mathscr{X}_3,\mathscr{X}_2\cdot Z)
 - \alpha^1(\mathscr{X}_3, \mathscr{X}_1\cdot(\mathscr{X}_2\cdot Z))
+ \alpha^1(\mathscr{X}_1,\mathscr{X}_3\cdot(\mathscr{X}_2\cdot Z))\\
+ &\; \alpha^1(\mathscr{X}_1\cdot\mathscr{X}_2, \mathscr{X}_3\cdot
Z)
 + \alpha^1(\mathscr{X}_2,\mathscr{X}_1\cdot(\mathscr{X}_3\cdot Z))
-\alpha^1(\mathscr{X}_1,\mathscr{X}_2\cdot(\mathscr{X}_3\cdot
Z))=0\; ,
\end{aligned}
\end{equation*}
since all terms cancel out. Indeed, three pairs of them cancel each
other directly, and the remaining terms can be collected in four
groups of three sharing a common argument, $\mathscr{X}$ or $Z$.
Then, each group is seen to add up to zero on account of
eqs.~\eqref{Lie-setb}, \eqref{Lie-n-Lie}, both a result of
the FI. Higher orders proceed similarly.
\medskip

   Thus, the cohomology complex $(C^\bullet(\fG),\delta)$ for
the trivial action is the relevant one for the central extension
problem of FAs. The following analogue of the Whitehead lemma for
Lie algebras, that we state here without proof, holds
\cite{Az-Iz:09}:

\begin{theorem} (Semisimple FAs and central extensions)
\label{th:s-s-FA-ext}
{\ }

Let $\fG$ be a semisimple FA. Then all its central extensions are trivial.
\end{theorem}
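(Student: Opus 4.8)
The plan is to adapt the Whitehead-lemma argument of Lemma~\ref{lem:White} to the FA central-extension complex. The statement says that for a semisimple FA $\fG$, the cohomology group $H^1_0(\fG,\mathbb{R})$ governing central extensions vanishes, so that every central extension (eq.~\eqref{3-central} and its $n$-ary analogue) is trivial. The key structural fact I would exploit is that the FA cohomology complex $(C^\bullet(\fG),\delta)$ of Def.~\ref{FA-cob-op}, although built on fundamental objects $\mathscr{X}\in\wedge^{n-1}\fG$ rather than single elements, is controlled entirely by the dot composition $\mathscr{X}\cdot\mathscr{Y}$ and the action $\mathscr{X}\cdot Z$, both of which satisfy the Lie-type relations \eqref{Lie-setb} and \eqref{Lie-n-Lie}. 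In other words, once the complex is written in terms of the associated Lie algebra $\mathrm{Lie}\,\fG=\mathrm{InDer}\,\fG$ (Th.~\ref{th:n-Lie->Lie}), a one-cochain $\alpha^1(\mathscr{X},Z)$ behaves formally like a Lie-algebra cochain on $\mathrm{Lie}\,\fG$ valued in a suitable module, and $\delta$ mimics the Lie coboundary operator $s$ of eq.~\eqref{Lie-cob-ad}.

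First I would reduce to the simple case using Th.~\ref{ssFAandInDev}: a semisimple $\fG$ is a direct sum of simple ideals $\fG=\bigoplus_{\mathfrak{s}}\fG_{(\mathfrak{s})}$, and since fundamental objects with entries in distinct simple summands annihilate, a one-cocycle decomposes into contributions from the individual simple factors; triviality for each factor then gives triviality for the whole. So the heart of the matter is a simple FA. Here I would invoke the fact (Th.~\ref{th:FAsimple}, Th.~\ref{th:simple-SO}, Rem.~\ref{rem:Kas-simp}) that $\mathrm{Lie}\,\fG$ is semisimple — indeed $so(n+1)$ for the euclidean $A_{n+1}$ — and that the Kasymov form $k(\mathscr{X},\mathscr{Y})=\mathrm{Tr}(ad_{\mathscr{X}}ad_{\mathscr{Y}})$ is non-degenerate on $\wedge^{n-1}\fG$ for simple $\fG$. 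This non-degenerate invariant form is exactly the ingredient needed to build a contracting homotopy.

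Concretely, I would define an operator $\tau:C^p\to C^{p-1}$ by contracting one fundamental-object slot against $ad$ using the inverse Kasymov metric $k^{IJ}$ (with capital indices labelling a basis of $\wedge^{n-1}\fG$), in direct analogy with eq.~\eqref{whitehead1}. The plan is then to compute the Laplacian-like combination $\delta\tau+\tau\delta$ on a $p$-cochain and show, using the invariance relation \eqref{Kas-inv-fund} and the derivation identities \eqref{Lie-setb}–\eqref{Lie-n-Lie}, that it equals multiplication by a Casimir-type scalar, paralleling eq.~\eqref{whitehead2}. Since that scalar is the quadratic Casimir of $\mathrm{Lie}\,\fG$ in the relevant representation, Schur's lemma (as in the proof of Lemma~\ref{lem:White}) makes it a nonzero multiple of the identity, so any cocycle $\alpha$ is the coboundary $\delta(\tau\alpha\,I_2^{-1})$, giving $H^1_0=0$ and hence triviality of the central extension.

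The main obstacle I anticipate is bookkeeping rather than conceptual: the FA coboundary operator $\delta$ acts by replacing a \emph{pair} $(\mathscr{X}_i,\mathscr{X}_j)$ by the single fundamental object $\mathscr{X}_i\cdot\mathscr{X}_j$ and by letting $\mathscr{X}_i$ act on the trailing element $Z$, so the combinatorics of $\delta\tau+\tau\delta$ differ from the Lie case and require careful use of the \emph{non}-associativity correction in \eqref{Lie-setb} (the term $\mathscr{Y}\cdot(\mathscr{X}\cdot\mathscr{Z})$). The delicate point will be verifying that the cross terms, in which $\tau$ contracts a slot that $\delta$ has just modified via a dot product, reorganize — through \eqref{xy.skew} and the invariance \eqref{Kas-inv-fund} — into commutators $[ad_{\mathscr{X}},ad_{\mathscr{Y}}]=ad_{\mathscr{X}\cdot\mathscr{Y}}$ plus a pure Casimir term, exactly as the footnoted two-cochain computation after eq.~\eqref{whitehead1} does for Lie algebras. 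Since the paper states this theorem without proof and refers to \cite{Az-Iz:09}, I expect the full verification to be a somewhat lengthy but routine extension of the Whitehead argument, with the non-degeneracy of the Kasymov form on $\wedge^{n-1}\fG$ (Rem.~\ref{rem:Kas-simp}) doing the essential work.
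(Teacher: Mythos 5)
First, note what you are comparing against: the paper states this theorem \emph{without} proof, citing \cite{Az-Iz:09}, and its surrounding remarks (Rem.~\ref{rem:GLAvsFA}, the comment after Th.~\ref{th:simple-SO}, and the closing comment of Sec.~\ref{n-Leib-coho}) indicate that the cited proof rests on the classification of simple FAs of Th.~\ref{th:FAsimple} — the fact that the structure constants are the $\epsilon$ tensor — and on the full skewsymmetry of the $n$-bracket, not on a homotopy operator. Your route is therefore genuinely different, and it has a gap at its central step. Central extensions are governed by \emph{trivial}-coefficient cohomology (the complex of Def.~\ref{FA-cob-op}), and the Whitehead mechanism of Lemma~\ref{lem:White} is built from the action $\rho$ on the \emph{values} of the cochains: the operator $\tau$ of \eqref{whitehead1} vanishes identically when $\rho=0$, and $I_2(\rho)=0$, so the homotopy collapses; the hypothesis ``$\rho$ non-trivial'' in Lemma~\ref{lem:White} is essential, not decorative. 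Your substitute — contracting an argument slot with $ad$ via the inverse Kasymov metric — can at best produce the Casimir acting on the \emph{argument} module of the cochains, and there Schur's lemma does not apply as you claim: that module is in general reducible and may contain trivial summands on which the Casimir vanishes. This is not a hypothetical worry: for $n=2$ the FA one-cochains are ordinary Lie-algebra cochains with trivial coefficients, where the Casimir-on-arguments Laplacian annihilates precisely the invariant forms; since $H^3_0(\fg)\neq 0$ for any simple $\fg$, no such Laplacian can be invertible degree by degree, so the ``Casimir-type scalar is a nonzero multiple of the identity'' step is false as a general mechanism.

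What rescues your argument in the one degree that matters is a fact you can only get from the classification you were trying to avoid: a scalar one-cochain has $n$ fully antisymmetric arguments, so $C^1\cong\wedge^n\fG^*$, and for the simple $A_{p,q}$ ($\dim\fG=n+1$) this is the $(n+1)$-dimensional vector module of $\mathrm{Lie}\,\fG\cong so(p,q)$ — nontrivial and irreducible, so the Casimir is invertible there. But once Th.~\ref{th:FAsimple} is invoked, a much cheaper argument closes the simple case: $\dim C^1=\binom{n+1}{n}=n+1$, while the coboundary map $\delta:C^0\to C^1$, $(\delta\alpha^0)(\mathscr{X},Z)=-\alpha^0([X_1,\dots,X_{n-1},Z])$, is injective because the $\epsilon$-type structure constants are non-degenerate; hence $\dim B^1=n+1=\dim C^1$, so every one-cochain — in particular every one-cocycle — is already a coboundary and $H^1_0=0$. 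Finally, your reduction to simple summands is right in spirit but under-argued: ``fundamental objects with mixed entries annihilate'' does not by itself decompose a cocycle. You need perfectness of the simple ideals: taking $X_1,X_2$ in one ideal $J$ and the $Y$'s in another ideal $I$ in the cocycle identity \eqref{one-coc}, every term on the right vanishes, and $[I,\dots,I]=I$ then forces all mixed components of $\alpha^1$ to vanish, after which the restriction to each factor is killed as above. So: your approach diverges from the cited proof, its Schur step as stated would fail, and repairing it drives you back to the classification-based route the paper's references use from the outset.
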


\begin{remark}
{\ }

  The above coboundary operator does not allow for
a formulation of the $n$-Lie algebra cohomology {\it \`a la}
Chevalley-Eilenberg for $n>2$ (Sec.~\ref{sec:CEco}), since the
existence of a group-like manifold associated with FAs, allowing for
the `localization' process \cite{Che-Eil:48} of {\it e.g.} the
linear maps $\alpha^0$ so that they become $\fG$-valued covariant
vector {\it fields} on such a manifold, is an open question. In
fact, as we shall see below, the operator that allows us to write
MC-like equations for a FA is not an exterior differential, as it
was the case for the generalized MC equations of the GLAs
$\mathcal{G}$. Further, for $n\geq 3$ it is Lie$\,\fG$-invariance
(and not `$\fG$-invariance') what is defined, a fact related to
the previous problem. Finally, there is no matrix realization for
the elements of $\fG$ in general allowing us to construct invariant
polynomials along the pattern that gives them in the Lie algebra
case; in fact, as soon as one speaks of invariance, it is Lie$\,\fG$
rather than $\fG$ that appears, since the notion of invariance is
associated to a {\it Lie} group of transformations.
\end{remark}

\begin{subsection}{MC-like equations for FAs}
\label{sec:MClikeFA}
{\ }

The discussion in Rem.~\ref{re:MCcoho} prompts us to see whether it
is possible to define MC-like equations for $n>2$ FAs as we did for
the GLAs in eqs.~\eqref{completef}, \eqref{completeg}. The key
difference is that, in contrast with the Lie algebra cohomology, the
$p$-cochains for the FA cohomology are no longer antisymmetric in
all their $\fG$ arguments for $p\geq 2$ and, further, there is no
underlying group manifold. Consider linear $\fG$-valued maps
$\omega$ on $\fG$, $\omega:\fG\rightarrow\fG$, $\omega=\omega^a\circ
X_a$, as zero-cochains for the cohomology complex of
Def.~\ref{FA-cob-op} (in the $n=2$ Lie case, $\omega$ would be the
canonical one-form $\theta$ on the group manifold). Then, the action
of the coboundary operator in eq.~\eqref{n-cobop} on
$\omega\in C^0(\fG,\fG)$ gives
$\delta\omega(\mathscr{X},Z)=-\omega([X_{a_1},\dots,X_{a_{n-1}},Z])$.

   Let us now define the {\it $n$-bracket of $\fG$-valued maps} in
terms of the FA by
\begin{equation}
\label{FA-onef-bra}
[\omega,\dots , \omega]:=
\omega^{a_1}\wedge\dots\wedge\omega^{a_n}\circ [X_{a_1},\dots,X_{a_n}] \quad i.e. \quad
[\omega,\dots , \omega]^c= f_{a_1\dots a_n}{}^c \omega^{a_1}\wedge\dots\wedge\omega^{a_n} \; ,
\end{equation}
since $[\omega,\dots ,\omega]=[\omega,\dots ,\omega]^c\,X_c$.
Then, in analogy with the MC eqs.~\eqref{3canonMC}, we may write in
terms of the coboundary operator $\delta$
\begin{equation}
\label{MCforFAs} \delta\omega=
-\frac{1}{n!}[\omega, \dots,\omega] \qquad \hbox{or}
\qquad (\delta\omega)^c=-\frac{1}{n!}f_{a_1\dots a_n}{}^c \,
\omega^{a_1}\wedge\omega^{a_2}\dots \wedge\omega^{a_n} \; .
\end{equation}
 Note that in the FA case we cannot write
$[\omega, \mathop{\dots}\limits^n , \omega] \propto \omega\wedge
\mathop{\dots}\limits^n \wedge \omega$ since the $\wedge$ product of
the $\omega$'s would generate a multibracket
$[X_{a_1},\dots,X_{a_n}]$, given by the full antisymmetrization of
the products $X_{a_1},\dots,X_{a_n}$, rather than a FA $n$-bracket.
Eqs.~\eqref{MCforFAs} constitute the {\it MC equations for a FA}
$\fG$ as defined by the FA cohomology coboundary operator $\delta$.
For $n=2$, of course, $\delta \rightarrow d$, $\omega\rightarrow
\theta$, $d\theta=-\frac{1}{2} [\theta,\theta]= -\theta\wedge\theta
\in\mathrm{Hom}(\wedge^2\fg,\fg)\,$ and the MC-like equations for
$\fG$ above become the MC ones of a Lie algebra $\fg$.

$\delta\omega$ is clearly a $\fG$-valued one-coboundary.
The expression stating that $\delta^2\omega=0$, as it follows from
Def.~\ref{FA-cob-op} or eq.~\eqref{cob-1-coch}, may be formally
rewritten in the form
\begin{eqnarray}
\label{del-2-ome}
\nonumber
[\omega,\mathop{\dots}\limits^{n-1},\omega,
[\omega, \mathop{\dots}\limits^n,\omega]]  -
[\omega, [\omega, \mathop{\dots}\limits^n,\omega],
\omega, \mathop{\dots}\limits^{n-2},\omega] - \qquad \\  \,
[\omega,\omega [\omega, \mathop{\dots}\limits^n,\omega ],
\omega, \mathop{\dots}\limits^{n-3},\omega] - \dots  -
[\omega,\mathop{\dots}\limits^{n-1} ,\omega,
[\omega, \mathop{\dots}\limits^n ,\omega]]=0   \; ,
\end{eqnarray}
where {\it e.g.}, the second double bracket means
\begin{equation}
\nonumber
[\omega, [\omega, \mathop{\dots}\limits^n,\omega],
\omega, \mathop{\dots}\limits^{n-2},\omega] \propto
\omega^{b_1}\omega^{a_1}\dots \omega^{a_{n-1}}\omega^{b_2}\dots
\omega^{b_n}[X_{b_1},[X_{a_1},\dots,X_{a_{n-1}},X_{b_2}],X_{b_3},\dots,X_{b_n}] \;.
\end{equation}
Obviously, the wedge product for the $\omega$'s {\it cannot} be
used above since there is antisymmetry only for the groups of $\omega$'s involving
the appropriate indices ({\it cf.} eq.~\eqref{FA-onef-bra}). The symmetry
properties of the $\fG$ arguments of any two-cochain follow from its definition,
$\alpha^2\in \mathrm{Hom}(\wedge^{n-1}\fG\otimes\wedge^{n-1}\fG\wedge\fG,\fG)$;
the FI, unlike the GJI, does not involve a full antisymmetrization.

  If we now define the action of
$[\omega,\mathop{\dots}\limits^{n-1},\omega,\,\cdot\,]$ by
\begin{equation}
[\omega,\mathop{\dots}\limits^{n-1},\omega,\,\cdot\,]:\omega \mapsto
[\omega, \dots,\omega] \; .
\end{equation}
we see that eq.~\eqref{del-2-ome} states that this operator
acts on the $n$-bracket of $\fG$-valued elements of $\fG^*$,
$[\omega, \mathop{\dots}\limits^n ,\omega]$, as a
{\it derivation}, and that this property follows
directly from the FI/nilpotency of $\delta$.
\medskip

  For $n=3$, the MC-like equation \eqref{MCforFAs} for FAs above
has the structure of the Basu-Harvey \cite{Bas-Har:05} equation.
This has appeared \cite{Ba-La:06,Ber:07,Ba-La:07b,Ho-Hou-Ma:08,Bo-Ta-Zab:08}
in the description of M-branes and in the BLG model and has its $n=2$ MC
precedent in the Nahm equation \cite{Nahm:80}.
\end{subsection}

\subsection{Dual FA homology complex for the trivial action}
\label{FA-homol} {\ }

The homology operator for $n$-algebras was introduced by Takhtajan
\cite{Tak:95, Da-Tak:97} in the context of the Nambu-Lie algebras
\cite{Tak:93} to be described in Sec. \ref{sec:NP}. We shall
consider here (see \cite{AIP-B:97}) the homology dual to the
cohomology complex of Def. \ref{FA-cob-op} adapted to the central
extension problem. Let the $p$-chains $C_p$ be defined as
$(\mathscr{X}_1,\dots,\mathscr{X}_p, Z)\in \wedge^{n-1}\fG
\otimes\mathop{\cdots}\limits^p\otimes\wedge^{n-1}\fG\wedge\fG$.

\begin{definition} ({\it n-algebra homology operator}){\ }

    On a $C_1$ chain, the homology operator $\partial:C_1\to
C_0\equiv\fG$ is defined by $\partial:(\mathscr{X},Z) \mapsto
\partial(\mathscr{X},Z) \equiv \partial (X_1,\dots,X_{n-1},Z)
 := [X_1,\dots,X_{n-1},Z]$. In
general, $\partial : C_p \rightarrow C_{p-1}$ is given by
\begin{equation}
\begin{aligned}
\partial(\mathscr{X}_1,\dots,\mathscr{X}_p,Z)=&
\displaystyle \sum\limits_{1\le i<j}^{p} (-1)^i
(\mathscr{X}_1,\dots,\widehat{\mathscr{X}_i},\dots,
\mathscr{X}_i\cdot \mathscr{X}_j,\dots ,\mathscr{X}_{p},Z) \cr &
\displaystyle +\sum\limits_{i=1}^{p} (-1)^i
(\mathscr{X}_1,\dots,\widehat{\mathscr{X}_i},\dots,
\mathscr{X}_p,\mathscr{X}_i\cdot Z)\quad.
\end{aligned}
\end{equation}
\end{definition}

It is not difficult to check that $\partial$ is the dual of
$\delta$. Using the definition of $\delta$ in \eqref{n-cobop}, we
find for a ($p-1$)-cochain $\alpha$
\begin{equation}
\begin{aligned}
\alpha ( \partial (\mathscr{X}_1,\dots ,\mathscr{X}_p,Z) ) & =
\sum_{1\leq i<j}^{p} (-1)^i \,
\alpha(\mathscr{X}_1,\dots,\widehat{\mathscr{X}_i},\dots,
\mathscr{X}_{j-1},\mathscr{X}_i \cdot
\mathscr{X}_j,\dots,\mathscr{X}_p, Z) \cr & +
 \sum_{i=1}^{p} (-1)^i
 \alpha(\mathscr{X}_1,\dots,\widehat{\mathscr{X}_i},\dots,\mathscr{X}_p,\mathscr{X}_i
 \cdot Z) \quad ,
\end{aligned}
\end{equation}
 which shows that, on $(\mathscr{X}_1,\dots,\mathscr{X}_p,Z)$,
\begin{equation}
\alpha \, \partial = \delta \, \alpha \quad .
\end{equation}

\subsection{Other FA cohomology complexes}{\ }

  We may also consider cochains of the type
$\alpha(\mathscr{X}_1,\dots,\mathscr{X}_p)$, depending on $p$
fundamental objects or on $p(n-1)$ elements of $\fG$ {\it i.e.},
taking arguments in $(\wedge^{n-1}\fG)\otimes \mathop{\cdots}
\limits^p \otimes (\wedge^{n-1}\fG)$. It is also possible to have
cochains taking the same arguments but that are valued on a FA left
module $V$ in the sense of eq.~\eqref{gmod}, elements of
$\hbox{Hom}(\otimes^p(\wedge^{n-1}\fG), V)$. In this case, there is
a FA cohomology complex that closely mimetizes the Lie algebra
cohomology one when the fundamental object arguments are replaced by
Lie algebra arguments. The coboundary operator for this $\fG$-module
cohomology complex is defined \cite{Gau:96} by
\begin{equation}
\label{cob-p(n-1)}
\begin{aligned}
 \delta \alpha (\mathscr{X}_1,\dots, \mathscr{X}_{p+1})=
 \sum_{i=1}^{p+1} & (-1)^{i+1} \,  \mathscr{X}_i \cdot
  \alpha(\mathscr{X}_1,\dots,\widehat{\mathscr{X}_i},\dots,\mathscr{X}_{p+1})  \\
   +\sum_{i,j\atop 1\leq i<j}^{p+1} & (-1)^i
 \alpha(\mathscr{X}_1,\dots,\widehat{\mathscr{X}_i},\dots,
 \mathscr{X}_i\cdot\mathscr{X}_j, \dots,\mathscr{X}_{p+1}) \; ,
 \end{aligned}
\end{equation}
where the dot in the first line above indicates the action of the
fundamental object on the $\fG$-module where the cochains takes
values (eqs.~\eqref{Lie-n-Lie}, \eqref{gmod}).

The proof of the nilpotency of $\delta$ above follows the same
pattern as that of the Lie algebra cohomology operator $s$
(eqs.~\eqref{Lie-cob},\eqref{Lie-cob-ad}). The key issue is that all
terms in $\delta(\delta\alpha^p)$ may be collected in groups of two
types, with three terms each group. The groups of the first type
share a $p$-cochain with equal arguments, on which two fundamental objects act in the form
$\mathscr{X}\cdot(\mathscr{Y}\cdot\alpha^p)-\mathscr{Y}\cdot(\mathscr{X}\cdot\alpha^p)
-(\mathscr{X}\cdot\mathscr{Y})\cdot\alpha^p =0$, and thus add up
to zero because $\alpha^p$ takes values on a $\fG$-module,
eq.~\eqref{gmod}.  Due to $p$-linearity, the groups of the second type
add up to $\alpha^p$ taking values on equal fundamental objects
but for one, which turns out to be the addition of
$(\mathscr{X}\cdot\mathscr{Y})\cdot\mathscr{Z} +
\mathscr{Y}\cdot(\mathscr{X}\cdot\mathscr{Z}) -
\mathscr{X}\cdot(\mathscr{Y}\cdot\mathscr{Z})=0$, again zero
 by \eqref{Lie-setb}.
\medskip

  When the action is non-trivial, other FA
 cohomology complexes are possible, as the one that is
relevant for deformations of FA algebras to described in
Sec.~\ref{sec:deform-FA} below. There, both the left and right
actions enter, as was seen in Sec.~\ref{extLeib} and will be found
again in Sec.~\ref{coho-for-def}. In fact, the cohomology of Leibniz
algebras underlies that of the FAs, as will be discussed in
Sec.~\ref{n-Leib-coho}.

\subsection{Deformations of Filippov algebras}
\label{sec:deform-FA}{\ }

We now look at the cohomology complex that governs the deformations
of FAs, in close parallel with the Lie algebra deformations
considered in Sec.~\ref{sec:def-Lie}.

\subsubsection{Infinitesimal deformations of FAs}
\label{ssec:ifi-def}{\ }

  Let us begin with the simple example of a
3-Lie algebra; the results, when written in terms of the
fundamental objects, will apply immediately to any $n$-Lie
algebra. Proceeding as in the case of the Lie algebra in
Sec.~\ref{sec:def-Lie}, an infinitesimal deformation requires the
existence of a deformed three-bracket $[X_1,X_2,Z]_t$,
\begin{equation}
\label{def-3-bra}
 [X_1,X_2,Z]_t = [X_1,X_2,Z] + t \alpha (X_1,X_2,Z) + O(t^2) \quad ,
 \end{equation}
where $\alpha(X_1,X_2,Z)$ is a three-linear skewsymmetric
$\fG$-valued map $\alpha: \wedge^2 \fG \wedge \fG \rightarrow
\fG\,$, $\,\alpha:(\mathscr{X},Z) \mapsto\alpha(\mathscr{X},Z)$,
depending on one fundamental object plus an element of $\fG$. The
bracket $[X_1,X_2,Z]_t$ satisfies the deformed FI, which in terms
of the fundamental objects reads
\begin{equation}
\label{def-3JI}
 [\mathscr{X}, (\mathscr{Y}\cdot Z)_t]_t=[(\mathscr{X}\cdot\mathscr{Y})_t,
 Z]_t + [\mathscr{Y},(\mathscr{X}\cdot Z)_t]_t \quad,
\end{equation}
where the subindex $t$ indicates that the $n$-brackets
$(\mathscr{X}\cdot Z)_t\,,\, (\mathscr{Y}\cdot Z)_t$ and those that
appear in the definition of $\mathscr{X}\cdot \mathscr{Y}$
(eq.~\eqref{dot-compo}) are deformed as in eq.~\eqref{def-3-bra}. At
first order, this gives the condition
\begin{equation}
\label{def-1-coch}
\begin{aligned}
 ad_{\mathscr{X}}& \alpha(\mathscr{Y}, Z) -ad_{\mathscr{Y}}\alpha(\mathscr{X}, Z)
  -[\alpha(\mathscr{X},Y_1),Y_2, Z] -[Y_1,\alpha(\mathscr{X},Y_2),Z] \\
 -\alpha & (\mathscr{X}\cdot\mathscr{Y}, Z) -\alpha(\mathscr{Y}, \mathscr{X}\cdot Z)
 + \alpha(\mathscr{X}, \mathscr{Y}\cdot Z) = 0 \quad .
 \end{aligned}
\end{equation}

Looking at the square brackets above we may introduce
$\alpha(\mathscr{X},\quad)\cdot \mathscr{Y}$ as the fundamental
object defined by
\begin{equation}
\label{dosf}
\begin{aligned}
 \alpha(\mathscr{X},\quad)\cdot \mathscr{Y} :=&
 (\alpha(\mathscr{X},\quad)\cdot Y_1,\,Y_2) \,+\, (Y_1,\,\alpha(\mathscr{X},\quad)\cdot Y_2)\\
 =& (\alpha(\mathscr{X},Y_1), Y_2) \,+\, (Y_1,\alpha(\mathscr{X},Y_2)) \qquad
[\,\alpha(\mathscr{X},\quad)\cdot Y_i  :=\alpha(\mathscr{X},Y_i)\,]  \quad .
\end{aligned}
\end{equation}
Then, the condition that $\alpha(\mathscr{X},Z)$ in
\eqref{def-3-bra} defines a  FA bracket may be written as ({\it
cf.} eq.~\eqref{cob-1-coch})
\begin{equation}
\label{def-1-coch-b}
\begin{aligned}
(\delta\alpha)(\mathscr{X}, \mathscr{Y}, Z) = ad_{\mathscr{X}}&
\alpha(\mathscr{Y}, Z) -ad_{\mathscr{Y}}\alpha(\mathscr{X}, Z)
  -(\alpha(\mathscr{X}, \quad )\cdot \mathscr{Y})\cdot Z \\
 -\alpha (\mathscr{X} & \cdot\mathscr{Y}, Z)  -\alpha(\mathscr{Y}, \mathscr{X}\cdot Z)
 + \alpha(\mathscr{X}, \mathscr{Y}\cdot Z) = 0 \quad ,
 \end{aligned}
\end{equation}
which may be simplified by writing
$ad_{\mathscr{X}}\alpha(\mathscr{Y},Z)=
\mathscr{X}\cdot\alpha(\mathscr{Y},Z)$, etc. (for the trivial
action, $\delta\alpha$ reproduces eq.~\eqref{cob-1-coch}). The
identification of this expression with $(\delta\alpha)(\mathscr{X},
\mathscr{Y}, Z)$, as written in its $l.h.s.$, indicates that
$\alpha(\mathscr{X}, Z)$ has to be a one-cocycle
$\alpha\equiv\alpha^1$ for the $\fG$-valued cohomology that will be
defined below\footnote{We note that for the $n=2$ case, where the
fundamental objects are simply the elements of the FA algebra,
$(\alpha(X, \quad )\cdot Y)\cdot Z =\alpha(X,Y)\cdot Z\;$ and
$\,X\cdot Y=[X,Y]$. Then, eq.~\eqref{def-1-coch-b} may be written as
\begin{equation*}
\begin{aligned}
&(\delta\alpha)(X, Y, Z) = X\cdot
\alpha(Y, Z) -Y\cdot\alpha(X, Z)
  -\alpha(X,Y)\cdot Z \\
 &\qquad \qquad \qquad -\alpha ([X, Y], Z)  -\alpha(Y, [X, Z])
 + \alpha(X, [Y,Z])=0 \quad ,
 \end{aligned}
\end{equation*}
which reproduces eq.~\eqref{Leib-2-cocy-b}, which is
the two-cocyle condition for the Leibniz algebra cohomology .
Further, since an $n=2$ FA is actually an ordinary Lie algebra,
the third term $-\alpha(X,Y)\cdot Z$ may be rewritten as
$Z\cdot \alpha(X,Y)$ and the above expression corresponds to
the second equation in \eqref{Lie-coh-ex}.}.

   Using now the ordinary Lie algebra experience, we may define a
zero-cocycle as a $\fG$-valued map $\alpha^0 :\fG \rightarrow
\fG\;$, $\alpha^0: Z\rightarrow \alpha^0(Z)$, which does not
contain any fundamental object as an argument. It will generate a
one-coboundary by
\begin{equation}
\label{gval-one-cob}
 (\delta\alpha^0)(\mathscr{X},Z)= \mathscr{X}\cdot \alpha^0(Z)
 -\alpha^0(\mathscr{X}\cdot Z)
 + (\alpha^0(\quad)\cdot \mathscr{X})\cdot Z \quad ,
\end{equation}
where the last term is $(\alpha^0(\quad)\cdot \mathscr{X})\cdot Z
=[\alpha^0(X_1),X_2,Z]+[X_1,\alpha^0(X_1),Z]$. Indeed, a
calculation using eqs.~\eqref{def-1-coch-b} and \eqref{Lie-setb}
shows that
$(\delta(\delta\alpha^0))(\mathscr{X},\mathscr{Y},Z)\equiv 0$.
   To check that \eqref{gval-one-cob} is a sensible definition, we
now look at the expression of the one-cocycle of an infinitesimal
deformation when the deformation is actually trivial. Let
$X_i'=X_i -t\alpha^0(X_i)$ the redefinition of the basis of the
apparently deformed FA which removes the $t\alpha(\mathscr{X},Z)$
term in \eqref{def-3-bra}. Then (we may assume directly that we
are dealing with an $n$-Lie algebra) the new bracket for the primed
generators reads, to order $t$,
\begin{equation}
\begin{aligned}
(\mathscr{X}' & \cdot Z')_t \equiv [X'_1,\dots, X'_{n-1}, Z']_t =
[X_1-t \alpha^0(X_1),\dots, X_{n-1}-t\alpha^0(X_{n-1}),\,
Z-t\alpha^0(Z)]_t\\ & = [X_1,\dots, X_{n-1}, Z]_t -t\,
(\;\sum_{i=1}^{n-1}[X_1,\dots,X_{i-1},\alpha^0(X_i),X_{i+1},\dots,X_{n-1},Z]
 +[X_1,\dots,X_{n-1},\alpha^0(Z)]\,) \\
 & =[X_1,\dots, X_{n-1}, Z] -t\alpha^0(\mathscr{X}\cdot
 Z) +t\, (\alpha^1(\mathscr{X},Z)+ \alpha^0(\mathscr{X}\cdot
 Z)-(\alpha^0(\quad)\cdot \mathscr{X})\cdot Z -\mathscr{X}\cdot
 \alpha^0(Z)\, )\\
 & =[X_1,\dots, X_{n-1}, Z]' + t\,(\alpha^1(\mathscr{X},Z)
 -(\delta\alpha^0)(\mathscr{X},Z)\;) \quad,
\end{aligned}
\end{equation}
where in the third line we have added and subtracted
$t\alpha^0(\mathscr{X}\cdot Z)= t\alpha^0([X_1,\dots,X_{n-1},Z])$
and used definition \eqref{gval-one-cob} in the fourth, the form of
which now appears justified. We see that the $t$ term above
disappears, and with it the infinitesimal deformation, if $\alpha^1$
is the one-coboundary $\alpha^1=\delta \alpha^0$. Thus, the
infinitesimal deformations of $n$-Lie algebras are governed by the
first\footnote{As mentioned, the order $p$ is defined by the number
of fundamental objects contained in the $p$-cochain. Within this
terminology, the infinitesimal deformations of a Lie algebra $\fg$
are governed by the $\fg$-valued one-cocycles for the adjoint
action, $H^1_{ad}$, where the action of $\mathscr{X}$ is defined as
above and in Def. \ref{deform-defb} below.} cohomology group
$H^1_{ad}$ for \eqref{def-1-coch-b} where $ad$ refers generically to
the action of the fundamental objects on the cochains (the general
cocycle condition will be given in Sec. \ref{coho-for-def} below); a
FA algebra is {\it stable}, or {\it rigid}, if $H^1_{ad}=0$ . As
defined here, the $p$-cochains have $p$ fundamental objects plus an
element of $\fG$ as their arguments; in all, $p(n-1)+1$ elements of
$\fG$.

 Comparing eqs.~\eqref{gval-one-cob}, \eqref{def-1-coch-b} with the
equivalent formulae for previous cohomology complexes, we see that
the coboundary operator producing these expressions (and therefore
adapted to the deformation problem), leads to a cohomology complex
different from {\it e.g.}, that defined by eq.~\eqref{cob-p(n-1)}.
This is because the cochain spaces $C^p$ for \eqref{cob-p(n-1)} are
 $\hbox{Hom}(\otimes^p(\wedge^{n-1}\fG),\fG)$ and thus
automatically unsuitable for deformations since (as it was also the
case for the central extensions cohomology) the cochains defined
here naturally include a solitary element of $\fG$ as an argument.
As a result, the situation for the deformation of FAs is different from
that for the Lie algebras, since the relevant cohomology leads to
\eqref{def-1-coch-b} rather than to the corresponding one-cocycle
expression from \eqref{cob-p(n-1)}, which is the cohomology complex
that follows the pattern of the ordinary Lie algebra cohomology
coboundary operator $s$ in eq.~\eqref{Lie-cob}. The key difference
is the definition of the action of the fundamental object in the
third terms in the right hand sides of eqs.~\eqref{def-1-coch-b} and
\eqref{gval-one-cob}. In all, this reflects the different dependence
of the cochains on the fundamental objects and the single element of
$\fG$ on the one hand and, on the other, of the actions induced by
these on the cochains.
\medskip

\subsubsection{Higher order deformations of FAs}{\ }

Let us now look at the problem of extending an infinitesimal
deformation to second order. Consider
\begin{equation}
[X_1,\dots,X_{n-1},Z]_t\equiv (\mathscr{X}\cdot Z)_t =
(\mathscr{X}\cdot Z) + t\alpha_1(\mathscr{X}, Z)
+t^2\alpha_2(\mathscr{X}, Z) + O(t^3) \quad,
\end{equation}
where the subindices $1,2$ refer to the deformation order; note that
$\alpha_1,\alpha_2\in \hbox{Hom}(\wedge^{n-1}\fG\wedge \fG, \fG)$
{\it i.e.}, both are $\fG$-valued one-cochains  and, further,
$\alpha_1$ is already assumed to be a one-cocyle. We now ask
ourselves whether the infinitesimal deformation $\alpha_1$ can be
expanded to a second order one determined by a certain $\alpha_2$.
Again, the condition is that the FI (eqs.~\eqref{eq:FI},
\eqref{def-3JI} for the deformed FA), that we shall now write in the
form
\begin{equation}
(\mathscr{X}\cdot(\mathscr{Y}\cdot Z)_t)_t-
((\mathscr{X}\cdot\mathscr{Y})_t\cdot Z)_t -
(\mathscr{Y}\cdot(\mathscr{X}\cdot Z)_t)_t =0 \quad ,
\end{equation}
be satisfied up to order $O(t^3)$. Since $\alpha_1$ is an
infinitesimal deformation, the terms of order $t$ are zero since
$\delta \alpha_1=0$. The terms of order $t^2$ give the condition
\begin{equation}
\label{FA-def-order2}
\begin{aligned}
\alpha_1(\mathscr{X},\alpha_1(\mathscr{Y},Z))-
\alpha_1(\alpha_1(\mathscr{X},\quad)\cdot\mathscr{Y}\,,\,Z)
-\alpha_1(\mathscr{Y},\alpha_1(\mathscr{X},Z))  \qquad\\
+ \mathscr{X}\cdot \alpha_2 (\mathscr{Y},Z)
-\alpha_2(\mathscr{X}\cdot\mathscr{Y},Z)
-\alpha_2(\mathscr{Y},\mathscr{X}\cdot Z) \qquad \\
-\mathscr{Y}\cdot\alpha_2(\mathscr{X},Z)+\alpha_2(\mathscr{X},\mathscr{Y}\cdot
Z) - (\alpha_2(\mathscr{X},\quad)\cdot \mathscr{Y})\cdot Z \qquad \\
\equiv  \gamma(\mathscr{X},\mathscr{Y},Z) +(
\delta\alpha_2)(\mathscr{X},\mathscr{Y},Z)=0 \quad \qquad \qquad
\end{aligned}
\end{equation}
using the one cocycle condition \eqref{def-1-coch-b}.
The above expression implies $\delta\gamma=0$ and thus
$\gamma(\mathscr{X},\mathscr{Y},Z)$ has to be a two-cocycle.
The two-cochain $\gamma$ in (\ref{FA-def-order2}) is
\begin{equation}
\label{deformations-b1}
    \gamma(\mathscr{X},\mathscr{Y},Z) := \alpha_1(\mathscr{X}, \alpha_1(\mathscr{Y},Z))
    -\alpha_1(\alpha_1(\mathscr{X},\ )\cdot \mathscr{Y},Z)-\alpha_1(\mathscr{Y},
    \alpha_1(\mathscr{X},Z))\; .
\end{equation}
The action of the coboundary operator $\delta$
on $\gamma$ is the case $p=2$ in eq.~\eqref{cob-op-for-defs} below.
A non completely trivial calculation shows that
\begin{equation*}
\label{deformations-b9}
    \delta\gamma(\mathscr{X}_1,\mathscr{X}_2,\mathscr{X}_3,Z) =
     \sum_{a=1}^{n-1} \delta\alpha_1((X_{(2)1},\dots, \alpha_1 (\mathscr{X}_1,
     X_{(2)a}),\dots, X_{(2)n-1}),\mathscr{X}_3,Z) = 0  \quad ,
\end{equation*}
so that $\gamma$ is indeed a two-cocycle since $\delta\alpha_1=0$.

Thus, if $\gamma$ is a two-coboundary generated by a one-cochain $\alpha'$,
$\gamma=\delta\alpha'$, it is sufficient to take the second order
one-cochain as $\alpha_2=-\alpha'$ to have the above condition
fulfilled as in the Lie algebra case (Sec.~\ref{sec:high-ord-def}).
Proceeding in this way, we may conclude that there is no obstruction
if $H^2_{ad}=0$, where the second cohomology group is defined with
respect to the cohomology complex defined in the next section.

\subsection{Cohomology complex for deformations
of $n$-Lie algebras}{\ }
\label{coho-for-def}

The above discussion leads us naturally to introducing the ingredients
of the deformation theory of FAs.

\begin{definition} ($\fG$-{\it valued p-cochains}){\ }

A $p$-cochain for the deformation cohomology is a map $\alpha^p:
\wedge^{(n-1)}\fG\otimes
\mathop{\cdots}\limits^p\otimes\wedge^{(n-1)}\fG\wedge\fG
\rightarrow \fG$ so that
\begin{equation}
\label{fg-p-coch}
 \alpha :(\mathscr{X}_1,\dots,\mathscr{X}_p,Z) \mapsto
\alpha(\mathscr{X}_1,\dots,\mathscr{X}_p,Z)\equiv \alpha
(X^1_1,\dots,X^{n-1}_1, \dots ,X^1_p,\dots, X^{n-1}_p, Z) \in \fG\;.
\end{equation}
\end{definition}
We shall refer generically to the space of the above $p$-cochains
as $C^p_{ad}$, since the (left) action of the
fundamental objects on the $\fG$-valued $\alpha^p$ is given by
\begin{equation*}
\mathscr{X}\cdot \alpha^p(\mathscr{X}_1,\dots,\mathscr{X}_p,Z)=
[X_1,\dots,X_{n-1},\,\alpha^p(\mathscr{X}_1,\dots,\mathscr{X}_p,Z)]\quad.
\end{equation*}

To define the action of the coboundary operator, we need first the
equivalent of \eqref{dosf} for an $n$-Lie algebra. This is given by
the sum of fundamental objects
\begin{equation}
\label{p-coc-ftal}
 \alpha^p(\mathscr{X}_1,\dots,\mathscr{X}_p\,,\quad)\cdot
 \mathscr{Y}=
 \sum_{i=1}^{n-1}
 (Y_1,\dots,\alpha^p(\mathscr{X}_1,\dots,\mathscr{X}_p,Y_i),\dots,Y_{n-1})
 \quad ,
\end{equation}
since, by definition,
$\alpha^p(\mathscr{X}_1,\dots,\mathscr{X}_p\,,Y_i)\in \fG$.

The previous discussion has allowed us to define the action of the
coboundary operator on zero-cochains $\alpha^0$ and one-cochains
$\alpha^1$ by eqs.~\eqref{gval-one-cob}, \eqref{def-1-coch-b}
respectively. The natural generalization of the above expressions
to the $p$-cochain case leads us to the deformation cohomology
complex below.

\begin{definition}({\it Coboundary operator for the deformation
cohomology}){\ }

\label{deform-defb}
The coboundary operator $\delta:C^p_{ad}\rightarrow C^{p+1}_{ad}$ is given by
\begin{equation}
\label{cob-op-for-defs}
\begin{aligned}
(\delta\alpha^p) &
(\mathscr{X}_1,\dots,\mathscr{X}_p,\mathscr{X}_{p+1},Z)= \\
&\sum_{1\leq j<k}^{p+1} (-1)^j
\alpha^p(\mathscr{X}_1,\dots,\widehat{\mathscr{X}_j},\dots,\mathscr{X}_{k-1},
\mathscr{X}_j\cdot
\mathscr{X}_k,\mathscr{X}_{k+1},\dots,\mathscr{X}_{p+1}, Z)\\
+& \sum_{j=1}^{p+1} (-1)^j \alpha^p
(\mathscr{X}_1,\dots,\widehat{\mathscr{X}_j},\dots,\mathscr{X}_{p+1},\mathscr{X}_j\cdot
 Z) \\
 + &\sum_{j=1}^{p+1} (-1)^{j+1} \mathscr{X}_j \cdot
 \alpha^p(\mathscr{X}_1,\dots,\widehat{\mathscr{X}_j}\dots,\mathscr{X}_{p+1},
 Z) \\
 +&(-1)^{p}
 (\alpha^p(\mathscr{X}_1,\dots,\mathscr{X}_p\,,\quad)\cdot
 \mathscr{X}_{p+1})\cdot Z \quad ,
\end{aligned}
\end{equation}
\end{definition}
\noindent
where the last term is defined by eq.~\eqref{p-coc-ftal}.
For $p=0,1$ this reproduces eqs.~\eqref{gval-one-cob}
and the first equality in \eqref{def-1-coch-b}, with
which one may check that
$(\delta^2\alpha^0)(\mathscr{X},\mathscr{Y},Z)=0$, already an
altogether non-trivial calculation. Explicitly, on a one-cochain
$\alpha^1$ (see eq.~\eqref{def-1-coch-b}),
\begin{equation*}
\begin{aligned}
(\delta\alpha^1)(\mathscr{X}, \mathscr{Y}, Z) = \mathscr{X} \cdot &
\alpha^1(\mathscr{Y}, Z) - \mathscr{Y}\cdot\alpha^1(\mathscr{X}, Z)
  -(\alpha^1(\mathscr{X}, \quad )\cdot \mathscr{Y})\cdot Z \\
 -\alpha^1(\mathscr{X} & \cdot\mathscr{Y}, Z)  -\alpha^1(\mathscr{Y}, \mathscr{X}\cdot Z)
 + \alpha^1(\mathscr{X}, \mathscr{Y}\cdot Z)  \quad ,
 \end{aligned}
\end{equation*}
an expression (set equal to zero)
repeatedly used in the previous calculation
for the one-cocycle defining the first
order deformation of a FA.
\\

It may be seen that the cohomology complex defined above is
essentially equivalent to the one introduced by Gautheron in
an important paper \cite{Gau:96}. There, he was the first to
consider the deformation cohomology for Nambu algebras, a
particular case of FAs (Ex.~\ref{ex:Nam-FA} and Sec.~\ref{sec:NP});
see also \cite{Da-Tak:97, Rot:05}. By defining
$\fG$-valued cocycles $Z^p_{ad}$ and coboundaries $B^p_{ad}$
in the usual manner, the previous results may be formulated
as the following

\begin{theorem} ({\it Deformations of FA algebras})
\label{th:FAdef}
{\ }

Let $\fG$ be a FA. The first cohomology group
$H^1_{ad}$ for the above coboundary operator governs the
infinitesimal deformations of $\fG$. If $H^1_{ad}=0$, the FA is
{\it rigid}. If it admits an infinitesimal deformation, the
obstructions to move to higher order result from $H^2_{ad}\not=0$.
\end{theorem}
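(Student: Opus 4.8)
The plan is to assemble the three assertions from the material already developed in Secs.~\ref{ssec:ifi-def}--\ref{coho-for-def}, since the coboundary operator of Def.~\ref{deform-defb} was engineered precisely so that the deformation conditions become cocycle conditions. First I would treat the infinitesimal (first-order) claim. Starting from the deformed bracket $[X_1,\dots,X_{n-1},Z]_t=(\mathscr{X}\cdot Z)+t\,\alpha^1(\mathscr{X},Z)+O(t^2)$ and imposing the deformed FI \eqref{def-3JI}, the order-$t$ terms collapse to eq.~\eqref{def-1-coch-b}, which is exactly $(\delta\alpha^1)(\mathscr{X},\mathscr{Y},Z)=0$ for the coboundary operator \eqref{cob-op-for-defs} at $p=1$. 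Hence an infinitesimal deformation is the same datum as a one-cocycle $\alpha^1\in Z^1_{ad}$. To factor out the spurious deformations I would then redo the basis change $X_i'=X_i-t\,\alpha^0(X_i)$ as in the displayed four-line computation of Sec.~\ref{ssec:ifi-def}, with $\delta\alpha^0$ defined by \eqref{gval-one-cob}; this shows that the $t$-term in the transformed bracket is $\alpha^1-\delta\alpha^0$, so a deformation is trivial (removable by a change of basis) precisely when $\alpha^1\in B^1_{ad}$. Therefore inequivalent infinitesimal deformations are classified by $H^1_{ad}=Z^1_{ad}/B^1_{ad}$, which is the first assertion.

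The rigidity statement is then immediate: if $H^1_{ad}=0$ every one-cocycle is a one-coboundary, so every infinitesimal deformation is removable, and $\fG$ admits no nontrivial first-order deformation, hence is rigid.

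For the obstruction claim I would carry the deformation to second order, $(\mathscr{X}\cdot Z)_t=(\mathscr{X}\cdot Z)+t\,\alpha_1+t^2\,\alpha_2+O(t^3)$ with $\alpha_1\in Z^1_{ad}$, and collect the $t^2$-terms of the FI. They organize, as in eq.~\eqref{FA-def-order2}, into $\gamma(\mathscr{X},\mathscr{Y},Z)+(\delta\alpha_2)(\mathscr{X},\mathscr{Y},Z)=0$, where $\gamma$ is the two-cochain \eqref{deformations-b1} built quadratically from $\alpha_1$. I would then verify $\delta\gamma=0$ -- the computation sketched after \eqref{deformations-b1}, which uses $\delta\alpha_1=0$ together with the fundamental-object identity \eqref{Lie-setb} -- so that $\gamma\in Z^2_{ad}$. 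The infinitesimal deformation extends to second order iff one can solve $\delta\alpha_2=-\gamma$, i.e.\ iff $[\gamma]=0$ in $H^2_{ad}$. Iterating, at order $t^{k+1}$ the same manipulation yields an obstruction two-cochain built from the previously fixed $\alpha_1,\dots,\alpha_k$, which is again a two-cocycle and which can be absorbed into $\delta\alpha_{k+1}$ precisely when its class in $H^2_{ad}$ vanishes. Hence all obstructions lie in $H^2_{ad}$, and $H^2_{ad}=0$ guarantees integrability of every infinitesimal deformation.

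The genuinely nontrivial technical steps, which I would not regard as routine, are two. The first is establishing $\delta^2=0$ for the operator \eqref{cob-op-for-defs}, flagged already in the text as non-trivial even at the level $\delta^2\alpha^0=0$; this is what makes $H^\bullet_{ad}$ well defined and rests entirely on the FI through \eqref{Lie-setb} and \eqref{Lie-n-Lie}. The second, and the main obstacle, is the closedness $\delta\gamma=0$ of the order-by-order obstruction and, for the full integrability statement, the inductive verification that the higher-order obstructions remain two-cocycles modulo the choices already made -- both again reducible, after careful bookkeeping of the fundamental-object compositions $\mathscr{X}\cdot\mathscr{Y}$, to the single identity \eqref{Lie-setb}.
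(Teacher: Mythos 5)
Your proposal is correct and follows essentially the same route as the paper: the order-$t$ terms of the deformed FI yield the one-cocycle condition \eqref{def-1-coch-b}, the basis redefinition $X'=X-t\,\alpha^0(X)$ identifies trivial deformations with coboundaries \eqref{gval-one-cob} so that $H^1_{ad}$ classifies infinitesimal deformations, and the order-$t^2$ analysis produces the two-cocycle $\gamma$ of \eqref{deformations-b1} whose class in $H^2_{ad}$ obstructs integrability. You also correctly flag the same two non-routine verifications the paper singles out, namely $\delta^2=0$ for the operator \eqref{cob-op-for-defs} and $\delta\gamma=0$, both resting on the fundamental-object identities \eqref{Lie-setb} and \eqref{Lie-n-Lie} that encode the FI.
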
{\ }

In analogy with the Lie algebra case, the following theorem \cite{Az-Iz:09}
holds

\begin{theorem} (Semisimple FAs and deformations)
\label{th:rigFAsim}
{\ }

Semisimple $n$-Lie algebras are rigid.
\end{theorem}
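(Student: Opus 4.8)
Semisimple $n$-Lie algebras are rigid, i.e. $H^1_{ad}=0$ for the deformation cohomology complex $(C^\bullet_{ad},\delta)$ of Definition \ref{deform-defb}.

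Let me think about how to prove this.

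The natural model is the proof of Whitehead's Lemma (Lemma \ref{lem:White}) for ordinary Lie algebras, which used a contracting homotopy built from the inverse Cartan–Killing metric and the Casimir operator. The key steps there were: (1) $\fg$ semisimple $\Rightarrow$ the Killing form is invertible; (2) define an operator $\tau$ lowering cochain degree by one using $k^{ij}\rho(X_i)$; (3) show $(s\tau+\tau s)\Omega = I_2(\rho)\,\Omega$ where $I_2$ is the quadratic Casimir; (4) invoke Schur to conclude $I_2$ is an invertible scalar, so every cocycle is a coboundary.

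For $n$-Lie algebras I need the analogue of each ingredient, now phrased in terms of fundamental objects $\mathscr{X}\in\wedge^{n-1}\fG$ and the associated Lie algebra $\mathrm{Lie}\,\fG=\mathrm{InDer}\,\fG$ (Theorem \ref{th:n-Lie->Lie}). The plan is as follows.

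First I would reduce to the simple case. By Theorem \ref{ssFAandInDev} a semisimple $\fG$ is a direct sum $\bigoplus_{\mathfrak{s}}\fG_{(\mathfrak{s})}$ of simple ideals; since brackets mixing different ideals vanish, the deformation cohomology should decompose accordingly, so it suffices to kill $H^1_{ad}$ for each simple summand. For a simple $\fG$ (Theorem \ref{th:FAsimple}), $\mathrm{Lie}\,\fG$ is $so(n+1)$ (Theorem \ref{th:simple-SO}), hence semisimple, and the $ad$-action on $\fG$ is a genuine representation of this semisimple Lie algebra by Theorem \ref{th:n-Lie->Lie}. The crucial structural input is Remark \ref{rem:Kas-simp}: for a simple FA the Kasymov form $k(\mathscr{X},\mathscr{Y})=\mathrm{Tr}(ad_{\mathscr{X}}ad_{\mathscr{Y}})$ is a nondegenerate bilinear form on $\wedge^{n-1}\fG$, so it has an inverse $k^{(\mathscr{X})(\mathscr{Y})}$ on the space of fundamental objects. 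This replaces the invertible Killing metric of the Lie case.

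Next I would build a contracting homotopy. Mimicking \eqref{whitehead1}, I would define $\tau:C^p_{ad}\to C^{p-1}_{ad}$ by contracting one fundamental-object slot with $k^{-1}$ and acting by $ad$, schematically
\begin{equation*}
(\tau\alpha)(\mathscr{X}_1,\dots,\mathscr{X}_{p-1},Z)=
\sum k^{(\mathscr{U})(\mathscr{V})}\,ad_{\mathscr{U}}\,
\alpha(\mathscr{V},\mathscr{X}_1,\dots,\mathscr{X}_{p-1},Z),
\end{equation*}
where $\mathscr{U},\mathscr{V}$ run over a basis of $\wedge^{n-1}\fG$ and its $k$-dual. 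I would then compute $(\delta\tau+\tau\delta)\alpha$ on a $p$-cochain, using the coboundary formula \eqref{cob-op-for-defs}. The three structurally distinct terms of $\delta$ — the $\mathscr{X}_j\cdot\mathscr{X}_k$ composition terms, the action-on-$Z$ terms, and the leftmost $\mathscr{X}_j\cdot$ action — should recombine, via the fundamental identity \eqref{Lie-setb} and the representation property \eqref{Lie-n-Lie}, into a multiple of the quadratic Casimir $I_2=k^{(\mathscr{U})(\mathscr{V})}ad_{\mathscr{U}}ad_{\mathscr{V}}$ of $\mathrm{Lie}\,\fG$ acting on $\alpha$. Because $\fG$ is simple, $\alpha$ takes values in the irreducible $\mathrm{Lie}\,\fG$-module $\fG$ (the defining $so(n+1)$ representation), so Schur's lemma forces $I_2$ to be a nonzero scalar. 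Applying the homotopy identity to a one-cocycle $\alpha^1\in Z^1_{ad}$ then yields $\alpha^1=\delta(\tau\alpha^1\,I_2^{-1})$, proving $H^1_{ad}=0$.

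The hard part will be step three: verifying that $(\delta\tau+\tau\delta)$ really collapses to the Casimir. The bookkeeping is considerably more delicate than in the Lie case because the $p$-cochains are \emph{not} totally antisymmetric — they carry $p$ skew fundamental objects plus one bare slot $Z$ — and because the composition $\mathscr{X}\cdot\mathscr{Y}$ is nonassociative with $\mathscr{X}\cdot\mathscr{Y}\neq-\mathscr{Y}\cdot\mathscr{X}$ (only $ad_{\mathscr{X}\cdot\mathscr{Y}}=-ad_{\mathscr{Y}\cdot\mathscr{X}}$, eq.~\eqref{xy.skew}). So the cross-terms that cancelled cleanly via skewsymmetry in Whitehead's proof must here be controlled using the $ad$-invariance of the Kasymov metric (eq.~\eqref{Kas-inv-fund}/\eqref{CasInv}) to move $ad$'s past $k^{-1}$ and using \eqref{Lie-setb} to reorder nested compositions. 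I would expect to need $\mathrm{Lie}\,\fG$-invariance of $k^{-1}$ precisely to guarantee that the unwanted terms pair up and vanish, exactly as trace-invariance did in the footnote computation following \eqref{whitehead1}. Once the Casimir identity is in hand, the irreducibility-plus-Schur argument and the direct-sum reduction are routine, and the companion obstruction statement $H^2_{ad}=0$ (implicit in Theorem \ref{th:FAdef}) follows by the same homotopy applied one degree higher.
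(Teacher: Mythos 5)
Your proposal is a plan, not a proof, and the step you yourself defer as ``the hard part'' is precisely where it is unsupported. The deformation coboundary \eqref{cob-op-for-defs} contains, besides the three Lie-like groups of terms, the insertion term $(-1)^p\,(\alpha^p(\mathscr{X}_1,\dots,\mathscr{X}_p,\;)\cdot\mathscr{X}_{p+1})\cdot Z$ defined through \eqref{p-coc-ftal}, in which the value of the cochain is placed \emph{inside} an argument slot of the $n$-bracket. When you compute $(\delta\tau+\tau\delta)\alpha$ with your $\tau$, this term generates contributions of the schematic form $k^{(\mathscr{U})(\mathscr{V})}\bigl(ad_{\mathscr{U}}\,\alpha(\mathscr{V},\dots,\;)\cdot\mathscr{X}\bigr)\cdot Z$: these are not of the ``Casimir acting on the value index'' type, they have no counterpart in the classical computation \eqref{whitehead2}, and invariance of the Kasymov form supplies no mechanism for them to cancel, because the cancellations in the Lie-algebra footnote computation rest on the total antisymmetry of the cochain, which your cochains (with $p$ fundamental-object slots plus one bare slot $Z$) do not possess. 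Until you either exhibit these cancellations explicitly or modify $\tau$ (for instance by adding a piece contracting into the fundamental-object slots, not only acting on the value), the identity $(\delta\tau+\tau\delta)=I_2$ is an assertion, and with it the entire conclusion, including your claimed $H^2_{ad}=0$ in one degree higher.

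There is a second concrete gap: the reduction from semisimple to simple is not ``routine''. For $\fG=\bigoplus_{\mathfrak{s}}\fG_{(\mathfrak{s})}$ a deformation one-cochain has mixed components ($\mathscr{X}$ with entries in different ideals, $Z$ and the value in yet others), and the cocycle condition does not manifestly decouple them; this is exactly the regime in which Remark \ref{rem:Kas-simp} shows the Kasymov form on $\wedge^{n-1}\fG$ \emph{degenerates}, so your homotopy operator is not even defined on the full complex and the mixed components require a separate argument. For comparison with the paper: the review states the theorem without proof, deferring to \cite{Az-Iz:09}, and Remark \ref{rem:GLAvsFA} indicates that the proof there leans on the classification of Th.~\ref{th:FAsimple} (the structure constants $\epsilon_{a_1\dots a_{n+1}}$, rigidity ``inherited'' from $so(3)$ and $so(1,2)$) rather than on an abstract Casimir homotopy. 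Note also that your route is not classification-free either, since the non-degeneracy of the Kasymov form in Remark \ref{rem:Kas-simp} and the irreducibility of $\fG$ as the defining $so(n+1)$-module are themselves consequences of Th.~\ref{th:FAsimple}; so the abstractness you would gain over the cited proof is smaller than it appears, while the unverified homotopy identity remains the decisive missing step.
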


This theorem and Th.~\ref{th:s-s-FA-ext} constitute the extension of
the standard Whitehead's lemma for Lie algebras (see \cite{Jac:79})
to $n$-Lie algebras. Since Lie algebras are $n=2$ FAs, this result
\cite{Az-Iz:09} states that Whitehead's lemma holds true for any
semisimple Filippov algebra, $n\geq 2$. The simple FAs inherit, so
to speak, the rigidity of their Lie ancestors $so(3)$ and $so(1,2)$.
\medskip

\section{$n$-Leibniz algebra cohomology. LA {\it vs.} FA cohomology}
\label{n-Leib-coho}
{\ }

 As discussed in Sec.~\ref{sec:n-Leib}, the composition
$\mathscr{X}\cdot\mathscr{Y}$ of fundamental objects,
rewritten as $[\mathscr{X},\mathscr{Y}]$, has the properties
of a Leibniz algebra commutator. Further, since the skewsymmetry
of the FA $n$-bracket is not needed for the nilpotency of $\delta$
in, say, Def.~\ref{FA-cob-op}, $\delta$ also defines a coboundary
operator for the cohomology of an $n$-Leibniz algebra
$\mathfrak{L}$, where now the $p$-cochains may
considered as elements $\alpha\in
(\otimes^{(n-1)}\mathfrak{L}^*)\otimes
\mathop{\cdots}\limits^p\otimes
(\otimes^{(n-1)}\mathfrak{L}^*)\otimes \mathfrak{L}^*=
\otimes^{p(n-1)+1}\mathfrak{L}^*=\hbox{Hom}(\otimes^{p(n-1)+1}\mathfrak{L}\,,
\,\mathbb{R}$). The key ingredient that guarantees the nilpotency
of the coboundary operator is the FI, which implies
eqs.~\eqref{Lie-setb}, \eqref{gmod}, etc.  The only difference between
the $n$-Lie algebra and $n$-Leibniz algebra cohomology complexes
for the trivial representation,
irrelevant for the nilpotency of $\delta$, is that for an $n$-Lie algebra $\mathscr{X}\in \wedge^{n-1}\fG$ and for a
 Leibniz algebra $\mathscr{X}\in \otimes^{n-1}\mathfrak{L}$.
Hence, with the appropriate changes in the definition of the
$p$-cochain spaces $C^p$, the coboundary operator $\delta$ of
Def.~\ref{FA-cob-op} defines both the corresponding cohomologies for
$n$-Lie $\fG$ and $n$-Leibniz $\mathfrak{L}$ algebras adapted to the
central extension problem, which correspond to the trivial action.

 Thus, the FA cohomologies defined by the coboundary operators \eqref{n-cobop}
(and the corresponding homology) constitute simply the application
of the $n$-Leibniz algebra coboundary operators that define the
corresponding $n$-Leibniz algebra cohomology complexes to the
$n$-Lie algebra cohomology. Analogously, an $n$-Leibniz algebra
cohomology $p$-cochain corresponding to the FA cohomology complex
defined by \eqref{cob-p(n-1)} is an element of
$\hbox{Hom}(\otimes^{p(n-1)}\mathfrak{L},V)$, where $V$ is the
corresponding $\mathfrak{L}$-module. In fact, $n$-Leibniz algebras
largely underlie the structural cohomological properties of the FAs.
\medskip

Similar considerations apply to the $n$-Leibniz algebra
cohomology adapted to the deformation problem already considered
for the FAs. To define the appropriate $n$-Leibniz algebra
cohomology complex it is sufficient to take the $p$-cochains as
$\mathfrak{L}$-valued elements in
$(\otimes^{n-1}\mathfrak{L}^*)\mathop{\cdots}\limits^p(\otimes^{n-1}\mathfrak{L}^*)\otimes\mathfrak{L}^*$,
{\it i.e.}, as elements of
$\hbox{Hom}(\otimes^{p(n-1)+1}\mathfrak{L}, \mathfrak{L})$.
Indeed, for the $n=2$ case, and reverting to the notation where $p$ indicates
the number of algebra elements on which $\alpha^p$ takes arguments,
$\alpha^p\in C^p(\mathscr{L},\mathscr{L})=\hbox{Hom}(\otimes^p \mathscr{L},\mathscr{L})$,
Def.~\ref{deform-defb} leads to
\begin{equation}
\label{deform-defb-n2}
\begin{aligned}
(s \alpha^p) &
(X_1,\dots,X_p,X_{p+1})= \\
&\sum_{1\leq j<k}^{p+1} (-1)^j
\alpha^p(X_1,\dots,\widehat{X_j},\dots,X_{k-1},
[X_j, X_k], X_{k+1},\dots,X_{p+1})\\
 + &\sum_{j=1}^{p} (-1)^{j+1} X_j \cdot
 \alpha^p(X_1,\dots,\widehat{X_j}\dots,X_{p+1}) \\
 +&(-1)^{p+1} \alpha^p(X_1,\dots, X_p) \cdot X_{p+1} \quad ,
\end{aligned}
\end{equation}
which coincides with the coboundary operator for the Leibniz algebra
cohomology complex ($C^\bullet(\mathscr{L},\mathscr{L}),s$) of
eq.~\eqref{Leib-s} (for Leibniz algebra homology, see
\cite{Pir:94}).

Alternatively, the FA cohomology complex would
follow from that for the $n$-Leibniz algebras
by demanding that the $n$-Leibniz bracket be
skewsymmetric so that it becomes the $n$-bracket of a FA.
\medskip

We conclude with a comment on deformations. The proof of the
Whitehead Lemma for FAs \cite{Az-Iz:09}, Ths.~\ref{th:s-s-FA-ext}
and \ref{th:rigFAsim}, relies  on the skewsymmetry of the
$n$-bracket of FAs and it will not hold when  the full antisymmetry
is relaxed. Thus, one might expect having a richer deformation
structure  for $n$-Leibniz algebras and even for Leibniz-type
deformations of FAs viewed as Leibniz algebras. This has been
observed already for the $n=2$ case \cite{Fia-Man:08} by looking at
Leibniz deformations of a Lie algebra and, further, a specific Leibniz
deformation of the euclidean 3-Lie algebra has been found
\cite{JMF:08}. Thus, the next natural step is to look {\it e.g.} at
$n$-Leibniz deformations of simple $n$-Lie algebras to see whether
this opens more possibilities. It has been shown \cite{Az-Iz:10} that
for $n$-Leibniz deformations with brackets that keep the
antisymmetry in their first $n-1$ arguments and thus have
antisymmetric fundamental objects, rigidity still holds for any
$n>3$.
\medskip

\section{$n$-ary Poisson structures}
\label{sec:higherPoisson}{\ }

 In the previous sections we have  mainly reviewed
 two possible ways of generalizing the Lie algebra structure to
algebras endowed with brackets with more than two entries, and
studied various aspects of the two generalizations. It is not
surprising that, as far as the Poisson bracket (PB) and the standard
Poisson structure (PS, see {\it e.g.} \cite{Wei:83} for a classic paper
and \cite{Vai:94} for a review) share the basic properties of Lie
algebras, similar $n$-ary generalizations of the ordinary PS should
also exist. We devote this section to two fully antisymmetric
extensions of the standard PS for brackets with more than two entries.

The first generalization of the PS to $n>2$
\cite{AzPePB:96a,AzPePB:96b}, termed {\it generalized Poisson
structure} (GPS), is naturally defined for brackets with an even
number of entries and parallels the properties of higher order
generalized Lie algebras, GLA, of Sec.~\ref{sec:GLA}. Thus, its
characteristic identity is a higher order Poisson bracket
version of the GJI satisfied by these higher order Lie algebras.
The second one, but earlier in time, is the Nambu-Poisson (N-P) structure
\cite{Nambu:73, Sah-Val:92, Sa-Va:93, Tak:93} (sometimes called
$n$-Poisson, see \cite{Mar-Vi-Vin:97}); its characteristic identity is the
N-P bracket Filippov identity satisfied by the Filippov algebras. We
review below both generalizations; a comparison between both
structures, GPS and N-P, may be found in \cite{AIPB:97} and in the
table in Sec.~\ref{GPSvsN-P}. Further discussion on GP and N-P
structures and related topics may be found {\it e.g.} in
\cite{Mic-Vin:96,Mar-Vi-Vin:97,Iba.Leo.Mar.Die:97,Iba.Leo.Mar:97,Nak:98,Vai:99,Mi-Va:00}
and references therein; references on more physical aspects are
given in Sec. \ref{GPSvsN-P} below. It is worth mentioning that it
is possible to drop the full antisymmetry in the definition of the N-P
brackets, in analogy with the situation of $n$-Leibniz algebras with
respect to FAs; this leads to the notion of Nambu-Leibniz brackets
and Nambu-Leibniz (or Nambu-Loday) algebras  \cite{Grab-Mar:01}, but
these will not be discussed here.

Besides having a fully antisymmetric $n$-ary bracket satisfing the
corresponding characteristic identity, both $n>2$ extensions of the
standard Poisson structure have an additional property that has not
been required for GLAs and $n$-Lie algebras: they satisfy
Leibniz's rule\footnote{Since $[XY,Z]=X[Y,Z]+[X,Z]Y$ within the
enveloping algebra $\mathcal{U}(\fg)$, one may apply similar
factorizations for the GLA $\mathcal{G}$ multibracket resolutions in
terms of two-brackets. Thus, since the composition of elements
in the multibrackets \eqref{multic} is associative but not commutative,
these GLAs do not satisfy a property similar to b) in
eq.~\eqref{GPSfc} as it is readily seen from the four-bracket
resolution in eq.~\eqref{n=4-resol}. In contrast, the infinite-dimensional
GLAs given by the GPS in Sec.~\ref{sec:GPS} incorporate Leibniz's
rule in their definition. Similarly, the FAs defined by the Nambu
bracket do satisfy Leibniz's rule; other FAs, as Ex.~\ref{ex:FA-tr},
do not. Leibniz's rule is a strong condition: other ternary -say-
structures, such as those given in terms of the associator
in Sec.~\ref{asso-FA}, do not respect it either as already pointed
out by Nambu \cite{Nambu:73}.}. The skewsymmetry of the $n$-ary
Poisson bracket and the requirement of the Leibniz's rule are
tantamount to saying that the two corresponding $n$-ary PS
generalizations may be defined through
a skewsymmetric multivector field. {\ }

\subsection{Standard Poisson structures: a short summary}
\label{def:PS} {\ }

\begin{definition}
\label{def:PSabc} ({\it Poisson structure}) (PS) {\ }

Let $M$ be a manifold and $\mathscr{F}(M)$ the ring of smooth
functions on $M$. A Poisson bracket on $\mathscr{F}(M)$ is a
bilinear mapping $\{\cdot,\cdot\} :\mathscr{F}(M)\times
\mathscr{F}(M)\rightarrow \mathscr{F}(M)$ such that, for any three
functions $f,g,h\in \mathscr{F}(M)$, satisfies
\begin{description}
\item[a)] Skewsymmetry
\begin{equation}
\{ f,g\}=-\{ g,f\}\quad,
\end{equation}
\item[b)] Leibniz's rule,
\begin{equation}
\{ f,gh\}=g\{ f,h\}+\{ f,g\}h\quad,
\end{equation}
\item[c)] Jacobi identity
\begin{equation}
\frac{1}{2}\hbox{Alt} \{ f,\{ g, h\}\}= \{ f,\{ g, h\}\} + \{ g,\{
h, f\}\} + \{ h,\{ f, g\}\} =0
 \quad.
\label{PSji}
\end{equation}
\end{description}
\end{definition}

The JI eq.~\eqref{PSji} may be read as stating that the full
skewsymmetrization of the double PB is zero or as an expression of
the derivation property,
\begin{equation}
\label{HPBder}
\{H,\{g,h\}\}=\{\{H,g\},h\}+\{g,\{H,h\}\}\quad ,
\end{equation}
which is behind the evolution of a mechanical system with
hamiltonian $H$.

   Let  $\{ x^i\}$ be local coordinates on the manifold $M$ and $f,g\in
\mathscr{F}(M)$. Conditions a), b) and c) mean that it is possible
to characterize a specific Poisson structure by defining the PB in
terms of some specific functions $\omega^{ij}(x)$. This is done by
defining the PB by
\begin{equation}
    \{ f(x),g(x)\}=\omega^{ij}\partial_i f\partial_j g  \; ,
\end{equation}
which clearly guarantees that Leibniz's rule b) is satisfied, with
$\omega^{ij}(x)$ subject to the conditions
\begin{equation}
\label{GPSb}
\quad \omega^{ij}=-\omega^{ji} \quad,\quad
   \omega^{jk}\partial_k \omega^{lm}+\omega^{lk}\partial_k \omega^{mj}
   +\omega^{mk}\partial_k \omega^{jl}=0\quad ,
\end{equation}
which take care of a) and c). Since the JI is fully antisymmetric in
$f,g,h$, the {\it differential condition} in eq.~\eqref{GPSb} does
not contain any second order derivatives.

   A PB on $\mathscr{F}(M)$ defines a Poisson structure (PS) on $M$,
usually denoted by the pair ($M,\omega$). It is possible to define a
PS by means of a bivector field or {\it Poisson bivector}
\begin{equation}
\label{GPSc}
    \Lambda=\frac{1}{2}\omega^{jk}\partial_j\wedge\partial_k \quad ,
\end{equation}
which takes care of a) and b) above and where
$\omega^{ij}(x)$ satisfies eq.~\eqref{GPSb} so that the PB given by
\begin{equation}
\label{GPSd}
   \{ f,g\}=\Lambda(df,dg) \quad .
\end{equation}
does satisfy the JI.

  Alternatively, a two-vector $\Lambda$ defines \cite{Lich:77} a
PS {\it i.e.}, it is a Poisson bivector, if it has a vanishing
Schouten-Nijenhuis bracket\footnote{The definition and properties of
the Schouten-Nijenhuis bracket are given in  Appendix 2.}
\cite{Sch:40,Nij:55} with itself,
\begin{equation}
\label{GPSe}
[\Lambda,\Lambda]=0 \quad ,
\end{equation}
since this condition reproduces \eqref{GPSb}. Since a Poisson
structure on a manifold $M$ is defined by a Poisson bivector
$\Lambda$, it is denoted ($\Lambda,M$).
\medskip

\begin{definition} ({\it Linear Lie-Poisson structure}) {\ }
\label{Lie-Poisson}

 When the manifold $M$ is the vector space dual to that
of a finite Lie algebra $\fg$, there always exists a PS. It is
obtained by defining the fundamental Poisson bracket $\{ x_i,x_j \}$
(where $\{ x_i\}$ are coordinates on $\fg^*$). Since $\fg\sim
(\fg^*)^*$, we may think of $\fg$ as a subspace of the ring of
smooth functions $\mathscr{F}(\fg^*)$. Then, the Lie algebra
commutation relations
\begin{equation}
\label{GPSfb}
\{ x_i,x_j\}=C^k_{ij}x_k
\end{equation}
define, by assuming b) above, a mapping $\mathscr{F}(\fg^*)\times
\mathscr{F}(\fg^*) \rightarrow \mathscr{F}(\fg^*)$ associated with
the two-vector
\begin{equation}
\label{L-P-bivec}
 \Lambda=\frac{1}{2}C^k_{ij}x_k
\frac{\partial}{\partial x_i}\wedge\frac{\partial}{\partial x_i}.
\end{equation}
This bivector defines a PS since condition (\ref{GPSb}) (or
(\ref{GPSe})) is simply the JI for the structure constants of $\fg$.
The resulting PS is called a {\it Lie-Poisson} structure.
\end{definition}

\medskip
\begin{definition}
\label{compatible} (\emph{Compatible Poisson structures}) {\ }

Two Poisson bivectors $\Lambda, \Lambda'$ are called
 {\it compatible} if the SNB among themselves is zero,
\begin{equation}
[\Lambda, \Lambda']=0\quad. \label{Iii}
\end{equation}
\end{definition}
\noindent
The compatibility condition is equivalent to requiring that any
linear combination $\lambda\Lambda +\mu \Lambda'$ of two
Poisson bivectors be a Poisson bivector.

\begin{lemma} ({\it Poisson cohomology}
\label{Poisson-coho}{\ }

A Poisson bivector $\Lambda$ defines a coboundary operator
$\delta_\Lambda$ acting on multivectors $A$ by $\delta_{\Lambda}: A
\mapsto [\Lambda\,,\,A]$, where the bracket is the SNB. The
resulting cohomology is called Poisson cohomology \cite{Lich:77}
(see also \cite{Koszul:85}).
\begin{proof}
It is sufficent to notice that the operator $\delta_{\Lambda}$ is
nilpotent since, on account of eq.~\eqref{GPSe} and eq.~\eqref{IIvi}
in the Appendix 2, $\delta_{\Lambda}^2 A =
[\Lambda\,,\,[\Lambda\,,\,A]]=0$ on any $A$ (see further \cite{Koszul:85}).
\end{proof}
\end{lemma}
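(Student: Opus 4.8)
The plan is to establish the single fact that turns $\delta_\Lambda$ into a coboundary operator, namely its nilpotency $\delta_\Lambda^2=0$; everything else is immediate from the properties of the Schouten-Nijenhuis bracket (SNB) recalled in Appendix 2. First I would observe that for the bivector $\Lambda$ and a $p$-vector $A$ the SNB $[\Lambda,A]$ is a $(p+1)$-vector, so $\delta_\Lambda$ raises the multivector degree by one; hence the graded space of multivector fields on $M$ together with $\delta_\Lambda$ has the shape of a cochain complex, and it remains only to verify the nilpotency condition analogous to $s^2=0$ in the Lie algebra case.

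The heart of the argument is the computation $\delta_\Lambda^2 A = [\Lambda,[\Lambda,A]]$, and the key step is to rewrite this nested bracket by means of the graded Jacobi identity for the SNB, eq.~\eqref{IIvi}. Since $\Lambda$ is a bivector it carries odd shifted degree (degree $2-1=1$) in the graded Lie algebra of multivectors, and for an odd element the graded Jacobi identity collapses to $[\Lambda,[\Lambda,A]]=\tfrac12[[\Lambda,\Lambda],A]$. I would spell this out by setting $a=b=\Lambda$, $c=A$ in the graded derivation form $[a,[b,c]]=[[a,b],c]+(-1)^{|a||b|}[b,[a,c]]$: the sign $(-1)^{|a||b|}=-1$ carries the second term to the left-hand side and produces the factor $\tfrac12$.

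Finally I would invoke the hypothesis that $\Lambda$ is a Poisson bivector, i.e. $[\Lambda,\Lambda]=0$ by eq.~\eqref{GPSe}, to conclude that $\delta_\Lambda^2 A = \tfrac12[[\Lambda,\Lambda],A]=0$ for every multivector $A$. Thus $\delta_\Lambda^2\equiv 0$, the graded space of multivector fields on $M$ equipped with $\delta_\Lambda$ is a genuine cochain complex, and its cohomology — the Poisson cohomology — is well defined.

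The step I expect to require the most care is the sign bookkeeping in applying eq.~\eqref{IIvi}: one must use the \emph{shifted} (not the ordinary) degree of a multivector in the graded Jacobi identity, and check that it is precisely the odd parity of $\Lambda$ that reduces the three-term graded Jacobi identity to the clean relation $[\Lambda,[\Lambda,A]]=\tfrac12[[\Lambda,\Lambda],A]$. Once the conventions of Appendix 2 are fixed this is routine, and no genuine obstruction arises beyond matching those conventions.
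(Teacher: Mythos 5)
Your proof is correct and takes essentially the same route as the paper's one-line argument: nilpotency of $\delta_\Lambda$ via the graded Jacobi identity for the SNB applied with two entries equal to $\Lambda$, reduced by the Poisson condition $[\Lambda,\Lambda]=0$ of eq.~\eqref{GPSe}. The only caveat is conventional: the paper's eq.~\eqref{IIvi} is written with the plain orders $p,q,r$ (yielding $[\Lambda,[\Lambda,A]]=-\tfrac12[[\Lambda,\Lambda],A]$ for a bivector), whereas your derivation-form identity with shifted degrees matches the primed bracket of eq.~\eqref{nuevosnb} and gives $+\tfrac12$; the sign is immaterial since $[\Lambda,\Lambda]=0$ in either convention.
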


To conclude this section, we would like to recall that the
Lie (Sec.~\ref{sec:Lie-review}), Loday/Leibniz (Sec.~\ref{sec:Loday})
and Poisson brackets are just important examples of a larger variety
of brackets with two entries. For an analysis of various related
`two'-structures see {\it e.g.} \cite{Kos:96} and references therein.

We now move to discuss two specific generalizations of the PS, the
{\it generalized Poisson structures} \cite{AzPePB:96a,AzPePB:96b}
and the {\it Nambu-Poisson structures} \cite{Tak:93}.

\subsection{Generalized Poisson structures}
\label{sec:GPS} {\ }

Since eq.~\eqref{GPSe} for the ordinary PS is simply the JI for a
Lie algebra $\fg$, it is natural to introduce
\cite{AzPePB:96a,AzPePB:96b} in the even case a generalization of
the PS by means of the following

\begin{definition} ({\it Generalized Poisson Structure} (GPS))
\label{def:GPS}{\ }

Let $n$ be even. A GPS is defined \cite{AzPePB:96a,AzPePB:96b} by an $n$-linear mapping $\{\;,\dots,\;\} :\mathscr{F}(M)\times
\mathop{\cdots}\limits^n \times\mathscr{F}(M) \rightarrow
\mathscr{F}(M)$, the {\it generalized Poisson bracket} (GPS) of
functions on the manifold $M$, satisfying the properties
\begin{equation}
\label{GPSfc}
\begin{array}{l}
\mbox{a)}\ \{ f_1,\dots,f_i,\dots,f_j,\dots, f_n\}=-\{
f_1,\dots,f_j,\dots,f_i, \dots, f_n\}\quad
\mbox{(skewsymmetry)}\; ,
\\
\mbox{b)}\ \{f_1,\dots,f_{n-1},gh\}=g\{f_1,\dots,f_{n-1},h\}+
\{f_1,\dots,f_{n-1},g\}h\quad \mbox{(Leibniz's rule)}\; .
\end{array}
\end{equation}
plus the GJI,
\begin{equation}
\label{GJIforGPB}
\begin{aligned}
  \; \hbox{c)} \; \hbox{Alt}\{ f_1,\dots , f_{2s-1}& ,
\{ f_{2s},\dots,f_{4s-1}\} \}\equiv\\
  & \sum_{\sigma\in
S_{4s-1}}(-1)^{\pi(\sigma)}\{ f_{\sigma(1)},\dots,
    f_{\sigma(2s-1)},\{
f_{\sigma(2s)},\dots,f_{\sigma(4s-1)}\} \}
     =0 \quad,
\end{aligned}
\end{equation}
which obviously corresponds to eq.~\eqref{GJI}. In fact, the
GPS are an example of infinite-dimensional higher order
algebras.
\end{definition}

The geometrical nature of the above definition is exhibited
by the following

\begin{lemma}({\it GPS multivectors} or {\it GPS tensors}){\ }

An $n=2s$ even multivector $\Lambda_{(2s)}$,
\begin{equation}
 \label{GPSf}
     \Lambda_{(2s)}=\frac{1}{(2s)!}\omega_{i_1\dots i_{2s}}\partial^{i_1}\wedge\dots\wedge
   \partial^{i_{2s}} \; ,
\end{equation}
defines a GPS $\;(\Lambda_{(2s)},M)$ {\it i.e.}, is a GPS tensor,
iff it has zero SN bracket with itself,
\begin{equation}
\label{GPSk}
[\Lambda_{(2s)},\Lambda_{(2s)}]=0    \quad ,
\end{equation}
where $[\ ,\ ]$ again denotes the SNB. The generalized Poisson
bracket (GPB) is then given by \cite{AzPePB:96a,AzPePB:96b}
\begin{equation}
\label{GPSnbra}
\Lambda_{(2s)}(df_1,\dots,df_{2s})=\{f_1,\dots,f_{2s}\}=
\omega_{i_1\dots,i_{2s}}\partial^{i_1}f_1\dots \partial^{i_{2s}}f_{2s}\quad.
\end{equation}

\begin{proof}
First we notice that the Schouten-Nijenhuis bracket (Appendix 2)
of an odd $n$-multivector with itself $[\Lambda_{(n)},\Lambda_{(n)}]$
vanishes identically and hence the above condition would be empty for $n$ odd.

The fact that $\Lambda_{(2s)}$ is a multivector field automatically guarantees the
skewsymmetry of the GPB and the Leibniz rule: the tensorial character
of $\Lambda$ implies that it is linear on functions and then
$\Lambda_{(2s)}(d(fg),df_2,\dots,df_{2s})=g\Lambda_{(2s)}(df,df_2,\dots,df_{2s})+
f\Lambda_{(2s)}(dg,df_2,\dots,df_{2s})$. Thus, to see that $\Lambda_{(2s)}$ defines
a GPS is sufficient to check that the GJI is satisfied.
Written in terms of the coordinate functions of $\Lambda_{(2s)}$, eq.~(\ref{GPSk})
gives the {\it differential condition} for the coordinates of $\Lambda_{(2s)}$,
\begin{equation}
\label{GPSl}
\epsilon^{j_1\dots j_{4s-1}}_{i_1\dots i_{4s-1}}
\omega_{j_1\dots j_{2s-1}\sigma}\partial^\sigma \omega_{j_{2s}\dots j_{4s-1}}=0 \quad \hbox{or}
\quad \omega_{\sigma [j_1\dots j_{2s-1}}\partial^\sigma \omega_{j_{2s}\dots j_{4s-1}]}=0 \quad ,
\end{equation}
(clearly, for $s$=1 we are left with $\omega_{\sigma
[j_1}\partial^\sigma\omega_{j_2j_3]}=0$, eq.~\eqref{GPSb}).

Now, the GJI eq.~\eqref{GJIforGPB} implies, using eq.~\eqref{GPSnbra},
\begin{eqnarray}
 \label{GPSn}
& &\epsilon^{j_1\dots j_{4s-1}}_{i_1\dots i_{4s-1}} \{
f_{j_1},\dots,f_{j_{2s-1}},\omega_{l_{2s}\dots l_{4s-1}}
\partial^{l_{2s}} f_{j_{2s}}\dots \partial^{l_{4s-1}}f_{j_{4s-1}}\}
\nonumber\\
& &\quad\quad\quad = \epsilon^{j_1\dots j_{4s-1}}_{i_1\dots
i_{4s-1}} \omega_{l_1\dots l_{2s-1}\sigma}
\partial^{l_1}f_{j_1}\dots\partial^{l_{2s-1}}f_{j_{2s-1}}
(\partial^\sigma \omega_{l_{2s}\dots
l_{4s-1}}\partial^{l_{2s}}f_{j_{2s}}\dots
\partial^{l_{4s-1}}f_{j_{4s-1}}
\nonumber\\ & &\quad\quad\quad + 2s\,\omega_{l_{2s}\dots
l_{4s-1}}\partial^\sigma\partial^{l_{2s}}f_{j_{2s}}
\partial^{l_{2s+1}}f_{j_{2s+1}}\dots\partial^{l_{4s-1}}f_{j_{4s-1}})=0\quad ,
\end{eqnarray}
where the last summand groups $2s$ terms terms that become equal
after suitable index relabelling. The second term in the $r.h.s$
vanishes because the part multiplying
$\partial^\sigma\partial^{l_{2s}}f_{j_{2s}}$ is antisymmetric with
respect to the interchange $\sigma\leftrightarrow l_{2s}$. Thus,
the second order derivatives disappear automatically and we are left with
(\ref{GPSl}) because the functions $f_{j_1},\dots ,f_{j_{4s-1}}$
are arbitrary.
\end{proof}
\end{lemma}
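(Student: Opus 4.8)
The plan is to separate the three defining properties of a GPS and to observe that only the generalized Jacobi identity (GJI) carries analytic content, the skewsymmetry and Leibniz's rule being automatic consequences of $\Lambda_{(2s)}$ being a skewsymmetric multivector field. First I would note that, since $\Lambda_{(2s)}$ is fully antisymmetric in its $2s$ contravariant indices and the bracket is defined by $\{f_1,\dots,f_{2s}\}=\Lambda_{(2s)}(df_1,\dots,df_{2s})$, property (a) is immediate: exchanging two functions $f_i\leftrightarrow f_j$ amounts to exchanging the corresponding pair of contracted indices of $\omega_{i_1\dots i_{2s}}$, which flips the sign. Property (b) follows because $\Lambda_{(2s)}$ is a tensor, hence $\mathscr{F}(M)$-linear on the one-forms fed into its slots; applying $d(gh)=g\,dh+h\,dg$ in the last argument splits the bracket into the two Leibniz terms. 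Thus the whole content of the ``iff'' reduces to showing that the GJI \eqref{GJIforGPB} holds precisely when $[\Lambda_{(2s)},\Lambda_{(2s)}]=0$.

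Next I would translate the two sides into coordinates. On the one hand, I would use the properties of the Schouten--Nijenhuis bracket collected in Appendix 2 to compute $[\Lambda_{(2s)},\Lambda_{(2s)}]$ explicitly; as for an ordinary bivector, the SNB of a multivector with itself is first order in the derivatives of its components, so that $[\Lambda_{(2s)},\Lambda_{(2s)}]=0$ is equivalent to the differential condition $\omega_{\sigma [j_1\dots j_{2s-1}}\partial^\sigma \omega_{j_{2s}\dots j_{4s-1}]}=0$ of \eqref{GPSl}, which for $s=1$ collapses to the familiar Poisson condition \eqref{GPSb}. On the other hand, I would insert the coordinate form of the GPB \eqref{GPSnbra} into the nested bracket $\mathrm{Alt}\{f_1,\dots,f_{2s-1},\{f_{2s},\dots,f_{4s-1}\}\}$ appearing in the GJI and expand by the product rule.

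The crucial computational step, and the main obstacle, is to show that the second-order derivatives generated by the outer bracket acting on the inner one cancel. When the derivation $\partial^\sigma$ coming from the outer factor $\omega_{l_1\dots l_{2s-1}\sigma}$ hits one of the differentiated arguments $\partial^{l_{2s}}f_{j_{2s}}$ of the inner bracket, it produces a term proportional to $\partial^\sigma\partial^{l_{2s}}f_{j_{2s}}$ whose coefficient is antisymmetric under $\sigma\leftrightarrow l_{2s}$; symmetry of the mixed second derivative then kills it. The bookkeeping here is the delicate part: one must keep track of the total antisymmetrization enforced by $\epsilon^{j_1\dots j_{4s-1}}_{i_1\dots i_{4s-1}}$ over all $4s-1$ indices and verify that the $2s$ terms in which $\partial^\sigma$ lands on a differentiated argument regroup into equal contributions that each vanish by this pairwise antisymmetry, so that no second-order derivatives survive.

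Once the second-derivative terms are removed, the GJI reduces to exactly the first-order differential condition \eqref{GPSl}, i.e. to $[\Lambda_{(2s)},\Lambda_{(2s)}]=0$, because the functions $f_{j_1},\dots,f_{j_{4s-1}}$ are arbitrary and their first differentials can be taken linearly independent at a point. This closes both directions of the equivalence simultaneously. I would also record the observation that for odd $n$ the SNB of $\Lambda_{(n)}$ with itself vanishes identically, so the condition is empty --- consistent with the fact that the antisymmetric $(4s-1)$-term form of the GJI, and hence the GPS notion itself, arises only for even $n=2s$.
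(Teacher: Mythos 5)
Your proposal is correct and follows essentially the same route as the paper's proof: skewsymmetry and Leibniz's rule are read off from the tensorial (multivector) character of $\Lambda_{(2s)}$, the SNB condition is identified with the first-order differential condition \eqref{GPSl}, and the equivalence with the GJI is established by expanding the nested bracket and killing the second-derivative terms via the $\sigma\leftrightarrow l_{2s}$ antisymmetry of their coefficient, with the $2s$ such terms regrouping as in \eqref{GPSn} and the conclusion following from the arbitrariness of the $f_{j_1},\dots,f_{j_{4s-1}}$. Your closing remark on the identical vanishing of the SNB of an odd multivector with itself likewise matches the paper's opening observation.
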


\begin{lemma}
({\it Generalized Poisson cohomology}\cite{AzPePB:96b}){\ }

A a Poisson $2s$-vector $\Lambda_{(2s)}$ defines a generalized Poisson
cohomology, with coboundary operator acting on multivectors
$\delta_{\Lambda_{(2s)}}:\wedge^q(M)\rightarrow
\wedge^{q+(2s-1)}(M)\,$ by $\;\delta_{\Lambda_{(2s)}}:A \mapsto
[\Lambda_{(2s)}\,,\,A]\,$.
\begin{proof}
It suffices to extend Lemma ~\ref{Poisson-coho} to the GPS.
\end{proof}
\end{lemma}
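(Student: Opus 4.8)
The plan is to prove the statement in two moves: first check that $\delta_{\Lambda_{(2s)}}$ has the announced degree, and then establish its nilpotency, which is what is needed for it to be a genuine coboundary operator defining a cohomology. For the degree I would simply recall from Appendix 2 that the Schouten--Nijenhuis bracket of a $p$-vector with a $q$-vector is a $(p+q-1)$-vector; taking $p=2s$ shows at once that $\delta_{\Lambda_{(2s)}}:A\mapsto[\Lambda_{(2s)},A]$ raises the multivector degree by $2s-1$, i.e.\ $\delta_{\Lambda_{(2s)}}:\wedge^q(M)\to\wedge^{q+(2s-1)}(M)$, exactly as claimed.

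The substance is the nilpotency $\delta_{\Lambda_{(2s)}}^2=0$, and here I would follow the proof of Lemma~\ref{Poisson-coho} verbatim, replacing the Poisson bivector by $\Lambda_{(2s)}$. For any multivector $A$ one has $\delta_{\Lambda_{(2s)}}^2 A = [\Lambda_{(2s)},[\Lambda_{(2s)},A]]$, and the key step is to reduce this double bracket to the self-bracket of $\Lambda_{(2s)}$. For the SNB the natural grading assigns to a $p$-vector the degree $p-1$, so $\Lambda_{(2s)}$ carries the \emph{odd} degree $2s-1$. Applying the graded Jacobi identity for the SNB (eq.~\eqref{IIvi} of Appendix 2) with the three slots filled by $\Lambda_{(2s)},\Lambda_{(2s)},A$, and using the graded antisymmetry appropriate to an odd-degree element, the two resulting copies of $[\Lambda_{(2s)},[\Lambda_{(2s)},A]]$ add rather than cancel, giving
\begin{equation*}
[\Lambda_{(2s)},[\Lambda_{(2s)},A]]=\tfrac12[[\Lambda_{(2s)},\Lambda_{(2s)}],A]\quad.
\end{equation*}
Invoking the defining GPS condition \eqref{GPSk}, namely $[\Lambda_{(2s)},\Lambda_{(2s)}]=0$, the right-hand side vanishes and hence $\delta_{\Lambda_{(2s)}}^2\equiv 0$. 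This is precisely the mechanism of Lemma~\ref{Poisson-coho}, where the bivector satisfied $[\Lambda,\Lambda]=0$ by \eqref{GPSe}; the only new ingredient is that for $n=2s$ the relation $[\Lambda_{(2s)},\Lambda_{(2s)}]=0$ is a nontrivial imposed constraint rather than an automatic identity.

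I expect the single delicate point to be the sign bookkeeping in this reduction. Because $\Lambda_{(2s)}$ has odd SNB-degree $2s-1$, its self-bracket does \emph{not} vanish identically (in contrast with the odd-multivector case noted in the proof of the GPS-tensor lemma above), so the argument rests genuinely on condition \eqref{GPSk} together with the correctly-signed graded Jacobi identity. The hard part is thus to verify, with the grading $|A|=a-1$, that the graded Jacobi identity produces a term $(-1)\,[\Lambda_{(2s)},[\Lambda_{(2s)},A]]$ alongside a $[[\Lambda_{(2s)},\Lambda_{(2s)}],A]$ term, so that the two double-bracket contributions combine to the factor $\tfrac12$ displayed above instead of cancelling. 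Once the Appendix-2 conventions for the SNB degree and its graded antisymmetry are fixed, this is a short computation and the lemma follows as a direct extension of Lemma~\ref{Poisson-coho}.
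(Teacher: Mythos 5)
Your proof is correct and takes essentially the same route as the paper: the paper's proof just says to extend Lemma~\ref{Poisson-coho}, whose mechanism is exactly yours --- the graded Jacobi identity \eqref{IIvi} plus the GPS condition \eqref{GPSk} give $[\Lambda_{(2s)},[\Lambda_{(2s)},A]]=0$, as the paper itself notes immediately after \eqref{IIvi} for a $\Lambda$ of even order. (Only your overall sign in $[\Lambda_{(2s)},[\Lambda_{(2s)},A]]=\tfrac12[[\Lambda_{(2s)},\Lambda_{(2s)}],A]$ is convention-dependent: with the paper's conventions \eqref{IIv}, \eqref{IIvi} the factor comes out $-\tfrac12$, which is immaterial since the self-bracket vanishes.)
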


\begin{remark}({\it Odd GPS})
\label{GPSodd}
{\ }

  GP structures, like GLAs, are naturally defined for $n$ even. Nevertheless, it
is also possible to define a GPS in the {\it odd} case
\cite{AIPB:97}. For a GPB with an odd number $n$ of arguments, the
second summand in the $r.h.s.$ of (\ref{GPSn}) does not vanish,
giving rise to another, algebraic condition that has to be satisfied
by the coordinates of the GPS multivector. An {\it algebraic
condition} will always be present for the N-P structures
\cite{Tak:93} below, see Lemma~\ref{N-P-cond-d-a}).

 The above constructions and GPS may also be extended to the
$\mathbb{Z}_2$-graded (`supersymmetric') case \cite{AzIzPePB:96},
including the definition of the {\it graded GPS tensors}, that may
be introduced through a suitable graded SN bracket
\cite{AzIzPePB:96} (see \cite{Sor-Sor:08} for another
construction).

The generalized Poisson cohomology and homology was further
considered in \cite{AIPB:97} (see also
\cite{Iba.Leo.Mar:97}).
\end{remark}

\subsection{Higher order or generalized linear Poisson structures}
\label{sec:GPS-linear}{\ }

It is easy to construct examples \cite{AzPePB:96b} of GPS (infinitely
many, in fact) in the linear case by extending the construction in
Def.~\ref{Lie-Poisson}. They are obtained by applying the earlier
construction of GLAs to the GPS. Let $\mathcal{G}$ be a simple Lie
algebra of rank $l$. We know from Table 1 and Theorem
\ref{th:HOsimple} that associated with $\mathcal{G}$ there are
$(l-1)$ simple higher order Lie algebras. Their structure constants
define a GPB $\{\cdot,\mathop{\cdots}\limits^{2m_l-2},\cdot\}:
\mathcal{G}^*\times \mathop{\cdots}\limits^{2m_l-2}
\times\mathcal{G}^*\rightarrow \mathcal{G}^*$ by
\begin{equation}
       \{ x_{i_1},\dots, x_{i_{2m_l-2}}\}={\Omega_{i_1\dots
      i_{2m_l-2}}}^\sigma_\cdot x_\sigma\quad ,    \label{GPSo}
\end{equation}
where $\Omega$ is the $(2m_l-1)$-cocycle for the Lie algebra
cohomology of $\mathcal{G}$ associated with an invariant polynomial
of rank $m_l$.

If one now computes the GJI (\ref{GPSl}) for
$\omega_{i_1\dots i_{2m_l-2}}= {\Omega_{i_1\dots
i_{2m_l-2}}}^\sigma_\cdot x_\sigma$, or, alternatively, the SNB
$[\Lambda,\Lambda]$ for the ($2m-1$)-vector
\begin{equation}
\label{lambdalin}
 \Lambda=\frac{1}{(2m-2)!}{\Omega_{i_1\dots
i_{2m-2}}}^\sigma_\cdot x_\sigma
\partial^{i_1}\wedge\dots \wedge \partial^{i_{2m-2}}\quad,
\end{equation}
one finds that $[\Lambda,\Lambda]=0$ due to the GJI for the higher
order structure constants $C=\Omega$ given in \eqref{multig} and
$\Lambda$ above is called a {\it linear GPS tensor}. This means that
all these simple higher order Lie algebras associated with a simple $\fg$ define
linear GP structures (in fact, and more generally, it is also true
that the SNB of two ($2m$-2)- and ($2m'$-2)-multivectors $\Lambda$, $\Lambda'$
constructed from two ($2m$-1)-, ($2m'$-1)-cocycles $\Omega,\Omega'$
as in eq.~\eqref{lambdalin} is zero, $[\Lambda,\Lambda']=0$:
\cite{AzPePB:96b}, Th.~8.3). Conversely, given a linear
GPS with fundamental GPB (\ref{GPSo}), there is a higher order
Lie algebra $\mathcal{G}$ associated with it.
\medskip

It is clear from the discussion in \cite{AzPePB:96a,AzPePB:96b,AzBu:96},
however, that not only the basic, primitive invariants explicitly considered
in Sec.~\ref{sec:sec8} (and that led to the simple GLAs classification of
Th.~\ref{th:HOsimple}) can be used to generate linear GPS (and GLAs). In
fact, it is easy to check (using more general odd antisymmetric invariants,
the constructions in \cite{AzPePB:96a,AzPePB:96b,AzBu:96}
and the properties in eqs.~\eqref{IIii},\eqref{IIiia} of the SN bracket)
that these odd polynomials also determine GLAs and linear GPS.
This has also been pointed out independently in \cite{Pinc-Ushi:05}
using a different approach based on general super-Poisson
methods.
\medskip

We conclude this section by noting that the Jacobi structures
considered by Lichnerowicz, which are defined using Jacobi brackets
(a generalization of the standard Poisson bracket in which Leibniz's
rule is replaced by a weaker condition), may be similarly extended to the
$n$-ary case leading to {\it generalized Jacobi structures}
\cite{Pe-Bu:97}.

Let us turn now to the generalization of the PS
that has a FA structure.

\subsection{Nambu-Poisson (N-P) structures}
\label{sec:NP}{\ }

As we saw in Sec.~\ref{sec:Nambu-FA}, Nambu considered
\cite{Nambu:73} already in 1973 the possibility of extending the
Poisson brackets of standard Hamiltonian mechanics to brackets of
three functions defined by the Nambu Jacobian\footnote{When the
quark model of Gell-Mann and Zweig was proposed (1964), the Fermi
statistics for quarks did not mix well with the expected symmetry
properties of the ground state wavefunction of the $\Delta^{++}$
particle; alternatives were searched for at the time, as
Greenberg paraquarks (1964) and Han-Nambu quarks (1965). As is well
known, the final answer -QCD- was proposed by Gell-Mann, Fritzsch
and Leutwyler in 1971-73; in 1973, Nambu and Han were also analyzing
the various existing proposals \cite{Nam-Han:73}. It seems that
exploring new dynamical alternatives in this context was one of the
motivations underlying Nambu mechanics \cite{Nambu:73}.}. Clearly,
the Nambu bracket may be generalized further to a Nambu-Poisson
(N-P) one allowing for an arbitrary number of entries.

 The first two properties of the N-P bracket, skewsymmetry and
Leibniz's rule, are shared with the standard PS and the GPS, and are
again automatically guaranteed if the new bracket is defined in
local coordinates $\{ x_i\}$ on $M$ in terms of an $n$-vector
\begin{equation}
 \label{GPSfbis}
\Lambda_{(n)}=\frac{1}{n!}\eta_{i_1\dots i_n}(x)\,\partial^{i_1}\wedge\dots\wedge
\partial^{i_n}       \quad ,
\end{equation}
so that, as in (\ref{GPSnbra}), the N-P bracket is given by
\begin{equation}
 \label{GPSg}
 \{f_1,\dots,f_n\}=\Lambda_{(n)}(df_1,\dots,df_n)\quad .
\end{equation}

The key difference is the condition that generalizes the JI and that
completes the definition of the N-P structures; this restricts the
allowed $\eta_{i_1\dots i_n}(x)$ above so that $\Lambda_{(n)}$ is a
N-P tensor (Lemma \ref{N-P-cond-d-a} below). The identity behind the
$n=3$ generalized mechanics, which Nambu did not write in his paper
\cite{Nambu:73}, is the `five-point identity' of Sahoo and
Valsakumar \cite{Sah-Val:92, Sa-Va:93}. They introduced this
relation (which is the FI of the Nambu FA $\fN$) as a necessary
consistency requirement for the time evolution of Nambu mechanics.
In the general case of N-P brackets with $n$ entries the
corresponding Filippov `($2n-1$)-point identity' was written by
Takhtajan \cite{Tak:93}, who studied it in detail and called it the
{\it fundamental identity}. This is simply the FI of the
infinite-dimensional $n$-Lie algebra $\fN$ in
Sec.~\ref{sec:Nambu-FA}, already anticipated by Filippov for the
$n$-bracket of functions defined by the Jacobian \cite{Filippov,
Fil:98}. Thus, a general N-P structure is given by the following

\begin{definition}({\it Nambu-Poisson structures})\cite{Tak:93}
{\ }

A N-P structure is defined  by a N-P bracket with $n$-entries that
satisfies the conditions of eq.~\eqref{GPSfc} plus the FI
\begin{eqnarray}
\label{FIP}
\hbox{c)}&  &  \{ f_1,\dots,f_{n-1},\{ g_1,\dots,g_n\}\}=\{\{
f_1,\dots,f_{n-1},g_1\},
     g_2,\dots,g_n\}\nonumber\\
& &\quad +\{g_1,\{ f_1,\dots,f_{n-1},g_2\},g_3,\dots,g_n\}+\dots+
   \{ g_1\dots,g_{n-1},\{ f_1,\dots,f_{n-1},g_n\}\} \quad ,
\label{GPSh}
\end{eqnarray}
for the Filippov algebra $\fN$ defined by the N-P $n$-bracket. To
each N-P bracket satisfying the conditions of the definition
corresponds a N-P $n$-tensor such that eq.~\eqref{GPSg} is
satisfied. The manifold $M$ is called a {\it Nambu-Poisson manifold}
($\Lambda_{(n)}, M$).
\end{definition}

N-P structures \cite{Tak:93} have been discussed in many papers; see
 \cite{Gau:96} and {\it e.g.} \cite{Gau:96b, Mic-Vin:96,Da-Tak:97,AIPB:97,
Mar-Vi-Vin:97,Iba.Leo.Mar.Die:97,Iba.Leo.Mar:97,Nak:98,Vai:99,Mi-Va:00,Gra-Mar:00,Rot:05b}
for further information.
\medskip

  The N-P $n$-bracket, eq.~\eqref{GPSg}, establishes a
linear correspondence between ($n-1$)-forms and vector fields,
\begin{equation*}
df_1 \wedge\dots\wedge df_{n-1} \mapsto X_{f_1,\dots,f_{n-1}} \quad ,
\end{equation*}
defined by
\begin{equation*}
dg(X_{f_1,\dots,f_{n-1}})=X_{f_1,\dots,f_{n-1}}\,.\,g
=\Lambda_{(n)}(df_1,\dots,df_{n-1},dg)=\{f_1,\dots,f_{n-1},g\}\quad.
\end{equation*}
The hamiltonian vector field $X_{f_1,\dots,f_{n-1}}$
(see Ex.~\ref{NPcomp}) is indeed a vector field as a consequence of
the tensorial character of $\Lambda_{(n)}$. Its physical
significance is clear: in N-P mechanics, the time evolution of
a dynamical magnitude $F$ is determined through ($n-1$)
`Hamiltonians' $H_1,\dots, H_{n-1}$ by
\begin{equation}
\label{GPSi}
{\dot F}:= X_{H_1,\dots,H_{n-1}}\,.\,F=
\{ H_1,\dots, H_{n-1},F\}\quad  \forall F \quad.
\end{equation}
This expression is the evident dynamical counterpart
(eq.~\eqref{n-ad}) of the $ad_\mathscr{X}$ derivation
of the $n$-FAs: due to the FI, $X_{H_1,\dots,H_{n-1}}$
is a derivation of the N-P bracket: the
fundamental objects of $\fN$ define inner derivations
$ad_{H_1\dots H_{n-1}}$ of the N-P algebra $\fN$ since
the FI implies \cite{Tak:93}
\begin{equation}
\label{moreFI}
   \frac{d}{dt}\{ f_1,\dots,f_n\}=\{ {\dot f}_1,\dots,f_n\}+
   \{f, {\dot f}_2,\dots,f_n\} \dots +
  \{ f_1,\dots,{\dot f}_n\}\quad .
\end{equation}
Thus, the FI guarantees that the bracket of any $n$ constants
of the motion is itself a constant of the motion.

It follows that a N-P structure is defined by a multivector $\Lambda_{(n)}$ such that
the hamiltonian vector fields are a derivation of the N-P
$n$-bracket algebra. It is also easy to see that such a N-P tensor is
invariant under the action of a hamiltonian vector field,
$L_X \Lambda_{(n)}=0$. This follows from the FI since
\begin{equation*}
\begin{aligned}
X_{f_1,\dots,f_{n-1}}\,.\,\{g_1,\dots,g_n\}
=&L_{X_{f_1,\dots,f_{n-1}}}(\Lambda_{(n)}(dg_1,\dots,dg_n))\\
=&(L_{X_{f_1,\dots,f_{n-1}}}\Lambda_{(n)})(dg_1,\dots,dg_n) +
\sum_{i=1}^{n}\Lambda_{(n)}(dg_1,\dots,L_{X_{f_1,\dots,f_{n-1}}}dg_i,\dots,dg_n)\\
=&(L_{X_{f_1,\dots,f_{n-1}}}\Lambda_{(n)})(dg_1,\dots,dg_n)+
\sum_{i=1}^n \{g_1,\dots,\{f_,\dots,f_{n-1},g_i\},\dots,g_n\} \quad ,
\end{aligned}
\end{equation*}
using in the second line that the Lie derivative commutes with
contractions and in the third that $L_{X_{f_1,\dots,f_{n-1}}}\,dg=d
L_{X_{f_1,\dots,f_{n-1}}}\,g= d\{f_1,\dots,f_{n-1},g\}$. Thus, we
have the following

\begin{proposition}
\label{Ham-inv} {\ }

 The hamiltonian vector fields determine
infinitesimal automorphisms of the N-P structure
($\Lambda_{(n)},M$).
\end{proposition}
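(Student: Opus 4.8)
The plan is to prove directly that the Lie derivative of the Nambu--Poisson tensor along a hamiltonian vector field vanishes, $L_{X_{f_1,\dots,f_{n-1}}}\Lambda_{(n)}=0$, which is exactly the assertion that $X_{f_1,\dots,f_{n-1}}$ generates an infinitesimal automorphism of $(\Lambda_{(n)},M)$. First I would abbreviate $X\equiv X_{f_1,\dots,f_{n-1}}$ and record the two facts I will need about it: by its defining relation $X\cdot g=\{f_1,\dots,f_{n-1},g\}$, and since the exterior derivative commutes with the Lie derivative, $L_X\,dg=d(L_X g)=d\{f_1,\dots,f_{n-1},g\}$ for every $g\in\mathscr{F}(M)$.

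Next I would apply $L_X$ to the defining identity \eqref{GPSg}, $\{g_1,\dots,g_n\}=\Lambda_{(n)}(dg_1,\dots,dg_n)$. Because the Lie derivative commutes with contractions and obeys Leibniz's rule, this yields
\[
X\cdot\{g_1,\dots,g_n\}=(L_X\Lambda_{(n)})(dg_1,\dots,dg_n)+\sum_{i=1}^n\Lambda_{(n)}(dg_1,\dots,L_X\,dg_i,\dots,dg_n).
\]
Using $L_X\,dg_i=d\{f_1,\dots,f_{n-1},g_i\}$ together with \eqref{GPSg} once more, the summation becomes $\sum_{i=1}^n\{g_1,\dots,\{f_1,\dots,f_{n-1},g_i\},\dots,g_n\}$, reproducing the computation displayed immediately before the proposition.

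The decisive step is then to invoke the Filippov identity \eqref{GPSh}: its right-hand side is precisely this summation, while its left-hand side is $\{f_1,\dots,f_{n-1},\{g_1,\dots,g_n\}\}=X\cdot\{g_1,\dots,g_n\}$. Substituting these two facts into the displayed equation, the $X\cdot\{g_1,\dots,g_n\}$ term and the nested-bracket summation cancel identically, leaving $(L_X\Lambda_{(n)})(dg_1,\dots,dg_n)=0$ for all $g_1,\dots,g_n\in\mathscr{F}(M)$.

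The only point requiring care --- and the nearest thing to an obstacle --- is the passage from the vanishing of this contraction on exact one-forms to the vanishing of the tensor $L_X\Lambda_{(n)}$ itself. This is settled locally: taking the $g_i$ to be coordinate functions $x^{i}$ on a chart, the differentials $dx^{i}$ span each cotangent space, so a skewsymmetric $n$-vector field is completely determined by its values $\Lambda_{(n)}(dx^{i_1},\dots,dx^{i_n})$. Since all such components of $L_X\Lambda_{(n)}$ vanish, the multivector field $L_X\Lambda_{(n)}$ is zero on every chart and hence globally, giving $L_X\Lambda_{(n)}=0$ and establishing the proposition.
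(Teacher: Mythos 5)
Your proof is correct and follows essentially the same route as the paper: applying $L_X$ to the contraction $\{g_1,\dots,g_n\}=\Lambda_{(n)}(dg_1,\dots,dg_n)$, using the Leibniz property of the Lie derivative together with $L_X\,dg = d\{f_1,\dots,f_{n-1},g\}$, and then cancelling via the Filippov identity to obtain $(L_X\Lambda_{(n)})(dg_1,\dots,dg_n)=0$. Your closing observation that coordinate differentials span the cotangent spaces, so that vanishing on exact one-forms forces $L_X\Lambda_{(n)}=0$, simply makes explicit a step the paper leaves implicit.
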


\begin{lemma}({\it Conditions that define a N-P tensor})\cite{Tak:93}
\label{N-P-cond-d-a} {\ }

Eqs.~\eqref{GPSfbis}, \eqref{GPSg} in eq.~\eqref{GPSh}
determine\footnote{There is a trivial misprint (a term obviously
missing) in eq. (5) of \cite{Tak:93}, accounted for in
\cite{Cha-Tak:95}, eq. (10).} two conditions  that the local
coordinates $\eta_{i_1\dots i_n}$ of an $n$-vector $\Lambda_{(n)}$
have to satisfy to define a N-P structure {\it i.e.}, to be a N-P
tensor. The first one is the {\it differential condition},
\begin{equation}
\label{difcond} \eta_{i_1\dots
i_{n-1}\rho}\partial^\rho\eta_{j_1\dots j_n}
-\frac{1}{(n-1)!}\epsilon^{l_1\dots l_n}_{j_1\dots j_n}
(\partial^\rho\eta_{i_1\dots i_{n-1} l_1})\eta_{\rho l_2\dots l_n}
=0 \quad.
\end{equation}
The second is the {\it algebraic condition}. It reads
\begin{equation}
\label{alcond}
\Sigma + P(\Sigma)=0\quad,
\end{equation}
where $\Sigma$ is the ($n+n$)-tensor
\begin{equation}
\label{Sigmadef}
\begin{array}{rl}
\Sigma_{i_1\dots i_n j_1\dots j_n} =&\eta_{i_1\dots
i_{n}}\eta_{j_1\dots j_{n}} -\eta_{i_1\dots i_{n-1}j_1}\eta_{i_n
j_2 \dots j_{n}} -\eta_{i_1\dots i_{n-1}j_2}\eta_{j_1 i_n j_3\dots
j_{n}}
\\[0.2cm]
-& \eta_{i_1\dots i_{n-1}j_3}\eta_{j_1 j_2 i_n j_4 \dots j_{n}}-
\dots -\eta_{i_1\dots i_{n-1}j_n}\eta_{ j_1j_2 \dots j_{n-1} i_n}
\;,
\end{array}
\end{equation}
and $P$ is the permutation operator that interchanges its $i_1$
and $j_1$ indices.
\end{lemma}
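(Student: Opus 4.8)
The plan is to substitute the coordinate form of the Nambu--Poisson bracket directly into the fundamental identity \eqref{GPSh} and to read off the constraints on $\eta$ by exploiting the arbitrariness of the test functions. Concretely, I would write the Hamiltonian vector field as $X_{f_1,\dots,f_{n-1}}=v^\rho\partial^\rho$ with $v^\rho:=\eta_{i_1\dots i_{n-1}\rho}\,\partial^{i_1}f_1\cdots\partial^{i_{n-1}}f_{n-1}$, so that the left-hand side of \eqref{GPSh} is $X_{f_1,\dots,f_{n-1}}\{g_1,\dots,g_n\}$ and the right-hand side is the action of this vector field as a would-be derivation on the bracket. Since $v^\rho\partial^\rho$ is a genuine first-order operator it is automatically a derivation of the pointwise product, so the FI reduces to the single requirement that $X_{f_1,\dots,f_{n-1}}$ be a derivation of the $n$-bracket, exactly as in Prop.~\ref{Ham-inv}.

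First I would expand both sides with Leibniz's rule. The key preliminary observation is that the terms carrying a second derivative of one of the $g$'s cancel identically: both sides produce $v^\sigma\,\eta_{j_1\dots j_n}\,(\partial^\sigma\partial^{j_k}g_k)\prod_{b\neq k}\partial^{j_b}g_b$, so these drop out without yielding any constraint. What remains is an equation in which the $f$'s enter through $v^\rho$ on the left and through $\partial^{j_k}v^\sigma$ on the right. Because the families $\{f_a\}$ and $\{g_b\}$ are independent and arbitrary, the surviving terms split into two sectors that must vanish separately: those involving only first derivatives of the $f$'s, with one derivative falling on an $\eta$, and those involving a single second derivative of some $f_l$, with a plain product $\eta\,\eta$ of coefficients.

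The first-derivative sector yields the differential condition \eqref{difcond}. After relabelling the index $\sigma$ on $\partial^\sigma g_k$ back to a $j$-index and using the arbitrariness of the $g$'s to antisymmetrize over $j_1,\dots,j_n$, the sum over $k$ collapses: the antisymmetrization is implemented by the Levi-Civita symbol $\epsilon^{l_1\dots l_n}_{j_1\dots j_n}$ and the factor $1/(n-1)!$ compensates the overcounting of the symmetric slots $l_2,\dots,l_n$, reproducing $\eta_{i_1\dots i_{n-1}\rho}\partial^\rho\eta_{j_1\dots j_n}=\tfrac{1}{(n-1)!}\epsilon^{l_1\dots l_n}_{j_1\dots j_n}(\partial^\rho\eta_{i_1\dots i_{n-1}l_1})\eta_{\rho l_2\dots l_n}$. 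The second-derivative sector gives the algebraic condition: since $\partial^{j_k}\partial^{i_l}f_l$ is symmetric in its two indices, one index sits in $\eta_{j_1\dots j_n}$ and the other in $\eta_{i_1\dots i_{n-1}\sigma}$, and the requirement is that the coefficient symmetrized over this index pair vanish; stripping the arbitrary derivatives of the remaining functions (antisymmetrizing in the $g$-indices) packages this symmetrization into the operator $P$ that transposes $i_1\leftrightarrow j_1$, so the vanishing reads $\Sigma+P(\Sigma)=0$ with $\Sigma$ as in \eqref{Sigmadef}.

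The main obstacle is the index bookkeeping of the last paragraph rather than any conceptual difficulty. In particular I expect the delicate points to be (i) showing that the sum over $k$ in the first-derivative sector assembles exactly into the single $\epsilon$-contracted term with the correct weight $1/(n-1)!$, and (ii) verifying that the second-derivative-of-$f$ terms reorganize precisely into $\Sigma+P(\Sigma)$ and not some other combination, which requires repeated use of the total antisymmetry of $\eta$ to move the extra second-derivative index into canonical position. It is worth noting that this second, algebraic condition has no analogue in the generalized Poisson case (the GPS Lemma above), where the full antisymmetrization built into the GJI makes the coefficient of $\partial^\sigma\partial^\cdot f$ antisymmetric and hence annihilates the second-derivative terms outright; for the Filippov identity that antisymmetrization is absent, which is exactly why a separate algebraic constraint survives.
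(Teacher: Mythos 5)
Your proposal is correct and takes essentially the same route as the paper (which itself only sketches Takhtajan's derivation): direct substitution of \eqref{GPSfbis}, \eqref{GPSg} into the FI \eqref{GPSh}, cancellation of the second derivatives of the $g$'s between the two sides, the first-derivative sector assembling via the $\epsilon$-symbol into the differential condition \eqref{difcond} in parallel with the GPS computation \eqref{GPSn}, and the surviving second derivatives of the $f$'s --- which, unlike the GPS case, are not killed by an antisymmetric coefficient --- forcing the algebraic condition \eqref{alcond}. Your closing observation on why the algebraic condition has no GPS analogue is precisely the paper's own remark following the lemma.
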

\noindent

  Eq. (\ref{Sigmadef}) may we rewritten as
\begin{equation*}
\Sigma_{i_1\dots i_n j_1\dots j_n} =
{1\over n!} \epsilon^{l_1\dots l_{n+1}}_{i_n j_1\dots j_n}
\eta_{i_1\dots i_{n-1} l_1}\eta_{l_2\dots l_{n+1}}\quad .
\label{ntensor}
\end{equation*}
Eq.~\eqref{alcond} follows\footnote{The presence of the algebraic
condition implies that a constant antisymmetric tensor, although it
satisfies automatically eq.~\eqref{difcond}, is not necessarily a
N-P tensor since it still has to satisfy eq~\eqref{alcond}.
Also, the direct sum of N-P tensors is not a N-P tensor since
it is not decomposable (see Lemma~\ref{le:N-P-dec} below).} from
requiring the vanishing of the second derivatives in \eqref{GPSh},
which now do not vanish automatically as for the GPS. Of course, for
a standard ($n=2$) PS eq.~\eqref{difcond} reproduces
eq.~\eqref{GPSb} and  eq.~\eqref{alcond} is absent. Thus, the FI is
much more constraining than the JI and, as a result, N-P $n\geq 3$
structures are more rigid that the standard $n=2$ Poisson ones (see
Lemma \ref{le:N-P-dec} below). Conditions \eqref{difcond} and
\eqref{alcond} play for the Nambu-Poisson structures the r\^ole of
eq.~\eqref{GPSl} for the GPS, which follows from the
geometrical requirement that the $2s$-vector that defines a GPS has
a vanishing SN bracket with itself \cite{AzPePB:96a,AzPePB:96b}.
\medskip

 The algebraic condition imposes severe restrictions on the
potential N-P tensors for $n>2$ (for $n=2$, eq.~\eqref{alcond} is
trivial). It turns out \cite{Ale.Guh:96,Gau:96,Mar-Vi-Vin:97} (see
also \cite{DFST:96, Nak:98, Vai:99}) that (contrarily to initial
expectations \cite{Tak:93} but confirming a later conjecture
\cite{Cha-Tak:95}), condition \eqref{alcond} implies that the
Nambu-Poisson $n$-vector $\Lambda_{(n)}$ in (\ref{GPSfbis}) is {\it
decomposable}. This means that $\Lambda_{(n)}$ can be written as the
exterior product of vector fields, a result that Takhtajan has
traced to follow from the Weiztenb\"ock condition (see
\cite{Wei:23}, p.~116) in the theory of invariants. Specifically,
the N-P tensors satisfy the following

\begin{lemma} ({\it Decomposability of the N-P tensors})
\label{le:N-P-dec}

If $\Lambda_{(n)}$ is a N-P tensor of order $n\geq 3$ on a manifold $M$,
there are local coordinates on $U\subset M$,
$\{x^1,\dots,x^n, x^{n+1},\dots, x^m\}$, such that for $x\in U$,
\begin{equation*}
\Lambda_{(n)} =
\frac{\partial}{\partial x^1}\wedge\frac{\partial}{\partial x^2}
\dots \wedge \frac{\partial}{\partial x^n} \quad,
\end{equation*}
and reciprocally.
\end{lemma}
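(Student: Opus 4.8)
The plan is to obtain the canonical form in four moves: a purely pointwise (algebraic) argument showing that the N-P tensor is decomposable at each point, a Frobenius argument promoting this to an integrable rank-$n$ distribution, a leaf-wise normalisation absorbing the remaining scalar factor, and finally the (easy) converse. Throughout I would fix a point $p\in M$ with $\Lambda_{(n)}(p)\neq 0$ and work on the open set $U$ on which $\Lambda_{(n)}$ does not vanish.

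First I would extract decomposability from the \emph{algebraic condition} \eqref{alcond}. Read as a quadratic system on the components $\eta_{i_1\dots i_n}(x)$ at each fixed $x$, I claim that $\Sigma+P(\Sigma)=0$ is precisely the set of Pl\"ucker (Weitzenb\"ock) relations characterising decomposable $n$-vectors: it is equivalent to the vanishing of $(\iota_\xi\Lambda_{(n)})\wedge\Lambda_{(n)}$ for every $(n-1)$-covector $\xi$, which is the classical criterion for $\Lambda_{(n)}(x)$ to be of the form $v_1\wedge\dots\wedge v_n$. Granting this, one gets on $U$ a rank-$n$ distribution $D=\mathrm{im}\,\Lambda^\sharp$, spanned pointwise by the hamiltonian vectors $X_{f_1,\dots,f_{n-1}}$ of \eqref{GPSg}. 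I expect this first step to be the main obstacle: for $n=2$ condition \eqref{alcond} is vacuous and no decomposability holds, so the argument must genuinely use $n\geq 3$, and checking that \eqref{alcond} \emph{coincides} with the Pl\"ucker relations (rather than merely implying something weaker) is the delicate linear-algebra heart of the lemma — the content traced to the Weitzenb\"ock condition in the theory of invariants.

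Second I would show that $D$ is involutive. By the fundamental identity \eqref{GPSh} the hamiltonian vector fields close under the Lie bracket, $[X_{\mathscr{X}},X_{\mathscr{Y}}]=X_{\mathscr{X}\cdot\mathscr{Y}}$ (this is Lemma~\ref{le:deriv-lemma}/Theorem~\ref{th:n-Lie->Lie} applied to the Nambu algebra $\fN$), and each such bracket again lies in $D$; since $D$ has constant rank $n$ on $U$, it is Frobenius-integrable. Choosing adapted coordinates $\{y^1,\dots,y^m\}$ with $D=\langle\partial_{y^1},\dots,\partial_{y^n}\rangle$ forces $\Lambda_{(n)}=g(y)\,\partial_{y^1}\wedge\dots\wedge\partial_{y^n}$ with $g\neq 0$. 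I would then absorb $g$ by a foliation-preserving change of the leaf coordinates alone, $x^i=x^i(y^1,\dots,y^m)$ for $i\leq n$ and $x^a=y^a$ for $a>n$: solving $\det(\partial y^j/\partial x^i)_{i,j\leq n}=g$ is the standard local flattening of a nonvanishing density (with smooth dependence on the transverse parameters), and yields $\Lambda_{(n)}=\partial_{x^1}\wedge\dots\wedge\partial_{x^n}$.

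Finally, the reciprocal is immediate: the canonical tensor $\partial_{x^1}\wedge\dots\wedge\partial_{x^n}$ produces the Jacobian bracket, whose skewsymmetry and Leibniz rule are automatic and whose FI was already verified by the Schouten-identity computation of Ex.~\ref{ex:Nam-FA}; hence it is a N-P tensor, and any tensor locally of this form satisfies \eqref{difcond}--\eqref{alcond}.
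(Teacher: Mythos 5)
The paper does not actually prove Lemma~\ref{le:N-P-dec}: being a review, it states decomposability with citations to \cite{Ale.Guh:96,Gau:96,Mar-Vi-Vin:97} (see also \cite{DFST:96,Nak:98,Vai:99}) and Takhtajan's remark tracing it to the Weitzenb\"ock condition of \cite{Wei:23}. What you have written is, in outline, a correct reconstruction of the standard argument in those references: pointwise decomposability from the algebraic condition, involutivity of $\mathrm{im}\,\Lambda^\sharp$ from the fundamental identity, Frobenius plus a leafwise normalisation of the residual density, and the easy converse. Your steps two through four are sound as executed: the closure of hamiltonian vector fields under the Lie bracket is exactly Theorem~\ref{th:n-Lie->Lie} applied to $\fN$ (each term of $\mathscr{X}\cdot\mathscr{Y}$ again has function entries, so the bracket stays in the distribution); the rank is constant ($=n$) on the open set where $\Lambda_{(n)}\neq 0$ precisely because of pointwise decomposability, which you correctly restrict to; and the prescribed-Jacobian problem $\det(\partial y^j/\partial x^i)_{i,j\leq n}=g$ is solved explicitly by integrating $1/g$ along one leaf coordinate while fixing the transverse coordinates, which preserves the adapted foliation. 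Your converse is the paper's own: the canonical tensor yields the Jacobian bracket, whose FI is the Schouten-identity computation of Ex.~\ref{ex:Nam-FA}.

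The one place where the proposal is genuinely incomplete is the step you yourself flag: the assertion that \eqref{alcond} \emph{is} the Pl\"ucker system. Strictly, the Pl\"ucker relations are $\Sigma=0$: by the rewriting of \eqref{Sigmadef} following the lemma, $\Sigma$ is (up to a factor) the full antisymmetrization $\eta_{i_1\dots i_{n-1}[i_n}\eta_{j_1\dots j_n]}$ over the last $n+1$ indices, i.e.\ the component form of $(\iota_\xi\Lambda_{(n)})\wedge\Lambda_{(n)}$ for arbitrary $(n-1)$-covectors $\xi$. The algebraic condition \eqref{alcond} is only the symmetrized combination $\Sigma+P(\Sigma)=0$, which is a priori strictly weaker; indeed for $n=2$ the combination vanishes identically for \emph{any} bivector, which is exactly why Poisson bivectors need not be decomposable. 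The entire nontrivial content of the lemma — the part actually proved in \cite{Ale.Guh:96,Gau:96,Mar-Vi-Vin:97} and connected to the Pl\"ucker conditions in \cite{Mi-Va:00} — is that for $n\geq 3$ the symmetrized condition already forces $\Sigma=0$ at points where $\Lambda_{(n)}\neq 0$. So your architecture is right and matches the literature route, but as a self-contained proof it stands or falls on supplying that linear-algebra implication, which you defer rather than carry out (the paper, for its part, defers it to the references).
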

This is the form of the canonical the N-P multivector in Euclidean
space $\mathbb{R}^n$ that defines the N-P bracket of $n$
$f_i(x_1,\dots,x_n)\in  \mathscr{F}(M)$
\begin{equation}
\label{Nam-n-bra}
\Lambda_{(n)}=\frac{1}{n!}\epsilon_{i_1\dots i_n}^{1\,\dots\,n}
\partial^{i_1}\wedge\dots\wedge\partial^{i_n}\quad,\quad
\{f_1,\dots,f_n\}=\Lambda_{(n)}(df_1,\dots,df_n)=\left|
\frac{\partial{(f_1,\dots,f_n)}}{\partial(x_{i_1},\dots,x_{i_n})}\right|
\;,
\end{equation}
expressed in terms of the Jacobian (Ex.~\ref{ex:Nam-FA}). Further,
any N-P bracket such as the one given in eq.~\eqref{Nam-vol} may be
written in the canonical Nambu Jacobian form above using a suitable
local coordinate system (see \cite{Vai:99}). The net result is that,
as mentioned, $n\geq 3$ N-P structures are extremely rigid; the
canonicity of the Nambu bracket above parallels the uniqueness of
the simple FAs $A_{n+1}$. This is not surprising on account of the
relation among the decomposability of N-P tensors and the Pl\"ucker
conditions \cite{Mi-Va:00}, a connection that was used \cite{Pap:08}
to show the unicity of the $n=3$ $A_4$ FA in the context of the
Bagger-Lambert model of Sec.~\ref{sec:BLG}. In fact, one may say
that the FAs $\fN$ determined by the Nambu $n$-brackets above
constitute the infinite-dimensional counterparts of the simple
$n$-Lie algebras $A_{n+1}$, a point that will find a
physical application in Sec.~\ref{sec:BLG+NB}. It is thus not
surprising that, already in his original paper \cite {Filippov},
Filippov considered the Jacobian FAs (see further \cite{Fil:98}).

\begin{remark} ({\it Subordinated N-P structures})
\label{N-P-subord} {\ }

 Following Prop.~\ref{n-1 from n} it follows that,
given a Nambu $n$-bracket, one may construct another one of order
($n-1$) by fixing one element in the original N-P $n$-bracket {\it
i.e.}, by defining
\begin{equation*}
\{f_1,\dots,f_{n-1}\}=\{f_1,\dots,f_{n-1},g\} \quad (\hbox{$g$
fixed}) \quad ,
\end{equation*}
since the Nambu ($n-1$)-bracket in the $l.h.s$ will fulfil the FI as
in Prop.~\ref{n-1 from n}. As a result, a N-P tensor that defines a
N-P structure ($\Lambda_{(n)},M$) of order $n$ induces a family of
($n-1$) N-P tensors $\Lambda_{(n-1)}$ that define ($n-1$)-N-P
brackets by the relation above so that ($\Lambda_{(n-1)},M$) is a
N-P structure. Thus, as in the finite-dimensional case, if $\fN$ is
an $n$-FA, the above construction defines a $\fN'$ Nambu subordinated
($n-1$)-FA.
\end{remark}

\begin{definition} ({\it Linear Nambu-Poisson structures})\cite{Tak:93}

A N-P tensor whose components are linear in $x_i$ ({\it cf.} Def.~\ref{Lie-Poisson}),
$\eta_{i_1 \dots i_n}(x)= \eta_{i_1\dots i_{i_n}}{}^j \, x_j$,
is called a {\it linear N-P tensor} and defines a {\it linear N-P
structure}. The corresponding bracket is given by
\begin{equation}
\label{lin-N-P-str}
\{x_{i_1},\dots,x_{i_n}\}=\eta_{i_1\dots i_n}{}^j \, x_j \; .
\end{equation}

The linear N-P structures of eq.~\eqref{lin-N-P-str}
play for FAs the r\^ole of the linear (or Lie-) Poisson ones
for Lie algebras. Any linear N-P structure of order $n$
defined by the linear $n$-N-P tensor
$\Lambda_{(n)}= f_{a_1\dots a_{n}}{}^b\,x_b\,\partial_{a_1}\wedge\cdots \wedge \partial_{a_n}\;$
induces an $n$-Lie algebra structure on the dual $(\mathbb{R}^m)^*$.
The converse, however, may not hold, since a linear $n$-vector $\Lambda_{(n)}=
 f_{a_1\dots a_{n}}{}^b\,x_b\,\frac{\partial}{\partial x_{a_1}}\wedge\cdots
\wedge \frac{\partial}{\partial x_{a_n}}\,$, where the $f_{a_1\dots a_{n}}{}^b$
are the structure constants of an $n$-Lie algebra, may give rise to a non-decomposable
tensor. In other words, although the differential condition \eqref{difcond}
for the $n$-tensor coordinates $\eta_{a_1\dots a_n}=f_{a_1\dots a_{n}}{}^b\,x_b$
becomes the FI of the FA (see eq.~\eqref{FIultrashort-b})
and is therefore satisfied, the algebraic condition may not hold,
in which case $\Lambda_{(n)}$ does not define a N-P structure and
therefore is not a N-P tensor.
\end{definition}

  The skewsymmetric tensor $\eta_{i_1 \dots i_n}(x)=
\epsilon_{i_1\dots i_n}{}^{i_{n+1}} x_{i_{n+1}}$ is a linear N-P
tensor \cite{Tak:93,Cha-Tak:95}. Accordingly, it defines the {\it
linear Poisson structure}
\begin{equation*}
\label{NP-A-linear}
 \{x_1,\dots, x_{i_{n+1}}\}= \epsilon_{i_1 \dots
i_n}{}^{i_{n+1}} x_{i_{n+1}} \quad ,
\end{equation*}
which (see Sec.~\ref{sec:simple-n-Lie}) is associated with
$A_{n+1}$. Clearly, and in contrast with the higher order linear
Poisson structures, only for $n=2$ the above N-P structure
corresponds to a linear {\it Lie}-Poisson structure, since it is
only for the standard Poisson case case that $\epsilon_{i j}{}^k$
are the {\it structure constants} of a Lie algebra, $su(2)$ ({\it
cf.} eq.~\eqref{GPSo}). Since the linear Poisson structures of
Def.~\ref{Lie-Poisson} are called Lie-Poisson structures, the linear
N-P structures above might be called as well Filippov-Nambu-Poisson
structures.

Further analysis of linear N-P structures is given in
\cite{Nak:98,Du-Zu:99,Vai:99} and refs. therein.
\medskip

  It is  possible to construct Nambu-Poisson $n$-tensors on
Lie groups $G$ (in fact, left-invariant N-P tensors) by using the
LI vector fields that generate a $(n\geq 3)$-dimensional subalgebra $\fh$ of
the Lie algebra $\fg$ of $G$. This is done by setting
$\Lambda_{(n)}=X_1\wedge\dots\wedge X_n$, where $\{X_i\}$ is a basis of
$\fh$; $\Lambda_{(n)}$ is then a LI N-P $n$-tensor \cite{Nak:98b}.
In fact, there is a one-to-one correspondence between the set of the
LI N-P $n$-tensors (up to a constant) on $G$ and the set of
$n$-dimensional subalgebras $\fh\subset\fg$. We shall not discuss
this further and refer instead to \cite{Nak:98b} for details.
\medskip

  There is one question that remains to be answered: the possible
connection between the {\it even} order GP and N-P structures.
Writing the FI in the form of eq.~\eqref{eq:FI}
\begin{equation*}
\begin{aligned}
\{f_{i_1},\dots,f_{i_{n-1}},  \{f_{i_n},\dots,f_{i_{2n-1}}
 \}\} = & \\
 =
\{\{f_{i_1},\dots,f_{i_{n-1}},f_{i_n}\}, f_{i_{n+1}}, \dots
f_{i_{2n-1}}\}  + \dots + & \{f_{i_n}, \dots,
f_{i_{2n-2}},\{f_{i_1},\dots
f_{i_{n-1}},f_{i_{2n-1}} \}\} = \\
(-1)^{n-1}\{f_{i_{n+1}}\dots,f_{i_{2n-1}},\{f_{i_1},\dots,f_{i_{n-1}},f_{i_n}\}\}
 + (-1)^{n-2} & \{f_{i_n},f_{i_{n+2}},\dots,f_{i_{2n-1}},\{f_{i_1},\dots,f_{i_{n-1}},f_{i_{n+1}}\}\}
\\ +\dots + (-1)^{n-n}
\{f_{i_n},\dots,f_{i_{2n-2}},\{f_{i_1},\dots,f_{i_{n-1}},f_{i_{2n-1}}\}\}
\end{aligned}
\end{equation*}
and contracting the first and last terms of the equality with
$\epsilon^{i_1\dots i_{2n-1}}$ it follows that
\begin{equation*}
\epsilon^{i_1\dots i_{2n-1}}
\{f_{i_1},\dots,f_{i_{n-1}},\{f_{i_n},\dots,f_{i_{2n-1}}\}\}= n
(-1)^{n-1} \epsilon^{i_1\dots i_{2n-1}}
\{f_{i_1},\dots,f_{i_{n-1}},\{f_{i_n},\dots , f_{i_{2n-1}}\}\}
\quad,
\end{equation*}
from which the GJI in eq.~\eqref{GJIforGPB} follows (one may also
look at the form of the FI and the GJI, for instance in
eqs.~\eqref{FIshort-coor}, \eqref{GJIcoord}). Thus, we may state the
(expected) following result

\begin{lemma}
\label{le:NPimpGPS}
{\ }

Every Nambu-Poisson structure of even order is also a generalized
Poisson structure, but the converse does not hold.
\end{lemma}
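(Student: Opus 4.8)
The plan is to treat the two halves of the statement separately and to reduce the forward implication to the single non-trivial fact that the Filippov identity (FI) forces the generalized Jacobi identity (GJI) in the even case. First I would note that conditions (a) (skewsymmetry) and (b) (Leibniz's rule) of \eqref{GPSfc} are by definition shared by the N-P structure and the GPS (Def.~\ref{def:GPS}), so an even-order N-P bracket already obeys the first two GPS axioms and the only thing to establish is that its characteristic identity, the FI, entails the GJI \eqref{GJIforGPB}. To do this I would apply the FI in the form \eqref{eq:FI} to $2n-1$ functions, setting $X_j=f_{i_j}$ for $j=1,\dots,n-1$ and $Y_k=f_{i_{n-1+k}}$ for $k=1,\dots,n$, and then fully antisymmetrize by contracting every free index with $\epsilon^{i_1\dots i_{2n-1}}$. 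Writing $A:=\epsilon^{i_1\dots i_{2n-1}}\{f_{i_1},\dots,f_{i_{n-1}},\{f_{i_n},\dots,f_{i_{2n-1}}\}\}$, the left-hand side is $A$, while each of the $n$ terms on the right-hand side, after relabelling the dummy indices against the totally antisymmetric $\epsilon$, collapses to the same $A$ with a common sign $(-1)^{n-1}$, so the identity becomes $A=n(-1)^{n-1}A$. For $n$ even this reads $A=-nA$, i.e. $(1+n)A=0$, and since $1+n\neq0$ I conclude $A=0$, which is exactly the GJI \eqref{GJIforGPB}. Hence every even-order N-P structure satisfies all three GPS axioms and is a GPS.

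For the converse I would exhibit an explicit even-order GPS that is not an N-P structure. The cleanest choice is a constant (coordinate-independent) totally antisymmetric tensor $\Lambda$ of even order $2s\geq4$ that is not decomposable, for instance $\Lambda=\partial_1\wedge\cdots\wedge\partial_4+\partial_5\wedge\cdots\wedge\partial_8$ on $\mathbb{R}^8$: its support subspace is all of $\mathbb{R}^8$ (any $W$ with $\Lambda\in\wedge^4 W$ must contain the supports of both summands), whereas a nonzero decomposable $4$-vector has $4$-dimensional support, so $\Lambda$ cannot be a single wedge of four vectors. Being constant, $\Lambda$ has vanishing Schouten--Nijenhuis bracket with itself, so \eqref{GPSk} (equivalently the differential condition \eqref{GPSl}) holds trivially and $\Lambda$ defines a GPS. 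On the other hand, by Lemma~\ref{le:N-P-dec} every N-P tensor of order $\geq3$ is decomposable, so the non-decomposable $\Lambda$ cannot be an N-P tensor; concretely, constancy makes the differential condition \eqref{difcond} automatic, but the purely algebraic condition \eqref{alcond} then fails. One could equally take the linear GPS \eqref{GPSo} built from the higher-order simple Lie algebras of Th.~\ref{th:HOsimple}, whose associated linear multivectors are in general non-decomposable and hence not N-P tensors, tying the counterexample directly to the earlier constructions of the paper.

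The routine parts are the identification of the shared axioms (a)--(b) and the non-decomposability check for $\Lambda$. I expect the only delicate step to be the sign bookkeeping in the forward direction: one must confirm that after full antisymmetrization each of the $n$ summands of \eqref{eq:FI} reproduces $A$ with the \emph{same} factor $(-1)^{n-1}$, so that the coefficient on the right is exactly $n(-1)^{n-1}$. It is worth stressing that the vanishing $A=0$ in fact follows for every $n>1$, but it acquires the meaning of the GJI --- and indeed the GPS notion itself is only defined --- when $n$ is even, which is precisely where the even-order hypothesis enters.
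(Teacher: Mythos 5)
Your proof is correct, and the forward direction is exactly the paper's argument: contracting the FI in the form \eqref{eq:FI} with $\epsilon^{i_1\dots i_{2n-1}}$ to obtain $A=n(-1)^{n-1}A$, hence $A=0$ for $n$ even, which is the GJI \eqref{GJIforGPB}; your sign bookkeeping (each of the $n$ right-hand terms contributing $(-1)^{n-1}A$ after relabelling against the totally antisymmetric $\epsilon$) checks out, as does your side remark that $A=0$ holds for all $n>1$ while only acquiring the GJI meaning for $n$ even. Where you add value is the converse, which the paper merely asserts, supporting it only by a footnote noting that a direct sum of N-P tensors is constant (so \eqref{difcond} holds trivially) but non-decomposable and hence fails \eqref{alcond} by Lemma~\ref{le:N-P-dec}: your explicit $\Lambda=\partial_1\wedge\cdots\wedge\partial_4+\partial_5\wedge\cdots\wedge\partial_8$ on $\mathbb{R}^8$, with the support-dimension argument for non-decomposability, is precisely that counterexample made concrete; note also that the paper records a one-line alternative for the forward direction, since decomposability (Lemma~\ref{le:N-P-dec}) immediately gives $[\Lambda,\Lambda]=0$ and hence the GPS condition \eqref{GPSk} without any $\epsilon$-gymnastics.
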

\noindent In fact, this Lemma also follows trivially from the
decomposability of the N-P tensors, Lemma.~\ref{le:N-P-dec},
which automatically implies zero SN bracket $[\Lambda,\Lambda]=0$.
\medskip

\subsection{Brief remarks on the quantization of higher order
Poisson structures} \label{GPSvsN-P} {\ }

Setting aside the intrinsic mathematical interest of the two $n$-ary
generalizations of the standard Poisson structure above discussed, a
first question is to find examples of physical mechanical systems
that might be described by them. It is fair to say that there are
not too many, particularly for the GPS since the GJI does not
reflect a derivation property of the multibracket. The reader
interested in finding discussions of mechanical systems described by
$n$-ary Poisson structures may look {\it e.g.}, at \cite{Nambu:73,
Hira:77, Sah-Val:92, Tak:93, Cha:95,
Cu-Za:02,Guha:02,Nutku:03,Za-Cu:04,AzPePB:96b} and references
therein. In field theory, an $n=3$ infinite-dimensional $\fN$ algebra
has recently appeared in the context of M-theory, as it will be
described in Sec.~\ref{sec:BLG+NB}.

The antisymmetry of the standard Poisson bracket is shared by the
higher order N-P structures of Sec.~\ref{sec:NP} and the GPS of
Sec.~\ref{sec:GPS}. As pointed out by Nambu himself \cite{Nambu:73},
the antisymmetry property is necessary to have hamiltonians that are
constants of the motion since the time evolution of a dynamical
quantity $F$ is governed by $\dot{F} = \{H_1,H_2,F \}$ or, in
general \cite{Tak:93}, by ($n-1$) hamiltonians, eq.~\eqref{GPSi}.
This is also the case for a mechanical system described by a GPS
\cite{AzPePB:96a, AzPePB:96b}; clearly, such a time evolution
implies that all the hamiltonian functions are constants of the
motion. The derivation property of the N-P bracket encoded in the FI
makes the N-P bracket specially suitable for the differential
equation describing the evolution of a dynamical quantity, while the
lack of this property for the GPS, governed by the GJI, makes less
obvious its application to mechanical systems\footnote{Nevertheless,
although the GPB of constants of the motion is not a constant of the
motion in general, a weaker result exists for any set of functions
$f_1,\dots,f_q$, $q>2s$, such that the functions in
($H_1,\dots,H_{2s-1}, f_1,\dots,f_{2s-1}$) are in {\it involution}
(see \cite{AzPePB:96b}, Th. 6.2). Under the restricting conditions
of this theorem, one also has that the FI \eqref{moreFI} is
satisfied; see \cite{AzPePB:96b} for further discussion.}. Another
aspect of the $n$-ary structures is the Liouville theorem; both N-P
mechanics and the linear GPS structures have an $n$-dimensional
analogue (see \cite{AIPB:97}).

    Let us conclude the discussion of $n$-ary Poisson structures
with a few words on quantization, a word often used too loosely,
at least from a physical point of view. For
the purposes of this review, the case of quantizing standard Poisson
structures may be considered `solved' by using {\it e.g.} the Dirac
approach even in the presence of constraints although, as already
mentioned, more sophisticated approaches to quantization exist.
There was also a `generalized quantum dynamics' \cite{Adl-Wu:94}
which, in principle, may lead directly to the equations of motion of
the operators without going through the quantization of the
classical theory.  When moving to higher order, however, the
difficulties appear already at a basic level. Indeed, the
quantization of the Nambu-Poisson mechanics is fraught with serious
difficulties, especially if the word `quantization' is understood
physically {\it i.e.}, as a general procedure that, starting from the
classical dynamics of a physical system (as described by N-P mechanics),
gives a quantum one that reproduces all the structural properties
of the original classical system in the $\hbar\rightarrow 0$ limit.
This implies, besides the skewsymmetry of the N-P $n$-bracket,
Leibniz's rule and the FI.

There is a simple argument against any
elementary quantization of N-P mechanics which tries to keep the
standard correspondence among dynamical quantities, their associated
quantum operators and all the structural relations satisfied by
their classical counterparts  through the N-P brackets. It is
physically natural to assume that the quantum operators
$\mathscr{O}_i$ corresponding to the different classical dynamical
quantities $O_i$ are associative. Then, it follows that a commutator
$[\mathscr{O}_1,\dots,\mathscr{O}_{2s}]$ defined by the
antisymmetrized sum of their products, as in eq.~\eqref{multic},
will naturally lead to the GJI rather than to the FI. This problem
is already underlined by the difficulty in finding matrix
realizations of FAs (see Lemma.~\ref{le:FA-Cliff} and
Ex.~\ref{ex:FA-tr} for just two examples). The possibility of
considering higher order ({\it e.g.} cubic) matrices has also been
explored \cite{Awa-Li-Mi-Yo:99,Kaw:02,Kaw:03,Ho-Hou-Ma:08} but again
there are difficulties to mimic all properties of Nambu brackets.
Further, for multibrackets of odd order $n$ we saw in
eq.~\eqref{multiep} that the zero in the $r.h.s.$ of the even GJI is
replaced by a larger multibracket
$[\mathscr{O}_1,\dots,\mathscr{O}_{2s-1}]$. In any case,
multibrackets defined by the antisymmetric products of associative
operators $\mathscr{O}$ as in eq.~\eqref{multic} will lead to an
identity which is {\it outside} the N-P algebraic scheme. As a
result, finding a simple procedure of quantizing Nambu-Poisson
mechanics with associative operators that keeps all the three
properties of the N-P structure, and especially the FI \eqref{eq:FI}
that regulates the time evolution of the system, is a problem likely
without solution, although sophisticated quantization methods have
been proposed\footnote{The deformation quantization (star product)
approach to Nambu mechanics was investigated in \cite{DFST:96}. It
was observed there that this approach does not provide a
straightforward solution to the quantization problem of
Nambu-Poisson structures in general. This led the authors to propose
a peculiar modification of the deformation quantization they termed
`Zariski quantization' (see \cite{Minic:99} in connection with
M-theory).

   For the deformation quantization approach and $*$-products,
see \cite{BFFLS:78a,BFFLS:78b,Stern:98} plus the pioneering work by Berezin
\cite{Ber:75, Ber:74}; see further \cite{Sta:97}. Recent work
on quantized Nambu-Poisson structures (in terms of non-commutative geometries)
has been done in \cite{DeBe-Sae-Sza:10}. For a sourcebook
on deformation quantization see \cite{Fai-Cur-Zac:05}. }. This
inherent difficulty for quantizing Nambu-Poisson brackets while
preserving their defining properties at the same time {\it i.e.}, of
having a correspondence between the classical and quantized versions
of the theory, has been pointed out repeatedly (see, {\it e.g.}
\cite{Estab:73,Tak:93, Sa-Va:94, DFST:96, Awa-Li-Mi-Yo:99,
Sakakibara:00, Xiong:00}), starting with Nambu himself
\cite{Nambu:73}. In this respect (but setting aside the question of
the time evolution), the even GPS satisfy an identity, the GJI, more
amenable to a quantum version.

  We collect in Table 2 below the main properties of GP and N-P
structures (see \cite{AIP-B:97})

\begin{table} [h!]
\centering
\begin{tabular}{lccc}
& PS & N-P & GPS (even order)
\\
\hline \hline Characteristic identity (CI): & eq.~\eqref{PSji} (JI)
& eq. (\ref{GPSh}) (FI) & eq. (\ref{GJIforGPB}) (GJI)
\\
Defining conditions: & eq. (\ref{GPSb}) & Eqs.
(\ref{difcond}),(\ref{alcond}) & eq. (\ref{GPSl})
\\
Liouville theorem: & Yes & Yes & Yes
\\
Poisson theorem: & Yes & Yes & No (in general)
\\
\begin{minipage}[b]{5.0cm}
CI realization in terms of associative operators:
\end{minipage}
& Yes & No (in general)& Yes
\end{tabular}
\vspace{8pt}
\\
{Table 2. Some properties of the Poisson, Nambu-Poisson and generalized
Poisson structures.}
\end{table}
{\ }

  The situation may be summarized by saying that the associativity
of the quantum operators, which implies the GJI that is more
suitable for quantization, is not compatible with the derivation
property of the N-P bracket; this last leads to the FI which, in
turn, is inconvenient for a quantum version. Such a compatibility
only exists for the standard Poisson structures to which both
schemes reduce for $n$=2. In his paper, Nambu stated \cite{Nambu:73}
that `quantum theory is pretty much unique although its classical
analog may not be'. But one might as well take the point of view
(see also \cite{Mu-Sud:76}) that classical mechanics is pretty
unique too if the term `dynamical system' is restricted to its
physical rather than to a mathematical meaning, under which a system
of differential equations may be judged sufficient to describe the
`dynamics' of a `system'. In any case, it seems clear that the
quantization of higher order Poisson brackets requires renouncing to
some of the standard steps towards quantum mechanics. Of course, all
quantization procedures tend to hide the obvious fact that Nature is
quantum {\it ab initio} and that, accordingly, the emphasis should
rather be on finding the classical limit of quantum descriptions if
these were readily available; the insistence on quantization schemes
just reflects the fact that, unfortunately, this is not so.

   Having said this, it is worth going back to the $n$ even case,
for which the GJI holds for associative operators. It is
possible to avoid the loss of correspondence between the classical
and the quantum versions of the theory if  one considers that
the Filippov identity is not the `fundamental' one for
the quantization of Nambu-Poisson \cite{Tak:93} or Nambu's
generalized hamiltonian systems. The relative virtues
of the Nambu bracket given by the Jacobian determinant
\eqref{Nam-n-bra} and that the GJI \eqref{GJI} may
be combined if one accepts that the identity that must be
satisfied by the quantized bracket is, as argued above, the GJI
that follows from the full antisymmetrization.
Further, since the above Jacobian determinant is given in terms of the
Levi-Civita tensor, it is possible to solve the $n=2s$ even Nambu bracket
in terms of ordinary Poisson brackets much in the same way a
multibracket with $2s$ entries is reducible to sums of products of
$s$ ordinary two-brackets\footnote{In general, the expansion of
the even $n=2p$ multibracket clearly mimics the expression of the Pfaffian of an
antisymmetric $n\times n$ matrix $A$, Pf$(A_{ij})=\frac{1}{2^p\,p!}
\epsilon^{i_1\dots i_{2p}}_{1,\,\dots,\,2p}A_{i_1 i_2}\cdots A_{i_{2p-1}i_{2p}}$,
if the non-commutativity of the different two-brackets is ignored, which
reduces the number of terms in expansions such as
eq.~\eqref{n=4-resol}  by $\frac{1}{p!}$. PB's conmmute, and thus
the above PB resolution is in obvious correspondence with the Pfaffian
(as also noted by K. Bering, see \cite{Curt-Zach:03}).}. In this way,
one may adopt -in fact, following Nambu\footnote{See
 eqs. (33a) and (33b) in \cite{Nambu:73} for $n$=3, which are the
same as eqs.~\eqref{NQB-3}; Nambu already appreciated the especial
difficulties of the odd case.}- that the quantum Nambu bracket is
the fully antisymmetric higher order commutator and that it is the
GJI of the multibracket, not the FI, the property that is relevant
in quantization.
\medskip

In fact, maximally superintegrable systems are all
describable classically by both Hamiltonian, and, equivalently,
Nambu mechanics \cite{Cu-Za:02,Curt-Zach:02,Za-Cu:04}. Thus, their Hamiltonian
quantization provides a check  for their
Nambu quantization. This point of view, which in a way takes the best
of the two worlds, has been consistently advocated by Curtright
and Zachos \cite{Cu-Za:02, Za-Cu:03a, Cu-Za:03b}. In it, the
connection between the quantum bracket (the multibracket,
satisfying the GJI and thus suitable for associative quantum
operators) and the classical one (the Nambu bracket given by the
Nambu Jacobian, eq.~\eqref{Nam-n-bra}) becomes evident. For
instance, since in the classical $\hbar\rightarrow 0$ limit
$\frac{1}{i\hbar}[\mathscr{O}_1,\mathscr{O}_2] \rightarrow
\{O_1,O_2\}$, for $n=4$ we obtain from eq.~\eqref{n=4-resol} that,
in that limit,
\begin{equation}
\begin{aligned}
\label{4h}
 \frac{1}{2} \frac{1}{{(i\hbar)}^{\frac{n}{2}}}
 [\mathscr{O}_1,\mathscr{O}_2,\mathscr{O}_3,\mathscr{O}_4]
& \rightarrow \{O_1,O_2\}\{O_3,O_4\} -  \{O_1,O_3\}\{O_2,O_4\}
+\{O_1,O_4\}\{O_2,O_3\} \\
= & \{O_1,O_2,O_3,O_4\}
\end{aligned}
\end{equation}
by using the decomposition of the $n=4$ Jacobian into products of
ordinary Poisson brackets given by the resolution of the $n$=4
Levi-Civita symbol in terms of products of $n$=2
ones; notice that, by proceeding in this way, one is taking the Nambu
Jacobian as the {\it fundamental} property of Nambu mechanics.

The same correspondence clearly works in the higher order even\footnote{It
may be possible, in principle, to consider odd cases by embedding
the odd ($2s-1$)-quantum brackets into even $2n$ quantum brackets
\cite{Curt-Zach:03} to reduce their quantization to the even
case.} case, for which
\begin{equation}
\frac{1}{(n/2)!}\frac{1}{(i\hbar)^{\frac{n}{2}}}
[\mathscr{O}_1,\mathscr{O}_2,\dots,\mathscr{O}_n] \rightarrow
\{O_1,O_2,\dots,O_n\}
\end{equation}
in the classical limit. The first factorial appears because the
reduction of the $n=2s$ bracket to products of two-brackets contains
terms that become the same in the classical limit which replaces
commutators by Poisson brackets ({\it cf.} eqs.~\eqref{n=4-resol}
and \eqref{4h}) since the product of functions is commutative. We
shall not carry the discussion any further and refer to the papers
quoted in this section for details.

\section{The Bagger-Lambert-Gustavsson (BLG) model}
\label{sec:BLG}
{\ }

We now come to the last part of this review, the appearance of 3-Lie
and Nambu FA structures in brane theory. We shall restrict ourselves
to the original BLG proposal and to its Nambu bracket extension
because of their higher structural simplicity. Other approaches will
be mentioned, but perhaps it is fair to say that, at present, there
is not a completely satisfactory answer to the questions mentioned
below and later in Sec.~\ref{sec:BLG-phys}. In the remaining
sections the emphasis will be on the geometry of BLG-like models,
rather than on their physical contents, as a way to illustrate the
previous $n$-ary algebraic structures. Some other aspects, as {\it
e.g.} the possible deformations of BLG and related theories (see
\cite{Hoso-2Lee:08,Go-Ro-Raa-Ver:08,Cra-Her-Tur:09,Ak-Sae-Wo:09,Gustav:09,Chen-Ho-Tak:10}
and refs. therein), will not be considered here.
\medskip

  The strong coupling limit of the IIA superstring theory is a
$D=11$ one, M-theory, the low energy limit of which is $D=11$
supergravity \cite{Hu-To:94,Wit:95}. This admits a fully
32-supersymmetric solution with the geometry of
$AdS_4\times S^7$ and isometry group $OSp(8|4)$. To obtain
some insight into the structure of the elusive M-theory
it became important, due to the AdS/CFT correspondence
\cite{Mal:97,Wit:98-AdS} (see \cite{Aha-Gub-Mal-Oo-Oz:99,Kle:00,Nas:07}
for reviews) to construct the action of the superconformal gauge
theories that are AdS/CFT dual to M-theory on the
above background.
As discussed in \cite{Schw:04}, these theories were expected to be
worldvolume $d=3$ gauge theories coupled to massless matter with
$\mathcal{N}=8$ linearly realized supersymmetries and $OSp(8|4)$
superconformal symmetry, which is also the symmetry of the M-theory
solution. Thus, excluding the possibility of singlets, they were to
contain eight $d=3$ real scalar fields, coming from the eight
transverse coordinates of the M2-brane (8=11-[3 M2-worldvolume
coord.]) plus 16 real (off-shell) $d=3$ Goldstone fermions (the
other 16 being removed by $\kappa$-symmetry) and, since they had a
$\mathcal{N}=8$ supersymmetry, they would present a natural $SO(8)$
R-symmetry. Since this corresponds on-shell to 8 (bosonic) =16/2
(fermionic) degrees of freedom, there is no room left for any more
on-shell physical bosonic $d.o.f.$ Therefore, it was proposed that
the gauge fields should appear in the action through a Chern-Simons
term and that, accordingly, the theory should be a supersymmetric
extension of a gauge theory of Chern-Simons type. It seemed after
the analysis in \cite{Schw:04}, however, that in spite of the
theoretical grounds for the existence of such a theory, a
superconformal action with the required properties (and specially
the $\mathcal{N}$=8 superPoincaré invariance) could not exist. A few
years after this apparent `no-go' result, the ground-breaking work
of Bagger and Lambert (and Gustavsson) to be described below showed that a $d=3$
superconformal action with CS term and $\mathcal{N}=8$ supersymmetry
was possible after all, sparkling a great interest on the subject.

\subsection{Symmetry considerations and ingredients of the BLG model}
\label{sec:BLGmod}
{\ }

   The BLG model \cite{Ba-La:06,Ba-La:07a,Gus:07,Ba-La:07b,Gustav:08}
(see also \cite{Raam:08} and \cite{Lam-Rich:09} for recent work
involving the background gauge fields of $D=11$ supergravity) is a
three-dimensional maximally supersymmetric superconformal gauge
theory aimed, along the lines above, to describing the very low
energy effective worldvolume theory of a system of $N$ coincident
M2-branes in the $D=11$ spacetime of M-theory. Although the original
goals were not reached in the form they were initially
expected, the $A_4$ BLG model provided the first successful example
of an interacting $d=3$ gauge theory with $\mathcal{N}=8$ linearly
realized supersymmetries ({\it i.e.}, $d=3$ maximally
supersymmetric) and with superconformal symmetry $OSp(4|8)$.
Further, the Noether currents associated with the BLG lagrangian
generate \cite{Pas:08} the $d=3$, $\mathcal{N}=8$ superPoincar\'e or
`M2'-algebra with central charges that had already been discussed on
general grounds \cite{Be-Go-To:97},
\begin{equation}
\label{M2alg}
\{Q_\alpha^p,Q_\beta^q\}=-2 (\gamma^\mu\gamma^0)_{\alpha\beta}\delta^{pq}P_\mu+
\epsilon_{\alpha\beta}Z^{[pq]}+(\gamma^\mu\gamma^0)_{\alpha\beta}Z_\mu^{(pq)}
\quad (\mu=0,1,2\;;\;\alpha,\beta=1,2\;;\;p,q,=1,\dots,8) \;,
\end{equation}
where the symmetric central charge is traceless, $\delta_{pq}Z_\mu^{(pq)}=0$.
This algebra has an obvious $SO(8)$ automorphism group under which the eight
$d=3$ two-component Majorana supercharges $Q^q$ form a chiral $Spin(8)$ spinor.
The $\binom{17}{2}$ degrees of freedom of the $l.h.s.$ of eq.~\eqref{M2alg}
split as $136= 3+ \binom{8}{2}+ 3(\binom{9}{2}-1)=3+28+3\times 35$.
Thus, the worldvolume zero- (one)-form $Z^{[pq]}\, (Z_\mu^{(pq)})$ transforms under
the {\bf 28} ({\bf 35}$^+$) representation of $SO(8)$. In transverse space,
these two central charges may also be understood, respectively, as a two- and
a self-dual four-form ($\frac{1}{2}\binom{8}{4}=35$).

  The Bagger-Lambert construction was originally based on a
three Filippov algebra structure, the properties of which were
actually  rediscovered from the physical requirements needed to
build the model. Bagger and Lambert where led to a {\it three}-Lie
algebra since they wanted to recover the Basu-Harvey equation
\cite{Bas-Har:05} which may be formulated in terms of a
three-bracket (see eq.~\eqref{BLM13}). Thus, the BLG theory appeared
to provide an application of the Filippov algebra structure
discussed in previous sections. We give below an outline of the
original BL action and its actual relation to the simple euclidean
$A_4$ 3-Lie algebra. We shall nevertheless keep occasionally a
generic $n=3$ FA notation when there is no need of identifying $\fG$
with $A_4$, for which $f_{abc}{}^d=-\epsilon_{abc}{}^d$.

In order to write certain terms of the worldvolume
lagrangian, including the kinetic ones, the $n=3$ FA was
required to be endowed with an invariant metric
$\left<\ ,\ \right>$, so that eq.~\eqref{sc-n-inv}
 (or \eqref{sc-n-inv-coor}) is satisfied. Further,
to avoid states with negative norm in the quantum theory, it was
assumed that the metric was  positive definite.  It turned out that
this determined completely the finite 3-Lie algebra \cite{Pap:08,
Ga-Gu:08,Gust:08a} to be the simple euclidean $A_4$ one; of course,
it is also possible to have a direct sum of multiple $A_4$ copies
and trivial one-dimensional algebras, as conjectured in
\cite{Ho-Hou-Ma:08,JMF-Pap:04}. There is, however, a simple argument
leading to $A_4$: if Lie$\,\fG$ has to be semisimple, something one
would require at least for a gauge group, $\fG$ has to be reductive
by Th.~\ref{th:redFAs}. Removing then a possible, uninteresting
centre, we are left with a semisimple $\fG$. Positive definiteness
-or the needed compactness of Lie$\,\fG$- leads then to a direct sum
of $A_4$ copies and simplicity restricts the 3-Lie algebra to the
euclidean $A_4$ as the only possibility.

   To look at the fields of the BLG theory,
let us assume that the M2-brane worldvolume coincides with the $D=11$
hyperplane parametrized by the $0,1,2$ spacetime coordinates; then,
the remaining $3,\dots,10$ coordinates are transverse to the
M2-brane. This splitting is preserved by a $SO(1,2)\times SO(8)$
subgroup of $SO(1,10)$. The fields describing a single M2-brane
depend on the $d=3$ Minkowski worldvolume membrane coordinates
$x^\mu$, $\mu=0,1,2$, and are given by two sets of `matter' fields,
bosonic and fermionic, plus additional gauge fields. The bosonic
fields describe the transverse fluctuations of the membrane and are
given by eight worldvolume scalar transverse coordinate  fields
$X^I(x)$  labelled by $I=3,\dots 10$. The 16 fermionic fields are
eight two-component worldvolume spinors that may be described in
terms of a $D=11$ 32-component Majorana spinor $\Psi(x)$ subject to
the condition $\Gamma^{012}\Psi =-\Psi$, where the $\Gamma$'s are
the gamma matrices of the eleven-dimensional spacetime $(\Gamma^\mu,
\Gamma^I$). Since the $D=11$ `$\gamma^5$' is the unity (there is no
chirality in odd dimensions), spinors that are $\Gamma^{012}$ chiral
have also a definite $Spin(8)$ chirality since this is determined by
$\Gamma^{34\dots 9(10)}$.

The M2-brane breaks half of the
supersymmetries, and the preserved ones are taken to be chiral,
$\Gamma^{012} \epsilon=\epsilon$. The antichiral $\Psi$ fields are
the goldstinos corresponding to the broken supersymmetries and the
$X^I$ are the Goldstone scalar fields that correspond to the eight
broken translations. The $\Psi$ fields have sixteen independent real
components; from the point of view of the $Spin (1,2)\times Spin(8)$
subgroup of $Spin(1,10)$, they are bidimensional $SO(1,2)$ spinors;
the $\mathcal{N}=8$ supersymmetries refer to this $d=3$ worldvolume
description (both the $SO(1,2)$ and the chiral $SO(8)$ spinorial
indices of $\Psi$ are omitted). Thus, the eight scalars $X^I(x)$ and
the eight fermionic spinors $\Psi(x)$ transform, respectively, under
the eight-dimensional vector and chiral spinor representations of
the R-symmetry group $SO(8)$ which preserves the M2 superalgebra
\eqref{M2alg}.

   In order to describe a stack of M2-branes,
Bagger and Lambert introduced bosonic and fermionic fields taking
values in a FA $\fG$ (which, as mentioned, turned out to be $A_4$ in
their first proposal). Thus, the  mater fields of the BLG-type
models $X^I(x)=X^{Ia}(x)\be_a$, $\Psi(x)=\Psi^a(x)\be_a$
($a=1,\dots,4=\hbox{dim}\,\fG$), where  $\{\be_a\}$ is a basis of
$\fG$, carry a dim$\,\fG$-dimensional representation of Lie$\,\fG$.
 As for the gauge fields $A_\mu^{ab}$, they are
 Lie$\,\fG$-valued worldvolume vector fields
with $A_\mu^{ab}(x)= -A_\mu^{ba}(x)$. The two indices $ab$ refer
(see Sec.~\ref{sec:coord3alg}) to those that determine the elements
of Lie$\,\fG$ through the fundamental objects of $\fG$. This means
that, assuming simplicity ($\fG=A_4$), we have a one-to-one
correspondence between fundamental objects
$\mathscr{X}\in\wedge^2\fG$ and elements $T_{ab}\in
\mathrm{Lie}\,\fG$, $(T_{ab})_c{}^d=f_{abc}{}^d$ so that
$A=A^{ab}T_{ab}$, $A_c^d=A^{ab}f_{abc}{}^d$. Thus, in spite of being
given through the structure constants $f_{abc}{}^d$ of a {\it
three}-Lie algebra, the vector fields $A_\mu$  (which may be seen as
$\wedge^2 \fG$-valued) are of course ordinary Lie algebra,
Lie$\,\fG$-valued gauge fields. Thus, the gauge fields are  in the
adjoint representation of the gauge group as usual (but not the
matter ones which, unlike in \cite{Schw:04}, take values in $\fG$
itself). The fact that Lie$\,\fG$ determines the gauge group
explains the r\^ole and the physical importance of the Lie algebra
associated with the FA $\fG$.

\subsection{The BLG lagrangian}
{\ }

  The BLG model is given by the worldvolume lagrangian density
\cite{Ba-La:07a,Ba-La:07b} (see also \cite{Gus:07, Be-Ta-Tho:08,Raam:08})
\begin{eqnarray}
\label{BLM2}
  \mathcal{L}_{BL} & = &
  -\frac{1}{2} \left< \mathcal{\mathcal{D}}_\mu X^I,\mathcal{\mathcal{D}}^\mu X^I \right>
  + \frac{i}{2} \left< \bar{\Psi}, \Gamma^\mu \mathcal{D}_\mu \Psi\right>
  \nonumber\\
   & & - g\,\frac{i}{4} \left< \left[ \bar{\Psi}, X^I, X^J\right],
   \Gamma_{IJ} \Psi \right> - \frac{g^2}{2\cdot 3!} \left< \left[
   X^I,X^J,X^K\right], \left[
   X^I,X^J,X^K\right] \right> \nonumber\\
  & & + \frac{1}{g}\, \mathcal{L}_{CS} \quad .
\end{eqnarray}
The corresponding action may be split, following the above three lines, as
\begin{equation}
\label{BLGaction}
I_{BLG}=\int d^3 x \,\mathcal{L}_{kin} +
\int d^3 x \,\mathcal{L}_{int} + \frac{1}{g} \int d^3 x\, \mathcal{L}_{CS} \; .
\end{equation}
The covariant derivative $\mathcal{D}_\mu$ above is defined for a generic
$\fG$-valued matter field $V=V^a\be_a$ by
\begin{equation}\label{BLM3}
    (\mathcal{D}_\mu V)^a= \partial_\mu V^a - {f_{cdb}}^a A_\mu^{cd} V^b\ ,
\end{equation}
and $\mathcal{L}_{CS}$ has the form
\begin{equation}\label{BLM4}
  \mathcal{L}_{CS}= \frac{1}{2} \epsilon^{\mu\nu\rho} \left(
  f_{abcd} A_\mu^{ab} \partial_\nu A_\rho^{cd} + \frac{2}{3}
  {f_{cda}}^g f_{efgb} A_\mu^{ab} A_\nu^{cd} A_\rho^{ef} \right) \; .
\end{equation}
The Chern-Simons term $\mathcal{L}_{CS}$ was called `twisted'
because it did not seem to have the standard CS expression (but see
Sec.~\ref{sec:CSterm} below).

The BLG action is scale-invariant
provided that the gauge fields have length dimension $A=L^{-1}$ and
the constant $g$ is dimensionless. Then, the kinetic terms for the
worldvolume matter fields are also scale-invariant with
$[X]=L^{-\frac{1}{2}}$ and $[\Psi]=L^{-1}$, the expected
dimensions for a $d=3$ theory with no dimensionful constants.
The dimension of $A$, consistent with its r\^ole as part of
a covariant derivative, would be unnatural for the kinetic term of a $d=3$ field
theory, but there is no such a term for the gauge field in
eq.~\eqref{BLM2}. It may be seen that these dimensions also fix the
form of the possible interaction terms in the lagrangian, which
cannot depend on any dimensional coupling constant. In spite of the
appearance of a Chern-Simons term in the lagrangian and that
standard CS terms are parity odd, the theory is parity invariant due
to the composite nature of the `twisted' $\mathcal{L}_{CS}$ term as
we shall see later.

\subsubsection{Gauge and supersymmetry transformations}
{\ }

    The BLG action is invariant under both gauge symmetry and
supersymmetry; it has $OSp(4|8)$ superconformal symmetry
\cite{Ba-Li-Schw:08}. The non-propagating gauge fields
$A^{ab}$ are needed for the closure of the supersymmetry algebra
transformations \cite{Ba-La:07a} which are given below but that will not
be discussed here. By standard arguments \cite{De-Ja-Te:82a,De-Ja-Te:82b},
the invariance of the quantum theory under `large'
gauge transformations implies the quantization
of the coefficient of the CS term \cite{Ba-La:07b}.
For the original $A_4$ BLG model, the geometry of the
$\mathcal{L}_{CS}$ term (sec.~\ref{sec:CSterm}) leads to
the quantization condition $\frac{1}{g}=\frac{k \hbar}{2\pi}$
where $k$ is an integer. As a result of this quantization,
the theory does not contain any continuous parameter, and thus it
must be conformally invariant \cite{Ba-La:07b} to all orders of
perturbation theory since there are no coupling constants to run.
Further, it is possible to redefine the three-bracket and the gauge
fields so that $g$ disappears from eq.~\eqref{BLM2}. In fact,
it had been known for some time that three-dimensional Chern-Simons
gauge theories were themselves conformally invariant
\cite{Che-Se-Wu:92,DC-Fr-He-Pi:99}, both the pure gauge
theories and those coupled to massless matter fields. The problem,
thus, was how to incorporate the extended supersymmetries
needed to give a dual description of M2-branes, and this is
what the gauge BLG model succeeded in doing.

  The gauge transformations of the different fields are
given by (they will be rewritten in a more geometrically
transparent way in Sec.~\ref{sec:CSterm})
\begin{eqnarray}
\label{BLM5}
    \delta X^{Ia} & = &\lambda^{cd} {f_{cdb}}^a X^{Ib} \nonumber \\
   \delta \Psi^{a} &=& \lambda^{cd} {f_{cdb}}^a \Psi^{b} \nonumber \\
   \delta ({f_{cdb}}^a A^{cd}_\mu ) & = & \partial_\mu ( {f_{cdb}}^a
   \lambda^{cd}) +
   2 {f_{cdb}}^a {f_{efg}}^{c} \lambda^{dg} A^{ef}_\mu
\end{eqnarray}
These transformations actually correspond to the Lie$\,\fG$ (Sec.
\ref{sec:CSterm}) gauge group algebra. For $\fG=A_4$,
Lie$\,A_4=su(2) \oplus su(2)$ (Ex.~\ref{ex:LieA4}); the gauge group
of the $\mathcal{N}=8$ BLG theory with positive definite
metric is thus $SU(2)\times SU(2)$. Further, for $A_4$, the
$f_{cdb}{}^a$ can be removed in the third line of eq.~\eqref{BLM5}
above and eq.~\eqref{BLM6} below.

The supersymmetry transformations are given by
\begin{eqnarray}
\label{BLM6}
    \delta_\epsilon X^I &=& i\bar{\epsilon} \Gamma^I
    \Psi\nonumber\\
    \delta_\epsilon \Psi &=& \mathcal{D}_\mu X^I \Gamma^\mu\Gamma^I \epsilon
    - \frac{g}{3!} [X^I,X^J,X^K] \Gamma^{IJK} \epsilon \nonumber \\
    \delta_\epsilon ( {f_{cdb}}^a A^{cd}_\mu) &=& i g {f_{cdb}}^a \bar{\epsilon} \Gamma_\mu
    \Gamma^I X^{Ic} \Psi^d  \; ,
\end{eqnarray}
where the supersymmetry parameter $\epsilon$ has the
standard dimensions $[\epsilon]=L^{\frac{1}{2}}$.

\subsubsection{Physical considerations and ternary algebras}
\label{sec:BLG-phys}
{\ }

   It was soon realized that the non-trivial gauge symmetry of the model,
Lie$A_4=SU(2)\times SU(2)$ (short, {\it e.g.}, of a $SU(N)$ one),
could not give rise to the moduli space of a stack of M2-branes (see
\cite{Ba-La:08,Raam:08}) as initially hoped for. In fact, it was
argued in \cite{La-To:08,Di-Mu-Pa-Raa:08} that the $A_4$ BLG model
at level $k=1$ describes two M2-branes propagating in a
$\mathbb{R}^8/Z_2$ orbifold background (for general $k$, on a
`M-fold'). Thus, the original BL model does not describe a number
$N>2$ of M2-branes, a fact that might have been expected
\cite{Papad:08} from the smallness of its gauge group.

There is, nevertheless, the possibility of relaxing the assumption of
positive definiteness of the metric. There are models with
Lorentzian signature on the 3-Lie algebra
\cite{Go-Mi-Ru:08,Be-RG-To-Ve:08,Ho-Im-Mat:08}, although these
immediately raise the question of the unitarity of the quantized
theory (see also \cite{deMe-Fi-M-E-Rit:09} for this point and
\cite{JMF-nLie:08} for $n$-Lie algebras with Lorentzian metric). The
Lorentzian FAs are obtained from a semisimple Lie algebra $\fg$ from
which the (fully antisymmetric) structure constants of the FA are
constructed. As for the Lorentzian theory in \cite{Go-Mi-Ru:08}, it
can be recast as an ordinary gauge theory, but the Lie algebra
associated with the FA is no longer semisimple, with its Levi factor
given by the original semisimple $\fg$; as a result, the theory has
features reminiscent of those of WZW models \cite{Nappi-Wi:93} and
gauge theories \cite{Tsey:95} based on non-semisimple Lie groups. It
is possible to remove the ghosts of the 3-Lie Lorentzian theories,
but the modification breaks the conformal invariance spontaneously
and reduces them to maximally supersymmetric $d=3$ YM theories
\cite{Ba-Lip-Sch:08a,Go-Rod-Raam:08}.

One may also relax the full
antisymmetry of the three-bracket\footnote{Other aspects of
three-Lie algebras have been discussed in
\cite{Gust:08a,Mor:08,Gra-Nil-Pet:08,Be-RG-To-Ve:08,Chen:09}; see
also \cite{Awa-Li-Mi-Yo:99} for early work.} as is the case of the
hermitian algebras \cite{Ba-La:08} to be mentioned below and that of
the real `generalized three-algebras' in \cite{Cher-Sa:08} (see
further \cite{Che-Do-Sa:08,Ak-Sae-Wo:09}); the gauge symmetries are
generated, of course, by ordinary Lie groups. These algebras lead to
superconformal field theories that accommodate a higher number of
M2-branes in exchange for a reduced amount of supersymmetry with
respect to the $A_4$ BLG model. Nevertheless, these and other
algebras (as in \cite{Yama:08}) will not be considered here,
although we refer to \cite{deMed-JMF-Men-Rit:08} for a discussion of
the Lie-algebraic structure of the hermitian \cite{Ba-La:08} and
real `generalized three-algebras' \cite{Cher-Sa:08}; see also
Sec.~\ref{sec:triple} for the real case. To conclude the above discussion
we will mention that, very recently, the algebras appearing in
BLG-type models have been discussed in the context of Jordan-triple
systems ({\it Lie}-triple systems were considered in
Sec.~\ref{sec:triple}); we just refer here to
\cite{Chow-Muk:09,Palm:09} for further information.

   Setting aside the above Filippov and related three-algebras
approach, it has been shown \cite{Aha-Be-Ja-Mal:08} that, giving up
the full $\mathcal{N}=8$ manifest supersymmetry, it is possible to
find $d=3$ superconformal  Chern-Simons theories with $U(N)\times
U(N)$ (and $SU(N)\times SU(N)$) gauge groups, thus providing a more
general description of the $N$ M2-brane system (see
\cite{Aha-Berg-Ja:08} and \cite{Kle-To:09} for further discussion of
the ABJM theory and \cite{Be-Bri-Sung-Ram:10} for a different,
alternative approach to BLG theory). The family of ABJM models \cite{Aha-Be-Ja-Mal:08}
have matter fields in the bi-fundamental representation of the gauge
group and a double set of gauge fields in the adjoint; they present
explicit $\mathcal{N}=6$ supersymmetry and, furthermore, the 3-Lie
algebra structure does not play any r\^ole at all in their
formulation. The $SU(2) \times SU(2)$ gauge symmetry and the
original BLG model appears as a particular $N=2$ example of the
sequence of theories in \cite{Aha-Be-Ja-Mal:08}, for which the
original $\mathcal{N}=6$ supersymmetry of twelve real supercharges
is enhanced to a $\mathcal{N}=8$ one with sixteen (see also
\cite{Ba-Li-Schw:08, Gu-Rey:09}) because the fundamental
representation of $SU(2)$ is equivalent to its conjugate. It is
stated in \cite{Aha-Be-Ja-Mal:08} that at levels $k=1$ and $k=2$ the
ABJM theories describe, respectively, the low energy limit of $N$
M2-branes in a flat and on a $\mathbb{R}^8/Z_2$ space and,
generically, on $\mathbb{R}^8/Z_k$ at level $k$ (see further
\cite{Lam-Con:10} for the equivalence of the $U(2)\times U(2)$ ABJM
theory at $k=1,2$ with the $\mathcal{N}=8$  BLG
one and \cite{Ba-Br:10} for the most general three-dimensional
$\mathcal{N}=5$ superconformal CS theories based on three algebras).

It was then shown by Bagger and Lambert \cite{Ba-La:08} that
it was possible to recover $\mathcal{N}=6$ models of the type
considered in \cite{Aha-Be-Ja-Mal:08} by making the three-algebra
complex, relaxing the full skewsymmetry of the three-bracket by
making it skewsymmetric and linear in the first two entries and
antilinear in the last and, in so doing, by moving effectively from
a 3-Lie algebra to a kind of complex (right) 3-Leibniz one
(Sec.~\ref{sec:n-Leibniz}). In this approach, the standard Filippov
identity changes to a `hermitean FI' and accordingly its expression
in terms of the structure constants due to the appearance of complex
conjugation. The resulting generalized BL model \cite{Ba-La:08}
presents $\mathcal{N}=6$ supersymmetry, $SU(4)$ R-symmetry and
$U(1)$ global symmetry. The extended worldvolume superalgebra of the
$\mathcal{N}=6$ BL model has been computed in \cite{Low:09}, where
for a particular choice of the three-bracket the  ABJM model
superalgebra with central charges has been derived.

  In order to describe a large  number $N$ or condensate
of M2-branes, it is also possible to move to the
infinite-dimensional three-algebra defined by the Nambu bracket.
This can be done in a natural way that implies, in particular,
making  the range of the FA indices of the BLG model infinite as we
will show in Sec.~\ref{sec:BLG+NB}.

\subsection{The BLG model and the Basu-Harvey (B-H) equation}
\label{sec:BLG+BH}
{\ }\\

The B-H equation \cite{Bas-Har:05} arises as a BPS condition
for the BLG theory; reproducing it was, in fact, one of its
motivations \cite{Ba-La:06,Ba-La:07b}. Indeed, the model given by
the lagrangian \eqref{BLM2} has a BPS solution that corresponds to
the $D=11$ M2-brane-M5-brane system as seen from M2, and the
condition that such a bosonic configuration satisfies is the
B-H equation \cite{Bas-Har:05}, a counterpart of the Nahm
equation \cite{Nahm:80} that appears in the D1-D3 system in $D=10$
(for generalizations of The B-H equation see \cite{Ber-Copl:05,Kri-Mac:08}
in the context of general calibrations, ref.~\cite{Bo-Ta-Zab:08}
for arbitrary $p$-algebras and \cite{Iu-Na-Sa-Ze:09} for the
B-H equations as `homotopy M-C equations').
This bosonic solution may be found in the usual way by requiring that it
preserves some supersymmetries. These configurations saturate a {\it
Bogomol'nyi}-type bound, which is a lower bound on the energy given
in terms of some charges that, in general, appear in the
supersymmetry algebra as central (when the Lorentz part is ignored).
The invariance of the BLG action under the supersymmetry
transformations \eqref{BLM6} leads \cite{Pas:08} to charges that
generate the $d=3$ $\mathcal{N}=8$ supersymmetry
algebra \eqref{M2alg}. The presence of central charges is due to the
fact that the Lagrangian density is only quasi-invariant (invariant
up to a worldvolume total derivative) under supersymmetry
transformations, and for that reason they are topological (see
\cite{Az-Ga-Iz-To:89}) {\it i.e.}, relevant in topologically
non-trivial situations. Some of the topological central charges of
the enlarged $d=3$, $\mathcal{N}=8$ superalgebra \cite{Be-Go-To:97}
appear in the Bogomol'nyi bound that corresponds to the M2-M5-brane
brane system.
\medskip

Let us see how the B-H equation appears when the preservation
of a fraction of the supersymmetries is required; actually, it will be seen
that the solutions of the B-H equation are $\frac{1}{2}$-BPS
solitons. The bosonic configurations searched for are of the type
\begin{equation}
\label{BLM7}
\begin{array}{cccccccc}
  M2: & 0 & 1 & 2 &   &  &   &   \\
  M5: & 0 & 1 &   & 3 & 4& 5 &  6 \quad ,
\end{array}
\end{equation}
which means that the membranes are extended along the 1,2
spatial directions and that the spatial directions of the
M5 brane worldvolume coincide with the (13456) hyperplane.
Thus, $x^2$ is the coordinate of the M2-brane
that is transverse to the M5-brane and $x^1$ parametrizes the M2-M5
intersection. In terms of the fields appearing in the BLG
lagrangian \eqref{BLM2}, the configuration that describes the
system \eqref{BLM7} has to have $X^3,X^4,X^5$ and $X^6$ as the only
non-constant coordinate fields $X^I$ because these are the fivebrane coordinates
transverse to the M2 worldvolume as seen in \eqref{BLM7}
so that $I=3,\dots,10 \rightarrow \mathcal{I}= 3,4,5,6$.
Also, these fields must depend only on $x^2$, which parametrizes
the `distance' of the M2 points to the M5 brane.
\medskip

The condition that a fraction of the supersymmetries is
preserved for a bosonic configuration,
$\Psi=0$, reduces to $\delta_\epsilon \Psi=0$ in
the second equation in \eqref{BLM6}, which gives
\begin{equation}
\label{BLM8}
 \partial_2 X^\mathcal{I} \Gamma^2 \Gamma^\mathcal{I} \epsilon -\frac{g}{6}
 [X^\mathcal{I},X^\mathcal{J},X^\mathcal{K}]
 \Gamma^{\mathcal{I}\mathcal{J}\mathcal{K}} \epsilon = 0 \quad,\quad
 \mathcal{I},\mathcal{J},\mathcal{K},\mathcal{L}=3,4,5,6 \quad,
\end{equation}
where the gauge field is also taken to be zero ($\mathcal{D}_2\rightarrow \partial_2$),
an {\it ansatz} the consistency of which will be shown below.
   The B-H equation arises as a condition that
the bosonic fields have to satisfy if the above equation has to
have a non-trivial $\epsilon\neq 0$ solution. To see this, we notice that
$\Gamma^{\mathcal{I}\mathcal{J}\mathcal{K}}$ can be written as
\begin{equation}
\label{BLM9}
 \Gamma^{\mathcal{I}\mathcal{J}\mathcal{K}}  =
 \Gamma^{3456}
 \epsilon^{\mathcal{I}\mathcal{J}\mathcal{K}\mathcal{L}} \Gamma^\mathcal{L} \ ,
\end{equation}
where $\epsilon^{\mathcal{I}\mathcal{J}\mathcal{K}\mathcal{L}}$ is
completely antisymmetric and defined by $\epsilon^{3456}=1$.
Inserting eq.~\eqref{BLM9} in eq.~\eqref{BLM8} we obtain
\begin{equation}
\label{BLM10}
 \partial_2 X^\mathcal{I}  \Gamma^\mathcal{I} \epsilon =\frac{g}{3!}
 \epsilon^{\mathcal{I}\mathcal{J}\mathcal{K}\mathcal{L}}
 [X^\mathcal{I},X^\mathcal{J},X^\mathcal{K}]
 \Gamma^{23456} \Gamma^\mathcal{L} \epsilon \equiv
  B^\mathcal{L} \Gamma^{23456} \Gamma^\mathcal{L}\epsilon  \; ,
\end{equation}
where the $\fG$-valued $B^\mathcal{L}$ above is
introduced to simplify the expressions below.
Since $\{\Gamma^\mathcal{I},\Gamma^\mathcal{J} \}
=2\delta^{\mathcal{I}\mathcal{J}}$, it is seen that $\Gamma^{23456}$
above commutes with  the $\Gamma^\mathcal{I}$ ($\mathcal{I}=3,4,5,6$) and
squares to the unit matrix. Now, we compute
\begin{eqnarray}
\label{BLM11}
\left<\partial_2 X^\mathcal{I} ,\partial_2 X^\mathcal{I}\right> \epsilon &=&
\left< \partial_2 X^\mathcal{I} , \partial_2 X^\mathcal{J} \right>
\Gamma^\mathcal{I} \Gamma^\mathcal{J} \epsilon = \left< \partial_2
X^\mathcal{I} \Gamma^\mathcal{I},
\partial_2 X^\mathcal{J} \Gamma^\mathcal{J} \epsilon \right>\nonumber\\
 &=& \left< \partial_2 X^\mathcal{I}
\Gamma^\mathcal{I}, \Gamma^{23456} B^\mathcal{J}
\Gamma^\mathcal{J} \epsilon\right> = \left< \partial_2 X^\mathcal{I},
B^\mathcal{J} \right> \Gamma^{23456} \Gamma^\mathcal{I}
\Gamma^\mathcal{J} \epsilon
\nonumber\\
&=&- \left< \partial_2 X^\mathcal{I}, B^\mathcal{J} \right>
\Gamma^{23456} \Gamma^\mathcal{J} \Gamma^\mathcal{I} \epsilon +
2\left< \partial_2 X^\mathcal{I}, B^\mathcal{I} \right> \Gamma^{23456}
\epsilon
\nonumber\\
&=& - \left< B^\mathcal{I}, B^\mathcal{I} \right> \epsilon +
2\left< \partial_2 X^\mathcal{I}, B^\mathcal{I} \right> \Gamma^{23456}
\epsilon \; ,
\end{eqnarray}
where eq.~\eqref{BLM10} has been used in the second and forth lines.
The above equality is equivalent to $\Gamma^{23456}\epsilon= \left(
\frac{\left<\partial_2 X^\mathcal{I} ,\partial_2
X^\mathcal{I}\right> +  \left< B^\mathcal{I}, B^\mathcal{I} \right>}
{2\left< \partial_2 X^\mathcal{I}, B^\mathcal{I} \right> } \right)
\epsilon$ and, since $(\Gamma^{23456})^2=1$,  implies that
$\Gamma^{23456}\epsilon =\pm \epsilon$ and
\begin{equation}
\label{BLM12}
   \left< \partial_2 X^\mathcal{I} \mp B^\mathcal{I},
   \partial_2 X^\mathcal{I} \mp B^\mathcal{J} \right> = 0 \; .
\end{equation}
Since the metric is positive definite, this means that the
configurations that preserve a fraction of
the supersymmetries (actually, half of the sixteen)
have to satisfy
\begin{equation}
\label{BLM13}
    \partial_2 X^\mathcal{I} -  \frac{g}{3!}
 \epsilon^{\mathcal{J}\mathcal{K}\mathcal{L}\mathcal{I}}
 [X^\mathcal{J},X^\mathcal{K},X^\mathcal{L}]= 0 \; ,
\end{equation}
selecting the upper sign.

We have still to check that the  bosonic configurations that
solve \eqref{BLM13} are actual solutions of the bosonic field equations
derived from the lagrangian \eqref{BLM2}. These are
given by
\begin{eqnarray}
\label{BLM14}
   & & \mathcal{D}^\mu \mathcal{D}_\mu X^I + \frac{g^2}{2} \left[ X^J,X^K,\left[
   X^I,X^J, X^K\right] \right] = 0 \; , \nonumber\\
   & & {f_{cdb}}^a F^{cd}_{\mu\nu} = - g\,{f_{cdb}}^a \epsilon_{\mu\nu\rho}
   X^{Jc} \mathcal{D}^\rho X^{Jd} \; ,
\end{eqnarray}
where $F^{cd}$ are the curvatures corresponding to the gauge
fields in \eqref{BLM5}, the form of which will be given
in Sec.~\ref{sec:CSterm} below. For the configurations that we are
considering, the above field equations become
\begin{eqnarray}
\label{BLM15} & & \partial^2 \partial_2 X^\mathcal{I} +
\frac{g^2}{2} \left[
X^\mathcal{J},X^\mathcal{K},\left[X^\mathcal{I},X^\mathcal{J},
X^\mathcal{K}\right] \right] = 0 \; ,
\nonumber\\
& & {f_{cdb}}^a F^{cd}_{01}= 0 = -g {f_{cdb}}^a  X^{\mathcal{J}c} \partial^2 X^{\mathcal{J}d}
\quad .
\end{eqnarray}
Inserting \eqref{BLM13} into the first equation of \eqref{BLM15} one
obtains an identity. We check now the consistency of setting $A=0$
in the covariant derivatives in eq.~\eqref{BLM8}, \eqref{BLM15}.
Since this implies $F=0$, it requires that the $r.h.s.$ of the last
equation above be also zero. Using \eqref{BLM13}, the $r.h.s.$
becomes
\begin{eqnarray}
\label{BLM16}
    &  & \frac{g}{3!} {f_{cdb}}^a
    X^{\mathcal{I}c}\epsilon ^{\mathcal{J}\mathcal{K}\mathcal{L}\mathcal{I}}
    X^{\mathcal{J}e} X^{\mathcal{K}f} X^{\mathcal{L}g} {f_{efg}}^d \nonumber\\
    & =&
    \frac{g}{3\cdot 4!} \epsilon ^{\mathcal{J}\mathcal{K}
    \mathcal{L}\mathcal{I}} X^{\mathcal{I}c} X^{\mathcal{J}e} X^{\mathcal{K}f}
    X^{\mathcal{L}g} {f_{[efg}}^d {f_{c]db}}^a = 0 \quad ,
\end{eqnarray}
which is indeed zero by the FI (eq.~\eqref{FIshort-coor}) (and, in
this particular point, any $\fG$ would do). Therefore, one may use
ordinary rather than covariant derivatives in eq.~\eqref{BLM13}.

   Equation \eqref{BLM13} is the B-H equation
\cite{Bas-Har:05} as it was written in \cite{Ba-La:06}. The original
B-H equation has the form
\begin{equation}
\label{BLM18}
   \partial_2 X^\mathcal{I} -  \frac{g}{3!}
 \epsilon^{\mathcal{J}\mathcal{K}\mathcal{L}\mathcal{I}}
 [G,X^\mathcal{J},X^\mathcal{K},X^\mathcal{L}]= 0\ ,
\end{equation}
where here $G,X^\mathcal{J},X^\mathcal{K},X^\mathcal{L}$, $G^2=1$,
are matrices and the {\it four} entries bracket is the {\it
multibracket} of a GLA {\it i.e.}, it is given by the complete
antisymmetrization of the products of its entries,
eq.~\eqref{multic}. The matrix $G$ (see \cite{Bas-Har:05} for its
definition and its connection with the construction of the fuzzy
3-spheres) has the property $\{G,X^\mathcal{I}\}=0$ and the
$X^\mathcal{I}$ may be realized as
$X^\mathcal{I}=X^{\mathcal{I}a}\Gamma_a$, as in
Th.~\ref{le:FA-Cliff}, where the two forms of the B-H equation, \eqref{BLM13} and \eqref{BLM18}, are related.

Equation \eqref{BLM13} is satisfied by the `fuzzy three-funnel' solution,
which is given by a fuzzy 3-sphere \cite{Bas-Har:05,Gur-Ram:00}
with a radius that increases as the worldvolume
coordinate $x^2$ approaches the value ($x^2=0$, say) where the
M5-brane is located with respect to M2, so that it matches with the 5-brane
worldvolume field theory solution that describes the M2-M5 brane
system from the M5-brane point of view and that is known to be
\cite{Ho-La-We:97} a  self-dual string soliton.
We refer to the literature (see \cite{Bas-Har:05} and
references therein) for details. As stated earlier,
the above supersymetric fuzzy funnel solution saturates a
Bogomol'nyi bound on the energy that involves the one-form
`central' charge of the supersymmetry algebra \cite{Be-Go-To:97,Pas:08}.
\medskip

To conclude this section, we note that eq.~\eqref{BLM13} has the
structure of the MC-like equations (Sec.~\ref{sec:MClikeFA}) that hold
for an arbitrary $n$-Lie algebra; thus, the B-H equation,
which describes M2-branes ending on M5 branes, are intimately tied
to the FA structure. This is also the case for their $n=2$
precedent, the Nahm equation \cite{Nahm:80}, which is a MC-type
equation for the D1-D3 system. The solution of the Nahm equation is
a `fuzzy two-funnel', a fuzzy two-sphere with a radius depending on
the transverse distance from the D1 brane that gets larger as it
approaches the D3 brane. In the light of other possibilities (see
\cite{Be-Gi-To:06}), one might think of extending the above $n=2,3$
pattern to higher $n$ generalizations \cite{Bo-Ta-Zab:08} which in
the present context would be tied to a higher MC-type equation for
the FA (see eq.~\eqref{MCforFAs}), but this will not be discussed
here any further (see also \cite{Iu-Na-Sa-Ze:09}).
Finally, we mention that both the Nahm and the
B-H equations have been interpreted in terms of
non-commutative geometry, where the realizations of the worldvolume
coordinate functions of the D3-brane and the M5-brane satisfy,
respectively, non-commuting relations $[X^i,X^j]=i\theta^{ij}$,
$[X^i,X^j,X^k]=i\theta^{ijk}$, where the $r.h.s.$ of the brackets
that describe the quantum geometry over the D3- and the M5-branes
are constant, completely antisymmetric matrices; see
\cite{Chu-Smith:09}.
\medskip

\subsection{The geometry of the $A_4$ BLG model CS term}
\label{sec:CSterm}
{\ }

 Let $\fG=A_4$ explicitly. Its three-bracket is
given by  $[\be_a,\be_b,\be_c] = -{\epsilon_{abc}}^d \be_d$,
(eq.~\eqref{3-basis}) where indices
are raised and lowered with the euclidean metric. We also
know that the Lie${\,\fG}$=Lie$A_4$ bracket has the generic form of
eq.~\eqref{3Cs} with structure constants given by eq.~\eqref{3Csb}
since $A_4$ is simple. Further, there is a one-to-one
correspondence between the fundamental objects $\mathscr{X}_{ab}$
of $A_4$ and the $A_4$ endomorphisms $\mathscr{X}_{ab}\cdot =
ad_{\mathscr{X}_{ab}}\,$ that determine Lie${\,A_4}=so(4)$.
Thus, for notational simplicity, we may use here
$\mathscr{X}_{ab}$ to denote the elements of $so(4)$ as well
(Ex.~\ref{ex:LieA4}),
the commutation relations of which are expressed in the familiar
form \eqref{orthog} using the dual basis.

 To introduce $SO(4)$ gauge fields, we start
 from the $so(4)$ Maurer-Cartan equations written
 for one-forms $\omega^{ab}$ dual to the $so(4)$ basis elements
 $\mathscr{X}_{ab}=-\mathscr{X}_{ba}$,
 \begin{equation}
 \label{iCS2}
     \omega^{a_1a_2}(\mathscr{X}_{b_1b_2})=
     \delta^{[a_1}_{b_1} \delta^{a_2]}_{b_2} \quad \mathrm{or} \quad
     \omega^A(\mathscr{X}_B) = \delta^A_B \ .
 \end{equation}
 The capital letters $A,B$ label the pairs $(a_1 a_2)\,,\, (b_1 b_2)$
 with $a_1<a_2\,$, $b_1<b_2$. Then,
 the MC equations \eqref{3MC} for Lie${\,A_4}$ read
 \begin{eqnarray}
 \label{iCS3}
    d\omega^{c_1c_2} & = & d\omega^C = -\frac{1}{2} {C_{AB}}^C
    \omega^A \wedge \omega^B \nonumber\\
    & = & -\frac{1}{2} \frac{1}{4} {C_{a_1a_2b_1b_2}}^{c_1c_2}
    \omega^{a_1a_2}\wedge \omega^{b_1b_2} \nonumber
    \; = \; \frac{1}{4} {\epsilon_{a_1a_2b_1}}^{[c_1}
    \delta^{c_2]}_{b_2} \omega^{a_1a_2}\wedge \omega^{b_1 b_2}\; .
\end{eqnarray}
The gauge fields $A^{c_1c_2}$ are the $so(4)$-valued connection
one-forms obtained by `softening' the MC forms $\omega^{c_1c_2}$
(all forms are assumed to be defined on the appropriate manifolds,
the coordinates of which are omitted). The Cartan structural equations
then provide the curvatures or field strengths $F$,
\begin{eqnarray}
\label{iCS4}
    F^{c_1c_2} & = & F^C = (dA+A\wedge A)^C = dA^C  +\frac{1}{2} {C_{AB}}^C
    A^A \wedge A^B \nonumber\\
    & = & dA^{c_1c_2}+\frac{1}{2} \frac{1}{4} {C_{a_1a_2b_1b_2}}^{c_1c_2}
    A^{a_1a_2}\wedge A^{b_1b_2} \nonumber\\
    & = &dA^{c_1c_2} -\frac{1}{4} {\epsilon_{a_1a_2b_1}}^{[c_1}
    \delta^{c_2]}_{b_2} A^{a_1a_2}\wedge A^{b_1b_2}\ .
\end{eqnarray}
These curvatures are covariant under the Lie${\,A_4}$ gauge
transformations of $A$, which are given by
\begin{eqnarray}
\label{iCS7}
   \delta_\lambda A &=& {\mathcal D}\lambda= d\lambda + [A,\lambda]\ ,\nonumber\\
   \delta_\lambda A^{c_1c_2} &=& d\lambda^{c_1c_2}+\frac{1}{4} {C_{a_1a_2b_1b_2}}^{c_1c_2}
    A^{a_1 a_2} \lambda^{b_1 b_2}\; .
\end{eqnarray}

The covariant derivatives of zero-forms such as the matter
fields, objects of the form $V=V^a\be_a$ where the basis $\{\be_a\}$
carries a representation of the gauge group,
are defined easily. In the BLG model the $V$'s denote
bosonic or fermionic worldvolume fields, $V=X^I\,,\,\Psi$.
Under Lie${\,A_4}$, $V$ transforms by
\begin{eqnarray}\label{iCS5}
    \delta V &=& \delta V^d \be_d = \frac{1}{2} (-\lambda^{ab}) [\be_a,\be_b,\be_c]
    V^c \nonumber\\
    &=& \frac{1}{2} (-\lambda^{ab}) \mathscr{X}_{ab} \cdot \be_c V^c
    \equiv (-\lambda) \cdot V \nonumber\\
    &=& \frac{1}{2} \lambda^{ab} {\epsilon_{abc}}^d V^c \be_d \quad \Rightarrow \quad
     \delta V^d= \frac{1}{2} \lambda^{ab} {\epsilon_{abc}}^d V^c  \; ,
\end{eqnarray}
where the dot is the usual adjoint action (see eq.~\eqref{xidot}),
and the $\lambda^{ab}$ parameter is $-2$ times the one that
appears in the BLG literature because there the $1/2$ factor is
not added. Then, the covariant derivative of the matter fields
is defined by
\begin{equation}\label{iCS6}
   {\mathcal D}V=dV+A\cdot V\ ,
\end{equation}
where $A=A^A \mathscr{X}_A= \frac{1}{2}A^{ab} \mathscr{X}_{ab}$.
Note that the gauge fields used here are minus twice the ones used
in the original BL and subsequent papers, because their covariant
derivative is defined with a minus sign and there is no compensating
$1/2$ factor in the BL definition of the components of the gauge
field.
\medskip

The above gauge fields (connection forms) and field strengths
(curvatures) may now be used to construct a Chern-Simons (CS)
three-form by the standard Chern-Weil theorem. First, an
invariant four-form $P(F)$ is introduced with the help of an
invariant symmetric tensor, $k_{AB}=k_{a_1a_2b_1b_2}=k_{BA}$,
\begin{equation}
\label{iCS8}
   P(F)= k_{AB} F^A\wedge F^B=\frac{1}{4}k_{a_1a_2b_1b_2} F^{a_1a_2}\wedge
   F^{b_1b_2} \equiv H \; ,
\end{equation}
denoted $H$ for short (not need of worrying here about
factors).  The Lie$\,{\fG}$-invariance of the
polynomial guarantees that the four-form $H$ is closed and, since
gauge free differential algebras are contractible, that
$H=d\Omega$. The {\it Chern-Simons three-form} $\Omega$ for
the symmetric invariant polynomial $k_{AB}$ is given by
\begin{equation}
\label{CSgral}
\Omega  = k_{AB}\left( A^A\wedge dA^B +\frac{2}{3} A^A\wedge (A\wedge A)^B\right) \quad ,
\nonumber
\end{equation}
so that
\begin{eqnarray}
\label{iCS9}
 \Omega & = & k_{AB}\left( A^A\wedge dA^B +\frac{1}{3} {C_{CD}}^B
 A^A\wedge A^C \wedge A^D\right) \nonumber\\
 &=& \frac{1}{4} k_{a_1a_2b_1b_2}\left( A^{a_1a_2}\wedge dA^{b_1b_2} + \frac{1}{3}
 \frac{1}{4} {C_{c_1c_2d_1d_2}}^{b_1b_2}
 A^{a_1a_2}\wedge A^{c_1c_2} \wedge A^{d_1d_2}\right)\ .
\end{eqnarray}

  We now look for rank two $SO(4)$-invariant polynomials; they were given in
Sec.~\ref{inv-ten}. The first and obvious one $k^{(1)}_{AB}$ is the
Killing metric (eq.~\eqref{assoc-Casim}):
\begin{equation}
\label{iCS10}
k^{(1)}_{a_1a_2b_1b_2} = \delta_{AB} =\delta_{b_1[a_1} \delta_{a_2]b_2} \;.
\end{equation}
Additionally, there is also the possibility of taking
(eq.~\eqref{2nd-inv})
\begin{equation}
\label{iCS11}
    k^{(2)}_{a_1a_2b_1b_2} = \epsilon_{a_1a_2b_1b_2} = k^{(2)}_{b_1 b_2 a_1 a_2 }\;.
\end{equation}
This independent metric exists because $so(4)$ is not simple
(see Table in Sec.~\ref{sec:compGtable}). The above $k^{(1)}$ and
$k^{(2)}$ are obviously invariant; to see it explicitly,
it suffices to check (eq.~\eqref{kinva}) with the
$C$'s for $A_4$ that
\begin{equation*}
\label{iCS12}
    C_{c_1c_2d_1d_2a_1a_2}= k_{a_1a_2b_1b_2}{C_{c_1c_2d_1d_2}}^{b_1b_2}
\end{equation*}
is antisymmetric under the interchange of $(a_1a_2)$ and
$(d_1d_2)$ (or of $(a_1a_2)$ and $(c_1c_2)$).

   The CS term that appears in the action of the
BLG model, eq.~\eqref{BLM4},  is the one obtained by using $k^{(2)}$
in eq.~\eqref{iCS11} \cite{Gustav:08} {\it i.e.}, the polynomial
that does not admit a $so(n)$ generalization\footnote{The
$A_4$-based BLG model which is considered here selects the Lie
algebra $so(4)$. It is easy to see why the polynomial \eqref{iCS11}
does not generalize for arbitrary $n$. The existence of a metric
determined by the fully skewsymmetric tensor of a
($n+1$)-dimensional space would require $2(n-1)=n+1$, hence $n=3$.
This is the obstruction which prevents moving from the original BLG
$SO(4)$ model to a $SO(n+1)$ one.}. Indeed, using the results of
Sec.~\ref{sec:coord3alg}, we obtain
\begin{equation}
\label{iCS15}
    \Omega^{(2)} = \frac{1}{4}\epsilon_{a_1a_2b_1b_2} \left( A^{a_1a_2}\wedge
    dA^{b_1b_2} - \frac{1}{3} {\epsilon_{c_1c_2d_1}}^{b_1} A^{a_1a_2}
    \wedge A^{c_1c_2} \wedge A^{d_1b_2}\right)\ ,
\end{equation}
which coincides, up to a global factor, with the CS term
\eqref{BLM4} of the BLG model \cite{Ba-La:07a,Ba-La:08}, once the
second term of (\ref{iCS15}) is multiplied by
the factor $-2$ that appears when moving to the gauge fields
commonly used in the BLG literature.

   Let us now see explicitly that
the CS terms for $k^{(1)}$ and $k^{(2)}$ are the sum and the
difference, respectively, of the CS terms for the two $so(3)$
components of $so(4)=so(3)\oplus so(3)$. Define
\begin{equation}\label{iCS16}
    \omega_\pm^1 = \omega^{14}\pm \omega^{23} \ ,\ \omega_\pm^2 =
    \omega^{24}\pm \omega^{31}\ ,\ \omega_\pm^3 = \omega^{34}\pm
    \omega^{12}  \qquad   \left( \omega_\pm^i = \omega^{i4} \pm
    \frac{1}{2}\epsilon_{iab4} \omega^{ab} \right) \; ,
\end{equation}
$i=1,2,3$. This new basis splits explicitly
$so(4)$ into its two (plus and minus) $so(3)$
components since, using (\ref{iCS3}), we find,
\begin{equation}\label{iCS17}
    d\omega^1_\pm = - \omega^2_\pm \wedge \omega^3_\pm \ ,\quad
  d\omega^2_\pm = - \omega^3_\pm \wedge \omega^1_\pm \ ,\quad
  d\omega^3_\pm = - \omega^3_\pm \wedge \omega^1_\pm \; ,
\end{equation}
which are the MC equations of two $so(3)$ copies in the
standard basis. In this basis the two $so(3)$ gauge
fields  $A^i_\pm$ and corresponding curvatures $F^i_\pm$ are given by
\begin{equation}
\label{iCS18}
    A_\pm^i = A^{i4} \pm
    \frac{1}{2}\epsilon_{iab4} A^{ab}\ ,\quad F_\pm^i = F^{i4} \pm
    \frac{1}{2}\epsilon_{iab4} F^{ab}  \; .
\end{equation}

   To show that $\Omega^{(1,2)}$,
the two CS forms obtained by using the invariant
polynomials $k^{(1,2)}$ respectively,
can be written in terms of two $so(3)$
CS three-forms for the $A^i_+$ and $A^i_-$ gauge
fields, we write
the invariant four-forms $H^{(1)}$ and $H^{(2)}$ in
terms of
\begin{equation}\label{iCS19}
    H_\pm = \sum^3_{i=1} F^i_\pm \wedge F^i_\pm = d\Omega_\pm\; .
\end{equation}
Consider first $H^{(1)}$. Using (\ref{iCS8}) and
(\ref{iCS10}) we obtain
\begin{eqnarray}\label{iCS20}
    H^{(1)} &=& \frac{1}{4} \delta_{b_1[a_1}
    \delta_{a_2]b_2} F^{a_1a_2}\wedge F^{b_1b_2} =
    \frac{1}{2} F^{a_1a_2}\wedge F_{a_1a_2} \nonumber\\
    &=&  F^{12}\wedge F_{12}+ F^{13}\wedge F_{13}+F^{14}\wedge F_{14} +
    F^{23}\wedge F_{23}+ F^{24}\wedge F_{24} + F^{34}\wedge F_{34}  \; .
\end{eqnarray}
Similarly, $H^{(2)}$ is given by
\begin{equation}\label{iCS21}
    H^{(2)}= \frac{1}{4} \epsilon_{a_1a_2a_3a_4} F^{a_1a_2}\wedge F^{a_3a_4} =
    2(F^{12}\wedge F^{34} + F^{13}\wedge F^{42} + F^{14}\wedge
    F^{23})\; .
\end{equation}
Now, computing the sum and the difference of $H_+$ and $H_-$
in eq~\eqref{iCS19},
\begin{eqnarray}
\label{iCS22}
H_+\pm H_- & = & (F^{14}+F^{23}) \wedge (F^{14}+F^{23}) +
(F^{24}+F^{31}) \wedge (F^{24}+F^{31}) \nonumber\\
& &+ (F^{34}+F^{12}) \wedge
(F^{34}+F^{12}) \nonumber\\
& & \pm (F^{14}-F^{23}) \wedge (F^{14}-F^{23}) \pm (F^{24}-F^{31})
\wedge (F^{24}-F^{31})\nonumber\\ & &\pm (F^{34}-F^{12}) \wedge
(F^{34}-F^{12}) \; ,
\end{eqnarray}
it is seen that for the plus sign the crossed terms cancel and the
squares add, giving
\begin{equation}\label{iCS23}
  \frac{1}{2} (H_+ + H_-) = H^{(1)} \; .
\end{equation}
For the minus sign it is instead the crossed terms that survive
so that
\begin{equation}
\label{iCS24}
  \frac{1}{2}(H_+ - H_-) = H^{(2)}\ .
\end{equation}

Therefore, the CS term for the Killing form $k^{(1)}$
is the sum of two ordinary $so(3)$ CS terms, whereas the `twisted'
CS term \eqref{iCS15} for $k^{(2)}$ is given
by their difference \cite{Be-Ta-Tho:08, Raam:08}. The relative minus
sign in $\Omega_+ - \Omega_-$ solves the problem of parity invariance
\cite{Ban-Li-Sch:08}, since the odd parity of the standard CS term is
compensated by requiring that parity interchanges the two
different $SU(2)$ gauge fields so that $H_+\leftrightarrow H_-$,
something that could not be done if the CS term were the single
one associated with a simple group.
{\ }

\section{A BLG model based on the Nambu bracket infinite-dimensional FA $\fN$}
\label{sec:BLG+NB}
{\ }

  So far we have described the BLG model having in mind that the
vector space of the FA $\fG$ is finite-dimensional and, actually,
the four-dimensional $A_4$. As we have seen, the $A_4$-based BLG
first model is equivalent to one based on a semisimple Lie group,
$SU(2)\times SU(2)$, and describes only two M2-branes. Although, as
mentioned in Sec.~\ref{sec:BLG-phys}, there have been other
proposals to overcome this limitation, the problem of finding an
action describing an arbitrary number $N$ of coincident M2-branes
cannot be considered fully closed (for instance, the ABJM proposal
\cite{Aha-Be-Ja-Mal:08} -which does not require an $n$-Lie algebra-
has only $\mathcal{N}=6$ manifest supersymmetry, see
Sec.~\ref{sec:BLGmod}).
 The positive definiteness of the metric of the 3-Lie algebra led
to the $A_4$, $SU(2)\times SU(2)$ BLG first model of the previous
section but, still keeping the positivity, there is one more option:
one may move to infinite-dimensional FAs.

  Let us then consider the model that results by replacing the previous
$A_4$ FA by the infinite-dimensional $n=3$ FA based on the Nambu
bracket \cite{Nambu:73} of functions (Ex.~\ref{NPcomp}).
This leads to a BLG-Nambu bracket (BLG-NB) model
describing the low energy limit of a `condensate' of $N\rightarrow
\infty$ M2-branes. To derive the BLG-NB action
\cite{Ho-Im-Mat-Shi:08,Ba-To:08} we shall take the original BLG one
in eq.~\eqref{BLM2} and replace the $A_4$ FA by the
infinite-dimensional Nambu algebra $\mathfrak{N}$ of functions on a
(compact) three-dimensional manifold $M_3$ that will be identified
with $S^3$ (the isometry group of which is $SO(4)$) for the BLG-NB
model, although we will often maintain a generic $M_3$ notation. The
transition from $A_4$ to $\fN$ will be achieved by replacing the
$A_4$ three-brackets by Nambu brackets, and the sums over the $a$
indices by sums over infinite, discrete indices $\mathfrak{a}$,
following as much as possible the finite-dimensional BLG $A_4$ case.
The use of the Nambu bracket in the context of the BLG model was
already mentioned in \cite{Ba-La:07b}, initiated in
\cite{Ho-Mat:08,Ho-Im-Mat-Shi:08} and studied in general in
\cite{Ba-To:08} (see also \cite{Ba-To:08a}). The novelty introduced
by the Nambu FA $\mathfrak{N}$ is that, in this case,
Lie$\,\mathfrak{N}$ turns out to be the infinite-dimensional Lie
algebra of the volume-preserving diffeormorphisms group of $M_3$,
{\it SDiff}$\,(M_3)$ ($S$ for `special' or volume preserving).

These (rigid) diffeomorphisms, when they are made
local by making them to depend on the $d=3$ spacetime variables,
become characterized by functions $\xi$ of six coordinates
($x^\mu, y^i$), where $x^\mu$ are the spacetime variables,
$\mu=1,2,3$, and $y^i, i=1,2,3$, are the coordinates of $y\in M_3$.
The fact that the matter fields take values in the infinite vector space
where the adjoint derivatives act, will imply that they have the
form $\phi(x,y)$, $\phi = (X,\Psi)$. It turns out that that the
BLG-NB Lagrangian provides a {\it gauge} theory of volume preserving
diffeormorphisms (of $M_3$) which has $\mathcal{N}=8$ supersymmetry
and superconformal invariance. It has been conjectured
\cite{Ba-To:08} that the $N\rightarrow \infty$ limit of the ABJM
model \cite{Aha-Be-Ja-Mal:08} might lead to the $\mathcal{N}=8$
supersymmetric BLG-NB one, in which case the ABJM model would be a
discretization of the Nambu bracket approach to be considered below.

  The study of branes as gauge theories of volume preserving
diffeormorphisms goes back to the work of Hoppe
\cite{Hop:82,Hop:88,Hoppe:96} who found that the full
diffeomorphisms group for membranes with the topology of a sphere
may be thought of as the $N\rightarrow \infty$ limit of $SU(N)$. The
case of the supermembranes was taken up in \cite{dW-Ho-Ni:88},
where it was shown that the light cone gauge-fixed action could be
considered equivalent to that of a super-Yang-Mills theory of
`symplectic' or area preserving diffeomorphisms of the membrane
surface. This is so because the $N\rightarrow \infty$ limit
\cite{Hop:88} of $SU(N)$ Yang-Mills theories may be considered as
leading to {\it SDiff}$\,(M_2)$ gauge theories in which the Lie
algebra commutator becomes the Poisson bracket of functions on $M_2$
\cite{Flo-Il-Tik:88}. This was extended to general $p>2$
super-$p$-branes in \cite{Be-Se-Ta-To:90}, where it was shown that
these may be considered as `exotic' gauge theories (but see
Sec.~\ref{BLG-NB-CS}), the gauge group being {\it SDiff}$\,(M_p)$.
It is then natural to look for volume preserving diffeomorphisms
gauge actions using general Nambu-Poisson brackets
(Sec.~\ref{sec:NP}): after all, the Poisson bracket is simply the
$n=2$ Nambu-Poisson one.

  But, before discussing the BLG-NB model, let us
go back to the Nambu algebra as an infinite-dimensional FA.

\subsection{The Nambu algebra $ \mathfrak{N} $ as an infinite-dimensional
3-Lie algebra} \label{Nambu-inf-3-FA}
{\ }

Let $M_3$ be a three-dimensional compact oriented manifold without
boundary, $M_3=S^3$. Let $y^i=(y^1,y^2,y^3)$ be local coordinates on
$M_3$, and let $\mu(y)= e(y) dy^1\wedge dy^2\wedge dy^3$ be the
volume form on $M_3$. A volume preserving diffeomorphism is defined
by a vector field $\xi(y)=\xi^i(y)\partial_i$ on $M_3$ such that
$L_\xi \,\mu=0$ where $L_\xi$ is the Lie derivative; this implies
that the components $\xi^i$ of $\xi\in${\it sdiff}$\,(M_3)$ satisfy
the condition $\partial_i (e\xi^i)=0$.
  By eq.~\eqref{Nam-vol}, the Nambu bracket of three
functions $\phi_1(y),\phi_2(y),\phi_3(y)\in \mathscr{F}(M_3)$ is
locally given by
\begin{equation}
\label{NBBLG1} \{ \phi_1(y), \phi_2(y), \phi_3(y) \} = e^{-1}(y)
\epsilon^{ijk} \partial_i \phi_1(y) \partial_j \phi_2(y)
\partial_k \phi_3(y) \ ,
\end{equation}
where $\epsilon^{ijk}$ is defined from $\epsilon^{123}=1$; the
scalar density $e$ is not necessarily derived from a metric ({\it
i.e.}, that $e=\sqrt{g}$), since no metric on $M_3$ is assumed.
However, the Nambu algebra itself is taken to be metric in the sense
of Sec.~\ref{sec:NB-metric-al}. For the antisymmetric symbol with
indices down we take $\epsilon_{ijk}$ given by $\epsilon_{123}=e(y)$
as in \cite{Ba-To:08}; then, $\epsilon^{i_1 i_2 i_3}\epsilon_{j_1
j_2 j_3}= e\,\epsilon^{i_1 i_2 i_3}_{j_1 j_2 j_3}\,$.

Since $M_3$ is compact we shall assume that there is a complete set
of functions $\mathbf{e}_{\mathfrak{a}}(y)$, where $\mathfrak{a}$
denotes a set of discrete indices, that is orthonormal with respect
to the metric \eqref{Nam-metric} such that, for a scalar function
$\phi \in \mathscr{F}(M_3)$,
\begin{equation}
\label{NBBLG2}
    \phi(y) = \sum_{\mathfrak{a}} \phi^{\mathfrak{a}}
    \mathbf{e}_{\mathfrak{a}}(y) \
    ,
\end{equation}
which is completed with
\begin{equation}
\label{NBBLG2b}
    < \mathbf{e}_\mathfrak{a}(y),\mathbf{e}_\mathfrak{b}(y)> =
    \delta_{\mathfrak{a}\mathfrak{b}} \ ,\quad \phi^\mathfrak{a} =
    < \phi(y), \mathbf{e}_\mathfrak{a}(y)> \ ,\quad
    <\phi_1(y),\phi_2(y)> = \int_{M_3}
    \mu(y) \phi_1(y) \phi_2(y) \ ,
\end{equation}
as in \eqref{Nam-metric}, together with the `resolution of the
identity' on $\mathscr{F}(M_3)$
\begin{equation}
\label{NBNEW1}
    \sum_\mathfrak{a} \mathbf{e}_\mathfrak{a}(y)
    \mathbf{e}_\mathfrak{a}(y') = \delta^3(y,y') \ ,\quad \int_{M_3}
    \mu(y) \phi(y) \delta^3(y,y') = \phi(y')  \ .
\end{equation}
We do not need specifying $\delta^3(y,y')$; it is sufficient to know
that it exists for a compact $M_3$ as $S^3$ (for instance, the
spherical harmonics satisfy $\sum^\infty_{l=1} \sum_{m=-l}^l
Y_{lm}(\theta,\phi) Y_{lm}(\theta',\phi')^* =
\delta^2(\theta,\theta';\phi,\phi') = \delta(\cos \theta- \cos
\theta') \delta(\phi-\phi')\,)$. For $S^3$, which is the $SU(2)$
group manifold, eq. \eqref{NBBLG2} just expresses the Peter-Weyl
theorem for the harmonic analysis on compact groups with $y\in S^3
\sim SU(2)$. For higher dimensional spheres $S^p,\;p>3$ (no longer
group manifolds), such an expression is related to general
hyperspherical harmonics expansions. The case of $S^3$ has been
discussed in \cite{Dow:90}; that {\it e.g.} of the spheres
$S^{3A-4}$, where $A$ is the mass number, appeared very long ago in
the $n$-body problem in nuclear physics (see {\it e.g.}
\cite{Lo-Gal:72}; see also \cite{Ramg:01} in the context of fuzzy
spheres).

 The linearity of the bracket implies, after
replacing the vectors $\phi$ by their expansions in
eq.~\eqref{NBBLG2}, that the above three-bracket is also given by
\begin{equation}
\label{NBBLG4}
    \{ \phi_1(y), \phi_2(y), \phi_3(y) \} = \sum_{\mathfrak{a}
    \mathfrak{b}\mathfrak{c}} \phi_1^\mathfrak{a}(y)
    \phi_2^\mathfrak{b}(y) \phi_3^\mathfrak{c}(y)   \{
    \mathbf{e}_\mathfrak{a}(y), \mathbf{e}_\mathfrak{b}(y),
    \mathbf{e}_\mathfrak{c}(y) \} \ .
\end{equation}
 The structure constants of this infinite-dimensional FA,
referred to the basis $\{\be_\mathfrak{a}\}$ are, by definition,
the coefficients ${f_{\mathfrak{a}\mathfrak{b}
\mathfrak{c}}}^\mathfrak{d}$ that appear in the expression
\begin{equation}
\label{NBBLG5}
    \{\mathbf{e}_\mathfrak{a}(y), \mathbf{e}_\mathfrak{b}(y),
    \mathbf{e}_\mathfrak{c}(y) \} = \sum_\mathfrak{d} {f_{
    \mathfrak{a} \mathfrak{b} \mathfrak{c}}}^\mathfrak{d}
    \mathbf{e}_\mathfrak{d}(y) \ .
\end{equation}
Therefore, using eqs.~\eqref{NBBLG2}, \eqref{NBBLG5} and
\eqref{NBBLG1} we find that the structure constants of the Nambu
algebra $\mathfrak{N}$ in the $\{\be_\mathfrak{a}\}$ basis are
given by
\begin{equation}
\label{NBBLG6}
    f_{\mathfrak{a} \mathfrak{b} \mathfrak{c} \mathfrak{d}} =
    < \{\mathbf{e}_\mathfrak{a}, \mathbf{e}_\mathfrak{b},
    \mathbf{e}_\mathfrak{c} \},\mathbf{e}_\mathfrak{d}>
    =\int_{M_3} \mu(y) e^{-1}(y) \epsilon^{ijk} \partial_i
    \mathbf{e}_\mathfrak{a}(y) \partial_j
    \mathbf{e}_\mathfrak{b}(y) \partial_k
    \mathbf{e}_\mathfrak{c}(y) \mathbf{e}_\mathfrak{d}(y)
\end{equation}
which exhibits an obvious ($\mathfrak{a}, \mathfrak{b},
\mathfrak{c}$) antisymmetry which becomes a full one
in ($\mathfrak{a}, \mathfrak{b},
\mathfrak{c}, \mathfrak{d}$) on account of
partial integration. Since the basis is orthonormal,
there is no distinction between the up and down
indices in $f$.

\subsection{The BLG-NB model}
{\ }

That some forms of $p$-brane actions can be formulated by using
Nambu-Poisson brackets has been know for some time
\cite{Hoppe:96,Hop:02}. This is so because the determinant of the
induced metric $g$ that appears in the original actions can be
rewritten in terms of Nambu $(p+1)$-brackets. Here we shall restrict
ourselves to the Nambu bracket realization of the BLG theory, which
may be viewed as the low energy limit
 of a condensate of nearly coincident M2-branes.
To construct its lagrangian in a direct way, we have to
extend the $A_4$ BLG one to the Nambu algebra $\fN$ case (see below
eq.~\eqref{Nam-n-bra}) which, in essence, means that
we have to look at the consequences of replacing the previous
finite range index $a=1,\dots,\mathrm{dim}\,\fG$ by the discrete
infinite  $\mathfrak{a}$. One might think of constructing the
lagrangian below using a ($n>$3)-dimensional manifold $M_3$ and
the corresponding $n$-Nambu-Poisson bracket, but there are
difficulties for $n\geq 4$ \cite{Ba-To:08}. From
the present point of view, $n=3$ is also
natural by an extension of the dimensional arguments of the
finite 3-Lie algebra case. As a result, the BLG-NB theory should
be an infinite-dimensional Nambu bracket algebra version of the BLG
model, in which the CS term would be constructed from the Nambu algebra
analogue of the invariant polynomial $k^{(2)}$
in eq.~\eqref{iCS11}.

  To show this, let us begin with the matter
fields appearing in the BLG-NB model. These depend on the
three-dimensional worldvolume Minkowski coordinates $x^\mu
=(x^0,x^1,x^2)$ as before but now include the
$\mathfrak{N}$-algebra basis index $\mathfrak{a}$. Therefore, as
vectors in $\mathfrak{N}$ depending on the worldvolume
coordinates, these fields have the coordinate expansions
\begin{equation}
\label{NBBLG7}
    X^I(x,y) = \sum_\mathfrak{a} X^{I\mathfrak{a}}(x)
    \mathbf{e}_\mathfrak{a}(y)\ , \quad \Psi^I(x,y) = \sum_\mathfrak{a}
    \Psi^{I\mathfrak{a}}(x)
    \mathbf{e}_\mathfrak{a}(y)\ ,
\end{equation}
in which the sum over the index $a$ for $A_4$ has been replaced by a
sum over the set of indices $\mathfrak{a}$ for $\mathfrak{N}$. The
remaining fields of the BLG-NB model are the vector spacetime
fields. Their components will be given by $A_\mu(x)^{\mathfrak{a}
\mathfrak{b}}= -A_\mu(x)^{\mathfrak{b} \mathfrak{a}}$, in parallel
with $A_\mu^{ab}(x)=-A_\mu^{ab}(x)$ for the finite case. This is
because the $A_\mu$ fields depend on the indices that characterize
the fundamental objects in $\fN\wedge \fN$, hence their double,
antisymmetric $\mathfrak{a} \mathfrak{b}$ indices. {\ }

\subsubsection{The BLG-NB lagrangian}
{\ }

   To construct the BLG-NB lagrangian \cite{Ba-To:08,Ho-Im-Mat-Shi:08}
an invariant metric on $\mathfrak{N}$ is needed. It was seen in
Sec.~\ref{sec:NB-metric-al} that the Nambu algebra may be made
metric. Using the scalar product \eqref{Nam-metric}  the BLG
lagrangian in eq.~\eqref{BLM2}, when the Filippov bracket is taken
to be the Nambu one, leads to
\begin{eqnarray}
\label{NBBLG7a}
  \mathcal{L}_{BL-NB}(x) & = &
   \int_{M_3} d^3y\, e(y)\left( -\frac{1}{2}\mathcal{\mathcal{D}}_\mu X^I(x,y)
  \mathcal{\mathcal{D}}^\mu X^I(x,y)
  + \frac{i}{2}  \bar{\Psi}(x,y) \Gamma^\mu \mathcal{D}_\mu \Psi (x,y)
  \right.
  \nonumber\\
   & & -g\, \frac{i}{4}  \left\{ \bar{\Psi}(x,y), X^I(x,y), X^J(x,y)\right\}
   \Gamma_{IJ} \Psi(x,y) \nonumber\\
   & &-\left. \frac{g^2}{2\cdot 3!}
   \left\{
   X^I(x,y),X^J(x,y),X^K(x,y)\right\} \left\{
   X^I(x,y),X^J(x,y),X^K(x,y)\right\} \right) \nonumber\\
  & & + \frac{1}{g}\,\mathcal{L}_{CS}\quad ;
\end{eqnarray}
the explicit form of $\mathcal{L}_{CS}$ will be derived below.

  The covariant derivative of the matter $\phi(x,y)$ fields in
eq.~\eqref{NBBLG7a} is given by ({\it cf.} eq.~\eqref{BLM3})
\begin{eqnarray}
\label{NBBLG7b}
   \mathcal{D}_\mu \phi(x,y) &=& \sum_\mathfrak{d} \mathcal{D}_\mu
   \phi^\mathfrak{d}(x) \mathbf{e}_\mathfrak{d}(y) \nonumber\\
   &=& \sum_\mathfrak{d}\left( \partial_\mu \phi^\mathfrak{d}(x) -
   \sum_{\mathfrak{a}\mathfrak{b}\mathfrak{c}} {f_{
    \mathfrak{a} \mathfrak{b} \mathfrak{c}}}^\mathfrak{d}
    A_\mu^{\mathfrak{a} \mathfrak{b}}(x) \phi^\mathfrak{c}(x)
    \right) \mathbf{e}_\mathfrak{d}(y) \; =\; \partial_\mu \phi(x,y)-\nonumber\\
    & -& \sum_{\mathfrak{a} \mathfrak{b}
    \mathfrak{c}\mathfrak{d}} \epsilon^{ijk} \int_{M_3} \mu(y')
    e^{-1}(y')
    \partial_{y'^i} \mathbf{e}_\mathfrak{a}(y') \partial_{y'^j}
    \mathbf{e}_\mathfrak{b}(y') \partial_{y'^k}
    \mathbf{e}_\mathfrak{c}(y') \mathbf{e}_\mathfrak{d}(y')
    A_\mu^{\mathfrak{a} \mathfrak{b}}(x)
    \phi^\mathfrak{c}(x) \mathbf{e}_\mathfrak{d}(y) \nonumber\\
    &=& \partial_\mu \phi(x,y)- e^{-1}(y) \epsilon^{ijk}
     \sum_{\mathfrak{a} \mathfrak{b}} \partial_i
     \mathbf{e}_\mathfrak{a}(y) \partial_j
     \mathbf{e}_\mathfrak{b}(y)A_\mu^{\mathfrak{a}
     \mathfrak{b}}(x) \partial_k \phi(x,y)\nonumber\\
  &\equiv & \partial_\mu \phi(x,y) + s^k_\mu(x,y) \partial_k
  \phi(x,y)\ ,
\end{eqnarray}
where we have used \eqref{NBNEW1} and  expression \eqref{NBBLG6}
for the structure constants of the Nambu algebra to compute the
term multiplying $\partial_k\phi$ above. This is given by
\begin{equation}
\label{NBBLG11n}
    s^k_\mu(x,y) = - e^{-1}(y) \epsilon^{ijk}
     \sum_{\mathfrak{a} \mathfrak{b}} \partial_i
     \mathbf{e}_\mathfrak{a}(y) \partial_j
     \mathbf{e}_\mathfrak{b}(y)A_\mu^{\mathfrak{a}
     \mathfrak{b}}(x)\ .
\end{equation}
The above expression may be equivalently written as
\begin{equation}
\label{NBBLG7c}
 s^k_\mu(x,y)= e^{-1}(y) \epsilon^{ijk} \partial_i A_{j\mu} \quad
 \textrm{with} \quad A_{j\mu}(x,y):= - \sum_{\mathfrak{a} \mathfrak{b}}
     \mathbf{e}_\mathfrak{a}(y) \partial_j
     \mathbf{e}_\mathfrak{b}(y)A_\mu^{\mathfrak{a}
     \mathfrak{b}}(x) \quad ,
\end{equation}
where $A_j$ is globally defined since $M_3=S^3$. Also,
\begin{equation}
\label{NBBLG7C}
    \sum_{\mathfrak{a} \mathfrak{b}} \partial_i
     \mathbf{e}_\mathfrak{a}(y) \partial_j
     \mathbf{e}_\mathfrak{b}(y)A_\mu^{\mathfrak{a}
     \mathfrak{b}}(x) = -\frac{1}{2} \epsilon_{ijk} s^k_\mu(x,y) \
     , \quad \partial_{[i} A_{j]\mu}(x,y)= \epsilon_{ijk} s^k_\mu(x,y)
\end{equation}
on account of the antisymmetry in $i,j$ of the
$l.h.s.$

   Summarizing, the covariant derivative of the matter fields is
given by
\begin{equation}
\mathcal{D}_\mu\,\phi=(\partial_\mu+s_\mu{}^i\partial_i)\phi \qquad
i.e., \qquad  \mathcal{D}:= d+s=d + s^i\partial_i \quad ,
\end{equation}
where $d$ acts on spacetime forms; the form of $\mathcal{D}$ is thus
a consequence of the form of the $\fN$ structure constants in
eq.~\eqref{NBBLG7b}. Thus, the spacetime one-form $s=
s^i\partial_i$, $s^i(x,y)=s^i_\mu(x,y)dx^\mu$, plays the r\^ole of a
gauge connection. Since its explicit expression follows from that of
$A_{i\mu}$ in eq.~\eqref{NBBLG7c}, the spacetime one-form $s^i(x,y)$ defines
$A_j(x,y)$ up to a total $\partial_i$ derivative: $A_j(x,y)$ and
$A'{}_j(x,y)= A_j(x,y) +\partial_j \alpha(x,y)$ for some one-form
$\alpha(x,y)$ lead to the same $s$. As the
true {\it gauge field} is $s_\mu^i(x,y)$, $A_{i\mu}$ plays the
r\^ole of a {\it pre-potential} \cite{Ba-To:08}.

  Eq.~\eqref{NBBLG7c} implies that the components
$e s_\mu{}^i$ of the spacetime one-form $es^i$ satisfy the
condition
\begin{equation}
\label{conn-closed}
\partial_i (e s_\mu{}^i)=0 \quad .
\end{equation}
Hence the gauge field  defines
vector fields $s_\mu{}^i\partial_i$ on $M_3$ that are (local) volume
preserving diffeomorphisms and $s=s_\mu dx^\mu$ is a {\it
sdiff}$\,(M_3)$-valued connection one-form. Since
two pre-potentials $A'^{\mathscr{a}\mathscr{b}}(x)$
and $A^{\mathscr{a}\mathscr{b}}(x)$ lead to the same $s^i_\mu(x,y)$
if they are `pre-gauge' related, the freedom in choosing
$A_i(x,y)$ corresponds to taking elements
$A^{\mathfrak{a}\mathfrak{b}}(x)$
in $\wedge^2 \fN$ that differ in one that belongs to the
kernel of the $ad$ mapping, since this difference
will produce a zero $s$.

   Let us now write the CS part of the BLG-NB action. This
is obtained by moving from the $\mathcal{L}_{CS}(x)$ in
\eqref{BLM4} to the Nambu case. Using the structure constants of
$\mathfrak{N}$, the resulting CS lagrangian is the spacetime
three-form $L_{CS}(x)=\mathcal{L}_{CS}(x) d^3x$ given by
\begin{eqnarray}
\label{NBBLG7d}
     L_{CS}(x) &=& \frac{1}{2} \sum_{\mathfrak{a} \mathfrak{b}
     \mathfrak{c} \mathfrak{d}} \left( {f_{\mathfrak{a} \mathfrak{b}
     \mathfrak{c}}}^\mathfrak{d} A^{\mathfrak{a} \mathfrak{b}}(x)
     \wedge d A^{\mathfrak{c} \mathfrak{d}}(x) + \frac{2}{3}
     \sum_{\mathfrak{e} \mathfrak{f} \mathfrak{g}} {f_{\mathfrak{c} \mathfrak{d}
     \mathfrak{a}}}^\mathfrak{g} f_{\mathfrak{e} \mathfrak{f} \mathfrak{g}
     \mathfrak{b}} A^{\mathfrak{a} \mathfrak{b}}(x) \wedge A^{\mathfrak{c}
     \mathfrak{d}}(x) \wedge A^{\mathfrak{e}
     \mathfrak{f}}(x)\right) \nonumber\\
     &=& -\frac{1}{2} \int_{M_3} \mu(y) \left( s^k(x,y)\wedge
     dA_k(x,y) - \frac{1}{3} \epsilon_{ijk} s^i(x,y) \wedge s^j(x,y)
     \wedge s^k(x,y)\right) \, ,
\end{eqnarray}
where eq.~\eqref{NBBLG7C} has been used and $A^{\mathfrak{a}
\mathfrak{b}} (x)=A^{\mathfrak{a} \mathfrak{b}}_\mu(x)dx^\mu$, and
$s$ are spacetime one-forms; recall that $\mu(y)=e(y)\,d^3y$ and
that the structure constants $f_{\mathfrak{a} \mathfrak{b}
\mathfrak{c} \mathfrak{d}}$ (eq.~\eqref{NBBLG6}) are fully
antisymmetric.

 Note that $L_{CS}$ is not entirely written in terms of the gauge field $s^i(x,y)$: it also
requires the pre-potential $A_i(x,y)$ defined by the last equation
in \eqref{NBBLG7c}. For this reason, the above $L_{CS}$ was called
`CS-like' in \cite{Ba-To:08}; its structure will be exhibited in
Sec.~\ref{BLG-NB-CS}. $L_{CS}$ is of course well defined because
(pre-gauge) related pre-potentials $A{'}_k, A_k$ lead to the same
$L_{CS}$ due to condition \eqref{conn-closed}.
{\ }

\subsubsection{Gauge and supersymmetry transformations of
the fields of the BLG-NB model}
{\ }

The gauge and supersymmetry transformations of the fields in the
BLG-NB action can be readily obtained by looking at
eqs.~\eqref{BLM5}, \eqref{BLM6}. First, we notice that the
derivations determined by Nambu brackets with one void entry as
expressed in terms of the fundamental objects of $\mathfrak{N}$ come
from coefficients $\lambda^{\mathfrak{a}
\mathfrak{b}}=-\lambda^{\mathfrak{b} \mathfrak{a}}$, where the
$\mathfrak{a}, \mathfrak{b}$ antisymmetry reflects that of the
fundamental objects of the Nambu algebra (as the $cd$ skewsymmetry
in eq.~\eqref{BLM5}). These (rigid) infinitesimal Lie algebra
transformations are made local  by making $\lambda^{\mathfrak{a}
\mathfrak{b}}$ to depend on the $d=3$ Minkowski coordinates,
$\lambda^{\mathfrak{a} \mathfrak{b}}\rightarrow\lambda^{\mathfrak{a}
\mathfrak{b}}(x)$. Using the expression of the $\fN$ structure
constants, it is found that the gauge transformations are actually
determined by local functions $\xi(x,y)$ on the $M_3$ manifold. In
fact, we demonstrate below that the variations of the fields under
the gauge transformations of parameter $\lambda^{\mathfrak{a}
\mathfrak{b}}(x)$ are local {\it SDiff}$\,(M_3)$
transformations
\begin{eqnarray}
\label{NBBLG7f}
   \delta X^I(x,y) &=& -\xi^k(x,y) \partial_k X^I(x,y) \nonumber \\
   \delta \Psi(x,y) &=& -\xi^k(x,y) \partial_k \Psi(x,y)  \nonumber \\
   \delta s^i(x,y) &=& d\xi^i(x,y) - \xi^j(x,y) \partial_j s^i(x,y) +
   \partial_j \xi^i(x,y) s^j(x,y)\; ,
\end{eqnarray}
determined by $\xi^k(x,y)$.

   To see how $\xi(x,y)$ appears, let us compute
$\delta\phi$ to show how the original $(\mathfrak{a},
\mathfrak{b})$ dependence of $\lambda$ leads to a $y$ dependence
of $\xi$. This will also identify $\xi(y)$ as a volume preserving
diffeomorphism. The variation is given by the adjoint derivative
$\sum_{\mathfrak{a} \mathfrak{b}}\lambda^{\mathfrak{a}
\mathfrak{b}}(x) \{ \be_\mathfrak{a}(y), \be_\mathfrak{b}(y), \;\;
\}\,$; with $\phi=X,\Psi$ we obtain from eqs. \eqref{NBBLG2} and
\eqref{NBBLG6}
\begin{eqnarray}
  \label{NBBLG8}
    \delta \phi(x,y) &=& \sum_{\mathfrak{a} \mathfrak{b}}
    \lambda^{\mathfrak{a} \mathfrak{b}}(x) \{
    \be_\mathfrak{a}(y), \be_\mathfrak{b}(y) , \phi(x,y)
    \}
    = \sum_{\mathfrak{a} \mathfrak{b} \mathfrak{c} \mathfrak{d}}
    \lambda^{\mathfrak{a}
    \mathfrak{b}}(x) {f_{\mathfrak{a} \mathfrak{b}
     \mathfrak{c}}}^\mathfrak{d} \be_\mathfrak{d}(y)
     \phi^\mathfrak{c}(x) \nonumber\\
     &=& \sum_{\mathfrak{a} \mathfrak{b} \mathfrak{c} \mathfrak{d}}
     \lambda^{\mathfrak{a}
    \mathfrak{b}}(x) \int_{M_3} \mu(y') e^{-1}(y') \epsilon^{ijk} \partial_{y'^i}
    \be_\mathfrak{a}(y') \partial_{y'^j}
    \be_\mathfrak{b}(y') \partial_{y'^k}
    \be_\mathfrak{c}(y') \be_\mathfrak{d}(y')
    \be_\mathfrak{d}(y) \phi^\mathfrak{c}(x) \nonumber\\
    &=& e^{-1}(y) \epsilon^{ijk} \sum_{\mathfrak{a} \mathfrak{b}}
    \lambda^{\mathfrak{a} \mathfrak{b}}(x) \partial_i
    \be_\mathfrak{a}(y) \partial_j
    \be_\mathfrak{b}(y) \partial_k \phi(x,y) \ .
    \end{eqnarray}
Therefore,
\begin{equation}
\label{NBBLG8ad}
  \delta\phi(x,y)  = -  \xi^k(x,y) \partial_k \phi(x,y)  \ ,
\end{equation}
with
\begin{equation}
\label{NBBLG11}
 \xi^k(x,y)=-e^{-1}(y) \epsilon^{ijk}
\sum_{\mathfrak{a} \mathfrak{b}}
\partial_i \be_\mathfrak{a}(y)  \partial_j \be_\mathfrak{b}(y)
\lambda^{\mathfrak{a} \mathfrak{b}}(x) \quad \Rightarrow
\partial_k(e(y)\xi^k(x,y))=0
\end{equation}
({\it cf.} eq.~\eqref{NBBLG11n}) so that  $\xi\in${\it sdiff}$\,(M_3)$
and is local since
$\lambda^{\mathfrak{a} \mathfrak{b}}= \lambda^{\mathfrak{a}\mathfrak{b}}(x)$.

  The last expression in eq.~\eqref{NBBLG7f} for the gauge fields
follows similarly. The extension of the corresponding
formula in eq.~\eqref{BLM5} to the Nambu algebra case reads
\begin{equation}
\label{GV1}
   \delta \left( \sum_{\mathfrak{c} \mathfrak{d}} {f_{\mathfrak{c}
 \mathfrak{d}\mathfrak{b}}}^\mathfrak{a} A_\mu^{\mathfrak{c}
\mathfrak{d}}(x) \right) = \partial_\mu \left( \sum_{\mathfrak{c}
\mathfrak{d}} {f_{\mathfrak{c}
 \mathfrak{d}\mathfrak{b}}}^\mathfrak{a} \lambda^{\mathfrak{c}
\mathfrak{d}}(x) \right)
 + 2 \sum_{\mathfrak{c} \mathfrak{d}\mathfrak{e} \mathfrak{f}
\mathfrak{g}} {f_{\mathfrak{c}
 \mathfrak{d}\mathfrak{b}}}^\mathfrak{a}
{f_{\mathfrak{e}
 \mathfrak{f}\mathfrak{g}}}^\mathfrak{c} \lambda^{\mathfrak{d}
\mathfrak{g}}(x) A_\mu^{\mathfrak{e} \mathfrak{f}}(x) \ .
\end{equation}
Let us now compute the expression between brackets  in the
$l.h.s.$ of \eqref{GV1}. It is given by
\begin{eqnarray}
\label{GV2}
   & & \sum_{\mathfrak{c} \mathfrak{d}} \int_{M_3} \mu(y) e^{-1}(y) \epsilon^{ijk}
\partial_i \be_{\mathfrak{c}}(y)
\partial_j \be_{\mathfrak{d}}(y) \partial_k
\be_{\mathfrak{b}}(y) \be_{\mathfrak{a}}(y)
A_\mu^{\mathfrak{c} \mathfrak{d}}(x) \nonumber\\
 & &\quad\quad\quad  = -\int_{M_3} \mu(y)  s_\mu^k(x,y) \partial_k
\be_{\mathfrak{b}}(y) \be_{\mathfrak{a}}(y)
  = -\langle s_\mu^k(x) \partial_k \be_{\mathfrak{b}}\, ,
\, \be_{\mathfrak{a}} \rangle \quad ,
\end{eqnarray}
where we have used \eqref{NBBLG11n}. Similarly, the expression
between brackets appearing in the first term of the $r.h.s.$ of
\eqref{GV1} is given by
\begin{equation}
\label{GV3}
     -\langle \xi^k(x) \partial_k \be_{\mathfrak{b}},
\be_{\mathfrak{a}} \rangle \ .
\end{equation}

We now compute the second term of the $r.h.s.$ of \eqref{GV1}.
After integrating by parts with respect to $y^i$ in the integral
corresponding to ${f_{\mathfrak{c}
 \mathfrak{d}\mathfrak{b}}}^\mathfrak{a} $ and
using \eqref{NBNEW1}, it gives
\begin{equation}
\label{GV4}
 -2 \epsilon^{ijk} \epsilon^{rst} \int_{M_3} \mu(y) e^{-1}(y)
 \sum_{\mathfrak{d}\mathfrak{g}\mathfrak{e}\mathfrak{f}}\partial_j
\be_\mathfrak{d}(y) \partial_k \be_\mathfrak{b}(y)
\partial_i \be_\mathfrak{a}(y) \partial_r
\be_\mathfrak{e}(y) \partial_s \be_\mathfrak{f}(y)
\partial_t \be_\mathfrak{g}(y) \lambda^{\mathfrak{d}
\mathfrak{g}}(x) A_\mu^{\mathfrak{e} \mathfrak{f}}(x)\ .
\end{equation}
  Now we use \eqref{NBBLG11n} and
\begin{equation}
\label{GV8}
 \sum_{\mathfrak{a} \mathfrak{b}} \partial_i \be_\mathfrak{a}(y)
 \partial_j \be_\mathfrak{b}(y)
 \lambda^{\mathfrak{a}\mathfrak{b}}(x) = -\frac{1}{2}
 \epsilon_{ijk} \xi^k(x,y) \ ,
\end{equation}
which follows from eq.~\eqref{NBBLG11} since
$ \lambda^{\mathfrak{a}\mathfrak{b}}= -\lambda^{\mathfrak{b}\mathfrak{a}}$
(see the analogue for $A^{\mathfrak{a} \mathfrak{b}}_\mu(x)$ and
$s_\mu(x,y)$ in eq.~\eqref{NBBLG7C}). Then, \eqref{GV4} is equal to
\begin{eqnarray}
 \label{GV9}
   & &  \int_{M_3}  \mu(y) e^{-1}(y) \epsilon^{ijk} \epsilon_{jtl} \partial_i
   (e(y) \xi^l(x,y) s^t(x,y)) \partial_k \be_\mathfrak{b}(y)
   \be_\mathfrak{a}(y)\nonumber\\
   & & \quad\quad\quad = - \int_{M_3}  \mu(y) (\partial_i \xi^k(x,y)
   s_\mu^i(x,y)-\xi^i(x,y)\partial_i s_\mu^k(x,y) )  \partial_k
   \be_\mathfrak{b}(y)
   \be_\mathfrak{a}(y)
\nonumber\\
& & \quad\quad\quad = - \langle \, (\partial_i \xi^k(x)
   s_\mu^i(x)-\xi^i(x)\partial_i s_\mu^k(x) )  \partial_k
   \be_\mathfrak{b}\, , \,
   \be_\mathfrak{a} \, \rangle \ ,
\end{eqnarray}
using in the last equality that $\partial_i(e(y)s_\mu^i(x,y))=0\,$
and $\,\partial_i(e(y) \xi^i(x,y))=0$. Substituting
\eqref{GV2}, \eqref{GV3} and \eqref{GV9} in \eqref{GV1}, we finally
arrive at
\begin{equation}
\label{GV10}
   \langle\, \left( \delta s^k_\mu(x) - \left(\partial_\mu \xi^k(x) +
   \partial_i \xi^k(x) s_\mu^i(x) - \xi^i(x) \partial_i
   s^k_\mu(x)\right) \right) \partial_k \be_\mathfrak{b}\; ,\;
   \be_\mathfrak{a}\; \rangle = 0
\end{equation}
which, since $\mathfrak{a}$ and $\mathfrak{b}$ are arbitrary,
reproduces the gauge transformations $\delta s^i$ of the gauge
fields in eq.~\eqref{NBBLG7f}.

 Thus, the infinitesimal gauge transformations for
the matter $\phi=(X,\Psi)$ and gauge $s^i$ fields in
the BLG-NB model have all the expected standard form
\begin{equation}
\label{NBBLG13}
     \delta\phi= -L_\xi \phi  \quad ,\quad \delta s^i =d\xi + [s,\xi] \quad
     \mathrm{with}\quad \xi=\xi^i\partial_i\; ,\quad \partial_i(e\xi^i)=0 \quad \Rightarrow
     \xi\in \textit{sdiff}\,(M_3) \quad
\end{equation}
and $\delta (\mathcal{D}\phi)= -L_\xi(\mathcal{D}\phi)$ guarantees the
invariance of the action under local diffeomorphisms. We have thus
recovered the known fact \cite{Hop:88,Hoppe:96} that the adjoint
action or derivation defined by the Nambu three-bracket generates
local, $M_3$-volume preserving diffeomorphisms.  As a result, the
BLG-NB action defines a {\it SDiff}$\,(S^3$) gauge theory. The field
strength is given by
\begin{eqnarray}
\label{NBBLG13a}
F=\mathcal{D}s := ds+\frac{1}{2}[s,s] \quad  & , & F=F^i\partial_i \quad ,\nonumber \\
F^i(x,y) = d s^i(x,y) + s^j(x,y) \wedge \partial_j s^i (x,y) &
\mathrm{with}&
    \quad \partial_i (e(y) F^i(x,y))=0 \; ,
\end{eqnarray}
where $d$ acts on the spacetime functions. Under gauge
transformations, $\delta F= [F,\xi]$.

   Let us look again at the $\delta \phi$ in eq.~\eqref{NBBLG8}
to analyze further the gauge transformations. These are adjoint
transformations -derivations- given in terms of the fundamental
objects
$\mathscr{X}_{\be_\mathfrak{a}\be_\mathfrak{b}}=(\be_\mathfrak{a}(y),\be_\mathfrak{b}(y))$
(as in \eqref{(n-1)-ad-def}) with local parameters
$\lambda^{\mathfrak{a} \mathfrak{b}}(x)$, and determine elements of
the gauged Lie$\,\mathfrak{N}$ by eq.~\eqref{NBBLG11}. Given an
arbitrary $\xi^k(x,y)$ satisfying $\partial_k(e\xi^k)=0$ we may find
a $\lambda^{\mathfrak{a} \mathfrak{b}}(x)$ that generates it. The
fact that it is the functions $\xi^k(x,y)$ rather than the
$\lambda^{\mathfrak{a} \mathfrak{b}}(x)$ that determine Lie$\,\fN$
illustrates again that, for general FAs, different fundamental
objects may induce the same element in InDer$\fG$=Lie$\,\fG$ (see
\cite{Da-Tak:97} in the present Nambu algebra context). The elements
of Lie$\,\fN$ are obtained by taking the quotient by the kernel of
the $ad$ map, here characterized by (see \eqref{NBBLG8})
\begin{eqnarray}
\label{NBBLG12a}
 & & \sum_{\mathfrak{a} \mathfrak{b}}
\lambda^{\mathfrak{a} \mathfrak{b}}(x)
   (\be_\mathfrak{a},\be_\mathfrak{b}) \in \mathrm{ker}\,ad \; \Leftrightarrow  \;
   \sum_{\mathfrak{a} \mathfrak{b}}
\lambda^{\mathfrak{a} \mathfrak{b}}(x)
   \{\be_\mathfrak{a}(y),\be_\mathfrak{b}(y), \be_\mathfrak{c}(y)\}=0
   \quad \forall \be_\mathfrak{c}(y) \nonumber \\
& & \Leftrightarrow \quad  \xi^k=-e^{-1}(y) \epsilon^{ijk}
   \sum_{\mathfrak{a} \mathfrak{b}} \partial_i \be_\mathfrak{a}(y)
   \partial_j\be_\mathfrak{b}(y) \lambda^{\mathfrak{a} \mathfrak{b}}(x) = 0\; .
\end{eqnarray}
Thus, all elements of $\wedge^2\mathfrak{N}$ determined by
$\lambda^{\mathfrak{a} \mathfrak{b}}$s that satisfy the above
condition produce the trivial diffeomorphism of $M_3$ or zero element of
Lie$\,\mathfrak{N}$={\it sdiff}$\,(M_3)$.

  We conclude this section with the supersymmetry
transformations of the BLG-NB fields. These
may be similarly  found, and are given by
\begin{eqnarray}
\label{NBBLG7g}
    \delta_\epsilon X^I(x,y) &=& i \bar{\epsilon} \Gamma^I
    \Psi(x,y) \nonumber\\
   \delta_\epsilon \Psi(x,y) &=& {\mathcal{D}}_\mu X^I(x,y)
   \Gamma^\mu \Gamma^I \epsilon - \frac{g}{3!} \{ X^I(x,y),
   X^J(x,y), X^K(x,y) \} \Gamma^{IJK} \epsilon \nonumber \\
   \delta_\epsilon s^k_\mu(x,y) &=&  -i g\,e^{-1}(y) \epsilon^{ijk}
   \bar{\epsilon} \Gamma_\mu \Gamma^I \partial_i X^I(x,y)
   \partial_j \Psi(x,y)   \; ,
\end{eqnarray}
where again the last equality follows from a calculation similar
to that leading to  \eqref{GV10}.
{\ }

\subsubsection{Structure of the BLG-NB Chern-Simons term}
\label{BLG-NB-CS}
{\ }

  As mentioned, the $L_{CS}$ piece includes both the gauge
field $s^i$  and pre-potential $A_k$ one-forms
\cite{Ba-To:08}. This is also reflected in the four-form which is
obtained by taking the exterior differential of $L_{CS}$ in
\eqref{NBBLG7d}, which is given by
\begin{equation}
\label{NBBLG14}
  d L_{CS} = -\frac{1}{2} \int_{M_3} \mu(y) F^i(x,y) \wedge G_i(x,y) \ ,
\end{equation}
where $G_i$ and $F^i$ are given by
\begin{equation}
\label{NBBLG15}
  G_i(x,y) =
dA_i(x,y) -\frac{1}{2} \epsilon_{ijk} s^j(x,y) \wedge s^k(x,y)\ ,
\quad  F^i(x,y)=e^{-1}(y) \epsilon^{ijk} \partial_j G_k(x,y) \; ;
\end{equation}
$G_i$ is a `pre-field strength' two-form,
$G_i=dA_i+\frac{1}{2}s^j\wedge \partial_{[j}A_{i]}$. In spite of the mixed
appearance of the gauge field and its pre-potential in $L_{CS}$ as well
as that of the field and pre-field strengths in $dL_{CS}$, we show
below that the $L_{CS}$ term of the BLG-NB model may be constructed
using only the curvature $F$ and a suitable symmetric bilinear
(metric) on the relevant Lie algebra, Lie$\,\mathfrak{N}$={\it
sdiff}$\,(M_3$), much as the standard CS odd forms are obtained from
the finite Lie algebra invariant symmetric polynomials and the
curvature of the connection.

We saw in Sec.~\ref{sec:CSterm} that the CS term of the euclidean
$A_4$ BLG lagrangian was in fact an ordinary CS three-form obtained
form the polynomial in eq.~\eqref{iCS11} given
by the structure constants with all indices down, eq.~\eqref{2nd-inv}.
In the present infinite-dimensional situation, the analogue of $k^{(2)}$ in
\eqref{2nd-inv} is given by the  structure constants of the Nambu FA
$\fN$, namely
\begin{eqnarray}
\label{NBCS1b}
    k^{(2)}((\be_\mathfrak{a},\be_\mathfrak{b}),(\be_\mathfrak{c},\be_\mathfrak{d}))
    &=& f_{\mathfrak{a} \mathfrak{b} \mathfrak{c} \mathfrak{d}} =
      f_{\mathfrak{c} \mathfrak{d} \mathfrak{a} \mathfrak{b}} = \nonumber \\
      &=&  \epsilon^{ijk} \int_{M_3} \mu(y) e^{-1}(y) \partial_i \be_\mathfrak{a}(y)
      \partial_j \be_\mathfrak{b}(y)
      \partial_k  \be_\mathfrak{c}(y)  \be_\mathfrak{d}(y)   \; .
\end{eqnarray}
This metric is however degenerate and its kernel is given by the fundamental
objects of $\fN$ determined by $\lambda$'s that satisfy
\begin{equation}
k^{(2)} ( \sum_{\mathfrak{a} \mathfrak{b}} \lambda^{\mathfrak{a}
\mathfrak{b}}(x) (\be_\mathfrak{a} ,\be_\mathfrak{b})\,,\,
(\be_\mathfrak{c},\be_\mathfrak{d}) ) = 0 \quad \quad \forall \,
\be_\mathfrak{c},\be_\mathfrak{d} \quad .
\end{equation}

It may be checked that the above condition is equivalent to eq.
\eqref{NBBLG12a}, which determines the kernel of the $ad$ map as
defined there. Thus, taking the quotient of the space $\fN\wedge
\fN$ by ker$\,ad$, we conclude that $k^{(2)}$ is well defined on
{\it sdiff}$\,(M_3)$.
   Let us then introduce, using our $k^{(2)}$ polynomial, the
four-form $P(F)$
\begin{equation}
\label{NBCS2}
 P(F)= k^{(2)}(F,F) \; ,
\end{equation}
where we denote by $F$ the {\it sdiff}$\,(M_3)$-valued curvature
corresponding to $F^i(x,y)$ given by \eqref{NBBLG13a}. As argued,
the choice of the $\wedge^2\fN$ representative for $F$ does not
have any effect on $P(F)$ and we may use for $F(x,y)$ any
$\wedge^2\fN$-valued
\begin{equation}
\label{NBCS3}
         F(x,y)=\sum_{\mathfrak{a}
\mathfrak{b}} F^{\mathfrak{a} \mathfrak{b}}(x)\;
(\be_\mathfrak{a}(y),\be_\mathfrak{b}(y))
\end{equation}
 provided that $F^{\mathfrak{a} \mathfrak{b}}(x)$ gives rise to $F^i$,
\begin{equation}
\label{NBCS4}
    F^i(x,y) = - e^{-1} \epsilon^{ijk} \sum_{\mathfrak{a}
\mathfrak{b}} \partial_j \be_\mathfrak{a}(y) \partial_k
\be_\mathfrak{b}(y) F^{\mathfrak{a} \mathfrak{b}}(x) \; ,
\end{equation}
since then $F(x,y)\,\phi(x,y)=F^{\mathfrak{a} \mathfrak{b}}(x)
\sum_{\mathfrak{c}}\,
\{\be_\mathfrak{a}(y) ,\be_\mathfrak{b}(y) ,\be_\mathfrak{c}(y)\}\,
\phi^\mathfrak{c}(x) = F^i (x,y)\partial_i\phi(x,y)$
using eq.~\eqref{NBBLG6}. In this way,
we move from any $\wedge^2\fN$-valued representative $F(x,y)$
to $F(x,y)=F^i(x,y)\partial_i$, which is $\mathit{sdiff}(M_3)$-valued
since $\partial_j(e F^j)=0$.
Inserting \eqref{NBCS3} into \eqref{NBCS2} we compute $P(F)$ to be
\begin{eqnarray}
\label{NBCS5}
    P(F) &=& \sum_{\mathfrak{a} \mathfrak{b} \mathfrak{c}
    \mathfrak{d}} f_{\mathfrak{a} \mathfrak{b} \mathfrak{c}
    \mathfrak{d}} F^{\mathfrak{a} \mathfrak{b}}(x) \wedge
    F^{\mathfrak{c} \mathfrak{d}}(x)\nonumber \\
    &=& \epsilon^{ijk} \int_{M_3} \mu(y) e^{-1}(y) \sum_{\mathfrak{a}
    \mathfrak{b} \mathfrak{c}
    \mathfrak{d}} \partial_i
    \be_\mathfrak{a}(y) \partial_j \be_\mathfrak{b}(y) \partial_k
    \be_\mathfrak{c}(y) \be_\mathfrak{d}(y)
    F^{\mathfrak{a} \mathfrak{b}}(x) \wedge
    F^{\mathfrak{c} \mathfrak{d}}(x) \nonumber\\
 &=& - \int_{M_3} \mu(y) F^k(x,y) \wedge G_k(x,y)
\; ,
\end{eqnarray}
with (see eqs.~\eqref{NBBLG15}, \eqref{NBCS4})
$G_k(x,y) = - \sum_{\mathfrak{a} \mathfrak{b}} \be_\mathfrak{a}(y)
\partial_k \be_\mathfrak{b}(y) F^{\mathfrak{a} \mathfrak{b}}(x)$.
It then follows that $dL_{CS}$ in \eqref{NBBLG14} may be written as $dL_{CS}=
\frac{1}{2}P(F)$ and hence has the standard Chern-Weil expression for
the infinite $\mathit{sdiff}(M_3)$ algebra. Thus, the BLG-NP theory
is in this sense an ordinary (rather than `exotic', {\it cf.} \cite{Ba-To:08})
$\mathit{sdiff}(M_3)$ gauge theory: as in the $A_4$ case,
the three-form lagrangian $L_{CS}(x)$ on $d=3$ Minkowski space
given by eq.~\eqref{NBBLG7d} is obtained from an invariant
polynomial $P(F)$ on the curvature.

As the BLG CS term, the CS term of the BLG-NP model
is parity even \cite{Ba-To:08}, since the parity change in the
three-dimensional spacetime can be compensated
by a `parity flip' in the `internal' $M_3$ three-space.
The BLG-NB model, eq.~\eqref{NBBLG7a},
also leads to a BPS solution corresponding to field
 configurations $X^I(x^2,y)$ that determine a
three-sphere\footnote{The authors of \cite{Ba-To:08} assumed implicitly
that the maps $X$ of $M$ ($S^3$) on the unit 3-sphere had
degree one; one might think of the physical consequences of
having other degrees, see eq.~\eqref{degree8}.}
the radius of which goes to infinity as $x^2$ goes to zero. This is
consistent with the idea that the fuzzy sphere solution of the
B-H equation should become \cite{Ba-La:07b} a smooth one in
the $N \rightarrow\infty$ limit, in agreement with the fact that the
BLG-NP model, which is a $d=3$ superconformal $\mathcal{N}=8$
supersymmetric theory, describes a condensate of M2-branes
\cite{Ba-To:08}.

  A superfield formulation of the BLG-NP model based on a
set of eight scalar superfields constrained by a superembeding-like
equation has been given in \cite{Ban:08}. We shall not discuss this
nor comment on the possible connection between the BLG-NB model and
the large $N$ limit of the ABJM proposal \cite{Aha-Be-Ja-Mal:08},
and refer to \cite{Ba-To:08} instead. We conclude by mentioning that
the BLG-NB model has also been conjectured to describe a single
M5-brane in the strong (constant) three-form field of $D=11$
supergravity, an interpretation proposed in
\cite{Ho-Mat:08,Ho-Im-Mat-Shi:08} and further developed in
\cite{Pa-Sa-So-To:09} and \cite{Fur:10}.
\medskip

\section{$n$-ary structures: a brief physical outlook}
{\ }\\

We have described in this review the general structure of GLAs and
FAs (and other related structures) as well as some possible
applications in physics, with special emphasis in the Filippov
algebras. In general, the $n$-bracket of a FA is not defined through
a certain combination of associative products of its entries
and, in fact, this is the reason that makes it difficult to give
{\it e.g.}, matrix realizations of FAs. The simple $n>2$ FAs
are rather few, far less
than those of the $n=2$ Cartan classification: in fact, just one
type of FA for each $n>2$ if we ignore the signature of the
($n+1$)-dimensional metric real vector space on which the simple
$n$-Lie algebras may be constructed (or we consider complex simple
FAs).

A question that  immediately arises is whether there is a Filippov
`group-like' structure associated with the $n$-Lie algebra one
(besides the obvious Lie group associated with Lie$\,\fG$) {\it
i.e.}, whether there is an integrated version of FAs of which these
would be the linear approximation. The answer to this question is
unknown; in fact, as far as we are aware, the problem of finding a
Filippov `group-like' manifold associated to general FAs has not
even been discussed. The case of the ($n>2$)-Leibniz generalizations
should be even more difficult to tackle, since already for $n=2$ it
gives rise to the {\it coquecigrue} problem of Loday's algebras
mentioned in the main text. It is completely unclear whether such a
notion exists in general.

 For GLAs, the characteristic identity of the GLA bracket,
the GJI, is a necessary result of the associativity of the
composition of the elements in the (even) multibracket. We have seen
that there is an infinite number of examples which may be
constructed from the non-trivial cocycles for the cohomology  of the
simple compact Lie algebras. Although these simple
GLAs take advantage of the existence of an underlying
Lie group manifold, there is still room for other
examples. But, in general, one might argue that a consequence of the
present analysis is the `rigidity' of the ordinary Lie algebra
structure with respect to any possible $n$-ary generalizations.
These entail losing different, but essential, parts of the
properties associated with Lie algebras (as reflected by the Y-type
bifurcation that leads either to the GJI or to the FI when $n>2$)
and which the physical world seems to like.

It is well appreciated that mathematics is full of developments
prompted by or related to advances in physics. It may be that the
full physical usefulness of these higher order $n$-ary algebras
lays ahead. Nevertheless, we have already seen that, as far as
specific applications to generalized mechanical systems
are concerned (through their associated $n$-ary
General Poisson and Nambu-Poisson structures),
it is fair to say that there are not so many.
At the same time, the quantization of $n$-ary Poisson
structures is fraught with the difficulties discussed in this
review. This is not an isolated case; there have been other
mathematically interesting attempts to quantization, as {\it e.g.}
geometric quantization, which, albeit geometrically attractive,
have not met much success from the physical/practical point of view.
As for the applications of $n$-Lie algebras to problems in M-theory,
those associated with the BLG model have motivated the
present renewed interest in Filippov algebras. Some specific
types of 3-Leibniz algebras might be relevant here since,
as discussed, the full anticommutativity of the finite ($n$=3)-Lie
algebras is too restrictive.  Although, as already mentioned, there
exist alternatives to using three-algebras, it seems that these may nevertheless
provide a natural way to encode various desired symmetry
properties of the theory, at the root of which is the delicate
interplay between the three-algebras and their associated Lie
algebras. There is also, of course, the infinite-dimensional Nambu FA BLG-NP
version of the BLG model described in the previous section, which
retains full anticommutativity for its (Nambu) 3-bracket.

Having said that, it is worth recalling that the relation between
mathematics and physics is as deep as full of surprises, as
exhibited {\it e.g.} by the unexpected and recent application of
mathematical aspects of M-theory holography to condensed matter
physics (see \cite{Har:09,Horo:10} for reviews). Thus, and in spite
of the fact that the initial hopes have not been fulfilled, it
might still turn out that $n$-ary structures in general and Filippov
and Filippov-like algebras in particular have come to M-theory
physics to stay, although it is quite unclear at present
that this will be so.

But, in any case, $n$-ary algebraic structures are of course
interesting by themselves.
\medskip

\section{Appendix 1: two forms of the Filippov identity}
We present here the proof of the equivalence of the expressions
\eqref{FIshort}, \eqref{FIultrashort-a} of the FI for general $n$, using the
anticommuting ghosts $b^a$ and $c^a$ of eq.~\eqref{FIultrashort-a}.

Let us show first that the form \eqref{FIshort} of the FI
implies \eqref{FIultrashort-a}. Eq.~\eqref{FIshort}
may be rewritten as
\begin{eqnarray}
\label{FISpr1}
  & &\left[ X_{b_1},\dots ,  X_{b_{n-2}}, X_{b_{n-1}}, [X_{a_1}, \dots
 X_{a_n} ] \right]\nonumber\\
 & &\quad\quad \quad\quad = \sum^n_{l=1}
 \left[ X_{b_1},\dots ,  X_{b_{n-2}}, X_{a_l}, [X_{a_1}, \dots ,
 X_{a_{l-1}},X_{b_{n-1}}, X_{a_{l+1}}, \dots ,
 X_{a_n} ] \right]\ .
\end{eqnarray}
Now we contract this equation with $c^{b_1}\dots c^{b_k}
b^{b_{k+1}} \dots b^{b_{n-2}} b^{b_{n-1}} b^{a_1} \dots b^{a_k}
c^{a_{k+1}}\dots c^{a_n}$, for $k=0,\dots , n-2$. This results in
the following $n-1$ equations:
\begin{eqnarray}
 \label{FISpr2}
  & & [ C, \mathop{\dots}\limits^{k}, C, B,
   \mathop{\dots}\limits^{n-2-k}, B,B, [ B,
   \mathop{\dots}\limits^{k}, B, C, \mathop{\dots}\limits^{n-k}
   ,C] ]  = \nonumber\\
   & & \quad\quad\quad \quad -k [ C, \mathop{\dots}\limits^{k}, C, B,
   \mathop{\dots}\limits^{n-2-k}, B ,B ,[ B,
   \mathop{\dots}\limits^{k}, B, C,
   \mathop{\dots}\limits^{n-k}
   ,C] ] \nonumber\\
   & &\quad\quad\quad \quad -(n-k) [ C, \mathop{\dots}\limits^{k}, C, B,
   \mathop{\dots}\limits^{n-2-k}, B,C [ B,
   \mathop{\dots}\limits^{k}, B,B, C, \mathop{\dots}\limits^{n-k-1}
   ,C]] \ ,
 \end{eqnarray}
 or
\begin{equation}
\label{FISpr3}
  (k+1)R_k = -(n-k) R_{k+1} \; \quad (k=0,\dots, n-2)\ ,
\end{equation}
where we have introduced
\begin{equation}
\label{FISpr4}
  R_k \equiv [ C, \mathop{\dots}\limits^{k}, C, B,
   \mathop{\dots}\limits^{n-1-k}, B, [ B,
   \mathop{\dots}\limits^{k}, B, C, \mathop{\dots}\limits^{n-k}
   ,C]] \ , \quad (k=0,\dots ,n-1)\ .
\end{equation}
The recurrence \eqref{FISpr3} has the solution
\begin{equation}
\label{FISpr5}
  R_k = (-1)^k \frac{k! (n-k)!}{n!} R_0 \; \quad (k=1,\dots, n-1)
\end{equation}
which, taking $k=n-1$, implies $ R_0=n (-1)^{n-1} R_{n-1}$. But this
is the FI written in the form \eqref{FIultrashort-a} because $R_0$
is the bracket at its $l.h.s.$ and $(-1)^{n-1}R_{n-1}$
is the one at the $r.h.s.$ (for intance, for a 3-Lie algebra, it gives
$[B,B,[C,C,C]]=3 [C,C,[B,B,C]]$, eq.~\eqref{FIultrashort-a}).

Conversely, let us assume that the FI in the form
eq.~\eqref{FIultrashort-a} is satisfied. We use it on the set of
$n-1$ double brackets ($k=0,\dots n-2$)
\begin{equation}
\label{FISpr6}
  P_k \equiv
  [B,\mathop{\dots}\limits^{k},B,C,\mathop{\dots}\limits^{n-1-k},C,
  [B,\mathop{\dots}\limits^{n-2-k},B,
  C,\mathop{\dots}\limits^{k+2},C]] \ ,
\end{equation}
and obtain that each of the $n-1$ $P_k$ may be expressed as
\begin{eqnarray}
\label{FISpr7}
 P_k &=&  (n-2-k) [[
 B,\mathop{\dots}\limits^{k},B,C,\mathop{\dots}\limits^{n-1-k},
 C,B],
 B,\mathop{\dots}\limits^{n-3-k},B,C,\mathop{\dots}\limits^{k+2},
 C] \nonumber\\
 & & + (-1)^{(n-1)(n-2-k)} (k+2) [
 B,\mathop{\dots}\limits^{n-2-k}, B, [B,
 \mathop{\dots}\limits^{k}, B , C,\mathop{\dots}\limits^{n-1-k}
 , C], C,\mathop{\dots}\limits^{k+1}, C] \; .
 \end{eqnarray}
 Reordering the entries in the $r.h.s.$ so that parts in it are
identified with $P$s, this gives
\begin{equation}
\label{FISpr8}
  P_k =
  (-1)^{n-1} (n-2-k) P_{n-3-k} + (-1)^{n-1} (k+2) P_{n-2-k} \ ,
  \quad k=0,\dots , n-2 \;  .
\end{equation}
This system of equations actually implies that all $P_k$ vanish and,
in particular, that $P_{n-2}=0$, which is equivalent to
\eqref{FIshort}. To see it, first notice that the above equation
gives, for $k=n-2$,
\begin{equation}
\label{FISpr9}
  P_{n-2} =
    (-1)^{n-1} n P_0  \;   .
\end{equation}
Secondly, inserting the expressions of $P_{n-3-k}$ and $P_{n-2-k}$
in the $r.h.s.$ of \eqref{FISpr8} for $k=0,\dots,n-3$ leads
to the recurrence relation
\begin{equation}
\label{FISpr10}
  P_k = (n-2-k) [(k+1) P_k + (n-k-1) P_{k+1}] + (k+2) [kP_{k-1} +
  (n-k)P_k]   \;  .
\end{equation}
It can be checked that its solution is
\begin{equation}
\label{FISpr11}
  P_k = (-1)^k \frac{(k+2)! (n-2-k)!}{2(n-2)!}P_0 \quad ,\quad k=1,\dots
  n-2    \quad .
\end{equation}
For $k=2$ this gives $P_{n-2}= (-1)^{n-2} (\frac{n(n-1)}{2}) P_0$.
This relation, together with \eqref{FISpr9}, implies $P_k=0$ for all
$k$ as stated which, for $k=n-2$, reproduces the FI in the form of
eq.~\eqref{FIshort} as $P_{n-2}=0$. For instance, with $n=3$,
$P_1=0$ in eq.~\eqref{FISpr6} gives $[B,C,[C,C,C]]=0$.
\medskip

\section{Appendix 2: the Schouten--Nijenhuis bracket}
\label{SNbra}

Let $\wedge(M)=\bigoplus^n_{j=0}\wedge^j(M)\
(\wedge^0=\mathscr{F}(M) \,,\, n={\rm dim} M$), be the contravariant
exterior algebra of skewsymmetric contravariant ({\it i.e.},
tangent) tensor fields ({\it multivectors} or $j$-{\it vectors})
over a manifold $M$. Then the Schouten-Nijenhuis bracket (SNB) of
$A\in \wedge^p(M)$ and $B\in \wedge^q(M)$ is the unique (up to a
constant) extension of the Lie bracket of two vector fields on $M$
to a $\mathbb{R}$-bilinear mapping
$\wedge^p(M)\times\wedge^q(M)\rightarrow \wedge^{p+q-1}(M)$ in such
a way that $\wedge(M)$ becomes a graded superalgebra. We start by
defining the SNB for multivectors given by wedge products of vector
fields.

\medskip
\begin{definition}
\label{SN-vectors}{\ }

 Let $X_1,\ldots, X_p, Y_1,\ldots, Y_q$ be vector
fields over $M$. Then
\begin{equation}
\begin{aligned}
\label{numerari}
 [X_1\wedge...\wedge X_{p}\; ,& \;
Y_{1}\wedge \ldots \wedge Y_{q}] = \\ = \sum
(-1)^{t+s}X_{1}\wedge...\hat X_{s}...& \wedge X_{p} \wedge[X_{s}
,Y_{t}]\wedge Y_{1} \wedge ... \hat Y_{t} ... \wedge Y_{q}\quad,
\end{aligned}
\end{equation}
 where $[\ ,\ ]$ is the SNB and $\hat X$ indicates the
omission of $X$\footnote{It is not difficult to see intuitively
the origin of this generalization of the Lie algebra of vector
fields to multivectors. If $B=Y_1\wedge Y_2\dots\wedge Y_q$ is a
$q$-vector field, it is natural to define $[X_1\wedge X_2\dots
\wedge X_p\,,\,B]=\sum_{i=1}^{p}(-1)^{i+1} X_1\wedge\dots\wedge
\hat{X_i}\dots\wedge X_p \wedge [X_i\,,B]$ and then take $[X\,,\,
Y_1\wedge Y_2,\dots \wedge Y_q] =L_{X}(Y_1\wedge Y_2,\dots \wedge
Y_q)=
 \sum_{j=1}^{q} Y_1\wedge Y_2\wedge \dots Y_{j-1}
 \wedge[X\,,Y_j]\wedge Y_{j+1}\wedge\dots\wedge Y_q$,
 which leads to \eqref{numerari}.}.
 It is easy to check that \eqref{numerari} is equivalent to
 original definition \cite{Sch:40,Nij:55}.
\end{definition}

\begin{theorem} {\ }

Let $M=G$ be the group manifold of a Lie group, and let
the above vector fields $X\,,\,Y$ be LI (resp. RI) vector fields
on $G$. Then, the SNB of LI (resp. RI) skew multivector fields is
also LI (resp. RI).
\begin{proof} It suffices to recall that if $X$ is LI and $Z$
is the generator of the left translations, $L_Z X=[Z,X]=0$ by the
first equation in \eqref{LRzero}.
\end{proof}
\end{theorem}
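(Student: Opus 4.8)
The plan is to reduce left-invariance of a multivector to the vanishing of a Lie derivative and then exploit that this Lie derivative acts as a derivation of the SNB. Recall first that the generators of left translations on $G$ are precisely the right-invariant vector fields $Z\in\fX^R(G)$ (LIVF generate right translations and conversely, by the convention fixed in the footnote to \eqref{3diff}), so that a skew multivector $A\in\wedge^p(G)$ is left-invariant precisely when it is annihilated by every such generator, $L_Z A=0$ for all $Z\in\fX^R(G)$ (for $G$ connected, infinitesimal and global invariance coincide). I would take this characterization as the working definition of LI for multivectors, its crucial input being $[Z,X^L]=0$ from \eqref{LRzero}.

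The key computation rests on two facts. First, for a vector field $Z$ the SNB reduces to the Lie derivative, $[Z,A]=L_Z A$, as is clear from Definition \ref{SN-vectors} and the footnote spelling out $[X,Y_1\wedge\dots\wedge Y_q]=L_X(Y_1\wedge\dots\wedge Y_q)$. Second, the SNB obeys a graded Jacobi identity, since by construction $\wedge(G)$ is a graded Lie superalgebra under it. Combining these, $L_Z=[Z,\,\cdot\,]$ is a derivation of the SNB: for $A\in\wedge^p(G)$, $B\in\wedge^q(G)$ and $Z\in\fX^R(G)$,
\begin{equation*}
L_Z[A,B]=[L_Z A,B]+[A,L_Z B]\quad.
\end{equation*}
If $A$ and $B$ are LI then $L_Z A=0=L_Z B$ by hypothesis, whence $L_Z[A,B]=0$ for every generator $Z$ of left translations; thus $[A,B]$ is LI. The right-invariant case is identical, replacing $\fX^R(G)$ by $\fX^L(G)$ as the generators of right translations.

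Alternatively, and more concretely, I would argue directly from formula \eqref{numerari}. By $\mathbb{R}$-bilinearity of the SNB and the fact that every LI $p$-vector is a constant-coefficient combination of wedge products $X_{i_1}^L\wedge\dots\wedge X_{i_p}^L$ of basis LIVF (the space of LI $p$-vectors being isomorphic to $\wedge^p\fg$), it suffices to treat $A=X_1^L\wedge\dots\wedge X_p^L$ and $B=Y_1^L\wedge\dots\wedge Y_q^L$ with all factors LIVF. Every term on the right-hand side of \eqref{numerari} is then a wedge product of the original LIVF together with one bracket $[X_s^L,Y_t^L]$; but this bracket is again a LIVF by \eqref{LIVFG}, so each term is a wedge product of LIVF and hence LI. Therefore $[A,B]$ is a sum of LI multivectors, so it is LI.

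The step requiring the most care is the first one, namely identifying $L_Z$ with the adjoint SNB operator $[Z,\,\cdot\,]$ and establishing its derivation property from the graded Jacobi identity, together with the clean statement that left-invariance is equivalent to annihilation by the right-invariant generators $Z$. Once these are in place the conclusion is immediate, as the author's one-line proof indicates; the concrete route via \eqref{numerari} sidesteps the graded Jacobi identity at the cost of the (harmless) reduction to decomposable multivectors.
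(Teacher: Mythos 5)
Your proof is correct and takes essentially the same route as the paper: its one-line proof is precisely your infinitesimal characterization of invariance, testing a multivector by $L_Z X=[Z,X]=0$ (eq.~\eqref{LRzero}) against the generators $Z$ of left translations, which you simply make explicit via the derivation property of $L_Z$ on the SNB. Your alternative term-by-term argument from \eqref{numerari} (each term is a wedge of LIVF since $[X^L_s,Y^L_t]$ is again LI by \eqref{LIVFG}) is an equivalent unpacking of the same idea, with the minor bonus that it avoids the connectedness caveat implicit in passing from $L_Z[A,B]=0$ back to global invariance.
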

\medskip
\begin{definition} (\emph{Schouten-Nijenhuis bracket})
\label{SNBdef} {\ }

Let $A\in \wedge^p(M)$ and $B\in
\wedge^q(M)$, $p,q\leq n$, be the $p$- and $q$-vectors given in a
local chart by
\begin{equation}
A(x)={1\over p!}{A^{i_1\ldots i_p}(x)} \partial_{i_1}
\wedge\ldots\wedge \partial_{i_p}\quad,\quad B(x)={1\over
q!}{B^{j_1\ldots j_q}(x)}
\partial_{j_1}\wedge\ldots\wedge \partial_{j_q}\quad.
\label{IIvii}
\end{equation}
The SNB of $A$ and $B$ is the skewsymmetric contravariant tensor
field $[A,B]\in \wedge^{p+q-1}(M)$
\begin{equation}
\begin{aligned}
[A,B]= &{1\over (p+q-1)!} [A,B]^{k_1\ldots k_{p+q-1}}
\partial_{k_1}\wedge\ldots\wedge \partial_{k_{p+q-1}}\quad,\\
[A,B]^{k_1\ldots k_{p+q-1}}=&{1\over (p-1)!q!} \epsilon^{k_1\ldots
k_{p+q-1}}_{i_1\ldots i_{p-1} j_1\ldots j_q} A^{\nu i_1\ldots
i_{p-1}}\partial_\nu B^{j_1\ldots j_q}\\ &+{(-1)^p\over p!(q-1)!}
\epsilon^{k_1\ldots k_{p+q-1}}_{i_1\ldots i_{p} j_1\ldots j_{q-1}}
B^{\nu j_1\ldots j_{q-1}}\partial_\nu A^{i_1\ldots i_p}\quad,
\label{VIIii}
\end{aligned}
\end{equation}
where $\epsilon$ is the usual Kronecker symbol in eq.~\eqref{defep}.
\end{definition}

The SNB is graded-commutative,
\begin{equation}
[A,B]=(-1)^{pq}\,[B,A]\quad. \label{IIv}
\end{equation}
As a result, the SNB is identically zero if $A=B$ are of odd order
(or even {\it degree}; ${\rm degree}(A)\equiv{\rm order}(A)-1$).
Since $[A,B]$ is a ($p+q-1$)-vector, the SNB is also zero if $p+q-1>\hbox{dim}\,M$
and, of course, when $A,B$ are constant multivectors,
$A^{i_1\dots i_p}\not=A^{i_1\dots i_p}(x)\,,\,B^{j_1\dots j_q}\not=B^{j_1\dots j_q}(x)$.

The SNB satisfies the graded Jacobi identity,
\begin{equation}
(-1)^{pr}\,[[A,B],C]+(-1)^{qp}\,[[B,C],A]+(-1)^{rq}\,[[C,A],B]=0\quad,
\label{IIvi}
\end{equation}
where $(p,q,r)$ denote the order of $(A,B,C)$ respectively (thus,
if $\Lambda$ is of even order and $[\Lambda,\Lambda]=0$ it follows from
 \eqref{IIvi} that $[\Lambda,[\Lambda,C]]=0$).

Let $A\wedge B\in \wedge^{p+q}(M)$,
\begin{equation}
\begin{aligned}
& A\wedge B ={1\over (p+q)!}(A\wedge B)^{i_1\ldots i_{p+q}}
\partial_{i_1}\wedge\ldots\wedge\partial_{i_{p+q}}\quad, \\
&(A\wedge B)^{i_1\ldots i_{p+q}}={1\over p!q!}
\epsilon^{i_1\ldots i_{p+q}}_{j_1\ldots j_{p+q}} A^{j_1\ldots
j_{p}}B^{j_{p+1}\ldots j_{p+q}}\quad, \label{wedproduct}
\end{aligned}
\end{equation}
and let $\alpha\in\wedge_{p+q-1}(M)$ be an arbitrary
$(p+q-1)$-form, $\alpha={1\over (p+q-1)!} \alpha_{{i_1}\ldots
i_{p+q-1}}dx^{i_1}\wedge\ldots\wedge dx^{i_{p+q-1}}$. Then, the
well known formula for one-forms and vector fields,
$d\omega(X,Y)=L_X\omega(Y)-L_Y\omega(X)-i_{[X,Y]}\omega\,,$
generalizes to
\begin{equation}
i_{A\wedge B}d\alpha=(-1)^{pq+q}i_A d(i_B \alpha)+(-1)^p i_B
d(i_A\alpha)- i_{[A,B]}\alpha \quad, \label{IIi}
\end{equation}
where the contraction $i_A\alpha$ is the $(q-1)$-form
\begin{equation}
i_A\alpha(\cdot)={1\over p!}\alpha(A,\cdot)\quad,\quad
i_A\alpha={1\over (q-1)!}{1\over p!}A^{i_1\ldots i_{p}}
\alpha_{{i_1}\ldots i_{p} j_1\ldots j_{q-1}}
dx^{j_1}\wedge\ldots\wedge dx^{j_{q-1}} \quad, \label{contraction}
\end{equation}
so that, on forms, $i_Bi_A=i_{A\wedge B}$. When $\alpha$ is {\it
closed}, eq.~\eqref{IIi} provides a definition of the SNB through
$i_{[A,B]}\alpha$.

{}From the definition of the SNB it follows that
\begin{equation}
[A,B\wedge C]=[A,B]\wedge C+(-1)^{(p-1)q}B\wedge [A,C]\quad,
\label{IIii}
\end{equation}
\begin{equation}
[A\wedge B,C]=(-1)^p A\wedge [B,C]+ (-1)^{rq}[A,C]\wedge B\quad.
\label{IIiia}
\end{equation}
In particular, for the case of the SNB among the wedge product of
two {\it vector} fields
\begin{equation}
[A\wedge B,X\wedge Y]= -A\wedge [B,X]\wedge Y +B\wedge [A,X]\wedge
Y -B\wedge [A,Y]\wedge X +A\wedge [B,Y]\wedge X \quad,
\label{IIiii}
\end{equation}
so that
\begin{equation}
[A\wedge B,A\wedge B]=-2 A\wedge B\wedge[A,B]\quad. \label{IIiv}
\end{equation}
For instance, if $\Lambda$ is given by $X=X_j\wedge \partial^j\,,\,
X_j=\frac{1}{2}C_{ij}^k\partial_i$ (see \eqref{L-P-bivec}), we may
apply \eqref{IIiii} to find that the condition
$[\Lambda,\Lambda]=0$ leads to the Jacobi identity.

\begin{remark}
\label{rem:SN} As mentioned, the SNB is the unique extension of
the usual Lie bracket of vector fields which makes a $Z_2$-graded
Lie algebra of the (graded-)commutative algebra of skewsymmetric
contravariant tensors: ${\rm degree}([A,B])={\rm degree}(A)+{\rm
degree}(B)$. In it, the adjoint action is a graded derivation with
respect to the wedge product \cite{Koszul:85} (see eq.~\eqref{IIii}).
To make this graded structure explicit, it is convenient to define
a new SNB, $[\ ,\ ]'$, which differs from the original one $[\ ,\
]$ by a factor $(-1)^{p+1}$ in the {\it l.h.s.} of
\eqref{numerari}, \eqref{VIIii}:
\begin{equation}
[A,B]':=(-1)^{p+1}[A,B]\quad. \label{nuevosnb}
\end{equation}
This definition modifies \eqref{IIv} to read
\begin{equation}
[A,B]'=-(-1)^{(p-1)(q-1)}[B,A]'\equiv -(-1)^{ab}[B,A]' \quad.
\label{IIvb}
\end{equation}
where $a={\rm degree}(A)=(p-1)$, etc. Similarly, \eqref{IIvi} is
replaced by
\begin{equation}
(-1)^{pr+q+1}[[A,B]',C]'+(-1)^{qp+r+1}[[B,C]',A]'+(-1)^{rq+p+1}[[C,A]',B']=0
\quad, \label{IIvib}
\end{equation}
which in terms of the degrees $(a,b,c)$ of $A,B,C$ adopts the
graded JI form
\begin{equation}
(-1)^{ac}[[A,B]',C]'+(-1)^{ba}[[B,C]',A]'+(-1)^{cb}[[C,A]',B']=0
\quad. \label{IIvic}
\end{equation}

In fact, the multivector algebra with the exterior product and the
SNB is a Gerstenhaber algebra\footnote{ A Gerstenhaber algebra
\cite{Ger:63} is a ${\mathbb Z}$-graded vector space (with
homogeneous subspaces $\wedge^a$) with two bilinear multiplication
operators, $\cdot$ and $[\ ,\ ]$, with the following properties
($u\in \wedge^a$, $v\in \wedge^b$, $w\in \wedge^c$):
\begin{quote}
a) deg$(u\cdot v)=a+b$,
\\
b) deg$[u,v]=a+b-1$,
\\
c) $(u\cdot v)\cdot w=u\cdot (v\cdot w)$,
\\
d) $[u,v]=-(-1)^{(a-1)(b-1)}[v,u]$,
\\
e) $(-1)^{(a-1)(c-1)}[u,[v,w]]+(-1)^{(c-1)(b-1)}[w,[u,v]]+
(-1)^{(b-1)(a-1)}[v,[w,u]]=0$,
\\
f) $[u,v\cdot w]=[u,v]\cdot w+(-1)^{(a-1)b}v\cdot [u,w]\;$.
\end{quote} }, in which deg$(A)=p-1$ if $A\in \wedge^p$. Thus, the multivectors
of the form (\ref{completek}) form an abelian subalgebra of this
Gerstenhaber algebra, the commutativity (in the sense of the SNB)
being a consequence of (\ref{GJIcoord}).

The Schouten-Nijenhuis bracket definition of eq. (\ref{nuevosnb})
is used in \cite{Koszul:85, GraMaPe:93,CIMP:94,AlPe:97} and is more
adequate to stress the graded structure of the exterior algebra of
skew multivector fields; for instance, \eqref{IIvb} and
\eqref{IIvic} have the same form as in supersymmetry (see {\it
e.g.}, \cite{CoNeSt:74}). In Sec. \ref{sec:GPS}, however, we use
Def. \ref{SNBdef} for the SNB as in
\cite{Lich:77,Nij:55,BFFLS:78b,AzPePB:96b} and others.
\end{remark}

\section*{Acknowledgments}

We are grateful to I. Bandos, T. Curtright, J.M. Figueroa O'Farrill,
J.-L. Loday, A.J. Macfarlane, A.J. Mountain, J. Navarro-Faus,
A. Perelomov, J.C.  P\'erez-Bueno, M. Pic\'on,
D. Sorokin, J. Stasheff, P. K. Townsend and C. K. Zachos
for their collaboration on past joint work and/or for helpful
discussions, correspondence or references on parts of the material
covered by this review.

This work has been partially supported by the research grants
FIS2008-01980 and FIS2009-09002 from the Spanish MICINN and by
VA-013-C05 from the Junta de Castilla y Le\'on.

\providecommand{\href}[2]{#2}\begingroup\raggedright\endgroup

\bibliographystyle{utphys}
\bibliography{ReviewJAA}

\begin{thebibliography}{10%
0}

\bibitem{Kurosh}
A.~G. Kurosh, ``A cycle of papers on multioperator rings and algebras:
  multioperator rings and algebras,'' {\em Russian Math. Surveys} {\bf 24}
  (1969), no.~1, 1--13.

\bibitem{Bara-Bur:75}
T.~M. Baranovi{\v{c}} and M.~S. Burgin, ``Linear {$\Omega $}-algebras,'' {\em
  Uspekhi Mat. Nauk} {\bf 30} (1975), no.~4(184), 61--106.

\bibitem{AzBu:96}
J.~A. de~Azc\'arraga and J.~C. P\'erez-Bueno, ``{Higher-order simple {L}ie
  algebras},'' {\em Commun. Math. Phys.} {\bf 184} (1997) 669--681,
\href{http://arxiv.org/abs/hep-th/9605213}{{\tt arXiv:hep-th/9605213}}.

\bibitem{AzPePB:96a}
J.~A. de~Azc\'arraga, A.~M. Perelomov, and J.~C. P\'erez~Bueno, ``{New
  generalized Poisson structures},'' {\em J. Phys.} {\bf A29} (1996)
  L151--L157,
\href{http://arxiv.org/abs/q-alg/9601007}{{\tt arXiv:q-alg/9601007}}.

\bibitem{AzPePB:96b}
J.~A. de~Azc\'arraga, A.~M. Perelomov, and J.~C. P\'erez~Bueno, ``{The
  Schouten-Nijenhuis bracket, cohomology and generalized Poisson structures},''
  {\em J. Phys.} {\bf A29} (1996) 7993--8009,
\href{http://arxiv.org/abs/hep-th/9605067}{{\tt arXiv:hep-th/9605067}}.

\bibitem{Han-Wac:95}
P.~Hanlon and H.~Wachs, ``{On {L}ie $k$-algebras},'' {\em Adv. in Math.} {\bf
  113} (1995) 206--236.

\bibitem{Gne:95}
V.~Gnedbaye, ``{Les alg\`ebres $k$-aires et leurs op\'erads},'' {\em C. R.
  Acad. Sci. Paris, S\'erie I} {\bf 321} (1995) 147--152.

\bibitem{Gne:97}
V.~Gnedbaye, ``{Operads of $k$-ary algebras},'' {\em Contemp. Math.} {\bf 202}
  (1997) 83--114.

\bibitem{JLL:95}
J.-L. Loday, ``{La renaissance of op\'erades},'' {\em Sem. Bourbaki} {\bf 792}
  (1994-95) 47--74.

\bibitem{Mic-Vin:96}
P.~W. Michor and A.~M. Vinogradov, ``{$n$-ary {L}ie and associative
  algebras},'' {\em Rend. Sem. Mat. Univ. Pol. Torino} {\bf 53} (1996)
  373--392, \href{http://arxiv.org/abs/math.QA/9801087}{{\tt math.QA/9801087}}.

\bibitem{RACSAM:98}
J.~A. de~Azc\'arraga, J.~M. Izquierdo, and J.~C. P\'erez~Bueno, ``{An
  introduction to some novel applications of {L}ie algebra cohomology in
  mathematics and physics},'' {\em Rev. R. Acad. Cien. Exactas F\'{\i}s. Nat.
  Ser. A Mat.} {\bf 95} (2001) 225--248,
\href{http://arxiv.org/abs/physics/9803046}{{\tt arXiv:physics/9803046}}.

\bibitem{Lad.Sta:93}
T.~Lada and J.~Stasheff, ``{Introduction to SH {L}ie algebras for
  physicists},'' {\em Int. J. Theor. Phys.} {\bf 32} (1993) 1087--1104,
\href{http://arxiv.org/abs/hep-th/9209099}{{\tt arXiv:hep-th/9209099}}.

\bibitem{Lad.Mar:95}
T.~Lada and M.~Markl, ``{Strongly homotopy {L}ie algebras},''
\href{http://arxiv.org/abs/hep-th/9406095}{{\tt arXiv:hep-th/9406095}}.

\bibitem{Sta:97}
J.~Stasheff, ``{{\it Deformation theory and the Batalin-Vilkovisky master
  equation}},'' {\em in {\it Deformation theory and symplectic geometry}, D.
  Sternheimer {\it et al.} editors, Kluwer Acad. Publ.} (1997) 271--284,
  \href{http://arxiv.org/abs/q-alg/9702012]}{{\tt arXiv:q-alg/9702012]}}.

\bibitem{Be-La:09}
K.~Bering and T.~Lada, ``{Examples of homotopy {L}ie algebras},''
  \href{http://arxiv.org/abs/arXiv:0903.5433v1 [math.QA]}{{\tt
  arXiv:0903.5433v1 [math.QA]}}.

\bibitem{Filippov}
V.~T. Filippov, ``{$n$}-{L}ie algebras,'' {\em Sibirsk. Mat. Zh.} {\bf 26}
  (1985), no.~6, 126--140. {(English translation: \textit{Siberian Math. J.}
  \textbf{26} (1985), no.~6, 879--891)}.

\bibitem{Ata-Ma-Sil:08}
H.~Ataguema, A.~Makhlouf, and S.~Silvestrov, ``{Generalization of $n$-ary Nambu
  algebras and beyond},'' (2008) \href{http://arxiv.org/abs/0812.4058
  [math.RA]}{{\tt arXiv:0812.4058 [math.RA]}}.

\bibitem{Lod:93}
J.-L. Loday, ``{Une version non-commutative des alg\`ebres de {L}ie},'' {\em
  L'Ens. Math.} {\bf 39} (1993) 269--293.

\bibitem{Lod-Pir:93}
J.-L. Loday and T.~Pirashvili, ``{Universal enveloping algebras of Leibniz
  algebras and (co)homology},'' {\em Mat. Annalen} {\bf 296} (1993) 139--158.

\bibitem{Lod-Pir:96}
J.-L. Loday and T.~Pirashvili, ``{Leibniz representations of {L}ie algebras},''
  {\em J. Alg.} {\bf 181} (1996) 414--425.

\bibitem{Da-Tak:97}
Y.~L. Daletskii and L.~Takhtajan, ``{Leibniz and {L}ie algebra structures for
  Nambu algebra},'' {\em Lett. Math. Phys.} {\bf 39} (1997) 127--141.

\bibitem{Cas-Lod-Pir:02}
J.~Casas, J.-L. Loday, and T.~Pirashvili, ``{Leibniz $n$-algebras},'' {\em
  Forum Math.} {\bf 14} (2002) 189--207.

\bibitem{Kas:87}
S.~M. Kasymov, ``Theory of {$n$}-{L}ie algebras,'' {\em Algebra i Logika} {\bf
  26} (1987), no.~3, 277--297. (English translation: \textit{Algebra and Logic}
  \textbf{26} (1988), 155-166).

\bibitem{Kas:95a}
S.~M. Kasymov, ``Analogs of the {C}artan criteria for {$n$}-{L}ie algebras,''
  {\em Algebra i Logika} {\bf 34} (1995), no.~3, 274--287. {(English
  translation: \textit{Algebra and Logic} \textbf{34} (1995), no.~3, 147-154)}.

\bibitem{Ling:93}
W.~X. Ling, {\em On the structure of $n$-{{L}ie} algebras}.
\newblock PhD thesis, Siegen, 1993.

\bibitem{Ba-La:06}
J.~Bagger and N.~Lambert, ``{Modeling multiple M2's},'' {\em Phys. Rev.} {\bf
  D75} (2007) 045020,
\href{http://arxiv.org/abs/hep-th/0611108}{{\tt arXiv:hep-th/0611108}}.

\bibitem{Ba-La:07a}
J.~Bagger and N.~Lambert, ``{Gauge symmetry and supersymmetry of multiple
  M2-Branes},'' {\em Phys. Rev.} {\bf D77} (2008) 065008,
\href{http://arxiv.org/abs/0711.0955}{{\tt arXiv:0711.0955 [hep-th]}}.

\bibitem{Gustav:08}
A.~Gustavsson, ``{One-loop corrections to Bagger-Lambert theory},'' {\em Nucl.
  Phys.} {\bf B807} (2009) 315--333,
\href{http://arxiv.org/abs/0805.4443}{{\tt arXiv:0805.4443 [hep-th]}}.

\bibitem{Raam:08}
M.~Van~Raamsdonk, ``{Comments on the Bagger-Lambert theory and multiple M2-
  branes},'' {\em JHEP} {\bf 05} (2008) 105,
\href{http://arxiv.org/abs/0803.3803}{{\tt arXiv:0803.3803 [hep-th]}}.

\bibitem{Ba-La:08}
J.~Bagger and N.~Lambert, ``{Three-algebras and N=6 Chern-Simons gauge
  theories},''
\href{http://arxiv.org/abs/0807.0163}{{\tt arXiv:0807.0163 [hep-th]}}.

\bibitem{Cher-Sa:08}
S.~Cherkis and C.~Saemannn, ``{Multiple M2-branes and generalized 3-{L}ie
  algebras},'' {\em Phys. Rev.} {\bf D78} (2008) 066019,
\href{http://arxiv.org/abs/0807.0808}{{\tt arXiv:0807.0808 [hep-th]}}.

\bibitem{Go-Mi-Ru:08}
J.~Gomis, G.~Milanesi, and J.~G. Russo, ``{Bagger-Lambert theory for general
  {L}ie algebras},'' {\em JHEP} {\bf 06} (2008) 075,
\href{http://arxiv.org/abs/0805.1012}{{\tt arXiv:0805.1012 [hep-th]}}.

\bibitem{Fried:08}
T.~Friedmann, ``Orbifold singularities, the {LATK}e, and pure {Y}ang-{M}ills
  with matter,'' \href{http://arxiv.org/abs/0806.0024}{{\tt arXiv:0806.0024
  [math.AG]}}.

\bibitem{Pap:08}
G.~Papadopoulos, ``{M2-branes, 3-{L}ie Algebras and Pl\"ucker relations},''
  {\em JHEP} {\bf 05} (2008) 054,
\href{http://arxiv.org/abs/0804.2662}{{\tt arXiv:0804.2662 [hep-th]}}.

\bibitem{Ga-Gu:08}
J.~P. Gauntlett and J.~B. Gutowski, ``{Constraining maximally supersymmetric
  membrane Actions},''
\href{http://arxiv.org/abs/0804.3078}{{\tt arXiv:0804.3078 [hep-th]}}.

\bibitem{Be-Ta-Tho:08}
D.~S. Berman, L.~C. Tadrowski, and D.~C. Thompson, ``{Aspects of multiple
  membranes},'' {\em Nucl. Phys.} {\bf B802} (2008) 106--120,
\href{http://arxiv.org/abs/0803.3611}{{\tt arXiv:0803.3611 [hep-th]}}.

\bibitem{Aha-Be-Ja-Mal:08}
O.~Aharony, O.~Bergman, D.~L. Jafferis, and J.~Maldacena, ``{N=6 superconformal
  Chern-Simons-matter theories, M2-branes and their gravity duals},'' {\em
  JHEP} {\bf 10} (2008) 091,
\href{http://arxiv.org/abs/0806.1218}{{\tt arXiv:0806.1218 [hep-th]}}.

\bibitem{Nambu:73}
Y.~Nambu, ``{Generalized Hamiltonian dynamics},'' {\em Phys. Rev.} {\bf D7}
  (1973)
2405--2414.

\bibitem{Sah-Val:92}
D.~Sahoo and M.~C. Valsakumar, ``{Nambu mechanics and its quantization},'' {\em
  Phys. Rev.} {\bf A46} (1992)
4410--4412.

\bibitem{Sa-Va:93}
D.~Sahoo and M.~C. Valsakumar, ``{Algebraic structure of Nambu mechanics},''
  {\em Pramana} {\bf 40} (1993)
1--16.

\bibitem{Tak:93}
L.~Takhtajan, ``{On Foundation of the generalized Nambu mechanics},'' {\em
  Commun. Math. Phys.} {\bf 160} (1994) 295--316,
\href{http://arxiv.org/abs/hep-th/9301111}{{\tt arXiv:hep-th/9301111}}.

\bibitem{Fil:98}
V.~T. Filippov, ``On {$n$}-{L}ie algebra of jacobians,'' {\em Sibirsk. Mat.
  Zh.} {\bf 39} (1998), no.~3, 660--669. {(English translation:
  \textit{Siberian Math. J.} \textbf{39} (1998), no.~3, 573-581)}.

\bibitem{Mu-Sud:76}
N.~Mukunda and G.~Sudarshan, ``{Relation between Nambu and Hamiltonian
  mechanics},'' {\em Phys. Rev.} {\bf D13} (1976)
2846.

\bibitem{Cha:95}
R.~Chatterjee, ``{Dynamical symmetries and Nambu mechanics},'' {\em Lett. Math.
  Phys.} {\bf 36} (1996) 117--126,
\href{http://arxiv.org/abs/hep-th/9501141}{{\tt arXiv:hep-th/9501141}}.

\bibitem{Cha-Tak:95}
R.~Chatterjee and L.~Takhtajan, ``{Aspects of classical and quantum Nambu
  mechanics},'' {\em Lett. Math. Phys.} {\bf 37} (1996) 475--482,
\href{http://arxiv.org/abs/hep-th/9507125}{{\tt arXiv:hep-th/9507125}}.

\bibitem{Ale.Guh:96}
D.~Alekseevsky and P.~Guha, ``{On decomposability of Nambu-Poisson tensor},''
  {\em Acta Math. Univ. Comenianae} {\bf LXV} (1996) 1--9.

\bibitem{Hie:97}
J.~Hietarinta, ``{Nambu tensors and commuting vector fields},'' {\em J. Phys.}
  {\bf A30} (1997) L27--L23.

\bibitem{Mar-Vi-Vin:97}
G.~Marmo, G.~Vilasi, and A.~M. Vinogradov, ``{The local structure of
  $n$-Poisson and $n$-Jacobi manifolds},'' {\em J. Geom. Phys.} {\bf 25} (1998)
  141--182,
\href{http://arxiv.org/abs/physics/9709046}{{\tt arXiv:physics/9709046}}.

\bibitem{Vai:99}
I.~Vaisman, ``{A survey on Nambu-Poisson brackets},'' {\em Acta Math. Univ.
  Comenianae} {\bf LXVIII} (1999) 213--241,
  \href{http://arxiv.org/abs/math/9901047}{{\tt arXiv:math/9901047}}.

\bibitem{Mi-Va:00}
P.~W. Michor and I.~Vaisman, ``{A note on $n$-ary Poisson brackets},'' {\em
  Rend. Circ. Mat. di Palermo Ser. II} {\bf suppl. 63} (2000) 165--172,
  \href{http://arxiv.org/abs/math.SG/9901117}{{\tt arXiv:math.SG/9901117}}.

\bibitem{Cu-Za:02}
T.~Curtright and C.~K. Zachos, ``{Classical and quantum Nambu mechanics},''
  {\em Phys. Rev.} {\bf D68} (2003) 085001,
\href{http://arxiv.org/abs/hep-th/0212267}{{\tt arXiv:hep-th/0212267}}.

\bibitem{dW-Ho-Ni:88}
B.~de~Wit, J.~Hoppe, and H.~Nicolai, ``{On the quantum mechanics of
  supermembranes},'' {\em Nucl. Phys.} {\bf B305} (1988)
545--581.

\bibitem{Be-Se-Ta-To:90}
E.~Bergshoeff, E.~Sezgin, Y.~Tanii, and P.~K. Townsend, ``{Super p-branes as
  gauge theories of volume presrving diffeomorphisms},'' {\em Ann. Phys.} {\bf
  199} (1990)
340.

\bibitem{Hop:88}
J.~Hoppe, ``{Diffeomorphism groups, quantization and SU($\infty$)},'' {\em Int.
  J. Mod. Phys.} {\bf A4} (1989)
5235.

\bibitem{Hoppe:96}
J.~Hoppe, ``{On M-Algebras, the quantisation of Nambu-mechanics, and volume
  preserving diffeomorphisms},'' {\em Helv. Phys. Acta} {\bf 70} (1997)
  302--317,
\href{http://arxiv.org/abs/hep-th/9602020}{{\tt arXiv:hep-th/9602020}}.

\bibitem{Ho-Hou-Ma:08}
P.-M. Ho, R.-C. Hou, and Y.~Matsuo, ``{{L}ie 3-algebra and multiple
  M2-branes},'' {\em JHEP} {\bf 06} (2008) 020,
\href{http://arxiv.org/abs/0804.2110}{{\tt arXiv:0804.2110 [hep-th]}}.

\bibitem{Ho-Mat:08}
P.-M. Ho and Y.~Matsuo, ``{M5 from M2},'' {\em JHEP} {\bf 06} (2008) 105,
\href{http://arxiv.org/abs/0804.3629}{{\tt arXiv:0804.3629 [hep-th]}}.

\bibitem{Ba-To:08}
I.~A. Bandos and P.~K. Townsend, ``{SDiff Gauge Theory and the M2
  Condensate},'' {\em JHEP} {\bf 02} (2009) 013,
\href{http://arxiv.org/abs/0808.1583}{{\tt arXiv:0808.1583 [hep-th]}}.

\bibitem{Ho:09}
P.-M. Ho, ``{Nambu Bracket for M Theory},''
\href{http://arxiv.org/abs/0912.0055}{{\tt arXiv:0912.0055 [hep-th]}}.

\bibitem{AIP-B:97}
J.~A. de~Azc\'arraga, J.~M. Izquierdo, and J.~C. P\'erez~Bueno, ``{On the
  higher-order generalizations of Poisson structures},'' {\em J. Phys.} {\bf
  A30} (1997) L607--L616,
\href{http://arxiv.org/abs/hep-th/9703019}{{\tt arXiv:hep-th/9703019}}.

\bibitem{Iba.Leo.Mar.Die:97}
R.~Ib\'a{\~n}ez, M.~de~Le\'on, J.~C. Marrero, and D.~Mart\'{\i}n~de Diego,
  ``{Dynamics of generalized Poisson and Nambu-Poisson brackets},'' {\em J.
  Math. Phys.} {\bf 38} (1997) 2332--2344.

\bibitem{Iba.Leo.Mar:97}
R.~Ib\'a{\~n}ez, M.~de~Le\'on, and J.~C. Marrero, ``{Homology and cohomology on
  generalized Poisson manifolds},'' {\em J. Phys.} {\bf A31} (1998) 1253--1266.

\bibitem{AzIzPePB:96}
J.~A. de~Azc\'arraga, J.~M. Izquierdo, A.~M. Perelomov, and J.~C.
  P\'erez~Bueno, ``{The $Z_2$-graded Schouten-Nijenhuis bracket and generalized
  super-Poisson structures},'' {\em J. Math. Phys.} {\bf 38} (1997) 3735--3749,
\href{http://arxiv.org/abs/hep-th/9612186}{{\tt arXiv:hep-th/9612186}}.

\bibitem{Stern:98}
D.~Sternheimer, ``{Deformation quantization: Twenty years after},'' {\em AIP
  Conf. Proc.} {\bf 453} (1998) 107--145,
\href{http://arxiv.org/abs/math/9809056}{{\tt arXiv:math/9809056}}.

\bibitem{Cu-Za:03b}
T.~Curtright and C.~K. Zachos, ``{Quantizing Dirac and Nambu brackets},'' {\em
  AIP Conf. Proc.} {\bf 672} (2003) 165--182,
\href{http://arxiv.org/abs/hep-th/0303088}{{\tt arXiv:hep-th/0303088}}.

\bibitem{Cu-Fa-Ji-Me-Za:09}
T.~Curtright, X.~Jin, L.~Mezincescu, D.~Fairlie, and C.~Zachos, ``{Classical
  and quantal ternary algebras},'' {\em Phys. Lett.} {\bf B675} (2009)
  387--392,
\href{http://arxiv.org/abs/0903.4889}{{\tt arXiv:0903.4889 [hep-th]}}.

\bibitem{Carl:80}
R.~Carlsson, ``$n$-ary algebras,'' {\em Nagoya Math. J.} {\bf 78} (1980)
  45--56.

\bibitem{Ker-Vain:96}
R.~Kerner and L.~Vainerman, ``{On special classes of $n$-algebras},'' {\em J.
  Math. Phys.} {\bf 37} (1996) 2553--2565.

\bibitem{Fra-Sci-Sor:00}
l.~Frappat, A.~Sciarrino, and P.~Sorba, {\em Dictionary on {L}ir algebras and
  {L}ie groups}.
\newblock Academic Press, San Diego, 2000.

\bibitem{CoNeSt:74}
L.~Corwin, Y.~Ne'eman, and S.~Sternberg, ``{Graded {L}ie algebras in
  mathematics and physics (Bose-Fermi symmetry)},'' {\em Rev. Mod. Phys.} {\bf
  47} (1975)
573.

\bibitem{Schaf:66}
R.~D. Schafer, {\em An introduction to nonassociative algebras}.
\newblock Academic Press; reprinted by Dover (1995), N.Y., 1966.

\bibitem{Brem-Mur-Shes:07}
M.~Bremner, L.~Murakami, and I.~P. Shestakov, ``Non associative algebras,''
  {\em Handbook for linear algebra, L. Hogben Ed.} (2007) ch. 9.

\bibitem{Sagle:61}
A.~A. Sagle, ``Mal\v cev algebras,'' {\em Trans. Am. Math. Soc.} {\bf 101}
  (1961), no.~3, 426--458.

\bibitem{Brem-Per:06}
M.~R. Bremner and L.~A. Peresi, ``{Ternary analogues of {L}ie and {M}al\v cev
  algebras},'' {\em Lin. Alg. Appl.} {\bf 414} (2006) 1--18.

\bibitem{Pozhi:01}
A.~P. Pozhidaev, ``{$n$-ary {M}al'tsev algebras},'' {\em Algebra and Logic}
  {\bf 40} (2001) 170--182.

\bibitem{Ker:00}
R.~Kerner, ``{Ternary algebraic structures and their applications in
  physics},''
\href{http://arxiv.org/abs/math-ph/0011023}{{\tt arXiv:math-ph/0011023}}.

\bibitem{Ba-Bo-Ker:04}
N.~N.~Bazunova, A.~Borowiec, and R.~Kerner, ``{Universal differential calculus
  on ternary algebras},'' {\em Lett. Math. Phys.} {\bf 67} (2004) 195--206,
  \href{http://arxiv.org/abs/math-ph/0401018}{{\tt arXiv:math-ph/0401018}}.

\bibitem{Ker:08}
R.~Kerner, ``{Ternary and non-associative structures},'' {\em Int. J. of Geom.
  Meth. in Mod. Phys.} {\bf 5} (2008) 1265--1294.

\bibitem{Rau-Slu:99}
M.~Rausch~de Traubenberg and M.~J. Slupinski, ``{Fractional supersymmetry and
  F(th) roots of representations},'' {\em J. Math. Phys.} {\bf 41} (2000) 4556,
\href{http://arxiv.org/abs/hep-th/9904126}{{\tt arXiv:hep-th/9904126}}.

\bibitem{Go-Tra-Ta:07}
M.~Goze, M.~Rausch~de Traubenberg, and A.~Tanasa, ``{Poincaré and sl(2)
  algebras of order 3},'' {\em J. Math. Phys.} {\bf 48} (2007) 093507.

\bibitem{Camp-Rau:08}
R.~Campoamor-Stursberg and M.~Rausch~de Traubenberg, ``{Color Lie algebras and
  Lie algebras of order F},''
\href{http://arxiv.org/abs/0811.3076}{{\tt arXiv:0811.3076 [math-ph]}}.

\bibitem{deA-MacF:95}
J.~A. de~Azc\'arraga and A.~J. Macfarlane, ``{Group theoretical foundations of
  fractional supersymmetry},'' {\em J. Math. Phys.} {\bf 37} (1996) 1115--1127,
\href{http://arxiv.org/abs/hep-th/9506177}{{\tt arXiv:hep-th/9506177}}.

\bibitem{Du-MacF-Az-PB:96a}
R.~S. Dunne, A.~J. Macfarlane, J.~A. de~Azc\'arraga, and J.~C. P\'erez~Bueno,
  ``{Supersymmetry from a braided point of view},'' {\em Phys. Lett.} {\bf
  B387} (1996) 294--299,
\href{http://arxiv.org/abs/hep-th/9607220}{{\tt arXiv:hep-th/9607220}}.

\bibitem{deAz-Du-MacF-PB:96}
J.~A. de~Azc\'arraga, R.~S. Dunne, A.~J. Macfarlane, and J.~C. P\'erez~Bueno,
  ``{Braided structure of fractional Z(3)-supersymmetry},'' {\em Czech. J.
  Phys.} {\bf 46} (1996) 1145--1152,
\href{http://arxiv.org/abs/hep-th/9609002}{{\tt arXiv:hep-th/9609002}}.

\bibitem{Du-MacF-Az-PB:96b}
R.~S. Dunne, A.~J. Macfarlane, J.~A. de~Azc\'arraga, and J.~C. P\'erez~Bueno,
  ``{Geometrical foundations of fractional supersymmetry},'' {\em Int. J. Mod.
  Phys.} {\bf A12} (1997) 3275--3306,
\href{http://arxiv.org/abs/hep-th/9610087}{{\tt arXiv:hep-th/9610087}}.

\bibitem{Ab-Ker-LeRoy:96}
V.~Abramov, R.~Kerner, and B.~Le~Roy, ``{Hypersymmetry: A Z(3)-graded
  generalization of supersymmetry},'' {\em J. Math. Phys.} {\bf 38} (1997)
  1650--1669,
\href{http://arxiv.org/abs/hep-th/9607143}{{\tt arXiv:hep-th/9607143}}.

\bibitem{Camp-Raus:09}
R.~Campoamor-Stursberg and M.~Rausch~de Traubenberg, ``{Parafermions for higher
  order extensions of the Poincar\'e algebra and their associated
  superspace},'' {\em J. Phys.} {\bf A42} (2009) 495202,
\href{http://arxiv.org/abs/0907.2149}{{\tt arXiv:0907.2149 [hep-th]}}.

\bibitem{Rau-Slu:02}
M.~Rausch~de Traubenberg and M.~J. Slupinski, ``{Finite-dimensional Lie
  algebras of order F},'' {\em J. Math. Phys.} {\bf 43} (2002) 5145--5160,
\href{http://arxiv.org/abs/hep-th/0205113}{{\tt arXiv:hep-th/0205113}}.

\bibitem{RdeTr:08}
M.~Rausch~de Traubenberg, ``{Some results on cubic and higher order extensions
  of the Poincar\'e algebra},'' {\em J. Phys. Conf. Ser.} {\bf 175} (2009)
  012003,
\href{http://arxiv.org/abs/0811.1465}{{\tt arXiv:0811.1465 [hep-th]}}.

\bibitem{Ber-Dam-Al:96}
K.~Bering, P.~H. Damgaard, and J.~Alfaro, ``{Algebra of higher antibrackets},''
  {\em Nucl. Phys.} {\bf B478} (1996) 459--504,
\href{http://arxiv.org/abs/hep-th/9604027}{{\tt arXiv:hep-th/9604027}}.

\bibitem{Koszul:85}
J.-L. Koszul, ``{Crochet de Schouten-Nijenhuis et cohomologie},'' {\em
  Ast\'erisque (hors s\'erie)} (1985)
257--271.

\bibitem{Zwi:92}
B.~Zwiebach, ``{Closed string field theory: quantum action and the B-V master
  equation},'' {\em Nucl. Phys.} {\bf B390} (1993) 33--152,
\href{http://arxiv.org/abs/hep-th/9206084}{{\tt arXiv:hep-th/9206084}}.

\bibitem{Ger:63}
M.~Gerstenhaber, ``{The cohomology structure of an associative ring},'' {\em
  Ann. Math.} {\bf 78} (1963) 267--288.

\bibitem{Kos:96}
Y.~Kosmann-Schwarzbach, ``{From Poisson algebras to Gerstenhaber algebras},''
  {\em Ann. Inst. Fourier} {\bf 46} (1996) 1243--1274.

\bibitem{Vor:05}
T.~Voronov, ``{Higher derived brackets and homotopy algebras},'' {\em J. Pure
  and Appl. Alg.} {\bf 202} (2005) 133--153,
\href{http://arxiv.org/abs/math.QA/0304038}{{\tt arXiv:math.QA/0304038}}.

\bibitem{Jac:79}
N.~Jacobson, {\em {L}ie algebras}.
\newblock Dover Pub., N.Y., 1979.

\bibitem{Oku:95}
S.~Okubo, {\em Introduction to octonion and other non-associative algebras in
  physics (Montroll Memorial Lecture Series in Mathematical Physics)}.
\newblock Cambridge Univ. Press, Cambridge, UK, 1995.

\bibitem{Gur-Tze:96}
F.~G\"ursey and C.-H. Tze, {\em On the role of division, {J}ordan and related
  algebras in particle physics}.
\newblock World Sci., Singapore, 1996.

\bibitem{Medina:85}
A.~Medina, ``Groupes de {L}ie munis de m\'etriques bi-invariantes,'' {\em
  T\=ohoku Math. J. (2)} {\bf 37} (1985), no.~4, 405--421.

\bibitem{Med-Rev:85}
A.~Medina and P.~Revoy, ``Alg\`ebres de {L}ie et produit scalaire invariant,''
  {\em Ann. Sci. École Norm. Sup.} {\bf 18} (1985) 553.

\bibitem{JMF-Sta:96}
J.~M. Figueroa-O'Farrill and S.~Stanciu, ``On the structure of symmetric
  selfdual {L}ie algebras,'' {\em J. Math. Phys.} {\bf 37} (1996) 4121--4134,
  \href{http://arxiv.org/abs/hep-th/9506152}{{\tt hep-th/9506152}}.

\bibitem{Kath-Ul:04}
I.~Kath and M.~Olbrich, ``Metric {L}ie algebras with maximal isotropic
  centre,'' {\em Math. Z.} {\bf 246} (2004), no.~1-2, 23--53,
  \href{http://arxiv.org/abs/math/0209366}{{\tt math/0209366}}.

\bibitem{Kath-Ul:03}
I.~Kath and M.~Olbrich, ``Metric {L}ie algebras and quadratic extensions,''
  {\em Transformation Groups} {\bf 11} (2006) 87--131,
  \href{http://arxiv.org/abs/math.DG/0312243}{{\tt math.DG/0312243}}.

\bibitem{Che-Eil:48}
C.~Chevalley and S.~Eilenberg, ``Cohomology theory of {L}ie groups and {L}ie
  algebras,'' {\em Trans. Am. Math. Soc.} {\bf 63} (1948) 85--124.

\bibitem{Koszul:50}
J.-L. Koszul, ``{Homlogie and cohomologie des algèbres de {L}ie},'' {\em Bull.
  de la Soc. Math. de France} {\bf 78} (1950)
65--127.

\bibitem{CUP}
J.~A. de~Azc\'arraga and J.~M. Izquierdo, {\em {L}ie groups, {L}ie algebras,
  cohomology and some applications in physics}.
\newblock Cambridge University Press, Cambridge, UK, 1995.

\bibitem{Gers:63}
M.~Gerstenhaber, ``{On the deformation of rings and algebras},'' {\em Annals
  Math.} {\bf 79} (1964)
59--103.

\bibitem{Nij-Rich:67}
A.~Nijenhuis and R.~W. Richardson~Jr., ``{Deformation of {L}ie algebra
  structures},'' {\em J. Math. Mech.} {\bf 171} (1967) 89--105.

\bibitem{Her:70}
R.~Hermann, {\em Vector bundles in mathematical physics II}.
\newblock W.A. Benjamic, Inc., New York, 1970.

\bibitem{Bine:86}
B.~Binegar, ``{Cohomology and deformations of {L}ie superalgebras},'' {\em
  Lett. Math. Phys.} {\bf 12} (1986)
301--308.

\bibitem{Rich:67}
R.~W. Richardson~Jr., ``{The rigidity of semi-direct products of {L}ie
  algebras},'' {\em Pacific J. Math.} {\bf 2} (1967) 339--344.

\bibitem{AzTo:89}
J.~A. de~Azc\'arraga and P.~K. Townsend, ``{Superspace geometry and the
  classification of supersymmetric extended objects},'' {\em Phys. Rev. Lett.}
  {\bf 62} (1989)
2579--2582.

\bibitem{Lei:75}
D.~H. Leites, ``{Cohomology of Lie superalgebras},'' {\em Funk. Analiz.} {\bf
  9} (1975) 75.

\bibitem{Sch-Zha:97}
M.~Scheunert and R.-b. Zhang, ``{Cohomology of Lie superalgebras and of their
  generalizations},'' {\em J. Math. Phys.} {\bf 39} (1998) 5024--5061,
\href{http://arxiv.org/abs/q-alg/9701037}{{\tt arXiv:q-alg/9701037}}.

\bibitem{Al-Mi-Ru:01}
D.~Alekseevsky, P.~Michor, and W.~Ruppert, ``{Extensions of super Lie
  algebras},'' {\em J. Lie Theory} {\bf 15} (2005) 125--134,
  \href{http://arxiv.org/abs/arXiv:math/0101190 [math.QA]}{{\tt
  arXiv:math/0101190 [math.QA]}}.

\bibitem{Rac:50}
G.~Racah, ``{Sulla caratterizzazione delle rappresentazioni irreducibili dei
  gruppi semisimplici di {L}ie (Princeton lectures)},'' {\em Lincei-Rend. Sc.
  fis. mat. e nat.} {\bf VIII} (1950) 108--112,
  \href{http://arxiv.org/abs/CERN-61-8. (reprinted in Ergeb. Exact Naturwiss.
  {\bf 37}, 28-84 (1965), Springer-Verlag)}{{\tt CERN-61-8. (reprinted in
  Ergeb. Exact Naturwiss. {\bf 37}, 28-84 (1965), Springer-Verlag)}}.

\bibitem{Gel:50}
I.~M. Gel'fand, ``{The Center of an infinitesimal group ring},'' {\em Mat.
  Sbornik} {\bf 26} (1950) 103--102, \href{http://arxiv.org/abs/(English trans.
  by Los Alamos Sci. Lab., AEC-TR-6133 (1963))}{{\tt (English trans. by Los
  Alamos Sci. Lab., AEC-TR-6133 (1963))}}.

\bibitem{Kle:63}
A.~Klein, ``{Invariant operators of the unimodular group in $n$ dimensions},''
  {\em J. Math. Phys.} {\bf 4} (1963) 1283--1284.

\bibitem{Gru.Rai:64}
B.~Gruber and L.~O'Raifeartaigh, ``{$\uppercase{S}$-theorem and construction of
  the invariants of the semisimple compact {L}ie algebras},'' {\em J. Math.
  Phys.} {\bf 5} (1964) 1796--1804.

\bibitem{Bie:63}
L.~C. Biedenharn, ``{On the representations of the semisimple {L}ie groups
  I},'' {\em J. Math. Phys.} {\bf 4} (1963) 436--445.

\bibitem{Per.Pop:68}
A.~M. Perelomov and V.~S. Popov, ``{Casimir operators for semisimple groups},''
  {\em Math. USSR-Izvestija} {\bf 2} (1968) 1313--1335.

\bibitem{Oku-Pat:83}
S.~Okubo and J.~Patera, ``{General indices of representations and Casimir
  invariants},'' {\em J. Math. Phys.} {\bf 25} (1984)
219--227.

\bibitem{Oku:82}
S.~Okubo, ``{Modified fourth-order Casimir invariants and indices for simple
  {L}ie algebras},'' {\em J. Math. Phys.} {\bf 23} (1982)
8.

\bibitem{MF-S-W:68}
A.~J. MacFarlane, A.~Sudbery, and P.~H. Weisz, ``{On Gell-Mann's gamma
  matrices, d tensors and f tensors, octets, and parametrizations of
  $SU(3)$},'' {\em Commun. Math. Phys.} {\bf 11} (1968)
77--90.

\bibitem{Azc.Mac.Mou.Bue:97}
J.~A. de~Azc\'arraga, A.~J. Macfarlane, A.~J. Mountain, and J.~C.
  P\'erez~Bueno, ``{Invariant tensors for simple groups},'' {\em Nucl. Phys.}
  {\bf B510} (1998) 657--687,
\href{http://arxiv.org/abs/physics/9706006}{{\tt arXiv:physics/9706006}}.

\bibitem{Sud:90}
A.~Sudbery, ``{Computer friendly $d$ tensor indentities for $su(n)$},'' {\em J.
  Phys.} {\bf A23(15)} (1990)
L705--L710.

\bibitem{Wit:84}
E.~Witten, ``{Nonabelian bosonization in two dimensions},'' {\em Commun. Math.
  Phys.} {\bf 92} (1984)
455--472.

\bibitem{Wit:83}
E.~Witten, ``{Global Aspects of Current Algebra},'' {\em Nucl. Phys.} {\bf
  B223} (1983)
422--432.

\bibitem{Azc.Izq.Mac:90}
J.~A. de~Azc\'arraga, J.~M. Izquierdo, and A.~J. Macfarlane, ``{Current algebra
  and Wess-Zumino terms: a unified geometric treatment},'' {\em Ann. Phys.}
  {\bf 202} (1990)
1--21.

\bibitem{Che:50}
C.~Chevalley, ``The {B}etti numbers of exceptional {L}ie groups,'' {\em Proc.
  Int. Congress of Mathematicians} (1950) 21--24.

\bibitem{Che-Bor:55}
C.~Chevalley and A.~Borel, ``The {B}etti numbers of exceptional {L}ie groups,''
  {\em Mem. of the Amer. Math. Soc.} {\bf 14} (1955) 1--9.

\bibitem{Cart:50}
H.~Cartan, ``La transgression dans un groupe de {L}ie et dans un espace fibr\'e
  principal,'' {\em Coll. de Topologie Alg\'ebrique, C.B.R.M. Bruxelles} (1950)
  57.

\bibitem{Kost:97}
B.~Kostant, ``Clifford algebra analogue of the {H}opf-{K}oszul-{S}amelson
  theorem, the $\rho$-decomposition ${C}(\mathfrak{g})=\mathrm{End}_\rho\otimes
  {C}(p)$, and the $\mathfrak{g}$-module structure of $\wedge\mathfrak{g}$,''
  {\em Ann. Math.} {\bf 125} (1997) 275--350.

\bibitem{Car:36}
E.~Cartan, ``{La topologie des groupes de {L}ie},'' {\em L'Enseignement math.}
  {\bf 35} (1936) 177--200.

\bibitem{Pon:35}
L.~Pontrjagin, ``{Sur les nombres de Betti des groupes de {L}ie},'' {\em C. R.
  Acad. Sci. Paris} {\bf 200} (1935) 1277--1280.

\bibitem{Hop:41}
H.~Hopf, ``{\"Uber die topologie der gruppen-manigfaltigkeiten und ihre
  verallgemeinerungen},'' {\em Ann. Math.} {\bf 42} (1941) 22--52.

\bibitem{Sam:52}
H.~Samelson, ``{Topology of {L}ie groups},'' {\em Bull. Am. Math. Soc.} {\bf
  57} (1952) 2--37.

\bibitem{Cole:58}
A.~J. Coleman, ``The {B}etti numbers of the simple {L}ie groups,'' {\em Can. J.
  Math.} {\bf 10} (1958) 349--356.

\bibitem{Bor:65}
A.~Borel, ``{Topology of {L}ie groups and characteristic classes},'' {\em Bull.
  Am. Math. Soc.} {\bf 61} (1965) 397--432.

\bibitem{Bot:79}
R.~Bott, {\em The geometry and representation theory of compact {L}ie groups}.
\newblock London Math. Soc. Lecture Notes Ser., vol. 34, Cambridge University
  Press, Cambridge, UK, 1979.

\bibitem{Boy:93}
L.~J. Boya, ``{The geometry of compact {L}ie groups},'' {\em Rep. Math. Phys.}
  {\bf 30} (1991) 149--162.

\bibitem{Wey:46}
H.~Weyl, {\em The classical groups. Their invariants and representations}.
\newblock Princeton Univ. Press, Princeton, USA, 1946.

\bibitem{Hod:41}
W.~V.~D. Hodge, {\em The theory and applications of harmonic integrals}.
\newblock Camb. Univ. Press,, Cambridge, UK, 1941.

\bibitem{Gre.Hal.Van:76}
W.~Greub, S.~Halperin, and R.~Vanstone, {\em Connections, Curvature and
  Cohomology, vol. III}.
\newblock Acad. Press., N.Y. and London, 1976.

\bibitem{Al-Ay-Om:05}
S.~Albeverio, S.~A. Ayupov, and B.~Omirov, ``{On nilpotent and simple Leibniz
  algebras},'' {\em Commun. in Alg.} {\bf 3} (2005) 159--172.

\bibitem{Cuv:94}
C.~Cuvier, ``{Alg\`ebres de Leibnitz: definitions, propri\'et\'es},'' {\em Ann.
  scient. \'Ec. Norm. Sup.} {\bf 27} (1994) 1--45.

\bibitem{Kin:07}
M.~K. Kinyon, ``{Leibniz algebras, {L}ie racks, and digroups},'' {\em J. {L}ie
  Theory} {\bf 17} (2007) 99--114, \href{http://arxiv.org/abs/0403509v5
  [math.RA]}{{\tt arXiv:0403509v5 [math.RA]}}.

\bibitem{Yama:57}
K.~Yamaguti, ``On algebras of totally geodesic spaces ({L}ie triple systems),''
  {\em J. Sci. Hiroshima Univ. Ser. A} {\bf 21} (1957) 107--113.

\bibitem{Gau:96}
P.~Gautheron, ``{Some remarks concerning Nambu mechanics},'' {\em Lett. Math.
  Phys.} {\bf 37} (1996) 103--116.

\bibitem{Fia-Man:08}
A.~Fialowski and A.~Mandal, ``{Leibniz algebra deformations of a {L}ie
  algebra},'' {\em J. Math. Phys.} {\bf 49} (2008) 093511--1--11,
  \href{http://arxiv.org/abs/0802.1263 [math.KT]}{{\tt arXiv:0802.1263
  [math.KT]}}.

\bibitem{Az-Iz:10}
J.~A. de~Azc\'arraga and J.~M. Izquierdo, ``{On a type of $n$-Leibniz
  deformations of the simple Filippov algebras},'' {\em to be published}.

\bibitem{Gau:96b}
P.~Gautheron, ``{Simple facts concerning Nambu algebras},'' {\em Commun. Math.
  Phys.} {\bf 195} (1998) 417--434.

\bibitem{Vin2:98}
A.~M. Vinogradov and M.~M. Vinogradov, ``{On multiple generalizations of {L}ie
  algebras and Poisson manifols},'' {\em Contemp. Math.} {\bf 219} (1998)
  273--287.

\bibitem{Dzhuma:88}
A.~S. Dzhumadil'daev, ``Integral and mod $p$-cohomolgies of the {L}ie algebra
  $w_1$,'' {\em Funct. Anal. Appl.} {\bf 22} (1988), no.~3, 226--228.

\bibitem{Brem:98}
M.~R. Bremner, ``{Identities for the ternary commutator},'' {\em J. Algebra}
  {\bf 206} (1998) 615--613.

\bibitem{Brem:97}
M.~Bremner, ``Varieties of anticommutative {$n$}-ary algebras,'' {\em J.
  Algebra} {\bf 191} (1997) 76--88.

\bibitem{Curt-Jin-Mez:09}
T.~Curtright, X.~Jin, and L.~Mezincescu, ``{Multi-operator brackets acting
  thrice},''
\href{http://arxiv.org/abs/0905.2759}{{\tt arXiv:0905.2759 [math-ph]}}.

\bibitem{Dev-Nuy-Wei:09}
C.~Devchand, D.~Fairlie, J.~Nuyts, and G.~Weingart, ``{Ternutator
  Identities},'' {\em J. Phys.} {\bf A42} (2009) 475209,
\href{http://arxiv.org/abs/0908.1738}{{\tt arXiv:0908.1738 [hep-th]}}.

\bibitem{AIPB:97}
J.~A. de~Azc\'arraga, J.~M. Izquierdo, and J.~C. P\'erez~Bueno, ``{On the
  generalizations of Poisson structures},'' {\em J. Phys.} {\bf A30} (1997)
  L607--L616,
\href{http://arxiv.org/abs/hep-th/9703019}{{\tt arXiv:hep-th/9703019}}.

\bibitem{Sch:40}
J.~Schouten, ``{Ueber Differentialkomitanten zweir kontravarianter
  Gr\"oszen},'' {\em Proc. Kon. Ned. Akad. Wet. Amsterdam} {\bf 43} (1940)
  449--452.

\bibitem{Nij:55}
A.~Nijenhuis, ``{Jacobi-type identities for bilinear differential concomitants
  of certain tensor fields},'' {\em Indag. Math.} {\bf 17} (1955) 390--403.

\bibitem{Zwi:93}
B.~Zwiebach, {\em Closed string field theory: an introduction}.
\newblock Les Houches Summer Schhol, p. 647-678, North Holland, Amsterdam,
  1992.
\newblock \href{http://arxiv.org/abs/hep-th/9305026}{{\tt hep-th/9305026}}.

\bibitem{Jon:90}
E.~S. Jones, ``{A study of {L}ie and associative algebras from a homotopy point
  of view},'' {\em Master's thesis, {North Carolina University}} (1990).

\bibitem{Pozhi:98}
A.~P. Pozhidaev, ``{Monomial $n$-{L}ie algebras},'' {\em Algebra and Logic}
  {\bf 37} (1998) 307--322. {(Translated from: \textit{Algebra i Logika}
  \textbf{37} (1998), no.~5, 542-567)}.

\bibitem{Poji:03}
A.~P. Pojidaev, ``{Enveloping algebras of Filippov algebras},'' {\em Comm. in
  Algebra} {\bf 31} (2003) 883--900.

\bibitem{Lin:08}
H.~Lin, ``{Kac-Moody Extensions of 3-Algebras and M2-branes},'' {\em JHEP} {\bf
  07} (2008) 136,
\href{http://arxiv.org/abs/0805.4003}{{\tt arXiv:0805.4003 [hep-th]}}.

\bibitem{Curt-Fair-Zac:08}
T.~L. Curtright, D.~B. Fairlie, and C.~K. Zachos, ``{Ternary Virasoro-Witt
  Algebra},'' {\em Phys. Lett.} {\bf B666} (2008) 386--390,
\href{http://arxiv.org/abs/0806.3515}{{\tt arXiv:0806.3515 [hep-th]}}.

\bibitem{Rot:03}
M.~Rotkiewicz, ``{Irreducible identities of $n$-{L}ie algebras},'' {\em Acta
  Math. Univ. Comenianae} {\bf LXXII} (2003) 23--44.

\bibitem{JMF:08}
J.~M. Figueroa-O'Farrill, ``{Three lectures on 3-algebras},''
\href{http://arxiv.org/abs/0812.2865}{{\tt arXiv:0812.2865 [hep-th]}}.

\bibitem{Kas:98}
S.~M. Kasymov, ``Solvability in representations of {$n$}-{L}ie algebras,'' {\em
  Sibirsk. Mat. Zh.} {\bf 39} (1998), no.~2, 329--332. (English translation:
  \textit{Siberian Math. J.} \textbf{39} (2005), no.~2, 289-291).

\bibitem{Kas:91}
S.~M. Kasymov, ``Nil-elements and nil-subsets in {$n$}-{L}ie algebras,'' {\em
  Sibirsk. Mat. Zh.} {\bf 32} (1991), no.~6, 77--80. (English translation:
  \textit{Siberian Math. J.} \textbf{32} (1991), 962-964).

\bibitem{Kas:95b}
S.~M. Kasymov, ``Conjugacy of {C}artan subalgebras in {$n$}-{L}ie algebras,''
  {\em Algebra i Logika} {\bf 34} (1995), no.~4, 405--419. (English
  translation: \textit{Algebra and Logic} \textbf{34} (1995), 223-231).

\bibitem{Hum:72}
R.~Humphreys, {\em Introduction to {L}ie algebras and representation theory}.
\newblock Springer-Verlag, N.Y.-Heidelberg-Berlin, 1972.

\bibitem{Awa-Li-Mi-Yo:99}
H.~Awata, M.~Li, D.~Minic, and T.~Yoneya, ``{On the quantization of Nambu
  brackets},'' {\em JHEP} {\bf 02} (2001) 013,
\href{http://arxiv.org/abs/hep-th/9906248}{{\tt arXiv:hep-th/9906248}}.

\bibitem{Ar-Ma-Sil:09}
J.~Arnlind, A.~Makhlouf, and S.~Silvestrov, ``{Ternary Hom-Nambu-Lie algebras
  induced by Hom-Lie algebras},'' (2009)
  \href{http://arxiv.org/abs/arXiv:0912.0116 [math.RA]}{{\tt arXiv:0912.0116
  [math.RA]}}.

\bibitem{Sochi:08}
C.~Sochichiu, ``{On Nambu-{L}ie 3-algebra representations},''
\href{http://arxiv.org/abs/0806.3520}{{\tt arXiv:0806.3520 [hep-th]}}.

\bibitem{AA-SJ-Sim:08}
M.~Ali-Akbari, M.~M. Sheikh-Jabbari, and J.~Simon, ``{Relaxed three-algebras:
  their matrix representations and implications for multi M2-brane theory},''
  {\em JHEP} {\bf 12} (2008) 037,
\href{http://arxiv.org/abs/0807.1570}{{\tt arXiv:0807.1570 [hep-th]}}.

\bibitem{Can-Kac:10}
N.~Cantarini and V.~G. Kac, ``Classification of simple linearly compact
  $n$-{L}ie superalgebras,'' {\em Commun. Math. Phys.} (2010)
  \href{http://arxiv.org/abs/0909.3284}{{\tt arXiv:0909.3284 [math.QA]}}.

\bibitem{Ba-So-Zha:10}
R.-p. Bai, G.-j. Song, and Y.-z. Zhang, ``Classification of $n$-Lie algebras,''
  \href{http://arxiv.org/abs/1006.1932}{{\tt arXiv:1006.1932 [math-ph]}}.

\bibitem{Az-Iz:09}
J.~A. de~Azc\'arraga and J.~M. Izquierdo, ``{Cohomology of Filippov algebras
  and an analogue of Whitehead's lemma},'' {\em J. of Phys: Conf. Series} {\bf
  175} (2009) 012001,
\href{http://arxiv.org/abs/0905.3083}{{\tt arXiv:0905.3083 [math-ph]}}.

\bibitem{Bas-Har:05}
A.~Basu and J.~A. Harvey, ``{The M2-M5 brane system and a generalized Nahm's
  equation},'' {\em Nucl. Phys.} {\bf B713} (2005) 136--150,
\href{http://arxiv.org/abs/hep-th/0412310}{{\tt arXiv:hep-th/0412310}}.

\bibitem{Ba-La:07b}
J.~Bagger and N.~Lambert, ``{Comments on multiple M2-branes},'' {\em JHEP} {\bf
  02} (2008) 105,
\href{http://arxiv.org/abs/0712.3738}{{\tt arXiv:0712.3738 [hep-th]}}.

\bibitem{deMed-JMF-Men-Rit:08}
P.~de~Medeiros, J.~Figueroa-O'Farrill, E.~Mendez-Escobar, and P.~Ritter, ``{On
  the {L}ie-algebraic origin of metric 3-algebras},'' {\em Commun. Math. Phys.}
  {\bf 290} (2009) 871--902,
\href{http://arxiv.org/abs/0809.1086}{{\tt arXiv:0809.1086 [hep-th]}}.

\bibitem{Dzhuma:05}
A.~S. Dzhumadil'daev, ``$n$-{L}ie structures that are generated by
  wronskians,'' {\em Sibirsk. Mat. Zh.} {\bf 46} (2005), no.~4, 759--773.
  {(English translation: \textit{Siberian Math. J.} \textbf{46} (2005), no.~4,
  601-612)}.

\bibitem{Dzhu:02}
A.~S. Dzhumadil'daev, ``Identities and derivations for {J}acobian algebras,''
  {\em Contemp. Math.} {\bf 315} (2002) 245--278.

\bibitem{Dzhuma:06}
A.~S. Dzhumadil'daev, ``The $n$-{L}ie property of the jacobian as a condition
  for complete integrability,'' {\em Sibirsk. Mat. Zh.} {\bf 47} (2006), no.~4,
  780--790. {(English translation: \textit{Siberian Math. J.} \textbf{47}
  (2006), no.~4, 643-652)}.

\bibitem{Da-Ku:97}
Y.~L. Daletskii and V.~A. Kushnirevitch, ``{Formal differential geometry and
  Nambu-Takhtakan algebra},'' {\em {\it Quantum groups and quantum spaces},
  Banach Center Pub., Warszawa} {\bf 40} (1997) 293--302.

\bibitem{Rot:05}
M.~Rotkiewicz, ``{Cohomology ring of $n$-{L}ie algebras},'' {\em Extracta
  Math.} {\bf 20} (2005) 219--232.

\bibitem{Ca-In-La:07}
J.~Casas, M.~Insua, and M.~Ladra, ``{Poincar\'e-Birkhoff-Witt theorem for
  Leibniz $n$-algebras},'' {\em J. Symb. Comp.} {\bf 42} (2005) 1052--1065.

\bibitem{Che-Do-Sa:08}
S.~Cherkis, V.~Dotsenko, and C.~Saemannn, ``{On superspace actions for multiple
  M2-Branes, metric 3- algebras and their classification},'' {\em Phys. Rev.}
  {\bf D79} (2009) 086002,
\href{http://arxiv.org/abs/0812.3127}{{\tt arXiv:0812.3127 [hep-th]}}.

\bibitem{Jacob:49}
N.~Jacobson, ``{L}ie and {J}ordan triple systems,'' {\em Amer. J. Math.} {\bf
  71} (1949) 149--170.

\bibitem{Jacob:51}
N.~Jacobson, ``General representation theory of {J}ordan algebras,'' {\em
  Trans. Amer. Math. Soc.} {\bf 70} (1951) 509--530.

\bibitem{Lis:52}
W.~G. Lister, ``A structure theory of {L}ie triple systems,'' {\em Trans. Am.
  Math. Soc.} {\bf 72} (1952), no.~2, 217--242.

\bibitem{Ber:00}
W.~Bertram, {\em The geometry of {J}ordan and {L}ie structures}.
\newblock Springer Lecture Notes in Mathematics 1754, Berlin, 2000.

\bibitem{Okub:94}
S.~Okubo, ``{Parastatistics as Lie supertriple systems},'' {\em J. Math. Phys.}
  {\bf 35} (1994) 2785--2803,
\href{http://arxiv.org/abs/hep-th/9312180}{{\tt arXiv:hep-th/9312180}}.

\bibitem{Okubo:93}
S.~Okubo, ``{Triple products and Yang-Baxter equation. 2. Orthogonal and
  symplectic ternary systems},'' {\em J. Math. Phys.} {\bf 34} (1993)
  3292--3315,
\href{http://arxiv.org/abs/hep-th/9212052}{{\tt arXiv:hep-th/9212052}}.

\bibitem{Oku-Kam:96a}
S.~Okubo and N.~Kamiya, ``{\it Quasi-classical {L}ie superalgebra and
  {L}ie-super triple systems},'' \href{http://arxiv.org/abs/q-alg/9602037}{{\tt
  q-alg/9602037}}.

\bibitem{Oku-Kam:00}
S.~Okubo and N.~Kamiya, ``{On $\delta$-{L}ie supertriple systems associated
  with ($\epsilon,\delta$)-Freudenthal-Kantor supertriple systems},'' {\em
  Proc. Edin. Math. Soc.} {\bf 43} (2000) 243--260.

\bibitem{Oku-Kam:96b}
S.~Okubo and N.~Kamiya, ``{{J}ordan-{L}ie superalgebras and {J}ordan-{L}ie
  triple systems},'' {\em J. Alg.} {\bf 398} (1997) 388--411. UR-1467.

\bibitem{Oku:03}
S.~Okubo, ``{Construction of {L}ie superalgebras from triple product
  systems},'' {\em AIP Conf. Proc.} {\bf 687} (2003) 33--40,
\href{http://arxiv.org/abs/math-ph/0306029}{{\tt math-ph/0306029}}.

\bibitem{Faulk:73}
J.~R. Faulkner, ``On the geometry of inner ideals,'' {\em J. Algebra} {\bf 26}
  (1973) 1--9.

\bibitem{Chow-Muk:09}
S.~P. Chowdhury, S.~Mukhopadhyay, and K.~Ray, ``{BLG theory with generalized
  Jordan triple systems},''
\href{http://arxiv.org/abs/0903.2951}{{\tt arXiv:0903.2951 [hep-th]}}.

\bibitem{Palm:09}
J.~Palmkvist, ``{Three-algebras, triple systems and 3-graded Lie
  superalgebras},'' {\em J. Phys.} {\bf A43} (2010) 015205,
\href{http://arxiv.org/abs/0905.2468}{{\tt arXiv:0905.2468 [hep-th]}}.

\bibitem{Hel:78}
S.~Helgason, {\em Differential geoetry, {L}ie groups, and syymetric spaces}.
\newblock Acad. Press, N.Y., 1978.

\bibitem{Pa-Pro:66}
M.~Pauri and G.~Prosperi, ``{Canonical realizations of Lie symmetry groups},''
  {\em J. Math. Phys.} {\bf 7} (1966) 366--375.

\bibitem{Ber:07}
D.~S. Berman, ``{M-theory branes and their interactions},'' {\em Phys. Rept.}
  {\bf 456} (2008) 89--126,
\href{http://arxiv.org/abs/0710.1707}{{\tt arXiv:0710.1707 [hep-th]}}.

\bibitem{Bo-Ta-Zab:08}
G.~Bonelli, A.~Tanzini, and M.~Zabzine, ``{Topological branes, $p$-algebras and
  generalized Nahm equations},''
\href{http://arxiv.org/abs/0807.5113}{{\tt arXiv:0807.5113 [hep-th]}}.

\bibitem{Nahm:80}
W.~Nahm, ``{A Simple Formalism for the BPS Monopole},'' {\em Phys. Lett.} {\bf
  B90} (1980)
413.

\bibitem{Tak:95}
L.~Takhtajan, ``{A higher order analog of the Chevalley-Eilenberg complex and
  the deformation theory of {L}ie algebras},'' {\em St. Petersburg Math. J.}
  {\bf 6} (1995) 429--437.

\bibitem{Pir:94}
T.~Pirashvili, ``{On Leibniz homology},'' {\em Ann. Inst. Fourier} {\bf 44}
  (1994) 401--411.

\bibitem{Wei:83}
A.~Weinstein, ``{The local structure of Poisson manifolds},'' {\em J. Diff.
  Geom.} {\bf LXVIII} (1983) 523--557,
  \href{http://arxiv.org/abs/math/9901047}{{\tt math/9901047}}.

\bibitem{Vai:94}
I.~Vaisman, {\em Lectures on the geometry of Poisson manifolds}.
\newblock Birkh\"auser Verlag, Basel-Boston-Berlin, 1994.

\bibitem{Nak:98}
N.~Nakanishi, ``{On Nambu-Poisson manifolds},'' {\em Rev. Math. Phys.} {\bf 10}
  (1998) 499--510.

\bibitem{Grab-Mar:01}
J.~Grabowski and G.~Marmo, ``{Non-antisymmetric versions of Nambu-Poisson and
  {L}ie algebroid brackets},'' {\em J. Phys. Math. and Gen.} {\bf A34} (2001)
  3803--3809, \href{http://arxiv.org/abs/math.DG/0104122}{{\tt
  arXiv:math.DG/0104122}}.

\bibitem{Lich:77}
A.~Lichnerowicz, ``{Les vari\'et\'es de Poisson et leurs alg\`ebres de {L}ie
  associ\'ees},'' {\em J. Diff. Geom.} {\bf 12} (1977)
253--300.

\bibitem{Sor-Sor:08}
D.~V. Soroka and V.~A. Soroka, ``{Nambu-like odd brackets on supermanifolds},''
\href{http://arxiv.org/abs/0811.3074}{{\tt arXiv:0811.3074 [hep-th]}}.

\bibitem{Pinc-Ushi:05}
G.~Pinczon and R.~Ushirobira, ``New applications of graded {L}ie algebras to
  {L}ie algebras, generalized {L}ie algebras, and cohomology,'' {\em J. Lie
  Theory} {\bf 17} (2007) 633--667,
  \href{http://arxiv.org/abs/math.RT/0507387}{{\tt arXiv:math.RT/0507387}}.

\bibitem{Pe-Bu:97}
J.~C. P\'erez~Bueno, ``{Generalized Jacobi structures},'' {\em J. Phys.} {\bf
  A30} (1997) 6509--6515,
\href{http://arxiv.org/abs/hep-th/9707032}{{\tt arXiv:hep-th/9707032}}.

\bibitem{Nam-Han:73}
Y.~Nambu and M.~Y. Han, ``{Three triplets, paraquarks, and colored quarks},''
  {\em Phys. Rev.} {\bf D10} (1974)
674--683.

\bibitem{Gra-Mar:00}
J.~Grabowski and G.~Marmo, ``{On Filippov algebroids and multiplicative
  Nambu-Poisson structures},'' {\em Diff. Geom. and Appl.} {\bf 12} (2000)
  35--50, \href{http://arxiv.org/abs/math.DG/9902127}{{\tt
  arXiv:math.DG/9902127}}.

\bibitem{Rot:05b}
M.~Rotkiewicz, ``{On strong $n$-{L}ie Poisson algebras},'' {\em J. Pure and
  Appl. Algebra} {\bf 299} (2005) 87--96.

\bibitem{DFST:96}
G.~Dito, M.~Flato, D.~Sternheimer, and L.~Takhtajan, ``{Deformation
  quantization and Nambu mechanics},'' {\em Commun. Math. Phys.} {\bf 183}
  (1997) 1--22,
\href{http://arxiv.org/abs/hep-th/9602016}{{\tt arXiv:hep-th/9602016}}.

\bibitem{Wei:23}
R.~Weitzenböck, {\em Invariantentheorie}.
\newblock P. Noordhoff, Groningen, 1923.

\bibitem{Du-Zu:99}
J.-P. Dufour and N.~T. Zung, ``Linearization of nambu structures,'' {\em
  Compositio Mathematica} {\bf 117} (1999), no.~01, 83--105.

\bibitem{Nak:98b}
N.~Nakanishi, ``{Nambu-Poisson tensors on {L}ie groups},'' {\em Banach Center
  Pub.} {\bf 51} (2000) 243--249.

\bibitem{Hira:77}
M.~Hirayama, ``{A realization of Nambu mechanics: a particle interacting with
  an SU(2) monopole},'' {\em Phys. Rev.} {\bf D16} (1977)
530.

\bibitem{Guha:02}
P.~Guha, ``{Applications of Nambu mechanics to systems of hydrdynamcial
  type},'' {\em J. Math. Phys.} {\bf 43} (2002) 4035--4040.

\bibitem{Nutku:03}
Y.~Nutku, ``{Quantization with maximally degenerate Poisson brackets: The
  harmonic oscillator!},'' {\em J. Phys.} {\bf A36} (2003) 7559--7568,
\href{http://arxiv.org/abs/quant-ph/0306059}{{\tt arXiv:quant-ph/0306059}}.

\bibitem{Za-Cu:04}
C.~K. Zachos and T.~Curtright, ``{Branes, quantum Nambu brackets, and the
  hydrogen atom},'' {\em Czech. J. Phys.} {\bf 54} (2004) 1393--1398,
\href{http://arxiv.org/abs/math-ph/0408012}{{\tt arXiv:math-ph/0408012}}.

\bibitem{Adl-Wu:94}
S.~L. Adler and Y.-S. Wu, ``{Algebraic and geometric aspects of generalized
  quantum dynamics},'' {\em Phys. Rev.} {\bf D49} (1994) 6705--6708,
\href{http://arxiv.org/abs/hep-th/9405054}{{\tt arXiv:hep-th/9405054}}.

\bibitem{Kaw:02}
Y.~Kawamura, ``{Cubic matrix, Nambu mechanics and beyond},'' {\em Prog. Theor.
  Phys.} {\bf 109} (2003) 153--168,
\href{http://arxiv.org/abs/hep-th/0207054}{{\tt arXiv:hep-th/0207054}}.

\bibitem{Kaw:03}
Y.~Kawamura, ``{Cubic matrix, generalized spin algebra and uncertainty
  relation},'' {\em Prog. Theor. Phys.} {\bf 110} (2003) 579--587,
\href{http://arxiv.org/abs/hep-th/0304149}{{\tt arXiv:hep-th/0304149}}.

\bibitem{Minic:99}
D.~Minic, ``{M-theory and deformation quantization},''
\href{http://arxiv.org/abs/hep-th/9909022}{{\tt arXiv:hep-th/9909022}}.

\bibitem{BFFLS:78a}
F.~Bayen, M.~Flato, C.~Fronsdal, A.~Lichnerowicz, and D.~Sternheimer,
  ``{Deformation theory and quantization I. Deformations of symplectic
  structures},'' {\em Ann. Phys.} {\bf 111} (1978)
61--110.

\bibitem{BFFLS:78b}
F.~Bayen, M.~Flato, C.~Fronsdal, A.~Lichnerowicz, and D.~Sternheimer,
  ``{Deformation theory and quantization II. Physical applications},'' {\em
  Ann. Phys.} {\bf 111} (1978)
111--151.

\bibitem{Ber:75}
F.~A. Berezin, ``{General concept of quantization},'' {\em Commun. Math. Phys.}
  {\bf 40} (1975)
153--174.

\bibitem{Ber:74}
F.~A. Berezin, ``{Quantization},'' {\em Math. URSS Izvestija} {\bf 8} (1974)
  1109--1165.

\bibitem{DeBe-Sae-Sza:10}
J.~DeBellis, C.~Saemann, and R.~J. Szabo, ``{Quantized Nambu-Poisson manifolds
  and n-Lie algebras},''
\href{http://arxiv.org/abs/1001.3275}{{\tt arXiv:1001.3275 [hep-th]}}.

\bibitem{Fai-Cur-Zac:05}
C.~Zachos, D.~Fairlie, and T.~Curtright, {\em Quantum mechanics in phase space:
  an overview with selected papers}.
\newblock World Sci., Singapore; updated at
  http:/www.hep.anl.gov.czachos/a.pdf, 2005.

\bibitem{Estab:73}
F.~B. Estabrook, ``{Comments on generalized hamiltonian dynamics},'' {\em Phys.
  Rev.} {\bf D8} (1973)
2740--2743.

\bibitem{Sa-Va:94}
D.~Sahoo and M.~C. Valsakumar, ``{Nonexistence of quantum Nambu mechanics},''
  {\em Mod. Phys. Lett.} {\bf A9} (1994)
2727--2732.

\bibitem{Sakakibara:00}
M.~Sakakibara, ``{Remarks on a deformation quantization of the canonical Nambu
  bracket},'' {\em Prog. Theor. Phys.} {\bf 104} (2000)
1067--1071.

\bibitem{Xiong:00}
C.-s. Xiong, ``{A note on the quantum Nambu bracket},'' {\em Phys. Lett.} {\bf
  B486} (2000) 228--231,
\href{http://arxiv.org/abs/hep-th/0003292}{{\tt arXiv:hep-th/0003292}}.

\bibitem{Curt-Zach:03}
T.~L. Curtright and C.~K. Zachos, ``{Branes, strings, and odd quantum Nambu
  brackets},'' {\em in Quantum Theory and Symmetries, Cincinnati 2003, P.
  Argyres {\it et al.} (eds.) World Sci.} (2003) 206--217,
\href{http://arxiv.org/abs/hep-th/0312048}{{\tt arXiv:hep-th/0312048}}.

\bibitem{Curt-Zach:02}
T.~L. Curtright and C.~K. Zachos, ``{Deformation quantization of
  superintegrable systems and Nambu mechanics},'' {\em New J. Phys.} {\bf 4}
  (2002) 83,
\href{http://arxiv.org/abs/hep-th/0205063}{{\tt arXiv:hep-th/0205063}}.

\bibitem{Za-Cu:03a}
C.~K. Zachos and T.~L. Curtright, ``{Deformation quantization of Nambu
  mechanics},'' {\em AIP Conf. Proc.} {\bf 672} (2003) 183--196,
\href{http://arxiv.org/abs/quant-ph/0302106}{{\tt arXiv:quant-ph/0302106}}.

\bibitem{Hoso-2Lee:08}
K.~Hosomichi, K.-M. Lee, and S.~Lee, ``{Mass-Deformed Bagger-Lambert Theory and
  its BPS Objects},'' {\em Phys. Rev.} {\bf D78} (2008) 066015,
\href{http://arxiv.org/abs/0804.2519}{{\tt arXiv:0804.2519 [hep-th]}}.

\bibitem{Go-Ro-Raa-Ver:08}
J.~Gomis, D.~Rodriguez-Gomez, M.~Van~Raamsdonk, and H.~Verlinde, ``{A Massive
  Study of M2-brane Proposals},'' {\em JHEP} {\bf 09} (2008) 113,
\href{http://arxiv.org/abs/0807.1074}{{\tt arXiv:0807.1074 [hep-th]}}.

\bibitem{Cra-Her-Tur:09}
B.~Craps, T.~Hertog, and N.~Turok, ``{A multitrace deformation of ABJM
  theory},'' {\em Phys. Rev.} {\bf D80} (2009) 086007,
\href{http://arxiv.org/abs/0905.0709}{{\tt arXiv:0905.0709 [hep-th]}}.

\bibitem{Ak-Sae-Wo:09}
N.~Akerblom, C.~Saemann, and M.~Wolf, ``{Marginal deformations and 3-Algebra
  structures},'' {\em Nucl. Phys.} {\bf B826} (2010) 456--489,
\href{http://arxiv.org/abs/0906.1705}{{\tt arXiv:0906.1705 [hep-th]}}.

\bibitem{Gustav:09}
A.~Gustavsson, ``{M5 brane from mass deformed BLG theory},'' {\em JHEP} {\bf
  11} (2009) 071,
\href{http://arxiv.org/abs/0909.2518}{{\tt arXiv:0909.2518 [hep-th]}}.

\bibitem{Chen-Ho-Tak:10}
C.-H. Chen, P.-M. Ho, and T.~Takimi, ``{A no-go Theorem for M5-brane theory},''
\href{http://arxiv.org/abs/1001.3244}{{\tt arXiv:1001.3244 [hep-th]}}.

\bibitem{Hu-To:94}
C.~M. Hull and P.~K. Townsend, ``{Unity of superstring dualities},'' {\em Nucl.
  Phys.} {\bf B438} (1995) 109--137,
\href{http://arxiv.org/abs/hep-th/9410167}{{\tt arXiv:hep-th/9410167}}.

\bibitem{Wit:95}
E.~Witten, ``{String theory dynamics in various dimensions},'' {\em Nucl.
  Phys.} {\bf B443} (1995) 85--126,
\href{http://arxiv.org/abs/hep-th/9503124}{{\tt arXiv:hep-th/9503124}}.

\bibitem{Mal:97}
J.~M. Maldacena, ``{The large N limit of superconformal field theories and
  supergravity},'' {\em Adv. Theor. Math. Phys.} {\bf 2} (1998) 231--252,
\href{http://arxiv.org/abs/hep-th/9711200}{{\tt arXiv:hep-th/9711200}}.

\bibitem{Wit:98-AdS}
E.~Witten, ``{Anti-de Sitter space and holography},'' {\em Adv. Theor. Math.
  Phys.} {\bf 2} (1998) 253--291,
\href{http://arxiv.org/abs/hep-th/9802150}{{\tt arXiv:hep-th/9802150}}.

\bibitem{Aha-Gub-Mal-Oo-Oz:99}
O.~Aharony, S.~S. Gubser, J.~M. Maldacena, H.~Ooguri, and Y.~Oz, ``{Large N
  field theories, string theory and gravity},'' {\em Phys. Rept.} {\bf 323}
  (2000) 183--386,
\href{http://arxiv.org/abs/hep-th/9905111}{{\tt arXiv:hep-th/9905111}}.

\bibitem{Kle:00}
I.~R. Klebanov, ``{TASI lectures: Introduction to the AdS/CFT
  correspondence},''
\href{http://arxiv.org/abs/hep-th/0009139}{{\tt arXiv:hep-th/0009139}}.

\bibitem{Nas:07}
H.~Nastase, ``{Introduction to AdS-CFT},''
\href{http://arxiv.org/abs/0712.0689}{{\tt arXiv:0712.0689 [hep-th]}}.

\bibitem{Schw:04}
J.~H. Schwarz, ``{Superconformal Chern-Simons theories},'' {\em JHEP} {\bf 11}
  (2004) 078,
\href{http://arxiv.org/abs/hep-th/0411077}{{\tt arXiv:hep-th/0411077}}.

\bibitem{Gus:07}
A.~Gustavsson, ``{Algebraic structures on parallel M2-branes},'' {\em Nucl.
  Phys.} {\bf B811} (2009) 66--76,
\href{http://arxiv.org/abs/0709.1260}{{\tt arXiv:0709.1260 [hep-th]}}.

\bibitem{Lam-Rich:09}
N.~Lambert and P.~Richmond, ``{M2-Branes and Background Fields},''
\href{http://arxiv.org/abs/0908.2896}{{\tt arXiv:0908.2896 [hep-th]}}.

\bibitem{Pas:08}
F.~Passerini, ``{M2-Brane superalgebra from Bagger-Lambert theory},'' {\em
  JHEP} {\bf 08} (2008) 062,
\href{http://arxiv.org/abs/0806.0363}{{\tt arXiv:0806.0363 [hep-th]}}.

\bibitem{Be-Go-To:97}
E.~Bergshoeff, J.~Gomis, and P.~K. Townsend, ``{M-brane intersections from
  worldvolume superalgebras},'' {\em Phys. Lett.} {\bf B421} (1998) 109--118,
\href{http://arxiv.org/abs/hep-th/9711043}{{\tt arXiv:hep-th/9711043}}.

\bibitem{Gust:08a}
A.~Gustavsson, ``{Selfdual strings and loop space Nahm equations},'' {\em JHEP}
  {\bf 04} (2008) 083,
\href{http://arxiv.org/abs/0802.3456}{{\tt arXiv:0802.3456 [hep-th]}}.

\bibitem{JMF-Pap:04}
J.~M. Figueroa-O'Farrill and G.~Papadopoulos, ``{P}lücker-type relations for
  orthogonal planes,'' {\em J. Geom. Phys.} {\bf 49} (2004) 294--331,
  \href{http://arxiv.org/abs/math.AG/0211170}{{\tt math.AG/0211170}}.

\bibitem{Ba-Li-Schw:08}
M.~A. Bandres, A.~E. Lipstein, and J.~H. Schwarz, ``{Studies of the ABJM theory
  in a formulation with manifest SU(4) R-symmetry},'' {\em JHEP} {\bf 09}
  (2008) 027,
\href{http://arxiv.org/abs/0807.0880}{{\tt arXiv:0807.0880 [hep-th]}}.

\bibitem{De-Ja-Te:82a}
S.~Deser, R.~Jackiw, and S.~Templeton, ``{Three-dimensional massive gauge
  theories},'' {\em Phys. Rev. Lett.} {\bf 48} (1982)
975--978.

\bibitem{De-Ja-Te:82b}
S.~Deser, R.~Jackiw, and S.~Templeton, ``{Topologically massive gauge
  theories},'' {\em Ann. Phys.} {\bf 140} (1982)
372--411.

\bibitem{Che-Se-Wu:92}
W.~Chen, G.~W. Semenoff, and Y.-S. Wu, ``{Two loop analysis of nonAbelian
  Chern-Simons theory},'' {\em Phys. Rev.} {\bf D46} (1992) 5521--5539,
\href{http://arxiv.org/abs/hep-th/9209005}{{\tt arXiv:hep-th/9209005}}.

\bibitem{DC-Fr-He-Pi:99}
O.~Del~Cima, D.~Franco, J.~Helayel-Neto, and O.~Piguet, ``{An algebraic proof
  on the finiteness of Yang-Mills-Chern-Simons theory in D=3},'' {\em Lett.
  Math. Phys} {\bf 47} (1999) 265,
  \href{http://arxiv.org/abs/math-ph/9904030}{{\tt arXiv:math-ph/9904030}}.

\bibitem{La-To:08}
N.~Lambert and D.~Tong, ``{Membranes on an Orbifold},'' {\em Phys. Rev. Lett.}
  {\bf 101} (2008) 041602,
\href{http://arxiv.org/abs/0804.1114}{{\tt arXiv:0804.1114 [hep-th]}}.

\bibitem{Di-Mu-Pa-Raa:08}
J.~Distler, S.~Mukhi, C.~Papageorgakis, and M.~Van~Raamsdonk, ``{M2-branes on
  M-folds},'' {\em JHEP} {\bf 05} (2008) 038,
\href{http://arxiv.org/abs/0804.1256}{{\tt arXiv:0804.1256 [hep-th]}}.

\bibitem{Papad:08}
G.~Papadopoulos, ``{M2-branes, 3-{L}ie Algebras and Pl\"ucker relations},''
  {\em JHEP} {\bf 05} (2008) 054,
\href{http://arxiv.org/abs/0804.2662}{{\tt arXiv:0804.2662 [hep-th]}}.

\bibitem{Be-RG-To-Ve:08}
S.~Benvenuti, D.~Rodriguez-Gomez, E.~Tonni, and H.~Verlinde, ``{N=8
  superconformal gauge theories and M2 branes},'' {\em JHEP} {\bf 01} (2009)
  078,
\href{http://arxiv.org/abs/0805.1087}{{\tt arXiv:0805.1087 [hep-th]}}.

\bibitem{Ho-Im-Mat:08}
P.-M. Ho, Y.~Imamura, and Y.~Matsuo, ``{M2 to D2 revisited},'' {\em JHEP} {\bf
  07} (2008) 003,
\href{http://arxiv.org/abs/0805.1202}{{\tt arXiv:0805.1202 [hep-th]}}.

\bibitem{deMe-Fi-M-E-Rit:09}
P.~de~Medeiros, J.~Figueroa-O'Farrill, E.~Mendez-Escobar, and P.~Ritter,
  ``{Metric 3-{L}ie algebras for unitary Bagger-Lambert theories},'' {\em JHEP}
  {\bf 04} (2009) 037,
\href{http://arxiv.org/abs/0902.4674}{{\tt arXiv:0902.4674 [hep-th]}}.

\bibitem{JMF-nLie:08}
J.~M. Figueroa-O'Farrill, ``{Lorentzian {L}ie $n$-algebras},'' {\em J. Math.
  Phys.} {\bf 49} (2008) 113509,
\href{http://arxiv.org/abs/0805.4760}{{\tt arXiv:0805.4760 [math.RT]}}.

\bibitem{Nappi-Wi:93}
C.~R. Nappi and E.~Witten, ``{A WZW model based on a nonsemisimple group},''
  {\em Phys. Rev. Lett.} {\bf 71} (1993) 3751--3753,
\href{http://arxiv.org/abs/hep-th/9310112}{{\tt arXiv:hep-th/9310112}}.

\bibitem{Tsey:95}
A.~A. Tseytlin, ``{On gauge theories for nonsemisimple groups},'' {\em Nucl.
  Phys.} {\bf B450} (1995) 231--250,
\href{http://arxiv.org/abs/hep-th/9505129}{{\tt arXiv:hep-th/9505129}}.

\bibitem{Ba-Lip-Sch:08a}
M.~A. Bandres, A.~E. Lipstein, and J.~H. Schwarz, ``{Ghost-free superconformal
  action for multiple M2-branes},'' {\em JHEP} {\bf 07} (2008) 117,
\href{http://arxiv.org/abs/0806.0054}{{\tt arXiv:0806.0054 [hep-th]}}.

\bibitem{Go-Rod-Raam:08}
J.~Gomis, D.~Rodriguez-Gomez, M.~Van~Raamsdonk, and H.~Verlinde,
  ``{Supersymmetric Yang-Mills theory from Lorentzian three-algebras},'' {\em
  JHEP} {\bf 08} (2008) 094,
\href{http://arxiv.org/abs/0806.0738}{{\tt arXiv:0806.0738 [hep-th]}}.

\bibitem{Mor:08}
A.~Morozov, ``{On the problem of multiple M2 branes},'' {\em JHEP} {\bf 05}
  (2008) 076,
\href{http://arxiv.org/abs/0804.0913}{{\tt arXiv:0804.0913 [hep-th]}}.

\bibitem{Gra-Nil-Pet:08}
U.~Gran, B.~E.~W. Nilsson, and C.~Petersson, ``{On relating multiple M2 and
  D2-branes},'' {\em JHEP} {\bf 10} (2008) 067,
\href{http://arxiv.org/abs/0804.1784}{{\tt arXiv:0804.1784 [hep-th]}}.

\bibitem{Chen:09}
F.-M. Chen, ``{Symplectic three-algebra unifying N=5,6 superconformal
  Chern-Simons-matter theories},''
\href{http://arxiv.org/abs/0908.2618}{{\tt arXiv:0908.2618 [hep-th]}}.

\bibitem{Yama:08}
M.~Yamazaki, ``{Octonions, $G_2$ and generalized {L}ie 3-algebras},'' {\em
  Phys. Lett.} {\bf B670} (2008) 215--219,
\href{http://arxiv.org/abs/0809.1650}{{\tt arXiv:0809.1650 [hep-th]}}.

\bibitem{Aha-Berg-Ja:08}
O.~Aharony, O.~Bergman, and D.~L. Jafferis, ``{Fractional M2-branes},'' {\em
  JHEP} {\bf 11} (2008) 043,
\href{http://arxiv.org/abs/0807.4924}{{\tt arXiv:0807.4924 [hep-th]}}.

\bibitem{Kle-To:09}
I.~R. Klebanov and G.~Torri, ``{M2-branes and AdS/CFT},''
\href{http://arxiv.org/abs/0909.1580}{{\tt arXiv:0909.1580 [hep-th]}}.

\bibitem{Be-Bri-Sung-Ram:10}
D.~Belyaev, L.~Brink, S.-S. Kim, and P.~Ramond, ``{The BLG Theory in Light-Cone
  Superspace},'' {\em JHEP} {\bf 04} (2010) 026,
\href{http://arxiv.org/abs/1001.2001}{{\tt arXiv:1001.2001 [hep-th]}}.

\bibitem{Gu-Rey:09}
A.~Gustavsson and S.-J. Rey, ``{Enhanced N=8 Supersymmetry of ABJM Theory on
  R(8) and R(8)/Z(2)},''
\href{http://arxiv.org/abs/0906.3568}{{\tt arXiv:0906.3568 [hep-th]}}.

\bibitem{Lam-Con:10}
N.~Lambert and C.~Papageorgakis, ``{Relating $U(N)\times U(N)$ to $SU(N)\times SU(N)$
  Chern-Simons membrane theories},'' {\em JHEP} {\bf 04} (2010) 104,
\href{http://arxiv.org/abs/1001.4779}{{\tt arXiv:1001.4779 [hep-th]}}.

\bibitem{Ba-Br:10}
J.~Bagger and G.~Bruhn, ``{Three-algebras in $\mathcal{N}$ = 5, 6
  superconformal Chern-Simons rheories: representations and relations},''
\href{http://arxiv.org/abs/1006.0040}{{\tt arXiv:1006.0040 [hep-th]}}.

\bibitem{Low:09}
A.~M. Low, ``{N=6 membrane worldvolume superalgebra},'' {\em JHEP} {\bf 04}
  (2009) 105,
\href{http://arxiv.org/abs/0903.0988}{{\tt arXiv:0903.0988 [hep-th]}}.

\bibitem{Ber-Copl:05}
D.~S. Berman and N.~B. Copland, ``{Five-brane calibrations and fuzzy
  funnels},'' {\em Nucl. Phys.} {\bf B723} (2005) 117--131,
\href{http://arxiv.org/abs/hep-th/0504044}{{\tt arXiv:hep-th/0504044}}.

\bibitem{Kri-Mac:08}
C.~Krishnan and C.~Maccaferri, ``{Membranes on calibrations},'' {\em JHEP} {\bf
  07} (2008) 005,
\href{http://arxiv.org/abs/0805.3125}{{\tt arXiv:0805.3125 [hep-th]}}.

\bibitem{Iu-Na-Sa-Ze:09}
C.~Iuliu-Lazaroiu, D.~McNamee, C.~Saemannn, and A.~Zejak, ``{Strong homotopy
  {L}ie Algebras, generalized Nahm quations and multiple M2-branes},''
\href{http://arxiv.org/abs/0901.3905}{{\tt arXiv:0901.3905 [hep-th]}}.

\bibitem{Az-Ga-Iz-To:89}
J.~A. de~Azc\'arraga, J.~P. Gauntlett, J.~M. Izquierdo, and P.~K. Townsend,
  ``{Topological extensions of the supersymmetry algebra for extended
  objects},'' {\em Phys. Rev. Lett.} {\bf 63} (1989)
2443.

\bibitem{Gur-Ram:00}
Z.~Guralnik and S.~Ramgoolam, ``{On the polarization of unstable D0-branes into
  non-commutative odd spheres},'' {\em JHEP} {\bf 02} (2001) 032,
\href{http://arxiv.org/abs/hep-th/0101001}{{\tt arXiv:hep-th/0101001}}.

\bibitem{Ho-La-We:97}
P.~S. Howe, N.~D. Lambert, and P.~C. West, ``{The self-dual string soliton},''
  {\em Nucl. Phys.} {\bf B515} (1998) 203--216,
\href{http://arxiv.org/abs/hep-th/9709014}{{\tt arXiv:hep-th/9709014}}.

\bibitem{Be-Gi-To:06}
E.~A. Bergshoeff, G.~W. Gibbons, and P.~K. Townsend, ``{Open M5-branes},'' {\em
  Phys. Rev. Lett.} {\bf 97} (2006) 231601,
\href{http://arxiv.org/abs/hep-th/0607193}{{\tt arXiv:hep-th/0607193}}.

\bibitem{Chu-Smith:09}
C.-S. Chu and D.~J. Smith, ``{Towards the quantum geometry of the M5-brane in a
  constant $C$-Field from multiple membranes},'' {\em JHEP} {\bf 04} (2009)
  097,
\href{http://arxiv.org/abs/0901.1847}{{\tt arXiv:0901.1847 [hep-th]}}.

\bibitem{Ban-Li-Sch:08}
M.~A. Bandres, A.~E. Lipstein, and J.~H. Schwarz, ``{N = 8 Superconformal
  Chern--Simons Theories},'' {\em JHEP} {\bf 05} (2008) 025,
\href{http://arxiv.org/abs/0803.3242}{{\tt arXiv:0803.3242 [hep-th]}}.

\bibitem{Ho-Im-Mat-Shi:08}
P.-M. Ho, Y.~Imamura, Y.~Matsuo, and S.~Shiba, ``{M5-brane in three-form flux
  and multiple M2-branes},'' {\em JHEP} {\bf 08} (2008) 014,
\href{http://arxiv.org/abs/0805.2898}{{\tt arXiv:0805.2898 [hep-th]}}.

\bibitem{Ba-To:08a}
I.~A. Bandos and P.~K. Townsend, ``{Light-cone M5 and multiple M2-branes},''
  {\em Class. Quant. Grav.} {\bf 25} (2008) 245003,
\href{http://arxiv.org/abs/0806.4777}{{\tt arXiv:0806.4777 [hep-th]}}.

\bibitem{Hop:82}
J.~Hoppe, {\em Quantum theory of a massless relativistic surface and a two
  dimensional bound state problem}.
\newblock PhD thesis, MIT, 1982.

\bibitem{Flo-Il-Tik:88}
E.~G. Floratos, J.~Iliopoulos, and G.~Tiktopoulos, ``{A note on SU($\infty$)
  classical Yang-Mills theory},'' {\em Phys. Lett.} {\bf B217} (1989)
285.

\bibitem{Dow:90}
J.~S. Dowker, ``{Volume preserving diffeomorphisms on the 3-sphere},'' {\em
  Class. Quant. Grav.} {\bf 7} (1990)
1241--1252.

\bibitem{Lo-Gal:72}
J.~D. Louck and H.~W. Galbraith, ``Application of orthogonal and unitary group
  methods to the $n$-body problem,'' {\em Rev. Mod. Phys.} {\bf 44} (1972)
  540--601.

\bibitem{Ramg:01}
S.~Ramgoolam, ``{On spherical harmonics for fuzzy spheres in diverse
  dimensions},'' {\em Nucl. Phys.} {\bf B610} (2001) 461--488,
\href{http://arxiv.org/abs/hep-th/0105006}{{\tt arXiv:hep-th/0105006}}.

\bibitem{Hop:02}
J.~Hoppe, ``{Membranes and matrix models},''
\href{http://arxiv.org/abs/hep-th/0206192}{{\tt arXiv:hep-th/0206192}}.

\bibitem{Ban:08}
I.~A. Bandos, ``{NB BLG model in N=8 superfields},'' {\em Phys. Lett.} {\bf
  B669} (2008) 193--195,
\href{http://arxiv.org/abs/0808.3568}{{\tt arXiv:0808.3568 [hep-th]}}.

\bibitem{Pa-Sa-So-To:09}
P.~Pasti, I.~Samsonov, D.~Sorokin, and M.~Tonin, ``{BLG-motivated Lagrangian
  formulation for the chiral two- form gauge field in D=6 and M5-branes},''
  {\em Phys. Rev.} {\bf D80} (2009) 086008,
\href{http://arxiv.org/abs/0907.4596}{{\tt arXiv:0907.4596 [hep-th]}}.

\bibitem{Fur:10}
K.~Furuuchi, ``{Non-linearly extended self-dual relations from the Nambu-
  bracket description Of M5-brane in a constant C-field Background},''
\href{http://arxiv.org/abs/1001.2300}{{\tt arXiv:1001.2300 [hep-th]}}.

\bibitem{Har:09}
S.~A. Hartnoll, ``{Lectures on holographic methods for condensed matter
  physics},'' {\em Class. Quant. Grav.} {\bf 26} (2009) 224002,
\href{http://arxiv.org/abs/0903.3246}{{\tt arXiv:0903.3246 [hep-th]}}.

\bibitem{Horo:10}
G.~T. Horowitz, ``{Introduction to holographic superconductors},''
\href{http://arxiv.org/abs/1002.1722}{{\tt arXiv:1002.1722 [hep-th]}}.

\bibitem{GraMaPe:93}
J.~Grabowski, G.~Marmo, and A.~M. Perelomov, ``{Poisson structures: towards a
  classification},'' {\em Mod. Phys. Lett.} {\bf A8} (1993)
1719--1733.

\bibitem{CIMP:94}
J.~F. Cari\~nena, A.~Ibort, G.~Marmo, and A.~Perelomov, ``{On the geometry of
  {L}ie algebras and Poisson tensors},'' {\em J. Phys.} {\bf A27} (1994)
7425--7449.

\bibitem{AlPe:97}
D.~Alekseevsky and A.~Perelomov, ``{Poisson and symplectic structures on {L}ie
  algebras},'' {\em J. Geom. and Phys.} {\bf 22} (1997)
191--211.

\end{thebibliography}

\end{document}